\numberwithin{equation}{section}
\DeclareMathOperator{\tr}{Tr}
\DeclareMathOperator{\trs}{tr}
\newtheorem{theorem}{Theorem}
\newtheorem{proposition}{Proposition}[section]
\newtheorem{lemma}[proposition]{Lemma}
\newtheorem{corollary}[proposition]{Corollary}
\theoremstyle{definition}
\newtheorem{remark}[proposition]{Remark}
\newcommand{\dda}{\mathrm{d}}
\newcommand{\de}{\,\dda}
\renewcommand{\Re}{\textrm{Re}}
\renewcommand{\Im}{\textrm{Im}}
\newcommand{\cH}{\mathcal{H}}
\newcommand{\cN}{\mathcal{N}}
\newcommand{\Tr}{\mathrm{Tr}}
\newcommand{\gesssim}{\gtrsim}
\newcommand{\gesim}{\gtrsim}
\newcommand{\lesim}{\lesssim}
\renewcommand{\Re}{\textrm{Re}}
\renewcommand{\Im}{\textrm{Im}}
\renewcommand\rho\varrho
\newcommand{\eps}{\varepsilon}
\renewcommand{\rho}{\varrho}
\renewcommand{\epsilon}{\varepsilon}
\newcommand{\nn}{\nonumber}
\newcommand{\R} {\mathbb{R}}
\newcommand{\ou}{%
	\mathrel{%
		\vcenter{\offinterlineskip
			\ialign{##\cr$\lesssim$\cr\noalign{\kern-1.5pt}$\gtrsim$\cr}%
		}%
	}%
}
\theoremstyle{definition}
\newcommand{\beq}{\begin{equation}}
\newcommand{\eeq}{\end{equation}}
\begin{document}

\title{The Gibbs state of the mean-field Bose gas}

\author{Andreas Deuchert, Phan Thành Nam, Marcin Napiórkowski}

\date{\today}

\maketitle

\begin{abstract} 
We consider the homogeneous mean-field Bose gas at temperatures proportional to the critical temperature of its Bose–Einstein condensation phase transition. We prove a trace norm approximation for the grand canonical Gibbs state in terms of a reference state, which is given by a convex combination of products of coherent states and Gibbs states associated with certain temperature-dependent Bogoliubov Hamiltonians. The convex combination is expressed as an integral over a Gibbs distribution of a one-mode $\Phi^4$-theory describing the condensate. This result justifies an analogue of Lee and Yang’s extension of Bogoliubov theory to positive temperatures, and it allows us to derive various limiting distributions for the number of particles in the condensate, as well as precise formulas for the one- and two-particle density matrices of the Gibbs state. Key ingredients of our proof, which are of independent interest, include two novel abstract correlation inequalities. The proof of one of them is based on an application of an infinite-dimensional version of Stahl's theorem.
\end{abstract}

\setcounter{tocdepth}{2}
\tableofcontents

\section{Introduction and main results} \label{sec:intro}

\subsection{Background and summary}

The theory of Bose gases began about a hundred years ago with the introduction of Bose–Einstein statistics in the pioneering works of Bose and Einstein \cite{Bose1924,Einstein1924}. Based on an analysis of the ideal (non-interacting) gas, Einstein predicted in \cite{Einstein1925} the condensation of a macroscopic number of particles into a single quantum state (Bose–Einstein condensation (BEC)), a phenomenon that was also conjectured to occur in interacting bosonic many-particle systems at sufficiently low temperatures. In 1947, Bogoliubov \cite{Bogoliubov-47} realized that the occurrence of BEC in such systems allows one, at least formally, to replace certain operators in the Hamiltonian associated with the condensate by complex numbers (the so-called c-number substitution). By additionally neglecting terms in the Hamiltonian that are expected to be small, one obtains an explicitly solvable model. This model predicts that the low-energy excitations of the Hamiltonian are sound waves, which implies superfluid behavior of the system (flow without friction). These considerations are now collectively referred to as Bogoliubov theory. It has become crucial in various fields of physics, including condensed matter physics, quantum field theory, and nuclear physics, as it provides a powerful framework for understanding the collective behavior of interacting quantum systems.

Inspired by Bogoliubov's work, Lee and Yang \cite{LeeYan-58} predicted in 1958 an extension of Bogoliubov theory for systems at positive temperature. One prediction of their theory is that elementary excitations have a dispersion relation of the form 
\begin{align}\label{eq:LY-ek}
e_{\xi}(p) =  \sqrt{|p|^4+ g \varrho \xi(T) |p|^2}
\end{align}
with the momentum $p$, an interaction-dependent coupling constant $g > 0$, the density $\varrho$, and the temperature-dependent condensate fraction $\xi(T)$. For temperatures that are comparable to the critical temperature of the BEC phase transition, the Gibbs state displays \textit{incomplete occupation} of the $p=0$ mode with a condensate fraction $\xi \in [0,1]$. In this case a $c$-number substitution yields a modified Bogoliubov Hamiltonian, whose eigenvalues are given by sums of the elementary excitations in \eqref{eq:LY-ek}. The main contribution of our work is a justification of Lee and Yang's version of Bogoliubov theory at positive temperature.

The proof of Bose–Einstein condensation (BEC) and the justification of Bogoliubov theory from microscopic principles in various parameter regimes are major challenges in modern mathematical physics. In 2002 a proof of BEC was obtained by Lieb and Seiringer \cite{LieSei-02} in the Gross--Pitaevskii  limit, which is a scaling regime used to describe dilute trapped Bose gases. Moreover, in the breakthrough article \cite{BocBreCenSch-19}, Boccato, Brennecke, Cenatiempo, and Schlein recently demonstrated that in the same limit Bogoliubov theory correctly predicts the low-lying excitation spectrum of a dilute Bose gas. Shortly afterward, the Lee–Huang–Yang formula \cite{LeeHuaYan-57} for the ground-state energy (zero-temperature case) of the dilute Bose gas in the more challenging thermodynamic limit was established by Fournais and Solovej in \cite{FouSol-20,FouSol-23}. Their work confirmed the predictions of Bogoliubov theory in this setting and concluded a series of remarkable contributions made over the last seven decades \cite{Dyson-57, LieYng-98, YauYin-09, ErdSchYau-08, GiuSei-09, BriFouSol-20, BasCenSch-21, BasCenGiuOlgPasSch-24}. 
A related conjecture for the free energy at low temperatures, where the entropy contribution is proportional to that of the Bogoliubov excitation spectrum, has been proved recently in \cite{HHNST-23, HHST-24}. Some earlier results on the free energy of the dilute Bose gas in the thermodynamic limit can be found in \cite{Seiringer-08,Yin-10,DeuMaySei-20,MaySei-20}. See also  \cite{Baletal2017} for an overview of an ambitious long-term project aimed at proving BEC with renormalization group techniques.

Despite this impressive progress, a proof of the BEC phase transition in the thermodynamic limit appears to be beyond the reach of current mathematical techniques, let alone a detailed study of the properties of the Gibbs state. From a conceptual perspective, the thermodynamic limit is difficult to analyze because the kinetic energy operator looses its spectral gap, which prevents the use of coercivity of the relevant energy functionals to extract information about the states. In particular, it remains unclear, even heuristically, when Bogoliubov theory breaks down and what types of corrections should be expected. 

In this paper, we are interested in the homogeneous Bose gas in \textit{the mean-field limit}. This parameter regime is more tractable than the thermodynamic limit but still exhibits a rich mathematical structure. For this model, BEC and the validity of Bogoliubov theory for the low-lying excitation spectrum were established in Seiringer's seminal paper \cite{Seiringer-11}. His proof inspired much of the later work on the subject and laid the basis for our current understanding of Bogoliubov theory, which eventually led to a proof of the Lee--Huang--Yang formula. Building on insights from \cite{Seiringer-11}, Lewin, Nam, Serfaty, and Solovej derived in \cite{LewNamSerSol-15} a trace norm approximation for the Gibbs state of the system at sufficently low temperatures. As in the case of the ground state problem, their approximating state is \textit{quasi-free (Gaussian)}. However, understanding the system for temperatures on the scale of the critical temperature of the BEC phase transition remains a challenging open problem. Key open questions include characterizing Gibbs states, determining how the critical temperature depends on particle interactions, and computing critical exponents.

When the interaction, temperature, and chemical potential of the model are chosen such that the system approaches the critical point of the BEC phase transition from above, Bogoliubov theory is inapplicable. As shown independently by Lewin, Nam, and Rougerie in \cite{LewNamRou-21} and by Fröhlich, Knowles, Schlein, and Sohinger in \cite{FroKnoSchSoh-22}, the correct description of the system in this parameter regime is given by a nonlinear classical field theory. Observing these effects requires a stronger interaction than that considered in \cite{Seiringer-11,LewNamSerSol-15}. 

In this paper, we aim to provide a full description of the Gibbs state of the mean-field Bose gas for all temperatures on the scale of the critical temperature for BEC. We prove that the Gibbs state of the interacting model can be approximated in trace norm by 
\begin{equation}
    \Gamma_{\beta,N} = \frac{1}{Z} \int_{\mathbb{C}} |z \rangle \langle z | \otimes G^{\mathrm{Bog}}(z) \exp\left(-\beta\left(\frac{\hat{v}(0)}{2N} |z|^4 - \mu |z|^2 \right)\right) \de z,
    \label{eq:normApproximationMainIntro}
\end{equation}
 a convex combination of products of coherent states $|z \rangle \langle z|$ and Gibbs states $G^{\mathrm{Bog}}(z)$ associated with certain \textit{temperature-dependent} Bogoliubov Hamiltonians. Here $N$ denotes the expected particle number and $\beta > 0$ the inverse temperature. 
 The integral is taken over a Gibbs distribution of a one-mode $\Phi^4$-theory, which depends on the mean-field interaction potential $v/N$. It describes the condensate and was recently introduced by Boccato, Deuchert, and Stocker in \cite{BocDeuSto-24} to construct a trial for the free energy in a related model. Unlike in the low-temperature results in \cite{Seiringer-11,LewNamSerSol-15}, our approximating state is \textit{non-quasi-free}. Other main results, derived from our norm approximation of the Gibbs state, include the computation of the various limiting distributions for the number of particles in the condensate and precise formulas for the one- and two-particle density matrices of the Gibbs state. We also obtain an asymptotic expansion of the free energy. An analogue of \eqref{eq:LY-ek} appears in all of these results.

Justifying this approximation requires the development of general techniques that go substantially \textit{beyond the use of the coercivity} of the Gibbs free energy functional. More precisely, while we start with a variational formulation of the problem, the application of two new abstract \textit{correlation inequalities} allows us to derive results for the Gibbs state that are not shared by approximate minimizers of the free energy functional. One of them states that for two self-adjoint operators $A$ and $B$, 
\begin{equation}
\Tr [ B^2 \Gamma_0 ] \leq a e^{a} + \frac{1}{4} \Tr( [[B,A],B] \Gamma_0)
\label{eq:StahlA2-intro-intro}
\end{equation}
provided $|\Tr (B \Gamma_t )| \le a$ with $\Gamma_t =Z_t^{-1} \exp(-A+tB)$, $t\in [-1,1]$. This is inspired by the work of Lewin, Nam and Rougerie \cite{LewNamRou-21} and provides a very efficient way to control second-order correlations in the interacting Gibbs state using a first-order bound as input. The proof of \eqref{eq:StahlA2-intro-intro} relies on the key observation that Stahl's theorem \cite{Stahl2013} elegantly implies convexity of the Duhamel two-point function. Using this inequality we prove the norm approximation in \eqref{eq:normApproximationMainIntro} and obtain information about the two-point correlation function. However, gaining access to the four-point function is significantly harder. To solve this problem we establish a second new abstract correlation inequality for higher moments.   

We expect that the general tools developed in this paper are of independent interest and will pave the way for a better understanding of the Bose gas in other scaling regimes, as well as other many-body quantum systems at positive temperature. In fact, our two abstract correlation inequalities have already been used in very recent work of Nam, Zhu, and Zhu \cite{NamZhuZhu-25} on the derivation of the $\Phi^4_3$ theory, which substantially extends the earlier derivations of classical field theories with regular nonlocal interactions \cite{LewNamRou-21,FroKnoSchSoh-22} as well as the related results on the $\Phi^4_2$ theory \cite{FroKnoSchSoh-24}.

In the following, we introduce our model and highlight a few key results and ideas. Other main results, which require more notation to be stated, will be presented in Section~\ref{sec:MainResultsPart2}. 

\subsubsection*{Notation} 
We write $a \lesssim b$ to say that there exists a constant $C>0$ independent of the relevant parameters (for instance, the particle number or  the inverse temperature) such that $a \leq C b$ holds. If $a \lesssim b$ and $b \lesssim a$ we write $a \sim b$, and $a \simeq b$ means that the leading orders of $a$ and $b$ are equal in the limit considered. In case the constant depends on a parameter $k$, we write $a \lesssim_k b$. By $\Vert \cdot \Vert_p$ we denote the standard $L^p$-norm. We always interpret $\tr [ A \Gamma ]$ as $\tr[\Gamma^{1/2} A \Gamma^{1/2}]\in [0,\infty]$ for self-adjoint nonnegative operators $A,\Gamma$. 

\subsection{The many-body quantum model} \label{sec:many-body-model}

We consider a system of bosonic particles confined to the torus $\Lambda = [0,1]^3$. The one-particle Hilbert space is given by $L^2(\Lambda)$. We are interested in a system with a fluctuating particle number (the grand canonical ensemble), and hence the Hilbert space of the entire system is given by the bosonic Fock space
\begin{equation}
	\mathscr{F}=\mathscr{F}(L^2(\Lambda)) = \bigoplus_{n=0}^{\infty} L^2_{\mathrm{sym}}(\Lambda^n)= \mathbb{C} \oplus L^2(\Lambda)\oplus L^2_{\mathrm{sym}}(\Lambda^2) \oplus \cdots
    \label{eq:FockSpace}
\end{equation}
Here $L^2_{\mathrm{sym}}(\Lambda^n)$ denotes the subspace of $L^2(\Lambda^n)$ consisting all functions $\Psi(x_1,...,x_n)$ that are invariant under any permutation of the particle coordinates $x_1, ..., x_n \in \Lambda$. The Hamiltonian of the system reads 
\begin{equation}    \label{eq:FockSpaceHamiltonian1}
    \mathcal{H}_N = \bigoplus_{n=0}^{\infty} \mathcal{H}_N^n=  \bigoplus_{n=0}^{\infty} \left( \sum_{i=1}^n - \Delta_i + \frac{1}{N} \sum_{1 \leq i < j \leq n} v(x_i-x_j) \right)
\end{equation}
with the Laplacian $-\Delta \ge 0$ on $L^2(\Lambda)$ with periodic boundary conditions and a periodic function $v:\Lambda\to \R_+$ satisfying 
$$0\le \hat v\in \ell^1(\Lambda^*),\quad \text{ where } \quad \hat v(p)=\langle \varphi_p,v\rangle_{L^2(\Lambda)},\quad \varphi_p(x) = e^{\mathrm{i}p \cdot x},\quad \Lambda^* = 2\pi \mathbb{Z}^3.$$
In particular, since $v$ is bounded, $\mathcal{H}_N^n$ is a self-adjoint operator on $L^2_{\mathrm{sym}}(\Lambda^n)$ with the same domain $H^2_{\mathrm{sym}}(\Lambda^n)$ as the Laplacian $\sum_{i=1}^n - \Delta_i$. Moreover, $ \mathcal{H}_N$ is bounded from below on $\bigcup_{m \ge 2} \bigoplus_{n=0}^{m}  H^2_{\mathrm{sym}}(\Lambda^n)$ and can be defined as a self-adjoint operator on $\mathscr{F}(L^2(\Lambda))$ by the Friedrichs' extension. 

For given parameters $\beta,N\in (0,\infty)$, we are interested in the {\em grand canonical Gibbs state} 
\begin{equation}
	G_{\beta,N} = \frac{\exp\left( -\beta (\mathcal{H}_N - \mu_{\beta,N} \mathcal{N}) \right) }{\tr \exp\left( -\beta (\mathcal{H}_N - \mu_{\beta,N} \mathcal{N}) \right) }
	\label{eq:interactingGibbsstate}
\end{equation}
acting on on $\mathscr{F}(L^2(\Lambda))$. Here $\mathcal{N} = \bigoplus_{n=0}^{\infty} n$ denotes the number operator and the chemical potential $\mu_{\beta,N}\in \mathbb{R}$ is chosen such that $\tr [\mathcal{N} G_{\beta,N}] = N$ holds.  In this setting, the factor $1/N$ in front of the interaction term in $\mathcal{H}_N^n$ implements a mean-field scaling.

\subsubsection*{Second quantization} 
Grand canonical systems are most naturally discussed in the language of second quantization, which we introduce next. The annihilation operator $a_p$ and the creation operator $a^*_p$ of a particle described by the function $\varphi_p(x) = e^{\mathrm{i p \cdot x}}$ with $x \in \Lambda$ and $p \in \Lambda^*$ are defined by
\begin{align}
    (a_p \psi)(x_1,...,x_{n-1}) &= \sqrt{n} \int_{\Lambda} e^{-\mathrm{i}p \cdot x} \psi(x_1,...,x_{n-1},x) \de x,\\
        \label{eq:annihilationOperator}
        (a^*_p \psi)(x_1,...,x_{n+1}) &= \frac{1}{\sqrt{n+1}} \sum_{j=1}^n e^{\mathrm{i} p \cdot x_j} \psi(x_1,...,x_{j-1}, x_{j+1}, ..., x_n),
\end{align}
and extended to $\mathscr{F}(L^2(\Lambda))$ by linearity. As the notation suggest, $a^*(\psi)$ is the adjoint of $a(\psi)$. This family of operators satisfies the canonical commutation relations (CCR)
\begin{equation}
	[a_p,a_q^*] = \delta_{p,q}, \quad \quad [a_p,a_q] = 0 = [a^*_p,a^*_q],\quad \forall p,q\in \Lambda^*. 
	\label{eq:CCR}
\end{equation}

The creation and annihilation operators allow us to represent many operators acting on the Fock space in a convenient way. In particular, if $h$ is a self-adjoint operator on $L^2(\Lambda)$, then its second quantization can be written as 
$$
\de \Upsilon(h)=\bigoplus_{n=0}^{\infty} \left( \sum_{i=1}^n h_i \right) = \sum_{p,q \in \Lambda^*} \langle \varphi_p, h \varphi_q\rangle a^*_p a_q.  
$$
For the number operator in \eqref{eq:interactingGibbsstate} this implies $\mathcal{N}=  \de \Upsilon(1)=\sum_{p \in \Lambda^*} a^*_p a_p$ and the second quantization of the Laplacian $-\Delta$ reads $ \de \Upsilon(-\Delta)=\sum_{p \in \Lambda^*} p^2 a^*_p a_p$. More generally, the grand canonical Hamiltonian $\cH_N$ in \eqref{eq:FockSpaceHamiltonian1} can be written as 
\begin{equation}  \label{eq:Hamiltonian}
	\mathcal{H}_N = \sum_{p\in \Lambda^*} p^2 a_p^* a_p + \frac{1}{2 N} \sum_{p,u,v \in \Lambda^*} \hat{v}(p) a_{u+p}^* a_{v-p}^* a_u a_v.	
	\end{equation}

The mean-field Hamiltonian in \eqref{eq:FockSpaceHamiltonian1} and \eqref{eq:Hamiltonian}, along with its variants, such as the canonical version $ \mathcal{H}_N^N$ on $L^2_{\mathrm{sym}}(\Lambda^N)$ and its analogue on $L^2_{\mathrm{sym}}(\mathbb{R}^{3N})$ with a trapping potential, has been extensively studied over the past 60 years. Regarding proofs of BEC for the ground state and low-lying eigenfunctions, we refer to \cite{LieLin-63,SeiYngZag-12} for the analysis of the Lieb--Liniger model, \cite{BenLie-83,Solovej-90,Bach-91,BacLewLieSie-93,Kiessling-12,BosLeoMitPet-24} for bosonic atoms, \cite{LieYau-87} for stars, \cite{VdBLewPul-88,RagWer-89} for confined systems, and \cite{LewNamRou-14} for a general result. Concerning the justification of the Bogoliubov excitation spectrum, Seiringer's result \cite{Seiringer-11} has inspired several further developments, including \cite{GreSei-13} for trapped systems, \cite{LewNamSerSol-15} for abstract results including a norm approximation for eigenstates and Gibbs states, \cite{DerNap-13} for systems in large volumes, \cite{NamSei-15} for collective excitations with multiple condensate modes, \cite{BosPetSei-21,NamNap-21} for higher-order corrections, \cite{BroSei-2022} for translation-invariant systems in $\mathbb{R}^3$, and \cite{DeuSei-21} for a study of the mean-field Bose gas at positive temperature.

The techniques developed in the mean-field model have also been widely used to investigate the Gross--Pitaevskii limit, where the interaction potential $v$ is replaced by the  
$N$-dependent potential $N^3 v(Nx)$ that converges to a delta function when $N\to \infty$. For this limit, we refer to \cite{LiebSeiYng-2000,LieSei-02,LieSei-06,NamRouSei-16,BocBreCenSch-18,BocBreCenSch-20,NamNapRicTri-21,Hainzl-21,HaiSchTri-22,BreBroCarOld-24} for the ground state problem, \cite{BocBreCenSch-19,NamTri-23,BreSchSch-22a,BreSchSch-22,BaCenOlgPasqSchl-2022,CarOlgSAubSchl-2024,Brooks-25} for the excitation spectrum, and \cite{DeuSeiYng-19,DeuSei-20,BreLeeNam-24,BocDeuSto-24,CapDeu-23} for studies of systems at positive temperatures. There is also a vast literature on quantum dynamics; for a comprehensive introduction, we refer to the book \cite{BenPorSch-16}, and for a recent review, to \cite{Napiorkowski-23}.

So far, existing studies in the mean-field regime have mostly focused on the ground state problem or the Gibbs state at temperatures of order 1. In these cases, most particles occupy the condensate, with only a finite number remaining in the thermal cloud; see \cite{NamRad-23,MitPic-23} for a robust justification of this fact with exponential estimates. Therefore, the contribution of thermally excited particles can, in principle, be predicted using a second-order perturbation method. In contrast, the present paper is devoted to the case where the temperature is proportional to the critical temperature of the BEC phase transition. In this parameter regime, the number of thermally excited particles is always comparable to the total number of particles. This behavior can already be observed in the ideal gas, which we introduce below.
 
\subsubsection*{The ideal Bose gas and the BEC phase transition} 
If $v = 0$, we obtain a non-interacting model called the {\em ideal Bose gas} with Gibbs state
\begin{equation}
    G_{\beta,N}^{\mathrm{id}} = \frac{\exp\left( -\beta (\de \Upsilon(-\Delta) - \mu_0(\beta,N) \mathcal{N}) \right) }{\tr \exp\left( -\beta (\de \Upsilon(-\Delta) - \mu_0(\beta,N)\mathcal{N}) \right) },
    \label{eq:GibbsStateIdealGas}
\end{equation}
which is exactly solvable. Here the chemical potential $\mu_0=\mu_0(\beta,N) < 0$ is the unique solution to the equation
\begin{equation}
	N =  \Tr [\cN G_{\beta,N}^{\mathrm{id}}] = \sum_{p \in \Lambda^*} \frac{1}{\exp(\beta \left( p^2 -  \mu_0 ) \right) - 1}.
	\label{eq:idealgase1pdmchempot}
\end{equation} 

As realized by Bose and Einstein in \cite{Bose1924,Einstein1924}, the ideal Bose gas displays a phase transition:  in the limit $N \to \infty$, the number of particles in the zero-momentum mode, $N_0(\beta,N) = \Tr [a_0^*a_0   G_{\beta,N}^{\mathrm{id}}]$, behaves as
\begin{equation}
	N_0(\beta,N) =  \frac{1}{\exp(-\beta \mu_0)-1} \simeq N \left[ 1 - \left( \frac{\beta_{\mathrm{c}}}{\beta} \right)^{3/2} \right]_+, \quad \text{ where } \quad \beta_{\mathrm{c}} = \beta_{\mathrm{c}}(N) = \frac{1}{4 \pi} \left( \frac{N}{\upzeta(3/2)} \right)^{-2/3}. 
	\label{eq:crittemp}
\end{equation}
Here $\upzeta$ denotes the Riemann zeta function and $[x]_+ = \max\{ 0,x \}$. We highlight that the inverse critical temperature $\beta_c$ is proportional to $N^{-2/3}$, which is a consequence of the fact that we work in a fixed volume. 

More precisely, if $\beta$ is chosen as $\beta = \kappa N^{-2/3}$ with some fixed $\kappa\in (0,\infty)$, the following sharp transition occurs as $\kappa$ crosses the critical point at $\kappa=\frac{1}{4 \pi}[\upzeta(3/2)]^{2/3}$: 
\begin{equation}
\begin{cases}\label{eq:BEC-phase-transition-simple}
N_0(\beta,N) \sim N, \quad \mu_0 \simeq -(\beta N_0)^{-1} \sim -N^{-1/3} &\quad \text{if}\quad  \kappa > \frac{1}{4 \pi}[\upzeta(3/2)]^{2/3} \quad \text {(condensed phase)}, \\
N_0(\beta,N) \sim 1, \quad \mu_0 \sim -\beta^{-1} \sim -N^{2/3} &\quad \text{if}  \quad \kappa < \frac{1}{4 \pi}[\upzeta(3/2)]^{2/3} \quad \text {(non-condensed phase)}. 
\end{cases}
\end{equation}

In the present paper, we are interested in the interacting Gibbs state \eqref{eq:interactingGibbsstate}, with $v \ne 0$, in the temperature regime $\beta/\beta_c \sim 1$, which includes both phases described in \eqref{eq:BEC-phase-transition-simple} (as we will see below the interacting model displays a similar phase transition). In particular, we pay special attention to the case $\beta N^{2/3} \to \frac{1}{4 \pi}[\upzeta(3/2)]^{2/3}$, where the order of magnitude of $N_0(\beta, N)$ varies between $O(N)$ and $O(1)$, depending on the rate at which the critical point is approached. In other words, when $\beta=\frac{1}{4\pi}[\zeta(\frac32)^{\frac23}] N^{- \frac23}+o(N^{-\frac23})$, it is the exact  form of the $o(N^{-\frac23})$ term that determines $N_0(\beta,N)$.

\subsection{Coherent states and Bogoliubov theory}
Unlike the ideal gas, the interacting Bose gas is not exactly solvable. In fact, the presence of interactions gives rise to intriguing quantum phenomena, whose understanding is, in general, mathematically very challenging. A cornerstone of the theory of interacting Bose gases is Bogoliubov's seminal paper \cite{Bogoliubov-47}, in which he introduced an effective, explicitly solvable model (the Bogoliubov Hamiltonian) to explain superfluidity in such system. The main idea proposed by Bogoliubov is to replace $a_0$ and $a_0^*$ in the Hamiltonian of a system in the condensed phase with a complex number, whose absolute value is proportional to $\sqrt{\Tr[a_0^* a_0 G_{\beta,N}]} \gg 1$. A rigorous way to implement this idea is the c-number substitution using coherent states \cite{KlauSka-85,LieSeiYng-05}. In the following we explain this in some more detail. We start by considering the case of zero temperature.
\subsubsection*{Zero temperature}
Let us recall the exponential property $\mathcal{F}(\mathfrak{h}_1 \oplus \mathfrak{h}_2) \cong \mathscr{F}(\mathfrak{h}_1) \otimes \mathscr{F}(\mathfrak{h}_2)$ of the bosonic Fock space, where $\cong$ denotes unitary equivalence. Using this we write our Fock space as
\begin{equation}
    \mathscr{F}(L^2(\Lambda)) \cong \mathscr{F}_0 \otimes \mathscr{F}_+
    \label{eq:exponentialProperty}
\end{equation}
with $\mathscr{F}_0 = \mathcal{F}(\text{span}\{\varphi_0\})$, $\varphi_0(x) = 1$ and the excitation Fock space
\begin{equation}
    \mathscr{F}_+ = \mathscr{F}( Q L^2(\Lambda) ), \quad \text{ where } \quad Q = \mathds{1}(-\Delta \neq 0).
    \label{eq:excitationFockSpace}
\end{equation}
In the space $\mathscr{F}_0$, we introduce the coherent states 
\begin{equation}
	| z \rangle = \exp( z a_0^* - \overline{z} a_0 ) | \Omega_0 \rangle, \quad \quad z \in \mathbb{C}
	\label{eq:coherentstate}
\end{equation}
with the vacuum vector $\Omega_0 \in \mathscr{F}_0$. Heuristically speaking, the coherent state $|z\rangle$ describes a BEC in the constant function $z/|z| \in L^2(\Lambda)$ with an expected number of $|z|^2$ particles. One of its most important properties is that it is an eigenvector of the annihilation operator $a_0$ with eigenvalue $z$, i.e. 
\begin{equation}
    a_0 | z \rangle = z | z \rangle.
    \label{eq:coherentstateEigenfunction}
\end{equation}

A state on the bosonic Fock space $\mathscr{F}$ is a positive operator $\Gamma$ acting on $\mathscr{F}$ with $\Tr[\Gamma] =1$. If we compute the expectation of the Hamiltonian in \eqref{eq:Hamiltonian} in a state of the form $\Gamma = |z \rangle \langle z | \otimes G(z)$, where $|z \rangle \langle z |$ denotes the orthogonal projection onto the vector $|z \rangle$ and $G(z)$ is a state on $\mathscr{F}_+$, \eqref{eq:coherentstateEigenfunction} allows us to replace the operators $a_0$ and $a_0^*$ by $z$ and $\overline{z}$, respectively. At zero temperature one expects all except for $o(N)$ particles to reside in the condensate, that is, $|z|^2 \simeq N$. In this case it is reasonable to neglect all terms in the Hamiltonian with only one or no factors of $z$ or $\overline{z}$. The result of this heuristic computation reads
\begin{align}
    \Tr[ \mathcal{H}_N  \Gamma ] \approx& \frac{\hat{v}(0) |z|^4}{2N} + \Tr_+\left[ \left( \sum_{p \in \Lambda^*_+} p^2 a_p^* a_p + \frac{1}{2N} \sum_{p \in \Lambda_+^*} \hat{v}(p) \left( 2 |z|^2 a_p^* a_p  + z^2 a_p^* a_{-p}^* + \overline{z}^2 a_p a_{-p} \right) \right) G(z) \right],
    \label{eq:firstBogoliubovHamiltonian}
\end{align}
where $\Tr_+$ denotes the trace over $\mathscr{F}_+$ and $\Lambda_+^* = \Lambda^* \backslash \{ 0 \}$. 

The quadratic Hamiltonian in \eqref{eq:firstBogoliubovHamiltonian} can be explicitly diagonalized with a unitary transformation $\mathcal{U}_z$ on $\mathscr{F}_+$ which satisfies 
\begin{equation}
	\mathcal{U}_z^* a_{p} \mathcal{U}_z = u_{p,z} a_{p} + \frac{z^2}{|z|^2} v_{p,z} a_{-p}^*, \quad \quad p \in \Lambda^*_+,
	\label{eq:Bogtrafoz}
\end{equation}
with the coefficients $u_{p,z} = \sqrt{1+v_{p,z}^2}$ and
\begin{align}
	v_{p,z} = \frac{1}{2} \left( \frac{p^2}{p^2 + 2 \hat{v}(p) |z|^2/N} \right)^{1/4} - \frac{1}{2} \left( \frac{p^2}{p^2 + 2 \hat{v}(p) |z|^2/N} \right)^{-1/4}.
	\label{eq:coefficientsBogtrafo}
\end{align}
More precisely, denoting the Hamiltonian in the second line of \eqref{eq:firstBogoliubovHamiltonian} by $\widetilde{\mathcal{H}}^{\mathrm{Bog}}$, we have
\begin{equation}
		\mathcal{U}_z \widetilde{\mathcal{H}}^{\mathrm{Bog}} \mathcal{U}_z^* = -\frac{1}{2} \sum_{p\in \Lambda^*_+} \left[ p^2 + \hat{v}(p) |z|^2/N - \epsilon_z(p) \right] +  \sum_{p \in \Lambda^*_+} \epsilon_z(p) a_{p}^* a_p 
		\label{eq:diagonalform}
	\end{equation}
with the Bogoliubov dispersion relation $\epsilon_z(p) = |p| \sqrt{ p^2 + 2 \hat{v}(p) |z|^2/N }$. The first term on the right-hand side of \eqref{eq:diagonalform} is the correction to the leading order contribution $\hat{v}(0) |z|^4/(2N)$ and the second term describes the excitations above the ground state. Since the dispersion relation is linear they can be interpreted as sound waves traveling through the BEC. Thus, up to a constant shift of the energy, the Bogoliubov Hamiltonian $\widetilde{\mathcal{H}}^{\mathrm{Bog}}$ is unitarily equivalent to the Hamiltonian of a non-interacting Bose gas with dispersion relation $\epsilon_z(p)$. In this way, Bogoliubov theory can be interpreted as a \textit{quasi-free approximation}. In the zero temperature case there is complete BEC, and we therefore have $|z|^2 = N$. To approximately minimize the energy in \eqref{eq:firstBogoliubovHamiltonian}, we have to choose $G(z)$ as a projection onto the ground state of $\widetilde{\mathcal{H}}^{\mathrm{Bog}}$. The expected number of particles in this vector is uniformly bounded in the limit $N \to \infty$. A rigorous version of the above arguments can be found in \cite{Seiringer-11,LewNamSerSol-15}. Let us now come to the discussion of the positive temperature case, which is the main concern of the present paper.
\subsubsection*{Positive temperature}
We introduce a positive temperature to the system that is proportional to the critical temperature of the BEC phase transition. In this case we can encounter a situation, where $\Tr[a_0^* a_0 G_{\beta,N}] \sim N$ (order $N$ condensed particles) and $N - \Tr[a_0^* a_0 G_{\beta,N}] \sim N$ (order $N$ particles are thermally excited). Following an idea that has been proposed by Lee and Yang in \cite{LeeYan-58} we note that for \textit{typical eigenstates of the Hamiltonian}, we expect to have $|z|^2 \simeq \Tr[a_0^* a_0 G_{\beta,N}]$. Thus, this part of the c-number substitution clearly continues to make sense.

However, one crucial assumption of Bogoliubov theory is violated in the positive temperature setting: namely, that all terms in the Hamiltonian involving no or only one factor of $z$ or $\overline{z}$ are small. In fact, it is not difficult to see that some of these terms are of order $N$ and, therefore, contribute to the leading order of the interaction energy. This relates to a question raised in the physics literature about the temperature at which the validity of Bogoliubov theory breaks down; see, e.g., \cite{Andersen2004}. A key insight in our work is that Bogoliubov theory remains valid in the presence of a macroscopic number of thermally excited particles, as long as the temperature stays on the order of the critical temperature. The reason is that the interaction terms, which can be neglected at zero temperature, do not alter the structure of the Gibbs state. For some terms, this is due to their particular form (in the case of density-density interactions), while for others, it is because they are small enough. 
Proving these claims requires a delicate analysis of the properties of the interacting Gibbs state, which constitutes the main contribution of the present article. A key ingredient for this analysis is our new set of abstract correlation inequalities. We emphasize that some terms in the Hamiltonian are small only when expectations are taken with respect to the Gibbs state. However, if the Gibbs state is replaced by an approximate minimizer of the relevant energy functional (the Gibbs free energy functional), these same terms may yield substantially larger contributions.

These considerations, along with an additional ingredient related to the condensate (which we will explain in a moment), lead to the following heuristic picture for the energy of the positive-temperature system, which is approximately given by 
\begin{align}
    \Tr[ \mathcal{H}_N  \Gamma ] \approx& \frac{\hat{v}(0) N}{2} + \mu_0(\beta,N) \int_{\mathbb{C}} \Tr_+[ \mathcal{N}_+ G(z)] g(z) \de z + \frac{\hat{v}(0)}{2N} \left( \int_{\mathbb{C}} |z|^4 g(z) \de z - \left( \int_{\mathbb{C}} |z|^2 g(z) \right)^2 \right) \de z \nn \\
    &+ \int_{\mathbb{C}} \Tr_+\left[ \mathcal{H}^{\mathrm{Bog}}(z) G(z) \right] g(z) \de z. \label{eq:firstBogoliubovHamiltonianb}
\end{align}
Here the thermally excited particles are described by the Bogoliubov Hamiltonian
\begin{equation}
    \mathcal{H}^{\mathrm{Bog}}(z) = \left( \sum_{p \in \Lambda^*_+} (p^2-\mu_0(\beta,N)) a_p^* a_p + \frac{N_0(\beta,N)}{2N} \sum_{p \in \Lambda_+^*} \hat{v}(p) \left( 2 a_p^* a_p  + \frac{z^2}{|z|^2} a_p^* a_{-p}^* + \frac{\overline{z}^2}{|z|^2} a_p a_{-p} \right) \right).
    \label{BogoliubovHamiltonianAndi}
\end{equation}
Note that we added and subtracted the chemical potential $\mu_0$. Moreover, $\de z = \de x \de y/\pi$, where $x = \Re z, y = \Im z$ and the condensate is described by a one-mode $\Phi^4$-theory with the Gibbs distribution
\begin{equation}\label{eq:GibbsDistributionDiscrete}
	g^{\mathrm{BEC}}(z) = \frac{\exp\left( -\beta\left( \frac{\hat{v}(0)}{2N} |z|^4 - \mu^{\mathrm{BEC}} |z|^2 \right) \right)}{\int_{\mathbb{C}} \exp\left( -\beta\left( \frac{\hat{v}(0)}{2N} |w|^4 - \mu^{\mathrm{BEC}} |w|^2 \right) \right) \de w} 
\end{equation}
and an appropriate choice for $\mu^{\mathrm{BEC}} \in \mathbb{R}$. The function $g^{\mathrm{BEC}}$ has been introduced recently in a similar context in \cite{BocDeuSto-24}, see also \cite{CapDeu-23}. Similar expressions also appeared earlier in \cite{BruZag1999,BruZag2008}. Note also the relation between $g^{\mathrm{BEC}}$ and the classical field theory in \cite{LewNamRou-21,FroKnoSchSoh-22}. This effective condensate theory will be derived rigorously in Theorem~\ref{thm:norm-approximation} below.

The rationale behind this approximation is as follows. The terms in \eqref{eq:firstBogoliubovHamiltonianb} are all contributions to the energy up to terms of the order $o(N^{2/3})$. Since the inverse temperature scales as $\beta \sim N^{-2/3}$ and we are interested in the operator $\exp(-\beta ( \mathcal{H} - \mu \mathcal{N}))$ this accuracy is sufficient to determine the structure of the Gibbs state. The first term on the right-hand side of \eqref{eq:firstBogoliubovHamiltonianb} is of order $N$ but it does not alter the structure of the Gibbs state, which is determined by the Bogoliubov Hamiltonian in \eqref{BogoliubovHamiltonianAndi}. However, this is not entirely true because we still have an integral over $g^{\mathrm{BEC}}(z) \de z$. The function $g^{\mathrm{BEC}}$ arises because the condensate exhibits entropy-induced particle number fluctuations, which have a standard deviation of order $N^{5/6}$ and depend on the particle interactions. The energy associated with these fluctuations is given by the third term on the right-hand side of \eqref{eq:firstBogoliubovHamiltonianb}. Since the function $g^{\mathrm{BEC}}$ is peaked around its mean we can replace the quantity $|z|^2$ in the Bogoliubov Hamiltonian in \eqref{eq:firstBogoliubovHamiltonian} with $N_0(\beta,N)$ in \eqref{eq:crittemp}, which approximates the mean of $g^{\mathrm{BEC}}$. Note that the condensate's particle number fluctuations are much smaller at zero temperature (of order $1$). Note also that the condensate of the ideal gas exhibits fluctuations of the number of condensed particle that are of order $N$. As a result of the condensate fluctuations, the state that approximates the Gibbs state in \eqref{eq:interactingGibbsstate} is not quasi-free. It turns out that the chemical potential $\mu_0$ in \eqref{BogoliubovHamiltonianAndi} has to be chosen as $\mu_0(\beta,N)$ in \eqref{BogoliubovHamiltonianAndi}. It is irrelevant in the condensed phase but ensures the validity of our approximation across the critical point.

The Gibbs state related to the Bogoliubov Hamiltonian in \eqref{BogoliubovHamiltonianAndi} will be denoted by 
\begin{equation}
	G^{\mathrm{Bog}}(z) = \frac{\exp\left( -\beta \mathcal{H}^{\mathrm{Bog}}(z)  \right)}{\tr_{\mathscr{F}_+} \left[ \exp\left( -\beta \mathcal{H}^{\mathrm{Bog}}(z) \right) \right]}. \label{eq:BogoliubovGibbsState}
\end{equation}
It satisfies $\Tr[\mathcal{N}_+ G^{\mathrm{Bog}}(z)] = \sum_{p \in \Lambda_+^*} \gamma_p$
with
\begin{equation}
\gamma_p :=     \Tr [ a^*_p a_{p} G^{\mathrm{Bog}}(z) ]   = \frac{u^2_p + v^2_p}{\exp(\beta \epsilon(p)) - 1} + v^2_p, \quad \forall p\in \Lambda^*_+,\quad \forall z\in \mathbb{C}. 
    \label{eq:gammap}
\end{equation}
Here and in the following we denote by $u_p, v_p$ the related quantities in \eqref{eq:coefficientsBogtrafo} with $p^2$ replaced by $p^2-\mu_0$ and $|z|^2$ replaced by $N_0(\beta,N)$. In particular,
\begin{equation}
        \epsilon(p) = \sqrt{p^2 - \mu_0(\beta,N) } \sqrt{ p^2 - \mu_0(\beta,N) + 2 \hat{v}(0) N_0(\beta,N)/N }
        \label{eq:BogoliubovDispersion}
\end{equation}
is the analogue of \eqref{eq:LY-ek} for our model. In the next section we state our first main result.

\subsection{Main results, part I}
\label{sec:mainResultsPartI}

Our first result provides a trace norm approximation of the Gibbs state $G_{\beta,N}$ in \eqref{eq:interactingGibbsstate}. Note that we use $N_0(\beta,N)$ in \eqref{eq:crittemp} instead of $\beta$ to distinguish between the two parameter regimes in parts (a) and (b). Since the map $\beta \mapsto N_0(\beta,N)$ is strictly monotone increasing, this is mathematically sound. It is also convenient, because the size of $N_0$ can vary from being of order $N$ to being of order $1$ when $|\beta - \beta_{\mathrm{c}}(N)|/\beta_{\mathrm{c}}(N)$ with $\beta_{\mathrm{c}}(N)$ in \eqref{eq:crittemp} is small (cf. the discussion at the end of Section \ref{sec:many-body-model}), and computing the exact dependence is a tedious task.

\begin{theorem}[Trace norm approximation of the Gibbs state] \label{thm:norm-approximation} 
Let the interaction potential $v \in L^1(\Lambda)$ be a nonnegative, even, periodic function (with period $1$), whose Fourier coefficients $\hat{v}$ are nonnegative and satisfy $\sum_{p \in \Lambda^*} (1+|p|) \hat{v}(p) < + \infty$. We consider the limit $N \to \infty$, $\beta N^{2/3} \to \kappa \in (0,\infty)$. Then the Gibbs state $G_{\beta,N}$ in \eqref{eq:interactingGibbsstate} satisfies
\begin{equation}
\Vert G_{\beta,N} - \Gamma_{\beta,N} \Vert_1  \lesssim N^{-1/48} 
\label{eq:norm-approximation-intro}
\end{equation}
with the state $\Gamma_{\beta,N}$, which is defined as follows. 
\begin{enumerate}[label=(\alph*)]
    \item If $N_0(\beta,N) \geq N^{2/3}$ with $N_0(\beta,N)$ in \eqref{eq:crittemp} we have 
\begin{equation}
    \Gamma_{\beta,N} = \int_{\mathbb{C}} |z \rangle \langle z| \otimes G^{\mathrm{Bog}}(z) g^{\mathrm{BEC}}(z) \de z 
    \label{eq:referenceState-intro1}
\end{equation}
with the coherent state $|z \rangle$ in \eqref{eq:coherentstate},  $G^{\mathrm{Bog}}$ in \eqref{eq:BogoliubovGibbsState} and $g^{\mathrm{BEC}}$ in \eqref{eq:GibbsDistributionDiscrete}. The chemical potential $\mu^{\mathrm{BEC}}$ related to $g^{\mathrm{BEC}}$ is chosen such that $\int_{\mathbb{C}} |z|^2 g^{\mathrm{BEC}}(z) \de z = N - \sum_{p \in \Lambda_+^*} \gamma_p$ holds with $\gamma_p$ in \eqref{eq:gammap}.
\item If $N_0(\beta,N) < N^{2/3}$ we have
\begin{equation}
    \Gamma_{\beta,N} = \frac{\exp\left( -\beta (\de \Upsilon(-\Delta) - \mu_0(\beta,N) \mathcal{N}) \right) }{\tr \exp\left( -\beta (\de \Upsilon(-\Delta) - \mu_0(\beta,N)\mathcal{N}) \right) }
    \label{eq:referenceState-intro2}
\end{equation}
with the chemical potential $\mu_0$ of the ideal gas in \eqref{eq:idealgase1pdmchempot}. 
\end{enumerate}
\end{theorem}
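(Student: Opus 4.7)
The plan is to apply the Gibbs variational principle: since $G_{\beta,N}$ minimizes the functional $\mathcal{F}(\Gamma) = \Tr[(\mathcal{H}_N - \mu_{\beta,N}\mathcal{N})\Gamma] + \beta^{-1}\Tr[\Gamma \log \Gamma]$, the excess $\mathcal{F}(\Gamma_{\beta,N}) - \mathcal{F}(G_{\beta,N})$ equals $\beta^{-1} S(\Gamma_{\beta,N} \| G_{\beta,N})$, and the quantum Pinsker inequality $\|\Gamma_{\beta,N} - G_{\beta,N}\|_1^2 \leq 2 S(\Gamma_{\beta,N} \| G_{\beta,N})$ converts a match of free energies into trace-norm proximity. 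The strategy therefore reduces to matching upper and lower bounds on $-\beta^{-1}\log Z_{\beta,N}$ with the computable free energy of $\Gamma_{\beta,N}$, up to error $o(\beta^{-1})$, and then tracking the many error contributions carefully enough to extract the final rate $N^{-1/48}$. The argument splits according to whether $N_0(\beta,N) \geq N^{2/3}$ or $N_0(\beta,N) < N^{2/3}$, so that the two trial states \eqref{eq:referenceState-intro1} and \eqref{eq:referenceState-intro2} are treated in parallel and must be shown to agree at the interface.

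For the \emph{upper bound} in case (a), I evaluate the functional on $\Gamma_{\beta,N}$. The coherent-state identity $a_0|z\rangle = z|z\rangle$ replaces $a_0, a_0^*$ under $\langle z|\cdot|z\rangle$ by $z, \overline{z}$, turning the part of $\mathcal{H}_N$ coupling the $p=0$ and $p\neq 0$ modes into the Bogoliubov operator $\mathcal{H}^{\mathrm{Bog}}(z)$ on $\mathscr{F}_+$ plus the condensate self-interaction $\hat{v}(0)|z|^4/(2N)$. The remaining monomials either vanish upon the angular integration against $g^{\mathrm{BEC}}$ or are of subleading order once the exponential concentration of $g^{\mathrm{BEC}}$ around its peak is used. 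Additivity of the von Neumann entropy on the product structure supplies the Bogoliubov entropy plus the classical differential entropy of $g^{\mathrm{BEC}}$, reproducing the effective energy \eqref{eq:firstBogoliubovHamiltonianb}, with $\mu^{\mathrm{BEC}}$ tuned so that the expected total particle number equals $N$. Case (b) is a perturbative comparison with the ideal gas: the interaction contributes only through the mean-field shift $\hat{v}(0)N/2$, and all other corrections are subleading because $N_0 < N^{2/3}$.

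The \emph{lower bound} is the core of the proof and is where the abstract correlation inequality \eqref{eq:StahlA2-intro-intro} is indispensable. Using the Klauder--Skagerstam resolution of identity on $\mathscr{F}_0$, one writes $\mathcal{H}_N$ in c-number-substituted form and obtains schematically
\begin{equation*}
-\beta^{-1}\log Z_{\beta,N} \geq \int_{\mathbb{C}} \bigl( \Tr[\mathcal{H}^{\mathrm{Bog}}(z) G^{\mathrm{Bog}}(z)] + \tfrac{\hat{v}(0)}{2N}|z|^4 - \mu^{\mathrm{BEC}}|z|^2 \bigr)\, g^{\mathrm{BEC}}(z)\, dz - \sum_{j} \Tr[R_j G_{\beta,N}] + \text{entropy},
\end{equation*}
where the $R_j$ are the ``dangerous'' non-Bogoliubov operators: the cubic condensate--excitation couplings and the purely-excitation quartic monomials. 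For each $R_j$ one writes $R_j = B^2$ (or $R_j \leq B^2$) for a suitable self-adjoint $B$ and invokes \eqref{eq:StahlA2-intro-intro} with $A = \beta(\mathcal{H}_N - \mu_{\beta,N}\mathcal{N})$: a priori bounds on $\Tr[\mathcal{N}_+ G_{\beta,N}]$ and on low condensate moments supply the linear input $|\Tr[B G_{\beta,N}]| \leq a$, while the double commutator $\Tr[[[B,A],B] G_{\beta,N}]$ is dominated by the positive kinetic and quadratic Bogoliubov form contained in $A$. The derivation of the condensate self-interaction and entropy contributions from the $g^{\mathrm{BEC}}$ integral proceeds in the spirit of \cite{BocDeuSto-24}.

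The main obstacle is the \emph{bootstrap structure} linking these estimates: a crude first-moment bound on one operator feeds a second-moment bound through \eqref{eq:StahlA2-intro-intro}, which in turn sharpens the first-moment bound of another operator, and so on. Closing this loop with a universal error of $N^{-1/48}$ requires an almost sharp inventory of which monomials in $a_p, a_p^*$ can be controlled at which order, together with tight lower bounds on the relevant double commutators by the available positive terms. A further subtlety is the critical region $N_0 \sim N^{2/3}$, where condensate fluctuations are maximal on the scale $N^{5/6}$ and the trial state is genuinely \emph{non-quasi-free}; this is precisely what forces the convex-combination structure of \eqref{eq:referenceState-intro1} and prevents a direct Bogoliubov diagonalization from delivering the conclusion. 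Matching the estimates at the interface $N_0 \sim N^{2/3}$ with those of case (b) requires verifying that both trial states yield the same trace-norm rate, with the correlation-inequality framework serving in place of coercivity of the free-energy functional throughout.
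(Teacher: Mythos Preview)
Your overall architecture—Gibbs variational principle, relative entropy, Pinsker, and matching upper/lower free energy bounds with error $o(\beta^{-1})$—is exactly the paper's route (Section~9.1). The upper bound sketch is also broadly correct. However, two points deserve correction. First, $\Gamma_{\beta,N}$ in case (a) is \emph{not} a product state but a convex mixture of products, so the von Neumann entropy is not additive; the paper uses a Berezin--Lieb inequality (Lemma~3.4) to get $S(\Gamma_{\beta,N}) \geq \int S(G^{\mathrm{Bog}}(z)) g^{\mathrm{BEC}}(z)\,\de z + S(g^{\mathrm{BEC}})$. Second, in the lower bound the c-number substitution yields an integral against the \emph{unknown} distribution $\zeta_G(z) = \Tr_+[\langle z|G|z\rangle]$, not against $g^{\mathrm{BEC}}$; the condensate free energy $F_{\mathrm c}^{\mathrm{BEC}}$ only emerges after a further variational step over $\zeta_G$ (Section~8.3), and an entropy \emph{upper} bound $S(G)\le \int S(G_z)\zeta_G(z)\,\de z + S(\zeta_G)$ (Lemma~8.1) is needed there.

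The more serious gap concerns the input to the Stahl correlation inequality \eqref{eq:StahlA2-intro-intro}. You write that ``a priori bounds on $\Tr[\mathcal N_+ G_{\beta,N}]$ \dots\ supply the linear input $|\Tr[B G_{\beta,N}]|\le a$'', but the hypothesis \eqref{eq:CRI-condition-intro} requires $\sup_{t\in[-1,1]}|\Tr[B\Gamma_t]|\le a$ for the entire \emph{perturbed} family $\Gamma_t \propto \exp(-A+tB)$, not just at $t=0$. Obtaining this uniformly is the content of Section~5: one works with the grand potential $\Phi_{h,\eta}(\beta,\mu)$ for a general one-particle Hamiltonian $h$ satisfying \eqref{eq:generalizedOneParticleHamiltonian}, introduces the effective chemical potential via the nonlinear equation \eqref{eq:GrantCanonicalEffectiveIddealGasChemPot}, proves matching upper and lower bounds on $\Phi_{h,\eta}$ up to $O(\eta^{2/3})$ (Proposition~5.3), and then runs a Griffith argument in the perturbation parameter to extract first-order bounds such as $|\Tr[B_p G_{h,\eta}]|\lesssim \eta^{2/3}$ with $B_p = \de\Upsilon(Q\cos(p\cdot x)Q)$. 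A separate argument (Section~6) is then needed to show that the interacting chemical potential $\mu_{\beta,N}$ falls within the admissible range, so that these bounds transfer to $G_{\beta,N}$ and its perturbations. Without this machinery the hypothesis of the correlation inequality is not verified.

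Finally, the order of operations in your lower bound is inverted. In the paper (Section~8.1) the dangerous terms—the cubic coupling \eqref{eq:2-moment-intro} and the part of the quartic density-density term involving $(\mathcal N_+ - N_+^G)^2$—are estimated \emph{directly in the state $G_{\beta,N}$ before} any c-number substitution, using the second-moment bounds $\Tr[B_p^2 G]\lesssim N^{4/3}$ and $\Tr[(\mathcal N_+-N_+^G)^2 G]\lesssim N^{4/3}$ from Theorem~7.5 together with Cauchy--Schwarz (see \eqref{eq:lowerBoundFE5}, \eqref{eq:lowerBoundFE8}). Only after these terms are removed does one pass to the simplified Hamiltonian $\widetilde{\mathcal H}_N$ and perform the c-number substitution. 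The dangerous terms are not of the form $R_j=B^2$; rather, each is bounded via Cauchy--Schwarz by products of second moments, and it is those second moments that the Stahl inequality controls.
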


We interpret Theorem~\ref{thm:norm-approximation} as a justification of Bogoliubov theory for the mean-field Bose gas at positive temperature. The state $|z \rangle \langle z| \otimes G^{\mathrm{Bog}}(z)$ appearing under the integral in \eqref{eq:referenceState-intro1} is quasi-free. However, integrating this state with respect to the measure $g^{\mathrm{BEC}}(z) \de z$ over $\mathbb{C}$ destroys this property. That is, if $N_0(\beta,N) \geq N^{2/3}$ the state $\Gamma_{\beta,N}$ is not quasi-free, and our result therefore goes beyond the standard quasi-free approximation. More details on this can be found in part (a) of Remark~\ref{rem:2pdm} in Section~\ref{sec:mainReducedDensityMatrices}. This should be contrasted with the validity of the quasi-free approximation at zero and sufficiently low temperatures, see \cite{Seiringer-11,GreSei-13,LewNamSerSol-15,DerNap-13,NamSei-15,BosPetSei-21,NamNap-21,BroSei-2022}. 

If $N_0(\beta,N) < N^{2/3}$, the Gibbs state $G_{\beta,N}$ can be approximated in trace norm by the one of the ideal gas. This is a remarkable stability result: although an interaction is added to the system, it does not change the Gibbs state with high accuracy. We expect this to hold only for systems with a mean-field interaction.

Concerning the above theorem we have the following additional remarks.

\begin{remark}
\label{rem:Theorem1}
\begin{enumerate}[label=(\alph*)]
    \item Since each term in the Hamiltonian $\mathcal{H}_N$ in \eqref{eq:Hamiltonian} consists of an even number of creation and annihilation operators the Gibbs state $G_{\beta,N}$ satisfies $[G_{\beta,N},\mathcal{N}] = 0$. The same is obviously not true for the state $|z \rangle \langle z| \otimes G^{\mathrm{Bog}}(z)$ appearing under the integral in \eqref{eq:referenceState-intro1}. However, the integration over the function $g^{\mathrm{BEC}}(z)$ in \eqref{eq:GibbsDistributionDiscrete}, which only depends on $|z|$, restores this property. This can be seen from the fact that $[\Gamma_{\beta,N},\mathcal{N}] = 0$ is equivalent to the property that all expectations of products of creation and annihilation operators in the state vanish unless the number of $a$'s and $a^*$'s is the same. When we use Lemma~\ref{lem:1pdmAndPairingFunction} below to compute the relevant expectations, we see that every operator $a_p$ is accompanied by a factor $z/|z|$ and every operator $a_p^*$ by a factor $\overline{z}/|z|$. All remaining expressions only depend on $|z|$. If the number of $a$'s and $a^*$'s does not match the integration over the phase of $z$ gives zero. Thus, we conclude that $[\Gamma_{\beta,N},\mathcal{N}]=0$. 
    \item From a physics point of view, the state $\Gamma_{\beta,N}$ in \eqref{eq:referenceState-intro1} is given as a convex combination over all pure phases. Here, a pure phase is represented by a state in which each complex number $z$ appearing in the integration has a fixed phase. Since we did not add a symmetry-breaking perturbation to the Hamiltonian, all pure phases appear with the same weight in the integration. The convex combination is given by an integral over the Gibbs distribution $g^{\mathrm{BEC}}$ in \eqref{eq:GibbsDistributionDiscrete}. This function has been recently been used in \cite{BocDeuSto-24,CapDeu-23} to derive an upper bound for the free energy of the Bose gas in the Gross--Pitaevskii limit and it is conceptually related to the classical field theory in \cite{LewNamRou-21,FroKnoSchSoh-22}. Theorem~\ref{thm:norm-approximation} provides the first rigorous derivation of this effective condensate theory.
    \item The function $g^{\mathrm{BEC}}$ describes the particle number fluctuations of the condensate. As we will see later, its variance is of the order $N^{5/3}$ if $N_0(\beta,N) \gesssim N^{5/6}$. This should be compared to the variance of the distribution of the number of particles in the coherent state $|z \rangle$, which if of the order $|z|^{2}$. Since $g^{\mathrm{BEC}}(z)$ is peaked around $|z|^2 \sim N_0(\beta,N) \sim N$ if $\kappa > \frac{1}{4 \pi}[\upzeta(3/2)]^{2/3}$ this variance is roughly of order $N$. We conclude that the main contribution to the variance of the number of particles in the condensate comes from $g^{\mathrm{BEC}}$.
    \item Theorem~\ref{thm:norm-approximation} and our other main results in Section~\ref{sec:MainResultsPart2} are stated with the assumption $\beta N^{2/3} \to \kappa \in (0,\infty)$ with $\beta_{\mathrm{c}}$ in \eqref{eq:crittemp}. However, our techniques also allow us to treat the zero temperature limit $\beta N^{2/3} \to \infty$. For the sake of simplicity, we prefer to not provide the details.
    \end{enumerate}
\end{remark}

Using the result of Theorem~\ref{thm:norm-approximation} together with our two new abstract correlation inequalities, we obtain approximate expressions for the one- and two-particle density matrices, compute various limiting distributions for the number of particles in the condensate, and derive a free energy expansion. These additional results will be discussed in detail in Section~\ref{sec:MainResultsPart2}.

\subsection{Discussion of the proof of Theorem~\ref{thm:norm-approximation} and correlation inequalities}
\label{sec:DiscussionOfProof}

Let us continue our discussion by explaining a few selected ideas of the proof of Theorem~\ref{thm:norm-approximation}. As mentioned already earlier, our proof is based on a variational formulation of the problem, which we introduce first. 

The Gibbs variational principle states that the Gibbs state $G_{\beta,N}$ in \eqref{eq:interactingGibbsstate} is the unique minimizer of the Gibbs free energy functional  
\begin{equation}
	\mathcal{F}(\Gamma) = \tr[\mathcal{H}_N \Gamma ] - \frac{1}{\beta} S(\Gamma) \quad \text{ with the von-Neumann entropy } \quad S(\Gamma) = - \tr[\Gamma \ln(\Gamma)]
    \label{eq:GibbsFreeEnergyFunctional}
\end{equation}
in the set
\begin{equation}
	\mathcal{S}_N = \left\{ \Gamma \in \mathcal{B}(\mathscr{F}) \ | \ 0 \leq \Gamma, \tr \Gamma = 1, \tr [ \mathcal{N} \Gamma ] = N \right\}.
	\label{eq:states}
\end{equation}
The (grand canonical version of the) free energy is defined as the minimum of $\mathcal{F}$:
\begin{equation}
	F(\beta,N) = \min_{\Gamma \in \mathcal{S}_N} \mathcal{F}(\Gamma) = -\frac{1}{\beta} \ln\left( \tr \exp\left( -\beta (\mathcal{H}_N - \mu_{\beta,N} \mathcal{N}) \right) \right) + \mu_{\beta,N} N
	\label{eq:freeenergy}
\end{equation}
with the chemical potential $\mu_{\beta,N}$ in \eqref{eq:interactingGibbsstate} related to the Gibbs state $G_{\beta,N}$. The goal is to obtain sufficiently precise upper and lower bounds for the free energy, which is the main difficulty in the proof of Theorem~\ref{thm:norm-approximation}. With these bounds at hand, one can obtain \eqref{eq:norm-approximation-intro} with a simple argument that uses Pinsker's inequality for the relative entropy. We refer to Section~\ref{sec:traceNormBoundGibsState} for more details. An upper bound for $F(\beta,N)$ can be obtained by using $\Gamma_{\beta,N}$ in \eqref{eq:norm-approximation-intro} as a trial state. A similar upper bound has been obtained recently in the more challenging Gross--Pitaevskii limit in \cite{BocDeuSto-24,CapDeu-23}. The main difficulty is therefore to prove a corresponding lower bound. As we explain in more detail now, one needs to exploit the specific structure of the Gibbs state to achieve this goal. 

As explained above \eqref{eq:firstBogoliubovHamiltonianb}, Bogoliubov theory is based on two ingredients: the c-number substitution and the proof that all terms in the energy (apart from those captured in \eqref{BogoliubovHamiltonianAndi}) that can change the structure of the Gibbs state are small. While the first step can be easily implemented with a c-number substitution in the spirit of \cite{LieSeiYng-05}, the second step is highly nontrivial. As an example, let us consider the term
\begin{align}\label{eq:2-moment-intro}
\frac{1}{N} &\sum_{p,k,p+k \in \Lambda_+^*} \hat{v}(p)  \Tr [ (a^*_{k+p} a^*_{-p} a_k a_0 + \mathrm{h.c.}) G_{\beta,N}]. 
\end{align}
Since it does not appear in the definition of $G^{\mathrm{Bog}}(z)$ in \eqref{eq:BogoliubovGibbsState} we need to show that it is of the order $o(N^{2/3})$. In Section~\ref{sec:lowerBoundSimpliefiedHamiltonian} we show that, by using the translation-invariance of $G_{\beta,N}$ and the summability of $\hat{v}$, this term can be bounded in terms of the quantities $\sup_{p \in \Lambda_+^*} \Tr[(a_p^*a_p)^2 G_{\beta,N}]$ and $\sup_{p \in \Lambda_+^*} \Tr[B_p^2 G_{\beta,N}]$, where 
\begin{equation}
    B_p = \frac{1}{2} \left(\sum_{r,r+p \in \Lambda^*_+} a_{r+p}^* a_r + {\mathrm{h.c.}}\right) = \de \Upsilon ( Q \cos(p\cdot x) Q) 
\end{equation}
with $Q = \mathds{1}(-\Delta \neq 0)$. To prove that the term in \eqref{eq:2-moment-intro} is of order $o(N^{2/3})$, we show 
\begin{equation}
    \sup_{p \in \Lambda_+^*} \Tr[(a_p^*a_p)^2 G_{\beta,N}] + \sup_{p \in \Lambda_+^*} \Tr[B_p^2 G_{\beta,N}] \lesssim N^{4/3}. 
\end{equation}
These bounds are motivated by the fact that they hold for the Gibbs state of the ideal gas $G_{\beta,N}^{\mathrm{id}}$ in \eqref{eq:GibbsStateIdealGas}. In the following, we only explain how to prove the bound for $B_p$. It is interesting to note that the term in \eqref{eq:2-moment-intro} equals zero when we replace $G_{\beta,N}$ by $G_{\beta,N}^{\mathrm{id}}$. It is also interesting to note that for an approximate minimizer $\Gamma$ of $\mathcal{F}$ with a remainder of the order $O(N^{2/3})$ we only have $\sup_{p \in \Lambda_+^*} \Tr[B_p^2 \Gamma] \lesssim N^{5/3}$, which is not good enough to obtain a lower bound for the free energy.

In the first step of our proof we derive bounds for the free energy up to a remainder of the order $O(N^{2/3})$. Using these bounds and a Griffith (or Hellmann--Feynman) argument, we obtain the bound $|\Tr[B_p G_{\beta,N}]| \lesssim N^{2/3}$. To be more precise, we obtain this bound with $G_{\beta,N}$ replaced by the perturbed Gibbs state 
	\begin{equation}
    G_{\beta,N,t} = \frac{\exp(-\beta( \mathcal{H}_{N} - t B_p))}{\Tr[\exp(-\beta( \mathcal{H}_{N} - tB_p))]}, \quad t\in [-1,1],
    \label{eq:GeneralGibbsState-intro}
\end{equation}
that is, we have
\begin{equation}
    \sup_{t\in [-1,1]} |\Tr [ B_p G_{\beta,N,t} ]| \lesssim N^{2/3}.
          \label{eq:A-G-intro}
\end{equation}
To derive the claimed second moment estimate from the first moment bound in \eqref{eq:A-G-intro}, we use the following novel abstract correlation inequality, which is of independent interest. 

\begin{theorem}[Second order correlation inequality] \label{thm:correlation-intro} Let $A$ be a self-adjoint operator on a separable complex Hilbert space and assume that $\Tr [e^{-sA} ] <+\infty$ holds for all $s> 0$. Let $B$ be a symmetric operator that is $A$-relatively bounded with a relative bound strictly smaller than $1$. We also assume that the Gibbs state
\begin{equation}\label{eq:Gibbs-t-intro}
\Gamma_t = \frac{\exp(-A+tB)}{\Tr [ \exp(-A+tB) ]} , \quad t\in [-1,1]
\end{equation}
satisfies
\begin{align}\label{eq:CRI-condition-intro}
\sup_{t\in [-1,1]}|\Tr (B \Gamma_t )| \le a.
\end{align}
Then we have
\begin{equation}
\Tr [ B^2 \Gamma_0 ] \leq a e^{a} + \frac{1}{4} \Tr( [[B,A],B] \Gamma_0). 
\label{eq:StahlA2-intro}
\end{equation}
\end{theorem}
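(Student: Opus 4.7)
The plan is to split the bound into two estimates,
\begin{equation*}
(B,B)_{H_{0}}\le a e^{a}\qquad\text{and}\qquad \Tr(B^{2}\Gamma_{0})-(B,B)_{H_{0}}\le\tfrac{1}{4}\Tr\bigl([[B,A],B]\Gamma_{0}\bigr),
\end{equation*}
where $(B,B)_{H_{0}}:=Z_{0}^{-1}\int_{0}^{1}\Tr(e^{-sA}Be^{-(1-s)A}B)\,ds$ is the Duhamel two-point function and $Z_{0}=\Tr e^{-A}$. Adding these yields the theorem. The first estimate is where Stahl's theorem enters, and the second is elementary spectral algebra.

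For the Duhamel estimate, I would invoke the advertised infinite-dimensional version of Stahl's theorem to obtain a Laplace representation
\[Z(t):=\Tr e^{-A+tB}=\int_{\R}e^{t y}\,d\nu(y)\]
with $\nu\ge 0$ a positive Borel measure. Differentiating under the integral and comparing with the standard identities $Z'(t)=\Tr(B e^{-A+tB})=Z_{t}\,\Tr(B\Gamma_{t})$ and $Z''(t)=Z_{t}(B,B)_{H_{t}}$ identifies the Duhamel two-point function with an honest second moment of a positive probability measure,
\[(B,B)_{H_{0}}=\int_{\R} y^{2}\,d\mu_{0}(y),\qquad d\mu_{0}:=d\nu/Z_{0}.\]
This is the ``convexity in disguise'' granted by Stahl's theorem: it converts the operator question into a scalar one. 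Applying the elementary pointwise bound $y^{2}\le y\sinh y$ then gives
\[(B,B)_{H_{0}}\le \int_{\R} y\sinh(y)\,d\mu_{0}=\frac{Z'(1)-Z'(-1)}{2Z_{0}}=\frac{Z_{1}\Tr(B\Gamma_{1})-Z_{-1}\Tr(B\Gamma_{-1})}{2Z_{0}}.\]
The hypothesis \eqref{eq:CRI-condition-intro} bounds $|\Tr(B\Gamma_{\pm 1})|\le a$, while convexity of $\log Z$ combined with $(\log Z)'(t)=\Tr(B\Gamma_{t})$ and $|\Tr(B\Gamma_{s})|\le a$ for $s\in[-1,1]$ yields $|\log(Z_{\pm 1}/Z_{0})|\le a$, hence $Z_{\pm 1}/Z_{0}\le e^{a}$. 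Combining, $(B,B)_{H_{0}}\le a(Z_{1}+Z_{-1})/(2Z_{0})\le a e^{a}$.

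For the spectral correction, I would diagonalize $A$ in its (discrete) eigenbasis $\{|m\rangle\}$ with eigenvalues $\{E_{m}\}$, set $B_{mn}:=\langle m|B|n\rangle$ and $v:=E_{m}-E_{n}$, and express each of $\Tr(B^{2}\Gamma_{0})$, $(B,B)_{H_{0}}$, and $\Tr([[B,A],B]\Gamma_{0})$ as an explicit sum of the form $Z_{0}^{-1}\sum_{m,n}|B_{mn}|^{2}e^{-(E_{m}+E_{n})/2}\Phi_{\ast}(v)$ for the appropriate function $\Phi_{\ast}$ (namely $\cosh(v/2)$, $\sinh(v/2)/(v/2)$, and $2v\sinh(v/2)$ respectively). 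The desired correction inequality then reduces to the pointwise scalar estimate $\cosh(x)-\sinh(x)/x\le x\sinh(x)$ for $x=v/2\ge 0$, easily verified term by term in the Taylor series.

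The main obstacle is the infinite-dimensional version of Stahl's theorem itself, that is, passing from finite matrices to a trace-class setting with unbounded $A$ and only $A$-relatively bounded $B$, and justifying the integrability of the relevant moments $\int y^{k}e^{ty}\,d\nu$ for $|t|\le 1$ and $k=1,2$. Once the Laplace representation of $Z(t)$ is rigorously in place, the remaining manipulations are elementary.
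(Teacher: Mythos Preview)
Your proposal is correct and follows essentially the same two-step decomposition as the paper: bound the Duhamel two-point function using the Stahl representation, then control the difference $\Tr(B^2\Gamma_0)-(B,B)_{H_0}$ via the Falk--Bruch inequality (spectral computation). The only cosmetic difference is in the first step: the paper phrases the argument as ``$Z''$ is convex (since $Z^{(4)}\ge 0$ from Stahl), hence $Z''(0)\le \tfrac12\int_{-1}^{1}Z''(s)\,ds=\tfrac12(Z'(1)-Z'(-1))$'', whereas you apply the pointwise bound $y^{2}\le y\sinh y$ directly to the Stahl measure; unwinding the paper's Jensen step via Fubini gives exactly your inequality, so the two are literally the same estimate in different clothing (yours is arguably the more direct formulation).
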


The idea of using the first moment estimate for a family of perturbed Gibbs states to deduce a second moment estimate for the original Gibbs state is inspired by the recent work of Lewin, Nam, and Rougerie \cite{LewNamRou-21}. Here such bounds have been a crucial input for implementing a semiclassical analysis in infinite dimensions. Theorem~\ref{thm:correlation-intro} is a considerably improved version of the abstract correlation inequality in \cite[Theorem~7.1]{LewNamRou-21}. The main new ingredient in the proof of Theorem~\ref{thm:correlation-intro} is the observation that \textit{Stahl's theorem} \cite{Stahl2013}, which was formerly known as the Bessis--Moussa--Villani (BMV) conjecture, implies an elegant convexity of the Duhamel two point function. Since we expect this to find applications also in other contexts, we now explain these ideas in some more detail. 

We start by noting that
\begin{equation}
    Z'(t) = \Tr [B \exp(-A+tB)], \quad \text{ with }\quad Z(t)=   \Tr [\exp(-A+tB)],     
\end{equation}
which follows from the cyclicity of the trace. If we take another derivative we find
\begin{align}\label{eq:Duhamel-2-point-intro}
Z''(t) = \partial_t \Tr [B \exp(A-tB)] = \int_0^1 \Tr [B \exp((A-tB)s) B \exp((A-tB)(1-s)))] \de s.
\end{align}
The term on the right-hand side of \eqref{eq:Duhamel-2-point-intro}, when divided by $Z(t)$, is called the Duhamel two-point function. It satisfies the following bound relating it to the second moment of $B$:
\begin{align} \label{eq:LNR}
0\le \Tr [B^2\Gamma_t ] - \int_0^1 \Tr [ B \Gamma_t^s B \Gamma_t^{1-s}] ds \le \frac{1}{4} \Tr ( [B,[A,B]] \Gamma_t).
\end{align} 
The above inequality is taken from \cite[Theorem 7.2]{LewNamRou-21}, which is a variant of  the Falk--Bruch inequality \cite{FalBru-69} (see also \cite[Theorem 3.1]{DysLieSim-78}).  

It therefore remains to control the Duhamel two-point function. 

The new observation in the present paper is that Stahl's theorem, see \cite{Stahl2013,Eremenko2015}, implies that $Z''(t)$ is convex. More precisely, an infinite-dimensional version of Stahl's theorem that we prove in Section~\ref{sec:StahlsTheorem} guarantees under the assumptions of Theorem~\ref{thm:correlation-intro} the existence of a nonnegative Borel measure $\mu$ on $\mathbb{R}$ such that  
    \begin{equation}
       Z(t) = \tr[ \exp(-A+tB) ] = \int_{-\infty}^{\infty} e^{ts} \de \mu(s)
        \label{eq:stahl5-intro}
    \end{equation}
holds. In other words, $Z(t)$ is the Laplace transform of a positive measure. Consequently, 
     \begin{equation}
    Z^{(2n)}(t) =  \int_{-\infty}^{\infty} s^{2n} e^{ts} \de \mu(s) \ge 0
        \label{eq:stahl5-6-intro}
    \end{equation}
holds for all $n \in \mathbb{N}$ even if the operator $B$ has no sign. In particular, $t \mapsto Z''(t)$ is convex. This allows us to estimate
\begin{equation}
    Z''(0) \le \frac{1}{2}\int_{-1}^1 Z''(s) \de s = \frac{1}{2} ( Z'(1) - Z'(-1) ) \leq \sup_{t \in [-1,1]} | \Tr [B \Gamma_t] | Z(t)  \leq a e^{a} Z(0).
    \label{eq:introAndi1}
\end{equation}
The obtain the last bound we additionally used $e^{-a}\le Z(t)/Z(0) \le e^a$ for all $t\in [-1,1]$, which follows from the assumption in \eqref{eq:CRI-condition-intro} and an application of Grönwall's inequality. Putting \eqref{eq:LNR} and \eqref{eq:introAndi1} together, we obtain \eqref{eq:StahlA2-intro}.
  
The main difference between our proof of Theorem~\ref{thm:correlation-intro} and the proof of \cite[Theorem~7.1]{LewNamRou-21} is that we use the exact convexity of $Z''(t)$, which follows from Stahl's theorem, while the analysis in  \cite{LewNamRou-21} is based on an approximate convexity of the function $t \mapsto \Tr [ B^2 \Gamma_t ]$ (this function is not known to be convex) as 
\begin{align}\label{eq:B2B3}
\partial_t \Tr [B^2 \Gamma_t] \approx \Tr [B^3 \Gamma_t], \quad  \partial_t \Tr [B^3 \Gamma_t] \approx \Tr [B^4 \Gamma_t] \ge 0. 
\end{align}
The approximation in \eqref{eq:B2B3} can be justified by \eqref{eq:LNR}, but  this requires one to control expectations of the form $\Tr ( [B^2,[B^2, A]] \Gamma_t)$ and $\Tr ( [B^3,[B^3, A]] \Gamma_t)$, which is clearly more involved than controlling the simpler term $\Tr ([B,[B,A]]\Gamma_0)$ appearing in \eqref{eq:StahlA2-intro}. Because of this, the abstract statement in \cite[Theorem 7.1]{LewNamRou-21} is more complicated than our Theorem~\ref{thm:correlation-intro}.

This concludes our discussion of the elements of the proof of Theorem~\ref{thm:norm-approximation}. Before providing more details on our other main results, we note that the correlation inequality in Theorem~\ref{thm:correlation-intro} is also sufficient to establish bounds for the 1-pdm of the Gibbs state and the condensate distributions. However, it is not sufficient to obtain our pointwise bounds for the 2-pdm of the Gibbs state in Fourier space and for the variances of $a^*_0 a_0$ and $\mathcal{N} - a_0^*a_0$. To achieve this, we require an extension of Theorem~\ref{thm:correlation-intro} to higher moments. Since this is another key novelty of our paper, we now explain it in more detail.

Except when $B$ commutes with $A$, obtaining higher-moment bounds is significantly more challenging than proving second-moment bounds. The main difficulty is that no extension of the Falk–Bruch inequality is known that relates the Duhamel $n$-point function to the $n$-th moment $\Tr[B^{n}\Gamma_t]$ when $n>2$. If such an extension existed, it would allow us to extend the proof strategy of Theorem~\ref{thm:correlation-intro} to this case. To overcome this issue, we introduce a second new abstract correlation inequality. 
\begin{theorem}[Higher order correlation inequality]\label{thm:higher-moments}
    Let $A$ and $B$ satisfy the assumptions of Theorem~\ref{thm:correlation-intro}, including that the Gibbs state $\Gamma_t$ in \eqref{eq:Gibbs-t-intro} satisfies \eqref{eq:CRI-condition-intro}. We assume in addition that there is a self-adjoint operator $X \geq 1$ such that $X$ is $A$-relatively bounded and 
\begin{align}\label{eq:corr-thm-ass2}
[A,X]=[B,X]=0, \quad \pm B\le X,\quad \pm [[B,A],B] \le b X^{\alpha}
\end{align}
hold with some constants $b > 0$ and $\alpha \in \mathbb{R}$. Then      for all even $k\in \mathbb{N}$ we have
    \begin{equation}\label{eq:corr-thm-conclusion-Gamma}
        \Tr [B^k \Gamma_0] \lesssim_{k} e^{2a}  \sup_{t\in [-1,1]} \Big\{ 1 + b^2 \Tr[ X^{k-2+2\alpha} \Gamma_t ] \Big\}.
    \end{equation}
Moreover, if $B\ge 0$, then for all $k\in \mathbb{N}$ we have
      \begin{equation}\label{eq:corr-thm-conclusion-B>0-Gamma}
       \Tr[ B^{k} \Gamma_0 ] \lesssim_{k} e^{2a} \sup_{|t| \le 1}  \Big(  1  + \sum_{\ell=1}^k b |\Tr ( [B, [B, A]] X^ {\alpha+\ell-3}\Gamma_t )| \Big). 
  \end{equation}
\end{theorem}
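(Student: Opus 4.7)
The plan is to extend the Stahl--theorem-based argument of Theorem~\ref{thm:correlation-intro} to higher moments by applying it to a carefully chosen auxiliary operator built from $B$ and $X$, and, for even $k$, iterating the second-order inequality twice in succession. For even $k$, I set $\tilde B := B X^{(k-2)/2}$. Since $X$ commutes with both $A$ and $B$, $\tilde B$ is self-adjoint, satisfies $[B, \tilde B]=0$, and the double commutator reduces to
\[
[[\tilde B, A], \tilde B] = [[B,A], B]\, X^{k-2} = M X^{k-2}, \qquad M := [[B,A], B],
\]
so that $\pm M \leq b X^{\alpha}$ yields $\pm [[\tilde B, A], \tilde B] \leq b X^{k-2+\alpha}$. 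Combined with the operator inequality $B^{k} \leq B^{2} X^{k-2} = \tilde B^{2}$ (valid for even $k$ because $\pm B \leq X$ and the operators commute), it suffices to bound $\Tr[\tilde B^{2} \Gamma_0]$. Applying Theorem~\ref{thm:correlation-intro} to $(A, \tilde B)$ gives
\[
\Tr[\tilde B^{2} \Gamma_0] \leq \tilde a\, e^{\tilde a} + \tfrac{b}{4}\Tr[X^{k-2+\alpha}\Gamma_0],
\]
provided a first-moment bound $\tilde a := \sup_{|t|\le 1}|\Tr[\tilde B\tilde\Gamma_t]|<\infty$ for $\tilde\Gamma_t = e^{-A+t\tilde B}/\tilde Z(t)$.

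The key technical step is to control $\tilde a$. Cauchy--Schwarz gives $|\Tr[\tilde B\tilde\Gamma_t]| \leq \sqrt{\Tr[B^{2} \tilde\Gamma_t]}\,\sqrt{\Tr[X^{k-2}\tilde\Gamma_t]}$, and I would bound the first factor by applying Theorem~\ref{thm:correlation-intro} a second time, now to the shifted pair $(A - t\tilde B,\, B)$. Crucially, the commutation $[B, \tilde B]=0$ implies $[[B, A - t\tilde B], B] = M$, so the double-commutator bound is unchanged, and the second application yields $\Tr[B^{2} \tilde\Gamma_t] \lesssim a' e^{a'} + b\,\Tr[X^{\alpha} \tilde\Gamma_t]$, where $a'$ is a first-moment bound for $B$ at the two-parameter family $\Gamma_{s,t} := e^{-A+t\tilde B+sB}/Z(s,t)$, $|s|\le 1$. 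The main obstacle is to establish $a'\lesssim a$: for this one exploits $[B,\tilde B]=0$ (a consequence of $[B,X]=0$) to write the perturbation as $sB + t\tilde B = B\,(s + tX^{(k-2)/2})$ and, using the infinite-dimensional Stahl theorem proved in Section~\ref{sec:StahlsTheorem}, to express the two-parameter partition function as a mixture, over the joint spectrum of $X$, of fibrewise one-parameter partition functions to which the original hypothesis $|\Tr[B\Gamma_s]|\le a$ applies directly.

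Combining the two estimates via Cauchy--Schwarz, the elementary bound $\Tr[X^{k-2+\alpha}\Gamma] \leq \sqrt{\Tr[X^{k-2+2\alpha}\Gamma]\,\Tr[X^{k-2}\Gamma]}$, and AM--GM to absorb the linear-in-$b$ contributions into a single term $b^{2} \Tr[X^{k-2+2\alpha}\Gamma_t]$ (using $X \geq 1$ to control the residual $\Tr[X^{k-2}\Gamma]$ factor), one obtains the claimed bound, with the factor $e^{2a}$ arising from the two applications of Theorem~\ref{thm:correlation-intro}, one $e^a$ per application. For the $B \geq 0$ case with general $k$, the same strategy applies but the parity restriction disappears: since every integer power of $B\geq 0$ is self-adjoint and nonnegative, one iterates a Falk--Bruch-type reduction $k$ times, each iteration reducing the $B$-power by one while contributing a commutator term of the form $\Tr([B, [B, A]] X^{\alpha + \ell - 3}\Gamma_t)$; the sum over $\ell = 1,\ldots,k$ in the stated conclusion records the contributions of these successive iterations.
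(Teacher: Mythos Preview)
Your reduction $B^{k}\le \tilde B^{2}$ with $\tilde B=BX^{(k-2)/2}$ and the commutator identity $[[\tilde B,A],\tilde B]=[[B,A],B]X^{k-2}$ are correct, but the scheme breaks down at the first application of Theorem~\ref{thm:correlation-intro}. That theorem, applied to $(A,\tilde B)$, outputs the term $\tilde a\,e^{\tilde a}$, and this is fatal unless $\tilde a=O(1)$. Your own Cauchy--Schwarz bound gives $\tilde a\le \sqrt{\Tr[B^{2}\tilde\Gamma_t]}\,\sqrt{\Tr[X^{k-2}\tilde\Gamma_t]}$, so at best $\tilde a$ is of the order of $\sqrt{\Tr[X^{k-2}\tilde\Gamma_t]}$, which is polynomially large in the applications (e.g.\ $X=1+\beta\mathcal N$). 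The factor $e^{\tilde a}$ then destroys any useful bound; it cannot be ``AM--GM absorbed'' into the polynomial right-hand side of \eqref{eq:corr-thm-conclusion-Gamma}, and there is no route from here to the claimed $e^{2a}$.

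The fibrewise Stahl argument you sketch to rescue $a'\lesssim a$ does not work either. Decomposing over $\mathrm{spec}(X)$ turns $\Gamma_{s,t}$ into a direct sum of blocks with effective coupling $r_x=s+tx^{(k-2)/2}$, unbounded in $x$. The hypothesis \eqref{eq:CRI-condition-intro} is a \emph{global} bound on the ratio $\Tr[B e^{-A+rB}]/\Tr[e^{-A+rB}]$ for a \emph{single} parameter $|r|\le1$; it says nothing about individual fibres, and even fibrewise it would only apply for $|r_x|\le1$, which fails for large $x$. Stahl's theorem yields positivity of even derivatives of $r\mapsto\Tr[e^{-A_x+rB_x}]$ but gives no mechanism to transfer the global first-moment bound to these $x$-dependent couplings.

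The paper's proof avoids modifying $B$ altogether. It differentiates $\Tr[B^{\ell}e^{-A+tB}]$ in $t$ via Duhamel, uses a Falk--Bruch/\cite[Thm.~7.2]{LewNamRou-21} estimate (with the auxiliary power $X^{q}$ inserted to balance the two square roots) to compare $\partial_t\Tr[B^{\ell}e^{-A_t}]$ with $\Tr[B^{\ell+1}e^{-A_t}]$, and then feeds these into an elementary ``approximate convexity'' lemma (Lemma~\ref{lem:f-f4}) that converts integral bounds on $f_2\ge0$ into pointwise bounds on $f_0$. This yields a recursion from $k-2$ to $k$ (or $k-1$ to $k$ when $B\ge0$) in which the hypothesis $a$ enters only through the harmless estimate $Z(t)/Z(0)\le e^{a}$ from \eqref{eq:1-asum-used}, never through an exponentiated moment of a large operator.
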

The main idea here is to introduce the additional operator $X$ that commutes with both, $A$ and $B$, and can be used to dominate $B$ as well as $[[A,B],B]$. In applications, it is chosen as $1 + \mathcal{N}$, which allows us to obtain bounds also in this more complicated setting. The trade-off, however, is that if the expectation of $B$ is substantially smaller than $X$ (which occurs, for example, when $B=a_p^* a_p$ with $p\in \Lambda_+^*$) then the bounds we obtain are meaningful only for lower moments. Two examples of bounds obtained using Theorem~\ref{thm:higher-moments} are
\begin{equation}
    \sup_{p \in \Lambda_+} \Tr[(a_p^*a_p)^4 G_{\beta,N}] \lesssim N^{8/3} \quad \text{ and } \quad \Tr[(\cN_+- \Tr[\cN_+ G_{\beta,N} ] )^4 G_{\beta,N}] \lesssim N^{8/3}
    \end{equation}
with $\cN_+=\sum_{q\in \Lambda_+^*} a_q^* a_q$, which are crucial to obtain information on the 2-pdm of the Gibbs state and the variance of the number of excited particles. 

In order to put the abstract correlation inequalities to good use, we need first-order bounds for suitably perturbed Gibbs states on the optimal scale. Already this step requires a substantial amount of new ideas. For more information we refer to Section~\ref{sec:organizationOfArticle}.  

In the next section, we will discuss our other main results in more detail.

\section{Main results, part II} \label{sec:MainResultsPart2}

In this section we discuss our other main results that have already been mentioned after Theorem~\ref{thm:norm-approximation}. We start with the results that concern the reduced density matrices.
\subsection{Reduced density matrices}
\label{sec:mainReducedDensityMatrices}
Before we provide our main results for the 1- and the 2-pdms of the Gibbs state, we introduce some notation. 

Let $\Gamma \in \mathcal{S}_N$ with $\mathcal{S}_N$ in \eqref{eq:states} be a state satisfying $\tr[ \mathcal{N}^{k} \Gamma ] < + \infty$ for $k \in \mathbb{N}$. We define the $k$-particle reduced density matrix ($k$-pdm) $\gamma_{\Gamma}^{(k)} \in \mathcal{B}(L^2)(\Lambda^{k}) $ of $\Gamma$ via its integral kernel  
\begin{equation}
    \gamma_{\Gamma}^{(k)}(p_1,...,p_{k};q_1,...,q_{k}) = \tr[ a_{q_1}^* ... a_{q_{k}}^* a_{p_1} ... a_{p_{k}} \Gamma]
    \label{eq:k1k2-pdm}
\end{equation} 
in Fourier space and note that it satisfies 
\begin{equation}
    \tr_{L^2(\Lambda^k)} [ \gamma_{\Gamma}^{(k)} ] = \tr[ \mathcal{N} (\mathcal{N} - 1) ... (\mathcal{N} - k + 1) \Gamma ].
    \label{eq:Normalizationk-pdm}
\end{equation}

We say that a sequence of states $\Gamma_N \in \mathcal{S}_N$ with $\mathcal{S}_N$ in \eqref{eq:states} indexed by the particle number displays Bose--Einstein condensation (BEC) iff
\begin{equation}
    \liminf_{N \to \infty} \sup_{\Vert \psi \Vert_2 = 1} \frac{\langle \psi, \gamma^{(1)}_{\Gamma_N} \psi \rangle}{N} > 0,
    \label{eq:definitionBEC}
\end{equation}
that is, iff the largest eigenvalue of $\gamma^{(1)}_{\Gamma_N}$ growths proportionally to $N$. The largest eigenvalue of $\gamma^{(1)}_{\Gamma_N}$ divided by $N$ and the corresponding eigenvector are called the condensate fraction and the condensate wave function, respectively. 

Our first main result in this section concerns the 1-pdm of the Gibbs state. 

\begin{theorem}[1-pdm]
\label{thm:1-pdm}
    Let $v$ satisfy the assumptions of Theorem~\ref{thm:norm-approximation}. We consider the limit $N \to \infty$, $\beta N^{2/3} \to \kappa \in (0,\infty)$. Then we have the following statements for the 1-pdm $\gamma_{\beta,N}$ of the Gibbs state $G_{\beta,N}$ in \eqref{eq:interactingGibbsstate}.
    \begin{enumerate}[label=(\alph*)]
    \item The 1-pdm satisfies
    \begin{equation}
          \bigg\Vert \ \gamma_{\beta,N} - \widetilde{N}_0 | \varphi_0 \rangle \langle \varphi_0 | - \sum_{p \in \Lambda_+^*} \gamma_p | \varphi_p \rangle \langle \varphi_p | \ \bigg\Vert_1 \lesssim N^{2/3-1/48}
            \label{eq:theorem21pdm}
        \end{equation} 
        with $\varphi_p(x) = e^{\mathrm{i} p \cdot x}$, $\gamma_p$ in \eqref{eq:gammap}, and $\widetilde{N}_0 = N - \sum_{p \in \Lambda_+^*} \gamma_p$. By $\Vert \cdot \Vert_1$ we denoted the trace norm. 
    \item If $\kappa > \frac{1}{4 \pi}[\upzeta(3/2)]^{2/3}$, then the eigenvalues $\gamma_{\beta,N}(p)$, $ p \in \Lambda^*$ of $\gamma_{\beta,N}$ satisfy
    \begin{align}
        | \gamma_{\beta,N}(0) - \widetilde{N}_0(\beta,N) | \lesssim N^{2/3-1/48}, \qquad 
        | \gamma_{\beta,N}(p) - \gamma_p | \lesssim \left( \frac{1}{\beta p^2} + 1 \right) N^{-1/96} \quad \forall p \in \Lambda_+^*. 
        \label{eq:theorem21pdmb}
    \end{align}
    \item If $\kappa < \frac{1}{4 \pi}[\upzeta(3/2)]^{2/3}$, then for $\mu_0$ in \eqref{eq:idealgase1pdmchempot} we have
    \begin{equation}
        \gamma_{\beta,N}(p) = \frac{1}{\exp(\beta(p^2 - \mu_0(\beta,N)))-1} + O(N^{-1/96}).
        \label{eq:theorem21pdmc}
    \end{equation}
    \end{enumerate}
\end{theorem}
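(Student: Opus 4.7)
The plan is to start from the trace norm approximation $\|G_{\beta,N} - \Gamma_{\beta,N}\|_1 \lesssim N^{-1/48}$ of Theorem~\ref{thm:norm-approximation}, first verify that $\gamma_{\Gamma_{\beta,N}}$ equals the target operator, then prove the trace-norm estimate (a) by a Cauchy--Schwarz argument combined with moment bounds, and finally establish the pointwise bounds (b), (c) by a Griffith/Hellmann--Feynman argument. Both $G_{\beta,N}$ and $\Gamma_{\beta,N}$ are translation invariant: $G_{\beta,N}$ because $\mathcal{H}_N$ commutes with translations, $\Gamma_{\beta,N}$ because $g^{\mathrm{BEC}}(z)\de z$ depends only on $|z|$ while $|z\rangle\langle z|\otimes G^{\mathrm{Bog}}(z)$ transforms covariantly under $z\mapsto e^{\mathrm{i}\theta}z$. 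Consequently both 1-pdms are diagonal in the Fourier basis $\{\varphi_p\}_{p\in\Lambda^*}$. Using $a_0|z\rangle = z|z\rangle$, the fact that for $p\ne 0$ the operators $a_p, a_p^*$ act nontrivially only on $\mathscr{F}_+$, and the momentum conservation of $G^{\mathrm{Bog}}(z)$, a direct computation yields the announced Fourier eigenvalues $\widetilde{N}_0$ at $p=0$ and $\gamma_p$ at $p\in\Lambda_+^*$.

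For part (a), I would transfer the state-level trace-norm bound to the 1-pdm level by testing against bounded self-adjoint $h$ on $L^2(\Lambda)$ with $\|h\|_{\mathrm{op}}\le 1$ and writing
\[
\tr_{L^2}\bigl[h\bigl(\gamma_{\beta,N}-\gamma_{\Gamma_{\beta,N}}\bigr)\bigr] = \tr_{\mathscr{F}}\bigl[\dG(h)(G_{\beta,N}-\Gamma_{\beta,N})\bigr].
\]
Since $\dG(h)$ is unbounded, I would decompose $G_{\beta,N}-\Gamma_{\beta,N}=X_+-X_-$ into positive and negative parts, use the elementary inequality $X_\pm\le G_{\beta,N}+\Gamma_{\beta,N}$, and apply Cauchy--Schwarz to obtain
\[
\bigl|\tr[A(G_{\beta,N}-\Gamma_{\beta,N})]\bigr| \lesssim \sqrt{\tr[A^2(G_{\beta,N}+\Gamma_{\beta,N})]\cdot\|G_{\beta,N}-\Gamma_{\beta,N}\|_1}
\]
for $A=\dG(h)-c$ with a suitable centering constant $c$. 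The key input is a second-moment bound on $\dG(h)$ (centered) in the Gibbs state, which is provided by Theorem~\ref{thm:higher-moments} applied with $B$ an appropriate shift of $\mathcal{N}_+$ and $X=1+\mathcal{N}$; together with $\|G_{\beta,N}-\Gamma_{\beta,N}\|_1\lesssim N^{-1/48}$ this delivers the advertised rate.

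For the pointwise bounds in (b) and (c), I would use a Griffith/Hellmann--Feynman argument. For $p\in\Lambda_+^*$, introduce the perturbed Hamiltonian $\mathcal{H}_N^t=\mathcal{H}_N - t\,\epsilon(p)\,a_p^*a_p$ for $t$ in a small bounded interval. Carrying the proof of Theorem~\ref{thm:norm-approximation} through parametrically should produce a free-energy expansion $F^t(\beta,N) = F_{\mathrm{ref}}^t(\beta,N) + o(N^{2/3})$ uniformly in $t$, where $F_{\mathrm{ref}}^t$ is the free energy of the analogously perturbed reference state. Since both sides are concave in $t$, the standard convexity lemma converts approximation of functions into approximation of derivatives at $t=0$, giving
\[
\bigl|\tr[a_p^*a_p(G_{\beta,N}-\Gamma_{\beta,N})]\bigr| \lesssim \epsilon(p)^{-1}\sqrt{o(N^{2/3})\cdot\|\partial_t^2 F^t\|_{\infty,[-1,1]}}.
\]
The second derivative $\partial_t^2 F^t$ is controlled via Theorem~\ref{thm:correlation-intro} applied to $A\propto\beta(\mathcal{H}_N-\mu\mathcal{N})$ and $B\propto \epsilon(p)a_p^*a_p$; the Bogoliubov dispersion $\epsilon(p)\sim |p|\sqrt{p^2+N_0/N}$ then produces the $(1/(\beta p^2)+1)$ weight in (b). The $p=0$ bound in (b), and part (c), follow in the same way with the reference state being the ideal-gas Gibbs state in the non-condensed regime, and using $a_0^*a_0 = \mathcal{N}-\mathcal{N}_+$ to relate the zero-mode occupation to the total and excited particle numbers.

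The main technical obstacle is the parametric rerun of the proof of Theorem~\ref{thm:norm-approximation} uniformly in $t$ (and in $p$ for the perturbation), preserving the error $o(N^{2/3})$ throughout. The new correlation inequalities play a central role in both components of the argument: in (a), via the second-moment estimate on $\mathcal{N}_+$ and related quantities needed to upgrade state-level closeness into 1-pdm-level closeness; and in (b), (c), implicitly through the perturbed free-energy expansion and the control of its second derivative needed for the Griffith argument.
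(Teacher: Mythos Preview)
Your plan inverts the paper's allocation of methods: the paper proves the trace-norm bound~(a) by a Griffith argument on the perturbed free energy and proves the pointwise bounds~(b),~(c) by the Cauchy--Schwarz route (your inequality $|\Tr[A(G-\Gamma)]|\lesssim\sqrt{\Tr[A^2(G+\Gamma)]}\cdot\sqrt{\|G-\Gamma\|_1}$, stated there as Lemma~\ref{lem:fundamentalLemmaForPWBounds}). You propose the reverse.

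For~(b),~(c) this swap is harmless in principle, but the paper's choice is cleaner: since $\Tr[(a_p^*a_p)^2(G_{\beta,N}+\Gamma_{\beta,N})]\lesssim(\beta p^2)^{-2}+1$ is already available from Theorem~\ref{thm:secondOrderEstimates}, one line of Cauchy--Schwarz gives $|\gamma_{\beta,N}(p)-\gamma_p|\lesssim((\beta p^2)^{-1}+1)N^{-1/96}$ directly, without rerunning any free-energy expansion parametrically in $p$.

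For~(a), however, your route has a genuine gap. The bound you need is $\sup_{\|h\|_\infty\le1,\,h\varphi_0=0}\Tr[(\dG(h)-c_h)^2(G_{\beta,N}+\Gamma_{\beta,N})]\lesssim N^{4/3}$, uniformly over all such $h$. This does \emph{not} follow from Theorem~\ref{thm:higher-moments} with ``$B$ an appropriate shift of $\mathcal{N}_+$'': that choice yields moments of $\mathcal{N}_+$, not of $\dG(h)$. You would instead need to run the correlation inequality with $B=\lambda\beta(\dG(h)-c_h)$ for each $h$, which in turn requires first-order estimates $|\Tr[\dG(h)\Gamma_t]-c_h|\lesssim N^{2/3}$ uniformly in $h$ and $t$ --- essentially the Griffith machinery you were trying to avoid. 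Even granting the uniform variance bound, your approach caps out at $N^{2/3-1/96}$: the trace-norm input $\|G-\Gamma\|_1\lesssim N^{-1/48}$ already lost a square root through Pinsker, and Cauchy--Schwarz loses another. The paper's Griffith argument for~(a) works directly with the free-energy error $O(N^{5/8})$ and a second-derivative bound $O(N^{2/3})$, whose geometric mean $N^{31/48}=N^{2/3-1/48}$ recovers the stated rate.
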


Concerning the above theorem we have the following remarks.

\begin{remark}
\label{rem:1pdm}
    \begin{enumerate}[label=(\alph*)]
    \item The approximations in \eqref{eq:theorem21pdm} and \eqref{eq:theorem21pdmb} allow us to see the influence of the Bogoliubov modes, which is of the order $\exp(\beta(\epsilon(p))-1)^{-1} \simeq (\beta \epsilon(p))^{-1} \sim N^{2/3}$. This also concerns the effect of the Bogoliubov modes on the expected number of particle in the condensate because $\widetilde{N}_0(\beta,N) = N - \sum_{p \in \Lambda_+^*} \gamma_p = N_0(\beta,N) + O(N^{2/3})$ with $N_0$ in \eqref{eq:crittemp} and $\gamma_p$ in \eqref{eq:gammap}, see Lemma~\ref{lem:BoundN0} in Section~\ref{sec:upperBoundFreeEnergy}. 
    \item The pointwise bounds in \eqref{eq:theorem21pdmb} and \eqref{eq:theorem21pdmc} allow us to detect the leading order behavior of every single eigenvalue of the 1-pdm of the Gibbs state until the exponential decay of $(\exp(\beta p^2)-1)^{-1}$ starts for momenta with $|p| \gg N^{1/3}$. In this regime we only know that the eigenvalues vanish in the limit $N \to \infty$. 
    \item In combination, \eqref{eq:definitionBEC}, \eqref{eq:theorem21pdm} and $\widetilde{N}_0(\beta,N) = N_0(\beta,N) + O(N^{2/3})$, yield a proof of the BEC phase transition with critical temperature given by that of the ideal gas in \eqref{eq:crittemp} to leading order. Note that the BEC phase transition has been established for more challenging scaling limits in \cite{DeuSeiYng-19,DeuSei-20,DeuSei-21}.  
    \item The bound in \eqref{eq:theorem21pdmc} cannot be obtained with techniques that only rely on the use of coercivity. For instance, while we know the (grand canonical version of the) free energy up to a remainder of order $N^{5/8}$, see Theorem~\ref{thm:main1} below, this alone only allows control over the number of particles with energy $e \sim 1$ up to the same accuracy. However, using our correlation inequalities, we demonstrate that the expectation $\Tr[a_p^* a_p G_{\beta,N}]$ of the number of particles with momentum $p \in \Lambda^*$ is of order $1$ in $N$ for $\kappa < \frac{1}{4 \pi}[\upzeta(3/2)]^{2/3}$, and we compute the corresponding constant explicitly. We highlight the remarkable accuracy of this result.   
    \end{enumerate}
\end{remark}

Let us define the pairing function
\begin{equation}
    \alpha_p = u_{p} v_{p} \left\{ 2  \gamma^{\mathrm{Bog}}_{p} + 1 \right\} \quad \text{ with } \quad \gamma^{\mathrm{Bog}}_p = \frac{1}{\exp(\beta \epsilon(p))-1}
    \label{eq:alphap}
\end{equation}
and $u_p$, $v_p$, $\epsilon(p)$ defined below \eqref{eq:gammap}. It is related to the Bogoliubov Gibbs state in \eqref{eq:BogoliubovGibbsState}, see Lemma~\ref{lem:1pdmAndPairingFunction} in Section~\ref{sec:upperBoundFreeEnergy}. 

While the 1-pdm provides insights into the BEC phase transition and the formation of quasi-particles, the 2-pdm is essential for capturing all other interaction-induced correlations between the particles. For the 2-pdm of the Gibbs state we have the following statement. For the sake of simplicity we restrict attention to the cases $\kappa > \frac{1}{4 \pi}[\upzeta(3/2)]^{2/3}$ and $\kappa < \frac{1}{4 \pi}[\upzeta(3/2)]^{2/3}$.

\begin{theorem}[2-pdm and particle number variances]\label{thm:main2}
	Let $v$ satisfy the assumptions of Theorem~\ref{thm:norm-approximation}. We consider the limit $N \to \infty$, $\beta N^{2/3} \to \kappa \in (0,\infty)$. Then we have the following statements for the 2-pdm of the Gibbs state $G_{\beta,N}$ in \eqref{eq:interactingGibbsstate}.
    \begin{enumerate}[label=(\alph*)]
    \item Assume that $\kappa > \frac{1}{4 \pi}[\upzeta(3/2)]^{2/3}$ and that $p,q,r,s \in \Lambda_+^*$. Then we have
    \begin{align}
        \Tr[(a_0^* a_0)^2 G_{\beta,N}] &= \widetilde{N}^2_0(\beta,N) + N/(\beta \hat{v}(0)) + O(N^{5/3-1/12}), \nn \\
        \Tr [ a_0^* a^*_0 a_p a_{-p} G_{\beta,N} ] &= \widetilde{N}_0(\beta,N) \alpha_p + O(N^{5/3-1/96}), \nn \\
        \Tr [a_0^* a_0 a_p^* a_p  G_{\beta,N} ]  &=  \widetilde{N}_0(\beta,N) \gamma_p + O(N^{5/3-1/96}), \nn \\
        |\Tr[ a_p^* a_q^* a_r a_0 ]| &\lesssim N^{3/2}, \nn \\
        \Tr [a_p^* a^*_q a_{r} a_{s} G_{\beta,N}] &=\delta_{p,-q}\delta_{r,-s} \alpha_p \alpha_r  + (\delta_{p,r}\delta_{q,s} +\delta_{p,s}\delta_{q,r} ) \gamma_p\gamma_q + O(N^{4/3-1/96}), \nn \\
        \Tr[ \mathcal{N}_+^2 G_{\beta,N}] - (\Tr [ \mathcal{N}_+ G_{\beta,N} ])^2 &= \sum_{p \in \Lambda_+^*} ( \alpha_p^2 + \gamma_p^2 + \gamma_p ) + O(N^{4/3-1/96}). \label{eq:2-pdmBECPhase}
    \end{align}
    with $\gamma_p$ in \eqref{eq:gammap}, $\widetilde{N}_0 = N - \sum_{p \in \Lambda_+^*} \gamma_p$, $\alpha_p$ in \eqref{eq:alphap}, and $\mathcal{N}_+ = \sum_{p \in \Lambda_+^*} a_p^* a_p$.
    \item If $\kappa < \frac{1}{4 \pi}[\upzeta(3/2)]^{2/3}$ and $p,q,r,s \in \Lambda_+^*$ then we have
    \begin{align}
        \Tr[(a_0^* a_0)^2 G_{\beta,N}] &= 2 N_0^2(\beta,N) + N_0(\beta,N) + O(N^{-1/144}), \nn \\
        |\Tr [ a_0^* a^*_0 a_p a_{-p} G_{\beta,N} ]| &\lesssim N^{-1/144}, \nn \\
        \Tr [a_0^* a_0 a_p^* a_p  G_{\beta,N} ] &= \frac{N_0(\beta,N)}{\exp(\beta(p^2-\mu_0(\beta,N)))-1}  + O(N^{-1/144} ), \nn \\
        | \Tr[ a_p^* a_q^* a_r a_0 ]| &\lesssim N^{-1/144}, \label{eq:2-pdmNonCondensedPhase} \\
        \Tr [a_p^* a^*_q a_{r} a_{s} G_{\beta,N}] 
        &=  \frac{(\delta_{p,r}\delta_{q,s} +\delta_{p,s}\delta_{q,r} )}{(\exp(\beta(p^2 - \mu_0(\beta,N)))-1)(\exp(\beta(q^2 - \mu_0(\beta,N)))-1)} + O(N^{-1/144}), \nn \\
        \Tr[ \mathcal{N}_+^2 G_{\beta,N}] - (\Tr [ \mathcal{N}_+ G_{\beta,N} ])^2 &= \sum_{p \in \Lambda_+^*} \frac{1}{\exp(\beta(p^2 - \mu_0))-1}  \left( 1+ \frac{1}{\exp(\beta(p^2 - \mu_0))-1} \right) + O(N^{4/3-1/144}) \nn
    \end{align}
    with $\mu_0(\beta,N)$ and $N_0(\beta,N)$ in \eqref{eq:idealgase1pdmchempot} and \eqref{eq:crittemp}, respectively.
    \end{enumerate}
\end{theorem}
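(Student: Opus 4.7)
The plan is to compute each expectation first on the reference state $\Gamma_{\beta,N}$ from Theorem~\ref{thm:norm-approximation}, where the structure reduces to coherent-state and Wick computations, and then to transfer the resulting identities to $G_{\beta,N}$ using the trace-norm bound in combination with the moment estimates supplied by Theorem~\ref{thm:higher-moments}.

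For part (a), the coherent-state property $a_0|z\rangle = z|z\rangle$ together with the tensor factorization \eqref{eq:exponentialProperty} yields
\begin{equation*}
\Tr[a_0^{*m}a_0^{n} B\, \Gamma_{\beta,N}] = \int_{\mathbb{C}} \bar z^{m} z^{n}\, \Tr_+[B\, G^{\mathrm{Bog}}(z)]\, g^{\mathrm{BEC}}(z)\, \de z
\end{equation*}
for any $B$ supported on $\mathscr{F}_+$. Since $G^{\mathrm{Bog}}(z)$ is quasi-free with vanishing one-point function, Wick's theorem expresses every trace of a monomial in $a_p, a_p^*$ ($p\neq 0$) as a sum over pairings involving $\gamma_p$ and the anomalous contraction $\Tr[a_p a_{-p}\, G^{\mathrm{Bog}}(z)] = (z^2/|z|^2)\alpha_p$; the angular integration over the phase of $z$ then kills every term whose net $z$-phase does not vanish, leaving only those pairings in which each $a_0$ is compensated by an $a_0^*$ and each anomalous $a_p a_{-p}$ by an $a_0^{*2}$. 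This reproduces the right-hand sides of the second, third, and fifth lines of \eqref{eq:2-pdmBECPhase} (with the prefactors $\widetilde N_0 = \int |z|^2 g^{\mathrm{BEC}}\,\de z$), and forces $\Tr[a_p^*a_q^*a_r a_0\, \Gamma_{\beta,N}] = 0$ by an odd-power-in-$z$ argument. For the first line we use $(a_0^*a_0)^2 = (a_0^*)^2 a_0^2 + a_0^*a_0$ and expand the $\Phi^4$ potential $(\hat v(0)/(2N))|z|^4 - \mu^{\mathrm{BEC}}|z|^2$ around its minimum: in the variable $s = |z|^2 - \widetilde N_0$ it becomes $(\hat v(0)/(2N))s^2$ plus constants, so the Gaussian approximation of $g^{\mathrm{BEC}}$ yields $\int |z|^4 g^{\mathrm{BEC}}\,\de z - \widetilde N_0^2 = N/(\beta\hat v(0)) + o(N^{5/3})$. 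Part (b) is simpler: $\Gamma_{\beta,N}$ is the ideal-gas Gibbs state, which is quasi-free and gauge invariant with no pairing and $\Tr[a_p^*a_p\, \Gamma_{\beta,N}] = (\exp(\beta(p^2-\mu_0))-1)^{-1}$, so Wick's theorem directly produces the right-hand sides of \eqref{eq:2-pdmNonCondensedPhase}. In both parts, the variance of $\mathcal{N}_+$ then assembles from the quartic sums above: because $\mathcal{N}_+$ is gauge invariant and commutes with $a_0$, its variance under $G^{\mathrm{Bog}}(z)$ is $z$-independent and equals $\sum_{p\in\Lambda_+^*}(\alpha_p^2 + \gamma_p^2 + \gamma_p)$, while in part (b) the analogous ideal-gas sum appears.

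To transfer these identities to $G_{\beta,N}$, we fix a monomial $A$ of degree at most four in creation and annihilation operators and split $A = P_M A P_M + R_M$ with $P_M = \mathds{1}(\mathcal{N}\le M)$. The truncated piece is controlled by
\begin{equation*}
\bigl|\Tr[P_M A P_M (G_{\beta,N}-\Gamma_{\beta,N})]\bigr| \le \Vert P_M A P_M \Vert\, \Vert G_{\beta,N}-\Gamma_{\beta,N}\Vert_1 \lesssim M^2\, N^{-1/48}
\end{equation*}
via Theorem~\ref{thm:norm-approximation}, while the tail $\Tr[R_M(G_{\beta,N}+\Gamma_{\beta,N})]$ is estimated by Cauchy--Schwarz and Chebyshev using the moments $\Tr[(a_p^*a_p)^k G_{\beta,N}]$ and $\Tr[(\mathcal{N}_+ - \Tr[\mathcal{N}_+ G_{\beta,N}])^k G_{\beta,N}]$ produced by Theorem~\ref{thm:higher-moments} (applied with $X = 1+\mathcal{N}$ and appropriate perturbations $B$). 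Optimizing $M$ yields the rates $N^{5/3-1/12}$, $N^{5/3-1/96}$, $N^{4/3-1/96}$, and $N^{-1/144}$ quoted in the theorem, as well as the rough bound $N^{3/2}$ for the otherwise vanishing cubic expectation.

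The main obstacle is this last transfer step: the observables appearing in Theorem~\ref{thm:main2} are \emph{unbounded} on Fock space, so the trace-norm bound of Theorem~\ref{thm:norm-approximation} alone is insufficient, and the higher-moment correlation inequality of Theorem~\ref{thm:higher-moments} is needed precisely to supply the quantitative truncation of the high-$\mathcal{N}$ sectors at exactly the rates demanded. A secondary delicate point is the extraction of the $N/(\beta\hat v(0))$ correction in the first line of \eqref{eq:2-pdmBECPhase}: it requires the Gaussian approximation of $g^{\mathrm{BEC}}$ to be uniform in the regime $\widetilde N_0 \gtrsim N^{5/6}$, and in particular to remain valid through the crossover $\widetilde N_0 \sim N^{5/6}$, where the width of $g^{\mathrm{BEC}}$ becomes comparable to its mean.
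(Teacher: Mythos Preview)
Your overall two-step scheme—compute expectations on $\Gamma_{\beta,N}$ via coherent states and Wick, then transfer to $G_{\beta,N}$ using the trace-norm bound together with higher moments—is indeed what the paper does for \emph{most} lines. One technical remark: the paper does not truncate on $\mathds{1}(\mathcal N\le M)$ as you propose, but rather truncates on $\mathds{1}(|B|\le\epsilon^{-1})$ for the specific observable $B$ (Lemma~\ref{lem:fundamentalLemmaForPWBounds}), which gives
\[
|\Tr[B(G-\Gamma)]|\lesssim_\theta \bigl(\Tr[|B|^\theta(G+\Gamma)]\bigr)^{1/\theta}\,\|G-\Gamma\|_1^{1-1/\theta}.
\]
This uses the fourth (resp.\ third) moments of $a_p^*a_p$ and of $\mathcal N_+-N_+^{\mathrm G}$ from Theorem~\ref{thm:higherOrderEstimates} directly, without an extra optimisation over a cutoff, and this is what actually produces the exponents $N^{5/3-1/96}$, $N^{4/3-1/96}$, $N^{-1/144}$.

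There is, however, a genuine gap in your plan for the \emph{first} line of part~(a). Your computation of $\int|z|^4 g^{\mathrm{BEC}}\,\de z=\widetilde N_0^2+N/(\beta\hat v(0))+o(N^{5/3})$ is fine, but the transfer step for $(a_0^*a_0)^2$ cannot close: in the condensed phase the observable has size $\sim N^2$, and the only available control on the zero mode is $\Tr[\mathcal N^k G_{\beta,N}]\lesssim N^k$ (Theorem~\ref{thm:higherOrderEstimates} gives nothing better for $p=0$ when $\kappa>\tfrac{1}{4\pi}[\upzeta(3/2)]^{2/3}$). Either your $P_M$-truncation or the paper's Lemma~\ref{lem:fundamentalLemmaForPWBounds} then yields an error of order $N^{2-c}$ for some small $c>0$, which is far above the required $O(N^{5/3-1/12})$; centering by $\widetilde N_0$ does not help, because no bound of the form $\Tr[(a_0^*a_0-\widetilde N_0)^4G_{\beta,N}]\lesssim N^{10/3}$ is available a priori.

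The paper handles this line by a completely different mechanism: it does \emph{not} transfer from $\Gamma_{\beta,N}$ at all. Instead it computes $\Tr[\mathcal N^2 G_{\beta,N}]-N^2$ directly by a Griffith argument, perturbing $\hat v(0)\mapsto\hat v(0)+\delta$ and reading off the derivative from the $\tfrac{1}{2\beta}\ln\bigl(\tfrac{\hat v(0)+\delta}{\hat v(0)}\bigr)$ term in the free-energy expansion (Proposition~\ref{prop:upperbound} and the matching lower bound). This yields $\Tr[\mathcal N^2G_{\beta,N}]-N^2=N/(\beta\hat v(0))+O(N^{5/3-1/4})$. The variance of $\mathcal N_0$ is then recovered from the identity $\mathrm{Var}(\mathcal N)=\mathrm{Var}(\mathcal N_0)+\mathrm{Var}(\mathcal N_+)+2\,\mathrm{Cov}(\mathcal N_0,\mathcal N_+)$, bounding the covariance by Cauchy--Schwarz and using the already established $\mathrm{Var}(\mathcal N_+)\lesssim N^{4/3}$. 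Your closing remark about uniformity of the Gaussian approximation through $\widetilde N_0\sim N^{5/6}$ is therefore beside the point here (and in any case part~(a) sits strictly in the condensed phase where $N_0\sim N$): the real obstacle is not the approximation of $g^{\mathrm{BEC}}$ but the absence of any usable transfer bound for $(a_0^*a_0)^2$.
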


We have the following remarks concerning the above result.

\begin{remark}
\label{rem:2pdm}
    \begin{enumerate}[label=(\alph*)]
    \item The above result shows that the 2-pdm of $G_{\beta,N}$ can be approximated by that of $\Gamma_{\beta,N}$, as defined in Theorem~\ref{thm:norm-approximation}. An inspection of the 2-pdm of $\Gamma_{\beta,N}$ reveals that this state is not quasi-free if $N_0 \geq N^{2/3}$. A translation-invariant quasi free state is fully characterized by the three functions $\psi_p = \langle a_p \rangle$, $\alpha_p = \langle a_p a_{-p } \rangle$, and $\rho_p = \langle a_p^* a_p \rangle$. As argued in part (a) of Remark~\ref{rem:Theorem1}, the functions $\psi_p$ and $\alpha_p$ vanish in the state $\Gamma_{\beta,N}$. Hence, if it were quasi-free, its 2-pdm would solely be determined by $\Tr[ a_p^* a_p \Gamma_{\beta,N} ] = \gamma_p$, which is incorrect. 
    \item As mentioned at the end of Section~\ref{sec:DiscussionOfProof}, we require higher-order moment estimates to prove Theorem~\ref{thm:main2}. In contrast, the proofs of all other results are based on applications of Theorem~\ref{thm:correlation-intro}. In Section~\ref{sec:higherOrderCorrelationInequalities} we establish Theorem~\ref{thm:higher-moments}, a higher-order abstract correlation inequality, and use it to prove the desired moment bounds. The proof of Theorem~\ref{thm:main2} constitutes the second main technical novelty of our paper, the first being the proof of Theorem~\ref{thm:norm-approximation}. Finally, we emphasize that the 2-pdm cannot be accessed using the methods of \cite{DeuSeiYng-19,DeuSei-20,DeuSei-21}.
    \item For $\kappa > \frac{1}{4 \pi}[\upzeta(3/2)]^{2/3}$ the 2-pdm displays correlations that are induced by the interaction between the particles. This is reflected for example in the presence of $\alpha_p$ in the second, fifth and sixth line of \eqref{eq:2-pdmBECPhase}. The first term in the fifth line is often used as a signature for the presence of superfluidity in the system. It is of order $N^{4/3}$ in the condensed phase and, as the equation in the fifth line of \eqref{eq:2-pdmNonCondensedPhase} shows, it vanishes in the non-condensed phase corresponding to $\kappa < \frac{1}{4 \pi}[\upzeta(3/2)]^{2/3} $. The main contributions in \eqref{eq:2-pdmBECPhase} and \eqref{eq:2-pdmNonCondensedPhase} are given by the related expectations in the state $\Gamma_{\beta,N}$ in \eqref{eq:norm-approximation-intro}. Parts (a) and (d) of Remark~\ref{rem:1pdm} also apply to Theorem~\ref{thm:main2}.
    \item In combination, \eqref{eq:theorem21pdmb} and the first equation in \eqref{eq:2-pdmBECPhase} show that the variance of the number of particles in the BEC depends on the interaction between the particles and is of order $N^{5/3}$ if $\kappa > \frac{1}{4 \pi}[\upzeta(3/2)]^{2/3}$. This should be compared to the same quantity in the ideal gas, which is of order $N^2$. For $\kappa < \frac{1}{4 \pi}[\upzeta(3/2)]^{2/3}$ the variance of $a_0^* a_0$ is of order one with a constant that is determined by the ideal gas, see the first equation in \eqref{eq:2-pdmNonCondensedPhase}. As the last equations in \eqref{eq:2-pdmBECPhase} and \eqref{eq:2-pdmNonCondensedPhase} show, the variance of the number of thermally excited particles in always of order $N^{4/3}$. However, the constant multiplying this growth differs for $\kappa > \frac{1}{4 \pi}[\upzeta(3/2)]^{2/3}$ and $\kappa < \frac{1}{4 \pi}[\upzeta(3/2)]^{2/3} $. In the former case it depends on the interaction between the particles and in the latter case it does not. That is, also this quantity allows one to distinguish between the condensed and the non-condensed phases. 
    \end{enumerate}
\end{remark}

\subsection{Particle number distributions for the condensate}

Theorem~\ref{thm:norm-approximation} also allows us to access quantities that go strictly beyond one- and two-particle correlations. One example of particular interest for physicists is the distribution of the number of particles in the Bose--Einstein condensate. In Theorems \ref{thm:particleNumberDistributionBEC1} and~\ref{thm:particleNumberDistributionBEC2} below we present a detailed classification of its limiting distributions.

In the first statement we consider a quantity defined via coherent states.

\begin{theorem}[Particle number distribution BEC, Part I]
\label{thm:particleNumberDistributionBEC1}
    Let $v$ satisfy the assumptions of Theorem~\ref{thm:norm-approximation}. We consider the limit $N \to \infty$, $\beta N^{2/3} \to \kappa \in (0,\infty)$ and assume that $N_0(\beta,N) \gesssim N^{5/6 + \epsilon}$ holds with some fixed $0 < \epsilon \leq 1/6$. Then the probability distribution 
    \begin{equation}
        \zeta_G(z) = \Tr[ |z \rangle \langle z | G_{\beta,N} ]
        \label{eq:condensateDistributionGibbsStateContinuous}
    \end{equation}
     on $\mathbb{C}$ with respect to the measure $\de z = \de x \de y /\pi$, where $x = \Re \ z$ and $y = \Im \ z$, satisfies
    \begin{equation}
        \int_{\mathbb{C}} | \zeta_G(z) - g(|z|^2) | \de z \lesssim N^{-1/48}.
    \end{equation}
    Here $|z\rangle$ denotes the coherent state in \eqref{eq:coherentstate} and $g$ the Gaussian distribution
    \begin{equation}
        g(x) = \sqrt{ \frac{\beta \hat{v}(0)}{2 \pi N} } \exp\left( - \frac{\beta \hat{v}(0)}{2 N} \left( x - N_0(\beta,N) \right)^2 \right)
        \label{eq:Gaussian}
    \end{equation}
    with mean $N_0(\beta,N)$ and variance $N/(\beta \hat{v}(0))$. 
\end{theorem}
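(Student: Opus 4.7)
The plan is to use the trace-norm approximation from Theorem~\ref{thm:norm-approximation}(a) to reduce to the reference state $\Gamma_{\beta,N}$, to compute the resulting Husimi distribution explicitly, and to replace the condensate density $g^{\mathrm{BEC}}$ by its Gaussian approximation $g(|\cdot|^2)$. For the first step, observe that the linear map from trace-class operators $A$ on $\mathscr{F}$ to $L^1(\mathbb{C})$ given by $A\mapsto \zeta_A(w):=\Tr[|w\rangle\langle w|\,A]$, with $|w\rangle\langle w|$ acting as the identity on $\mathscr{F}_+$, has norm one. This is dual to the Toeplitz bound $\bigl\|\int_{\mathbb{C}} f(w)|w\rangle\langle w|\,\de w\bigr\|_{\mathrm{op}}\le\|f\|_\infty$ for $f\in L^\infty(\mathbb{C})$, which is immediate from the coherent-state resolution of identity. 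Combined with Theorem~\ref{thm:norm-approximation}, this gives
\begin{equation*}
\int_{\mathbb{C}}|\zeta_G(w)-\zeta_\Gamma(w)|\,\de w\le \|G_{\beta,N}-\Gamma_{\beta,N}\|_1\lesssim N^{-1/48},
\end{equation*}
so it suffices to compare $\zeta_\Gamma$ with $g(|\cdot|^2)$.

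Next I would compute $\zeta_\Gamma$ explicitly. Using the factorization $\mathscr{F}\cong\mathscr{F}_0\otimes\mathscr{F}_+$, the normalization $\Tr_+[G^{\mathrm{Bog}}(z)]=1$, and the identity $|\langle w\,|\,z\rangle|^2=e^{-|w-z|^2}$ for coherent states on $\mathscr{F}_0$, one finds
\begin{equation*}
\zeta_\Gamma(w)=\int_{\mathbb{C}} e^{-|w-z|^2}g^{\mathrm{BEC}}(z)\,\de z,
\end{equation*}
so $\zeta_\Gamma$ is the Husimi transform of $g^{\mathrm{BEC}}$. I would then replace $g^{\mathrm{BEC}}$ by $g(|\cdot|^2)$ in $L^1(\mathbb{C})$. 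Viewed as a density in $x=|z|^2\in[0,\infty)$, $g^{\mathrm{BEC}}$ is the restriction to $x\ge 0$ of a one-dimensional Gaussian with variance $N/(\beta\hat v(0))\sim N^{5/3}$ and mean $N\mu^{\mathrm{BEC}}/\hat v(0)$; the normalization $\int|z|^2g^{\mathrm{BEC}}(z)\,\de z=\widetilde{N}_0$ combined with $\widetilde{N}_0=N_0(\beta,N)+O(N^{2/3})$ fixes this mean as $N_0(\beta,N)+O(N^{2/3})$. Since $N_0\gesssim N^{5/6+\epsilon}$ sits many standard deviations above zero, the cutoff at $x=0$ contributes an exponentially small error, and the $O(N^{2/3})$ mean shift yields an $L^1$ error $\lesssim N^{-1/6}$; both are negligible compared with $N^{-1/48}$.

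The hard part is controlling $\|e^{-|\cdot|^2}\ast g(|\cdot|^2)-g(|\cdot|^2)\|_{L^1(\mathbb{C})}$, and the naive estimate via $\int|u|e^{-|u|^2}\de u\cdot\|\nabla g(|\cdot|^2)\|_{L^1(\mathbb{C})}\sim N^{1/4+\epsilon/2}$ is useless. One must exploit the exact Gaussian structure of both the kernel and the target. The density $g(|z|^2)$ is concentrated on the annulus $|z|\approx r_0:=\sqrt{N_0}$ with radial standard deviation $\sigma_\rho=1/\sqrt{4\alpha N_0}$, where $\alpha=\beta\hat v(0)/N$; the key quantitative input is that $A:=2\alpha N_0\sim N^{-5/6+\epsilon}\ll 1$, so $\sigma_\rho\gg 1$ and the $O(1)$-width coherent-state kernel is only a small perturbation of a very broad annular profile. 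Passing to local coordinates $\rho=|z|-r_0$, $s=r_0\arg(z/w)$ and using the Euclidean approximation $|w-z|^2\approx(\rho-\tau)^2+s^2$ with $\tau=|w|-r_0$, the 2D Gaussian integral in $(\rho,s)$ can be evaluated exactly and produces a radial Gaussian in $\tau$ of variance $\sigma_\rho^2+\tfrac12$ in place of $\sigma_\rho^2$, i.e.\ a relative change by the factor $1+A$. A direct computation of the $L^1$ distance between two one-dimensional Gaussians with identical mean and relative variance mismatch $A$, lifted to $\mathbb{C}$ via the radial Jacobian $2r$, then gives
\begin{equation*}
\bigl\|e^{-|\cdot|^2}\ast g(|\cdot|^2)-g(|\cdot|^2)\bigr\|_{L^1(\mathbb{C})}\lesssim A\lesssim N^{-5/6+\epsilon}\le N^{-2/3}
\end{equation*}
for $\epsilon\le 1/6$. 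Combining the three errors, the bottleneck is the $N^{-1/48}$ coming from Theorem~\ref{thm:norm-approximation}, completing the proof.
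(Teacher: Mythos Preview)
Your approach is correct but takes a different route from the paper's. The paper does not go through Theorem~\ref{thm:norm-approximation} at all; instead it returns to the c-number substitution in the free energy lower bound. The inequality in \eqref{eq:finalLowerBound12} becomes an equality upon adding $\beta^{-1}S(\zeta_G,\tilde g)$, where $\tilde g$ is the condensate Gibbs distribution with coupling $(1-\delta)\hat v(0)$ and mean $N_0^{\mathrm G}$. Keeping this nonnegative term through the lower bound and comparing with the upper bound yields $S(\zeta_G,\tilde g)\lesssim N^{-1/24}$, and then Pinsker gives $\|\zeta_G-\tilde g\|_1\lesssim N^{-1/48}$ directly; the remaining replacement $\tilde g\to g(|\cdot|^2)$ is your Step~3. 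Because $\zeta_G$ already arises as a lower-symbol density in the variational analysis, no Husimi convolution ever needs to be undone. Your route is more modular (it uses Theorem~\ref{thm:norm-approximation} as a black box) but adds the smoothing step. That step, however, is not actually hard: your claim that the naive gradient bound gives $N^{1/4+\epsilon/2}$ is a miscalculation. In fact $|\nabla_z g(|z|^2)|=2|z|\,\alpha|x-N_0|\,g(x)$ with $x=|z|^2$, and integrating over $\mathbb C$ (equivalently over $x\ge 0$ with weight $\sqrt x$) gives $\|\nabla g(|\cdot|^2)\|_{L^1(\mathbb C)}\sim\sqrt{\alpha N_0}=\sqrt{A/2}$, so the naive bound already yields $\|e^{-|\cdot|^2}*g(|\cdot|^2)-g(|\cdot|^2)\|_{L^1}\lesssim\sqrt A\lesssim N^{-1/3}$, well below $N^{-1/48}$; your refined Gaussian-in-local-coordinates computation is unnecessary. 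One minor scaling slip: $A=2\alpha N_0$ can be as large as $\sim N^{-2/3}$ (when $N_0\sim N$), not $N^{-5/6+\epsilon}$; your final bound $A\le N^{-2/3}$ is nevertheless correct since $N_0\le N$.
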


\begin{remark}
\label{rem:distributionBEC1}
    The particle number distribution related to the coherent state $|z \rangle$ is given by a Poisson distribution. Since $g$ is peaked around $\widetilde{N}_0 \sim N_0 \sim N$ if $\kappa > \frac{1}{4 \pi}[\upzeta(3/2)]^{2/3}$ the variance of the number of particles in $|z \rangle$ is of order $N$ in the relevant parameter regime. Accordingly, the variance of the number of particles in the condensate is described by $g$ in \eqref{eq:Gaussian} and is given by $N/(\beta \hat{v}(0)) \sim N^{5/3}$. This should be compared to the first line of \eqref{eq:2-pdmBECPhase} and to the result in Theorem~\ref{thm:particleNumberDistributionBEC2} below. We highlight that it is possible to approximate the expected number of particles in the condensate by $N_0$ instead of $\widetilde{N}_0$ in Theorem~\ref{thm:particleNumberDistributionBEC1}. The statement holds in the same way when $N_0$ is replaced by $\widetilde{N}_0$.
\end{remark}

In the following theorem we investigate limiting distributions for the number of particles in the condensate. Before we state our result we introduce some notation. Let us define the probability distribution
\begin{equation}
    g_0(x) = \frac{\exp\left( - \beta \left( \frac{\hat{v}(0)}{2N} x^2 - \mu x\right) \right)}{\int_0^{\infty} \exp\left( - \beta \left( \frac{\hat{v}(0)}{2N} x^2 - \mu x\right) \right) \de x},
    \label{eq:g0}
\end{equation}
where $\mu$ is chosen such that $\int_0^{\infty} x g(x) \de x = \widetilde{N}_0(\beta,N)$ holds with $\widetilde{N}_0$ below \eqref{eq:theorem21pdm}. The variance related to $g_0$ is denoted by $\mathbf{Var}(\beta,N)$. We also introduce the random variables $\mathbf{N}_0$ and $\widetilde{\mathbf{N}}_0$ by
\begin{equation}
    \mathbf{P}(\mathbf{N}_0 = n_0) = \Tr[ | n_0 \rangle \langle n_0 | G_{\beta,N} ] \quad \text{ and } \quad \widetilde{\mathbf{N}}_0 = \frac{\mathbf{N}_0 - \widetilde{N}_0(\beta,N)}{\sqrt{\mathbf{Var}(\beta,N)}},
    \label{eq:condensateDistributionGibbsStateDiscreteintro}
\end{equation}
where $| n_0 \rangle = 1/(\sqrt{n_0!}) (a^*_0)^{n_0} \Omega_0$ with the vaccuum vector $\Omega_0$ of the Fock space $\mathscr{F}_0$ in \eqref{eq:exponentialProperty} denotes the occupation number vector with $n_0 \in \mathbb{N}_0$ particles. Since the notations are similar, we highlight that it should not be confused with the coherent state in \eqref{eq:coherentstate}, which has a fluctuating particle number. The quantity $\mathbf{P}(\mathbf{N}_0 = n_0)$ equals the probability to find exactly $n_0$ particle with $p=0$ in the mean-field Bose gas described by the Gibbs state $G_{\beta,N}$. 

\begin{theorem}[Particle number distribution BEC, part II]
\label{thm:particleNumberDistributionBEC2}
    Let $v$ satisfy the assumptions of Theorem~\ref{thm:main1}. We consider the limit $N \to \infty$, $\beta N^{2/3} \to \kappa \in (0,\infty)$. Then we have
    \begin{equation}
        | N_0(\beta,N) - \widetilde{N}_0(\beta,N) | \lesssim  \frac{N_0(\beta,N)}{N^{1/3}}
        \label{eq:boundN0Theorem}
    \end{equation}
    with $\widetilde{N}_0$ below \eqref{eq:referenceState-intro1}, and the following statements hold.
    \begin{enumerate}[label=(\alph*)]
    \item If $N_0(\beta,N) \gg N^{5/6}$ we have 
    \begin{equation}
        \lim_{N \to \infty} \frac{\beta \hat{v}(0) \textbf{Var}(\beta,N)}{N} = 1
        \label{eq:varianceBoundaTheorem}
    \end{equation}
    and, as $N \to \infty$, the random variable $\widetilde{\mathbf{N}}_0$ converges in distribution to a standard normal random variable. 
    \item If $N_0(\beta,N) = t N^{5/6}$ with some fixed $t \in \mathbb{R}$ the parameter $\sigma = \mu \sqrt{\beta N/(2\hat{v}(0))}$ does not depend on $N$ and we have 
    \begin{equation}
        \lim_{N \to \infty} \frac{\beta \hat{v}(0) \textbf{Var}(\beta,N)}{2N} = B^2 = \frac{\int_{-\sigma}^{\infty} x^2 \exp(-x^2) \de x}{\int_{-\sigma}^{\infty} \exp(-x^2) \de x} - \left( \frac{\int_{-\sigma}^{\infty} x \exp(-x^2) \de x}{\int_{-\sigma}^{\infty} \exp(-x^2) \de x} \right)^2.
    \label{eq:varianceBoundbTheorem}
    \end{equation}
    Moreover, the distribution of $\widetilde{\mathbf{N}}_0$ converges, as $N \to \infty$, to
    \begin{equation}
            f_{\sigma,A,B}(x) = \frac{\exp\left(-\left( x B + A \right)^2 \right)}{\int_{\frac{-\sigma-A}{B}}^{\infty} \exp\left(-\left( x B + A \right)^2 \right) \de x} \quad \text{ with } \quad A = \frac{\int_{-\sigma}^{\infty} x \exp(-x^2) \de x }{\int_{-\sigma}^{\infty} \exp(-x^2) \de x}
            \label{eq:lemmaCharacteristicFunctiondaTheorem}
        \end{equation}
    and $B$ in \eqref{eq:varianceBoundbTheorem}.
    \item If $ 1 \ll N_0(\beta,N) \ll N^{5/6}$ we have
    \begin{equation}
        \lim_{N \to \infty} \frac{\textbf{Var}(\beta,N)}{[\widetilde{N}_0(\beta,N)]^2} = 1
        \label{eq:varianceBoundc_intro}
    \end{equation}
    and $\widetilde{\mathbf{N}}_0$ converges, as $N \to \infty$, in distribution to an exponential random variable with distribution
    \begin{equation}
        f(x) = \exp(-(1+x)) \quad \text{ with } \quad x \in [-1,\infty).
    \end{equation}
    \item If $N_0(\beta,N) = t$ with some $t > 0$ then $\beta \mu_0(\beta,N)$ does not depend on $N$ and $\mathbf{N}_0$ converges, as $N \to \infty$, in distribution to a geometric random variable with law
    \begin{equation}
        q(n) = \exp(\beta \mu_0(\beta,N)n)(1-\exp(\beta \mu_0(\beta,N))).
    \end{equation}
    \end{enumerate}
\end{theorem}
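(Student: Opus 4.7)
The plan is to reduce everything to the reference state $\Gamma_{\beta,N}$ of Theorem~\ref{thm:norm-approximation}, identify the law of $\mathbf{N}_0$ under $\Gamma_{\beta,N}$ as a \emph{mixed Poisson} with mixing measure $g_0$, and then perform the classical asymptotic analysis of $g_0$. For any bounded Borel function $F:\mathbb{N}_0\to\mathbb{R}$, the operator $F(\mathcal{N}_0)$ with $\mathcal{N}_0=a_0^*a_0$ is bounded, so Hölder in trace norm gives $|\Tr[F(\mathcal{N}_0)(G_{\beta,N}-\Gamma_{\beta,N})]|\le \|F\|_\infty\|G_{\beta,N}-\Gamma_{\beta,N}\|_1\lesssim \|F\|_\infty N^{-1/48}$. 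This already upgrades to convergence in distribution once we take $F=\phi\big((\cdot-\widetilde N_0)/\sqrt{\mathbf{Var}}\big)$ for bounded continuous $\phi$. In the condensed regime $N_0\ge N^{2/3}$, the coherent-state identity $\langle z|F(\mathcal{N}_0)|z\rangle=\sum_n F(n)e^{-|z|^2}|z|^{2n}/n!$, the normalization $\Tr_+ G^{\mathrm{Bog}}(z)=1$, rotational invariance of $g^{\mathrm{BEC}}$, and the change of variables $s=|z|^2$ yield
\begin{equation*}
\Tr[F(\mathcal{N}_0)\,\Gamma_{\beta,N}]=\int_0^\infty \Big(\sum_n F(n)\,e^{-s}\frac{s^n}{n!}\Big)\,g_0(s)\,ds,
\end{equation*}
where $g_0$ is exactly the density in \eqref{eq:g0} and its chemical potential is fixed by $\int s\,g_0(s)\,ds=\widetilde N_0$. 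The bound \eqref{eq:boundN0Theorem} follows by comparing $\sum_{p\in\Lambda_+^*}\gamma_p$ with the ideal-gas sum via the relative error $|\epsilon(p)-(p^2-\mu_0)|\lesssim \hat v(0)N_0/N$ and summability of the momenta below the thermal scale $|p|\lesssim\beta^{-1/2}\sim N^{1/3}$.

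Next, to obtain (a)--(c), one shows the Poisson noise is negligible compared with the fluctuations of $s\sim g_0$. The characteristic function
\begin{equation*}
\mathbb{E}[e^{it\widetilde{\mathbf N}_0}] = e^{-it\widetilde N_0/\sqrt{\mathbf{Var}}}\int_0^\infty \exp\big(s(e^{it/\sqrt{\mathbf{Var}}}-1)\big)\,g_0(s)\,ds
\end{equation*}
expands as $s(e^{it/\sqrt{\mathbf{Var}}}-1)=ist/\sqrt{\mathbf{Var}}+O(s/\mathbf{Var})$ for bounded $t$, and a direct case check shows $\mathbb{E}_{g_0}[s]/\mathbf{Var}\to0$ in each sub-regime. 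Hence the limit of $\widetilde{\mathbf N}_0$ coincides with that of $(s-\widetilde N_0)/\sqrt{\mathbf{Var}}$ where $s\sim g_0$, and the problem is reduced to a classical analysis of $g_0$, which by completing the square is an unconstrained Gaussian of mean $\mu N/\hat v(0)$ and variance $N/(\beta\hat v(0))\sim N^{5/3}$ truncated to $[0,\infty)$.

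The three sub-regimes of the condensed case are then obtained as follows. In (a), $\widetilde N_0\gg N^{5/6}$ places the Gaussian peak many standard deviations away from $0$, the truncation error is exponentially small in $N^{1/3}$, and the standard normal limit together with $\mathbf{Var}\simeq N/(\beta\hat v(0))$ follow. In (b), with $N_0=tN^{5/6}$, the rescaling $s=\widetilde N_0+\sqrt{2\mathbf{Var}}\,x$ makes the density proportional to $\exp(-(xB+A)^2)\mathbf{1}_{x\ge (-\sigma-A)/B}$ by an explicit expansion of the quadratic, and the constant $\sigma$ is fixed by the mean constraint through the implicit equation relating $A,B,\sigma$; the formulas \eqref{eq:varianceBoundbTheorem}--\eqref{eq:lemmaCharacteristicFunctiondaTheorem} drop out. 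In (c), $\widetilde N_0\ll N^{5/6}$ forces $\mu<0$ with $\beta|\mu|\sim 1/\widetilde N_0$, so the Gaussian factor is essentially $1$ on the scale $\widetilde N_0$ and $g_0$ is an exponential density of mean $\widetilde N_0$; rescaling by $\widetilde N_0$ gives the shifted exponential $e^{-(1+x)}\mathbf{1}_{x\ge -1}$ with $\mathbf{Var}\simeq \widetilde N_0^2$. For (d), $N_0\sim 1$ is in the regime where Theorem~\ref{thm:norm-approximation}(b) provides the ideal-gas approximation; the $p=0$ factor of $\Gamma_{\beta,N}$ is the geometric law $(1-e^{\beta\mu_0})e^{\beta\mu_0 n}$, and because $\beta\mu_0$ stays bounded and negative the rescaling is trivial.

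The main obstacle is the \emph{overlap} in regime (c), where $N_0$ may lie on either side of the threshold $N^{2/3}$, so both branches of Theorem~\ref{thm:norm-approximation} can apply. One must verify that the two descriptions yield the same exponential limit: on the ideal-gas side one rescales the geometric distribution with parameter $e^{\beta\mu_0}=N_0/(N_0+1)$, which converges to $e^{-(1+x)}\mathbf{1}_{x\ge -1}$ as $N_0\to\infty$, matching the coherent-state computation. A secondary subtlety is the tail control in (c): the exponential tails of $g_0$ require that one does not work directly with the unbounded characteristic function but rather with bounded $\phi$ in the convergence-in-distribution formulation, which is precisely what the trace-norm approximation accommodates. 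The variance identities \eqref{eq:varianceBoundaTheorem}, \eqref{eq:varianceBoundbTheorem}, \eqref{eq:varianceBoundc_intro} are then direct consequences of the corresponding moment computations for $g_0$ performed alongside the distributional analysis above.
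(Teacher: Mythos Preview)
Your approach is essentially the same as the paper's: reduce to $\Gamma_{\beta,N}$ via the trace-norm bound, identify the law of $\mathcal N_0$ under $\Gamma_{\beta,N}$ as a mixed Poisson with mixing density $g_0$, show the Poisson noise is subleading via a characteristic-function expansion, and then perform the truncated-Gaussian analysis of $g_0$ in the three sub-regimes; the paper likewise splits regime (c) at the threshold $N^{2/3}$ and checks that the rescaled geometric and the rescaled mixed Poisson both give the shifted exponential. The only organizational difference is that the paper isolates the Poisson-to-mixing step as an abstract lemma on randomized Poisson laws, with the hypothesis $\sup_N \mathbb{E}[\exp(\lambda|\widetilde X_{\beta,N}|)]<\infty$ verified separately for $g_0$, whereas you argue the same step inline via $\mathbb{E}_{g_0}[s]/\mathbf{Var}\to 0$. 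One terminological slip: the characteristic function itself is bounded; the genuine issue is that the error factor $\exp\!\big(s\cdot O(\mathbf{Var}^{-1})\big)$ in your expansion is not uniformly close to $1$ for large $s$, so dominated convergence needs a tail input---the paper supplies this through the uniform exponential-moment bound on the rescaled mixing variable rather than by switching to bounded test functions.
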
   

\begin{remark}
\begin{enumerate}[label=(\alph*)]
    \item Theorem~\ref{thm:particleNumberDistributionBEC2} yields a complete description of the fluctuations of the number of particles in the BEC in our range of parameters. If $N_0 \gg N^{5/6}$ these fluctuations are given by a Gaussian. As shown already in \eqref{eq:2-pdmBECPhase} the related fluctuations live on the anomalously large scale $N^{5/6}$. If $N_0 \sim N^{5/6}$ the limiting distribution is more complicated, see $f_{\sigma,A,B}$ in \eqref{eq:lemmaCharacteristicFunctiondaTheorem}. Since for $N_0 \ll N^{5/6}$ we find an exponential distribution this reveals a second phase transition occurring in the system around the scaling $N_0(\beta,N) \sim N^{5/6}$. This is related to the fact that for $N_0 \ll N^{5/6}$, the interaction $\hat{v}(0)$ in $g_0$ in \eqref{eq:g0} becomes negligible.
    \item The above Theorem is another example of a statement that cannot be obtained with techniques that are only based on the use of coercivity. As already mentioned in point~(d) of Remark~\ref{rem:1pdm}, we know the free energy up to a remainder of order $N^{5/8}$, which only allows control over the number of particles with energy $e \sim 1$ up to the same accuracy. However, as point~(d) of Theorem~\ref{thm:particleNumberDistributionBEC2} shows, if $\Tr[a_0^* a_0 G_{\beta,N}] \sim 1$ we can compute the limiting distribution of the number of particles in the BEC.
    \item Theorem~\ref{thm:particleNumberDistributionBEC2} should be compared to \cite[Theorem~6]{BuffPul1983}, where the imperfect Bose gas has been considered. In contrast to this result, we see a dependence of the variance of $\mathbf{N}_0$ on the interaction, and we resolve the change of its distribution across the critical point.
    \end{enumerate}
    
\end{remark}

The two theorems above can be interpreted from a propabilistic viewpoint as follows. In \cite{DeuSeiYng-19,DeuSei-20,DeuSei-21} the BEC phase transition has been proved. To understand this transition, one needs to understand the leading order behavior of $\Tr[a_0^* a_0 G_{\beta,N}]$ as $N \to \infty$ as a function of $\beta N^{2/3}$. That is, one needs to prove a law of large numbers type statement for the random variable $\mathbf{N}_0$ defined in \eqref{eq:condensateDistributionGibbsStateDiscreteintro}. The statements in Theorems~\ref{thm:particleNumberDistributionBEC1} and~\ref{thm:particleNumberDistributionBEC2} are therefore central limit type theorems for an interacting quantum many-particle system as they concern the fluctuations of $\mathbf{N}_0$. From this perspective it is clear that their proof requires a very precise understanding of the Gibbs state. 

The above statements should also be compared to results for other (classical and quantum) statistical mechanics models, where fluctuations of observables can be studied, see e.g. \cite{BourgErdYau-2014,LeblSerf-2017,BekLebSerf-2018,Serfaty-2018,BorGuiKoz-2018,Gau-2021,SerfToAppear} for the classical case and \cite{AroKirkSchl-2013,BuSafSchl-2014,RaSchl-2019,KirkRadeSchl-2021,LewNamRou-21,
	FroKnoSchSoh-22,RaSei-2022,NamRad-23,DelLam-2024} for the quantum case.

\subsection{Free energy expansion}

In this section we discuss the asymptotic expansion of the free energy $F(\beta,N)$ as $N \to \infty$. It is a key ingredient for the proofs of our other main results but also of independent interest. Before we state our result, we introduce an effective free energy functional that describes the interacting BEC as well as the free energy related to the Bogoliubov Hamiltonian in \eqref{BogoliubovHamiltonianAndi}. The minimum of the condensate functional is a discrete version of the effective free energy that appeared recently in \cite{BocDeuSto-24} and later in \cite{CapDeu-23}. Our discrete version has the advantage of being the correct effective model also if order $1$ particles occupy the $p = 0$ momentum mode.

\subsubsection*{Effective condensate functional}

We define the free energy of a probability distribution $p$ on $\mathbb{N}_0$ by
\begin{equation}
	\mathcal{F}^{\mathrm{BEC}}(p) = \frac{\hat{v}(0)}{2N} \sum_{n =0}^{\infty} n^2 p(n) - \frac{1}{\beta} S(p), \quad \text{ where } \quad S(p) = - \sum_{n=0}^{\infty} p(n) \ln(p(n)) 
	\label{eq:condensatefunctional}
\end{equation}
denotes the classical entropy of $p$. The function $p$ should be interpreted as the distribution of the number of particles in a BEC. The free energy of a BEC with an expected number of $M$ particles is defined by
\begin{equation}
	F^{\mathrm{BEC}}(\beta,M) = \inf \left\{ \mathcal{F}^{\mathrm{BEC}}(p) \ \bigg| \ p \geq 0, \sum_{n = 0}^{\infty} p(n) = 1, \sum_{n = 0}^{\infty} n p(n) = M \right\} - \frac{\hat{v}(0) M^2}{2 N}.
	\label{eq:condensateFreeEnergy}
\end{equation}
We subtract the constant $\hat{v}(0) M^2/(2 N)$ because this allows us to state our main result in a more intuitive form. Because of the subtracted constant, $F^{\mathrm{BEC}}(\beta,M)$ should be interpreted as the free energy associated to the \textit{particle number fluctuations} in the BEC. 

The unique minimizer of the minimization problem in \eqref{eq:condensateFreeEnergy} is the Gibbs distribution
\begin{equation}
	g_{\beta,M}^{\mathrm{BEC}}(n) = \frac{\exp\left( -\beta\left( \frac{\hat{v}(0)}{2N} n^2 - \mu^{\mathrm{BEC}}(\beta,M) n \right) \right)}{\sum_{n=0}^{\infty} \exp\left( -\beta\left( \frac{\hat{v}(0)}{2N} n^2 - \mu^{\mathrm{BEC}}(\beta,M) n \right) \right)},
 \label{eq:GibbsdistributiomDiscrete}
\end{equation}
where the chemical potential $\mu^{\mathrm{BEC}}$ is chosen such that $\sum_{n \in \mathbb{N}_0} n g^{\mathrm{BEC}}_{\beta,M}(n) = M$ holds.

\subsubsection*{Bogoliubov free energy}

We define the free energy related to Bogoliubov Hamiltonian $\mathcal{H}^{\mathrm{Bog}}(z)$ in \eqref{BogoliubovHamiltonianAndi} by
\begin{align}
	F^{\mathrm{Bog}}(\beta,N) &= -\frac{1}{\beta} \ln\left( \tr_{\mathscr{F}_+} \exp\left( -\beta \mathcal{H}^{\mathrm{Bog}}(z) \right) \right) + \mu_0(\beta,N) \ \tr_{\mathscr{F}_+} [ \mathcal{N}_+ G^{\mathrm{Bog}} ] \nonumber \\
	&= \frac{1}{\beta} \sum_{p \in \Lambda^*_+} \ln\left( 1 - \exp\left( -\beta \epsilon(p) \right) \right) + \mu_0(\beta,N) \sum_{p \in \Lambda_+^*} \gamma_p,
	\label{eq:BogoliubovFreeEnergy}
\end{align}
where $\mathcal{N}_+ = \sum_{p \in \Lambda_+^*} a_p^* a_p$. We highlight that the arguments $\beta$ and $N$ of $F^{\mathrm{Bog}}$ do not indicate that the expected number of particles in the state $G^{\mathrm{Bog}}(z)$ in \eqref{eq:BogoliubovGibbsState} at inverse temperature $\beta$ equals $N$, but rather that the parameters $\mu_0$ and $N_0$ are evaluated at the point $(\beta,N)$. We also highlight that, although $G^{\mathrm{Bog}}(z)$ depends on $z \in \mathbb{C}$, $F^{\mathrm{Bog}}(\beta,N)$ does not.

\subsubsection*{Asymptotic expansion of the free energy}

We are now prepared to state our main results concerning the asymptotic expansion of the free energy $F(\beta,N)$. 

\begin{theorem}[Free energy expansion]\label{thm:main1}
	Let $v$ satisfy the assumptions of Theorem~\ref{thm:norm-approximation}. We consider the limit $N \to \infty$, $\beta N^{2/3} \to \kappa \in (0,\infty)$. The free energy $F(\beta,N)$ in \eqref{eq:freeenergy} satisfies
	\begin{equation}
		F(\beta,N) = F^{\mathrm{Bog}}(\beta,N) + \frac{\hat{v}(0) N}{2} + F^{\mathrm{BEC}}(\beta,N_0(\beta,N)) + O\left( N^{5/8} \right),
        \label{eq:ThmFreeEnergyBounds}
	\end{equation}
	with $F^{\mathrm{BEC}}$ in \eqref{eq:condensateFreeEnergy}, $F^{\mathrm{Bog}}$ in \eqref{eq:BogoliubovFreeEnergy}, and $N_0(\beta,N)$ in \eqref{eq:crittemp}. 
\end{theorem}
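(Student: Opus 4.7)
The plan is to prove matching upper and lower bounds on $F(\beta,N)$. For the upper bound, I would use $\Gamma_{\beta,N}$ from \eqref{eq:referenceState-intro1} as a trial state in the Gibbs variational principle \eqref{eq:freeenergy}. The energy expectation $\Tr[\mathcal{H}_N \Gamma_{\beta,N}]$ splits as in \eqref{eq:firstBogoliubovHamiltonianb}: terms odd in the phase of $z$ integrate to zero against the rotation-invariant measure $g^{\mathrm{BEC}}(z)\,dz$, so only the constant $\hat v(0) N/2$, the Bogoliubov quadratic contribution averaged over $z$ giving $F^{\mathrm{Bog}}(\beta,N)$, and the condensate-fluctuation piece survive. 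The entropy of the block-diagonal state decomposes as the classical entropy of $g^{\mathrm{BEC}}$ plus the $z$-averaged quantum entropy of $G^{\mathrm{Bog}}(z)$; the latter is phase-independent by unitary equivalence, and the discrepancy between the continuous measure on $\mathbb{C}$ in \eqref{eq:GibbsDistributionDiscrete} and the discrete Gibbs distribution on $\mathbb{N}_0$ used to define $F^{\mathrm{BEC}}$ in \eqref{eq:GibbsdistributiomDiscrete} is controlled by a Laplace-type comparison and is of lower order. Collecting contributions yields the right-hand side of \eqref{eq:ThmFreeEnergyBounds}.

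For the lower bound, I would combine the coherent-state c-number substitution of Lieb-Seiringer-Yngvason \cite{LieSeiYng-05} with a bootstrap based on Theorem~\ref{thm:correlation-intro}. The substitution rewrites the grand canonical partition function as an integral over $z\in\mathbb{C}$ of partition functions on $\mathscr{F}_+$, with upper- and lower-symbol errors of order $O(N^{2/3})$ via Berezin-Lieb inequalities. After this reduction, the quadratic-in-excitations part yields $F^{\mathrm{Bog}}$, and what remains are residual cubic terms such as $N^{-1}\sum_{p,k}\hat v(p)\langle \bar z\, a_{p+k}^* a_p a_k + \mathrm{h.c.}\rangle$ together with excitation-only quartic terms as in \eqref{eq:2-moment-intro}, none of which appear in $F^{\mathrm{Bog}}$. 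By Cauchy-Schwarz and the summability of $\hat v$, these are bounded in terms of $\sup_p \Tr[B_p^2 G_{\beta,N}]^{1/2}$ and $\sup_p \Tr[(a_p^* a_p)^2 G_{\beta,N}]^{1/2}$. I would first establish a crude free-energy bound with error $O(N^{2/3})$ by ignoring these residuals through coercivity, then apply a Hellmann-Feynman argument to the perturbed Gibbs state $G_{\beta,N,t}$ in \eqref{eq:GeneralGibbsState-intro} to deduce the first-moment estimate $\sup_{t\in[-1,1]}|\Tr[B_p G_{\beta,N,t}]|\lesssim N^{2/3}$, and finally invoke Theorem~\ref{thm:correlation-intro} with $A=\beta(\mathcal{H}_N-\mu\mathcal{N})$ and $B=\beta B_p$ to upgrade this to $\Tr[B_p^2 G_{\beta,N}]\lesssim N^{4/3}$, where the double commutator $[[B,A],B]$ evaluates to an operator of this size. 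The analogous argument with $B=\beta a_p^* a_p$ handles the number-squared bound. Substituting these estimates back into the decomposed free energy closes the lower bound.

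The hardest part will be arranging the bootstrap so that the final error is $O(N^{5/8})$ rather than the naive $O(N^{2/3})$. The sharper exponent emerges from balancing the Berezin-Lieb losses against the residual cubic/quartic contributions and the discrete-vs-continuous condensate approximation, each scaling differently with the size of $N_0(\beta,N)$. A further subtlety is the transition across the critical point: for $N_0(\beta,N)<N^{2/3}$ the coherent-state machinery loses its sharpness because the condensate occupation is small, and one must argue directly that the interacting free energy agrees with the ideal-gas one at the required precision by absorbing the mean-field shift $\hat v(0) N_0/N$ into the chemical potential, with $F^{\mathrm{BEC}}(\beta,N_0(\beta,N))$ then reducing to the ideal-gas condensate entropy. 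Unifying the two regimes within a single remainder of order $N^{5/8}$ is the final delicate point.
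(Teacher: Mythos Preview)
Your plan follows the paper's: trial state $\Gamma_{\beta,N}$ for the upper bound, and for the lower bound the bootstrap (crude $O(N^{2/3})$ bound $\to$ first moments via Hellmann--Feynman $\to$ second moments via Theorem~\ref{thm:correlation-intro}) combined with a c-number substitution. Two refinements are needed to make it go through.

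First, the entropy does not decompose as you describe. The state $\Gamma_{\beta,N}$ is \emph{not} block-diagonal in particle number, since coherent states are not eigenvectors of $a_0^*a_0$; the paper instead invokes a Berezin--Lieb inequality $S(\Gamma_{\beta,N}) \geq \int S(G^{\mathrm{Bog}}(z))\,g(z)\,dz + S(g)$ for the upper bound. Dually, for the lower bound one needs the entropy upper bound $S(G_{\beta,N}) \leq \int S(G_z)\,\zeta_G(z)\,dz + S(\zeta_G)$ from \cite{DeuSei-20} after the c-number step; this is a separate nontrivial input, not part of the symbol-error control.

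Second, and more importantly, the order you propose for the lower bound does not work as stated. If you c-number substitute the partition function first, the resulting $z$-dependent Hamiltonian $\mathcal{H}_{N,s}(z)$ still contains the unbounded cubic operators $\bar z\sum \hat v(p)a_{k+p}^*a_{-p}^*a_k$ and the excitation-quartic term, so you cannot apply the Gibbs variational principle to it cleanly, nor drop the residuals for a lower bound. The paper instead works with $\mathcal{F}(G_{\beta,N})=\Tr[\mathcal{H}_N G_{\beta,N}]-\beta^{-1}S(G_{\beta,N})$ and bounds the cubic and excitation-quartic expectations in $G_{\beta,N}$ \emph{first} (this is exactly where your second-moment bounds $\Tr[B_p^2 G_{\beta,N}]\lesssim N^{4/3}$ and $\Tr[(\mathcal{N}_+-N_+^{\mathrm{G}})^2 G_{\beta,N}]\lesssim N^{4/3}$ are consumed, giving errors $O(N^{1/2})$ and $O(N^{1/3}\delta^{-1})$ respectively), so that only the simplified Bogoliubov-plus-condensate Hamiltonian is c-number substituted. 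After that the $z$-dependent variational principle applies directly to the quadratic $\mathcal{H}^{\mathrm{Bog}}(z)$.

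The final $O(N^{5/8})$ arises from a two-regime split at $N_0=N^{19/24}$ (not $N^{2/3}$): above this threshold one runs the Bogoliubov lower bound, which additionally requires an a priori condensate-fraction estimate $|N_0^{\mathrm{G}}-N_0|\lesssim N^{2/3}\ln N$ (obtained by a separate relative-entropy argument) to relocate the dispersion from $|z|^2$ to $N_0$ via Jensen; below it one compares directly to the ideal gas as you indicate.
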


The terms on the right-hand side of \eqref{eq:ThmFreeEnergyBounds} are listed in descending order concerning their growth in $N$. The Bogoliubov free energy $F^{\mathrm{Bog}}(\beta,N)$ is of the order $\beta^{-1} N \sim N^{5/3}$ and satisfies $F^{\mathrm{Bog}}(\beta,N) = F_0^+(\beta,N) + \mathrm{const.} \ N^{2/3} + o(N^{2/3})$ with the free energy 
\begin{equation}
    F_0^+(\beta,N) = \frac{1}{\beta} \sum_{p \in \Lambda_+^*} \ln\left( 1- \exp\left( \beta(p^2 -\mu_0(\beta,N)) \right) \right)
    \label{eq:freeEnergyIdealGasCloud}
\end{equation}
of the thermally excited particles in the ideal gas. The correction of the order $N^{2/3}$ is related to the emergence of the Bogoliubov dispersion relation in \eqref{eq:gammap} in the condensed phase ($\kappa > \frac{1}{4 \pi}[\upzeta(3/2)]^{2/3}$). The term $\frac{\hat{v}(0) N}{2}$ in \eqref{eq:ThmFreeEnergyBounds} is a mean-field interaction between the particles. Finally, the free energy of the fluctuations of the number of particles in the BEC, $F^{\mathrm{BEC}}(\beta,N_0(\beta,N))$, yields two contributions if $\kappa > \frac{1}{4 \pi}[\upzeta(3/2)]^{2/3}$, one on the order $N^{2/3} \ln(N)$ and one on the order $N^{2/3}$. 
\begin{remark}
\begin{enumerate}[label=(\alph*)]
\item The fact that we use a discrete effective condensate theory allows us to provide a simple formula for the free energy of the BEC that is valid in all parameter regimes. This should be compared to the analysis in \cite{{BocDeuSto-24}}, where a continuous version of our condensate free energy, which is minimized by $g^{\mathrm{BEC}}$ in \eqref{eq:GibbsDistributionDiscrete}, is used, see \cite[(1.16), (1.17)]{BocDeuSto-24}. The minimum in \cite[(1.16)]{BocDeuSto-24} is needed because the continuous effective condensate theory in \cite[(1.17)]{BocDeuSto-24} fails to describe the free energy of the condensate if $N_0 \sim 1$. 
\item The discrete free energy functional in \eqref{eq:condensatefunctional} plays a crucial role in our proof of the first-order a priori estimates in Section~\ref{sec:FirstOrderEstimates}. There, we use the insight that the variance of the number of condensed particles and that of the total particle number agree to leading order. The former can be described by a c-number substitution, which motivates the appearance of $g^{\mathrm{BEC}}(z)$ in \eqref{eq:GibbsDistributionDiscrete}. However, when studying the free energy related to particle number fluctuations, we are naturally led to $\mathcal{F}^{\mathrm{BEC}}$ in \ref{eq:condensatefunctional}.
\item One can have the idea to try to approximate the Gibbs state $G_{\beta,N}$ by a reference state that is constructed with $g_{\beta,M}^{\mathrm{BEC}}(n)$ in \eqref{eq:GibbsdistributiomDiscrete} instead of $g^{\mathrm{BEC}}(z)$. However, we believe that this is not possible. The reason is that if we replace $|z \rangle \langle z |$ in the definition of the state $\Gamma$ below \eqref{eq:coherentstateEigenfunction} by $ | n_0 \rangle \langle n_0 |$ with $| n_0 \rangle$ in \eqref{eq:condensateDistributionGibbsStateDiscreteintro} then the terms proportional to $a_0^* a_0^* a_p a_{-p}$ and $a_p^* a_{-p}^* a_0 a_0$ in the energy in \eqref{eq:firstBogoliubovHamiltonian} vanish. 
\end{enumerate}
\end{remark}

The next statement provides us with a simplified formula for $F^{\mathrm{BEC}}(\beta,N_0(\beta,N))$ in two parameter regimes (condensed and non-condensed phase). A similar statement has been proved for a continuous version of our effective condensate theory in \cite[Proposition~1.2]{BocDeuSto-24}.

\begin{proposition}[Free energy of the effective condensate theory]
\label{prof:freeEnergyEffectiveCondensateTheoryMain}
    We consider the limit $N \to \infty$, $\beta N^{2/3} \to \kappa \in (0,\infty)$. The following statements hold for given $\epsilon > 0$:
	\begin{enumerate}[label=(\alph*)]
		\item Assume that $M \gtrsim N^{5/6 + \epsilon}$. Then we have
		\begin{equation}
			F^{\mathrm{BEC}}(\beta,M) = \frac{1}{2 \beta} \ln \left( \frac{ \hat{v}(0) \beta  }{2 \pi N} \right) + O\left( N^{1/3} \right).
			\label{eq:FreeEnergyBECInteractingLimit}
		\end{equation}
		\item Assume that $M \lesssim N^{5/6 - \epsilon}$. Then we have
		\begin{equation}
			F^{\mathrm{BEC}}(\beta,M) = F_0^{\mathrm{BEC}}(\beta,N) + O\left( N^{2/3 - 2 \epsilon} \right)
			\label{eq:FreeEnergyBECNonInteractingLimit}
		\end{equation}
		with the free energy
        \begin{equation}
            F_0^{\mathrm{BEC}}(\beta,N) = \frac{1}{\beta} \ln\left( 1 - \exp\left(\beta \mu_0(\beta,N) \right) \right)
            \label{eq:freeEnergyIdealGasBEC}
        \end{equation}
        of the condensed particles in the ideal gas. In particular, $F^{\mathrm{BEC}}(\beta,M)$ is independent of $\hat{v}(0)$ at the given level of accuracy. 
	\end{enumerate} 
\end{proposition}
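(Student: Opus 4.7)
The plan is to reduce the computation of $F^{\mathrm{BEC}}(\beta,M)$ to an asymptotic analysis of the partition function $Z(\beta,M) = \sum_{n \geq 0} \exp(-\beta[\hat{v}(0) n^2/(2N) - \mu^{\mathrm{BEC}} n])$ and of the chemical potential $\mu^{\mathrm{BEC}}$ that enforces $\sum_n n\, g^{\mathrm{BEC}}_{\beta,M}(n) = M$. Substituting the explicit minimizer \eqref{eq:GibbsdistributiomDiscrete} into $\mathcal{F}^{\mathrm{BEC}}$ and using $S(g_{\beta,M}^{\mathrm{BEC}}) = \beta(\hat{v}(0)/(2N))\langle n^2\rangle - \beta\mu^{\mathrm{BEC}} M + \ln Z$ one gets
\begin{equation*}
F^{\mathrm{BEC}}(\beta,M) \;=\; \mu^{\mathrm{BEC}} M \;-\; \frac{1}{\beta}\ln Z(\beta, M) \;-\; \frac{\hat{v}(0) M^2}{2N}.
\end{equation*}
The scale against which the two regimes are distinguished is the Gaussian width $\sigma := \sqrt{N/(\beta\hat{v}(0))} \sim N^{5/6}$: the assumption $M \gtrsim N^{5/6+\epsilon}$ resp.\ $M \lesssim N^{5/6-\epsilon}$ dictates whether the Gaussian peak $n^\ast := \mu^{\mathrm{BEC}} N/\hat{v}(0)$ sits deep inside $\mathbb{N}_0$ or near the origin.

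For part (a), I would complete the square to write the summand as $\exp(-(\beta\hat{v}(0)/(2N))(n-n^\ast)^2)$ and apply Poisson summation. Since $n^\ast \sim M \gg \sigma$, truncation of the sum at $n=0$ costs only $\exp(-cN^{2\epsilon})$, while the nonzero Fourier modes of the Poisson series are suppressed by $\exp(-2\pi^2 k^2\sigma^2) = \exp(-cN^{5/3})$. One thereby obtains $Z(\beta,M) = \sqrt{2\pi\sigma^2}\,\exp(\beta\hat{v}(0)(n^\ast)^2/(2N))(1+O(N^{-\infty}))$, and the mean constraint forces $n^\ast = M(1+O(N^{-\infty}))$. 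Substituting into the formula above, the three contributions $\mu^{\mathrm{BEC}} M$, $-\hat{v}(0)(n^\ast)^2/(2N)$ and $-\hat{v}(0)M^2/(2N)$ cancel exactly, leaving only $-(1/\beta)\ln\sqrt{2\pi\sigma^2} = (1/(2\beta))\ln(\beta\hat{v}(0)/(2\pi N))$, which is \eqref{eq:FreeEnergyBECInteractingLimit}. The stated $O(N^{1/3})$ remainder is a generous upper bound on the collected super-polynomial corrections.

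For part (b), the natural small parameter is $\Lambda M^2$ with $\Lambda := \beta\hat{v}(0)/N \sim N^{-5/3}$, so $\Lambda M^2 \lesssim N^{-2\epsilon} \to 0$ and the interaction is a small perturbation of the geometric distribution. I would factor $Z(\beta,M) = Z_0(\mu^{\mathrm{BEC}})\,\langle e^{-\beta\hat{v}(0)n^2/(2N)}\rangle_{g_0(\mu^{\mathrm{BEC}})}$ with $Z_0(\mu) = (1-e^{\beta\mu})^{-1}$, and Taylor expand both the partition function and the mean constraint around the ideal-gas chemical potential $\mu_0(\beta,N)$. The explicit moment bounds $\langle n^k\rangle_{g_0} \lesssim_k M^k$ for the geometric distribution yield $\mu^{\mathrm{BEC}} = \mu_0(\beta,N) + O(\hat{v}(0)M/N)$ and an expansion of $-\beta^{-1}\ln Z$ in which the leading interaction-induced correction is precisely cancelled by the subtracted mean-field $-\hat{v}(0)M^2/(2N)$. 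What survives is $\beta^{-1}\ln(1-e^{\beta\mu_0(\beta,N)}) = F_0^{\mathrm{BEC}}(\beta,N)$ plus quadratic remainders of order $O(N^{2/3-2\epsilon})$.

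The main technical obstacle lies in part (b): there is no closed form for $\mu^{\mathrm{BEC}}$, and the bookkeeping must be carried out uniformly across the full range $1 \lesssim M \lesssim N^{5/6-\epsilon}$, including the transitional region where $M$ becomes comparable to $\sigma$ and the perturbative and Gaussian regimes meet. The interlocking cancellation between $\mu^{\mathrm{BEC}} M$, the first-order interaction correction to $\ln Z$, and the renormalizing term $-\hat{v}(0)M^2/(2N)$ is what drives the result and must be tracked to the claimed accuracy. By contrast, the super-polynomial accuracy in part (a) is essentially automatic once Poisson summation has been set up.
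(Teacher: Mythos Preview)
Your Poisson-summation argument for part~(a) is correct and close to the paper's route: the paper first computes the continuous analogue $F_{\mathrm{c}}^{\mathrm{BEC}}(\beta,M)$ exactly (with super-polynomially small remainder) and then applies an Euler--Maclaurin comparison between the discrete and continuous partition functions, which is where the $O(N^{1/3})$ actually comes from. For part~(b), however, the paper's argument is considerably shorter than the perturbative expansion you outline and bypasses $\mu^{\mathrm{BEC}}$ altogether. Because $F^{\mathrm{BEC}}(\beta,M)$ is defined as a constrained infimum, an \emph{upper} bound follows by inserting the geometric trial distribution $p(n)=(1-e^{\beta\mu_0})e^{\beta\mu_0 n}$ directly into $\mathcal{F}^{\mathrm{BEC}}$; the only $\hat v(0)$-dependent cost is the variance contribution $\tfrac{\hat v(0)}{2N}\mathrm{Var}_p(n)\lesssim M^2/N\lesssim N^{2/3-2\epsilon}$. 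For the \emph{lower} bound one simply drops the nonnegative term $\tfrac{\hat v(0)}{2N}\sum_n n^2 p(n)$ from the functional (using $\hat v(0)\geq 0$), reducing to pure entropy maximization at fixed mean, and absorbs the subtracted $-\hat v(0)M^2/(2N)$ into the error. No expansion of $Z$ or of the chemical potential is needed, and the ``interlocking cancellation'' you flag as the main obstacle never has to be tracked: the variational structure delivers both inequalities in one line each. Your scheme would succeed, but it works harder than necessary.
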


A direct consequence of the above proposition is the following corollary.

\begin{corollary}[Free energy expansion]
    \label{cor:main1}
    Let $v$ satisfy the assumptions of Theorem~\ref{thm:main1}. We consider the limit $N \to \infty$, $\beta N^{2/3} \to \kappa \in (0,\infty)$. If $\kappa \in (1,\infty)$ then
    \begin{equation}
        F(\beta,N) = F^{\mathrm{Bog}}(\beta,N) + \frac{\hat{v}(0) N}{2} + \frac{1}{2 \beta} \ln \left( \frac{ \hat{v}(0) \beta  }{2 \pi N} \right) + O\left( N^{5/8} \right),
        \label{eq:CorollaryBEC}
    \end{equation}
    and if $\kappa \in (0,1)$ we have
    \begin{equation}
        F(\beta,N) = F_0(\beta,N) + \frac{\hat{v}(0) N}{2} + O\left( N^{5/8} \right)
        \label{eq:CorollaryNoBEC}
    \end{equation}
    with $F_0(\beta,N) = F_0^{\mathrm{BEC}}(\beta,N) + F_0^+(\beta,N)$. The free energies $F_0^{\mathrm{BEC}}$ and $F_0^{+}$ are defined in \eqref{eq:freeEnergyIdealGasBEC} and \eqref{eq:freeEnergyIdealGasCloud}, respectively.
\end{corollary}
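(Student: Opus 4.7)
The plan is to deduce the corollary by combining Theorem~\ref{thm:main1} with Proposition~\ref{prof:freeEnergyEffectiveCondensateTheoryMain}, using the known asymptotic size of $N_0(\beta,N)$ from \eqref{eq:BEC-phase-transition-simple} to verify the hypotheses of Proposition~\ref{prof:freeEnergyEffectiveCondensateTheoryMain}. The only task is to check that all the extra error terms thus introduced absorb into the $O(N^{5/8})$ remainder already present in \eqref{eq:ThmFreeEnergyBounds}.

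For $\kappa \in (1,\infty)$ (which lies in the condensed phase, so that \eqref{eq:BEC-phase-transition-simple} yields $N_0(\beta,N)\sim N$), one has $N_0(\beta,N)\gtrsim N^{5/6+\epsilon}$ for some fixed $\epsilon \in (0,1/6]$. Part (a) of Proposition~\ref{prof:freeEnergyEffectiveCondensateTheoryMain} then gives
\[
F^{\mathrm{BEC}}(\beta, N_0(\beta,N)) = \frac{1}{2\beta}\ln\!\left(\frac{\hat v(0)\beta}{2\pi N}\right) + O(N^{1/3}).
\]
Since $N^{1/3}=o(N^{5/8})$, inserting this into Theorem~\ref{thm:main1} directly produces \eqref{eq:CorollaryBEC}.

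For $\kappa \in (0,1)$ (which lies in the non-condensed phase, so that $N_0(\beta,N)\sim 1$), fix any $\epsilon \in (1/48, 5/6)$. Then $N_0(\beta,N)\lesssim N^{5/6-\epsilon}$, and part (b) of Proposition~\ref{prof:freeEnergyEffectiveCondensateTheoryMain} yields
\[
F^{\mathrm{BEC}}(\beta, N_0(\beta,N)) = F_0^{\mathrm{BEC}}(\beta,N) + O(N^{2/3-2\epsilon}) = F_0^{\mathrm{BEC}}(\beta,N) + o(N^{5/8}).
\]
Inserting this into Theorem~\ref{thm:main1} reduces \eqref{eq:CorollaryNoBEC} to the identity $F^{\mathrm{Bog}}(\beta,N) = F_0^+(\beta,N) + O(N^{5/8})$. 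Since $N_0/N\lesssim N^{-1}$, the shift $2\hat v(0)N_0/N$ in the Bogoliubov dispersion \eqref{eq:BogoliubovDispersion} is uniformly small relative to $p^2-\mu_0$, and I plan to expand $\epsilon(p)$, $u_p$, $v_p$ and $\gamma_p$ around their free-gas counterparts and substitute into the closed form \eqref{eq:BogoliubovFreeEnergy} of $F^{\mathrm{Bog}}$ to obtain the required bound.

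The main obstacle I anticipate is precisely this last comparison $F^{\mathrm{Bog}}=F_0^++O(N^{5/8})$ in the non-condensed regime. Although the perturbation $\hat v(0)N_0/N$ vanishes, both $\mu_0\sim -N^{2/3}$ and $\sum_p \gamma_p\sim N$ are large, so the two contributions to $F^{\mathrm{Bog}}$ in \eqref{eq:BogoliubovFreeEnergy} reach order $N^{5/3}$ individually, and the required $O(N^{5/8})$ bound hinges on a precise cancellation with the corresponding pieces of $F_0^+$. Executing this cancellation requires uniform control in $p\in \Lambda_+^*$ together with the summability assumption $\sum_{p\in\Lambda^*}(1+|p|)\hat v(p)<\infty$ from Theorem~\ref{thm:norm-approximation}, which ensures that the high-momentum tails are absolutely summable after the Taylor expansion.
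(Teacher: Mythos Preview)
Your approach is the same as the paper's: combine Theorem~\ref{thm:main1} with Proposition~\ref{prof:freeEnergyEffectiveCondensateTheoryMain}, using \eqref{eq:BEC-phase-transition-simple} to place $N_0(\beta,N)$ in the right range. The condensed case is handled exactly as you describe.

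Your ``main obstacle'' in the non-condensed case is overstated. The two large pieces you worry about cancel structurally rather than requiring any delicate analysis. Indeed, $F_0^+$ is the free energy of the thermally excited ideal-gas modes and thus carries the term $\mu_0(\beta,N)(N-N_0(\beta,N))$ (compare \eqref{eq:boundChemicalPotential14A2} and the discussion around it; the display \eqref{eq:freeEnergyIdealGasCloud} omits this contribution). Once that term is matched against $\mu_0\sum_{p\in\Lambda_+^*}\gamma_p$ in \eqref{eq:BogoliubovFreeEnergy}, the residual splits into two pieces that are already controlled in the paper: the difference of the logarithmic sums is $O(N_0/(\beta N))=O(N^{-1/3})$ by the computation \eqref{eq:finalLowerBound16}, and $\mu_0\bigl[\sum_p\gamma_p-(N-N_0)\bigr]$ is bounded by $|\mu_0|\cdot|\widetilde N_0-N_0|\lesssim N^{2/3}\cdot N_0/(\beta N)\lesssim N^{1/3}$ via Lemma~\ref{lem:BoundN0}. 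No Taylor expansion in $p$ and no use of the weighted summability of $\hat v$ is needed; the bound is $O(N^{1/3})$, comfortably inside $O(N^{5/8})$.
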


The interpretation of the above result is the following: in the condensed phase ($\kappa > \frac{1}{4 \pi}[\upzeta(3/2)]^{2/3}$) we have BEC and because of this Bogoliubov modes form. The third term on the right-hand side of \eqref{eq:CorollaryBEC} is the free energy related to the fluctuations of the number of particles in the BEC. It is interesting to note that it depends on the interaction between the particles but not on $N_0$. This is related to the fact that the variance of the number of particles in the condensate does not depend on $N_0$ if $\kappa > \frac{1}{4 \pi}[\upzeta(3/2)]^{2/3}$, see the first line of \eqref{eq:2-pdmBECPhase}.

In the non-condensed phase ($\kappa < \frac{1}{4 \pi}[\upzeta(3/2)]^{2/3}$) Bogoliubov modes do not form and the free energy related to the particle number fluctuations in the BEC is absent. While we assume to be either strictly above or strictly below the critical point of the BEC phase transition in Corollary~\ref{cor:main1}, Theorem~\ref{thm:main1} also applies to the region around the critical point ($\kappa = 1$).

\subsection{Organization of the article}
\label{sec:organizationOfArticle}
For the convenience of the reader we describe here briefly the organization of our article before turning to the proof in the remaining sections.

We recall that the starting point of our proof is the Gibbs variational principle, which states that the Gibbs state is the unique minimizer of the free energy functional in \eqref{eq:GibbsFreeEnergyFunctional}. We will prove bounds for the free energy in \eqref{eq:freeenergy} and then infer information on the Gibbs state. In particular, we will prove the free energy expansion in Theorem \ref{thm:main1} by justifying the upper and lower bounds separately.  

In Section~\ref{sec:upperBoundFreeEnergy} we use the state $\Gamma_{\beta,N}$ to prove a sharp upper bound for $F(\beta,N)$ that is captured in Proposition~\ref{prop:upperbound}. The analysis here is a simplified version of that in \cite{BocDeuSto-24,CapDeu-23}, where an upper bound for the free energy in the more challenging Gross--Pitaevskii regime has been considered. Since several ingredients of the proof of the upper bound will later be used in other parts of our proof we provide all details.

Proving the sharp lower bound for the free energy requires second order correlation inequalities to justify Bogoliubov theory. As a preparation for these bounds, we need first-order bounds for the Gibbs states under suitable perturbations on the optimal scale. The general idea is to use a Griffiths argument and the coercivity of the relevant energy functional. The detailed  derivation of the first-order bounds is technically demanding and will occupy Sections ~\ref{sec:condensate Fraction},  ~\ref{sec:FirstOrderEstimates}, and~\ref{sec:boundsChemicalPotential}.


In Section~\ref{sec:condensate Fraction}, we derive a rough bound for the expected number of particles in the condensate. This follows from a rough lower bound for the free energy that is obtain via an application of the Onsager-type estimate in Lemma~\ref{lem:OnsagersLemma}. We also use \cite[Theorem~6.1]{LewNamRou-21}, see Lemma~\ref{lemcoercivityrelentr} below, to quantify the coercivity of the relative entropy with respect to the trace norm of the 1-pdms of the states under consideration. A refined estimate for the 1-pdm will be provided later in Section~\ref{sec:1pdm}.

Section~\ref{sec:FirstOrderEstimates} is devoted to proving first-order a priori estimates for a family of perturbed Gibbs states using a Griffiths argument. To obtain these bounds, we find it convenient to work with the grand potential instead of the free energy. Specifically, we introduce a chemical potential $\mu$
and minimize over all states on the Fock space rather than only those with an expected number of $N$ particles. This unconstrained minimization introduces additional mathematical difficulties. Part of these difficulties is resolved by introducing a new nonlinear equation in \eqref{eq:GrantCanonicalEffectiveIddealGasChemPot}, which a certain effective chemical potential solves. In particular, this allows us to approximately compute the expected number of condensed particles as a function of $\beta$ and $\mu$. This should be compared to \cite{DeuSeiYng-19,DeuSei-20,DeuSei-21}, where this quantity has been computed as a function of $\beta$ and $N$. First, we establish an estimate for the grand potential of the perturbed systems, accurate up to a remainder of order $N^{2/3}$. Achieving this accuracy requires taking into account the contribution of the effective condensate free energy in \eqref{eq:condensateFreeEnergy} at the scale $N^{2/3} \ln(N)$. We recall that this term contributes on two scales: $N^{2/3} \ln(N)$ and $N^{2/3}$. This step relies on an argument showing that, in the condensed phase, the fluctuations of the number of particles in the BEC coincide with those of the total number of particles. Finally, we use the concavity and monotonicity properties of the grand potential to derive estimates for its derivatives, which encode the relevant information. 

In Section~\ref{sec:boundsChemicalPotential}, by a first order Griffith argument that uses bounds for the free energy instead of the grand potential, we establish also bounds for the interacting chemical potential $\mu_{\beta,N}$ appearing in the definition of the Gibbs state $G_{\beta,N}$. The bounds for $\mu_{\beta,N}$ are in terms of $\hat{v}(0)$ and the chemical potential $\mu_0(\beta,N)$ of the ideal gas. From this we learn that the bounds we obtained in Section~\ref{sec:FirstOrderEstimates} also apply to $G_{\beta,N}$.

In Section~\ref{sec:correlationInequalities} we prove several correlation inequalities for $G_{\beta,N}$ on the optimal scale using our first new abstract correlation inequality in Theorem~\ref{thm:correlation-intro}. We first state and prove an infinite-dimensional version of Stahl's theorem, see Theorem~\ref{sec:StahlsTheorem}, which is then used to prove Theorem~\ref{thm:correlation-intro}.  Afterwards, we combine Theorem~\ref{thm:correlation-intro}, the first order estimates in Section~\ref{sec:FirstOrderEstimates}, and the bound for the chemical potential in Section~\ref{sec:boundsChemicalPotential} to derive second moment estimates for $G_{\beta,N}$. These bounds are captured in Theorem~\ref{thm:secondOrderEstimates}. Unlike the first-order estimates, the second-order bounds in this section cannot be expected to holds for approximate minimizers of the Gibbs free energy functional. That is, this part of the analysis goes strictly beyond the use of coercivity.

In Section~\ref{sec:lowerBound}, we apply our second-order correlation estimates for the Gibbs state $G_{\beta,N}$ to derive a sharp lower bound for the free energy $F(\beta,N)$,  thus completing the proof of Theorem \ref{thm:main1}. Two other ingredients of this proof are a c-number substitution in the spirit of \cite{LieSeiYng-05} and \cite[Lemma~1]{DeuSei-20}, which provides a bound for the entropy in the context of the c-number substitution. 

With the free energy bounds at hand, we conclude in Section~\ref{sec:Gibbs-state-III} the proofs of Theorem~\ref{thm:norm-approximation} (Trace norm approximation of the Gibbs state), Theorem~\ref{thm:1-pdm} (Trace norm and pointwise bounds for the 1-pdm), Theorem~\ref{thm:particleNumberDistributionBEC1} (Coherent state distribution of the BEC), and Theorem~\eqref{thm:particleNumberDistributionBEC2} (Limiting distributions for the BEC). All statements for the 1-pdm and the condensate distributions, except for the trace norm bound for the 1-pdm, are based on applications of Theorem~\ref{thm:norm-approximation}. The trace norm approximation for the 1-pdm is proved with a Griffith argument. The reason for this is that the state $\Gamma_{\beta,N}$ is not quasi-free if $N_0(\beta,N) \geq N^{2/3}$, which prevents us from applying techniques similar to those used in Section~\ref{sec:condensate Fraction}. For a more detailed discussion of this issue we refer to Section~\ref{sec:1pdm}. 

As explained at the end of Section~\ref{sec:DiscussionOfProof}, the second-moment estimates proved in Section~\ref{sec:correlationInequalities} are insufficient to establish Theorem~\ref{thm:main2} for the 2-pdm. Therefore, in Section~\ref{sec:higherOrderCorrelationInequalities}, we discuss higher-order moment bounds for the Gibbs state. These results, presented in Theorem~\ref{thm:higherOrderEstimates}, rely on the first-order estimates in Section~\ref{sec:FirstOrderEstimates} and an application of our second new abstract correlation inequality in Theorem~\ref{thm:higher-moments}. We note that the bounds in Theorem~\ref{thm:higherOrderEstimates} supersede some of those in Theorem~\ref{thm:secondOrderEstimates}. We provide both statements because the more complex results in Theorem~\ref{thm:higherOrderEstimates} are only required for proving bounds for the 2-pdm. This allows us to avoid unnecessarily complicating the proofs of the other main results.

In Section~\ref{sec:Gibbs-state-V} we apply the trace norm approximation of the Gibbs state $G_{\beta,N}$ in Theorem~\ref{thm:norm-approximation} and the higher-order moment bounds in Theorem~\ref{thm:higherOrderEstimates} to prove Theorem~\ref{thm:main2} for the 2-pdm. 

A considerable number of technical lemmas related to the various effective models appearing in our statements are presented in an appendix. Several bounds in the appendix are rather tedious because we require their uniformity across the critical point. The motivation for proving them in the appendix is to avoid interrupting the main flow of our arguments.

\vspace{0.5cm}

    \textbf{Acknowledgments.} It is a pleasure for A. D. to thank Robert Seiringer for inspiring discussions. P. T. N. would like to thank Mathieu Lewin and Nicolas Rougerie for various helpful discussions on Gibbs states, while M. N. would like to thank Błażej Ruba for useful remarks concerning the manuscript. A. D. gratefully acknowledges funding from the Swiss National Science Foundation (SNSF) through the Ambizione grant PZ00P2 185851. P. T. N. was partially supported by the European Research Council via the ERC Consolidator Grant RAMBAS (Project No. 10104424). The work of M. N. was supported by the National Science Centre (NCN) grant Sonata Bis 13 number 2023/50/E/ST1/00439.

\vspace{0.5cm}
\section{Sharp upper bound for the free energy}
\label{sec:upperBoundFreeEnergy}

The goal of this section is to prove the upper bound for the free energy in Theorem \ref{thm:main1}. The precise statement is captured in the following proposition.

\begin{proposition}
	\label{prop:upperbound}
	Let $v$ satisfy the assumptions of Theorem~\ref{thm:norm-approximation}. We consider the limit $N \to \infty$, $\beta N^{2/3} \to \kappa \in (0,\infty)$. The free energy $F(\beta,N)$ in \eqref{eq:freeenergy} satisfies the upper bound
	\begin{equation}
		F(\beta,N) \leq F^{\mathrm{Bog}}(\beta,N) + F^{\mathrm{BEC}}(\beta,N_0(\beta,N)) + \frac{\hat{v}(0)N}{2} + C_{\epsilon} N^{1/3 + \epsilon}
    \label{eq:upperBoundFreeEnergy}
	\end{equation}
	for any fixed $\epsilon > 0$.
\end{proposition}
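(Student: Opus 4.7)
The plan is to apply the Gibbs variational principle with the trial state $\Gamma_{\beta,N}$ from Theorem~\ref{thm:norm-approximation}, which gives $F(\beta,N) \le \tr[\mathcal{H}_N \Gamma_{\beta,N}] - \beta^{-1} S(\Gamma_{\beta,N})$. I would use the continuous representation
\begin{equation*}
\Gamma_{\beta,N} = \int_{\mathbb{C}} |z\rangle\langle z|\otimes G^{\mathrm{Bog}}(z)\, g^{\mathrm{BEC}}(z)\, \de z,
\end{equation*}
where $\mu^{\mathrm{BEC}}$ is tuned so that $\int |z|^2 g^{\mathrm{BEC}}(z)\,\de z = \widetilde N_0 := N - \sum_{p\in\Lambda_+^*}\gamma_p$, ensuring $\tr[\mathcal N \Gamma_{\beta,N}] = N$. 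The discrepancy $\widetilde N_0 - N_0(\beta,N) = O(N^{2/3})$ is eventually handled by Lipschitz-type estimates on $F^{\mathrm{BEC}}$ from the appendix.

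For the energy I would use $a_0|z\rangle = z|z\rangle$ to perform a pointwise $c$-number substitution in \eqref{eq:Hamiltonian}, replacing $a_0$ and $a_0^*$ by $z$ and $\bar z$. The grand canonical Hamiltonian then decomposes into: (i) the purely classical term $\hat v(0)|z|^4/(2N)$; (ii) the quadratic Bogoliubov Hamiltonian $\mathcal H^{\mathrm{Bog}}(z)$ with $|z|^2$ in the coefficients, which one may replace by $N_0(\beta,N)$ since $g^{\mathrm{BEC}}$ concentrates on a shell of width $N^{5/6}$ around $N_0$; (iii) cubic remainders involving one $a_0^{(\ast)}$ and three excited operators, which carry an unmatched phase factor $z$ or $\bar z$ and vanish after integration against the phase-symmetric $g^{\mathrm{BEC}}(z)\de z$; and (iv) genuine quartic terms on $\mathscr{F}_+$, controlled via Wick's theorem for the quasi-free state $G^{\mathrm{Bog}}(z)$ (Lemma~\ref{lem:1pdmAndPairingFunction}) together with $\sum_p \hat v(p) < \infty$ to produce an $O(N^{1/3})$ contribution. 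The Bogoliubov energy plus the chemical potential $\mu_0$ shift reconstructs $F^{\mathrm{Bog}}(\beta,N) + \beta^{-1}\int S(G^{\mathrm{Bog}}(z)) g^{\mathrm{BEC}}\de z$ after combining with the entropy. Cross-terms $\tfrac{\hat v(p)}{N}|z|^2 \Tr[a_p^* a_p G^{\mathrm{Bog}}(z)]$ together with the leading $\hat v(0)\widetilde N_0^2/(2N)$ from (i) reproduce the Hartree contribution $\hat v(0)N/2$ in \eqref{eq:upperBoundFreeEnergy} modulo $O(N^{1/3+\epsilon})$.

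For the entropy I would exploit the tensor-product structure to split
\begin{equation*}
S(\Gamma_{\beta,N}) \;=\; S^{\mathrm{cond}} \;+\; \int S(G^{\mathrm{Bog}}(z))\, g^{\mathrm{BEC}}(z)\de z,
\end{equation*}
where the Bogoliubov piece combines with the Bogoliubov energy to give exactly $F^{\mathrm{Bog}}(\beta,N)$, while $S^{\mathrm{cond}}$ is the entropy associated with the coherent-state convex combination $\int |z\rangle\langle z| g^{\mathrm{BEC}}(z)\de z$ on $\mathscr F_0$. Combined with the classical variance energy $\tfrac{\hat v(0)}{2N}(\int |z|^4 g^{\mathrm{BEC}}\de z - \widetilde N_0^2)$, this produces $F^{\mathrm{BEC}}(\beta,\widetilde N_0)$, which I would then identify with $F^{\mathrm{BEC}}(\beta,N_0(\beta,N))$ up to the desired error. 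The main obstacle is making the coherent-state entropy bound sufficiently sharp: since the family $\{|z\rangle\}_z$ is non-orthogonal, one only has $S^{\mathrm{cond}} \le $ (Shannon entropy of $g^{\mathrm{BEC}}$), and one must show that this Berezin--Lieb-type bound is saturated to accuracy $O(N^{1/3+\epsilon})$. To this end I would exploit the Gaussian concentration of $g^{\mathrm{BEC}}$ on the shell $|z|^2 \sim N_0$ of width $N^{5/6}$: on this shell, the overlap $|\langle z|z'\rangle|^2 = \exp(-|z-z'|^2)$ is essentially zero once $|z-z'| \gtrsim 1$, so $g^{\mathrm{BEC}}$ is effectively supported on a collection of $N^{5/6}$ almost-orthogonal coherent states whose Shannon entropy is $O(\ln N)$, and a short computation then shows the classical/quantum entropy discrepancy is $O(1)$. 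A secondary difficulty is the uniform-in-$z$ estimate of the non-Bogoliubov quartic contributions from (iv), which one controls via Cauchy--Schwarz and the explicit formulas $\Tr[(a_p^* a_p)^2 G^{\mathrm{Bog}}(z)] \lesssim \gamma_p^2 + \gamma_p$ together with the summability assumption $\sum_p (1+|p|)\hat v(p)<\infty$.
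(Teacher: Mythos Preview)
Your overall strategy matches the paper's: use $\Gamma_{\beta,N}$ as a trial state, perform the $c$-number substitution for the energy, and combine with an entropy bound. The energy computation you outline is essentially what the paper does (the paper dispatches the cubic terms via the quasi-free property of $G^{\mathrm{Bog}}(z)$ rather than phase symmetry of $g^{\mathrm{BEC}}$, but either argument works; also note that $\gamma_p,\alpha_p$ are already $z$-independent, so no uniform-in-$z$ issue arises for the quartic terms).

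However, your entropy argument contains a genuine confusion. First, $S(\Gamma_{\beta,N})$ does \emph{not} split as the equality you write, since $\Gamma_{\beta,N}$ is not a tensor product (the factor $G^{\mathrm{Bog}}(z)$ depends on the phase of $z$). More importantly, you have the direction of the Berezin--Lieb inequality backwards. For a state of the form $\Gamma = \int |z\rangle\langle z|\otimes G(z)\,p(z)\,\de z$, Lemma~\ref{lem:BerezinLieb} gives the \emph{lower} bound
\[
S(\Gamma) \;\ge\; \int_{\mathbb{C}} S(G(z))\,p(z)\,\de z \;+\; S(p),
\]
with $S(p)$ the classical differential entropy of $p$. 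This is precisely the direction needed for an \emph{upper} bound on the free energy: it yields $-\beta^{-1}S(\Gamma_{\beta,N})\le -\beta^{-1}\bigl[\int S(G^{\mathrm{Bog}}(z))\,g^{\mathrm{BEC}}\de z + S(g^{\mathrm{BEC}})\bigr]$ directly. Your claim that ``one only has $S^{\mathrm{cond}}\le S(g^{\mathrm{BEC}})$'' is the discrete-mixture intuition (where non-orthogonality of the pure components lowers the von~Neumann entropy below the Shannon entropy of the weights), but it does not transfer to coherent-state integrals with the measure $\de z$; there the inequality reverses. Consequently, the entire discussion of saturation, near-orthogonality of coherent states on the shell $|z|^2\sim N_0$, and the $O(1)$ classical/quantum entropy discrepancy is unnecessary---the Berezin--Lieb inequality already points the right way, and the paper simply quotes it.

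A secondary omission: the paper treats the regime $N_0(\beta,N)<N^{2/3}$ separately, using the ideal-gas Gibbs state $G^{\mathrm{id}}_{\beta,N}$ as a trial state rather than the coherent-state form of $\Gamma_{\beta,N}$. Your sketch does not cover this case.
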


To prove the above proposition we apply a trial state argument with the state $\Gamma_{\beta,N}$, which has been defined in Theorem~\ref{thm:norm-approximation}.
\subsection{Definition of the trial state}
\label{sec:trialstate}
We first consider the case $N_0 \geq N^{2/3}$ and recall \eqref{eq:exponentialProperty}. The case $N_0 < N^{2/3}$ will be discussed at the end of Section~\ref{sec:Finalupperbound}. As trial state we choose $\Gamma_{\beta,N}$ in \eqref{eq:referenceState-intro1}. The coherent state describes a BEC with an expected number of $|z|^2$ particles in the constant function $z/|z| \in L^2(\Lambda)$. In contrast, the thermally excited particles are described by $G^{\mathrm{Bog}}(z)$, which acts on $\mathscr{F}_+$. For the sake of a lighter notation we will denote the function $g^{\mathrm{BEC}}$ defined in \eqref{eq:GibbsDistributionDiscrete} in this section by $g$. It is the unique minimizer of the free energy functional 
\begin{equation}
	\mathcal{F}^{\mathrm{BEC}}_{\mathrm{c}}(\zeta) = \frac{\hat{v}(0)}{2N}  \int_{\mathbb{C}} |z|^4 \zeta(z) \de z  - \frac{1}{\beta} S(\zeta), \quad \text{ where } \quad S(\zeta) = - \int_{\mathbb{C}} \zeta(z) \ln(\zeta(z)) \de z
	\label{eq:condensatefunctionalcontinuous}
\end{equation}
denotes the classical entropy of $\zeta$, in the set
\begin{equation}\label{eq:MContinuousBECTheory}
	\mathcal{M}_{\mathrm{c}}(\widetilde{N}_0) = \left\{ \zeta \in L^1(\mathbb{C}) \ \Big| \ \zeta \geq 0, \int_{\mathbb{C}} \zeta(z) \de z = 1, \int_{\mathbb{C}} |z|^2 \zeta(z) \de z = \widetilde{N}_0  \right\}.
\end{equation}

The free energy of $g$ minus $\hat{v}(0) \widetilde{N}_0^2/(2N)$ is denoted by $F^{\mathrm{BEC}}_{\mathrm{c}}(\beta,\widetilde{N}_0)$. The free energy functional $\mathcal{F}_{\mathrm{c}}^{\mathrm{BEC}}$ is a continuous version of the discrete free energy functional $\mathcal{F}^{\mathrm{BEC}}$ in \eqref{eq:condensatefunctional}. It has been introduced recently in \cite{BocDeuSto-24} and later also appeared in \cite{CapDeu-23}. 

We choose the parameter $\widetilde{N}_0 \geq 0$ such that the expected number of particles in our trial state equals $N$, i.e.
\begin{equation}
	N = \tr[\mathcal{N} \Gamma_{\beta,N}] = \int_{\mathbb{C}} |z|^2 g(z) \de z + \int_{\mathbb{C}} \tr_{\mathscr{F}_+} [ \mathcal{N}_+ G^{\mathrm{Bog}}(z) ] g(z) \de z = \widetilde{N}_0(\beta,N) + \int_{\mathbb{C}} \tr_{\mathscr{F}_+} [ \mathcal{N}_+ G^{\mathrm{Bog}}(z) ] g(z) \de z.
	\label{eq:defN0}
\end{equation} 
As we show in Remark~\ref{rem:particleNumberTrialState} below, this is always possible. In Lemma~\ref{lem:BoundN0} below we show that \eqref{eq:defN0} implies the bound $| \widetilde{N}_0 - N_0 | \lesssim (1+ \beta^{-1}) \lesssim N^{2/3}$ with $N_0(\beta,N)$ in \eqref{eq:crittemp}. 
\subsection{Properties of the trial state}
\label{sec:preplemmas}

In this section we collect three lemmas that are needed in the proof of the upper bound for the free energy of our trial state $\Gamma_{\beta,N}$. We state them here to not interrupt the main line of the argument later.

The first lemma provides us with a Bogoliubov transformation that diagonalizes the Bogoliubov Hamiltonian $\mathcal{H}^{\mathrm{Bog}} (z)$. Before we state it, we introduce some notation. For $p \in \Lambda^*$ and $z \in \mathbb{C}$, we define the function 
\begin{equation}
	\varphi_{p,z}(x) = (z/|z|) e^{\mathrm{i}px} \quad \text{ and denote } \quad a_{p,z} = a(\varphi_{p,z}).
    \label{eq:planeWavesWithPhase}
\end{equation}
The operators $a_{p,z}$ and $a_{p,z}^*$ satisfy the canonical commutation relations in \eqref{eq:CCR}. We also define the Bogoliubov transformation $\mathcal{U}_z$ on $\mathscr{F}_+$ by $\mathcal{U}_z \Omega = \Omega$ with the vacuum vector $\Omega \in \mathscr{F}_+$ and
\begin{equation}
	\mathcal{U}_z^* a_{p,z} \mathcal{U}_z = u_{p} a_{p,z} + v_{p} a_{-p,z}^*, \quad \quad p \in \Lambda^*_+.
	\label{eq:Bogtrafo}
\end{equation}
Here the functions $u_{p}$ and $v_{p}$ are defined by
\begin{align}
	v_{p} &= \frac{1}{2} \left( \frac{p^2-\mu_0}{p^2-\mu_0 + 2 \hat{v}(p) N_0/N} \right)^{1/4} - \frac{1}{2} \left( \frac{p^2-\mu_0}{p^2-\mu_0 + 2 \hat{v}(p) N_0/N} \right)^{-1/4} \qquad \text{and} \qquad   u_p= \sqrt{1+v^2_p}. 
	\label{eq:coefficientsBogtrafoAndi}
\end{align}
The chemical potential $\mu_0(\beta,N)$ and the number of condensed particles $N_0(\beta,N)$ in the ideal gas are defined in \eqref{eq:idealgase1pdmchempot} and \eqref{eq:crittemp}, respectively. We are now prepared to state our first lemma, whose proof is a standard computation based on \eqref{eq:planeWavesWithPhase}--\eqref{eq:coefficientsBogtrafoAndi}  (we refer to \cite{NamNapSol-16} for a general theory of bosonic quadratic Hamiltonians).

\begin{lemma}
	\label{lem:Diaggamma}
	The Bogoliubov Hamiltonian $\mathcal{H}^{\mathrm{Bog}} (z)$ in \eqref{BogoliubovHamiltonianAndi} satisfies
	\begin{equation}
		\mathcal{U}_z \mathcal{H}^{\mathrm{Bog}} (z) \mathcal{U}_z^* = E_0 + \sum_{p \in \Lambda^*_+} \epsilon(p) a_{p}^* a_p	
		\label{eq:diagonalform1}
	\end{equation}
	with the Bogoliubov dispersion relation $\epsilon(p)$ in \eqref{eq:BogoliubovDispersion} and the ground state energy
	\begin{equation}
		E_0 = -\frac{1}{2} \sum_{p\in \Lambda^*_+} \left[ p^2 - \mu_0(\beta,N) + \hat{v}(p) N_0(\beta,N)/N - \epsilon(p) \right]. 
		\label{eq:epsilonz}
	\end{equation}
\end{lemma}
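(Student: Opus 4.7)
My plan is to perform a standard Bogoliubov diagonalization of the quadratic Hamiltonian $\mathcal H^{\mathrm{Bog}}(z)$ after first absorbing the phase $z/|z|$ into the creation and annihilation operators. The first step exploits the antilinearity of $a(\cdot)$: the identity $a_{p,z}=(\bar z/|z|)a_p$ gives $a_{p,z}^{*}a_{q,z}=a_p^{*}a_q$, while $a_{p,z}^{*}a_{-p,z}^{*}=(z^2/|z|^2)a_p^{*}a_{-p}^{*}$ and $a_{p,z}a_{-p,z}=(\bar z^2/|z|^2)a_p a_{-p}$. Substituting these identities into \eqref{BogoliubovHamiltonianAndi}, the phases in the pairing terms cancel exactly against the $z^2/|z|^2$ and $\bar z^2/|z|^2$ factors, so that, writing $A_p=p^2-\mu_0+\hat v(p)N_0/N$ and $B_p=\hat v(p)N_0/N$,
\[
\mathcal H^{\mathrm{Bog}}(z)=\sum_{p\in\Lambda_+^*}\Bigl[A_p\,a_{p,z}^{*}a_{p,z}+\tfrac{B_p}{2}\bigl(a_{p,z}^{*}a_{-p,z}^{*}+a_{p,z}a_{-p,z}\bigr)\Bigr].
\]

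For the second step I will conjugate by $\mathcal U_z$. Unitarity together with $u_p^2-v_p^2=1$ forces $\mathcal U_z a_{p,z}\mathcal U_z^{*}=u_p a_{p,z}-v_p a_{-p,z}^{*}$. Inserting this into the expression above, expanding, normal-ordering with the CCR, and symmetrizing under $p\mapsto -p$ (valid since $u_{-p}=u_p$, $v_{-p}=v_p$ and $\hat v(-p)=\hat v(p)$), I arrive at
\[
\mathcal U_z\mathcal H^{\mathrm{Bog}}(z)\mathcal U_z^{*}=\sum_{p\in\Lambda_+^*}\alpha_p\,a_{p,z}^{*}a_{p,z}+\sum_{p\in\Lambda_+^*}\beta_p\bigl(a_{p,z}^{*}a_{-p,z}^{*}+a_{p,z}a_{-p,z}\bigr)+C,
\]
where $\alpha_p=A_p(u_p^2+v_p^2)-2B_p u_p v_p$, $\beta_p=\tfrac{B_p}{2}(u_p^2+v_p^2)-A_p u_p v_p$, and the constant $C=\sum_{p\in\Lambda_+^*}\bigl(A_p v_p^2-B_p u_p v_p\bigr)$ collects the CCR-induced zero-point contributions coming from $a_{-p,z}a_{-p,z}^{*}$ and $a_{p,z}a_{p,z}^{*}$.

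In the final step a short manipulation of \eqref{eq:coefficientsBogtrafoAndi} yields the two identities $u_p^2+v_p^2=A_p/\epsilon(p)$ and $2u_p v_p=B_p/\epsilon(p)$, with $\epsilon(p)=\sqrt{A_p^2-B_p^2}=\sqrt{(p^2-\mu_0)(p^2-\mu_0+2\hat v(p)N_0/N)}$. These identities immediately give $\beta_p\equiv 0$, $\alpha_p=(A_p^2-B_p^2)/\epsilon(p)=\epsilon(p)$, and $C=\tfrac12\sum_p(\epsilon(p)-A_p)=E_0$; combining with $a_{p,z}^{*}a_{p,z}=a_p^{*}a_p$ from Step~1 yields \eqref{eq:diagonalform1}. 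The computation is purely mechanical; the only care required is the bookkeeping of the CCR commutators responsible for $E_0$ and the $p\mapsto -p$ symmetrization that doubles some contributions.
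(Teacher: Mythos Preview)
Your proof is correct and follows exactly the standard Bogoliubov diagonalization that the paper invokes without giving details (``a standard computation based on \eqref{eq:planeWavesWithPhase}--\eqref{eq:coefficientsBogtrafoAndi}''). One sign to watch: from \eqref{eq:coefficientsBogtrafoAndi} one has $v_p\le 0$, hence $2u_pv_p=-B_p/\epsilon(p)$ rather than $+B_p/\epsilon(p)$; carrying this through, it is $\mathcal U_z^{*}\mathcal H^{\mathrm{Bog}}(z)\mathcal U_z$ (not $\mathcal U_z\mathcal H^{\mathrm{Bog}}(z)\mathcal U_z^{*}$) that becomes diagonal --- a harmless convention mismatch already present in the paper's own formulation that does not affect any downstream result.
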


In the next lemma we compute the 1-pdm and the pairing function of the state $G(z)$.	
	
\begin{lemma}\label{lem:1pdmAndPairingFunction}
	Let $G^{\mathrm{Bog}}(z)$ be the state in \eqref{eq:referenceState-intro1}. Its 1-pdm and pairing function are given for $p,q \in \Lambda^*_+$ by
	\begin{equation}
		\tr[ a_q^* a_p G^{\mathrm{Bog}}(z) ] = \delta_{p,q} \underbrace{\left\{  \left( u_{p}^2 + v_{p}^2 \right) \gamma^{\mathrm{Bog}}_{p} + v_{p}^2 \right\}}_{\gamma_{p}} \quad \text{ and } \quad \tr[ a_p a_q G^{\mathrm{Bog}}(z) ] = \delta_{p,-q} (z/|z|)^{-2} \underbrace{ u_{p} v_{p} \left\{ 2  \gamma^{\mathrm{Bog}}_{p} + 1 \right\}}_{\alpha_{p}},
		\label{eq:1pdm}
	\end{equation}
	respectively, with the functions $u_{p}$, $v_{p}$ in \eqref{eq:coefficientsBogtrafoAndi}. Moreover,
	\begin{equation}
	\gamma^{\mathrm{Bog}}_{p} = \frac{1}{\exp(\beta \epsilon(p))-1}.
    \label{eq:gammapBog}
	\end{equation}
with $\epsilon(p)$ in \eqref{eq:BogoliubovDispersion}.
\end{lemma}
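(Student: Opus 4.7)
\medskip

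\noindent\textbf{Proof plan for Lemma~\ref{lem:1pdmAndPairingFunction}.}
The plan is to reduce the computation of the two correlators to an expectation in a \emph{diagonal} Gibbs state by conjugating with the Bogoliubov transformation $\mathcal{U}_z$ from Lemma~\ref{lem:Diaggamma}. Since $\mathcal{U}_z \mathcal{H}^{\mathrm{Bog}}(z) \mathcal{U}_z^* = E_0 + \sum_{p\in\Lambda^*_+}\epsilon(p) a_p^* a_p$, the functional calculus gives $G^{\mathrm{Bog}}(z) = \mathcal{U}_z^*\,\widetilde G\,\mathcal{U}_z$, where $\widetilde G$ is proportional to $\exp(-\beta \sum_p \epsilon(p) a_p^* a_p)$ and is a standard quasi-free diagonal Gibbs state. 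By cyclicity of the trace, for any operator $X$ acting on $\mathscr{F}_+$ we have $\Tr[X\,G^{\mathrm{Bog}}(z)]=\Tr[\mathcal{U}_z X \mathcal{U}_z^*\,\widetilde G]$, so it suffices to compute $\mathcal{U}_z a_p \mathcal{U}_z^*$ and $\mathcal{U}_z a_p^* \mathcal{U}_z^*$.

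First, I would derive the action of $\mathcal{U}_z$ (as opposed to $\mathcal{U}_z^*$) on the creation and annihilation operators. Using \eqref{eq:planeWavesWithPhase} and the antilinearity of $a(\cdot)$, one has $a_{p,z}=(\bar z/|z|)a_p$, hence $a_p=(z/|z|)a_{p,z}$. Plugging this into \eqref{eq:Bogtrafo} and matching with \eqref{eq:Bogtrafoz} gives $\mathcal{U}_z^* a_p \mathcal{U}_z = u_p a_p + v_p (z/|z|)^2 a_{-p}^*$. Viewing this as a linear $2\times 2$ transformation on the vector $(a_p,a_{-p}^*)^T$ and inverting (the matrix has determinant $u_p^2-v_p^2=1$), I obtain
\[
\mathcal{U}_z a_p \mathcal{U}_z^* = u_p a_p - v_p (z/|z|)^2 a_{-p}^*, \qquad
\mathcal{U}_z a_p^* \mathcal{U}_z^* = u_p a_p^* - v_p (\bar z/|z|)^2 a_{-p}.
\]

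Next, I would compute the two correlators by expanding the conjugated products into four monomials and taking the expectation against $\widetilde G$, using the standard identities $\Tr[a_p^* a_q \widetilde G]=\delta_{p,q}\gamma_p^{\mathrm{Bog}}$, $\Tr[a_p a_q \widetilde G]=\Tr[a_p^* a_q^* \widetilde G]=0$, the CCR, and the fact that $\epsilon(-p)=\epsilon(p)$ (hence $\gamma_{-p}^{\mathrm{Bog}}=\gamma_p^{\mathrm{Bog}}$, $u_{-p}=u_p$, $v_{-p}=v_p$). For the 1-pdm the expansion of $\mathcal{U}_z a_q^* a_p \mathcal{U}_z^*$ has two cross terms of the form $a_q^* a_{-p}^*$ and $a_{-q} a_p$, both of which vanish against $\widetilde G$; the surviving diagonal contributions combine (after using the CCR on the term $a_{-q}a_{-p}^*$) into $\delta_{p,q}[u_p^2 \gamma_p^{\mathrm{Bog}} + v_p^2(\gamma_p^{\mathrm{Bog}}+1)] = \delta_{p,q}[(u_p^2+v_p^2)\gamma_p^{\mathrm{Bog}} + v_p^2]=\delta_{p,q}\gamma_p$, which is precisely the stated formula. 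For the pairing function, the analogous expansion of $\mathcal{U}_z a_p a_q \mathcal{U}_z^*$ leaves only the two cross terms with a common phase factor $(z/|z|)^{\pm 2}$; the CCR and $\gamma_{-p}^{\mathrm{Bog}}=\gamma_p^{\mathrm{Bog}}$ collapse them into $\delta_{p,-q}$ times $u_p v_p(2\gamma_p^{\mathrm{Bog}}+1)=\alpha_p$, multiplied by the phase that appears in the statement.

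I do not expect a genuine obstacle here: the whole argument is a bookkeeping exercise. The only delicate point is tracking the phases $z/|z|$ correctly through the Bogoliubov transformation and the antilinearity of $a(\cdot)$, and keeping the sign conventions consistent with \eqref{eq:Bogtrafoz} and \eqref{eq:Bogtrafo}. As a consistency check one may verify the $U(1)$ symmetry $\mathcal{H}^{\mathrm{Bog}}(e^{-i\theta}z)=U\mathcal{H}^{\mathrm{Bog}}(z)U^*$, where $U$ implements $a_p\mapsto e^{i\theta}a_p$; this fixes the angular dependence of $\Tr[a_pa_qG^{\mathrm{Bog}}(z)]$ uniquely up to a function of $|z|$, so the final phase factor in the pairing formula is forced by symmetry once the magnitude is computed from $\widetilde G$.
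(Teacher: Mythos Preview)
Your proposal is correct and matches the paper's approach exactly: the paper states that the proof is ``a straightforward computation that uses \eqref{eq:Bogtrafo} and Wick's rule for the diagonalized version of the state $G^{\mathrm{Bog}}(z)$,'' which is precisely the conjugation-by-$\mathcal{U}_z$ argument you outline. The only caveat is bookkeeping: when you carry out the pairing computation explicitly, both cross terms in $\mathcal{U}_z a_p a_q \mathcal{U}_z^*$ carry the \emph{same} phase $(z/|z|)^{2}$ (not $(z/|z|)^{\pm 2}$), and the result comes out as $-(z/|z|)^{2}\,u_pv_p(2\gamma_p^{\mathrm{Bog}}+1)$; reconciling this with the exact form written in the lemma requires tracking the sign convention for $v_p$ in \eqref{eq:coefficientsBogtrafoAndi}, but this is a cosmetic issue and does not affect your method.
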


The proof of the above lemma is a straightforward computation that uses \eqref{eq:Bogtrafo} and Wicks rule for the diagonalized version of the state $G^{\mathrm{Bog}}(z)$. Next we state and prove a bound for $\widetilde{N}_0(\beta,N)$ showing that it equals $N_0(\beta,N)$ to leading order as $N \to \infty$.

\begin{lemma}
	\label{lem:BoundN0}
	Assume that $\widetilde{N}_0(\beta,N)$ is defined as in \eqref{eq:defN0} and let $N_0(\beta,N)$ be given as in \eqref{eq:idealgase1pdmchempot}. Then we have
	\begin{equation}
		| \widetilde{N}_0(\beta,N) - N_0(\beta,N) | \lesssim \left( 1 + \frac{1}{\beta} \right) \frac{N_0^2(\beta,N)}{N^2} + \frac{N_0(\beta,N)}{\beta N}.
	\end{equation}
\end{lemma}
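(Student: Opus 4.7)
The plan is to write both $N_0$ and $\widetilde{N}_0$ as $N$ minus an excited-mode sum and estimate the difference term by term. For the ideal gas, \eqref{eq:idealgase1pdmchempot} gives
\begin{equation*}
N_0(\beta,N) = N - \sum_{p\in\Lambda_+^*}\frac{1}{e^{\beta(p^2-\mu_0)}-1}.
\end{equation*}
For the trial state, Lemma~\ref{lem:1pdmAndPairingFunction} shows that $\tr_{\mathscr{F}_+}[\mathcal{N}_+G^{\mathrm{Bog}}(z)]=\sum_{p\in\Lambda_+^*}\gamma_p$ is independent of $z$, so the defining condition \eqref{eq:defN0} yields $\widetilde{N}_0=N-\sum_{p\in\Lambda_+^*}\gamma_p$ with $\gamma_p=(1+2v_p^2)/(e^{\beta\epsilon(p)}-1)+v_p^2$. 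Subtracting,
\begin{equation*}
\widetilde{N}_0-N_0 \;=\; \sum_{p\in\Lambda_+^*}\!\left[\frac{1}{e^{\beta(p^2-\mu_0)}-1}-\frac{1}{e^{\beta\epsilon(p)}-1}\right] - \sum_{p\in\Lambda_+^*}\frac{2v_p^2}{e^{\beta\epsilon(p)}-1}-\sum_{p\in\Lambda_+^*}v_p^2.
\end{equation*}

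The second step is to record elementary a priori bounds on $\epsilon(p)$ and $v_p$. Writing $E_p:=p^2-\mu_0\ge 4\pi^2$ (which uses $\mu_0\le 0$ and the discreteness $|p|\ge 2\pi$) and $\delta_p:=\hat v(p)N_0/N$, the identity $\epsilon(p)^2=E_p(E_p+2\delta_p)$ together with AM--GM gives $E_p\le\epsilon(p)\le E_p+\delta_p$, while the rationalization
\begin{equation*}
v_p^2 \;=\; \frac{(E_p+\delta_p)^2-\epsilon(p)^2}{2\epsilon(p)\bigl(E_p+\delta_p+\epsilon(p)\bigr)} \;=\; \frac{\delta_p^2}{2\epsilon(p)\bigl(E_p+\delta_p+\epsilon(p)\bigr)}
\end{equation*}
yields $v_p^2\le \delta_p^2/(4E_p^2)$.

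The third step is to bound the three sums. For the first, $x\mapsto 1/(e^{\beta x}-1)$ is decreasing with $|f'(x)|=\beta e^{\beta x}/(e^{\beta x}-1)^2\le 1/(\beta x^2)$; combined with $\epsilon(p)-E_p\le\delta_p$ this gives
\begin{equation*}
\left|\frac{1}{e^{\beta E_p}-1}-\frac{1}{e^{\beta\epsilon(p)}-1}\right|\;\le\;\frac{\delta_p}{\beta E_p^2}\;\lesssim\;\frac{\hat v(p)\,N_0/N}{\beta\,|p|^4}.
\end{equation*}
Summing over $\Lambda_+^*$ using $\|\hat v\|_\infty<\infty$ (implied by the hypothesis $\sum(1+|p|)\hat v(p)<\infty$) and $\sum_{p\in\Lambda_+^*}|p|^{-4}<\infty$ controls the first sum by $\lesssim N_0/(\beta N)$. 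The third sum is at most $\sum_p \delta_p^2/(4E_p^2)\lesssim (N_0/N)^2\sum_p \hat v(p)^2/|p|^4\lesssim (N_0/N)^2$. For the middle sum, combine $1/(e^{\beta\epsilon(p)}-1)\le 1/(\beta E_p)$ with the $v_p^2$ estimate to obtain a bound of order $(N_0/N)^2/\beta$. Adding the three contributions gives exactly the stated inequality.

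There is no real obstacle; the only subtlety is the uniform lower bound $E_p\ge 4\pi^2$, which relies on the torus being of unit side (so $\Lambda_+^*\subset 2\pi\mathbb{Z}^3\setminus\{0\}$) and on $\mu_0\le 0$. This is what prevents $v_p^2$ from becoming large in the regime $\delta_p\gg E_p$ and lets us use the clean bound $v_p^2\lesssim \delta_p^2/E_p^2$ across both the condensed and non-condensed phases.
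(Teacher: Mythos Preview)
Your proof is correct and follows essentially the same approach as the paper's: the same decomposition $\widetilde N_0-N_0=\sum_p\bigl[(e^{\beta E_p}-1)^{-1}-\gamma_p^{\mathrm{Bog}}\bigr]-\sum_p(1+2\gamma_p^{\mathrm{Bog}})v_p^2$ and the same three estimates. The only cosmetic differences are that the paper obtains the bound $v_p^2\le \delta_p^2/(4E_p^2)$ from the scalar inequality $(1+x)^{-1/2}+(1+x)^{1/2}-2\le x^2/4$ rather than your rationalization, and writes the Bose--Einstein increment via the explicit $\sinh$ form $|f'(x)|=\beta/(4\sinh^2(\beta x/2))$ before applying $\sinh y\ge y$; both routes are equivalent to yours.
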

\begin{proof}
	To prove a bound for $\widetilde{N}_0$, we need to consider the quantity
	\begin{equation}
		\int_{\mathbb{C}} \tr_{\mathscr{F}_+} [\mathcal{N}_+ G^{\mathrm{Bog}}(z)] \zeta(z) \de z = \sum_{p \in \Lambda^*_+} \gamma_{p}  = \sum_{p \in \Lambda^*_+} \left[ \gamma^{\mathrm{Bog}}_{p}  + (1+2\gamma_{p}^{\mathrm{Bog}}) v_{p}^2  \right].
		\label{eq:preplemmas31}
	\end{equation} 	
	To obtain it, we applied Lemma~\ref{lem:1pdmAndPairingFunction} and used $u_p^2 - v_p^2 = 1$. Before we analyze these terms, we compute
	\begin{equation}
		v_{p}^2 = \frac{1}{4} \left( \frac{p^2-\mu_0}{p^2-\mu_0 + 2 \hat{v}(p) N_0/N} \right)^{1/2} + \frac{1}{4} \left( \frac{p^2-\mu_0}{p^2-\mu_0 + 2 \hat{v}(p) N_0/N} \right)^{-1/2} - \frac{1}{2}.
	\end{equation}
	This, the bound $0 \leq (1+x)^{-1/2} + (1+x)^{1/2} - 2 \leq x^2/4$ for $x \geq 0$, and $\mu_0<0$ imply
	\begin{equation}
		0 \leq v_{p}^2 \leq \frac{\hat{v}^2(p) N_0^2}{4 N^2 p^4}.
		\label{eq:boundvp}
	\end{equation}
	We also have
	\begin{equation}
		\gamma_{p}^{\mathrm{Bog}} \leq \frac{1}{\exp(\beta(p^2-\mu_0))-1} \leq \frac{1}{\beta p^2},
		\label{eq:pwboundgammaB}
	\end{equation}
	which follows from $(\exp(x)-1)^{-1} \leq 1/x$ for $x \geq 0$ and $\epsilon(p) \geq p^2-\mu_0 \geq p^2$. We apply \eqref{eq:boundvp} and \eqref{eq:pwboundgammaB} to see that
	\begin{equation}
		0 \leq \sum_{p \in \Lambda^*_+} (1+2\gamma_{p}^{\mathrm{Bog}}) v_{p}^2  \lesssim \Vert \hat{v} \Vert_{\infty}^2 \left( 1 + \frac{1}{\beta} \right) \frac{N^2_0}{N^2} \sum_{p \in \Lambda^*_+} \frac{1}{p^4} \lesssim \left( 1 + \frac{1}{\beta} \right) \frac{N^2_0}{N^2}
		\label{eq:preplemma32}
	\end{equation}
	holds. It remains to give a bound for the first term on the r.h.s. of \eqref{eq:preplemmas31}.
	
	A first order Taylor approximation shows
	\begin{equation}
		0 \leq \sum_{p \in \Lambda^*_+} \left( \frac{1}{\exp(\beta(p^2-\mu_0))-1} - \gamma_{p}^{\mathrm{Bog}} \right) \leq \sum_{p \in \Lambda^*_+} \int_0^1 \frac{\beta (\epsilon(p) - (p^2 - \mu_0))}{4 \sinh^2(\beta( p^2-\mu_0 + t (\epsilon(p) - (p^2 - \mu_0)) )/2)} \de t.
		\label{eq:preplemma32c}
	\end{equation}
	We have $\sinh(x) \geq x$ for $x \geq 0$. That is, the r.h.s. of the above equation is bounded by
	\begin{equation}
		\sum_{p \in \Lambda^*_+} \frac{(\epsilon(p) - (p^2 - \mu_0))}{\beta(p^2 - \mu_0)^2} = \sum_{p \in \Lambda^*_+} \frac{\sqrt{1+2 \hat{v}(p) (N_0/N) /(p^2-\mu_0)}-1}{\beta(p^2 - \mu_0)} \leq  \frac{\Vert \hat{v} \Vert_{\infty} N_0}{\beta N} \sum_{p \in \Lambda^*_+} \frac{1}{p^4} \lesssim \frac{N_0}{\beta N}. 
		\label{eq:preplemma32d}
	\end{equation}
	To obtain the inequality in the above equation, we used $\sqrt{1+x} - 1 \leq x/2$ for $x \geq 0$ and $\mu_0<0$. 
	In combination, \eqref{eq:defN0}, \eqref{eq:preplemmas31}, \eqref{eq:preplemma32}, \eqref{eq:preplemma32c} and \eqref{eq:preplemma32d} show 
	\begin{equation}
		| \widetilde{N}_0 - N_0 | \lesssim \left( 1 + \frac{1}{\beta} \right) \frac{N_0^2}{N^2} + \frac{N_0}{\beta N},
		\label{eq:preplemma33b}
	\end{equation}
	which proves the claim.
	\end{proof}
	\begin{remark}
    \label{rem:particleNumberTrialState}
		The bounds in the proof of Lemma~\ref{lem:BoundN0} also show that there always exists $\widetilde{N}_0 > 0$ such that \eqref{eq:defN0} is satisfied. To see this one also needs to use that $\beta \gtrsim \beta_{\mathrm{c}}$ with $\beta_{\mathrm{c}}$ in \eqref{eq:crittemp} implies $N_0 \gtrsim 1$.
	\end{remark}
\subsection{Upper bound for the free energy of the trial state}
\label{sec:upperboundfreeenergytrialstate}

In this section we prove an upper bound for the free energy of $\Gamma_{\beta,N}$. We start with a bound for the energy.

\subsubsection{Bound for the energy}

To compute the energy of $\Gamma$, we write the Hamiltonian as

\begin{align}
    \mathcal{H}_N =& \sum_{p \in \Lambda^*_+} (p^2 - \mu_0) a_p^* a_p + \frac{1}{2N} \sum_{p \in \Lambda^*_+} \hat{v}(p) \left\{ 2 a_p^* a_0^* a_p a_0 + a_0^* a_0^* a_p a_{-p} + a_p^* a_{-p}^* a_0 a_0 \right\} + \mu_0 \sum_{p \in \Lambda^*_+} a_p^* a_p \nonumber \\
	&+ \frac{1}{N} \sum_{p,k,p+k \in \Lambda_+^*} \hat{v}(p) \left\{ a^*_{k+p} a^*_{-p} a_k a_0 + h.c.   \right\} + \frac{\hat{v}(0)}{2N} \sum_{u,v \in \Lambda^*} a_u^* a_v^* a_u a_v \nonumber \\
	&+ \frac{1}{2N} \sum_{u,v,p,u+p,v-p \in \Lambda^*_+} \hat{v}(p) a^*_{u+p} a^*_{v-p} a_u a_v,
	\label{eq:decompH}
\end{align}
where $\mu_0 = \mu_0(\beta,N)$. A brief computation that uses $\int_{\mathbb{C}} |z|^2 g(z) \de z = \widetilde{N}_0(\beta,N)$ shows that the expectation of the first two terms on the right-hand side in our trial state $\Gamma$ equals
\begin{equation}
	\int_{\mathbb{C}} \tr_{\mathscr{F}_+}[ \mathcal{H}^{\mathrm{Bog}} (z)G^{\mathrm{Bog}}(z) ] g(z) \de z + \frac{\widetilde{N_0}-N_0}{N} \sum_{p \in \Lambda^*_+} \hat{v}(p) \left( \gamma_p + \alpha_p \right) 
	\label{eq:Bogtermsinenergy}
\end{equation}
with the Bogoliubov Hamiltonian $\mathcal{H}^{\mathrm{Bog}} (z)$ in \eqref{BogoliubovHamiltonianAndi}. The functions $\gamma_p$ and $\alpha_p$ are defined in \eqref{eq:1pdm}. The first term will later be combined with a term coming from the entropy to give the expectation of the free energy of the Bogoliubov Hamiltonian $\mathcal{H}^{\mathrm{Bog}} (z)$ with respect to the probability measure $g(z) \de z$. We use \eqref{eq:1pdm}, \eqref{eq:boundvp}, \eqref{eq:pwboundgammaB}, and $u_p^2-v_p^2=1$ to see that $\gamma_p$ satisfies the bound 
\begin{equation}
	0 \leq \gamma_p \leq \frac{\hat{v}^2(p) N_0^2}{N^2 p^4} \left( 1 + \frac{2}{\beta p^2} \right) + \frac{1}{\beta p^2} \lesssim \left( 1 + \frac{1}{\beta} \right) \frac{1}{p^2}.
	\label{eq:boundgammap}
\end{equation}
To obtain a bound for $\alpha_p$, we first note that
\begin{equation}
	u_{p} v_{p} = \frac{1}{4} \left( \frac{p^2-\mu_0}{p^2-\mu_0 + 2 \hat{v}(p) N_0/N} \right)^{1/2} - \frac{1}{4} \left( \frac{p^2-\mu_0}{p^2-\mu_0 + 2 \hat{v}(p) N_0/N} \right)^{-1/2}.
\end{equation}
The inequality $0 \leq (1+x)^{1/2} - (1+x)^{-1/2} \leq x$ for $x \geq 0$ therefore shows
\begin{equation}
	0 \geq u_{p} v_{p} \geq -\frac{\hat{v}(p) N_0}{2 N p^2}, 
\end{equation}
which, in combination with \eqref{eq:1pdm}, implies the bound
\begin{equation}
	| \alpha_{p} | \leq \frac{\hat{v}(p) N_0}{2 N p^2} \left( 2 \gamma_{p}^{\mathrm{Bog}} + 1 \right) \lesssim \frac{\hat{v}(p) N_0}{N p^2} \left( 1 + \frac{1}{\beta } \right). 
	\label{eq:boundalphap}
\end{equation}
We use Lemma~\ref{lem:BoundN0}, \eqref{eq:boundgammap}, and \eqref{eq:boundalphap} to see that the second term   of \eqref{eq:Bogtermsinenergy} is bounded from above by a constant times
\begin{equation}
	\left( 1 + \frac{1}{\beta} \right)^2 \frac{N_0}{N^2} \sum_{p \in \Lambda_+^*} \frac{\hat{v}(p)}{p^2}.
\end{equation}
Our assumption $\hat{v} \in L^1(\Lambda^*)$ guarantees that the sum on the right-hand side is finite. Hence,
\begin{align}
	&\tr\left[ \left( \sum_{p \in \Lambda^*_+} (p^2 - \mu_0) a_p^* a_p + \frac{1}{2N} \sum_{p \in \Lambda^*_+} \hat{v}(p) \left\{ 2 a_p^* a_0^* a_p a_0 + a_0^* a_0^* a_p a_{-p} + a_p^* a_{-p}^* \right\} \right) \Gamma_{\beta,N} \right] \nonumber \\
	&\hspace{6cm}\leq \int_{\mathbb{C}} \tr_{\mathscr{F}_+}[ \mathcal{H}^{\mathrm{Bog}}(z) G^{\mathrm{Bog}}(z) ] g(z) \de z + C \left( 1 + \frac{1}{\beta} \right)^2 \frac{N_0}{N^2} \label{eq:finalupperenergyBogterm}
\end{align}
holds.

The expectation of the third term on the right-hand side of \eqref{eq:decompH} equals
\begin{equation}
	\mu_0 \sum_{p \in \Lambda^*_+} \int_{\mathbb{C}} \tr_{\mathscr{F}_>}[ a_p^* a_p G^{\mathrm{Bog}}(z) ] \zeta(z) \de z = \mu_0 \sum_{p \in \Lambda_+} \gamma_p.
	\label{eq:finalupperenergyBogtermb}
\end{equation}
Using that $G(z)$ is a quasi free state, for which the expectation of an odd number of creation and annihilation operators vanishes, we check that the expectation of the fourth term on the r.h.s. of \eqref{eq:decompH} in the state $\Gamma$ vanishes. It remains to compute the expectation of the fifth and the sixths term. We start with the computation of the expectation of the sixths term.

It reads
\begin{equation}
	\frac{1}{2N} \sum_{u,v,p,u+p,v-p \in \Lambda^*_+} \hat{v}(p) \int_{\mathbb{C}} \tr_{\mathscr{F}_+}[ a^*_{u+p} a^*_{v-p} a_u a_v G^{\mathrm{Bog}}(z) ] g(z) \de z.
	\label{eq:upperboundA1}
\end{equation}
Using Wick's theorem, Lemma~\ref{lem:Diaggamma} and $p \neq 0$, we compute
\begin{equation}
	\tr_{\mathscr{F}_+}[ a^*_{u+p} a^*_{v-p} a_u a_v G^{\mathrm{Bog}}(z) ] = \delta_{u+p,v} \gamma_{u} \gamma_{v} + \delta_{u,-v} \alpha_{u+p} \alpha_{u}
	\label{eq:expectation}
\end{equation}
with $\gamma_{p}$ and $\alpha_{p}$ in \eqref{eq:1pdm}. We insert the first term on the right-hand side into \eqref{eq:upperboundA1}, use Young's inequality, \eqref{eq:boundgammap}, \eqref{eq:boundalphap}, and find 
\begin{equation}
	\frac{1}{2N} \sum_{u,v,u-v \in \Lambda^*_+} \hat{v}(u-v)  \gamma_{u} \gamma_{v} + \frac{1}{2N} \sum_{u,p,u+p \in \Lambda^*_+} \hat{v}(p)  \alpha_{u+p} \alpha_{u} \lesssim  \frac{\Vert \hat{v} \Vert_1}{N} \left( 1 + \frac{1}{\beta} \right)^2.
	\label{eq:upperboundA2}
\end{equation}
It remains to consider the expectation of the fifth term on the right-hand side of \eqref{eq:decompH}.

A short computation shows
\begin{align}
	\frac{\hat{v}(0)}{2N} \sum_{u,v \in \Lambda^*} \tr[ a_u^* a_v^* a_u a_v \Gamma_{\beta,N} ] =& \frac{\hat{v}(0)}{2N} \int_{\mathbb{C}} |z|^4 g(z) \de z + \frac{\hat{v}(0) \widetilde{N}_0}{N} \sum_{p \in \Lambda^*_+} \gamma_p \nonumber \\
	&+ \frac{\hat{v}(0)}{2N} \int_{\mathbb{C}} \left( \sum_{u,v \in \Lambda^*_+} \tr_{\mathscr{F}_+}[ a_u^* a_v^* a_u a_v G^{\mathrm{Bog}}(z) ] \right) \zeta(z) \de z.
	\label{eq:upperboundA4a}
\end{align}
An application of Wick's theorem allows us to see that 
\begin{align}
	\frac{\hat{v}(0)}{2N} \int_{\mathbb{C}} \left( \sum_{u,v \in \Lambda^*_+} \tr_{\mathscr{F}_+}[ a_u^* a_v^* a_u a_v G^{\mathrm{Bog}}(z) ] \right) \zeta(z) \de z & = \frac{\hat{v}(0)}{2N} \left[ \left( \sum_{u \in \Lambda^*_+} \gamma_{u} \right)^2 + \sum_{u \in \Lambda^*_+} \gamma_{u}^2 +\sum_{u \in \Lambda^*_+} |\alpha _{u}|^2\right] \nn \\
 & \leq \frac{\hat{v}(0)}{2N} \left( \sum_{u \in \Lambda^*_+} \gamma_{u} \right)^2+\frac{C}{N} \left( 1 + \frac{1}{\beta} \right)^2,
	\label{eq:upperboundA4c}
\end{align}
where we used \eqref{eq:boundgammap} and \eqref{eq:boundalphap} in the last step. In combination, \eqref{eq:finalupperenergyBogterm}, \eqref{eq:finalupperenergyBogtermb}, \eqref{eq:upperboundA2}, \eqref{eq:upperboundA4a}, \eqref{eq:upperboundA4c}, and 
\begin{equation}
	\widetilde{N}_0 + \sum_{p \in \Lambda_+^*} \gamma_p = N,
\end{equation}
which follows from \eqref{eq:defN0}, prove the bound
\begin{align}
	\tr[\mathcal{H}_N \Gamma_{\beta,N}] \leq& \int_{\mathbb{C}} \tr_{\mathscr{F}_+}[ \mathcal{H}^{\mathrm{Bog}}(z) G^{\mathrm{Bog}}(z) ] g(z) \de z + \mu_0(\beta,N) \sum_{p \in \Lambda_+^*} \gamma_p + \frac{\hat{v}(0) N}{2} \nonumber \\
	&+ \frac{\hat{v}(0)}{2N} \left( \int_{\mathbb{C}} |z|^4 g(z) \de z - \left( \int_{\mathbb{C}} |z|^2 g(z) \de z \right)^2 \right) 
	+ \frac{C}{N} \left( 1+ \frac{1}{\beta} \right)^2.
	\label{eq:finalupper4}
\end{align}
We highlight that the error term on the right-hand side is bounded by a constant times $N^{1/3}$.
\subsubsection{Bound for the entropy}
In this section we derive an upper bound for the entropy of $\Gamma$. To that end, we need the following lemma, which provides us with a Berezin--Lieb inequality in the spirit of \cite{Berezin-72,Lieb-73}. Its proof can be found in \cite[Lemma~2.4]{BocDeuSto-24}.

\begin{lemma}
	\label{lem:BerezinLieb}
	Let $\{ G(z) \}_{z \in \mathbb{C}}$ be a family of states on a (separable complex) Hilbert space with eigenvalues $g_{\alpha}(z)$ and eigenvectors $v_{\alpha}(z)$, $\alpha \in \mathbb{N}$, and let $p: \mathbb{C} \to \mathbb{R}$ be a probability distribution. We assume that the functions $z \mapsto g_{\alpha}(z)$ are measurable, that the functions $z \mapsto v_{\alpha}(z)$ are weakly measurable, and that $p$ satisfies $\int_{\mathbb{C}} | p(z) \ln(p(z)) | \de z < +\infty$. Then the entropy of the state
	\begin{equation}
		\Gamma = \int_{\mathbb{C}} |z \rangle \langle z | \otimes G(z) p(z) \de z,
		\label{eq:entropy2}
	\end{equation} 
	where the integral is understood in the sense of Lebesgue with respect to the weak operator topology, satisfies the lower bound
	\begin{equation}
		S(\Gamma) \geq \int_{\mathbb{C}} S(G(z)) p(z) \de z + S(p)
		\label{eq:entropy3}
	\end{equation}
	with $S(p)$ in \eqref{eq:condensatefunctionalcontinuous}.
\end{lemma}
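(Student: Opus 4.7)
The plan is to reduce the inequality to the standard Berezin--Lieb inequality for \emph{upper symbols} applied to the normalized ``generalized coherent states'' $|\Psi_{z,\alpha}\rangle = |z\rangle \otimes |v_{\alpha}(z)\rangle$ living in $\mathscr{F}_{0} \otimes \mathscr{H}$, where $\mathscr{H}$ is the underlying Hilbert space on which the $G(z)$ act.

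First I would diagonalize $G(z)$ and write
\begin{equation*}
\Gamma = \int_{\mathbb{C}} \sum_{\alpha} P(z,\alpha)\, |\Psi_{z,\alpha}\rangle\langle \Psi_{z,\alpha}| \, \de z, \qquad P(z,\alpha) := p(z)\, g_{\alpha}(z),
\end{equation*}
interpreting the integral/sum in the weak operator topology (the measurability assumptions ensure everything makes sense). Because the eigenvectors of $G(z)$ form an orthonormal basis of $\mathscr{H}$ and because the coherent states satisfy $\int_{\mathbb{C}} |z\rangle\langle z|\,\de z = \mathds{1}_{\mathscr{F}_{0}}$, the family $\{|\Psi_{z,\alpha}\rangle\}$ provides a resolution of identity: $\int_{\mathbb{C}} \sum_{\alpha} |\Psi_{z,\alpha}\rangle\langle \Psi_{z,\alpha}| \, \de z = \mathds{1}$, while $\|\Psi_{z,\alpha}\| = 1$.

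Next, let $\{\phi_{k}\}$ be an orthonormal eigenbasis of $\Gamma$ with eigenvalues $\lambda_{k} \ge 0$, and set $\mu_{k}(z,\alpha) := |\langle \phi_{k}, \Psi_{z,\alpha}\rangle|^{2}$. The resolution of identity gives $\int_{\mathbb{C}} \sum_{\alpha} \mu_{k}(z,\alpha)\,\de z = 1$ for every $k$, and the normalization $\|\Psi_{z,\alpha}\|=1$ gives $\sum_{k} \mu_{k}(z,\alpha) = 1$ for every $(z,\alpha)$. Moreover
\begin{equation*}
\lambda_{k} = \langle \phi_{k}, \Gamma \phi_{k}\rangle = \int_{\mathbb{C}} \sum_{\alpha} P(z,\alpha)\, \mu_{k}(z,\alpha) \, \de z,
\end{equation*}
so $\lambda_{k}$ is the average of $P$ against the probability measure $\mu_{k}(z,\alpha)\,\de z$. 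Applying Jensen's inequality for the concave function $\phi(x) = -x\ln x$ and summing over $k$ yields
\begin{equation*}
S(\Gamma) = \sum_{k} \phi(\lambda_{k}) \;\ge\; \int_{\mathbb{C}} \sum_{\alpha} \phi\bigl(P(z,\alpha)\bigr) \Bigl[\textstyle\sum_{k} \mu_{k}(z,\alpha)\Bigr] \de z \;=\; -\!\int_{\mathbb{C}} \sum_{\alpha} P(z,\alpha) \ln P(z,\alpha)\, \de z.
\end{equation*}
Finally, using $\ln P(z,\alpha) = \ln p(z) + \ln g_{\alpha}(z)$, $\sum_{\alpha} g_{\alpha}(z)=1$, and the definition $S(G(z)) = -\sum_{\alpha} g_{\alpha}(z)\ln g_{\alpha}(z)$, the right-hand side splits as $S(p) + \int_{\mathbb{C}} p(z)\,S(G(z))\,\de z$, giving \eqref{eq:entropy3}.

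The only real obstacle is the measure-theoretic bookkeeping: justifying that $\Gamma$ is a well-defined trace-class operator with the claimed spectral decomposition, that $(z,\alpha)\mapsto \mu_{k}(z,\alpha)$ is measurable, and that Fubini/Tonelli applies to all the exchanges of sums and integrals above. This is where the hypothesis that $z \mapsto g_{\alpha}(z)$ is measurable, $z \mapsto v_{\alpha}(z)$ is weakly measurable, and $p \ln p \in L^{1}$ enters; once these are in place, the nonnegativity of $P(z,\alpha) \ln_{-} P(z,\alpha)$ (or a standard truncation/dominated convergence argument) legitimizes the interchange. No new Bose-gas input is needed; the argument is purely functional-analytic and mirrors classical proofs of Berezin--Lieb.
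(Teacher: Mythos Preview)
Your argument is correct and is precisely the standard Berezin--Lieb technique for upper symbols: introduce the overcomplete family $|\Psi_{z,\alpha}\rangle=|z\rangle\otimes|v_\alpha(z)\rangle$, observe that it resolves the identity, and apply Jensen's inequality for $\phi(x)=-x\ln x$ to the expression of each eigenvalue $\lambda_k$ as an average of $P(z,\alpha)=p(z)g_\alpha(z)$ against the probability measure $\mu_k(z,\alpha)\,\de z$. The splitting $\ln P=\ln p+\ln g_\alpha$ then gives the two terms on the right-hand side. Your remark about the role of the integrability assumption $\int|p\ln p|<\infty$ in legitimizing the Fubini step is also on point.

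Note that the paper does not actually supply its own proof of this lemma; it simply cites \cite[Lemma~2.4]{BocDeuSto-24}. The argument there follows the same Berezin--Lieb route you describe, so your proposal matches the intended proof.
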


An application of the above lemma shows that the entropy of $\Gamma_{\beta,N}$ satisfies the lower bound
\begin{equation}
	S(\Gamma_{\beta,N}) \geq \int_{\mathbb{C}} S(G^{\mathrm{Bog}}(z)) g(z) \de z + S(g),
	\label{eq:entropy7}
\end{equation}
which is the final result of this section.
\subsubsection{Final upper bound for the free energy}
\label{sec:Finalupperbound}
When we combine \eqref{eq:finalupper4}, \eqref{eq:entropy7} and Lemma~\ref{lem:Diaggamma}, we find the bound 
\begin{equation}
	\mathcal{F}(\Gamma_{\beta,N}) \leq F^{\mathrm{Bog}}(\beta,N) + E_0 + F^{\mathrm{BEC}}_{\mathrm{c}}(\beta,\widetilde{N}_0) + \frac{\hat{v}(0) N}{2} + \frac{C}{N} \left( 1 + \frac{1}{\beta} \right)^2
	\label{equpperboundfreeenergy1}
\end{equation}
with $F^{\mathrm{Bog}}$ in \eqref{eq:BogoliubovFreeEnergy}, $E_0$ in \eqref{eq:epsilonz}, and $F_{\mathrm{c}}^{\mathrm{BEC}}$ defined below \eqref{eq:MContinuousBECTheory}. Since $E_0$ is negative it can be dropped for an upper bound. It therefore only remains to replace $F_{\mathrm{c}}^{\mathrm{BEC}}(\beta,\widetilde{N}_0)$ by $F^{\mathrm{BEC}}(\beta,N_0)$ with $F^{\mathrm{BEC}}$ in \eqref{eq:condensateFreeEnergy} and $N_0$ in \eqref{eq:crittemp}. In this analysis we will focus on the case $N_0 \gtrsim N^{2/3}$. The parameter regime, where $N_0 \lesssim N^{2/3}$ holds will be treated with a separate argument later.

In the first step we replace $F_{\mathrm{c}}^{\mathrm{BEC}}(\beta,\widetilde{N}_0)$ by $F_{\mathrm{c}}^{\mathrm{BEC}}(\beta,N_0)$. Let us denote by $\widetilde{\mu}$ and $\mu$ the chemical potentials related to $\widetilde{N}_0$ and $N_0$, respectively. We have
\begin{align}
	F_{\mathrm{c}}^{\mathrm{BEC}}(\beta,\widetilde{N}_0) &= -\frac{1}{\beta} \ln\left( \int_{\mathbb{C}} \exp\left(-\beta\left( \frac{\hat{v}(0)}{2N} |z|^4 - \widetilde{\mu} |z|^2 \right) \right) \de z \right) + \widetilde{\mu} \widetilde{N}_0 - \frac{\hat{v}(0) \widetilde{N}_0^2}{2N} \nonumber \\
	&\leq -\frac{1}{\beta} \ln\left( \int_{\mathbb{C}} \exp\left(-\beta\left( \frac{\hat{v}(0)}{2N} |z|^4 - \mu |z|^2 \right) \right) \de z \right) + (\mu - \widetilde{\mu}) N_0 + \widetilde{\mu} \widetilde{N}_0 - \frac{\hat{v}(0) \widetilde{N}_0^2}{2N}.
	\label{eq:UpperBoundAndi1}
\end{align}
To obtain this bound, we used that the first term after the smaller or equal sign is concave in $\mu$, and that its first derivative with respect to $\mu$ equals $N_0$. We distinguish two case and first assume that $N_0 \geq N^{5/6 + \epsilon}$ for some fixed $0 < \epsilon < 1/6$. Applications of Lemma~\ref{lem:BoundN0} and part~(a) of Lemma~\ref{lem:ChemPotBECCont} in Appendix~\ref{app:effcondensate} show that 
\begin{equation}
	(\mu - \widetilde{\mu}) N_0 + \widetilde{\mu} \widetilde{N}_0 - \frac{\hat{v}(0) \widetilde{N}_0^2}{2N} \leq \mu N_0 - \frac{\hat{v}(0) N_0}{2N} + C \left( N^{1/3} + \exp(- c N^{\epsilon} ) \right).
	\label{eq:UpperBoundAndi2}
\end{equation}
If $N^{2/3} \leq N_0 < N^{5/6 + \epsilon}$ we apply Lemma~\ref{lem:BoundN0} and part~(c) of Lemma~\ref{lem:ChemPotBECCont}, which gives 
\begin{equation}
	(\mu - \widetilde{\mu}) N_0 + \widetilde{\mu} \widetilde{N}_0 - \frac{\hat{v}(0) \widetilde{N}_0^2}{2N} \leq \mu N_0 - \frac{\hat{v}(0) N_0}{2N} + C N^{1/3+2 \epsilon}.
	\label{eq:UpperBoundAndi3}
\end{equation}
In combination, these considerations imply
\begin{equation}
	F_{\mathrm{c}}^{\mathrm{BEC}}(\beta,\widetilde{N}_0) \leq F_{\mathrm{c}}^{\mathrm{BEC}}(\beta,N_0) + C \left( N^{1/3+2 \epsilon} + \exp(- c N^{\epsilon} ) \right).
	\label{eq:UpperBoundAndi4}
\end{equation}
Moreover, an application of Lemma~\ref{lem:comparisonContinuousDiscreteCondensateFreeEnergy} in Appendix~\ref{app:effcondensate} shows
\begin{equation}
	F_{\mathrm{c}}^{\mathrm{BEC}}(\beta,N_0) \leq F^{\mathrm{BEC}}(\beta,N_0) + C N^{1/3}.
	\label{eq:UpperBoundAndi9}
\end{equation}

To obtain the final bound for the free energy of our trial $\Gamma_{\beta,N}$, we collect \eqref{equpperboundfreeenergy1}, \eqref{eq:UpperBoundAndi4}, and \eqref{eq:UpperBoundAndi9}, which gives 
\begin{equation}
	\mathcal{F}(\Gamma_{\beta,N}) \leq F^{\mathrm{Bog}}(\beta,N) + F^{\mathrm{BEC}}(\beta,N_0) + \frac{\hat{v}(0) N}{2} + C \left( N^{1/3+2 \epsilon} + \exp(- c N^{\epsilon} ) \right)
	\label{eq:UpperBoundAndi10}
\end{equation}
for $0 < \epsilon < 1/6$. Our bound has been derived under the assumption $N_0(\beta,N) \geq N^{2/3}$. It therefore remains to consider the parameter regime, where $N_0(\beta,N) < N^{2/3}$ holds.

In this case we choose the Gibbs state $G_{\beta,N}^{\mathrm{id}}$ of the ideal gas in \eqref{eq:GibbsStateIdealGas} as trial state. A straightforward computation shows 
\begin{equation}
	\mathcal{F}(G_{\beta,N}^{\mathrm{id}}) \leq \frac{1}{\beta} \sum_{p \in \Lambda^*} \ln \left( 1 - \exp\left( -\beta(p^2 - \mu_0(\beta,N)) \right) \right) + \mu_0(\beta,N) N + \frac{\hat{v}(0)N}{2} + C N^{1/3}
	\label{eq:UpperBoundAndi11}
\end{equation}
with $\mu_0$ in \eqref{eq:idealgase1pdmchempot}. The above expressions need to be compared to the ones appearing in Theorem~\ref{thm:main1}. Let $p$ be a probability distribution on $\mathbb{N}_0$ with $\sum_n n p(n) = N_0(\beta,N)$. We have
\begin{align}
	\mathcal{F}^{\mathrm{BEC}}(p) &= \sum_{n=0}^{\infty} \frac{\hat{v}(0)}{2N} n^2 - \frac{1}{\beta} S(p) \geq - \mu_0(\beta,N) \sum_{n=0}^{\infty} n p(n) - \frac{1}{\beta} S(p) + \mu_0(\beta,N) N_0(\beta,N) \nonumber \\
	&\geq -\frac{1}{\beta} \ln\left( \sum_{n=0}^{\infty} \exp(\beta \mu_0 n) \right) + \mu_0 N_0  = \frac{1}{\beta} \ln\left( 1 - \exp(\beta \mu_0) \right) + \mu_0 N_0,
	\label{eq:UpperBoundAndi12}
\end{align}
and hence
\begin{equation}
	\frac{1}{\beta} \ln\left( 1 - \exp(\beta \mu_0) \right) + \mu_0 N_0 \leq F^{\mathrm{BEC}}(\beta, N_0) + C N^{1/3}.
	\label{eq:UpperBoundAndi13}
\end{equation}
To obtain the final bound we also used $\hat{v}(0) N_0^2/N \lesssim N^{1/3}$. The bound in \eqref{eq:UpperBoundAndi13} will be used for the term with $p=0$ in the sum on the right-hand side of \eqref{eq:UpperBoundAndi11}. Using that $\epsilon(p) \geq p^2-\mu_0$ with $\epsilon(p)$ in \eqref{eq:BogoliubovDispersion}, we bound the remaining part of the sum as follows:
\begin{equation}
	\sum_{p \in \Lambda^*_+} \ln \left( 1 - \exp\left( -\beta(p^2 - \mu_0) \right) \right) \leq \sum_{p \in \Lambda^*_+} \ln \left( 1 - \exp\left( -\beta \epsilon(p) \right) \right).
	\label{eq:UpperBoundAndi14}
\end{equation}
It remains to replace $\mu_0(N-N_0)$ by $\mu_0 \sum_{p \in \Lambda_+^*} \gamma_p$ with $\gamma_p$ in \eqref{eq:1pdm}. This can be done with Lemma~\ref{lem:BoundN0} and the bound $-\mu_0 \lesssim 1/(\beta N_0)$, and we find
\begin{equation}
	\mu_0(N-N_0) \leq \mu_0 \sum_{p \in \Lambda_+^*} \gamma_p + C N^{1/3}.
	\label{eq:UpperBoundAndi15}
\end{equation} 
In combination, these consideration show
\begin{equation}
	\mathcal{F}(G_{\beta,N}^{\mathrm{id}}) \leq F^{\mathrm{Bog}}(\beta,N) + F^{\mathrm{BEC}}(\beta, N_0) + \frac{\hat{v}(0) N}{2} + CN^{1/3},
	\label{eq:UpperBoundAndi16}
\end{equation}
which holds under the assumption $N_0 < N^{2/3}$. Together with the Gibbs variational principle in \eqref{eq:freeenergy} and \eqref{eq:UpperBoundAndi10}, \eqref{eq:UpperBoundAndi16} proves Proposition~\ref{prop:upperbound}.

\section{A rough bound for the expected number of particles in the condensate}
\label{sec:condensate Fraction}
In this section we provide a rough bound for the expected number of particles in the condensate, which will be used in Section~\ref{sec:lowerBound}. 

\begin{proposition}
	\label{prop:roughapriori}
    Let $v$ satisfy the assumptions of Theorem~\ref{thm:norm-approximation}. We consider the limit $N \to \infty$, $\beta N^{2/3} \to \kappa \in (0,\infty)$. Then we have
    \begin{equation}
        |\Tr[a_0^* a_0 G_{\beta,N}] - N_0(\beta,N) | \lesssim N^{2/3} \ln(N)
        \label{eq:aPrioriCondensateFraction1}
    \end{equation}
    with $G_{\beta,N}$ in \eqref{eq:interactingGibbsstate} and $N_0(\beta,N)$ in \eqref{eq:crittemp}.
\end{proposition}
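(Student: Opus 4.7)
The plan is to obtain \eqref{eq:aPrioriCondensateFraction1} by comparing $G_{\beta,N}$ to a reference Gibbs state whose $1$-pdm is nearly explicit, combining a rough lower bound on $F(\beta,N)$ with the sharp upper bound from Proposition~\ref{prop:upperbound}.

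First I would apply the Onsager-type estimate in Lemma~\ref{lem:OnsagersLemma} to the interaction operator to obtain
\[
\frac{1}{2N}\sum_{p,u,v\in\Lambda^*}\hat{v}(p)\,a^*_{u+p}a^*_{v-p}a_u a_v \;\geq\; \frac{\hat{v}(0)}{2N}\mathcal{N}(\mathcal{N}-1)\;-\;R,
\]
with a remainder $R$ whose expectation in $G_{\beta,N}$ will be controlled through an a priori bound on $\mathcal{N}^2$. Setting $\mathcal{H}^{\mathrm{eff}} := \mathrm{d}\Upsilon(-\Delta) + \frac{\hat{v}(0)}{2N}\mathcal{N}(\mathcal{N}-1)$ and denoting by $G^{\mathrm{eff}}_{\beta,N}$ its Gibbs state at expected particle number $N$, the Gibbs variational principle then yields
\[
F(\beta,N) \;\geq\; F^{\mathrm{eff}}(\beta,N) \;+\; \frac{1}{\beta}\,S\bigl(G_{\beta,N}\,\big|\,G^{\mathrm{eff}}_{\beta,N}\bigr) \;-\; \Tr[R\,G_{\beta,N}].
\]

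Since $\mathcal{H}^{\mathrm{eff}}$ couples different momentum modes only through the total particle number $\mathcal{N}$, its Gibbs state reduces, conditional on $\mathcal{N}$, to a product of ideal-gas Gibbs states. A direct computation in the same spirit as Section~\ref{sec:upperBoundFreeEnergy}, together with Lemma~\ref{lem:BoundN0}, then gives
\[
F^{\mathrm{eff}}(\beta,N) = F^{\mathrm{Bog}}(\beta,N) + \tfrac{\hat{v}(0)N}{2} + F^{\mathrm{BEC}}(\beta,N_0(\beta,N)) + O(N^{2/3}),
\]
and the $p=0$ eigenvalue of $\gamma_{G^{\mathrm{eff}}_{\beta,N}}$ equals $N_0(\beta,N) + O(N^{2/3})$. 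Inserted into the previous inequality and combined with Proposition~\ref{prop:upperbound}, this forces
$S\bigl(G_{\beta,N}\,\big|\,G^{\mathrm{eff}}_{\beta,N}\bigr) \lesssim \beta \cdot N^{2/3}\ln N \lesssim \ln N$.

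Finally, I would invoke the coercivity estimate in Lemma~\ref{lemcoercivityrelentr}, taken from \cite[Theorem~6.1]{LewNamRou-21}, to convert this relative-entropy bound into trace-norm control on $\gamma_{G_{\beta,N}} - \gamma_{G^{\mathrm{eff}}_{\beta,N}}$. Projecting onto the zero-momentum mode and recalling the explicit value of the corresponding eigenvalue of $\gamma_{G^{\mathrm{eff}}_{\beta,N}}$ then yields \eqref{eq:aPrioriCondensateFraction1}. The main obstacle I expect is ensuring that Lemma~\ref{lemcoercivityrelentr} is strong enough to translate an $O(\ln N)$ relative entropy into an $N^{2/3}\ln N$ control on the macroscopic $p=0$ eigenvalue specifically, rather than only a global $\sqrt{N}$-type trace-norm bound that would be too weak; this will likely hinge on the near-ideal-gas structure of $G^{\mathrm{eff}}_{\beta,N}$ on the thermally excited modes. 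The logarithmic factor in \eqref{eq:aPrioriCondensateFraction1} ultimately traces back to the $N^{2/3}\ln N$ contribution to $F^{\mathrm{BEC}}$ present in the condensed phase.
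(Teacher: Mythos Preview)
Your overall strategy is the right one, and you correctly identify the essential difficulty at the end. But there is a genuine gap precisely where you flag it, and the paper's fix is different from what you sketch.

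First, Lemma~\ref{lemcoercivityrelentr} requires the reference state to be quasi-free, of the form $Z^{-1}\exp(-\mathrm{d}\Upsilon(h))$ for a one-particle operator $h$. Your $G^{\mathrm{eff}}_{\beta,N}$ contains the $\tfrac{\hat v(0)}{2N}\mathcal N^2$ term and is not of this form, so the lemma does not apply as stated. Moreover, even for a quasi-free reference, applying the lemma on the full Fock space would give $\|h^{-1}\|_\infty$ of order $N_0$ (the $p=0$ gap is $\sim(\beta N_0)^{-1}$), so the second term $\|h^{-1}\|_\infty\,S(\Gamma,\Gamma')$ would be of order $N_0\ln N$, which is useless in the condensed phase.

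The paper's resolution is to \emph{not} keep the $\mathcal N^2$ term in the reference: after the Onsager bound one simply uses $\Tr[\mathcal N^2 G_{\beta,N}]\ge N^2$ and compares directly to the ideal-gas Gibbs state $G^{\mathrm{id}}_{\beta,N}$, which is quasi-free. Combined with $F_0(\beta,N)\ge F_0^+(\beta,N)-CN^{2/3}\ln N$ and an easy upper bound, this gives $S(G_{\beta,N},G^{\mathrm{id}}_{\beta,N})\lesssim\ln N$. The second key point is that one does \emph{not} try to control the $p=0$ eigenvalue directly. Instead one restricts both states to $\mathscr F_+$ via Lemma~\ref{restrictionrelentr}, applies Lemma~\ref{lemcoercivityrelentr} on $\mathscr F_+$ with $h=-\beta\Delta Q$ (now $\|h^{-1}\|_\infty\lesssim\beta^{-1}\sim N^{2/3}$ and $\trs[h^{-2}]<\infty$), and obtains $\|Q(\gamma_{\beta,N}-\gamma^{\mathrm{id}}_{\beta,N})\|_1\lesssim N^{2/3}\ln N$. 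The desired bound on the $p=0$ occupation then follows \emph{indirectly} from the particle-number constraint $\Tr[a_0^*a_0 G_{\beta,N}]+\trs[Q\gamma_{\beta,N}]=N=N_0(\beta,N)+\trs[Q\gamma^{\mathrm{id}}_{\beta,N}]$.

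So the missing ingredients are: (i) use the ideal gas as reference so that Lemma~\ref{lemcoercivityrelentr} is applicable; (ii) restrict to the excited modes before invoking coercivity; (iii) recover the condensate estimate from particle-number conservation rather than by direct projection.
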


Before we give the proof of the above proposition, we discuss three lemmas to not interrupt the main line of the argument later. 

\subsection{Preparations}

The first lemma is well-known and will be used to obtain a lower bound for the interaction term in the Hamiltonian. 

\begin{lemma}\label{lem:OnsagersLemma}
	Let $v \in L^1(\Lambda)$ be a periodic function with summable Fourier coefficients $\hat{v} \geq 0$ and denote the second term in \eqref{eq:Hamiltonian} by $\mathcal{V}_{\eta}$. Then we have 
	\begin{equation}
		\mathcal{V}_{\eta} \geq \frac{\hat{v}(0) \mathcal{N}^2}{2 N} - \frac{v(0) \mathcal{N}}{2 N}.
		\label{eq:OnsagersInequality}
	\end{equation}
\end{lemma}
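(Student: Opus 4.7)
The plan is to rewrite $\mathcal V_\eta$ as a manifestly nonnegative density–density quadratic form plus a first‑order correction produced by normal ordering, after which the bound follows by isolating the $k=0$ Fourier contribution and exploiting positivity of $\hat v$.

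First, I would introduce the second‑quantized density $\hat n(x) = \hat\psi^*(x)\hat\psi(x)$ with $\hat\psi(x) = \sum_{p\in\Lambda^*} \varphi_p(x) a_p$, whose Fourier modes
\begin{equation}
\hat n_k = \int_\Lambda e^{-\mathrm i k\cdot x} \hat n(x) \de x = \sum_{q\in\Lambda^*} a_q^* a_{q+k}
\end{equation}
satisfy $\hat n_0 = \mathcal N$ and $\hat n_{-k} = \hat n_k^*$. Normal‑ordering the product $\hat n_{-k}\hat n_k = \sum_{q,q'} a_q^* a_{q-k} a_{q'}^* a_{q'+k}$ via the CCR \eqref{eq:CCR} produces a quartic piece plus a quadratic piece from the commutator $[a_{q-k},a_{q'}^*] = \delta_{q-k,q'}$. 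Summing against $\hat v(k)$ and using the Fourier inversion identity $\sum_{k\in\Lambda^*}\hat v(k) = v(0)$ for the quadratic part yields
\begin{equation}
\sum_{k\in\Lambda^*} \hat v(k)\, \hat n_{-k}\hat n_k \;=\; \sum_{k,u,v\in\Lambda^*} \hat v(k)\, a_{u+k}^* a_{v-k}^* a_u a_v \;+\; v(0)\, \mathcal N \;=\; 2N\, \mathcal V_\eta \;+\; v(0)\, \mathcal N,
\end{equation}
the last equality matching the definition \eqref{eq:Hamiltonian} of $\mathcal V_\eta$ after the change of variables $q=u+k$, $q'=v-k$.

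Second, I would exploit positivity. Since $\hat n_{-k}\hat n_k = \hat n_k^* \hat n_k \geq 0$ as an operator and $\hat v(k)\geq 0$ by assumption, each summand on the left is nonnegative, so dropping all $k\neq 0$ contributions gives
\begin{equation}
\sum_{k\in\Lambda^*} \hat v(k)\, \hat n_{-k}\hat n_k \;\geq\; \hat v(0)\, \hat n_0^2 \;=\; \hat v(0)\, \mathcal N^2.
\end{equation}
Rearranging the identity of the previous paragraph and dividing by $2N$ yields $\mathcal V_\eta \geq \hat v(0)\mathcal N^2/(2N) - v(0)\mathcal N/(2N)$, as claimed. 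I do not foresee a serious obstacle here: the only step requiring attention is the combinatorial bookkeeping in the normal‑ordering identity, which I would carry out carefully to check the factor‑of‑two conventions, since $\mathcal V_\eta$ is defined through an unrestricted sum over $(p,u,v)$ and the commutator contribution must come out to exactly $v(0)\mathcal N$ with no spurious prefactor.
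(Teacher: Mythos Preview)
Your proof is correct and is essentially the same argument as the paper's, just carried out directly in second quantization: the paper cites the first-quantized inequality $\sum_{i<j} v(x_i-x_j) \ge \tfrac{\hat v(0)n^2}{2} - \tfrac{v(0)n}{2}$ from \cite{Seiringer-11} and lifts it sector by sector, while you reproduce that inequality via the density operators $\hat n_k$ and normal ordering. The underlying mechanism---positivity of $\hat v(k)\,|\hat n_k|^2$ and isolation of the $k=0$ mode---is identical in both.
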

\begin{proof}
	Since $\hat{v} \geq 0$ we know that
	\begin{equation}
		\sum_{1 \leq i < j \leq n} v(x_i - x_j) \geq \frac{\hat{v}(0) n^2}{2} - \frac{ v(0) n }{2}
		\label{eq:OnsagersInequality1stQ}
	\end{equation}
	holds for any $n \geq 2$, see e.g. \cite[Eq.~(8)]{Seiringer-11}. Eq.~\eqref{eq:OnsagersInequality} is a direct consequence of \eqref{eq:OnsagersInequality1stQ}.
\end{proof}

Let $\Gamma$ and $\Gamma'$ be two states on the bosonic Fock space $\mathscr{F}$. The relative entropy of $\Gamma$ with respect to $\Gamma'$ is defined by
\begin{equation}
	S(\Gamma,\Gamma') = \tr\left[ \Gamma \left( \ln(\Gamma) - \ln(\Gamma') \right) \right] \geq 0.
	\label{eq:relativeentropy}
\end{equation}
We recall the unitary equivalence $\mathscr{F} \cong \mathscr{F}_0 \otimes \mathscr{F}_+$, which has been discussed in \eqref{eq:exponentialProperty}. For a state $\Gamma$ on $\mathscr{F}_0 \otimes \mathscr{F}_+$ we denote by $\Gamma_0$ and $\Gamma_+$ the restriction (via a partial trace) to the first and the second tensor factor, respectively. The relative entropy satisfies the following lower bound with respect to restriction of states.
\begin{lemma}
	\label{restrictionrelentr}
	Let $\Gamma'$ be a translation-invariant quasi-free state on $\mathscr{F}_0 \otimes \mathscr{F}_+$. For any state $\Gamma$ on $\mathscr{F}_0 \otimes \mathscr{F}_+$, we have
	\begin{equation}
		S(\Gamma,\Gamma') \geq S(\Gamma_0,\Gamma'_0) + S(\Gamma_+,\Gamma'_+).
	\end{equation}
\end{lemma}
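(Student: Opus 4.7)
The strategy has two steps. First, I would show that any translation-invariant quasi-free state on $\mathscr{F}_0 \otimes \mathscr{F}_+$ factorizes as a product $\Gamma' = \Gamma'_0 \otimes \Gamma'_+$. Second, this factorization, combined with the standard subadditivity of the von Neumann entropy, yields the claimed bound directly.

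For the factorization step, recall that a quasi-free state on Fock space is completely determined by its one-point function $\psi_p := \Tr[a_p \Gamma']$, its one-particle density matrix $\gamma'_{p,q} := \Tr[a_q^* a_p \Gamma']$, and its pairing $\alpha'_{p,q} := \Tr[a_p a_q \Gamma']$. Translation invariance means that $\Gamma'$ commutes with the unitary implementation $T_y$ of spatial translations by $y \in \Lambda$, under which $T_y^* a_p T_y = e^{\mathrm{i} p \cdot y} a_p$. Averaging the identities $\Tr[a_p \Gamma'] = \Tr[T_y^* a_p T_y \Gamma']$, $\Tr[a_q^* a_p \Gamma'] = \Tr[T_y^*a_q^* a_p T_y \Gamma']$, and $\Tr[a_p a_q \Gamma'] = \Tr[T_y^*a_p a_q T_y \Gamma']$ over $y \in \Lambda$ forces $\psi_p = 0$ for $p \in \Lambda_+^*$, $\gamma'_{p,q} = \delta_{p,q}\,\gamma'_p$, and $\alpha'_{p,q} = \delta_{p,-q}\,\alpha'_p$. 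In particular, every nonvanishing one- or two-point function involves either only the mode $a_0^{\#}$ or only the modes $a_p^{\#}$ with $p \in \Lambda_+^*$. Applying a Weyl displacement on $\mathscr{F}_0$ (to absorb $\psi_0$) together with a momentum-preserving Bogoliubov transformation on $\mathscr{F}_+$ (to diagonalize the covariances on the excitation sector)---both of which act independently on the two tensor factors thanks to the momentum conservation just established---brings $\Gamma'$ to a gauge-invariant quasi-free state whose covariance operator is block-diagonal with respect to the decomposition $\mathscr{F}_0 \otimes \mathscr{F}_+$. Such a state is manifestly of product form, and undoing the factor-wise unitaries gives $\Gamma' = \Gamma'_0 \otimes \Gamma'_+$.

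With the factorization in hand, the remainder is a routine calculation. From $\ln \Gamma' = \ln \Gamma'_0 \otimes I_+ + I_0 \otimes \ln \Gamma'_+$ one obtains
\begin{align*}
\Tr[\Gamma \ln \Gamma'] = \Tr_0[\Gamma_0 \ln \Gamma'_0] + \Tr_+[\Gamma_+ \ln \Gamma'_+],
\end{align*}
while subadditivity of the von Neumann entropy gives $-\Tr[\Gamma \ln \Gamma] \leq -\Tr_0[\Gamma_0 \ln \Gamma_0] - \Tr_+[\Gamma_+ \ln \Gamma_+]$. Subtracting these two inequalities yields exactly $S(\Gamma,\Gamma') \ge S(\Gamma_0,\Gamma'_0) + S(\Gamma_+,\Gamma'_+)$. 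The main conceptual obstacle is the factorization step: it is geometrically clean, but it requires checking in the infinite-dimensional Fock space setting that the Weyl and Bogoliubov implementers are well-defined and genuinely respect the tensor-product structure, and that the various logarithms and traces make sense (the inequality is trivial if $S(\Gamma,\Gamma')=+\infty$). For the specific quasi-free reference states used in later sections (e.g.\ $|z\rangle\langle z| \otimes G^{\mathrm{Bog}}(z)$, which is already a product by construction) this is automatic, and the general case reduces to the standard theory of bosonic quadratic Hamiltonians as developed in \cite{NamNapSol-16}.
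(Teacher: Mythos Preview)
Your argument is correct. The paper itself gives no self-contained proof here; it simply records that the lemma is a direct consequence of \cite[Lemma~4]{Seiringer-08}. Your two-step approach---using translation invariance (momentum conservation) to force the quasi-free reference state to factorize as $\Gamma' = \Gamma'_0 \otimes \Gamma'_+$, and then combining subadditivity of the von Neumann entropy with the additivity of $\Tr[\Gamma \ln \Gamma']$ for product $\Gamma'$---is precisely the mechanism underlying that cited result, spelled out for this particular splitting. One minor streamlining: once you have established that all mixed one- and two-point correlators vanish, Wick's theorem alone shows that every expectation $\Tr[A_0 B_+ \Gamma']$ decouples into $\Tr_0[A_0 \Gamma'_0]\,\Tr_+[B_+ \Gamma'_+]$, so the factorization follows without the Weyl/Bogoliubov detour.
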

\begin{proof}
	The proof is a direct consequence of \cite[Lemma~4]{Seiringer-08}. 
\end{proof}

The next lemma quantifies the coercivity of the relative entropy with respect to the 1-pdms of the states under consideration. The statement and its proof can be found in \cite{LewNamRou-21}, see Eq.~(6.3).

\begin{lemma}
	\label{lemcoercivityrelentr}
	Let $h>0$ be a positive operator on a (separable complex) Hilbert space $\mathfrak{h}$ that satisfies $\mathrm{tr}[h^{-2}] < + \infty$. Consider the quasi free state
	\begin{equation}
		\Gamma' = \frac{\exp(-\de \Gamma(h))}{\tr[ \exp(-\de \Gamma(h)) ]}
	\end{equation}
	with 1-pdm $\gamma'$ on the bosonic Fock space $\mathscr{F}(\mathfrak{h})$. Then for any state $\Gamma$ on $\mathscr{F}(\mathfrak{h})$ with 1-pdm $\gamma$, we have
	\begin{equation}
		\Vert \gamma - \gamma' \Vert_1 \leq 2 \sqrt{2} \sqrt{\mathrm{tr}[h^{-2}]} \sqrt{ S(\Gamma,\Gamma')} + 2 \Vert h^{-1} \Vert_{\infty} \ S(\Gamma,\Gamma'),
	\end{equation} 
    where $\Vert \cdot \Vert_{\infty}$ denotes the operator norm.
\end{lemma}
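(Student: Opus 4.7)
The plan is to establish the coercivity bound by duality combined with a Donsker–Varadhan/Gibbs-variational argument, exploiting the exactly-solvable structure of $\Gamma'$. The two-term shape of the inequality — a $\sqrt{\tr[h^{-2}]\cdot S}$ piece and a $\|h^{-1}\|_\infty\cdot S$ piece — is precisely what one obtains by optimizing a Bernstein-type moment-generating-function estimate for the number operator in the spectrum of $h$.

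First, I would use the trace-norm duality $\|\gamma-\gamma'\|_1 = \sup\{\tr_{\mathfrak{h}}[K(\gamma-\gamma')] : K=K^*,\ \|K\|_\infty \leq 1\}$ together with the identity $\tr_{\mathfrak{h}}[K(\gamma-\gamma')] = \tr_{\mathscr{F}}[\de\Upsilon(K)(\Gamma-\Gamma')]$ to reduce the claim to an inequality bounding $\tr_{\mathfrak{h}}[K(\gamma-\gamma')]$ for each such $K$. For real $s$ with $|s|\|h^{-1}\|_\infty<1$ the perturbed quasi-free state $\tilde\Gamma_s = Z_{h-sK}^{-1}\exp(-\de\Upsilon(h-sK))$ is well defined, and the non-negativity $S(\Gamma,\tilde\Gamma_s)\geq 0$ rearranges — after using $\ln\Gamma' = -\de\Upsilon(h)-\ln Z_h$ and subtracting $s\,\tr_{\mathfrak{h}}[K\gamma']$ from both sides — to
\[
s\,\tr_{\mathfrak{h}}[K(\gamma-\gamma')] \;\leq\; S(\Gamma,\Gamma') + \Phi_K(s), \qquad \Phi_K(s) \;:=\; \ln(Z_{h-sK}/Z_h) - s\,\tr_{\mathfrak{h}}[K\gamma'].
\]
The analogous inequality with $s\mapsto -s$ bounds the opposite sign.

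The heart of the matter is to estimate $\Phi_K(s)$. Writing $\ln Z_X = \tr_{\mathfrak{h}}\psi(X)$ with $\psi(x):=-\ln(1-e^{-x})$, one checks $\psi'(x)=-(e^x-1)^{-1}$ (so $\tr[\psi'(h)K] = -\tr[K\gamma']$, recovering the linear term) and the pointwise bound $\psi''(x) = e^x/(e^x-1)^2 \leq x^{-2}$ for $x>0$, a short consequence of $e^x-1 \geq xe^{x/2}$. Combining the second-order Daleckii–Krein integral representation of $\tr\psi$ along $-K$ with the resolvent estimate $\tr[(h-tK)^{-2}] \leq (1-|t|\|h^{-1}\|_\infty)^{-2}\tr[h^{-2}]$, obtained by writing $h-tK = h^{1/2}(I-tA)h^{1/2}$ with $A = h^{-1/2}Kh^{-1/2}$ and $\|A\|_\infty \leq \|h^{-1}\|_\infty$, then yields the Bernstein-type bound
\[
0 \;\leq\; \Phi_K(s) \;\leq\; \frac{C\, s^2\, \tr[h^{-2}]}{(1-|s|\|h^{-1}\|_\infty)^2}
\]
for an absolute constant $C$, which is the inequality driving the whole argument.

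Dividing the variational inequality by $s>0$ and optimizing: in the regime $|s|\|h^{-1}\|_\infty \leq 1/2$ the denominator contributes only a harmless factor, and minimizing $S(\Gamma,\Gamma')/s + Cs\,\tr[h^{-2}]$ over $s$ yields the small-entropy branch $\tr_{\mathfrak{h}}[K(\gamma-\gamma')] \lesssim \sqrt{\tr[h^{-2}]\cdot S(\Gamma,\Gamma')}$, valid whenever $S(\Gamma,\Gamma') \lesssim \tr[h^{-2}]/\|h^{-1}\|_\infty^2$; in the complementary regime, the boundary choice $s = (2\|h^{-1}\|_\infty)^{-1}$ produces the linear branch $\lesssim \|h^{-1}\|_\infty\, S(\Gamma,\Gamma')$. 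Combining the two regimes, tracking the constants so as to recover the stated prefactors $2\sqrt{2}$ and $2$, and finally taking the supremum over $K$ completes the proof. The main obstacle lies in the non-commutative case: the Hessian of $\tr\psi$ at $h-tK$ along $-K$ is a double sum $\sum_{\lambda,\mu}|K_{\lambda\mu}|^2\,\psi''(\xi_{\lambda\mu})$ over the joint spectrum, with $\xi_{\lambda\mu}$ provided by the mean-value theorem, and each divided difference must be majorized uniformly by the resolvent bound above; I expect monotonicity of $\psi''$ together with the operator inequality for $(h-tK)^{-2}$ to close the estimate with only a harmless loss of constants, which can be recovered by symmetrizing the DV inequalities for $+s$ and $-s$.
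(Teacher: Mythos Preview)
The paper does not supply its own proof of this lemma; it simply quotes the result from \cite{LewNamRou-21} (see Eq.~(6.3) there). Your Gibbs-variational strategy --- duality in $K$, the inequality $s\,\tr_{\mathfrak{h}}[K(\gamma-\gamma')]\le S(\Gamma,\Gamma')+\Phi_K(s)$ obtained from $S(\Gamma,\tilde\Gamma_s)\ge 0$, a quadratic upper bound on $\Phi_K$, and optimization in $s$ --- is correct and is essentially the argument in that reference.

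One refinement that removes your worry about the non-commutative Hessian and the constants: rather than majorizing $\psi''(\xi_{\lambda\mu})$ by $\min(\lambda,\mu)^{-2}$, observe that $x\mapsto (e^{x}-1)^{-1}-x^{-1}$ is nondecreasing on $(0,\infty)$ (this is equivalent to your pointwise bound $\psi''(x)\le x^{-2}$), which yields the sharper divided-difference inequality $\psi'^{[1]}(a,b)\le (ab)^{-1}$. Summing over the spectrum then gives
\[
\Phi_K''(s)\;\le\;\|H_s^{-1/2}KH_s^{-1/2}\|_{\mathrm{HS}}^{2}\;\le\;\|K\|_\infty^{2}\,\tr\big[(h-sK)^{-2}\big]
\]
with no factor of $2$. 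Restricting to $0<s\le (2\|h^{-1}\|_\infty)^{-1}$ and optimizing, the two regimes combine to give exactly $2\sqrt{2}\sqrt{\tr[h^{-2}]\,S(\Gamma,\Gamma')}+2\|h^{-1}\|_\infty\,S(\Gamma,\Gamma')$; no symmetrization trick is needed.
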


We are now prepared to give the proof of Proposition~\ref{prop:roughapriori}.
\subsection{Proof of Proposition~\ref{prop:roughapriori}}
In the first step we provide a bound for the free energy in \eqref{eq:freeenergy}, which reads
\begin{equation}
    F(\beta,N) \leq F^+_0(\beta,N) + \frac{\hat{v}(0) N}{2} + C N^{1/3}
    \label{eq:easyUpperBound}
\end{equation}
with $F_0^+$ in \eqref{eq:freeEnergyIdealGasCloud}. This statement follows from a trial state argument with the state 
\begin{align}
    \Gamma^{\mathrm{trial}} &= | \sqrt{N_0(\beta,N)} \rangle \langle \sqrt{N_0(\beta,N)} | \otimes G_{+}^{\mathrm{id}}(\beta,N) \quad \text{ with } \nonumber \\ 
    G_{+}^{\mathrm{id}}(\beta,N)  &=   \frac{\exp\left( -\beta (\de \Upsilon(-Q \Delta) - \mu_0(\beta,N) \mathcal{N}_+) \right)}{\tr_{\mathscr{F}_+} \exp\left( -\beta (\de \Upsilon(-Q \Delta) - \mu_0(\beta,N) \mathcal{N}_+) \right)},
    \label{eq:aPrioriCondensateFraction2}
\end{align}
the coherent state in $|z \rangle$ in \eqref{eq:coherentstate}, $Q = \mathds{1}(-\Delta \neq 0)$, $\mu_0$ in \eqref{eq:idealgase1pdmchempot}, $N_0$ in \eqref{eq:crittemp}, and $\cN_+=\sum_{q\in \Lambda_+^*} a_q^* a_q$. The proof of \eqref{eq:easyUpperBound} is a strongly simplified version of that of Proposition~\ref{prop:upperbound}, and therefore left to the reader. Next, we prove a corresponding lower bound.

An application of Lemma~\ref{lem:OnsagersLemma} shows
\begin{equation}
    \mathcal{F}(G_{\beta,N}) \geq F_0(\beta,N) + \frac{1}{\beta} S(G_{\beta,N},G^{\mathrm{id}}_{\beta,N}) + \frac{\hat{v}(0) \Tr[ \mathcal{N}^2 G_{\beta,N} ]}{2 N} - \frac{v(0)}{2}
    \label{eq:aPrioriCondensateFraction3}
\end{equation}
with $G^{\mathrm{id}}_{\beta,N}$ in \eqref{eq:GibbsStateIdealGas} and the relative entropy $S(G_{\beta,N},G^{\mathrm{id}}_{\beta,N})$ in \eqref{eq:relativeentropy}. Here we also used the identity 
\begin{equation}
    S(G_{\beta,N},G^{\mathrm{id}}_{\beta,N}) = \beta \Tr[\de \Upsilon(-\Delta) ( G_{\beta,N} - G^{\mathrm{id}}_{\beta,N} )] -  S(G_{\beta,N}) + S(G^{\mathrm{id}}_{\beta,N}).
    \label{eq:aPrioriCondensateFraction4}
\end{equation}
When we additionally use $\Tr[\mathcal{N}^2 G_{\beta,N}] \geq (\Tr[\mathcal{N} G_{\beta,N}] )^2$ and $F_0(\beta,N) \geq F_0^+(\beta,N) - C N^{2/3} \ln(N)$, we find
\begin{equation}
    F(\beta,N) \geq F^+_0(\beta,N) + \frac{\hat{v}(0) N}{2} + \frac{1}{\beta}S(G_{\beta,N},G^{\mathrm{id}}_{\beta,N})  - C N^{2/3} \ln(N).
    \label{eq:aPrioriCondensateFraction5}
\end{equation}

In combination, \eqref{eq:easyUpperBound} and \eqref{eq:aPrioriCondensateFraction5} imply the bound
\begin{equation}
    S(G_{\beta,N},G^{\mathrm{id}}_{\beta,N}) \lesssim \ln(N)
    \label{eq:aPrioriCondensateFraction6}
\end{equation}
for the relative entropy.

Our proof of the bound in \eqref{eq:aPrioriCondensateFraction1} uses that the 1-pdm a translation-invariant state is diagonal in Fourier space. We hightlight that both, $G_{\beta,N}$ and $G_{\beta,N}^{\mathrm{id}}$ are translation-invariant. In the following we denote the 1-pdms of $G_{\beta,N}$ and $G_{\beta,N}^{\mathrm{id}}$ by $\gamma_{\beta,N}$ and $\gamma_{\beta,N}^{\mathrm{id}}$, respectively. We first derive a bound for $\Vert Q(\gamma_{\beta,N} - \gamma_{\beta,N}^{\mathrm{id}}) \Vert_1$. From Lemma~\ref{restrictionrelentr} we know that $S(G_{\beta,N},G_{\beta,N}^{\mathrm{id}}) \geq S(G_{\beta,N,+},G_{\beta,N,+}^{\mathrm{id}})$. Moreover, an application of Lemma~\ref{lemcoercivityrelentr} applied to $S(G_{\beta,N,+},G_{\beta,N,+}^{\mathrm{id}})$ with the choice $h = - \beta \Delta Q$ proves 
	\begin{equation}
		\Vert Q ( \gamma_{\beta,N} - \gamma_{\beta,N}^{\mathrm{id}} ) \Vert_1 \lesssim N^{2/3} \ln(N).
		\label{eq:roughapriori5}
	\end{equation}
We note that the bound in \eqref{eq:roughapriori5} would be worse if we had not considered $Q \gamma_{\beta,N}$ and $Q \gamma_{\beta,N}^{\mathrm{id}} $ instead of $\gamma_{\beta,N}$ and $\gamma_{\beta,N}^{\mathrm{id}} $. This is because the largest eigenvalue of $\gamma_{\beta,N}^{\mathrm{id}}$ equals $N_0$ and is potentially of order $N$ while that of $Q \gamma_{\beta,N}^{\mathrm{id}}$ is bounded by a constant times $\beta^{-1} \lesssim N^{2/3}$. Using \eqref{eq:roughapriori5} and 
\begin{equation}
    \Tr[ a_0^* a_0 G_{\beta,N} ] + \trs[ Q \gamma_{\beta,N} ] = N = N_0(\beta,N) + \trs[Q \gamma_{\beta,N}^{\mathrm{id}}],
\end{equation}
where $\trs$ denotes the trace over $L^2(\Lambda)$, we infer that 
\begin{equation}
    | \Tr[ a_0^* a_0 G_{\beta,N} ] - N_0(\beta,N) | \lesssim N^{2/3} \ln(N),
\end{equation}
which proves the claim of Proposition~\ref{prop:roughapriori}. 
\section{The Gibbs state part I: first order a priori estimates}
\label{sec:FirstOrderEstimates}
In this section, we prove a priori estimates for a generalized version of the Gibbs state in \eqref{eq:interactingGibbsstate} using a first-order Griffith argument (i.e. a Hellmann–Feynman type argument). These estimates serve as ingredients for proving the correlation inequalities in Section~\ref{sec:correlationInequalities} and the sharp lower bound for the free energy in Section~\ref{sec:lowerBound}. To prove these bounds we find it more convenient to work with the grand potential than with the free energy. We also replace the Laplacian by a more general one-particle operator, which allows us to add perturbations. We begin our discussion with the mathematical set-up used in this section.

Let $h$ be a self-adjoint operator on $L^2(\Lambda)$ with the following three properties:
\begin{equation}
\label{eq:generalizedOneParticleHamiltonian}
h \gesim -\Delta,  \quad \|h+\Delta\|_{\infty} \lesim 1, \quad \text{ and } \quad h \varphi_0=0,
\end{equation}
where $\Vert \cdot \Vert_{\infty}$ denotes the operator norm and $\varphi_0(x) = 1$ for all $x \in \Lambda$. The second quantized Hamiltonian of our system reads
\begin{equation}
    \mathcal{H}_{h,\eta} = \de \Upsilon(h) + \frac{1}{2 \eta} \sum_{p,u,v \in \Lambda^*} \hat{v}(p) a_{u+p}^* a_{v-p}^* a_u a_v
    \label{eq:perturbedFockSpaceHamiltonian}
\end{equation}
Here we replaced the factor $1/N$ in front of the interaction potential by $1/\eta$ with some $\eta > 0$. The reason for this is twofold: we want to reserve the letter $N$ for the expected number of particles in the system and it is mathematically convenient for us to allow the coupling constant to differ from $1/N$. Later we will choose $\eta \sim N$. We also introduce the Gibbs state
\begin{equation}
    G_{h,\eta}(\beta,\mu) = \frac{\exp(-\beta( \mathcal{H}_{h,\eta} - \mu \mathcal{N} ))}{\Tr[\exp(-\beta( \mathcal{H}_{h,\eta} - \mu \mathcal{N} ))]}
    \label{eq:GeneralGibbsState}
\end{equation}
with $\mu \in \mathbb{R}$. By $G_{\eta}(\beta,\mu)$ we denote the Gibbs state in the special case $h = -\Delta$. 

We highlight that, in contrast to $-\Delta$, the operator $h$ is, in general, not assumed to commute with translations in position space. This is important as it allows us to compute the expectation of the non translation-invariant operator $B_p$ defined in part~(a) of Theorem~\ref{thm:firstOrderAPriori} below in the perturbed Gibbs state $G_{h,\eta}(\beta,\mu)$. The expectation of this term allows us to obtain a bound on the exchange term related to the Gibbs state $G_{\beta,N}$, which is a crucial ingredient for the proof of the lower bound for the free energy, see \eqref{eq:AA-G} and \eqref{eq:lowerBoundFE5} in Section~\ref{sec:lowerBound}.

The following theorem is the main result of this section.

\begin{theorem}[First order a priori estimates]
\label{thm:firstOrderAPriori}
Let $v$ satisfy the assumptions of Theorem~\ref{thm:norm-approximation}. We consider the limit $\eta \to \infty$, $\beta\eta^{2/3} \to \kappa \in (0,\infty)$. The chemical potential $\mu$, which may depend on $\eta$, is assumed to satisfy $-\eta^{2/3} \lesssim \mu \lesssim 1$. Let $\widetilde \mu < 0$ be the unique solution to the equation
    \begin{equation}
        \sum_{p \in \Lambda^*} \frac{1}{e^{\beta(p^2 - \widetilde{\mu})}-1} = \frac{(\mu - \widetilde{\mu})\eta}{\hat{v}(0)}
        \label{eq:GrantCanonicalEffectiveIddealGasChemPot}
    \end{equation}
and define
    \begin{equation}
        M(\beta,\widetilde{\mu}) = \sum_{p \in \Lambda^*} \frac{1}{e^{\beta(p^2 - \widetilde{\mu})}-1}.
        \label{eq:particleNumber}
    \end{equation}
Moreover, let $\mu_0(\beta,M)$ and $N_0(\beta,M)$ be defined as in \eqref{eq:idealgase1pdmchempot} and \eqref{eq:crittemp}, respectively (note that $\mu_0(\beta,M) = \widetilde{\mu}$). Then the following holds: 
\begin{enumerate}[label=(\alph*)]
    \item For $p \in \Lambda_+^*$ we have 
    \begin{equation} \label{eq:unperturbedFirstOrderBounds}
        \tr[ a_p^* a_p G_{h,\eta}(\beta,\mu) ] \lesssim \frac{1}{\beta p^2} \quad \text{ and } \quad |\tr[ B_p G_{h,\eta}(\beta,\mu) | \lesssim \eta^{2/3}, 
    \end{equation}
    where $B_p = \de \Upsilon( Q \cos(p \cdot x) Q )$ with $Q$ in \eqref{eq:excitationFockSpace}. Moreover, $\cos(p \cdot x)$ denotes the multiplication operator with this function in position space.  
    \item We have
    \begin{align}
        \left| \Tr [\cN_+ G_{h,\eta}(\beta,\mu)] - \Tr [\cN_+ G_{\eta}(\beta,\mu) ] \right| &\lesssim \eta^{2/3} \quad \text{ and } \nonumber \\
        \left| \Tr [ \cN_+ G_{\eta}(\beta,\mu) ] - (M-N_0(\beta,M)) \right| &\lesssim \eta^{2/3}. \label{eq:perturbedFirstOrderBounds-b}
    \end{align}
    \item We have
    \begin{equation}
    \Tr [\cN^2 G_{h,\eta}(\beta,\mu) ]  \lesim  \eta^2.
    \label{eq:particleNumberBoundsPerturbedState}
    \end{equation}
    \item Let $w$ be a translation-invariant operator on $L^2(\Lambda)$, which satisfies
    \begin{equation}
        - c \Delta - (|\mu_0(\beta,M)|/4) \ |\varphi_p \rangle \langle \varphi_p | \leq w \leq - \frac{1}{c} \Delta + (|\mu_0(\beta,M)|/4) \ |\varphi_p \rangle \langle \varphi_p |
        \label{eq:definitionOfW}
    \end{equation}
    with some $c>0$. If $N_0(\beta,M) \lesssim N^{2/3}$ then
    \begin{equation} \label{eq:firstorderapapbound}
        \tr[ a_p^* a_p G_{w,\eta}(\beta,\mu) ] \lesssim \frac{1}{\beta (p^2-\mu_0(\beta,M))}
    \end{equation}
    holds for all $p \in \Lambda^*$.
\end{enumerate}
\end{theorem}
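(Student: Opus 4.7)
The strategy is a Griffith (Hellmann-Feynman) argument applied to the grand potential
$$\Phi_{h,\eta}(\beta,\mu) = -\beta^{-1}\log\Tr\exp(-\beta(\mathcal{H}_{h,\eta}-\mu\mathcal{N})).$$
Each of the quantities in the theorem can be expressed as a derivative of $\Phi_{h',\eta}(\beta,\mu')$ at a suitable reference point with respect to a one-body perturbation of $h$ or a shift of $\mu$. By concavity of $\Phi$ in these parameters, such a derivative is sandwiched between difference quotients, so the entire proof reduces to a sharp asymptotic expansion of $\Phi_{h,\eta}(\beta,\mu)$ with accuracy $O(\eta^{2/3})$, uniform over admissible perturbations of $h$ within the class \eqref{eq:generalizedOneParticleHamiltonian}.

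The expansion itself is produced by matching upper and lower bounds. The upper bound is obtained by a grand-canonical variant of the trial state from Section~\ref{sec:trialstate}, with $M$ from \eqref{eq:particleNumber} playing the role of $N$. The role of the fixed-point equation \eqref{eq:GrantCanonicalEffectiveIddealGasChemPot} is to encode the Hartree shift $\mu-\widetilde{\mu}=\hat{v}(0)M/\eta$ between the full chemical potential $\mu$ and the effective one $\widetilde{\mu}=\mu_0(\beta,M)$ that governs the excitations. The matching lower bound is more delicate. One decomposes the interaction as $\mathcal{V}_\eta=\tfrac{\hat{v}(0)}{2\eta}\mathcal{N}^2+\bigl(\mathcal{V}_\eta-\tfrac{\hat{v}(0)}{2\eta}\mathcal{N}^2\bigr)$, discards the nonnegative remainder via Lemma~\ref{lem:OnsagersLemma}, and bounds the entropy in terms of the relative entropy $S(G_{h,\eta},G^{\mathrm{id}}_{h,\eta})$ using Lemmas~\ref{restrictionrelentr} and \ref{lemcoercivityrelentr}. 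The critical step is that at accuracy $O(\eta^{2/3})$, the combined contribution of $\tfrac{\hat{v}(0)}{2\eta}\mathcal{N}^2$ and the relevant part of the entropy reproduces the discrete condensate free energy $\mathcal{F}^{\mathrm{BEC}}$ from \eqref{eq:condensatefunctional} evaluated at $N_0(\beta,M)$. This identification relies on the fact that, in the condensed phase, the fluctuations of $\mathcal{N}$ and $a_0^*a_0$ coincide to leading order, which is bootstrapped from the rough estimate of Proposition~\ref{prop:roughapriori}.

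Given the sharp expansion, the four parts of the theorem follow from appropriate Griffith arguments. For the first bound in (a), perturb $h\to h+s|\varphi_p\rangle\langle\varphi_p|$ with $s\in[-p^2/2,0]$, which keeps the one-body operator coercive; concavity of $s\mapsto\Phi_{h+s|\varphi_p\rangle\langle\varphi_p|,\eta}(\beta,\mu)$ combined with the expansion yields $\Tr[a_p^*a_pG_{h,\eta}]\lesssim 1/(\beta p^2)$. For the second bound in (a), perturb $h\to h+sQ\cos(p\cdot x)Q$ with $|s|\lesssim 1$, which stays in the admissible class because $\|Q\cos(p\cdot x)Q\|_\infty\leq 1$ and the hypothesis $\|h+\Delta\|_\infty\lesssim 1$ is preserved; the first-order derivative of the leading-order expansion at $s=0$ vanishes by orthogonality of Fourier modes (since $\langle\varphi_q,Q\cos(p\cdot x)Q\varphi_q\rangle=0$ for $p\neq 0$), so $|\Tr[B_pG_{h,\eta}]|$ is controlled entirely by the $O(\eta^{2/3})$ error. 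Part (b) follows by applying the expansion to both $h$ and $-\Delta$ and extracting $\Tr[\mathcal{N}_+ G_{h',\eta}]$ via a Griffith argument with respect to a $-s\mathcal{N}_+$ perturbation; both quantities agree with $M-N_0(\beta,M)$ up to $O(\eta^{2/3})$. Part (c) comes from a Griffith argument in $\mu$: concavity gives $-\partial_\mu^2\Phi=\beta\,\mathrm{Var}(\mathcal{N})$, and the expansion yields $\Tr[\mathcal{N}G_{h,\eta}]=M+O(\eta^{2/3})$ together with $\mathrm{Var}(\mathcal{N})\lesssim\eta^{5/3}$, so $\Tr[\mathcal{N}^2G_{h,\eta}]\leq M^2+\mathrm{Var}(\mathcal{N})\lesssim\eta^2$. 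Part (d) refines part (a): the allowed shift $\pm(|\mu_0(\beta,M)|/4)|\varphi_p\rangle\langle\varphi_p|$ in \eqref{eq:definitionOfW} can be absorbed into the coercivity estimate provided $N_0(\beta,M)\lesssim N^{2/3}$ (so that $|\mu_0(\beta,M)|\lesssim 1$), and the sharper bound with denominator $\beta(p^2-\mu_0(\beta,M))$ emerges from the ideal-gas derivative with the effective dispersion $p^2-\mu_0(\beta,M)$.

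The main obstacle is producing the lower bound on the grand potential at the resolution $O(\eta^{2/3})$. Capturing $F^{\mathrm{BEC}}$ on its two scales $\eta^{2/3}\log\eta$ and $\eta^{2/3}$ demands the identification $\mathrm{Var}(\mathcal{N})\simeq\mathrm{Var}(a_0^*a_0)$ in the condensed phase, which is a highly nontrivial bootstrap from the logarithmic estimate of Proposition~\ref{prop:roughapriori}; without this identification, one loses the $\log\eta$-factor and the Griffith argument in $\mu$ (needed for part (c)) cannot be closed. A second technical challenge is maintaining uniformity of the expansion as $\kappa$ crosses the critical threshold $\tfrac{1}{4\pi}[\upzeta(3/2)]^{2/3}$, where $N_0(\beta,M)$ transitions between $O(\eta)$ and $O(1)$; this forces us to work with the discrete condensate functional $\mathcal{F}^{\mathrm{BEC}}$ of \eqref{eq:condensatefunctional} rather than the continuous analogue used in the upper bound of Section~\ref{sec:upperBoundFreeEnergy}, since only the former remains the correct effective model when order-one particles occupy the zero mode.
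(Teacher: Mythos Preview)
Your overall strategy—a Griffith argument built on an $O(\eta^{2/3})$-sharp expansion of the grand potential $\Phi_{h,\eta}(\beta,\mu)$—matches the paper's, and your treatments of parts (a), (b), (d) are essentially correct. There are, however, two concrete gaps.

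First, your mechanism for the grand-potential lower bound is not the paper's and, as stated, does not capture the $\eta^{2/3}\log\eta$ scale. The paper uses neither Lemmas~\ref{restrictionrelentr} and~\ref{lemcoercivityrelentr} nor any bootstrap from Proposition~\ref{prop:roughapriori}. After Onsager, the effective Hamiltonian $\de\Upsilon(h)+\tfrac{\hat v(0)}{2\eta}\mathcal N^2$ commutes with $\mathcal N$; the paper exploits this by restricting to block-diagonal states $\Gamma=\sum_n c_n\Gamma_n$, so that $S(\Gamma)=\sum_n c_nS(\Gamma_n)+S(c)$ exactly. The piece $\sum_n c_n\bigl(\tr[\de\Upsilon(h)\Gamma_n]-\beta^{-1}S(\Gamma_n)\bigr)$ is bounded via convexity of the canonical free energy in $n$ and the excitation map $U_n:\,P_n\mathscr{F}\to\mathscr{F}_+^{\leq n}$, while the remaining piece $\tfrac{\hat v(0)}{2\eta}\sum_n n^2c_n-(\mu-\mu_0)\sum_n nc_n-\beta^{-1}S(c)$ is already the discrete condensate functional, whose infimum follows from the Gibbs variational principle and produces $\Theta^{\mathrm{BEC}}$ directly. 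The identification $\mathrm{Var}(\mathcal N)\simeq\mathrm{Var}(a_0^*a_0)$ you invoke is a heuristic motivating the section, not an input to the proof.

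Second, your argument for part (c) does not close. Concavity of $\Phi$ in $\mu$ bounds first derivatives by difference quotients, but provides no pointwise control on the second derivative $-\partial_\mu^2\Phi=\beta\,\mathrm{Var}(\mathcal N)$; an $O(\eta^{2/3})$-accurate expansion of $\Phi$ is compatible with arbitrarily large $|\partial_\mu^2\Phi|$ at a given $\mu$. The paper instead perturbs the coupling $\hat v(0)\mapsto\hat v(0)+t$, under which $\partial_t\Phi_{h,t}=\tfrac{1}{2\eta}\Tr[\mathcal N(\mathcal N-1)G_{h,\eta,t}]$ is a \emph{first} derivative. A one-sided Griffith bound (concavity and monotonicity in $t$) then yields $\eta^{-1}\Tr[\mathcal N^2 G_{h,\eta}]\lesssim\eta$ directly from the expansion, after tracking the $t$-dependence of $\mu_0(\beta,M_t)$ via Lemma~\ref{lem:effectiveChemicalPotentialPerturbation2}.
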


\begin{remark}
\label{rem:interactingChemicalPotential}
    The assumption $-\eta^{2/3} \lesssim \mu \lesssim 1$ will be justified in Section~\ref{sec:boundsChemicalPotential}, where we show that $\mu_{\beta,N}$ satisfies this bound with $\eta$ replaced by $N$. Accordingly, the bounds in Theorem~\ref{thm:firstOrderAPriori} hold for the choices $h = - \Delta$, $\eta = N$, and $\mu = \mu_{\beta,N}$. In this case we have $G_{h,\eta}(\beta,\mu) = G_{\beta,N}$.
\end{remark}
\begin{remark}
\label{rem:effectiveParticleNumbers}
    In \eqref{eq:GrantCanonicalEffectiveIddealGasChemPot} and \eqref{eq:particleNumber} we define an effective chemical potential and an effective particle number. It can be shown that the expected number of particles in the system is, to leading order as $\eta \to \infty$, given by $M$. It can also be shown that the expected number of particle in the condensate is, in the same limit and to leading order, given by $1/(\exp(\beta \widetilde{\mu})-1)$. Since both statements are not needed in our analysis, we do not prove them. Their proves can, however, be found in \cite{DeuNapNam-25}, see Proposition~1.
\end{remark}

The proof of the above theorem will be carried our in two steps. In the first step we derive in Section~\ref{sec:boundsPerturbedGrandPotential} bounds for the grand potential
\begin{equation}
    \Phi_{h,\eta}(\beta,\mu) = -\frac{1}{\beta} \ln\left( \tr[\exp(-\beta(\mathcal{H}_{h,\eta} - \mu \mathcal{N}))] \right).
    \label{eq:perturbedGrandPotential}
\end{equation}
In the second step we use these bounds in Sections~\ref{sec:proofFirstOrderApriori1} to prove Theorem~\ref{thm:firstOrderAPriori} with a Griffith argument.
\subsection{Bounds for the perturbed grand potential}
\label{sec:boundsPerturbedGrandPotential}
The bounds for the grand potential are captured in the following proposition.

\begin{proposition}
\label{prop:boundsPerturbedGrandPotential}
    Let $v$ satisfy the assumptions of Theorem~\ref{thm:norm-approximation}. We consider the limit $\eta \to \infty$, $\beta \eta^{2/3} \to \kappa  \in (0,\infty)$. The chemical potential $\mu$, which may depend on $\eta$, is assumed to satisfy $-\eta^{2/3} \lesssim \mu \lesssim 1$. Let $M$, $\mu_0(\beta,M)$, and $N_0(\beta,M)$ be defined as in \eqref{eq:particleNumber}, \eqref{eq:idealgase1pdmchempot}, and \eqref{eq:crittemp}, respectively. Then we have
    \begin{equation}
    \label{eq:boundsGrandPotential}
        \Phi_{h}(\beta,\mu) = \Phi_{h}^{\mathrm{id}}(\beta,\mu_0(\beta,M)) - \frac{(\mu - \mu_0(\beta,M))^2 \eta}{2 \hat{v}(0)} + \Theta^{\mathrm{BEC}}(\beta,N_0(\beta,M)) + O(\eta^{2/3}).
    \end{equation}
    Here 
    \begin{equation}
        \Phi_{h}^{\mathrm{id}}(\beta,\mu_0(\beta,M)) = \beta^{-1} \trs_+ \left[ \ln\left( 1 - \exp(-\beta Q (h - \mu_0(\beta,M))) \right) \right],
        \label{eq:PhiId}
    \end{equation}
    where $Q$ has been defined in \eqref{eq:excitationFockSpace}, $\trs_+[A] = \trs[Q A Q]$, $A \in \mathcal{B}(L^2(\Lambda))$, and
    \begin{equation}
    \label{eq:ThetaBEC}
        \Theta^{\mathrm{BEC}}(\beta,N_0) = - \begin{cases} \frac{5}{6 \beta} \ln(\eta) & \text{ if } N_0 \geq \eta^{5/6}, \\ \frac{1}{\beta} \ln(N_0) & \text{ if } N_0 < \eta^{5/6}. \end{cases}
    \end{equation}

    Moreover, if $w$ is given as in part~(d) of Theorem~\ref{thm:firstOrderAPriori} and $N_0(\beta,M) \lesssim \eta^{2/3}$ then we have
    \begin{equation}
        \Phi_w(\beta,\mu) = \widetilde{\Phi}_w^{\mathrm{id}}(\beta,\mu_0(\beta,M)) - \frac{(\mu - \mu_0(\beta,M))^2 \eta}{2 \hat{v}(0)} + O(\eta^{2/3}),
        \label{eq:newBoundAboveCriticalPoint}
    \end{equation}
    where $\widetilde{\Phi}_w^{\mathrm{id}}$ equals $\Phi_{h}^{\mathrm{id}}$ in \eqref{eq:PhiId} with $h$ and $\trs_+[\cdot]$ are replaced by $w$ and $\trs[\cdot]$, respectively.
\end{proposition}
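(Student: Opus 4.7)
The plan is to prove matching upper and lower bounds for the grand potential, viewed as the Gibbs variational problem, that isolate three contributions: the free thermal-cloud grand potential $\Phi_h^{\mathrm{id}}(\beta,\mu_0)$ on $QL^2(\Lambda)$; a one-mode condensate contribution $\Theta^{\mathrm{BEC}}$ arising from the effective $\Phi^4$ partition function of the $a_0$ mode; and a mean-field shift $-(\mu-\mu_0)^2\eta/(2\hat{v}(0))$ obtained by completing the square in $\mathcal{N}$. The key structural observation is that, up to off-diagonal and exchange corrections, $\mathcal{H}_{h,\eta}-\mu\mathcal{N}$ decouples on $\mathscr{F}_0\otimes\mathscr{F}_+$ into a free part on $\mathscr{F}_+$ at effective chemical potential $\mu_0(\beta,M)$ and a one-mode $\Phi^4$ Hamiltonian on $\mathscr{F}_0$; the defining relation \eqref{eq:GrantCanonicalEffectiveIddealGasChemPot} is precisely the mean-field self-consistency condition determining $\mu_0$.

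For the upper bound, I would take a product trial state $\Gamma^{\mathrm{trial}} = \Gamma_0\otimes\Gamma_+$ on $\mathscr{F}_0\otimes\mathscr{F}_+$, with $\Gamma_+$ the free Gibbs state of $\de\Upsilon(QhQ) - \mu_0(\beta,M)\mathcal{N}_+$ and $\Gamma_0$ the Gibbs distribution, in the number basis, of the one-mode Hamiltonian $\tfrac{\hat{v}(0)}{2\eta}a_0^{*2}a_0^2 - \mu_{\mathrm{eff}}\mathcal{N}_0$, where $\mu_{\mathrm{eff}}$ is fixed by mean-field self-consistency between condensate and cloud particle numbers. Writing
\begin{equation*}
\mathcal{H}_{h,\eta} = \de\Upsilon(QhQ) + \frac{\hat{v}(0)}{2\eta}\mathcal{N}(\mathcal{N}-1) + \mathcal{H}_{\mathrm{off}},
\end{equation*}
the phase invariance of $\Gamma_0$ (diagonal in $\mathcal{N}_0$) would reduce $\Tr[\mathcal{H}_{\mathrm{off}}\Gamma^{\mathrm{trial}}]$ to sums involving only the occupation numbers $\gamma_p^{\mathrm{id}}$ of $\Gamma_+$; these can be controlled by $\sum_p\hat{v}(p)<\infty$ together with $\gamma_p^{\mathrm{id}} \lesssim 1/(\beta p^2)$ to produce an $O(\eta^{2/3})$ contribution. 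Entropy additivity, the mean-field coupling $\hat{v}(0)\langle\mathcal{N}_0\rangle\langle\mathcal{N}_+\rangle/\eta$, and a direct analysis of the one-mode partition function in the two regimes (a Gaussian saddle point for $N_0 \geq \eta^{5/6}$ and a summable exponential for $N_0 < \eta^{5/6}$) would then combine to produce $\Theta^{\mathrm{BEC}}$, with the quadratic shift $-(\mu-\mu_0)^2\eta/(2\hat{v}(0))$ emerging after invoking \eqref{eq:GrantCanonicalEffectiveIddealGasChemPot}.

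For the lower bound, I would apply the Gibbs variational principle to $G_{h,\eta}(\beta,\mu)$, use Onsager's inequality (Lemma~\ref{lem:OnsagersLemma}) to bound the interaction below by $\hat{v}(0)\mathcal{N}^2/(2\eta) - v(0)\mathcal{N}/(2\eta)$, and complete the square
\begin{equation*}
\frac{\hat{v}(0)}{2\eta}\mathcal{N}^2 - \mu\mathcal{N} = \frac{\hat{v}(0)}{2\eta}\left(\mathcal{N} - \frac{(\mu-\mu_0)\eta}{\hat{v}(0)}\right)^2 - \frac{(\mu-\mu_0)^2\eta}{2\hat{v}(0)} - \mu_0\mathcal{N}.
\end{equation*}
I would then split the relative entropy against the product quasi-free reference via Lemma~\ref{restrictionrelentr}: the $\mathscr{F}_+$ piece, combined with the $\mu_0\mathcal{N}_+$ term, reproduces $\Phi_h^{\mathrm{id}}(\beta,\mu_0)$ upon optimization over $\mathscr{F}_+$-states, while the $\mathscr{F}_0$ piece is treated by a c-number substitution in the spirit of \cite{LieSeiYng-05}, yielding the continuous one-mode $\Phi^4$ grand potential, which is compared with its discrete counterpart (Lemma~\ref{lem:comparisonContinuousDiscreteCondensateFreeEnergy}) to produce $\Theta^{\mathrm{BEC}}(\beta,N_0(\beta,M))$ up to $O(\eta^{2/3})$ errors.

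The hard part will be the precise accounting of the logarithmic condensate term at the scale $\beta^{-1}\ln\eta$, uniformly across the crossover $N_0 \sim \eta^{5/6}$ between the Gaussian and exponential regimes of the one-mode $\Phi^4$ distribution: the two branches of \eqref{eq:ThetaBEC} have to be matched by both the trial state analysis and the c-number lower bound. The second formula \eqref{eq:newBoundAboveCriticalPoint} should then follow from the first by specialization. The hypothesis $N_0(\beta,M)\lesssim\eta^{2/3}$ forces $|\mu_0(\beta,M)|\gtrsim 1$, so that the $p=0$ mode contribution $\beta^{-1}\ln(1-e^{\beta\mu_0}) = -\beta^{-1}\ln(N_0+1)$ differs from $\Theta^{\mathrm{BEC}} = -\beta^{-1}\ln N_0$ by $O(\eta^{2/3})$; adding this missing mode to $\Phi_w^{\mathrm{id}}$ on $QL^2(\Lambda)$ reconstructs $\widetilde{\Phi}_w^{\mathrm{id}}$ on the full $L^2(\Lambda)$ up to the error tolerance. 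The bounds \eqref{eq:definitionOfW} on $w$ ensure $w \sim -\Delta$ modulo an $O(|\mu_0|)=O(1)$ rank-one perturbation of a single eigenvalue, so all off-diagonal estimates carry over unchanged from the $h=-\Delta$ analysis.
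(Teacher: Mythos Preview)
Your overall variational structure is right, but there are two genuine gaps, one in each direction.

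\textbf{Upper bound.} The one-particle Hamiltonian $h$ is only assumed to satisfy \eqref{eq:generalizedOneParticleHamiltonian}; it need not commute with translations (this generality is the whole point, since perturbations such as $Q\cos(p\cdot x)Q$ are not translation-invariant). Hence the 1-pdm $\gamma_h^{\mathrm{id}}$ of your $\Gamma_+$ has off-diagonal entries in momentum space, and the quartic exchange contribution
\[
\frac{1}{2\eta}\sum_{p\in\Lambda_+^*}\hat{v}(p)\Bigl(\sum_{u}\gamma_h^{\mathrm{id}}(u+p,u)\Bigr)\Bigl(\sum_{v}\gamma_h^{\mathrm{id}}(v-p,v)\Bigr)
\]
does not reduce to ``sums of occupation numbers''. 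The diagonal bound $\gamma_h^{\mathrm{id}}(p,p)\lesssim(\beta p^2)^{-1}$ says nothing about $\sum_u\gamma_h^{\mathrm{id}}(u+p,u)$, which a priori is only controlled by $\|\gamma_h^{\mathrm{id}}\|_1\sim\eta$, producing an unacceptable $O(\eta)$ error. The paper handles this via Lemma~\ref{lem:perturbedtrace}: since this off-diagonal sum vanishes for the translation-invariant 1-pdm $\gamma^{\mathrm{id}}$, one bounds it by $\|\gamma_h^{\mathrm{id}}-\gamma^{\mathrm{id}}\|_1\lesssim\beta^{-1}$, which is enough.

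\textbf{Lower bound.} After completing the square, the term $\frac{\hat{v}(0)}{2\eta}(\mathcal{N}-M')^2$ involves the \emph{total} number $\mathcal{N}=\mathcal{N}_0+\mathcal{N}_+$, not $\mathcal{N}_0$ alone, so it does not live on $\mathscr{F}_0$ and cannot be fed into a c-number substitution there. If instead you drop the square (it is nonnegative) and minimize the $\mathscr{F}_0$ piece, you obtain $\frac{1}{\beta}\ln(1-e^{\beta\mu_0})\approx-\frac{1}{\beta}\ln N_0$. In the condensed regime $N_0\geq\eta^{5/6}$ this falls short of $\Theta^{\mathrm{BEC}}=-\frac{5}{6\beta}\ln\eta$ by up to $\frac{1}{6\beta}\ln\eta\sim\eta^{2/3}\ln\eta$, which exceeds the error budget. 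The paper's device is different: after Onsager, the effective Hamiltonian commutes with $\mathcal{N}$, so one block-diagonalizes in $\mathcal{N}$, bounds the canonical free energy on each $n$-particle sector via the excitation map $U_n$ of \cite{LewNamSerSol-15}, and then recognizes that the remaining minimization over the weights $c_n=\Tr[P_n\Gamma]$ is exactly the discrete one-mode $\Phi^4$ grand potential in the \emph{total} particle number $n$. The underlying physical insight (stated at the start of Section~\ref{sec:lowerBoundGrandPotential}) is that the fluctuations of $\mathcal{N}$ and $\mathcal{N}_0$ coincide to leading order in the condensed phase, so one may use total-number fluctuations as a proxy for condensate fluctuations without first separating the two.
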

The proof of the above proposition will also be given in two steps. In the first step we prove an upper bound for $\Phi_{h}(\beta,\mu)$ and $\Phi_{w}(\beta,\mu)$, in the second step a lower bound. 
\subsubsection{Upper bound for the perturbed grand potential}
\label{sec:upperBoundGrandPotential}
As in the proof of the sharp upper bounds for the free energy in Section~\ref{sec:upperBoundFreeEnergy} we apply a trial state argument, and we start with the upper bound in \eqref{eq:boundsGrandPotential}. The main difference between the analysis in these two sections is that the one-particle operator $h$, in contrast to the Laplacian, does not commute with translations. 

We start our analysis with the definition of the trial state. It reads
\begin{equation}
\label{eq:trialStateFirstOrderEstimates}
    \Gamma^{\mathrm{trial}} = \int_{\mathbb{C}} | z \rangle \langle z | \ g(z) \de z \otimes G^{\mathrm{id}}_{h,+}(\beta,M), \quad G_{h,+}^{\mathrm{id}}(\beta,M) =   \frac{\exp\left( -\beta (\de \Upsilon(Q h) - \mu_0(\beta,M) \mathcal{N}_+) \right)}{\tr_{\mathscr{F}_+} \exp\left( -\beta (\de \Upsilon(Q h) - \mu_0(\beta,M) \mathcal{N}_+) \right)} 
\end{equation}
with the coherent state $| z \rangle$ in \eqref{eq:coherentstate}, the projection $Q$ in \eqref{eq:excitationFockSpace}, the chemical potential $\mu_0$ in \eqref{eq:idealgase1pdmchempot}, and where the chemical potential $\mu$ in the definition of the Gibbs distribution $g$ in \eqref{eq:GibbsDistributionDiscrete} (as in Section~\ref{sec:upperBoundFreeEnergy} we omit the superscript $\mathrm{BEC}$) is chosen such that $\int |z|^2 g(z) \de z = N_0(\beta,M)$ holds. Since $M$ is defined via \eqref{eq:particleNumber} the expected number of particles in the trial state will, in general, not equal $M$. However, it is given by $M$ to leading order, see Remark~\ref{rem:effectiveParticleNumbers} above. An application of Lemma~\ref{lem:BerezinLieb} shows that the entropy of our trial state satisfies 
\begin{equation}
    S(\Gamma^{\mathrm{trial}}) \geq S(G_{h,+}^{\mathrm{id}}(\beta,M)) + S(g). 
    \label{eq:UpperBoundPerturbedGrandPotential1}
\end{equation}
It remains to compute the energy of $\Gamma^{\mathrm{trial}}$.

We recall the decomposition of the Hamiltonian $\mathcal{H}_N$ in \eqref{eq:decompH}, introduce for $p,q \in \Lambda_+^*$ the notation $\gamma_h^{\mathrm{id}}(p,q) = \tr [ a_q^* a_p G_+^{\mathrm{id}}(\beta,M) ]$, and abbreviate $\mu_0(\beta,M) = \mu_0$, $N_0(\beta,M) = N_0$. A short computation that uses the Wick theorem shows 
\begin{align}
	&\tr[(\mathcal{H}_{h,\eta}-\mu \mathcal{N} ) \Gamma^{\mathrm{trial}}] = \tr_{\mathscr{F}_+}[ \de \Upsilon(Q(h - \mu_0)) G^{\mathrm{id}}_{h,+}(\beta,M)] + \mu_0 \trs_+[\gamma_h^{\mathrm{id}}] - \mu (N_0+\trs_+[\gamma_h^{\mathrm{id}}] )   \nonumber \\
	&\hspace{+0.3cm}+ \frac{N_0}{\eta} \sum_{p\in \Lambda_+^*} \hat{v}(p) \gamma_h^{\mathrm{id}}(p,p) + \frac{\hat{v}(0)}{2\eta} \Bigg\{ \left( N_0 + \trs_+[\gamma_h^{\mathrm{id}}] \right)^2 + \int_{\mathbb{C}} |z|^4 g(z) \de z - N_0^2 + \sum_{u,v \in \Lambda_+^*} | \gamma_h^{\mathrm{id}}(u,v) |^2 \Bigg\} \nonumber 	  \\
	&\hspace{+0.3cm}+ \frac{1}{2\eta} \sum_{p,u,v,u+p,v-p \in \Lambda_+^*} \hat{v}(p) \left\{ \gamma_h^{\mathrm{id}}(u+p,u) \gamma_h^{\mathrm{id}}(v-p,v) + \gamma_h^{\mathrm{id}}(u+p,v) \gamma_h^{\mathrm{id}}(v-p,u) \right\}, \label{eq:firstOrderUpperBound1}
\end{align}
where $\trs_+$ denotes the trace over $Q L^2(\Lambda)$. Our assumption $h \gtrsim - \Delta$ implies that the $k$-th eigenvalue of $h$ is bounded from below by the $k$-th eigenvalue of $-\Delta$ (with a constant that does not depend on $k$). In combination with $(e^x-1)^{-1} \leq x$ for $x>0$ and $\mu_0 < 0$, this, in particular, implies $\gamma_h^{\mathrm{id}} \leq \beta^{-1}$. Using these two bounds, $N_0(\beta,M) \leq M$, and the fact that the Hilbert--Schmidt norm of a self-adjoint operator can be expressed in terms of its eigenvalues, we check that
\begin{equation}
	\frac{N_0}{\eta} \sum_{p\in \Lambda_+^*} \hat{v}(p) \gamma_h^{\mathrm{id}}(p,p) \lesssim \frac{M}{\beta \eta} \quad \text{ as well as } \quad \sum_{u,v \in \Lambda_+^*} | \gamma_h^{\mathrm{id}}(u,v) |^2 = \Vert \gamma_h^{\mathrm{id}} \Vert_{2}^2 \leq \frac{1}{\beta^2} \sum_{p \in \Lambda_+^*} \frac{1}{|p|^4}.
	\label{eq:firstOrderUpperBound2}
\end{equation}  
Here $\Vert \cdot \Vert_2$ denotes the Hilbert--Schmidt norm. The last term on the right-hand side of \eqref{eq:firstOrderUpperBound1} is bounded from above by
\begin{equation}
	\frac{1}{4\eta} \sum_{p,u,v,u+p,v-p \in \Lambda_+^*} \hat{v}(p) \left\{|\gamma_h^{\mathrm{id}}(u+p,v)|^2 + |\gamma_h^{\mathrm{id}}(v-p,u) |^2 \right\} = \frac{\Vert \gamma_h^{\mathrm{id}} \Vert_2^2}{2 \eta} \sum_{p \in \Lambda_+^*} \hat{v}(p) \lesssim \frac{1}{\beta^2 \eta}.
	\label{eq:firstOrderUpperBound3}
\end{equation}

In the next step we derive a bound for the first term in the last line of \eqref{eq:firstOrderUpperBound1}. To that end, we need the following lemma.
\begin{lemma} \label{lem:perturbedtrace}
	Let $h_0, X$ be self-adjoint operators on $Q L^2(\Lambda)$ with $Q$ in \eqref{eq:excitationFockSpace} and assume that $X$ is bounded. We also assume that the operators $h_0$ and $h = h_0 + X$ satisfy
	\begin{align} 
		\label{eq:trace-diff-exp-h-h0-0}
		h, h_0 \gtrsim -\Delta \quad \text{ on } \quad Q L^2(\Lambda).
	\end{align}
	For any $\beta > 0$ and any $\mu < 0 $ we then have
	\begin{align} \label{eq:trace-diff-exp-h-h0-2}
		\trs_+  \left| \frac{1}{e^{\beta (h-\mu)} -1} - \frac{1}{e^{\beta (h_0-\mu )}-1}  \right|   \lesssim  \beta^{-1}.
	\end{align}
\end{lemma}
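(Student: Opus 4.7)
The plan is to exploit $h,h_0 \gesim -\Delta$ on $QL^2(\Lambda)$ together with $\mu<0$ to represent the Bose--Einstein distributions as convergent geometric series of semigroups, and then control the difference by Duhamel's formula. Since $-\Delta$ has spectral gap $(2\pi)^2$ on $QL^2(\Lambda)$, there is $c_0>0$ with $h,h_0 \ge c_0$, so that, for $\mu<0$,
\begin{equation}
\frac{1}{e^{\beta(h-\mu)}-1}=\sum_{n=1}^{\infty} e^{n\beta\mu} e^{-n\beta h},
\end{equation}
and likewise for $h_0$, both series converging in trace norm. Recalling $\trs_+|A|=\|A\|_1$, the triangle inequality reduces the problem to estimating $\sum_{n\ge 1} e^{n\beta\mu}\,\|e^{-n\beta h}-e^{-n\beta h_0}\|_1$.

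Writing $h=h_0+X$ with $X$ bounded, Duhamel's formula gives
\begin{equation}
e^{-n\beta h}-e^{-n\beta h_0}=-\int_0^{n\beta} e^{-sh}\,X\,e^{-(n\beta-s)h_0}\de s,
\end{equation}
so that H\"older's inequality $\|ABC\|_1\le\|A\|_2\|B\|_\infty\|C\|_2$ yields
\begin{equation}
\|e^{-n\beta h}-e^{-n\beta h_0}\|_1 \le \|X\|_\infty \int_0^{n\beta} \|e^{-sh}\|_2\,\|e^{-(n\beta-s)h_0}\|_2 \de s.
\end{equation}
The min--max principle combined with $h\gesim -\Delta$ on $QL^2(\Lambda)$ gives $\|e^{-sh}\|_2^2=\trs_+[e^{-2sh}]\lesim s^{-3/2}\wedge e^{-c's}$ for some $c'>0$, and the same for $h_0$. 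Hence $\|e^{-sh}\|_2 \lesim s^{-3/4}\mathds{1}(s\le 1) + e^{-c's/2}\mathds{1}(s>1)$.

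Two regimes now appear naturally. For $n\beta\le 1$ a beta-integral computation yields $\int_0^{n\beta}\|e^{-sh}\|_2\|e^{-(n\beta-s)h_0}\|_2 \de s \lesim (n\beta)^{-1/2}$, coming from the singularity of the heat kernel at $s=0$. For $n\beta>1$, splitting $[0,n\beta]$ into two endpoint windows of length $1$ and the remaining middle interval yields a bound $\lesim e^{-c''n\beta}$ for some $c''>0$. Summing against $e^{n\beta\mu}\le 1$ produces, for the short-time part,
\begin{equation}
\sum_{n:\,n\beta\le 1}(n\beta)^{-1/2}=\beta^{-1/2}\sum_{n=1}^{\lfloor 1/\beta\rfloor} n^{-1/2}\lesim \beta^{-1},
\end{equation}
and, for the large-time part, a geometric series that is $\lesim \beta^{-1}$ (trivially if $\beta\gesim 1$, otherwise by comparison with $\beta^{-1}\int_1^\infty e^{-c''T}\de T$), uniformly in $\mu<0$.

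The key conceptual point, which is also the main technical obstacle, is that the estimate must be uniform in $\mu<0$: when $|\mu|$ is small the factor $e^{n\beta\mu}$ provides no improvement, and the entire $\beta^{-1}$ has to be extracted from the short-time singularity $(n\beta)^{-1/2}$ of the Duhamel kernel. This is what dictates splitting the $n$-summation at the scale $n\beta\sim 1$; the mild singularity at $s=0$ and $s=n\beta$ in the Duhamel integrand (integrable thanks to the $3/2 < 2$ exponent of the heat-kernel trace in dimension three) is ultimately what makes the bound $\beta^{-1}$ sharp and independent of $\mu$ and $\|X\|_\infty$ (beyond a linear factor).
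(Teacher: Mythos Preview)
Your proof is correct but takes a genuinely different route from the paper. The paper factors the trace norm via Cauchy--Schwarz as $\trs_+|D|\le\|h_0^{-1}\|_2\,\|(h_0-\mu)D\|_2$ (using that $(-\Delta)^{-1}$ is Hilbert--Schmidt on $QL^2(\Lambda)$), and then controls the second factor by promoting the scalar inequality $y^2(f(x)-f(y))^2\le 4(x-y)^2(x^{-2}+y^{-2})$ for $f(x)=(e^x-1)^{-1}$ to operators via Klein's inequality; this yields the bound in essentially two algebraic steps with no dynamical input. Your argument instead expands the Bose--Einstein function as a geometric series of semigroups, controls each difference $e^{-n\beta h}-e^{-n\beta h_0}$ by Duhamel plus heat-kernel Hilbert--Schmidt bounds (via min--max), and sums: the $\beta^{-1}$ then emerges from $\sum_{n\beta\le 1}(n\beta)^{-1/2}\sim\beta^{-1}$, with the integrable singularity $s^{-3/4}$ at the endpoints (heat-trace exponent $3/2<2$ in dimension three) being the mechanism. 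The paper's approach is shorter and purely operator-theoretic; yours is more dynamical and would generalize readily to other functions admitting Laplace-type representations. Both capture the uniformity in $\mu<0$ for the same structural reason ---  in the paper because $(h_0-\mu)^{-1}\le h_0^{-1}$, in your argument because $e^{n\beta\mu}\le 1$ suffices throughout.
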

\begin{proof}
	An application of the Cauchy-Schwarz inequality shows
	\begin{align} 
		\trs_{+}  \left| \frac{1}{e^{\beta (h-\mu)} -1} - \frac{1}{e^{\beta (h_0-\mu)}-1}  \right|   &\le   \|h_0^{-1}\|_{2} \left\| (h_0 -\mu) \left( \frac{1}{e^{\beta (h-\mu)} -1} - \frac{1}{e^{\beta (h_0-\mu)}-1} \right)\right\|_{2}  \nonumber \\
        &\lesssim \left\|(h_0-\mu) \left( \frac{1}{e^{\beta (h\mu)} -1} - \frac{1}{e^{\beta (h_0-\mu)}-1} \right)\right\|_{2}. \label{eq:trace-diff-exp-h-h0-3}
	\end{align}
	The term on the right-hand side can be estimated with Klein's inequality, see e.g. \cite[Prop. 3.16]{OhyPer-93}, which we recall now. 
	
	Let $A,B$ be two self adjoint operators with spectra $\sigma(A), \sigma(B)$, let $\{ f_k   \},\{g_k\}$ be two families of functions, where $f_k : \sigma(A) \to \mathbb{C}$, $g_k : \sigma(B) \to \mathbb{C}$, and assume that 
	\begin{equation}
		\sum_k f_k(a) g_k(b) \ge 0 ,\quad \forall a \in \sigma(A), b \in \sigma(B).
	\end{equation}
	Then Klein's inequality states that
	\begin{equation}
		\trs \left( \sum_k f_k(A) g_k(B) \right) \ge 0.
	\end{equation}
		
	Let us introduce the notation $f(x) = (\exp(x)-1)^{-1}$ for $x>0$ and note that $-f'(x) = (4 \sinh^2(x/2))^{-1} \leq 1/x^2$. We claim that 
	\begin{align}\label{eq:Klein-pointwise}
		y^2 (f(x) - f(y))^2 \leq 4 (x-y)^2 \left[ x^{-2} + y^{-2} \right]
	\end{align}
	holds for any $x, y > 0$. Indeed, if $y\le x$, and for some $\xi \in \{ tx + (1-t)y \ | \ t \in [0,1] \}$ we have
	\begin{align*}
		y^2 (f(x) - f(y))^2 &= y^2 |f'(\xi)|^2 (x-y)^2 \leq (x-y)^2 y^{-2}.
	\end{align*}
	Moreover, if $x \leq y$ we know that
	\begin{align*}
		y^2 (f(x) - f(y))^2 \le 2 x^2 (f(x) - f(y))^2 + 2(x-y)^2(f(x) - f(y))^2.
	\end{align*}
	The first term on the right-hand side is bounded by $2 (x-y)^2 x^{-2}$. For the second term we have
	\begin{equation}
		(x-y)^2(f(x) - f(y))^2 \leq (x-y)^2  f^2(x)  \leq (x-y)^2 x^{-2}.
	\end{equation}
	In the last step we used $f(x) \leq 1/x$. 
	
	In combination, \eqref{eq:Klein-pointwise}, Klein's inequality, and Hölder's inequalty for traces, see e.g. \cite[Proposition~5]{RS1975}, imply
	\begin{align} \label{eq:trace-diff-exp-h-h0-4}
		\trs_+ \left[ \beta^2 (h_0-\mu)^2 ( f(\beta (h-\mu))-f(\beta (h_0-\mu)))^2 \right] & \leq 4 \trs_+ \left[ (\beta h- \beta h_0)^2 \left( (\beta h)^{-2} + (\beta h_0)^{-2} \right) \right] \nonumber\\
		&\le 4 \Vert h-h_0 \Vert_{\infty} \trs_+  \left[ h^{-2}+h_0^{-2}  \right] \lesssim  \trs_+ \left[ (-\Delta)^{-2} \right] < +\infty.
	\end{align}
	Here we used $\Vert X \Vert_{\infty} < + \infty$,  \eqref{eq:trace-diff-exp-h-h0-0}, and the bound $\Tr [A^{-2} ] \le \Tr [ B^{-2} ]$ for two operators that satisfy $A \ge B > 0$. When we put \eqref{eq:trace-diff-exp-h-h0-3} and \eqref{eq:trace-diff-exp-h-h0-4} together, we find \eqref{eq:trace-diff-exp-h-h0-2} and the lemma is proved.
\end{proof}

To obtain a bound for the first term in the last line of \eqref{eq:firstOrderUpperBound1}, we write it as
\begin{align}
	\frac{1}{2\eta} \sum_{p \in \Lambda_+^*} \hat{v}(p) \left[ \sum_{u,u+p \in \Lambda_+^*} \langle \varphi_{u+p}, \gamma_h^{\mathrm{id}} \varphi_u \rangle \right] \left[ \sum_{v,v-p \in \Lambda_+^*} \langle \varphi_{v-p}, \gamma_h^{\mathrm{id}} \varphi_v \rangle \right].
    \label{eq:firstOrderUpperBound4}
	\end{align}
In the following, we denote by $\gamma^{\mathrm{id}}$ the 1-pdm of $G_{h,+}^{\mathrm{id}}(\beta,M)$ in the special case $h = -\Delta$. The translation-invariance of $\gamma^{\mathrm{id}}$, the fact that $p \in \Lambda_+^*$, the Cauchy--Schwarz inequality, and Lemma~\ref{lem:perturbedtrace} can be used to show that
\begin{equation}
	\left| \sum_{u,u+p \in \Lambda_+^*} \langle \varphi_{u+p}, \gamma_h^{\mathrm{id}} \varphi_u \rangle \right| = \left| \sum_{u,u+p \in \Lambda_+^*} \langle \varphi_{u+p}, ( \gamma_h^{\mathrm{id}} - \gamma^{\mathrm{id}} ) \varphi_u \rangle \right| \leq \trs_+ \left| \gamma_h^{\mathrm{id}} - \gamma^{\mathrm{id}} \right| \lesssim \beta^{-1},
    \label{eq:firstOrderUpperBound5}
\end{equation}
and hence
\begin{equation}
	\frac{1}{2 \eta} \left| \sum_{p,u,v,u+p,v-p \in \Lambda_+^*} \hat{v}(p) \gamma_h^{\mathrm{id}}(u+p,u) \gamma_h^{\mathrm{id}}(v-p,v) \right| \lesssim \frac{1}{\beta^2 \eta}.
	\label{eq:BoundDifficultExchangeTerm}
\end{equation}

Next, we consider the second and the third term on the right-hand side of \eqref{eq:firstOrderUpperBound1}. Using $\mu_0 < 0$ and Lemma~\ref{lem:perturbedtrace}, we check that
\begin{equation}
    \mu_0 \trs_+[\gamma_h^{\mathrm{id}}] - \mu (N_0 + \trs_+[\gamma_h^{\mathrm{id}}]) \leq ( \mu_0 - \mu ) (N_0 + \trs_+[\gamma_h^{\mathrm{id}}]) \leq ( \mu_0 - \mu ) M + \frac{C | \mu_0 - \mu |}{\beta}.
    \label{eq:firstOrderUpperBound6}
\end{equation}
Another application of Lemma~\ref{lem:perturbedtrace} shows that the second term in the second line of \eqref{eq:firstOrderUpperBound1} satisfies
\begin{equation}
    \frac{\hat{v}(0)}{2\eta} \left( N_0 + \trs_+[\gamma_h^{\mathrm{id}}] \right)^2 \leq \frac{\hat{v}(0)}{2\eta} M^2 + \frac{C M}{\beta \eta}.
    \label{eq:firstOrderUpperBound6a}
\end{equation}

From Lemma~\ref{lem:effectiveChemicalPotentialAPriori} we know that $M \lesssim \eta$ and $|\mu - \mu_0| \lesssim 1$. When we put the above considerations together, use these two bounds and $\beta \gtrsim \eta^{-2/3}$, we find  
\begin{align}
	\tr[(\mathcal{H}_{h,\eta}-\mu \mathcal{N} ) \Gamma^{\mathrm{trial}}] - \frac{1}{\beta} S(\Gamma^{\mathrm{trial}}) \leq& \Phi_{h}^{\mathrm{id}}(\beta,\mu_0(\beta,M))  + \frac{\hat{v}(0)}{2\eta} M^2 - (\mu-\mu_0(\beta,M)) M \nonumber \\
	&+ F_{\mathrm{c}}^{\mathrm{BEC}}(\beta,N_0(\beta,M)) + C \eta^{2/3} \label{eq:firstOrderUpperBound6b}
\end{align}
with $F_{\mathrm{c}}^{\mathrm{BEC}}$ defined below \eqref{eq:GibbsDistributionDiscrete}. Our choice for $M$, see \eqref{eq:particleNumber}, guarantees that $M = (\mu - \mu_0(\beta,M))\eta/\hat{v}(0)$ holds, and hence
\begin{equation}
    \frac{\hat{v}(0)}{2\eta} M^2 - (\mu-\mu_0(\beta,M)) M = - \frac{(\mu - \mu_0(\beta,M))^2 \eta}{2 \hat{v}(0)}.
    \label{eq:firstOrderUpperBound7}
\end{equation}
From Lemma~\ref{prop:FreeEnergyBEC} in Appendix~\ref{app:effcondensate} we know that $| F_{\mathrm{c}}^{\mathrm{BEC}}(\beta,N_0(\beta,M)) - \Theta^{\mathrm{BEC}}(\beta,N_0(\beta,M)) | \lesssim \eta^{2/3}$. In combination, this bound, \eqref{eq:firstOrderUpperBound6b}, and \eqref{eq:firstOrderUpperBound7} imply the final upper bound
\begin{equation}
    \Phi_{h}(\beta,\mu) \leq \Phi_{h}^{\mathrm{id}}(\beta,\mu_0(\beta,M)) - \frac{(\mu - \mu_0(\beta,M))^2 \eta}{2 \hat{v}(0)} + \Theta^{\mathrm{BEC}}(\beta,N_0(\beta,M)) + C \eta^{2/3}
    \label{eq:firstOrderUpperBound8}
\end{equation}
for the perturbed grand potential. It remains to prove the upper bound in \eqref{eq:newUpperBoundAboveCriticalPoint}.

Since $w$ is translation-invariant we can argue similarly as in the proof for the upper bound in the non-condensed phase at the end of Section~\ref{sec:Finalupperbound}. As trial state we choose the Gibbs state of the ideal gas in \eqref{eq:GibbsStateIdealGas} with $p^2$ replaced by $w(p)$ and $\mu_0(\beta,N)$ replaced by $\mu_0(\beta,M)$, which we denote by $G^{\mathrm{id}}_{w,\beta}$. We claim that
\begin{equation}
    \Tr[ (\mathcal{H}_{w,\eta}- \mu \mathcal{N} ) G^{\mathrm{id}}_{w,\eta} ] - \frac{1}{\beta} S(G^{\mathrm{id}}_{w,\eta}) \leq \widetilde{\Phi}_w^{\mathrm{id}}(\beta,\mu_0(\beta,M)) - \frac{(\mu - \mu_0(\beta,M))^2 \eta}{2 \hat{v}(0)} + O(\eta^{2/3})
    \label{eq:newUpperBoundAboveCriticalPoint}
\end{equation}
holds. The proof of this bound follows from \eqref{eq:firstOrderUpperBound1}, $N_0(\beta,M) \lesssim \eta^2$, which implies $\mu_0(\beta,M) \lesssim - 1$, and strongly simplified versions of arguments that have been used to prove \eqref{eq:firstOrderUpperBound8}. It is therefore left to the reader.
\subsubsection{Lower bound for the perturbed grand potential}
\label{sec:lowerBoundGrandPotential}
In this section we prove a lower bound for the perturbed grand potential that agrees with the upper bound in \eqref{eq:firstOrderUpperBound8} up to a remainder of the order $\eta^{2/3}$. The main idea of our proof is that if $N_0 \gesssim N^{5/6}$, the variances of $\mathcal{N}_0$ and $\mathcal{N}$ in the Gibbs state $G_{\beta,N}$ are both of order $N^{5/3}$. In contrast, the variance of $\mathcal{N}_+$ is only of the order $N^{4/3}$. Because of this we can extract the contribution of the condensate to the grand potential without distinguishing between the fluctuations of the total particle number and that of the condensed particles.

We apply Lemma~\ref{lem:OnsagersLemma} to obtain a lower bound for the interaction term in our Hamiltonian and find
\begin{equation}
    \mathcal{H}_{h,\eta} \geq \de \Upsilon(h) + \frac{\hat{v}(0) \mathcal{N}^2}{2 \eta} - \frac{v(0) \mathcal{N}}{\eta}. 
    \label{eq:firstOrderLowerBound2}
\end{equation}
Since the right-hand side of \eqref{eq:firstOrderLowerBound2} commutes with $\mathcal{N}$, we can restrict the minimization of the free energy functional related to it to states that also commute with $\mathcal{N}$. In the following, we denote by $\mathcal{S}$ the set of all states on $\mathscr{F}$ and by $\mathcal{S}^{\mathrm{block}} \subset \mathcal{S}$ the subset of all states that commute with $\mathcal{N}$. Let $P_n$ denote the projection onto the $n$-particle sector of the bosonic Fock space. All states in $\mathcal{S}^{\mathrm{block}}$ are block diagonal with respect to the family $\{ P_n \}_{n \in \mathbb{N}_0}$. Given a state $\Gamma \in \mathcal{S}^{\mathrm{block}}$ we write $\Gamma = \sum_{n=0}^{\infty} c_n \Gamma_n$ with $c_n = \tr[P_n \Gamma]$ and $\Gamma_n = P_n \Gamma/\tr[P_n \Gamma]$. For the entropy of $\Gamma$ this implies the identity
\begin{equation}
    S(\Gamma) = \sum_{n} c_n S(\Gamma_n) + S(c) \quad \text{ with the classical entropy } \quad S(c) = - \sum_n c_n \ln(c_n)
    \label{eq:firstOrderLowerBound3}
\end{equation}
of the sequence $\{c_n \}_{n \in \mathbb{N}_0}$. We use this identity for the entropy and the lower bound for $\mathcal{H}_{h,\eta}$ in \eqref{eq:firstOrderLowerBound2} to show that
\begin{align}
    \inf_{\Gamma \in \mathcal{S}} \left\{ \tr[( \mathcal{H}_{h,\eta} -  \mu  \mathcal{N}) \Gamma] - \frac{1}{\beta} S(\Gamma) \right\} \geq \inf_{\Gamma \in \mathcal{S}^{\mathrm{block}}} \Bigg\{ &\sum_n c_n \left( \tr[ \de \Upsilon(h(\lambda))) \Gamma_n ] - \frac{1}{\beta} S(\Gamma_n) + \frac{\hat{v}(0)n^2}{2\eta} - \left(  \mu  + \frac{v(0)}{2\eta} \right) n \right)  \nonumber \\
    &- \frac{1}{\beta} S(c) \Bigg\}
    \label{eq:firstOrderLowerBound4}
\end{align}
holds. In the following we abbreviate $\mu_0 = \mu_0(\beta,M)$ and $N_0 = N_0(\beta,M)$. 

From the Gibbs variational principle we know that
\begin{equation}
    \tr[ \de \Upsilon(h) \Gamma_n ] - \frac{1}{\beta} S(\Gamma_n) \geq F^{\mathrm{can}}(\beta,n),
    \label{eq:firstOrderLowerBound5}
\end{equation}
where $F^{\mathrm{can}}(\beta,n)$ denotes the canonical free energy related to the one-particle Hamiltonian $h$ at inverse temperature $\beta$ and particle number $n$. It is defined as the infimum of the left-hand side of \eqref{eq:firstOrderLowerBound5} over $\mathcal{S}_n^{\mathrm{can}}$, the set of states on the $n$-particle Hilbert space $P_n \mathscr{F}$. The map $n \mapsto F^{\mathrm{can}}(\beta,n)$ is convex, see e.g. \cite[Proposition~A.1]{DeuSeiYng-19}, and hence
\begin{equation}
    \sum_n c_n F^{\mathrm{can}}(\beta,n) \geq F^{\mathrm{can}}(\beta,\overline{N}),
    \label{eq:firstOrderLowerBound6}
\end{equation}
where $\overline{N} = \sum_n n c_n$. 

We also denote by $\mathscr{F}_+^{\leq n}$ the excitation Fock space with particle number cutoff $n$ over the one-particle Hilbert space $QL^2(\Lambda)$. That is, 
\begin{equation}
    \mathscr{F}_+^{\leq n} = P_{+,\leq n} \mathscr{F}(QL^2(\Lambda)), \quad \text{ where } \quad P_{+,\leq n} = \sum_{j=0}^n P_j
\end{equation}
denotes the projection onto the Fock space sectors of $\mathscr{F}(QL^2(\Lambda))$ with particle number smaller than or equal to $n$. By $S_{\leq n}^+$ we denote the set of states on $\mathscr{F}_+^{\leq n}$, and $U_n$ is the unitary map from $P_n \mathscr{F}$ to $\mathscr{F}_+^{\leq n}$ defined by
\begin{equation}
    U_n ( \varphi_0^{\otimes n} \otimes_{\mathrm{sym}} \psi_0 + \varphi_0^{\otimes n-1} \otimes_{\mathrm{sym}} \psi_{1} + ... + \psi_n ) = (\psi_0, \psi_1, ..., \psi_n),
    \label{eq:firstOrderLowerBound7}
\end{equation}
where $\psi_j$, $j \in \{ 0,1,...,n\}$ denotes a symmetrized $j$-particle function and $\otimes_{\mathrm{sym}}$ the symmetric tensor product. The map $U_n$ has been introduced in \cite[Eq.~(2.14)]{LewNamSerSol-15}. Using the unitarity of $U_n$ and the unitary invariance of the entropy, we check that
\begin{align}
    F^{\mathrm{can}}(\beta,\overline{N}) &=  \inf_{\Gamma \in \mathcal{S}_{\overline{N}}^{\mathrm{c}}} \left\{ \tr[ \de \Upsilon(h) \Gamma] - \frac{1}{\beta} S(\Gamma) \right\}  = \inf_{\Gamma \in S^+_{\leq \overline{N}}} \left\{ \tr[ \de \Upsilon(Q h) \Gamma] - \frac{1}{\beta} S(\Gamma) \right\} \nonumber \\
    & \geq \inf_{\Gamma \in S^+_{\leq \overline{N}}} \left\{ \tr[ \de \Upsilon(Q( h - \mu_0 ) ) \Gamma] - \frac{1}{\beta} S(\Gamma) \right\} + \mu_0 \overline{N} \nonumber \\
    &\geq \frac{1}{\beta} \tr_+ \left[ \ln(1-\exp(-\beta Q (h-\mu_0))) \right] + \mu_0 \overline{N}  \label{eq:firstOrderLowerBound8}
\end{align}
holds. To come to the second line we also used $\mu_0 < 0$. In combination, \eqref{eq:firstOrderLowerBound5}, \eqref{eq:firstOrderLowerBound6}, and \eqref{eq:firstOrderLowerBound8} show
\begin{equation}
    \sum_n c_n \left( \tr[ \de \Upsilon(-\Delta) \Gamma_n ] - \frac{1}{\beta} S(\Gamma_n) \right) \geq \frac{1}{\beta} \trs_+ \left[ \ln(1-\exp(-\beta Q (h-\mu_0))) \right] + \mu_0 \sum_n n c_n.
    \label{eq:firstOrderLowerBound9}
\end{equation}

Let us have a closer look at the remaining terms, which read
\begin{equation}
    \sum_n c_n \left( \frac{\hat{v}(0)n^2}{2 \eta} - \left( \mu-\mu_0 + \frac{v(0)}{2\eta} \right) n \right) - \frac{1}{\beta} S(c) \geq - \frac{1}{\beta} \ln\left( \sum_n \exp\left(-\beta \left( \frac{\hat{v}(0)n^2}{2N} - \left( \mu-\mu_0 + \frac{v(0)}{2N} \right) n \right) \right) \right).
    \label{eq:firstOrderLowerBound11}
\end{equation}  
From Lemma~\ref{lem:effectiveChemicalPotentialAPriori} we know that there exists a constant $c>0$ such that $c^{-1} \leq \mu - \mu_0 \leq c$. Using this and Corollary~\ref{cor:ComparisonCondensateGrandPotentials}, we see that 
\begin{align}
    &- \frac{1}{\beta} \ln\left( \sum_n \exp\left(-\beta \left( \frac{\hat{v}(0)n^2}{2\eta} - \left( \mu-\mu_0 + \frac{v(0)}{2\eta} \right) n \right) \right) \right) \nonumber \\
    &\hspace{4cm} \geq - \frac{1}{\beta} \ln\left( \int_{\mathbb{C}} \exp\left(-\beta \left( \frac{\hat{v}(0)|z|^4}{2\eta} - \left( \mu-\mu_0 + \frac{v(0)}{2\eta} \right) |z|^2 \right) \right) \de z \right) - C.
    \label{eq:firstOrderLowerBound12}
\end{align}
Let $\widetilde{N}_0$ be the expected particle number related to the Gibbs distribution $g$ in \eqref{eq:GibbsDistributionDiscrete} at inverse temperature $\beta$ and chemical potential $\widetilde{\mu} = \mu - \mu_0 + v(0)/(2\eta)$. Since $c^{-1} \leq \widetilde{\mu} \leq c$, we can follow the proof of \eqref{eq:LemChemPotBEC1} in Lemma~\ref{lem:ChemPotBECCont} to obtain $\widetilde{N}_0 = \widetilde{\mu} \eta/\hat{v}(0) + O(\exp(-c \eta^{1/6}))$. Using this, Proposition~\ref{prop:FreeEnergyBEC}, and $|\mu-\mu_0| \lesssim 1$, we get the lower bound
\begin{equation}
    - \frac{1}{\beta} \ln\left( \int_{\mathbb{C}} \exp\left(-\beta \left( \frac{\hat{v}(0)|z|^4}{2\eta} - \left( \mu-\mu_0 + \frac{v(0)}{2\eta} \right) |z|^2 \right) \right) \de z \right) \geq -\frac{5}{6 \beta} \ln(\eta) - \frac{(\mu - \mu_0)^2 \eta}{2 \hat{v}(0)} - C. \label{eq:firstOrderLowerBound13}
\end{equation}

In combination, \eqref{eq:firstOrderLowerBound4}, \eqref{eq:firstOrderLowerBound9}, and \eqref{eq:firstOrderLowerBound13} imply the lower bound
\begin{equation}
    \Phi_{h}(\beta,\mu) \geq \Phi_{h}^{\mathrm{id}}(\beta,\mu_0(\beta,M)) - \frac{(\mu - \mu_0(\beta,M))^2 \eta}{2 \hat{v}(0)} - \frac{5}{6 \beta} \ln(\eta) - C.
    \label{eq:firstOrderLowerBound14}
\end{equation}
We use \eqref{eq:firstOrderLowerBound14} if $N_0(\beta,M) \geq \eta^{5/6}$. If $N_0(\beta,M) < \eta^{5/6}$ we use another bound that we derive now.

Applications of \eqref{eq:firstOrderLowerBound2} and the Gibbs variational principle show that
\begin{equation}
    \tr[\mathcal{H}_{h,\eta} \Gamma] - \frac{1}{\beta} S(\Gamma) \geq \frac{1}{\beta} \trs \left[ \ln(1-\exp(-\beta (h-\mu_0))) \right] - \frac{\widetilde{\mu}^2 \eta}{2 \hat{v}(0)} + \frac{\hat{v}(0)}{2 \eta} \Tr\left[ \left( \mathcal{N} - \frac{\widetilde{\mu} \eta}{\hat{v}(0)} \right)^2 \Gamma \right]
    \label{eq:firstOrderLowerBound15}
\end{equation}
holds for any state $\Gamma \in \mathcal{S}$. The second term on the right-hand side is bounded from below by $-(\mu-\mu_0)^2/(2\hat{v}(0)) - C$ and the third term can be dropped for a lower bound. Moreover, using $h \varphi_0 = 0$, $-\mu_0 = \beta^{-1} \ln(1+N_0^{-1})$, and $N_0 \gesssim 1$, we see that 
\begin{equation}
    \trs \left[ \ln(1-\exp(-\beta (h-\mu_0))) \right] \geq \trs_+ \left[ \ln(1-\exp(-\beta Q (h-\mu_0))) \right] - \frac{\ln(N_0)}{\beta} - \frac{C}{\beta}.
    \label{eq:firstOrderLowerBound16}
\end{equation}
In combination, these considerations prove the lower bound
\begin{equation}
    \Phi_{h}(\beta,\mu) \geq \Phi_{h}^{\mathrm{id}}(\beta,\mu_0(\beta,M)) - \frac{(\mu - \mu_0(\beta,M))^2 \eta}{2 \hat{v}(0)} - \frac{\ln(N_0(\beta,M))}{\beta} - C \left( 1 + \frac{1}{\beta} \right). 
    \label{eq:firstOrderLowerBound17}
\end{equation}
When we put \eqref{eq:firstOrderUpperBound8}, \eqref{eq:firstOrderLowerBound14}, and \eqref{eq:firstOrderLowerBound17} together, we obtain a proof of \eqref{eq:boundsGrandPotential}. It remains to prove \eqref{eq:newUpperBoundAboveCriticalPoint}.

In this case we use \eqref{eq:firstOrderLowerBound2}, $\tr[\mathcal{N}^2 \Gamma] \geq (\tr[\mathcal{N} \Gamma])^2$ for a state $\Gamma$ on $\mathscr{F}$, and the Gibbs variational principle to estimate
\begin{align}
    \tr[(\mathcal{H}_{w,\eta} - \mu \mathcal{N})\Gamma] - \frac{1}{\beta} S(\Gamma) \geq& \sum_{p \in \Lambda^*} (w(p) - \mu_0(\beta,M)) \tr[ a_p^* a_p \Gamma ] - \frac{1}{\beta} S(\Gamma) \nn \\
    &+ \frac{\hat{v}}{2 \eta} (\tr[\mathcal{N} \Gamma])^2 + \left(\mu_0(\beta,M) - \mu - \frac{v(0)}{2 \eta} \right) \tr[\mathcal{N} \Gamma] \nn \\
    \geq& \widetilde{\Phi}_w^{\mathrm{id}}(\beta,\mu_0(\beta,M)) - \frac{\left(\mu - \mu_0(\beta,M) - \frac{v(0)}{2 \eta} \right)^2 \eta}{2 \hat{v}(0)}
\end{align}
To come to the last line we completed a square. From Lemma~\ref{lem:effectiveChemicalPotentialAPriori} we know that $| \mu_0(\beta,M) - \mu| \lesssim 1$, and hence the second term on the right-hand side is bounded from below by $(\mu - \mu_0(\beta,M))^2 \eta/(2 \hat{v}(0))$ minus a constant. Putting our lower bound and \eqref{eq:newUpperBoundAboveCriticalPoint} together, we obtain \eqref{eq:newBoundAboveCriticalPoint} and Proposition~\ref{prop:boundsPerturbedGrandPotential} is proved.
\subsection{Proof of Theorem~\ref{thm:firstOrderAPriori}}
\label{sec:proofFirstOrderApriori1}
We are now prepared to give the proof of Theorem~\ref{thm:firstOrderAPriori}. Let $B$ be a bounded self-adjoint operator on $L^2(\Lambda)$ and assume that $\varphi_0(x)=1$ is in the kernel of $B$. For $t \in \mathbb{R}$ we define the one-particle Hamiltonian
\begin{equation}
    h_t = h + t B.
    \label{eq:h-lambda-general}
\end{equation}
As long as $|t|$ is sufficiently small (depending on $h$ and $B$), we have
\begin{align} \label{eq:h-lambda-requirement}
h_t \gesim -\Delta  \quad \text{ and } \quad \|h_t+\Delta\|_{\infty} \lesim 1.
\end{align}
The grand potentials $\Phi_{h_t}(\beta,\mu)$ and $\Phi_{h_t}^{\mathrm{id}}(\beta,\mu_0(\beta,M))$ are concave in $t$, and differentiation with respect to $t$ yields
\begin{equation}
\frac{\partial \Phi_{h_t}(\beta,\mu) }{\partial t}  = \Tr [ A G_{h_t,\eta}(\beta,\mu) ] \quad \text{ and } \quad \frac{\partial \Phi_{h_t}^{\mathrm{id}}(\beta,\mu_0(\beta,M)) }{\partial t} = \trs_+ \left[ B \frac{1}{e^{\beta h_t} -1} \right].
\label{eq:proofFirstOrderBounds1}
\end{equation}
If the operator $A$ is positive then $\Phi_{h_t}(\beta,\mu)$ and $\Phi_{h_t}^{\mathrm{id}}(\beta,\mu_0(\beta,M))$ are additionally monotone increasing in $t$.

\bigskip \noindent
\textbf{Proof of the first bound in \eqref{eq:unperturbedFirstOrderBounds}:} We choose $B=p^2 |\varphi_p \rangle \langle \varphi_p|$ with $\varphi_p(x) = e^{\mathrm{i} p \cdot x}$, $p \in \Lambda_+^*$ and some fixed $t > 0$. Using the concavity and monotonicity of the grand potentials and Proposition~\ref{prop:boundsPerturbedGrandPotential}, we find 
\begin{align}
    p^2 \tr[a_p^* a_p G_{h,\eta}(\beta,\mu)] &\leq \frac{\Phi_{h_{t}}(\beta,\mu)- \Phi_{h}(\beta,\mu)}{t}  \nonumber \\
    &\leq \frac{\Phi_{h_{t}}^{\mathrm{id}}(\beta,\mu_0(\beta,M))- \Phi_{h}^{\mathrm{id}}(\beta,\mu_0(\beta,M))}{t} + \frac{C \eta^{2/3}}{t}.
    \label{eq:proofFirstOrderBounds2}
\end{align}
Since $\Phi_{h_{t}}^{\mathrm{id}}(\beta,\mu_0(\beta,M))$ is concave and monotone increasing in $t$, an application of \eqref{eq:proofFirstOrderBounds1} shows
\begin{align}
    &\frac{\Phi_{h_{t}}^{\mathrm{id}}(\beta,\mu_0(\beta,M))- \Phi_{h}^{\mathrm{id}}(\beta,\mu_0(\beta,M))}{t} \leq \frac{\partial \Phi_{h_{s}}^{\mathrm{id}}(\beta,\mu_0(\beta,M))}{\partial s} \Bigg|_{s = t} \nonumber \\
    &\hspace{3.5cm}= \langle \varphi_p, \frac{p^2}{e^{\beta (h_t-\mu_0(\beta,M))} -1} \varphi_p \rangle \leq p^2 \langle \varphi_p, (\beta (h_t-\mu_0))^{-1} \varphi_p \rangle.
    \label{eq:proofFirstOrderBounds3}
\end{align}
To obtain the last bound we additionally used $(\exp(x)-1)^{-1} \leq x^{-1}$ for $x>0$. The function $x \mapsto -1/x$ is operator monotone, see e.g. \cite[Proposition~V.1.6]{Bhatia1997}. We use this, $h_t \gesssim - \Delta$, and $\mu_0 < 0$ to see that the right-hand side of \eqref{eq:proofFirstOrderBounds3} is bounded from a above by a constant times $\beta^{-1}$. When we put the above considerations together, we obtain a proof of the first bound in \eqref{eq:unperturbedFirstOrderBounds}. 

\bigskip \noindent
\textbf{Proof of the second bound in \eqref{eq:unperturbedFirstOrderBounds}:} We choose $B= Q \cos(p \cdot x) Q$ with $Q$ in \eqref{eq:excitationFockSpace}, $p \in \Lambda_+^*$ and some fixed $t > 0$ in such a way that $h \pm t Q \cos(p \cdot x) Q \gesssim - \Delta$ holds. We also recall the notation $B_p = \de \Upsilon(Q \cos(p \cdot x) Q)$. Since the operator $B$ is not positive the grand potentials cannot be expected to be monotone increasing in $t$. We therefore use the bound
\begin{align}
    | \tr[B_p  G_{h,\eta}(\beta,\mu) ]| \leq& t^{-1} \left( | \Phi_{h}(\beta,\mu) - \Phi_{h_{-t}}(\beta,\mu) | + | \Phi_{h_{t}}(\beta,\mu) - \Phi_{h}(\beta,\mu) | \right) \nonumber \\
    \leq& t^{-1} \left( | \Phi_{h}^{\mathrm{id}}(\beta,\mu_0(\beta,M)) - \Phi_{h_{-t}}^{\mathrm{id}}(\beta,\mu_0(\beta,M)) | + | \Phi_{h_{t}}^{\mathrm{id}}(\beta,\mu_0(\beta,M)) - \Phi_{h}^{\mathrm{id}}(\beta,\mu_0(\beta,M)) | \right) \nonumber \\
    &+ C \eta^{2/3}/t, 
    \label{eq:proofFirstOrderBounds4}
\end{align}
where the second inequality follows from Proposition~\ref{prop:boundsPerturbedGrandPotential}. We also have
\begin{align}
    &| \Phi_{h_{t}}^{\mathrm{id}}(\beta,\mu_0(\beta,M)) - \Phi_{h}^{\mathrm{id}}(\beta,\mu_0(\beta,M)) | \nonumber \\
    &\hspace{3cm} \leq \int_0^t \left| \trs_+ \left[ Q \cos(p \cdot x) Q \frac{1}{\exp(\beta(h + s Q \cos(p \cdot x) Q) - \mu_0(\beta,M)) - 1} \right] \right| \de s
    \label{eq:proofFirstOrderBounds5}
\end{align}
and 
\begin{equation}
    \trs_+ \left[ Q \cos(p \cdot x) Q \frac{1}{\exp(\beta( -\Delta - \mu_0(\beta,M)) - 1} \right] = 0. 
    \label{eq:proofFirstOrderBounds6}
\end{equation}
In combination, \eqref{eq:proofFirstOrderBounds5}, \eqref{eq:proofFirstOrderBounds6}, $\Vert Q \cos(p \cdot x) Q \Vert_{\infty} < +\infty$, and Lemma~\ref{lem:perturbedtrace} imply
\begin{align}
    | \Phi_{h_{t}}^{\mathrm{id}}(\beta,\mu_0(\beta,M)) - \Phi_{h}^{\mathrm{id}}(\beta,\mu_0(\beta,M)) | \lesssim t \beta^{-1}.
    \label{eq:proofFirstOrderBounds7}
\end{align}
The first term on the right-hand side of \eqref{eq:proofFirstOrderBounds4} can be bounded with the same argument by the right-hand side of \eqref{eq:proofFirstOrderBounds7}, too. Putting the above considerations together, we find
\begin{equation}
    | \tr[B_p  G_{h,\eta}(\beta,\mu) ]| \lesssim \eta^{2/3},
    \label{eq:proofFirstOrderBounds8}
\end{equation}
which proves the second bound in \eqref{eq:unperturbedFirstOrderBounds}.

\bigskip \noindent
\textbf{Proof of \eqref{eq:perturbedFirstOrderBounds-b}:} We choose $B= Q$ with $Q$ in \eqref{eq:excitationFockSpace} and some fixed $t > 0$ in such a way that $h \pm t Q \gesssim - \Delta$ holds. We also recall the identity $\de \Upsilon(Q) = \mathcal{N}_+$. The argument in \eqref{eq:proofFirstOrderBounds2} and \eqref{eq:proofFirstOrderBounds3} and an application of Lemma~\ref{lem:perturbedtrace} show
\begin{align}
    \tr[ \mathcal{N}_+ G_{h,\eta}(\beta,\mu) ] &\leq \trs_+\left[ \frac{1}{\exp(\beta Q (h_t-\mu_0(\beta,M)) - 1} \right] + \frac{C \eta^{2/3}}{t} \nonumber \\
    &\leq \trs_+\left[ \frac{1}{\exp(\beta Q( -\Delta - \mu_0(\beta,M)) - 1} \right] + C \eta^{2/3} (1+t^{-1}).
    \label{eq:proofFirstOrderBounds9}
\end{align}
When we replace $t$ by $-t$ a similar argument also shows
\begin{equation}
    \tr[ \mathcal{N}_+ G_{h,\eta}(\beta,\mu) ] \geq \trs_+\left[ \frac{1}{\exp(\beta Q( -\Delta - \mu_0(\beta,M)) - 1} \right] - C \eta^{2/3} (1+t^{-1}).
    \label{eq:proofFirstOrderBounds10}
\end{equation}
With a similar argument we obtain the same bounds with $G_{h,\eta}(\beta,\mu)$ replaced by $G_{\eta}(\beta,\mu)$. This proves the first bound in \eqref{eq:perturbedFirstOrderBounds-b}. The second bound is a consequence of the bounds for $G_{\eta}(\beta,\mu)$. 

\bigskip \noindent
\textbf{Proof of \eqref{eq:particleNumberBoundsPerturbedState}:} To prove this bound we perturb the interaction potential. More precisely, we replace $\hat{v}(0)$ by $\hat{v}(0) + t$ with some fixed $t > 0$. In the following, we denote the grand potential and the Gibbs state related to the Hamiltonian with $v$ shifted in this way by $\Phi_{h,t}(\beta,\mu)$ and $G_{h,\eta,t}(\beta,\mu)$, respectively. The function $\Phi_{h,t}(\beta,\mu)$ is monotone increasing and concave in $t$ and we have
\begin{equation}
\frac{\partial \Phi_{h,t}(\beta,\mu)}{\partial t} = \eta^{-1} \Tr [ \mathcal{N}^2 G_{h,\eta,t}(\beta,\mu) ].
\label{eq:proofFirstOrderBounds11}
\end{equation}
In combination with an application of Proposition~\ref{prop:boundsPerturbedGrandPotential}, this implies
\begin{align}
\eta^{-1} \Tr [ \mathcal{N}^2 G_{h,\eta,t}(\beta,\mu) ] \leq& \frac{\Phi_{h,t}(\beta,\mu)- \Phi_{h}(\beta,\mu)}{t}  \nonumber \\
\leq& \frac{1}{t} \bigg[ \Phi^{\mathrm{id}}_h(\beta,\mu_0(\beta,M_t)) - \Phi^{\mathrm{id}}_h(\beta,\mu_0(\beta,M)) - \frac{(\mu - \mu_0(\beta,M_t))^2 \eta}{2( \hat{v}(0) + t) } +  \frac{(\mu - \mu_0(\beta,M))^2 \eta}{2\hat{v}(0) }  \nonumber \\
&\hspace{0.35cm}+ \Theta^{\mathrm{BEC}}(\beta,N_0(\beta,M_t)) - \Theta^{\mathrm{BEC}}(\beta,N_0(\beta,M)) + C \eta^{2/3} \bigg],
\label{eq:proofFirstOrderBounds12}
\end{align}
where $M_t$ is defined as in \eqref{eq:particleNumber} with $\hat{v}(0)$ replaced by $\hat{v}(0) + t$ in \eqref{eq:GrantCanonicalEffectiveIddealGasChemPot}. From Lemma~\ref{lem:effectiveChemicalPotentialPerturbation2} in Appendix~\ref{app:effectiveChemicalPotential} we know that $0 < \mu_0(\beta,M) - \mu_0(\beta,M_t) \lesssim t$. With this bound and part~(c) of Lemma~\ref{lem:effectiveChemicalPotentialAPriori} we check that the difference of the two grand potentials on the right-hand side of \eqref{eq:proofFirstOrderBounds12} is bounded from above by
\begin{equation}
\sum_{p \in \Lambda^*} \frac{\beta(\mu_0(\beta,M) - \mu_0(\beta,M_t))}{\exp(\beta(p^2 - \mu_0(\beta,M))) - 1} \lesssim t \eta^{1/3}.
\label{eq:proofFirstOrderBounds13}
\end{equation}
Moreover, an application of part~(b) of Lemma~\ref{lem:effectiveChemicalPotentialAPriori} allows us to see that the sum of the third and the fourth term on the right-hand side of \eqref{eq:proofFirstOrderBounds12} is bounded from above by a constant times $\eta/t$. Finally, the terms in the third line of \eqref{eq:proofFirstOrderBounds12} are bounded by a constant times $\eta^{2/3} \ln(\eta)/t$. When we put these considerations together and use that $t>0$ has been chosen independently of $\eta$, we find
\begin{equation}
\eta^{-1} \Tr [ \mathcal{N}^2 G_{h,\eta,t}(\beta,\mu) ] \lesssim \eta.
\label{eq:proofFirstOrderBounds14}
\end{equation}
This proves \eqref{eq:particleNumberBoundsPerturbedState}.

\bigskip \noindent
\textbf{Proof of \eqref{eq:firstorderapapbound}:} We choose $B=(p^2-\mu_0(\beta,M)) |\varphi_p \rangle \langle \varphi_p|$ with $p \in \Lambda_+^*$ and some fixed $t > 0$. The proof is almost literally the same as that of the first bound in \eqref{eq:unperturbedFirstOrderBounds}, and therefore left to the reader. The proof of Theorem~\ref{thm:firstOrderAPriori} is complete.

\section{A bound for the chemical potential}
\label{sec:boundsChemicalPotential}
In this section we provide a bound for the chemical potential $\mu_{\beta,N}$. The results in the previous section are proved for Gibbs states with a general chemical potential that satisfies certain bounds. From Lemma~\ref{lem:roughBoundChemicalPotential} below we know that these properties are shared by the Gibbs state $G_{\beta,N}$. More details on this can be found in Remark~\ref{rem:interactingChemicalPotential}.

\begin{lemma}
	\label{lem:roughBoundChemicalPotential}
	Let $v$ satisfy the assumptions of Theorem~\ref{thm:norm-approximation}. We consider the limit $N \to \infty$, $\beta N^{2/3} \to \kappa \in (0,\infty)$. The chemical potential satisfies the bound
	\begin{align}
		| \mu_{\beta,N} - \hat{v}(0) - \mu_0(\beta,N) | \lesssim& N^{1/3} \sqrt{ \frac{1}{N} + \frac{1}{\beta N_0^2(\beta,N) + \beta^{-1} ( \max\{ 1, 1/(\beta N_0(\beta,N)) \} )^{-1/2}}  } \nn \\
        &+ \frac{1}{\beta} \min \left\{ \frac{1}{N_0(\beta,N)}, \frac{N_0(\beta,N)}{N^{4/3}}  \right\}.
		\label{eq:boundChemicalPotential}
	\end{align}
\end{lemma}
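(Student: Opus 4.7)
The plan is a first-order Griffiths (Hellmann-Feynman) argument based on convexity of the canonical free energy in the particle number. Since the coupling $1/N$ in $\mathcal{H}_N$ depends on the same $N$ as the particle-number constraint, I would treat the coupling as a fixed parameter and vary only the constraint: define
\[
\widetilde{F}(\beta,M) := \inf_{\Gamma : \Tr[\mathcal{N}\Gamma]=M}\Bigl\{\Tr[\mathcal{H}_N \Gamma] - \frac{1}{\beta} S(\Gamma)\Bigr\},
\]
so that $\widetilde{F}(\beta,N) = F(\beta,N)$, the map $M\mapsto \widetilde{F}(\beta,M)$ is convex (the free energy functional is convex in $\Gamma$ and the constraint is affine in $M$), and $\mu_{\beta,N}$ equals the right derivative at $M=N$. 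Convexity yields, for every admissible $h>0$,
\[
\frac{\widetilde{F}(\beta,N) - \widetilde{F}(\beta,N-h)}{h} \leq \mu_{\beta,N} \leq \frac{\widetilde{F}(\beta,N+h) - \widetilde{F}(\beta,N)}{h}.
\]

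I would then sandwich $\widetilde{F}(\beta,M)$ around the smooth approximation $G(M) := F_0(\beta,M) + \hat{v}(0) M^2/(2N)$. For the upper bound, I adapt the trial state of Section~\ref{sec:trialstate} so that the parameters $\widetilde{N}_0$, $N_0$, $\mu_0$ are evaluated at $(\beta,M)$; since the coupling stays $1/N$, the computation of Section~\ref{sec:upperboundfreeenergytrialstate} produces the Hartree leading energy $\hat{v}(0) M^2/(2N)$ (rather than $\hat{v}(0) M/2$) together with a Bogoliubov and condensate correction $\Delta(M) \lesssim M^{2/3}\ln M$ controlled via Proposition~\ref{prof:freeEnergyEffectiveCondensateTheoryMain}. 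For the matching lower bound, I apply Lemma~\ref{lem:OnsagersLemma} together with $\Tr[\mathcal{N}^2\Gamma] \geq M^2$ and the Gibbs variational principle for the ideal gas to obtain $\widetilde{F}(\beta,M) \geq G(M) - C$. This gives $|\widetilde{F}(\beta,M) - G(M)| \lesssim M^{2/3}\ln M$ uniformly.

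Inserting this two-sided bound into the convexity sandwich reduces the problem to a finite-difference approximation of $G'$. A Taylor expansion gives $G'(N) = \mu_0(\beta,N) + \hat{v}(0)$ (the target) and $G''(N) = \partial_M \mu_0(\beta,N) + \hat{v}(0)/N$, where the identity $\partial_M \mu_0 = (\beta\,\mathrm{Var}(\mathcal{N})_{G^{\mathrm{id}}_{\beta,N}})^{-1}$ splits the ideal-gas particle-number variance into a condensate piece $\sim N_0^2$ plus a thermal piece $\sim \beta^{-2}(\max\{1,1/(\beta N_0)\})^{-1/2}$; this is precisely the denominator of the first term in \eqref{eq:boundChemicalPotential}. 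Balancing the error contributions $h\,G''(N)$ against $(\Delta(N)+C)/h$ by taking $h \sim \sqrt{\Delta(N)/G''(N)}$ yields the first term of the bound. The second term $\beta^{-1}\min\{1/N_0,\, N_0/N^{4/3}\}$ arises from the $h^2 G'''(N)$ Taylor remainder combined with the residual mismatch between $\mu_0(\beta,M)$ and $\mu_0(\beta,N_0(\beta,M))$ sitting inside $F^{\mathrm{Bog}}(\beta,M)$. The main technical obstacle is producing sharp, regime-uniform estimates on $G''$ and $G'''$ across the BEC critical point where $|\mu_0|$ transitions between $\sim \beta^{-1}$ and $\sim (\beta N)^{-1}$; these can be assembled from the auxiliary lemmas in Appendix~\ref{app:effectiveChemicalPotential}.
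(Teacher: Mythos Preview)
Your strategy matches the paper's: freeze the coupling $1/N$, use convexity of $M\mapsto\widetilde F(\beta,M)$ for a finite-difference sandwich on $\mu_{\beta,N}$, expand around a smooth reference, and control the second-order remainder via the ideal-gas particle-number variance. The bookkeeping differs slightly. The paper's reference is $F_0^+(\beta,M)+\hat v(0)M^2/(2N)+\Theta^{\mathrm{BEC}}(\beta,N_0(\beta,M))$ with an $O(N^{2/3})$ error (no log), and the second term $\beta^{-1}\min\{1/N_0,\,N_0/N^{4/3}\}$ in \eqref{eq:boundChemicalPotential} arises from the finite differences of $\Theta^{\mathrm{BEC}}$ and $F_0^{\mathrm{BEC}}\sim -\beta^{-1}\ln N_0$, via the estimate $|N_0(\beta,N\pm M)-N_0(\beta,N)|\lesssim M\min\{1,N_0^2/N^{4/3}\}$. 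Your attribution of this term to $h^2G'''$ and to a ``mismatch inside $F^{\mathrm{Bog}}$'' is incorrect --- $F^{\mathrm{Bog}}$ plays no role in this argument. With your packaging $G=F_0+\hat v(0)M^2/(2N)$ and $\Theta^{\mathrm{BEC}}$ absorbed into an $O(N^{2/3}\ln N)$ error, the second term does not appear separately and you pick up an extra $\sqrt{\ln N}$ on the first; this is harmless because the second term is always dominated by the first and the lemma's only downstream use is the crude conclusion $-N^{2/3}\lesssim\mu_{\beta,N}\lesssim 1$, but you should not claim to recover the exact stated bound.
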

\begin{remark}
    \label{rem:roughBoundChemicalPotential}
    The right-hand side of \eqref{eq:boundChemicalPotential} scales as $N^{-1/6}$ if $N_0 \sim N$ (condensed phase) and as $N^{1/6}$ if $N_0 \sim 1$ (non-condensed phase). In the parameter regime around the critical point, where $1 \ll N_0 \ll N$, it interpolates between these two behaviors. We learn from \eqref{eq:boundChemicalPotential} and the behavior of $\mu_0$ described below \eqref{eq:crittemp} that $\mu_{\beta,N}$ satisfies the bound
    \begin{equation}
        -N^{2/3} \lesssim \mu_{\beta,N} \lesssim 1.
    \end{equation}
\end{remark}
The remainder of this section is devoted to the proof of Lemma~\ref{lem:roughBoundChemicalPotential}. We fix $\eta > 0$, recall the definition of $\mathcal{H}_{\eta}$ in \eqref{eq:FockSpaceHamiltonian1}, choose $\beta = \kappa \beta_{\mathrm{c}}(\eta)$ with $\kappa \in (0,\infty)$, and introduce the notation
\begin{equation}
    F_{\eta}(\beta,N) = -\frac{1}{\beta} \ln\left( \tr[\exp(-\beta( \mathcal{H}_{\eta} - \mu_{\beta,N,\eta}\mathcal{N}))] \right) + \mu_{\beta,N,\eta} N.
    \label{eq:etaFreeEneregy}
\end{equation}
Here the chemical potential $\mu_{\beta,N,\eta}$ is chosen such that the expected number of particles in the Gibbs state $G_{\beta,N,\eta}$ related to $F_{\eta}(\beta,N)$ equals $N$. By definition, we have $F(\beta,N) = F_{N}(\beta,N)$ and $\mu_{\beta,N} = \mu_{\beta,N,N}$. Later we will choose $\eta = N$. During our analysis we therefore assume that there exists a constant $0 < c < 1$ such that $ c N \leq \eta \leq c^{-1} N $ holds.

The chemical potential equals a first derivative of the free energy:
\begin{equation}
	\frac{\partial F_{\eta}(\beta,N)}{\partial N} = \mu_{\beta,N,\eta}.
	\label{eq:boundChemicalPotential1}
\end{equation}
Moreover, differentiation of both sides of the equation $\tr[\mathcal{N} G_{\beta,N,\eta}] = N$ with respect to $N$ yields
\begin{equation}
	\frac{\partial \mu_{\beta,N,\eta}}{\partial N} = \frac{1}{\beta ( \tr[\mathcal{N}^2 G_{\beta,N,\eta}] - (\tr[\mathcal{N} G_{\beta,N,\eta}])^2 )} > 0,
	\label{eq:boundChemicalPotential2}
\end{equation}
and we conclude that the map $N \mapsto F_{\eta}(\beta,N)$ is convex. Using this convexity and \eqref{eq:boundChemicalPotential1}, we see that
\begin{equation}
	\frac{F_{\eta}(\beta,N) - F_{\eta}(\beta,N-M)}{M} \leq \mu_{\beta,N,\eta} \leq \frac{F_{\eta}(\beta,N+M) - F_{\eta}(N)}{M}
	\label{eq:boundChemicalPotential3}
\end{equation}
holds for any $N,M > 0$ with $N-M \geq 0$. To make use of the above inequality, we now derive upper and lower bounds for $F_{\eta}(\beta,N)$. 

\subsubsection*{Free energy bounds}

With the techniques in Sections~\eqref{sec:upperBoundGrandPotential} and \eqref{sec:lowerBoundGrandPotential} it is not difficult to check that the free energy satisfies 
\begin{equation}
	F_{\eta}(\beta,N) = F^+_0(\beta,N) + \frac{\hat{v}(0) N^2}{2 \eta} + \Theta^{\mathrm{BEC}}(\beta,N_0(\beta,N)) + O(N^{2/3})
	\label{eq:boundChemicalPotential5}
\end{equation}
with $F^+_0$ in \eqref{eq:freeEnergyIdealGasCloud} and $\Theta^{\mathrm{BEC}}$ in \eqref{eq:ThetaBEC}. It should be highlighted that we cannot simply use \eqref{eq:boundsGrandPotential} and the relation $F(\beta,N) = \Phi(\beta,\mu_{\beta,N}) + \mu_{\beta,N} N$ between the free energy and the grand potential to obtain the asymptotic expansion in \eqref{eq:boundChemicalPotential5}. This is because the asymptotic expansion of $\Phi(\beta,\mu_{\beta,N}) + \mu_{\beta,N} N$ depends on $\mu_{\beta,N}$ and we have no a-priori bounds for it. 

Let us briefly comment on some changes that the proof of \eqref{eq:boundChemicalPotential5} requires with respect to the analysis in Sections~\eqref{sec:upperBoundGrandPotential} and \eqref{sec:lowerBoundGrandPotential}. As trial state for the upper bound we choose the state in \eqref{eq:trialStateFirstOrderEstimates} with $h = -\Delta$ and $M=N$. With these choices its expected number of particles equals $N$. The computation of the free energy of this trial state is a simplified version of the analysis in Section~\eqref{sec:upperBoundGrandPotential} because $-\Delta$ is diagonal in momentum space. In the lower bound we have to use the constraint $\Tr[\mathcal{N} G_{\beta,N}] = N$ because
$\mu = 0$. More precisely, when we minimize the effective free energy function in \eqref{eq:firstOrderLowerBound11} we have to use that $\sum_n n c_n = N$ holds.  The rest of the lower bound goes through without additional complications.

\subsubsection*{Bounds on the chemical potential} 

In the following we assume $M \ll N$. This, in particular, implies $N_0(\beta,N+M) \simeq N_0(\beta,N)$ and $\mu_0(\beta,N+M) \simeq \mu_0(\beta,N)$. When we combine \eqref{eq:boundChemicalPotential3} and \eqref{eq:boundChemicalPotential5}, we find 
\begin{align}
    &\frac{F^+_0(\beta,N) - F^+_0(\beta,N-M) + \frac{\hat{v}(0)(2N M - M^2)}{2\eta} + \Theta^{\mathrm{BEC}}(\beta,N_0(\beta,N)) - \Theta^{\mathrm{BEC}}(\beta,N_0(\beta,N-M)) - C N^{2/3} }{M} \leq \mu_{\beta,N,\eta} \nonumber \\
    &\leq \frac{F^+_0(\beta,N+M) - F^+_0(\beta,N) + \frac{\hat{v}(0)(2N M + M^2)}{2\eta} + \Theta^{\mathrm{BEC}}(\beta,N_0(\beta,N+M)) - \Theta^{\mathrm{BEC}}(\beta,N_0(\beta,N)) + C N^{2/3} }{M}.
    \label{eq:boundChemicalPotential10}
\end{align}
To obtain a bound for $F^+_0(\beta,N) - F^+_0(\beta,N-M)$ we write $F_0^+ = F_0 - F_0^{\mathrm{BEC}}$ with $F_0^{\mathrm{BEC}}$ in \eqref{eq:freeEnergyIdealGasBEC} and $F_0$ below \eqref{eq:CorollaryNoBEC}, and first consider the term $F_0(\beta,N) - F_0(\beta,N-M)$. Eqs.~\eqref{eq:etaFreeEneregy}, \eqref{eq:boundChemicalPotential1} and a first order Taylor approximation allow us to write
\begin{equation}
    \pm \frac{F_0(\beta,N \pm M) \mp F_0(\beta,N)}{M} = \mu_0(\beta,N) \pm \frac{M}{2\beta \textbf{Var}_{G^{\mathrm{id}}_{\beta,\xi}}(\mathcal{N}) }, 
    \label{eq:boundChemicalPotential11}
\end{equation}
where $\textbf{Var}_{G^{\mathrm{id}}_{\beta,\xi}}(\mathcal{N})$ denotes the variance of the operator $\mathcal{N}$ in the state Gibbs state $G^{\mathrm{id}}_{\beta,\xi}$ of the ideal gas in \eqref{eq:GibbsStateIdealGas} and $\xi \in [N,N \pm M]$. Let us introduce the grand potential 
\begin{equation}
    \Phi_0(\beta,\mu_0) = -\frac{1}{\beta} \ln\left( \tr \exp(-\beta \de \Upsilon(-\Delta - \mu_0)) \right) = \frac{1}{\beta} \sum_{p \in \Lambda^*} \ln\left( 1 - \exp(-\beta(p^2 - \mu_0)) \right)
    \label{eq:grandPotentialIdealGas}
\end{equation}
of the ideal gas. The above variance can be written in terms of the grand potential as
\begin{equation}
    \textbf{Var}_{G^{\mathrm{id}}_{\beta,\xi}}(\mathcal{N}) = -\frac{1}{\beta} \frac{\partial^2 \Phi_0}{\partial \mu^2_0}(\beta,\mu_0(\beta,\xi)) = \sum_{p \in \Lambda^*} \frac{1}{ 4 \sinh^2 \left( \frac{\beta(p^2 - \mu_0(\beta,\xi))}{2} \right) }.  
    \label{eq:boundChemicalPotential12}
\end{equation}
We need a lower bound for the term on the right-hand side of the above equation that is uniform in $\xi \in [N-M,N+M]$. Let us choose $c > 0$ such that $c/\beta \geq -2 \mu_0$ holds. This is always possible because $-\mu_0=\ln(1+N_0^{-1})$ and $N_0 \gesssim 1$, which is a consequence of our assumption on $\beta$. With this choice of $c$ we find
\begin{equation}
    \sum_{p \in \Lambda^*} \frac{1}{ 4 \sinh^2 \left( \frac{\beta(p^2 - \mu_0)}{2} \right) } \gesssim \left( \frac{1}{\beta \mu_0} \right)^2 + \sum_{p \in \Lambda_+^*, - \mu_0 < p^2 \leq c/\beta} \left( \frac{1}{\beta p^2} \right)^2  \gesssim \left( \frac{1}{\beta \mu_0} \right)^2 + \beta^{-2} (\max\{1,-\mu_0\} )^{-1/2}. \label{eq:boundChemicalPotential14} 
\end{equation}
We use $-\mu_0 \sim (\beta N_0)^{-1}$ and $N_0(\beta,N+M) \simeq N_0(\beta,N)$ to see that the right-hand side is bounded from below by a constant times $N_0(\beta,N)^2 + \beta^{-2} ( \max\{ 1, 1/(\beta N_0(\beta,N)) \} )^{-1/2}$. That is, we have
\begin{equation}
    \left| \pm \frac{F_0(\beta,N \pm M) \mp F_0(\beta,N)}{M} - \mu_0(\beta,N) \right| \lesssim \frac{M \left( N_0(\beta,N)^2 + \beta^{-2} ( \max\{ 1, 1/(\beta N_0(\beta,N)) \} )^{-1/2} \right)}{\beta}.
    \label{eq:boundChemicalPotential14A1}
\end{equation}

Next, we have a closer look at $F_0^{\mathrm{BEC}}(\beta,N)$. A short computation that uses the identity $\mu_0 = -\beta^{-1} \ln(1+N_0^{-1})$ shows
\begin{equation}
    F_0^{\mathrm{BEC}}(\beta,N) = \frac{1}{\beta} \ln\left( 1-\exp(\beta \mu_0(\beta,N)) \right) + \mu_0(\beta,N) N_0(\beta,N) = \frac{-\ln(N_0(\beta,N))}{\beta} + O(N^{2/3}).
    \label{eq:boundChemicalPotential14A2}
\end{equation}
With a first order Taylor expansion we also get
\begin{equation}
    N_0(\beta,N \pm M) = N_0(\beta,N) \pm \frac{\beta M \frac{\partial \mu_0}{\partial N} (\beta,\xi)}{4 \sinh^2\left( \frac{-\beta \mu_0(\beta,\xi)}{2} \right)}
    \label{eq:boundChemicalPotential14A3}
\end{equation}
with some $\xi \in \{ N \pm tM \ | \ t \in [0,1] \}$. In combination with Lemma~\ref{lem:BoundDerivativeCHemPotWrtN} this shows
\begin{equation}
    | N_0(\beta,N \pm M) - N_0(\beta,N) | \lesssim M \min\{ 1, N^2_0(\beta,N)/N^{4/3} \},
    \label{eq:boundChemicalPotential14A4}
\end{equation}
and hence
\begin{equation}
    | F_0^{\mathrm{BEC}}(\beta,N \pm M) - F_0^{\mathrm{BEC}}(\beta,N) | \lesssim \frac{M \min\{ 1, N^2_0(\beta,N)/N^{4/3} \}}{\beta N_0(\beta,N)} + N^{2/3} 
    \label{eq:boundChemicalPotential14A4a}
\end{equation}
Similarly, we check that $| \Theta^{\mathrm{BEC}}(\beta,N \pm M) - \Theta^{\mathrm{BEC}}(\beta,N) | $ is bounded by the right-hand side of \eqref{eq:boundChemicalPotential14A4a}, too.

In combination, \eqref{eq:boundChemicalPotential10}--\eqref{eq:boundChemicalPotential14A4a} imply the bound
\begin{align}
    | \mu_{\beta,N,\eta} - \hat{v}(0) (N/\eta) - \mu_0(\beta,N) | \lesssim& \frac{M}{N} + \frac{N^{2/3} }{M} + \frac{M}{\beta N_0^2(\beta,N) + \beta^{-1} ( \max\{ 1, 1/(\beta N_0(\beta,N)) \} )^{-1/2} } \nn \\
    &+ \frac{1}{\beta} \min \left\{ \frac{1}{N_0(\beta,N)}, \frac{N_0(\beta,N)}{N^{4/3}}  \right\}.
    \label{eq:boundChemicalPotential15} 
\end{align}
The optimal choice for $M$ reads
\begin{equation}
    M = \frac{N^{1/3} }{ A^{1/2} }  \quad \text{ with } \quad A = \frac{1}{N} + \frac{1}{\beta N_0^2(\beta,N) + \beta^{-1} ( \max\{ 1, 1/(\beta N_0(\beta,N)) \} )^{-1/2} }
\end{equation}
and satisfies the bound $M \lesssim N^{5/6} \ll N$. The error term on the right-hand side of \eqref{eq:boundChemicalPotential15} is bounded by a constant times
\begin{equation}
    N^{1/3} \sqrt{ \frac{1}{N} + \frac{1}{\beta N_0^2(\beta,N) + \beta^{-1} ( \max\{ 1, 1/(\beta N_0(\beta,N)) \} )^{-1/2}}  }. 
\end{equation}
With the choice $\eta = N$ this proves the claim of Lemma~\ref{lem:roughBoundChemicalPotential}.

\section{The Gibbs state part II: second order correlation inequalities} \label{sec:correlationInequalities}
In this section we prove several correlation inequalities for the Gibbs state $G_{\beta,N}$ in \eqref{eq:interactingGibbsstate}. The proof of these inequalities is based on the first order estimates in Theorem \ref{thm:firstOrderAPriori} and a new abstract correlation inequality. A crucial ingredient for the proof of the abstract correlation inequality is an infinite-dimensional version of Stahl's theorem, see \cite{Stahl2013}, which may be of separate interest. 
\subsection{An infinite-dimensional version of Stahl's theorem}
\label{sec:StahlsTheorem}
The goal of this section is to prove the following theorem.

\begin{theorem}[Stahl's theorem in infinite dimensions]
\label{prop:stahlInfinite}
    Let $A$ and $B$ be two self-adjoint operators on a separable complex Hilbert space. We assume that $A + t B$ is self-adjoint on the domain of $A$ and that $\exp(-(A +t B))$ is trace-class for $t \in [-1,1]$. We also define the function
    \begin{equation}
        Z(t) = \Tr[ \exp(-(A+tB)) ].
        \label{eq:Stahl2}
    \end{equation}
    Then there exists a nonnegative Borel measure $\mu$ on $\mathbb{R}$ with
    \begin{equation}
        \int_{-\infty}^{\infty} \cosh(s) \de \mu(s) < + \infty
        \label{eq:Stahl3}
    \end{equation}
    such that
    \begin{equation}
        Z(t) = \int_{-\infty}^\infty e^{-ts} \de \mu(s)
        \label{eq:Stahl4}
    \end{equation}
    holds for all $t \in [-1,1]$. 
\end{theorem}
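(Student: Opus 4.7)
The plan is to reduce to finite dimensions by a spectral truncation of $A$, apply the classical Stahl/BMV theorem in finite dimensions after an additive shift that renders the perturbation positive, and pass to the limit via a tightness and weak-convergence argument for the resulting positive measures. In finite dimensions, for Hermitian $A,B$ with $\|B\|\le M$, the matrix $B+MI$ is positive semidefinite and classical Stahl produces a positive Borel measure $\nu$ on $[0,\infty)$ with $\Tr[\exp(-A - t(B+MI))] = \int_0^\infty e^{-ts}\de\nu(s)$, an identity between entire functions of $t$. Multiplying by $e^{tM}$ and changing variable $s\mapsto s-M$ produces a positive Borel measure $\tilde\nu$ on $[-M,\infty)\subset\mathbb{R}$ realising $\Tr[\exp(-A-tB)] = \int_{\mathbb{R}} e^{-ts}\de\tilde\nu(s)$ for every $t\in\mathbb{R}$.

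For the truncation step I will use that trace-class $e^{-A}$ forces $A$ to have purely discrete spectrum bounded below, with an orthonormal eigenbasis $\{e_k\}$ for eigenvalues $\alpha_1\le \alpha_2\le\cdots\to +\infty$. With $P_n$ the projection onto $\mathrm{span}(e_1,\ldots,e_n)$ and $A_n=P_nAP_n$, $B_n=P_nBP_n$, the previous paragraph supplies positive Borel measures $\mu_n$ on $\mathbb{R}$ with $Z_n(t):=\Tr[\exp(-A_n-tB_n)] = \int_{\mathbb{R}} e^{-ts}\de\mu_n(s)$ for every $t\in\mathbb{R}$. The min-max principle applied to the quadratic form $A+tB$ on the nested subspaces $P_n\mathcal{H}$ shows that the $k$th eigenvalue of $A_n+tB_n$ decreases monotonically to the $k$th eigenvalue of $A+tB$, and termwise monotone convergence on the sums of exponentials yields $Z_n(t)\nearrow Z(t)$ for every $t\in[-1,1]$.

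The final step is a tightness and weak-convergence argument. Evaluating the Laplace representation at $t=\pm 1$ gives $\int \cosh(s)\de\mu_n(s) = \tfrac{1}{2}\bigl(Z_n(1)+Z_n(-1)\bigr) \le \tfrac{1}{2}\bigl(Z(1)+Z(-1)\bigr) < \infty$ uniformly in $n$. Chebyshev against the coercive weight $\cosh$ then gives tightness of $\{\mu_n\}$, and Prokhorov extracts a weak subsequential limit $\mu_{n_k}\rightharpoonup\mu$. For $t\in(-1,1)$ the integrand $e^{-ts}$ is dominated by $\cosh(\tau s)$ for some $|t|<\tau<1$ and is therefore uniformly $\mu_n$-integrable, which upgrades weak convergence to convergence of Laplace integrals and gives $\int e^{-ts}\de\mu_{n_k}(s)\to\int e^{-ts}\de\mu(s)=Z(t)$. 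Combined with a monotone/Fatou argument at the closed endpoints, this yields $Z(t)=\int e^{-ts}\de\mu(s)$ throughout $[-1,1]$, and the growth bound $\int\cosh(s)\de\mu(s)<\infty$ drops out from Fatou applied to the uniform estimate above.

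The main obstacle will be the interface between weak convergence of measures and convergence of the Laplace integrals at the \emph{closed} endpoints $t=\pm 1$, where $e^{-ts}$ saturates the uniform-integrability bound supplied by $\cosh$; a related point of care is justifying $Z_n\nearrow Z$ when $B$ is only relatively bounded with respect to $A$ (as permitted by the hypothesis that $A+tB$ is self-adjoint on $\mathrm{Dom}(A)$) rather than genuinely bounded. Both issues are handled by interpreting $A+tB$ and its truncations as form-sums and invoking monotone form-convergence, which preserves the min-max comparison used above and makes the $Z_n\to Z$ step rigorous.
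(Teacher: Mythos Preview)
Your strategy is essentially the one the paper uses: spectral truncation by $P_n$, finite-dimensional Stahl, a uniform $\cosh$ bound, Prokhorov compactness, and passage to the limit in the Laplace integrals. Your shift trick $B_n\mapsto B_n+\|B_n\|I$ before invoking Stahl is equivalent to applying the two-sided version of Stahl's theorem directly (as the paper does), and your eigenvalue-convergence route to $Z_n\nearrow Z$ via min-max and form convergence is a legitimate alternative to the paper's strong-resolvent-convergence argument.

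There is, however, one genuine gap: the treatment of the closed endpoints $t=\pm 1$. Fatou applied to the weakly convergent $\mu_{n_k}$ gives only $\int e^{\mp s}\,\de\mu(s)\le Z(\pm 1)$, and there is no monotone-convergence argument that supplies the reverse inequality directly from the measures (the $\mu_n$ are not ordered). What is actually needed is continuity of $Z$ at $t=\pm1$: once you know $Z(t)=\int e^{-ts}\,\de\mu(s)$ on $(-1,1)$ and that $\int e^{-ts}\,\de\mu(s)$ is continuous on $[-1,1]$ by dominated convergence (using $\int\cosh\,\de\mu<\infty$), you must still show $\lim_{t\to 1^-}Z(t)=Z(1)$. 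Convexity of $Z$ only gives $\lim_{t\to 1^-}Z(t)\le Z(1)$, which is the wrong inequality. The paper closes this by proving continuity of $t\mapsto Z(t)$ on $[-1,1]$ via continuity of the eigenvalues of $A+tB$ (which follows from the common domain hypothesis and min-max) together with dominated convergence in the eigenvalue sum. Your ``monotone/Fatou'' sketch should be replaced by this continuity argument.
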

    
\begin{remark}
    Although it is not directly relevant for our paper, we also mention the following alternative and equally interesting version of the above theorem: assume $A+t B$, $t \in [0,1]$ is self-adjoint on the domain of $A$, $B \geq 0$, and $\exp(-A)$ is trace class. Then there exists a positive finite Borel measure $\mu$ on $[0,\infty)$ such that
    \begin{equation}
           Z(t) = \int_0^{\infty} e^{-ts} \de \mu(s)
           \label{eq:stahl17}
    \end{equation}
    holds for all $t \geq 0$. One can prove this statement along the same lines as  Theorem~\ref{prop:stahlInfinite}. There is, however, a shorter proof that uses well-known results for completely monotone functions and that we briefly explain now. A function $f$ on $(0,\infty)$ is called completely monotone if it is of class $\mathcal{C}^{\infty}$ and $(-1)^n f^{(n)}(x) \geq 0 $ holds for all $n \in \mathbb{N}_0$ and all $x > 0$. From Theorem~\ref{thm:Stahl} below we know that the functions $Z_n(t)$ defined in \eqref{eq:stahl6} are completely monotone. The set of completely monotone functions is closed under pointwise convergence, see \cite[Corollary~1.6]{SchiSongVond2012}, and hence $\lim_{n \to \infty} Z_n(t) = Z(t)$ for all $t > 0$ (proof see below) implies that $Z$ is completely monotone. Eq.~\eqref{eq:stahl17} now follows from Bernstein's Theorem, see \cite[Theorem~1.4]{SchiSongVond2012}, which states that every completely monotone function is the Laplace transform of a positive finite Borel measure on $[0,\infty)$. 
\end{remark}

The proof of Theorem~\ref{prop:stahlInfinite} is based on the following version of Stahl's theorem, see \cite{Stahl2013,Eremenko2015}, which was formerly known as the Bessis--Moussa--Villani (BMV) conjecture. 

\begin{theorem}[Stahl's Theorem]\label{thm:Stahl}
    Let $A$ and $B$ be two hermitian $n \times n$ matrices and denote by $b_1$ and $b_n$ the smallest and the largest eigenvalues of $B$, respectively. Then there exists a nonnegative Borel measure $\mu$ on $[b_1,b_n]$ such that
    \begin{equation}
        \tr[ \exp(-(A+tB)) ] = \int_{b_1}^{b_n} e^{-ts} \de \mu(s).
        \label{eq:stahl5}
    \end{equation}
\end{theorem}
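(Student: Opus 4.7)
The plan is to follow Stahl's 2013 approach, which combines an algebraic-geometric reduction with a potential-theoretic construction on a compact Riemann surface; I describe the reduction, the analytic setup, and the key obstacle in turn.

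\textbf{Reduction.} Setting $\widetilde B := B - b_1 I \ge 0$, which has spectrum in $[0, L]$ with $L := b_n - b_1$, the identity
\begin{equation*}
\tr\bigl[e^{-(A+tB)}\bigr] = e^{-b_1 t}\, \tr\bigl[e^{-(A + t\widetilde B)}\bigr]
\end{equation*}
shows that it suffices to produce a nonnegative Borel measure $\widetilde\mu$ on $[0, L]$ with $\widetilde f(t) := \tr[e^{-(A + t\widetilde B)}] = \int_0^L e^{-ts}\, d\widetilde\mu(s)$; pushing forward by $s \mapsto s + b_1$ then yields the measure in \eqref{eq:stahl5}. By Bernstein's theorem, the existence of a positive-measure Laplace representation on $[0, \infty)$ is equivalent to the complete monotonicity of $\widetilde f$, i.e. $(-1)^k \widetilde f^{(k)}(t) \ge 0$ for all $t > 0$ and $k \ge 0$; this is the BMV-type statement that the coefficients of the Taylor expansion of $\widetilde f(-s)$ at $s = 0$ are all nonnegative. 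The compact support bound $\supp \widetilde\mu \subseteq [0, L]$ follows automatically once one observes, by Weyl's inequality applied to the eigenvalues of $A + t\widetilde B$, that $\widetilde f$ extends to an entire function of exponential type exactly $L$ in the direction $t \to -\infty$.

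\textbf{Analytic framework.} The eigenvalues $\lambda_j(t)$ of $A + tB$ are algebraic functions of $t$ defined by the polynomial equation $P(z, t) := \det(zI - (A + tB)) = 0$, which cuts out a compact Riemann surface $\mathcal{R}$ on which the $\lambda_j$ appear as branches of a single $n$-valued algebraic function. The resolvent trace
\begin{equation*}
R(z, t) := \tr\bigl[(zI - (A + tB))^{-1}\bigr] = \frac{\partial_z P(z,t)}{P(z,t)}
\end{equation*}
is rational in $z$ and lifts to a meromorphic function on $\mathcal{R}$. For each real $t$, a standard Cauchy representation gives $f(t) := \tr[e^{-(A+tB)}] = (2\pi i)^{-1}\oint_\Gamma e^{-z} R(z, t)\, dz$ for any contour $\Gamma$ enclosing the spectrum of $A + tB$. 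The strategy is to regard $t \mapsto f(t)$ as a Cauchy-type transform on $\mathcal{R}$ associated with a carefully chosen system of arcs in the $t$-plane, and to obtain the density $d\widetilde\mu/ds$ as the jump of a distinguished branch of a meromorphic differential built from $R$ across these arcs.

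\textbf{Main obstacle.} Because the branch-point structure of the algebraic function $t \mapsto \lambda(t)$ can be arbitrarily intricate, a generic contour deformation produces a signed density and provides no route to positivity of $\widetilde\mu$. The decisive idea, and the technical heart of Stahl's theorem, is to take the arcs to form the unique S-contour $\mathcal{S}$ associated with a weighted potential-theoretic extremal problem on $\mathcal{R}$: the defining S-property, namely that the equilibrium logarithmic potential has equal one-sided normal derivatives on $\mathcal{S}$, is precisely what forces the jump of $R$ across $\mathcal{S}$ to be a nonnegative real number, and hence $\widetilde\mu \ge 0$. Establishing the existence, regularity, and sign-producing properties of $\mathcal{S}$ requires the bulk of Stahl's argument, which combines the convergence theory of diagonal Padé approximants to algebraic functions with weighted logarithmic potential theory; this is the step I expect to dominate the proof. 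I would either follow Stahl's original exposition or invoke Eremenko's 2015 simplification, which replaces the Padé-approximant machinery by more standard weighted potential theory on $\mathcal{R}$, and then read off the support bound $\supp\widetilde\mu \subseteq [0, L]$ from the large-$t$ asymptotics $\lambda_j(t)/t \to b_{\sigma(j)} - b_1 \in [0, L]$ (with $\sigma$ a permutation), which pin down the admissible exponents in the Laplace representation.
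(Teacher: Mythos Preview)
The paper does not prove this statement at all: Stahl's theorem is quoted as a known result with a citation to \cite{Stahl2013,Eremenko2015}, and is used as a black box in the proof of the infinite-dimensional extension (Theorem~\ref{prop:stahlInfinite}). So there is no ``paper's own proof'' to compare against; the intended treatment is simply to invoke the theorem.

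Your outline is a faithful high-level sketch of Stahl's actual argument (the reduction to $B\ge 0$, Bernstein's theorem, the algebraic function $\lambda(t)$ on a Riemann surface, and the S-contour as the positivity mechanism), and you correctly flag the S-property construction as the technical core. That said, what you have written is a roadmap rather than a proof: the existence of the S-contour and the verification that the jump of the relevant differential across it is nonnegative are precisely the hard parts, and you defer both to ``follow Stahl's original exposition or invoke Eremenko's simplification.'' For the purposes of this paper that is fine, since the theorem is established in the literature and the authors themselves do not reprove it; but if you intend this as a self-contained proof, the S-contour step is not merely an obstacle to be named --- it is essentially the entire content of Stahl's paper.
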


With the above theorem at hand we are prepared to give the proof of Theorem~\ref{prop:stahlInfinite}.
\begin{proof}[Proof of Theorem~\ref{prop:stahlInfinite}]
    Our assumptions guarantee that $A$ is bounded from below and has only discrete spectrum, and hence the projections $P_n = \mathds{1}(A \leq n)$, $n \in \mathbb{N}$ have finite rank. This allows us to define the functions
    \begin{equation}
        Z_n(t) = \Tr[P_n\exp(-P_n(A+tB)P_n)].
        \label{eq:stahl6}
    \end{equation}
    Theorem~\ref{thm:Stahl} implies that there exists a sequence $\{ \mu_n \}_{n=1}^{\infty}$ of nonnegative Borel measures with compact support on $\mathbb{R}$ such that
    \begin{equation}
        Z_n(t) = \int_{-\infty}^{\infty} e^{-ts} \de \mu_n(s)
        \label{eq:stahl7}
    \end{equation}
    holds.

    Let us prove the pointwise convergence of $Z_n$ to $Z$. We write
    \begin{equation}
        Z_n(t) = \sum_{\alpha = 1}^{\infty} \mathds{1}(\alpha \leq n) \langle \psi_{\alpha}, \exp(-P_n(A+tB)P_n) \psi_{\alpha} \rangle,
        \label{eq:stahl8a}
    \end{equation}
    where we choose the vectors $\psi_{\alpha}$ such for every $n \in \mathbb{N}$, the span of $\{ \psi_{\alpha} \}_{\alpha=1}^n$ equals the range of $P_n$. From the min-max principle, see e.g. \cite[Theorem~12.1]{LiebLoss2010}, we know that the $k$-th eigenvalue of $P_n (A+tB) P_n$ is bounded from below by the $k$-th eigenvalue of $A+tB$ provided $k \leq n$. Using this, we check that the summand in \eqref{eq:stahl8a} is, for fixed $\alpha$, monotone increasing in $n$. We also claim that $\exp(-P_n(A+tB)P_n)$ converges to $\exp(-(A+tB))$ in the strong operator topology. To prove this claim we will show that $P_n(A+tB)P_n$ converges to $A+tB$ in strong resolvent sense. 
    
    Since strong resolvent convergence is implied by weak resolvent convergence, see \cite[p.~284]{RS1980}, it suffices to show that for fixed $z \in \mathbb{C}$ with $\Im \ z \neq 0$ and $\psi$ in the Hilbert space we have
    \begin{equation}
        \lim_{n \to \infty} \left| \langle \psi,  \left( \frac{1}{P_n(A+tB)P_n - z} - \frac{1}{A+tB - z} \right) \psi \rangle \right| = 0. 
        \label{eq:stahl8}
    \end{equation}
    Using the resolvent identity, we write the operator in the above equation as 
    \begin{align}
        \frac{1}{P_n(A+tB)P_n - z} - \frac{1}{A+tB - z}  &= \frac{1}{P_n(A+tB)P_n - z} \left( A+tB - P_n (A+tB) P_n \right) \frac{1}{A+tB - z} \nonumber \\
        &= \frac{1}{P_n(A+tB)P_n - z} \left( (A+tB)Q_n + Q_n (A+tB) P_n \right) \frac{1}{A+tB - z},
        \label{eq:stahl9}
    \end{align} 
    where $Q_n = 1-P_n$. We also have the bounds
    \begin{align}
        &\left| \langle \psi, \frac{1}{P_n(A+tB)P_n - z} (A+tB)Q_n \frac{1}{A+tB - z} \psi \rangle \right| \leq \frac{\Vert \psi \Vert}{|\Im z|} \left( \left\Vert A Q_n \frac{1}{A+tB - z} \psi \right\Vert + |t| \ \left\Vert B Q_n \frac{1}{A+tB - z} \psi \right\Vert \right) \nonumber \\
        &\hspace{4cm}\leq \frac{\Vert \psi \Vert}{|\Im z|} \left( \left( 1 + |t| \ \left\Vert B \frac{1}{A-z} \right\Vert \right) \left\Vert Q_n A \frac{1}{A+tB - z} \psi \right\Vert + |tz| \left\Vert Q_n \frac{1}{A+tB - z} \psi \right\Vert \right) \label{eq:stahl10_0}
    \end{align}
    and
    \begin{align}
        &\left| \langle \psi, \frac{1}{P_n(A+tB)P_n - z} Q_n(A+tB)P_n \frac{1}{A+tB - z} \psi \rangle \right| \leq \left\Vert Q_n \frac{1}{P_n(A+tB)P_n-z} \psi \right\Vert \ \left\Vert (A+tB) P_n \frac{1}{A+tB-z} \psi \right\Vert \nonumber \\
        &\hspace{4cm}\leq \frac{\Vert Q_n \psi \Vert }{|\Im \ z|} \left( \left(1+|t| \ \left\Vert B \frac{1}{A-z} \right\Vert \right) \left\Vert A \frac{1}{A+tB-z} \psi \right\Vert + |tz| \ \left\Vert \frac{1}{A+tB-z} \psi \right\Vert \right). \label{eq:stahl10}
    \end{align}
    To obtain the result we used $[A,P_n]=0$. Note that the right-hand sides in both equations converge to zero as $n \to \infty$ because the operators $A+tB$, $t \in [-1,1]$ share a common domain and $Q_n \to 0$ in the strong operator topology. We conclude that $P_n(A+tB)P_n$ converges to $A+tB$ in strong resolvent sense. This, in particular, implies the strong convergence of $\exp(-P_n(A+tB)P_n)$ to $\exp(-(A+tB))$. In combination with the considerations below \eqref{eq:stahl8} and an application of the monotone convergence theorem, we conclude that $Z_n$ converges pointwise to $Z(t)$ for all $t \in [-1,1]$.

    Next, we investigate the sequence of measures $\{ \mu_n \}_{n=1}^{\infty}$. From 
    \eqref{eq:stahl7} we know that
    \begin{equation}
        \int_{-\infty}^{\infty} \cosh(s) \de \mu_n(s) \leq \max\{ Z(-1),Z(1) \}.
        \label{eq:stahl13}
    \end{equation}
    The bound \eqref{eq:stahl13} implies that the sequence of measures  $\{\mu_n\}$ is tight, and hence by Prokhorov's theorem we conclude that there exist a subsequence $\{ \mu_{n_k} \}_{k=1}^{\infty}$ and a nonnegative Borel measure $\mu$ on $\mathbb{R}$ such that
    \begin{align}
        \lim_{k \to \infty} \int_{-\infty}^{\infty} f(s) \de \mu_{n_k}(s) &= \int_{-\infty}^{\infty} f(s) \de \mu(s) \quad \text{ and } \nonumber \\
        \lim_{k \to \infty} \int_{-\infty}^{\infty} f(s) \cosh(s) \de \mu_{n_k}(s) &= \int_{-\infty}^{\infty} f(s) \cosh(s) \de \mu(s)
        \label{eq:stahl14}
    \end{align}
    hold for all $f \in \mathcal{C}_0(\mathbb{R})$. Here $\mathcal{C}_0(\mathbb{R})$ denotes the set of all real-valued continuous functions on $\mathbb{R}$ that vanish at infinity. Moreover, by Fatou's lemma for sequences of measures, see e.g. \cite[Theorem~30.2]{Bauer2001},
    \begin{equation}
        \max\{ Z(-1),Z(1) \} \geq \limsup_{k \to \infty} \int_{-\infty}^{\infty} \cosh(s) \de \mu_{n_k} \geq \int_{-\infty}^{\infty} \cosh(s) \de \mu(s).
        \label{eq:stahl15}
    \end{equation}

    It remains to prove 
    \begin{equation}
        \lim_{k \to \infty} \int_{-\infty}^{\infty} e^{-ts} \de \mu_{n_k}(s) = \int_{-\infty}^{\infty} e^{-ts} \de \mu(s)
        \label{eq:stahl16}
    \end{equation}
    for $t \in [-1,1]$. If $t \in (-1,1)$ this is implied by \eqref{eq:stahl14} since $e^{-ts}$ is dominated by $\cosh(s)$ for $s\in (-\infty,\infty)$. An application of the dominated convergence theorem and \eqref{eq:stahl15} show that the right-hand side of \eqref{eq:stahl16} is continuous for all $t \in [-1,1]$. To prove the claim, it therefore suffices to show that $t \mapsto Z(t)$ is continuous\footnote{It is not difficult to see that $Z(t)$ is convex on $[-1,1]$. This, however, only implies continuity on $(-1,1)$ and not on $[-1,1]$.} at $t = \pm 1$. However, this is a consequence of the continuity of the eigenvalues of $A+tB$ and another application of dominated convergence. The continuity of the eigenvalues of $A+tB$ follows from our assumptions and the min-max principle.
\end{proof}

\subsection{Proof of Theorem~\ref{thm:correlation-intro}}

In this subsection we discuss our first new abstract correlation inequality for Gibbs states. We call it a second order correlation inequality because it provides a bound for the expectation of the square of an operator. It is inspired by the recent inequality of Lewin, Nam, and Rougerie in \cite[Theorem~7.1]{LewNamRou-21}, where the second moment of an observable in a Gibbs state is bounded by the expectation of the observable in a family of perturbed Gibbs states, plus an additional term that involves a fourth commutator of the observable and the Hamiltonian. Our new observation is that Stahl's theorem, see Section~\ref{sec:StahlsTheorem} above, can be used to obtain a bound on the Duhamel two-point function, which allows us to control the quantum variance in a much more efficient way. 

Theorem~\ref{thm:correlation-intro} in Section~\ref{sec:DiscussionOfProof} is a special case of the following result.

\begin{theorem}[Abstract second order correlation inequality] \label{thm:correlation}
Let $A$ and $B$ be two self-adjoint operators on a separable complex Hilbert space. We assume that $A + tB$ is self-adjoint on the domain of $A$ and that $\exp(-(A + t B))$ is trace-class $t \in [-1,1]$. We also assume that the Gibbs state
\begin{equation}\label{eq:Gibbs-t}
\Gamma_t = \frac{\exp(-(A+tB))}{Z(t)} \quad \text{ with } \quad Z(t) = \Tr [ \exp(-(A+tB)) ]
\end{equation}
is such that $B\Gamma_t$ is trace class for all $t \in [-1,1]$ and that it satisfies
\begin{align}\label{eq:CRI-condition}
\sup_{t\in [-1,1]}|\Tr (B \Gamma_t )| \le a. 
\end{align}
Then we have
\begin{equation}
\Tr [ \Gamma_0^{1/2} B^2 \Gamma_0^{1/2} ] \leq a e^{a} + \frac{1}{4} \sum_{\substack{\alpha,\beta \in \mathbb{N} \\ \lambda_{\alpha} \neq \lambda_{\beta}} } | \langle \psi_{\alpha}, B \psi_{\beta} \rangle |^2 (\lambda_{\beta}-\lambda_{\alpha})(\gamma_{\alpha} - \gamma_{\beta})
\label{eq:StahlA2}
\end{equation}
with the eigenvalues $\{ \lambda_{\alpha} \}_{\alpha=1}^{\infty}$ and the eigenvectors $\{ \psi_{\alpha} \}_{\alpha=1}^{\infty}$ of $A$ and $\gamma_{\alpha} = \exp(-\lambda_{\alpha})/Z(0)$ for $\alpha \in \mathbb{N}$.
\end{theorem}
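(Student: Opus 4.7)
The plan is to follow the strategy sketched in Section~\ref{sec:DiscussionOfProof}: combine the infinite-dimensional Stahl theorem (Theorem~\ref{prop:stahlInfinite}) with a Falk--Bruch type inequality to dominate $\Tr[B^2\Gamma_0]$ by the first-order input $a$ plus a spectral remainder. By Duhamel's formula and cyclicity of the trace, $Z'(t)=-\Tr[B\exp(-(A+tB))]$ and
\begin{equation*}
Z''(t) = \int_0^1 \Tr\bigl[B\exp(-s(A+tB))\,B\,\exp(-(1-s)(A+tB))\bigr]\de s,
\end{equation*}
so $Z''(0)/Z(0) = \int_0^1 \Tr[B\Gamma_0^s B\Gamma_0^{1-s}]\de s$ is the Duhamel two-point function. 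Hypothesis \eqref{eq:CRI-condition} reads $|Z'(t)/Z(t)|\le a$ on $[-1,1]$, so a Grönwall argument applied to $(\ln Z)'$ yields $e^{-a|t|}Z(0)\le Z(t)\le e^{a|t|}Z(0)$ and, in turn, $|Z'(\pm 1)|\le aZ(0)e^a$.

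The next step is to invoke Theorem~\ref{prop:stahlInfinite}, which under the present hypotheses provides a positive Borel measure $\mu$ on $\mathbb{R}$ with $\int \cosh(s)\de\mu(s)<\infty$ such that $Z(t)=\int e^{-ts}\de\mu(s)$ for $t\in[-1,1]$. Differentiating under the integral, legitimate thanks to the integrability of $\cosh(s)\de\mu(s)$, gives $Z^{(2k)}(t)=\int s^{2k}e^{-ts}\de\mu(s)\ge 0$ for every $k\in\mathbb{N}$; in particular $Z''$ is itself convex on $[-1,1]$. Jensen's inequality therefore produces
\begin{equation*}
Z''(0)\le \tfrac12 \int_{-1}^1 Z''(s)\de s = \tfrac12\bigl(Z'(1)-Z'(-1)\bigr),
\end{equation*}
and combining with the Grönwall bounds yields $Z''(0)\le aZ(0)e^a$, that is,
\begin{equation*}
\int_0^1 \Tr[B\Gamma_0^s B\Gamma_0^{1-s}]\de s\le ae^a.
\end{equation*}

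The final step is to invoke the Falk--Bruch type inequality of Lewin--Nam--Rougerie \cite[Theorem~7.2]{LewNamRou-21}, stating that
\begin{equation*}
0 \le \Tr[B^2\Gamma_0] - \int_0^1 \Tr[B\Gamma_0^s B\Gamma_0^{1-s}]\de s \le \tfrac14 \Tr([B,[A,B]]\Gamma_0).
\end{equation*}
Expanding the double commutator in the eigenbasis of $A$ with $B_{\alpha\beta}=\langle\psi_\alpha,B\psi_\beta\rangle$ gives $\Tr([B,[A,B]]\Gamma_0)=\sum_{\alpha,\beta}|B_{\alpha\beta}|^2(\lambda_\beta-\lambda_\alpha)(\gamma_\alpha-\gamma_\beta)$, where the summand vanishes when $\lambda_\alpha=\lambda_\beta$, so the restriction to $\lambda_\alpha\neq\lambda_\beta$ in \eqref{eq:StahlA2} is automatic. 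Adding the two bounds delivers \eqref{eq:StahlA2}.

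The principal technical hurdle will be justifying the analytic manipulations under only the stated hypotheses: differentiating $Z(t)$ twice under the trace, applying the Falk--Bruch inequality in infinite dimensions, and expanding the double commutator spectrally when $[B,[A,B]]$ need not be a priori bounded or trace-class. Since $\exp(-(A+tB))$ is assumed trace-class on $[-1,1]$ and $B\Gamma_t$ is trace class, these points should reduce to the spectral cut-off argument $P_n=\mathds{1}(A\le n)$ already used in the proof of Theorem~\ref{prop:stahlInfinite}, reducing the problem to the finite-rank case and then passing to the limit; the right-hand side of \eqref{eq:StahlA2} is stated in spectral form precisely so that it remains meaningful when $\Tr([B,[A,B]]\Gamma_0)$ is only a sum of nonnegative terms.
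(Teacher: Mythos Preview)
Your proposal is correct and follows essentially the same route as the paper: Stahl's theorem gives convexity of $Z''$, convexity plus the Grönwall bound yields $Z''(0)/Z(0)\le ae^a$ for the Duhamel two-point function, and a Falk--Bruch inequality converts this into the claimed bound on $\Tr[\Gamma_0^{1/2}B^2\Gamma_0^{1/2}]$ with the spectral remainder. You have also correctly identified the main technical point---the need to justify the Falk--Bruch step for unbounded $B$ via the cut-offs $P_n=\mathds{1}(A\le n)$ and monotone convergence---which the paper isolates as a separate lemma (Lemma~\ref{lem:FalkBruch}).
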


\begin{remark}
\begin{enumerate}[label=(\alph*)]
    \item The second term on the right-hand side of \eqref{eq:StahlA2} is nonnegative. It may be finite or infinite. 

    \item Under the assumption of Theorem~\ref{thm:correlation-intro} the term on the left-hand side of \eqref{eq:StahlA2} equals $\Tr[B^2 \Gamma_0]$ and the second term on the right-hand side equals $\frac 1 4 \Tr([[B,A],B]\Gamma_0)$.

\end{enumerate}
\end{remark}

Before give the proof of Theorem~\ref{thm:correlation} we state and prove the following lemma. 
\begin{lemma}
    \label{lem:FalkBruch}
    Let $A$ and $B$ be two self-adjoint operators on a complex separable Hilbert space. We assume that $\exp(-A)$ is trace-class and that $\mathds{1}(A \leq n) B \mathds{1}(A \leq n)$ is a bounded operator for any $n \in \mathbb{N}$. With the notation of Theorem~\ref{thm:correlation} we have
    \begin{equation}
        \Tr[\Gamma_0^{1/2} B^2 \Gamma_0^{1/2}] \leq \int_0^1 \Tr[ \Gamma_0^{(1-s)/2} B \Gamma_0^{s} B \Gamma_0^{(1-s)/2} ] \de s + \frac{1}{4} \sum_{\substack{\alpha,\beta \in \mathbb{N} \\ \lambda_{\alpha} \neq \lambda_{\beta} }} | \langle \psi_{\alpha}, B \psi_{\beta} \rangle |^2 (\lambda_{\beta}-\lambda_{\alpha})(\gamma_{\alpha} - \gamma_{\beta}). \label{eq:StahlB2A}
    \end{equation}
\end{lemma}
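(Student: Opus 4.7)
The plan is to reduce the operator inequality to a pointwise scalar inequality (a form of the Falk--Bruch inequality) on the matrix elements of $B$ in the eigenbasis of $A$. Since $e^{-A}$ is trace class, $A$ has compact resolvent, so $A$ admits an orthonormal basis $\{\psi_\alpha\}_{\alpha \in \mathbb{N}}$ of eigenvectors with $\lambda_\alpha \to +\infty$. The assumption that $\mathds{1}(A\le n)B\mathds{1}(A\le n)$ is bounded guarantees that the matrix elements $B_{\alpha\beta}=\langle \psi_\alpha, B\psi_\beta\rangle$ are well-defined. To address the possible unboundedness of $B$, I will first establish the inequality for the truncations $B_n=P_nBP_n$ with $P_n=\mathds{1}(A\le n)$, and then pass to the limit $n\to\infty$ via monotone convergence applied separately to the three nonnegative expressions indexed by pairs $(\alpha,\beta)$.

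Writing $\Gamma_0^s = \sum_\alpha \gamma_\alpha^s |\psi_\alpha\rangle\langle\psi_\alpha|$ and evaluating traces in this basis,
\[
\Tr[\Gamma_0^{1/2}B^2\Gamma_0^{1/2}] = \sum_{\alpha,\beta}\gamma_\alpha|B_{\alpha\beta}|^2,
\quad
\int_0^1\Tr[\Gamma_0^{(1-s)/2}B\Gamma_0^sB\Gamma_0^{(1-s)/2}]\,\de s=\sum_{\alpha,\beta}K_{\alpha\beta}|B_{\alpha\beta}|^2,
\]
where $K_{\alpha\beta}=\int_0^1\gamma_\alpha^{1-s}\gamma_\beta^s\,\de s$ equals $\gamma_\alpha$ when $\lambda_\alpha=\lambda_\beta$ and the logarithmic mean $(\gamma_\alpha-\gamma_\beta)/(\lambda_\beta-\lambda_\alpha)$ otherwise. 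Pairs with $\lambda_\alpha=\lambda_\beta$ contribute equally to both sides and cancel. For the remaining contributions, exploiting $|B_{\alpha\beta}|=|B_{\beta\alpha}|$ to symmetrize $\gamma_\alpha$ into $(\gamma_\alpha+\gamma_\beta)/2$, the difference becomes
\[
\Tr[\Gamma_0^{1/2}B^2\Gamma_0^{1/2}] - \int_0^1\Tr[\Gamma_0^{(1-s)/2}B\Gamma_0^sB\Gamma_0^{(1-s)/2}]\,\de s
= \sum_{\lambda_\alpha\ne\lambda_\beta} \left[\frac{\gamma_\alpha+\gamma_\beta}{2}-\frac{\gamma_\alpha-\gamma_\beta}{\lambda_\beta-\lambda_\alpha}\right]|B_{\alpha\beta}|^2,
\]
each summand being nonnegative by AM--LogMean.

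It therefore remains to prove the scalar estimate: for all $x,y>0$ with $x\ne y$,
\[
\frac{x+y}{2}-\frac{x-y}{\ln x-\ln y}\le \frac{1}{4}(x-y)(\ln x-\ln y).
\]
With the substitution $x=re^{u}$, $y=re^{-u}$ (so $r=\sqrt{xy}$ and $u=\tfrac12\ln(x/y)$), this reduces to $u\cosh u-\sinh u\le u^2\sinh u$ for $u\ge 0$, which is immediate: both sides vanish at $u=0$ and the derivative of the right-hand side minus the left-hand side equals $u\sinh u+u^2\cosh u\ge 0$. Applying this pointwise produces exactly the correction $\frac14(\gamma_\alpha-\gamma_\beta)(\lambda_\beta-\lambda_\alpha)|B_{\alpha\beta}|^2$ on the right-hand side of \eqref{eq:StahlB2A}.

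The only genuine technical point is the passage from $B_n$ to $B$: each of the three sums (or the trace, for the LHS) over pairs $(\alpha,\beta)$ with $\alpha,\beta\le n$ is a sum of nonnegative terms that increases in $n$, so monotone convergence delivers the inequality for $B$ itself, with both sides possibly equal to $+\infty$. The main obstacle is simply bookkeeping: correctly symmetrizing the LHS to obtain the arithmetic mean $(\gamma_\alpha+\gamma_\beta)/2$ and verifying the scalar inequality; the operator-theoretic content is reduced to this elementary calculus fact.
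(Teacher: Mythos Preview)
Your proof is correct and follows the same overall architecture as the paper: reduce to the eigenbasis of $A$, establish the inequality for the truncations $B_n=P_nBP_n$, and pass to the limit by monotone convergence on the three nonnegative sums over matrix elements. The one genuine difference is that the paper treats the bounded case as a black box, citing \cite[Theorem~7.2]{LewNamRou-21} (a Falk--Bruch-type inequality) and only verifying that the double-commutator term there coincides with the sum $\sum_{\lambda_\alpha\ne\lambda_\beta}(\lambda_\beta-\lambda_\alpha)(\gamma_\alpha-\gamma_\beta)|B_{\alpha\beta}|^2$. You instead unwind that step and prove the underlying scalar inequality
\[
\frac{x+y}{2}-\frac{x-y}{\ln x-\ln y}\le \frac14(x-y)(\ln x-\ln y)
\]
directly via the substitution $x=re^u$, $y=re^{-u}$, which makes the argument self-contained. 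Both routes are equally short; yours has the advantage of not requiring the external reference, while the paper's has the advantage of signaling the connection to the Falk--Bruch literature.
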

\begin{proof}[Proof of Lemma~\ref{lem:FalkBruch}]
    The proof of the case when $B$ is bounded is a direct consequence of \cite[Theorem~7.2]{LewNamRou-21}. To see this, one needs to note that 
    \begin{equation}
        \sum_{\substack{\alpha,\beta \in \mathbb{N} \\ \lambda_{\alpha} \neq \lambda_{\beta}} } | \langle \psi_{\alpha}, B \psi_{\beta} \rangle |^2 (\lambda_{\beta}-\lambda_{\alpha})(\gamma_{\alpha} - \gamma_{\beta}) = \Tr([[B,A],B]\Gamma_0)
        \label{eq:Andi1}
    \end{equation}
    holds for bounded $B$. We can therefore assume that \eqref{eq:StahlB2A} holds with $B$ replaced by $P_n B P_n$ with $P_n = \mathds{1}(A \leq n)$. 

    When we evaluate the trace in the first term on the right-hand side of \eqref{eq:Andi1} in terms of the eigenfunctions of $A$ and use the spectral decomposition of $\exp(-A(1-s))$, we see that it equals
    \begin{equation}
        \int_0^{1} \sum_{\alpha,\beta =1 }^{\infty} \gamma_{\alpha}^{1-s} \gamma_{\beta}^s | \langle \psi_{\alpha}, B \psi_{\beta} \rangle |^2 \de s.
        \label{eq:Andi2}
    \end{equation}
    Replacing $B$ by $B_n$ in the above equation amounts to multiplying the integrand by the characteristic functions $\mathds{1}(\alpha \leq n) \mathds{1}(\beta \leq n)$. We therefore have the upper bound
    \begin{equation}
        \int_0^1 \Tr[ \Gamma_0^{(1-s)/2} B_n \Gamma_0^{s} B_n \Gamma_0^{(1-s)/2} ] \de s \leq \int_0^1 \Tr[ \Gamma_0^{(1-s)/2} B \Gamma_0^{s} B \Gamma_0^{(1-s)/2} ] \de s.
        \label{eq:Andi3}
    \end{equation}
    
    Next, we consider the second term on the right-hand side of \eqref{eq:StahlB2A}. Since the effect of $P_n$ is again only a restriction of the number of (positive) terms in the sum, we have a similar upper bound here, too:
    \begin{equation}
        \sum_{\substack{\alpha,\beta \in \mathbb{N} \\ \lambda_{\alpha} \neq \lambda_{\beta}} } | \langle \psi_{\alpha}, B_n \psi_{\beta} \rangle |^2 (\lambda_{\beta}-\lambda_{\alpha})(\gamma_{\alpha} - \gamma_{\beta}) 
        \leq \sum_{\substack{\alpha,\beta \in \mathbb{N} \\ \lambda_{\alpha} \neq \lambda_{\beta} }} | \langle \psi_{\alpha}, B \psi_{\beta} \rangle |^2 (\lambda_{\beta}-\lambda_{\alpha})(\gamma_{\alpha} - \gamma_{\beta}).
        \label{eq:Andi4}
    \end{equation}
    In combination, \eqref{eq:StahlB2A} with $B$ replaced by $B_n$, \eqref{eq:Andi3}, and \eqref{eq:Andi4} show that
    \begin{equation}
        \Tr[\Gamma_0^{1/2} B_n^2 \Gamma_0^{1/2}] \leq \int_0^1 \Tr[ \Gamma_0^{(1-s)/2} B \Gamma_0^{s} B \Gamma_0^{(1-s)/2} ] \de s + \frac{1}{4}  \sum_{\substack{\alpha,\beta \in \mathbb{N} \\ \lambda_{\alpha} \neq \lambda_{\beta}} } | \langle \psi_{\alpha}, B \psi_{\beta} \rangle |^2 (\lambda_{\beta}-\lambda_{\alpha})(\gamma_{\alpha} - \gamma_{\beta})
        \label{eq:Andi5}
    \end{equation}
    holds for all $n \in \mathbb{N}$.

    It remains to show that the left-hand side of \eqref{eq:Andi5} converges to the left-hand side of \eqref{eq:StahlB2A} as $n \to \infty$. We evaluate the trace in terms of the basis $\{ \psi_{\alpha} \}_{\alpha=1}^{\infty}$, use the identity $P_n = \sum_{\beta=1}^n | \psi_{\beta} \rangle \langle \psi_{\beta} |$, and find
    \begin{equation}
        \Tr[\Gamma_0^{1/2} (B_n)^2 \Gamma_0^{1/2}] = \sum_{\alpha,\beta =1}^{\infty} \mathds{1}(\alpha \leq n) \mathds{1}(\beta \leq n)  \lambda_{\alpha} | \langle \psi_{\alpha}, B \psi_{\beta} \rangle |^2.
    \end{equation}
    An application of the monotone convergence theorem shows that the right-hand side converges to 
    \begin{equation}
        \sum_{\alpha,\beta =1 }^{\infty} \lambda_{\alpha} | \langle \psi_{\alpha}, B \psi_{\beta} \rangle |^2 = \Tr[\Gamma_0^{1/2} B^2 \Gamma_0^{1/2}],
    \end{equation}
    which proves \eqref{eq:StahlB2A}.
\end{proof}

We are now prepared to give the proof of Theorem~\ref{thm:correlation}.

\begin{proof}[Proof of Theorem~\ref{thm:correlation}] 
Using Duhamel's formula, we check that
\begin{equation}
    Z'(t) = -\Tr[B \exp(-(A+tB))] 
    \label{eq:correlationInequalityA2_1}
\end{equation}
holds for $t \in [-1,1]$ and
\begin{equation}
    Z''(0) = \int_0^{1} \sum_{\alpha,\beta =1 }^{\infty} \gamma_{\alpha}^{1-s} \gamma_{\beta}^s | \langle \psi_{\alpha}, B \psi_{\beta} \rangle |^2 \de s
    \label{eq:correlationInequalityA2_2}.
\end{equation}
Next, we use our assumption
\begin{equation}
    a \geq |\Tr (B \Gamma_t)| = |\partial_t \ln (Z(t))|, \quad t\in [-1,1] 
    \label{eq:correlationInequalityA2}
\end{equation}
and Gr\"onwall's inequality to prove the bound
\begin{align}\label{eq:1-asum-used}
e^{a} \ge \frac{Z(t)}{Z(0)} \ge e^{-a}, \quad t\in [-1,1]. 
\end{align}

From Theorem~\ref{prop:stahlInfinite} we know that 
\begin{equation}
    Z''(t) = \int_{-\infty}^{\infty} s^{2} e^{-ts} \de \mu(s) \geq 0
\end{equation}
holds for all $t \in (-1,1)$, and hence $t\mapsto Z''(t)$ is convex on that interval. Using this, we estimate
\begin{align}\label{eq:new-correlation-proof-1}
Z''(0) \le \frac{1}{2}\int_{-1}^1 Z''(s) \de s = \frac{1}{2} ( Z'(1) - Z'(-1) ) \leq \sup_{t \in [-1,1]} | \Tr [B \Gamma_t] | Z(t)  \leq a e^{a} Z(0),
\end{align}
where the last estimate follows from \eqref{eq:1-asum-used}. In combination, \eqref{eq:correlationInequalityA2_2}, \eqref{eq:correlationInequalityA2} and \eqref{eq:new-correlation-proof-1} show
\begin{equation}
    \int_0^{1} \sum_{\alpha,\beta =1 }^{\infty} \gamma_{\alpha}^{1-s} \gamma_{\beta}^s | \langle \psi_{\alpha}, B \psi_{\beta} \rangle |^2 \de s \leq a e^a.
    \label{eq:new-correlation-proof-2}
\end{equation}
Eq.~\eqref{eq:StahlA2} now follows from \eqref{eq:StahlB2A} and \eqref{eq:new-correlation-proof-2}.
\end{proof}

\subsection{Second order estimates for the Gibbs state} 
\label{sec:secondOrderEstimates}
In this section we apply the abstract result of the previous section to the Gibbs state $G_{h,\eta}(\beta,\mu)$ in \eqref{eq:GeneralGibbsState}. We will use the notation that has been introduced in Section~\ref{sec:FirstOrderEstimates}. In particular, we recall the definitions of the one-particle Hamiltonians $h$ in \eqref{eq:generalizedOneParticleHamiltonian} and $w$ in \eqref{eq:definitionOfW} as well as that $B_p = \de \Upsilon( Q \cos(p \cdot x) Q )$ with $Q$ in \eqref{eq:excitationFockSpace}.

\begin{theorem} (Second order estimates) 
\label{thm:secondOrderEstimates}
Let $v$ satisfy the assumptions of Theorem~\ref{thm:norm-approximation}. We consider the limit $\eta \to \infty$, $\beta \eta^{2/3} \to \kappa  \in (0,\infty)$. The chemical potential $\mu$, which may depend on $\eta$, is assumed to satisfy $-\eta^{2/3} \lesssim \mu \lesssim 1$. Let $\widetilde \mu < 0$ be the unique solution to the equation
    \begin{equation}
        \sum_{p \in \Lambda^*} \frac{1}{e^{\beta(p^2 - \widetilde{\mu})}-1} = \frac{(\mu - \widetilde{\mu})\eta}{\hat{v}(0)}
        \label{eq:GrantCanonicalEffectiveIddealGasChemPotb}
    \end{equation}
and define
    \begin{equation}
        M(\beta,\widetilde{\mu}) = \sum_{p \in \Lambda^*} \frac{1}{e^{\beta(p^2 - \widetilde{\mu})}-1}.
        \label{eq:particleNumberb}
    \end{equation}
Moreover, let $\mu_0(\beta,M)$ and $N_0(\beta,M)$ be defined as in \eqref{eq:idealgase1pdmchempot} and \eqref{eq:crittemp}, respectively (note that $\mu_0(\beta,M) = \widetilde{\mu}$). Then the following holds: 
\begin{enumerate}[label=(\alph*)]
\item Assume that $h$ is diagonal in momentum space. For $p \in \Lambda_+^*$ we have
\begin{equation} \label{eq:aa-G}
\Tr [ (a_p^*a_p)^2  G_{h,\eta}(\beta,\mu) ] \lesssim \left( \frac{1}{\beta ( p^2 - \mu_0(\beta,M)) } \right)^2 + 1. 
\end{equation}
If $N_0(\beta,M) \lesssim \eta^{2/3}$ the bound also holds for $p = 0$.
\item For $p \in \Lambda_+^*$ we have
\begin{equation}
    \Tr [ B_p^2  G_{h,\eta}(\beta,\mu) ] \lesssim \eta^{4/3} ( 1 + p^2 \eta^{-1} )
    \label{eq:cN-cN}
\end{equation}
with $B_p = \de \Upsilon( Q \cos(p \cdot x) Q )$ and $Q$ in \eqref{eq:excitationFockSpace}.
\item We have
\begin{align}
\Tr [ (\cN_{+}-N_{+}^{\mathrm{G}})^2 G_{h,\eta}(\beta,\mu) ] &\lesssim \eta^{4/3},
\label{eq:varianceNPlus}
\end{align}
where $N_{+}^{\mathrm{G}} = \Tr[\mathcal{N}_+ G_{h,\eta}(\beta,\mu)]$.
\item We have
    \begin{equation}
    \Tr [\cN^4 G_{h,\eta}(\beta,\mu) ]  \lesim  \eta^4.
    \label{eq:particleNumberBoundsPerturbedState-k}
    \end{equation}
\end{enumerate}

\end{theorem}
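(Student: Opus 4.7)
The four parts all follow by combining the abstract second-order correlation inequality of Theorem~\ref{thm:correlation-intro}, applied with $A = \beta(\mathcal{H}_{h,\eta}-\mu\mathcal{N})$, with the first-order a priori estimates of Theorem~\ref{thm:firstOrderAPriori}. For each part I would choose a self-adjoint perturbation $B$ and a scalar prefactor $\lambda > 0$ such that the perturbed Gibbs state $\Gamma_t = e^{-A+t\lambda B}/Z_t$, $t \in [-1,1]$, is again of the form \eqref{eq:GeneralGibbsState} but with a modified one-particle Hamiltonian $h_t$, a shifted chemical potential, or a shifted coupling $\hat v(0)$. Provided $\lambda$ is small enough for all the perturbed parameters to remain in the admissible ranges of Theorem~\ref{thm:firstOrderAPriori}, the first-order bounds there supply $\sup_{|t|\le 1}|\Tr[\lambda B \Gamma_t]|\le a$ with $a \lesssim 1$, and \eqref{eq:StahlA2-intro} reduces the problem to controlling the double commutator $\Tr([[\lambda B,A],\lambda B]\Gamma_0)$, where $\Gamma_0 = G_{h,\eta}(\beta,\mu)$.

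For part (a) take $B = a_p^*a_p$ with $\lambda = \beta p^2$ for $p \in \Lambda_+^*$; the corresponding perturbation of $h$ has operator norm $\lesssim 1$. The case $p=0$ under the assumption $N_0(\beta,M) \lesssim \eta^{2/3}$ is treated by a minor variant, using part~(d) of Theorem~\ref{thm:firstOrderAPriori} in place of part~(a) to provide the sharper first-order estimate $\Tr[a_0^* a_0 G_{h,\eta}] \lesssim (\beta|\mu_0|)^{-1}$. Since $h$ is diagonal, $[a_p^*a_p,\de\Upsilon(h)]=0$ and the double commutator reduces to the interaction part
\begin{equation*}
[[a_p^*a_p,\mathcal V],a_p^*a_p] = \frac{1}{2\eta}\sum_{k,u,v \in \Lambda^*}\hat v(k)\bigl(\delta_{p,u+k}+\delta_{p,v-k}-\delta_{p,u}-\delta_{p,v}\bigr)^2 a_{u+k}^* a_{v-k}^* a_u a_v.
\end{equation*}
Expanding the square produces four diagonal and six cross contributions, each bounded in $\Gamma_0$ by Cauchy--Schwarz together with the first-moment bounds $\Tr[a_q^*a_q G_{h,\eta}] \lesssim (\beta(q^2-\mu_0))^{-1}$ and $\Tr[\mathcal{N}^2 G_{h,\eta}] \lesssim \eta^2$, with the sum over $k$ converging by $\hat v \in \ell^1(\Lambda^*)$; after division by $\lambda^2$ this yields \eqref{eq:aa-G}.

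For part (b) take $B = B_p$ with $\lambda = \eta^{-2/3}$, so that $h_t = h - (t\eta^{-2/3}/\beta) Q\cos(p\cdot x) Q$ is a bounded perturbation of $h$ of operator norm $\lesssim 1$, and $a \lesssim 1$ follows from \eqref{eq:unperturbedFirstOrderBounds}. The double commutator now has two pieces: the one-body piece $\beta \lambda^2 \de\Upsilon([T,[T,h]])$ with $T = Q\cos(p\cdot x)Q$, which satisfies $\|[T,[T,h]]\|_\infty \lesssim \max\{1,p^2\}$ by direct calculation and hence contributes $\lesssim p^2 \eta^{1/3}$ after using $\Tr[\mathcal{N}_+ G_{h,\eta}] \lesssim \eta$; and an interaction piece estimated as in part~(a). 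Together these give $\Tr[B_p^2 G_{h,\eta}] \lesssim \eta^{4/3}(1 + p^2/\eta)$. Part~(c) is analogous with $B = \mathcal{N}_+ - N_+^G$, $\lambda = \eta^{-2/3}$: the associated perturbation shifts $\mu$ by $O(1)$ and adds $O(1)\cdot Q$ to $h$ (which preserves $h\varphi_0 = 0$), part~(b) of Theorem~\ref{thm:firstOrderAPriori} provides $a \lesssim 1$, the identity $[Q,h] = 0$ (a consequence of $h\varphi_0 = 0$) makes the one-body commutator vanish, and $[\mathcal{N}_+,\mathcal V] = -[a_0^*a_0,\mathcal V]$ reduces the interaction commutator to the $p=0$ form of part~(a).

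For part (d) take $B = \mathcal{N}^2$ with $\lambda = \eta^{-2}$: the corresponding perturbation $t\lambda B$ shifts $\hat v(0)$ by $O(\eta^{-4/3}) = o(1)$, the perturbed Hamiltonian stays admissible, and part~(c) of Theorem~\ref{thm:firstOrderAPriori} applied to the shifted model gives $\lambda \Tr[\mathcal{N}^2 G_{h,\eta,t}] \lesssim 1$. Since both $\de\Upsilon(h)$ and $\mathcal V$ preserve total particle number, $[\mathcal{N}^2,A] = 0$ and the double commutator vanishes exactly; Theorem~\ref{thm:correlation-intro} then immediately yields $\lambda^2 \Tr[\mathcal{N}^4 G_{h,\eta}] \lesssim 1$, i.e.\ \eqref{eq:particleNumberBoundsPerturbedState-k}. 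The principal technical obstacle lies in parts (a)--(c): each of the many quartic interaction terms produced by the double commutator has to be estimated with the \emph{correct} $p$-dependence using only the first-order moment bounds above, without assuming any quasi-free structure for $G_{h,\eta}(\beta,\mu)$. The Cauchy--Schwarz splittings of the form $|a_q^*a_s^*a_r a_t| \le \tfrac12(a_q^*a_r^*a_r a_q + a_s^*a_t^*a_t a_s)$ together with the $\ell^1$-summability of $\hat v$ are the crucial input here.
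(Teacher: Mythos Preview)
Your overall framework is correct and matches the paper's: apply Theorem~\ref{thm:correlation-intro} with $A=\beta(\mathcal{H}_{h,\eta}-\mu\mathcal{N})$ and suitable $B$, feed in the first-order bounds of Theorem~\ref{thm:firstOrderAPriori}, and control the double commutator. Your treatments of (b), (c), (d) are essentially the paper's. However, your sketch of (a) contains a real gap.

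You claim that the ten pieces of the interaction double commutator can be bounded ``by Cauchy--Schwarz together with the first-moment bounds $\Tr[a_q^*a_q G]\lesssim(\beta(q^2-\mu_0))^{-1}$ and $\Tr[\mathcal{N}^2 G]\lesssim\eta^2$''. This is not enough to obtain the stated $+1$. A typical piece, say the one coming from $\delta_{p,r+k}$, reads $\tfrac{\beta}{\eta}\sum_{q,r}\hat v(p-r)\,\Tr[a_p^*a_{q-p+r}^*a_ra_q\,G]$. Any Cauchy--Schwarz splitting of the quartic produces factors such as $\Tr[a_p^*\mathcal{N}a_p\,G]$ or $\Tr[n_r\mathcal{N}\,G]$; with only $\Tr[\mathcal{N}^2 G]\lesssim\eta^2$ at your disposal these are bounded by $\eta^2$, and after the prefactor $\beta/\eta$ you get $\beta\eta\sim\eta^{1/3}$ for the commutator, hence $\Tr[(a_p^*a_p)^2 G]\lesssim(\beta p^2)^{-2}+\eta^{1/3}$ rather than $+1$. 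Moreover, the piece in which one of the four momenta equals $0$ (coming from $r=0$ in the sum) cannot even be closed this way, because the unrestricted sum over $q$ of $\Tr[a_p^*a_{q-p}^*a_0a_q\,G]$ has no decay in~$q$.

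The paper resolves both issues. First, it proves (b) \emph{before} (a): the double commutator $[[B_p,\mathcal{H}],B_p]$ is bounded pointwise by $C(1+p^2)\mathcal{N}$ plus an interaction piece $\lesssim\eta^{-1}\mathcal{N}^2$, so only $\Tr[\mathcal{N}G]\lesssim\eta$ and $\Tr[\mathcal{N}^2G]\lesssim\eta^2$ are needed here. Second, in the proof of (a) it organises the commutator so that the ``cubic in excitations'' contribution is rewritten as $a_{-p}^*\widetilde B_{-p}a_0$ and bounded using the just-proved estimate $\Tr[\widetilde B_p^*\widetilde B_p\,G]=\tfrac12\Tr[B_p^2\,G]\lesssim\eta^{4/3}(1+p^2/\eta)$; the remaining pieces are bounded by $\beta\bigl(1+(\sup_r\Tr[(a_r^*a_r)^2G])^{1/2}\bigr)$, i.e.\ the second moment reappears on the right-hand side. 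One then closes a \emph{bootstrap}: taking the supremum over $p$ yields $X\lesssim\eta^{4/3}+\beta X^{1/2}$ with $X=\sup_r\Tr[(a_r^*a_r)^2G]$, whence $X\lesssim\eta^{4/3}$, and substituting back gives the sharp $+1$. Neither the use of (b) inside (a) nor this self-consistent step is visible in your sketch.
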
 

\begin{remark}
    The bounds in the above theorem cannot be expected to hold for general approximate minimizers of the free energy. For example, \eqref{eq:lowerBoundFE2} below remains true if we add the nonnegative term 
    \begin{equation}
        \frac{1}{8 \eta} \sum_{p \in \Lambda^*_+} \hat{v}(p) \tr[ B^2_p G]
    \end{equation}
    on the right-hand side. To see this we apply \eqref{eq:firstOrderBoundsAndiNew8}. When we keep this term until the end of the lower bound for the free energy, we can conclude that it is bounded from above by a constant times $\eta^{5/3}$. Therefore the bound in \eqref{eq:cN-cN} is stronger than what one would obtain from the use of coercivity, which is crucial to resolve the free energy to the precision that justifies Bogoliubov theory.  
\end{remark}

\begin{proof} {\textbf {Proof of \eqref{eq:cN-cN}.}} We apply Theorem~\ref{thm:correlation} with the choices 
\begin{equation}
    A=\beta( \mathcal{H}_{h,\eta} - \mu \mathcal{N} ), \quad B=  \lambda \beta B_p
\end{equation}
with $\mathcal{H}_{h,\eta}$ in \eqref{eq:perturbedFockSpaceHamiltonian}, $p \in \Lambda_+^*$, and $B_p$ in \eqref{eq:cN-cN}. We choose $|\lambda|$ sufficiently small such that $h + \lambda Q \cos(p \cdot x) Q \gesssim - \Delta$. Condition \eqref{eq:CRI-condition}  in Theorem \ref{thm:correlation} is satisfied with a constant $a>0$ because of the second estimate in \eqref{eq:unperturbedFirstOrderBounds}. Thus an application of Theorem~\ref{thm:correlation} shows
\begin{equation}
\Tr[ B_p^2 G_{h,\eta} ] \leq \frac{a e^a}{\beta^2} + \frac{\beta}{4} \Tr( [[B_p, \mathcal{H}_{h,\eta} - \mu \mathcal{N}], B_p]  G_{h,\eta} ).
\label{eq:section63A9}
\end{equation}

To compute the double commutator of $B_p$ and $\de \Gamma(h)$ we use
\begin{equation}
    [\de \Upsilon (X), \de \Upsilon (Y)]= \de \Upsilon ([X,Y])
    \label{eq:section63A4}
\end{equation}
for two operators $X,Y \in \mathcal{B}(L^2(\Lambda))$. Next we write $h = -\Delta + h + \Delta$ and apply the formula  
\begin{equation}
    [[\varphi(x),-\Delta],\varphi(x)]  = (\nabla \varphi(x))^2,
    \label{eq:section63A9b}
\end{equation}
where $\varphi$ is a multiplication operator with the real-valued function $\varphi(x)$. These considerations show
\begin{equation}
    [[B_p, \de \Upsilon(h), B_p] = \de \Upsilon( Q p^2 \cos^2(p \cdot x) Q ) + \de \Upsilon( Q [[\cos^2(p \cdot x), (h+\Delta)], \cos^2(p \cdot x)] Q ).
    \label{eq:section63A9c}
\end{equation}
When combined with $\Vert h + \Delta \Vert < +\infty$, \eqref{eq:particleNumberBoundsPerturbedState}, and the bound $\Vert \de \Upsilon(X) \psi \Vert \leq \Vert X \Vert_{\infty} \Vert \mathcal{N} \psi \Vert$ with a bounded operator $X$ acting on $L^2(\Lambda)$, this also implies 
\begin{equation}
    \Tr([[B_p, \de \Upsilon(h), B_p] G_{h,\eta}) \lesssim (1 + p^2) \Tr[ \mathcal{N} G_{h,\eta} ] \lesssim (1 + p^2) \eta.
    \label{eq:section63A9d}
\end{equation}

To obtain a bound for second commutator on the right-hand side of \eqref{eq:section63A9} involving the interaction term $\mathcal{V}_{\eta}$, we use again the bound for $\Vert \de \Upsilon(X) \psi \Vert$ from above. It is not difficult to see that
\begin{equation}
    \Tr( [[B_p, \mathcal{V}_{\eta}], B_p]  G_{h,\eta} ) \lesssim \eta. 
    \label{eq:section63A10}
\end{equation}
Putting \eqref{eq:section63A9}, \eqref{eq:section63A9d}, and \eqref{eq:section63A10} together, we find
\begin{equation}
    \Tr[ B_p^2 G_{h,\eta} ] \lesssim \beta^{-2} + (1+p^2) \beta \eta,
    \label{eq:section63A10b}
\end{equation}
which proves \eqref{eq:cN-cN}.

{\textbf {Proof of \eqref{eq:aa-G}.}} Let us first prove \eqref{eq:aa-G} with $\mu_0(\beta,M)$ replaced by zero on the right-hand side. This is sufficient if $N_0(\beta,M) \gesssim \eta^{2/3}$ as it is equivalent to $-\mu_0(\beta,M) \lesssim 1$. We apply Theorem~\ref{thm:correlation} with the choices
\begin{equation}
    A=\beta( \mathcal{H}_{h,\eta} - \mu \mathcal{N} ) \quad \text{ and } \quad B= \lambda \beta p^2 a_p^* a_p
    \label{eq:section63A1}
\end{equation}
 with $p \in \Lambda_+^*$ and a constant $\lambda \in \mathbb{R}$ that we allow ourselves to choose as small as we wish. We recall that $a_p^* a_p = \de \Upsilon(| \varphi_p \rangle \langle \varphi_p |)$ with $\varphi_p(x) = e^{\mathrm{i}p \cdot x}$. Our assumptions on $h$ imply that $h + \lambda p^2 |\varphi_p \rangle \langle \varphi_p| \gesssim -\Delta$ holds provided $|\lambda|$ is small enough. From \eqref{eq:unperturbedFirstOrderBounds} we know that condition \eqref{eq:CRI-condition} in Theorem \ref{thm:correlation} is satisfied with a constant $a > 0$ that does not depend on $\eta$. An application of Theorem \ref{thm:correlation} therefore gives
\begin{equation}
     \Tr[ (a_p^*a_p)^2 G_{h,\eta}(\beta,\mu) ] \leq \frac{ a e^a }{( \lambda \beta p^2 )^2} + \frac{\beta}{4} \Tr( [[a_p^*a_p,\mathcal{H}_{h,\eta} - \mu \mathcal{N}],a_p^* a_p ]  G_{h,\eta} ).  
    \label{eq:section63A2}
\end{equation}

Since $h$ is diagonal in momentum space we have $[a_p^*a_p, \de \Upsilon(h) - \mu \mathcal{N}] = 0$. A short computation also shows
\begin{align}\label{eq:commutator-appl-B1-B>0-2Andi}
\sum_{k,q,r\in \Lambda^*} \hat v(k) \Big[a_p^* a_p,  \Big[ a_p^* a_p, a^*_{r+k} a^*_{q-k} a_r a_q \Big]\Big] 
=\sum_{k,q,r\in \Lambda^*} \hat v(k) (\delta_{p,r+k} +\delta_{p,q-k}  -\delta_{p,r} -\delta_{p,q})^2 a^*_{r+k} a^*_{q-k} a_r a_q. 
\end{align}
Let us have a closer look at the term proportional to $\delta_{p,r+k}$, which reads
\begin{align}
    \sum_{q,r \in \Lambda^*} \hat v(p-r) \tr[ a^*_{p} a^*_{q-p+r} a_r a_q G_{h,\eta} ] =& \sum_{q \in \Lambda^*, r \in \Lambda_+^*} \hat v(p-r) \tr[ a^*_{p} a^*_{q-p+r} a_r a_q G_{h,\eta} ] \nn \\
    &+ \hat v(p) \sum_{q \in \Lambda^*}  \tr[ a^*_{p} a^*_{q-p} a_0 a_q G_{h,\eta} ].
    \label{eq:firstOrderBoundsAndiNew1}
\end{align}
The first term on the right-hand side can be written as
\begin{align}
    \sum_{q \in \Lambda^*, r \in \Lambda_+^*} \hat v(p-r) \tr[ a^*_{p} a^*_{q-p+r} a_r a_q G_{h,\eta} ] =& \sum_{r \in \Lambda_+^*} \hat v(p-r) \tr\left[ a^*_{p} a_r \left( \sum_{ q \in \Lambda^* } a^*_{q-p+r} a_q \right) G_{h,\eta} \right] \nn \\
    &- \sum_{r \in \Lambda_+^*} \hat{v}(p-r) \tr[ a_p^* a_p G_{h,\eta} ].
    \label{eq:firstOrderBoundsAndiNew2}
\end{align}
The quadratic operator in the bracket in the first term is the second quantization of multiplication with the function $\varphi_{-p+r}(x) = e^{\mathrm{i}(-p+r) \cdot x}$. An application of the Cauchy-Schwarz inequality shows
\begin{align}
    \sum_{r \in \Lambda_+^*} \hat v(p-r) \left| \tr\left[ a^*_{p} a_r \de \Upsilon(  \varphi_{-p+r} ) G_{h,\eta} \right] \right| \leq& \sum_{r \in \Lambda_+^*} \hat{v}(p-r) \left( \tr[ (a^*_p a_p)^2 G_{h,\eta} ] \tr[ (a^*_r a_r + 1)^2 G_{h,\eta} ] \right)^{1/4} \nn \\ 
    &\times \left( \tr[ \de \Upsilon^*(\varphi_{-p+r}) \de \Upsilon(\varphi_{-p+r}) G_{h,\eta} ] \right)^{1/2}. 
    \label{eq:firstOrderBoundsAndiNew3}
\end{align}
With $|\varphi_{-p+r}|=1$, the bound for the second quantization of an operator below \eqref{eq:section63A9c}, and \eqref{eq:particleNumberBoundsPerturbedState}, we check that the expectation in the second line is bounded by $\tr[ \mathcal{N}^2 G_{h,\eta} ] \lesssim \eta^2$. Accordingly, the right-hand side of \eqref{eq:firstOrderBoundsAndiNew3} is bounded by a constant times
\begin{equation}
    \Vert \hat{v} \Vert_1 \eta \left( \sup_{r \in \Lambda_+^*} \tr[ (a^*_r a_r)^2 G_{h,\eta} ] + 1 \right)^{1/2} 
    \label{eq:firstOrderBoundsAndiNew4}
\end{equation}
and we have
\begin{equation}
    \left| \sum_{q \in \Lambda^*, r \in \Lambda_+^*} \hat v(p-r) \tr[ a^*_{p} a^*_{q-p+r} a_r a_q G_{h,\eta} ] \right| \lesssim \Vert \hat{v} \Vert_1 \eta \left( \sup_{r \in \Lambda_+^*} \tr[ (a^*_r a_r)^2 G_{h,\eta} ] + 1 \right)^{1/2}.  
    \label{eq:firstOrderBoundsAndiNew5}
\end{equation}

Next, we consider the second term on the right-hand side of \eqref{eq:firstOrderBoundsAndiNew1}. If $q = 0$ or $q = p$ it is bounded by a constant times
\begin{equation}
    \Vert \hat{v} \Vert_{\infty} \eta \left( \sup_{r \in \Lambda_+^*} \tr[ (a^*_r a_r)^2 G_{h,\eta} ] \right)^{1/2}.
    \label{eq:firstOrderBoundsAndiNew6}
\end{equation}
If $q \notin \{0,p\}$ we introduce the notation $\widetilde{B}_p = \sum_{q,q-p \in \Lambda_+^*} a^*_{q-p} a_q$ and estimate
\begin{align}
    \left| \sum_{q \in \Lambda^*_+ \backslash \{ p \}}  \tr[ a^*_{p} a^*_{q-p} a_0 a_q G_{h,\eta} ] \right| \leq \left( \tr[ a_{p} a_p^* a^*_0 a_0  G_{h,\eta} ] \tr[ \widetilde{B}_p^* \widetilde{B}_p ] \right)^{1/2}.
    \label{eq:firstOrderBoundsAndiNew7}
\end{align}
We note that
\begin{equation}
    \frac{1}{2} (\widetilde{B}_p^* + \widetilde{B}_p) = B_p   
    \label{eq:lowerBoundFE3}
\end{equation}
with $B_p$ below \eqref{eq:cN-cN}. Using the translation-invariance of $G_{h,\eta}$, which implies momentum conservation for the matrix elements of its reduced density matrices in momentum space, we check that
\begin{equation}
    \Tr [ \widetilde{B}_p^2 G ] =  0 \quad \text{ and } \quad [\widetilde{B}_p,\widetilde{B}_p^*] = 0
    \label{eq:lowerBoundFE4}
\end{equation}
hold for $p \in \Lambda_+^*$. But this implies 
\begin{equation}
    \Tr [ \widetilde{B}_p^* \widetilde{B}_p G ] = \frac{1}{2} \Tr[ B_p^2 G ]. 
    \label{eq:firstOrderBoundsAndiNew8}
\end{equation}
When we combine \eqref{eq:particleNumberBoundsPerturbedState}, \eqref{eq:cN-cN}, \eqref{eq:firstOrderBoundsAndiNew7}, and \eqref{eq:firstOrderBoundsAndiNew8}, we find
\begin{equation}
    \left| \sum_{q \in \Lambda^*_+ \backslash \{ p \}}  \tr[ a^*_{p} a^*_{q-p} a_0 a_q G_{h,\eta} ] \right| \lesssim \eta^{7/6} \left( 1 + (\tr[ (a_p^* a_p)^2 G_{h,\eta} ])^{1/4} \right) ( 1 + |p| ), 
    \label{eq:firstOrderBoundsAndiNew99}
\end{equation}
and hence
\begin{align}
    \hat v(p) \left| \sum_{q \in \Lambda^*}  \tr[ a^*_{p} a^*_{q-p} a_0 a_q G_{h,\eta} ] \right| \lesssim& \eta \Vert \hat{v} \Vert_{\infty} \left( \sup_{r \in \Lambda_+^*} \tr[ (a^*_r a_r)^2 G_{h,\eta} ] \right)^{1/2} \nn \\
    &+ \eta^{7/6} \left( \sup_{r \in \Lambda^*} \hat{v}(r)(1+|r|) \right) \left( 1 + (\tr[ (a_p^* a_p)^2 G_{h,\eta} ])^{1/4} \right) .
\label{eq:firstOrderBoundsAndiNew9}
\end{align}

In combination, \eqref{eq:firstOrderBoundsAndiNew1}, \eqref{eq:firstOrderBoundsAndiNew5}, and \eqref{eq:firstOrderBoundsAndiNew9} show
\begin{align}
    \left| \sum_{q,r \in \Lambda^*} \hat v(p-r) \tr[ a^*_{p} a^*_{q-p+r} a_r a_q G_{h,\eta} ] \right| \lesssim& \eta \Vert \hat{v} \Vert_{1} \left( 1 + \left( \sup_{r \in \Lambda_+^*} \tr[ (a^*_r a_r)^2 G_{h,\eta} ] \right)^{1/2} \right) \nn \\
    &+ \eta^{7/6}\left( \sup_{r \in \Lambda^*} \hat{v}(r) (1+|r|) \right) \left( 1 + \left( \tr[ (a^*_p a_p)^2 G_{h,\eta} ] \right)^{1/4} \right) .
    \label{eq:firstOrderBoundsAndiNew10}
\end{align}
For the other terms on the right-hand side of \eqref{eq:commutator-appl-B1-B>0-2Andi} that are proportional to $\delta_{p,q-k}$, $\delta_{p,r}$, or $\delta_{p,q}$, we can obtain similar bounds using the symmetry of the relevant labels. The cross terms, which involve the product of two Kronecker deltas, are even simpler to estimate since the corresponding sum is taken over a more restricted domain. Thus, we conclude that all terms on the right-hand side of \eqref{eq:commutator-appl-B1-B>0-2Andi} are bounded by the right-hand side of \eqref{eq:firstOrderBoundsAndiNew10}. 

We conclude the bound for the double commutator of the form
\begin{align} 
&\beta \Tr( [[a_p^*a_p,\mathcal{H}_{h,\eta} - \mu \mathcal{N}],a_p^* a_p ]  G_{h,\eta} )=\frac{\beta}{8\eta}\sum_{k,q,r\in \Lambda^*} \hat v(k) \tr( [a_p^* a_p,  [ a_p^* a_p, a^*_{r+k} a^*_{q-k} a_r a_q ] ] G_{h,\eta}) \nonumber
\\    &\lesssim \beta \Vert \hat{v} \Vert_{1} \left( 1 + \left( \sup_{r \in \Lambda_+^*} \tr[ (a^*_r a_r)^2 G_{h,\eta} ] \right)^{1/2} \right) 
    + \beta\eta^{1/6}\left( \sup_{r \in \Lambda^*} \hat{v}(r) (1+|r|) \right) \left( 1 + \left( \tr[ (a^*_p a_p)^2 G_{h,\eta} ] \right)^{1/4} \right).
    \label{eq:doublecommboundapapapap}
\end{align}
Inserting \eqref{eq:doublecommboundapapapap} in  \eqref{eq:section63A2}, we find
\begin{equation}
    \sup_{r \in \Lambda_+^*} \tr[ (a_r^* a_r)^2 G_{h,\eta} ] \lesssim \eta^{4/3}.
    \label{eq:firstOrderBoundsAndiNew11}
\end{equation}
Finally, when we use \eqref{eq:firstOrderBoundsAndiNew11} in the bounds above, we also obtain
\begin{equation}
    \tr[ (a_p^* a_p)^2 G_{h,\eta} ] \lesssim \left( \frac{1}{\beta p^2} \right)^2 + 1,\quad p\in \Lambda_+^*. 
    \label{eq:firstOrderBoundsAndiNew12}
\end{equation}

It remains to consider the case $N_0(\beta,M) \lesssim \eta^{2/3}$. In this case we apply Theorem~\ref{thm:correlation} with the choices
\begin{equation}
    A=\beta( \mathcal{H}_{h,\eta} - \mu \mathcal{N} ) \quad \text{ and } \quad B= \lambda \beta (p^2-\mu_0(\beta,M)) a_p^* a_p, \qquad p\in \Lambda^*
    \label{eq:section63A1p=0}.
\end{equation}
By \eqref{eq:firstorderapapbound}, we know that condition \eqref{eq:CRI-condition} in Theorem \ref{thm:correlation} is satisfied with a constant $a > 0$ that does not depend on $\eta$, and hence we obtain 
\begin{align}
     &\Tr[ (a_p^*a_p)^2 G_{h,\eta}(\beta,\mu) ] \leq \frac{ a e^a }{( \lambda \beta (p^2-\mu_0(\beta,M) )^2} + \frac{\beta}{4} \Tr( [[a_p^*a_p,\mathcal{H}_{h,\eta} - \mu \mathcal{N}],a_p^* a_p ]  G_{h,\eta} )\nonumber \\
     &\lesssim \frac{ 1 }{( \beta (p^2-\mu_0(\beta,M) )^2} + \beta \left( 1 + \left( \sup_{r \in \Lambda^*} \tr[ (a^*_r a_r)^2 G_{h,\eta} ] \right)^{1/2} \right) 
    +\beta\eta^{1/6}  \left( 1 + \left( \tr[ (a^*_p a_p)^2 G_{h,\eta} ] \right)^{1/4} \right).
    \label{eq:section63A2N0small}
\end{align}
Here the estimate for the double double commutator is obtained by following the proof of  \eqref{eq:doublecommboundapapapap}. Note that  \eqref{eq:section63A2N0small} holds for all $p\in \Lambda^*$. Therefore, we arrive at the bound
\begin{equation}
    \sup_{p\in \Lambda^*} \Tr[ (a_p^*a_p)^2 G_{h,\eta}(\beta,\mu) ] \lesssim \frac{1}{( \beta \mu_0(\beta,M) )^2} + 1 \lesim \frac{1}{\beta^2}
\end{equation}
where we used $-\mu_0(\beta,M) \gesim 1$ when $N_0(\beta,M) \lesim \eta^{2/3}$. Inserting this bound in \eqref{eq:section63A2N0small} we conclude that   
\begin{equation}
     \Tr[ (a_p^*a_p)^2 G_{h,\eta}(\beta,\mu) ] \lesssim \frac{ 1 }{(  \beta (p^2-\mu_0(\beta,M) )^2} + 1,\quad p\in \Lambda^* 
\end{equation}
if $N_0(\beta,M) \lesssim \eta^{2/3}$. The proof of \eqref{eq:aa-G} is complete.

{\textbf {Proof of \eqref{eq:varianceNPlus}}.}  We apply Theorem \ref{thm:correlation} with $A=\beta( \mathcal{H}_{h,\eta} - \mu \mathcal{N} )$ and $B= \lambda \beta (\cN_+-N_+^{\mathrm{G}})$ with $|\lambda|$ chosen small enough such that $h + \lambda Q \gesim - \Delta$ holds. Condition \eqref{eq:CRI-condition} in Theorem \ref{thm:correlation} with a constant $a > 0$ is justified by Theorem~\ref{thm:firstOrderAPriori}. An application of Theorem \ref{thm:correlation} therefore shows 
\begin{equation}
\Tr( (\cN_+-N_+)^2 G_{h,\eta} ) \leq \frac{ a e^a }{\beta^2} + \frac{\beta}{4} \Tr( [[\cN_+, \mathcal{H}_{h,\eta} - \mu \mathcal{N}],\cN_+] G_{h,\eta} ). 
\label{eq:section63A10c}
\end{equation}
The second commutator on right-hand side involving the kinetic term vanishes. The one involving the interaction term can be estimated similarly as the term in \eqref{eq:section63A10}, which yields 
\begin{equation}
    \Tr( [[\cN_+, \mathcal{V}_{\eta},\cN_+] G_{h,\eta} ) \lesssim \eta.
    \label{eq:section63A10d}
\end{equation}
When we combine \eqref{eq:section63A10c} and \eqref{eq:section63A10d}, this proves \eqref{eq:varianceNPlus}.

{\textbf {Proof of \eqref{eq:particleNumberBoundsPerturbedState-k}.}} 
We apply Theorem \ref{thm:correlation} with $A=\beta( \mathcal{H}_{h,\eta} - \mu \mathcal{N} )$ and $B= \eta^{-2} \cN^2$ (considering the operator $A+tB$ corresponds to shifting $\hat v(0)$ by $2t \beta^{-1}\eta^{-1}=O(\eta^{-1/3})$). We highlight that $A$ and $B$ commute. An application of Theorem~\ref{thm:correlation} shows
\begin{equation}
  \eta^{-4}  \Tr[\mathcal{N}^4 G_{h,\eta}] \lesssim 1.
\end{equation}
\end{proof}

\section{Sharp lower bound for the free energy} 
\label{sec:lowerBound}
In this section we prove the lower bound in \eqref{eq:ThmFreeEnergyBounds} and thereby finish the proof of Theorem~\ref{thm:main1}. We will also provide the proof of Corollary~\ref{cor:main1}.

\subsection{Lower bound for the energy in terms of a simplified Hamiltonian}
\label{sec:lowerBoundSimpliefiedHamiltonian}
Our goal is to obtain a lower bound for 
\begin{equation}
    F(\beta,N) = \Tr [\cH_N G] - \frac{1}{\beta} S(G)    
    \label{eq:lowerBoundFE0}
\end{equation}
with the Gibbs state $G$ in \eqref{eq:interactingGibbsstate}. Note that, for the sake of simplicity and because we will have to introduce other subscripts later, we omit the subscripts $\beta,N$ and write $G$ instead of $G_{\beta,N}$ in Section~\ref{sec:lowerBound}. 

In this section we replace the Hamiltonian $\mathcal{H}_N$ in \eqref{eq:lowerBoundFE0} by an operator that is easier to handle in a controlled way. To that end, we first recall the decomposition of $\mathcal{H}_N$ in \eqref{eq:decompH}. 

We start with the term in the last line of \eqref{eq:decompH}, which we write as
\begin{equation}
    \frac{1}{2N} \sum_{u,v,p,u+p,v-p \in \Lambda^*_+} \hat{v}(p) a^*_{u+p} a^*_{v-p} a_u a_v = \frac{1}{2N} \sum_{p \in \Lambda^*_+} \hat{v}(p) \widetilde{B}_p^* \widetilde{B}_p -  \frac{1}{2N} \left( \sum_{p \in \Lambda^*_+} \hat{v}(p) \right) \cN_+
    \label{eq:lowerBoundFE1}
\end{equation}
with $\widetilde{B}_p=\sum_{v, v-p\in\Lambda^*_+}a^*_{v-p}a_v$. Using \eqref{eq:lowerBoundFE1}, $\widetilde{B}_p^* \widetilde{B}_p \ge 0$ and $\hat v(p)\ge 0$ for $p \in \Lambda_+^*$, and $\Tr (\cN_+ G) \leq N$, we find
\begin{equation}
\frac{1}{2N} \sum_{u,v,p,u+p,v-p \in \Lambda^*_+} \hat{v}(p) \Tr [ a^*_{u+p} a^*_{v-p} a_u a_v G] \ge -  \frac{v(0)}{2}.
\label{eq:lowerBoundFE2}
\end{equation} 

Next, we consider the term that is cubic in creation and annihilation operators $a_p^*$, $a_p$ with $p \in \Lambda_+^*$, that is, the fourth term on the right-hand side of \eqref{eq:decompH}. We first recall \eqref{eq:firstOrderBoundsAndiNew8}. In combination with Lemma~\ref{lem:roughBoundChemicalPotential} and \eqref{eq:cN-cN} in Theorem~\ref{thm:secondOrderEstimates}, this implies
\begin{equation}
    \Tr [ \widetilde{B}_p^* \widetilde{B}_p G ] = \frac{1}{2} \Tr[ B_p^2 G ] \lesssim N^{4/3} (1+p^2 N^{-1}).
    \label{eq:AA-G}
\end{equation}
Using \eqref{eq:AA-G}, \eqref{eq:aa-G} in Theorem~\ref{thm:secondOrderEstimates}, and the Cauchy--Schwarz inequality, we can estimate the term that is cubic in $a_p^*, a_p$ with $p \in \Lambda_+^*$ as follows:
\begin{align}
\frac{1}{N} &\sum_{p,k,p+k \in \Lambda_+^*} \hat{v}(p)  \Tr [ (a^*_{k+p} a^*_{-p} a_k a_0 + \mathrm{h.c.}) G ] =  \frac{1}{N} \sum_{p  \in \Lambda_+^*} \hat{v}(p)  \Tr [ ( a_{-p}^* B_{-p} a_0 + \mathrm{h.c.})G ] \nonumber \\
&\geq  -\frac{2}{N}  \sqrt{\sum_{p \in \Lambda_+^*} \hat{v}(p) |p| \Tr [ a_{-p}^* (\cN+1) a_{-p} G ] }   \sqrt{  \sum_{p \in \Lambda_+^*} \hat{v}(p) |p|^{-1} \Tr [ a_0^* B_{-p}^* (\cN+1)^{-1} B_{-p} a_0 G ]} \nonumber \\
&=  -\frac{2}{N}  \sqrt{\sum_{p \in \Lambda_+^*} \hat{v}(p) |p| \Tr [ \cN a_{-p}^* a_{-p} G] } \sqrt{  \sum_{p \in \Lambda_+^*} \hat{v}(p) |p|^{-1} \Tr [ \cN^{-1} a_0^* a_0  \widetilde{B}_p^* \widetilde{B}_p G ]} \nonumber \\
&\geq -\frac{2}{N}  \sqrt{\sum_{p \in \Lambda_+^*} \hat{v}(p) |p| \sqrt{ \Tr [ \cN^2 G ] \Tr [ (a_{-p}^* a_{-p})^2 G ] } } \sqrt{  \sum_{p \in \Lambda_+^*} \hat{v}(p) |p|^{-1} \Tr [ \widetilde{B}_p^* \widetilde{B}_p G ]} \nonumber \\
&\gesssim - \frac 2 N \sqrt{ \sum_{p \in \Lambda_+^*} \hat{v}(p) |p| N^{5/3} } \sqrt{  \sum_{p \in \Lambda_+^*} \hat{v}(p) |p|^{-1} N^{4/3} p^2 } = -N^{1/2} \sum_{p \in \Lambda_+^*} |p| \hat{v}(p). \label{eq:lowerBoundFE5}
\end{align}

Now we consider the term
\begin{equation}
    \frac{\hat{v}(0)}{2N} \sum_{u,v \in \Lambda^*} \Tr [ a_u^* a_v^* a_u a_v G ] = \frac{\hat{v}(0)}{2N} \Tr [ \cN (\cN-1) G ] = \frac{\hat{v}(0)}{2} N +  \frac{\hat{v}(0)}{2N} \Tr [ (\cN - N) ^2 G ] - \frac{\hat{v}(0)}{2}. 
\label{eq:lowerBoundFE6}
\end{equation}
An application of the Cauchy--Schwarz inequality shows that the variance in the above equation satisfies the operator inequality
\begin{equation}
    (\cN - N)^2=(\cN_0 - N^{\mathrm{G}}_0 + \cN_+  - N^{\mathrm{G}}_+)^2 \ge (1-\delta) (\cN_0 - N^{\mathrm{G}}_0 )^2 - \delta^{-1} (\cN_+ - N^{\mathrm{G}}_+)^2      
    \label{eq:lowerBoundFE7}
\end{equation}
for any $0 < \delta < 1$. We recall the notations $\mathcal{N}_0 = a_0^*a_0$, $\mathcal{N}_+ = \mathcal{N} - \mathcal{N}_0$, $N_0^{\mathrm{G}} = \Tr[ \mathcal{N}_0 G ]$ and $N_+^{\mathrm{G}} = \Tr[ \mathcal{N}_+ G ]$. Moreover, an application of \eqref{eq:varianceNPlus} in Theorem~\ref{thm:secondOrderEstimates} shows that the expectation of the second term on the right-hand side of \eqref{eq:lowerBoundFE7} with respect to the state $G$ is bounded from below by a constant times $\delta^{-1} N^{4/3}$. When we put these considerations together, we find the lower bound
\begin{equation}
\frac{\hat{v}(0)}{2N} \sum_{u,v \in \Lambda^*} \Tr [ a_u^* a_v^* a_u a_v G ] \geq \frac{\hat{v}(0) N}{2} +  (1-\delta) \frac{\hat{v}(0)}{2N} \Tr [ (\cN_0 - N_0^{\mathrm{G}})^2 G ] - C N^{1/3} ( 1 + \delta^{-1}  ). 
\label{eq:lowerBoundFE8}
\end{equation}

Let us summarize our findings. In combination, \eqref{eq:lowerBoundFE0}, \eqref{eq:lowerBoundFE2}, \eqref{eq:lowerBoundFE5}, and \eqref{eq:lowerBoundFE8} show  
\begin{equation} 
F(\beta,N) \ge \Tr [ \widetilde{\cH}_N G ] - \frac{1}{\beta} S(G) + \frac{\hat v(0) N}{2} - C \left( N^{1/2} + N^{1/3} \delta^{-1} \right)
\label{eq:free-energy-lb-Step1}
\end{equation}
with the simplified Hamiltonian 
\begin{equation}
	\widetilde{\mathcal{H}}_N =  \de \Upsilon\big(-\Delta\big)+ \frac{1}{2N} \sum_{p \in \Lambda^*_+} \hat{v}(p) \left\{ 2 a_p^* a_0^* a_p a_0 + a_0^* a_0^* a_p a_{-p} + a_p^* a_{-p}^* a_0 a_0 \right\} 
	 + (1-\delta) \frac{\hat{v}(0)}{2N} (\cN_0- N^{\mathrm{G}}_0)^2.  
	\label{eq:decompHlambda-lb-simplified}
\end{equation}
In the next section we analize the first two terms on the right-hand side of \eqref{eq:free-energy-lb-Step1} with a c-number substitution in the spirit of \cite{DeuSei-20,LieSeiYng-05}. 

\subsection{C-number substitution}
\label{sec:cNumberSubstituion}
In the following we briefly introduce the c-number substitution, which also allows us to set the notation. We start by recalling the resolution of identity
\begin{equation}
	\int_{\mathbb{C}} | z \rangle \langle z | \de z = \mathds{1}_{\mathscr{F}_0} 
	\label{eq:resolutionofidentity}
\end{equation}
on the Fock space $\mathscr{F}_0$ over the $p=0$ mode with $|z \rangle $ in \eqref{eq:coherentstate} and the measure $\de z$ below \eqref{BogoliubovHamiltonianAndi}. Given any state $\Gamma \in \mathcal{S}_N$ with $\mathcal{S}_N$ in \eqref{eq:states}, we define the operator $\widetilde{\Gamma}_z$ acting on the excitation Fock space $\mathscr{F}_+$ in \eqref{eq:excitationFockSpace} by
\begin{equation}
	\widetilde{\Gamma}_z = \tr[ | z \rangle \langle z | \Gamma ] = \langle z, \Gamma z \rangle.
    \label{eq:cnumber1}
\end{equation}
We also denote 
\begin{equation}
	\zeta_{\Gamma}(z) = \tr_+[ \widetilde{ \Gamma }_z ],
    \label{eq:cnumber2}
\end{equation}
where $\tr_+$ denotes the trace over $\mathscr{F}_+$. 
Here we used the isometry \eqref{eq:exponentialProperty} to identify states on $\mathscr{F}$ and states on $\mathscr{F}_0 \otimes \mathscr{F}_+$. 
Since $\Gamma$ is a state, $\zeta_{\Gamma}$ defines a probability measure on $\mathbb{C}$. By $S(\zeta_{\Gamma})$ we denote the entropy of the classical probability distribution $\zeta_{\Gamma}$, that is,
\begin{equation}
	S(\zeta) = - \int_{\mathbb{C}} \zeta_{\Gamma}(z) \ln\left( \zeta_{\Gamma}(z) \right) \de z.
    \label{eq:cnumber3}
\end{equation} 
We also define the state 
\begin{equation}
	\Gamma_z = \frac{\widetilde{\Gamma}_z}{\tr_+[ \widetilde{\Gamma}_z ]}
 \label{eq:cnumber4}
\end{equation}
on $\mathscr{F}_+$. The following lemma, whose proof can be found in \cite[Lemma~3.2]{DeuSei-20}, provides us with an upper bound for the entropy of $\Gamma$ in terms of the ones of $\Gamma_z$ and $\zeta_{\Gamma}$.
\begin{lemma}
	\label{lem:entropyinequality}
	Let $\Gamma$ be a state on $\mathscr{F}$. The entropy of $\Gamma$ is bounded in the following way:
	\begin{equation}
		S(\Gamma) \leq \int_{\mathbb{C}} S(\Gamma_z) \zeta_{\Gamma}(z) \de z + S(\zeta_{\Gamma}).
    \label{eq:cnumber5}
	\end{equation}
\end{lemma}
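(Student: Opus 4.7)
This lemma is a Berezin–Lieb–type entropy inequality obtained by viewing the coherent-state decomposition on $\mathscr{F}_0$ as an isometric embedding into an auxiliary direct-integral space followed by a pinching onto the diagonal. Following the strategy of \cite[Lemma 3.2]{DeuSei-20}, I would proceed in three steps.

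First, I would rewrite the right-hand side of \eqref{eq:cnumber5} as a single operator-valued integral. Using $\widetilde{\Gamma}_z = \zeta_\Gamma(z) \Gamma_z$ together with the scaling identity $S(c X) = c S(X) - c \ln c$ valid for a state $X$ and $c>0$, a direct computation gives $S(\widetilde{\Gamma}_z) = \zeta_\Gamma(z) S(\Gamma_z) - \zeta_\Gamma(z) \ln \zeta_\Gamma(z)$, so that
\begin{equation*}
\int_{\mathbb{C}} S(\widetilde{\Gamma}_z) \de z = \int_{\mathbb{C}} \zeta_\Gamma(z) S(\Gamma_z) \de z + S(\zeta_\Gamma).
\end{equation*}
It is therefore enough to prove the operator-valued Berezin–Lieb bound
\begin{equation*}
S(\Gamma) \leq \int_{\mathbb{C}} S(\widetilde{\Gamma}_z) \de z.
\end{equation*}

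Second, I would introduce the linear map $K: \mathscr{F}_0 \otimes \mathscr{F}_+ \to L^2(\mathbb{C}, \de z) \otimes \mathscr{F}_+$ defined by $(K \Psi)(z) = \langle z | \Psi \rangle_{\mathscr{F}_0}$, where the bracket contracts only the $\mathscr{F}_0$ factor. The resolution of identity \eqref{eq:resolutionofidentity} is precisely the statement $K^* K = I$, so $K$ is an isometry. Consequently $K \Gamma K^*$ has the same non-zero eigenvalues as $\Gamma$ with the same multiplicities, and hence $S(K \Gamma K^*) = S(\Gamma)$ under the convention $0 \ln 0 = 0$.

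Third, I would exploit that $K \Gamma K^*$ is an integral operator on $L^2(\mathbb{C}, \de z) \otimes \mathscr{F}_+$ whose diagonal values at $z$ coincide with $\widetilde{\Gamma}_z$. Pinching onto the maximal abelian subalgebra of operators diagonal in the $z$-variable produces the direct-integral operator $\int_{\mathbb{C}}^{\oplus} \widetilde{\Gamma}_z \de z$, whose von Neumann entropy equals $\int_{\mathbb{C}} S(\widetilde{\Gamma}_z) \de z$. Since this pinching can be realized as an average of unitary conjugations (multiplications by measurable phases $\lambda: \mathbb{C} \to U(1)$), and since $S$ is concave and unitary-invariant, the pinched state has entropy at least $S(K \Gamma K^*)$. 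Combining these identities yields the desired inequality. The main technical obstacle is to justify the continuous pinching argument rigorously in infinite dimensions; I would handle this by coarse-graining $\mathbb{C}$ into measurable cells of vanishing diameter, applying the finite-dimensional inequality $S(\sum_k P_k \rho P_k) \geq S(\rho)$ to each approximation, and passing to the limit via lower semicontinuity of relative entropy.
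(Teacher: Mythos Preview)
Your proposal is correct and follows essentially the same approach as the paper's cited source \cite[Lemma~3.2]{DeuSei-20}: the paper does not give its own proof but simply refers to that reference, and your three-step argument (reduction to $S(\Gamma)\le\int S(\widetilde{\Gamma}_z)\,\de z$, the isometry $K$ coming from the resolution of identity \eqref{eq:resolutionofidentity}, and the pinching/Berezin--Lieb concavity step) is precisely the strategy used there. Your acknowledgement of the technical point concerning the continuous pinching and the proposed discretization-plus-limit remedy are also in line with how such arguments are made rigorous.
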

The above Lemma allows us to replace the entropy of $\Gamma$ in the Gibbs free energy functional by the ones of $\Gamma_z$ and $\zeta_{\Gamma}$ for a lower bound. In order to express also the energy in terms of the ones of $\Gamma_z$ and $\zeta_{\Gamma}$, we introduce the upper and the lower symbol $\mathcal{H}^{\mathrm{s}}$ and $\mathcal{H}_{\mathrm{s}}$ related to a general Hamiltonian $\mathcal{H}$. 
They are defined by the relations
\begin{equation}
	\mathcal{H} = \int_{\mathbb{C}} \mathcal{H}^{\mathrm{s}}(z) | z \rangle \langle z | \de z \quad \text{ and } \quad 
	\mathcal{H}_{\mathrm{s}}(z) = \langle z, \mathcal{H} z \rangle, \label{eq:cnumber6}
\end{equation}
respectively. More information on the upper and the lower symbol can be found e.g. in \cite{LieSeiYng-05}. Using the defining equation for the upper symbol we can write the expectation of the energy with respect to a state $\Gamma \in \mathcal{S}_N$ as
\begin{equation}
	\tr[\mathcal{H} \Gamma] = \int_{\mathbb{C}} \tr[ \mathcal{H}^{\mathrm{s}}(z) | z \rangle \langle z | \Gamma ] \de z = \int_{\mathbb{C}} \tr[ \mathcal{H}^{\mathrm{s}}(z) \Gamma_z ] \zeta_{\Gamma}(z) \de z.
	\label{eq:cnumber7}
\end{equation}
However, the upper symbol has the disadvantage that it is not necessarily nonnegative even if $\cH$ is nonnegative. We therefore prefer to work with the lower symbol and replace $\mathcal{H}^{\mathrm{s}}(z)$ by $\mathcal{H}_{\mathrm{s}}(z)$ in \eqref{eq:cnumber7}. 

We make the choice $\mathcal{H} = \widetilde{\mathcal{H}}_N$ with $\widetilde{\mathcal{H}}_N$ in \eqref{eq:decompHlambda-lb-simplified} and denote the upper and the lower symbols of $\widetilde{\mathcal{H}}_N$ by $\widetilde{\mathcal{H}}^{\mathrm{s}}$ and $\widetilde{\mathcal{H}}_{\mathrm{s}}$, respectively. A short computation shows that the difference between the upper and the lower symbols $\Delta \mathcal{H}(z) = \widetilde{\mathcal{H}}_{\mathrm{s}}(z) - \widetilde{\mathcal{H}}^{\mathrm{s}}(z)$ is given by
\begin{align}
	\Delta \mathcal{H}(z) =  -\frac{1}{N} \sum_{p \in \Lambda_+^*} \hat{v}(p) + \frac{(1-\delta) \hat{v}(0)}{2N} \left( 4 |z|^2 - 1 - 2 N_0^{\mathrm{G}} \right) \leq \frac{2(1-\delta) \hat{v}(0) |z|^2}{N}.
    \label{eq:cnumber8}
\end{align}
To quantify the cost of replacing $\widetilde{\mathcal{H}}^{\mathrm{s}}$ by $\widetilde{\mathcal{H}}_{\mathrm{s}}$, we estimate 
\begin{equation}
    \int_{\mathbb{C}} |z|^2 \zeta_{\Gamma}(z) \de z = \int_{\mathbb{C}} \Tr[ a_0   |z \rangle \langle z | a_0^* \Gamma ]  \de z = \Tr[ (a^*_0 a_0 + 1) \Gamma ] \leq N+1,
    \label{eq:cnumber9}
\end{equation}
which implies
\begin{equation}
	\int_{\mathbb{C}} \tr_+ \left[ \Delta \mathcal{H}(z) \Gamma_z \right] \zeta_{\Gamma}(z) \de z \leq 4 \hat{v}(0)
	\label{eq:cnumber10}
\end{equation}
provided $N \geq 1$.

When we put Lemma~\ref{lem:entropyinequality}, \eqref{eq:cnumber7}, and \eqref{eq:cnumber10} together, and also add and subtract the chemical potential $\mu_0(\beta,N)$ of the ideal gas times $\mathcal{N}_+$ in $\widetilde{\mathcal{H}}_{\mathrm{s}}$, we find
\begin{equation}
    \Tr [ \widetilde{\cH}_N G ] - \frac{1}{\beta} S(G) \geq \mu_0(\beta,N) N_+^{\mathrm{G}} + \int_{\mathbb{C}} \left\{ \tr_+ \left[ \mathcal{H}_{\mathrm{s}}^{\mu_0}(z) G_z \right] - \frac{1}{\beta} S(G_z) \right\} \zeta_G(z) \de z - \frac{1}{\beta}S(\zeta_G) - 4\hat{v}(0)
    \label{eq:cnumber11}
\end{equation}
with
\begin{align} 
    \mathcal{H}_{\mathrm{s}}^{\mu_0}(z) =& \de \Upsilon\big(-Q(\Delta +\mu_0(\beta,N)) \big)+ \frac{1}{2N} \sum_{p \in \Lambda^*_+} \hat{v}(p) \left\{ 2 |z|^2 a_p^* a_p  + \overline{z}^2 a_p a_{-p} + z^2 a_p^* a_{-p}^*   \right\} \nonumber \\
	&+ (1-\delta) \frac{\hat{v}(0)}{2N} ( |z|^2 - N_0^{\mathrm{G}})^2.
    \label{eq:Bog-Hamil-z}
\end{align}
In the following we denote the sum of the first two terms on the right-hand side by $\mathcal{H}^{\mathrm{Bog}}(z)$. 
\subsection{Final lower bound}
\label{sec:finalLowerBound}
The results from the previous two sections, that is, \eqref{eq:free-energy-lb-Step1} and \eqref{eq:cnumber11} show the lower bound
\begin{align}
    F(\beta,N) \geq& \frac{\hat{v}(0) N}{2} + \mu_0(\beta,N) N_+^{\mathrm{G}} + \int_{\mathbb{C}} \left\{ \tr_+ \left[ \mathcal{H}_{\mathrm{s}}^{\mu_0}(z) G_z \right] - \frac{1}{\beta} S(G_z) \right\} \zeta_G(z) \de z - \frac{1}{\beta}S(\zeta_G) \nonumber \\
    &- C \left( N^{1/2} + N^{1/3} \delta^{-1} \right)
    \label{eq:finalLowerBound1}
\end{align}
for the free energy.

By the Gibbs variational principle we have
\begin{align}
    \tr_+[ \mathcal{H}^{\mathrm{Bog}}(z) G_z ] - \frac{1}{\beta} S(G_z) \geq -\frac{1}{\beta} \ln\left( \tr_+ \exp(-\beta \mathcal{H}^{\mathrm{Bog}}(z)) \right) = \frac{1}{\beta} \sum_{p \in \Lambda_+^*} \ln \left( 1-\exp(-\beta \epsilon(z,p)) \right),    \label{eq:finalLowerBound2}
\end{align}
where $\epsilon(z,p)$ equals $\epsilon(p)$ in \eqref{eq:BogoliubovDispersion} except that $N_0$ is replaced by $|z|^2$. To obtain the second identity, we applied Lemma~\ref{lem:Diaggamma}. Next, we replace $|z|^2$ by $N_0$ on the right-hand side of \eqref{eq:finalLowerBound2}. To that end, we define the function
\begin{equation}
    g(x) = \frac{1}{\beta} \sum_{p \in \Lambda_+^*} \ln \left( 1-\exp(-\beta \epsilon(\sqrt{x},p)) \right).
    \label{eq:finalLowerBound2_1}
\end{equation}
Its first and second derivatives satisfy
\begin{equation}
    0 \leq g'(x) = \frac{1}{N} \sum_{p \in \Lambda_+^*} \frac{\hat{v}(p)}{\exp(\beta \epsilon(\sqrt{x},p))-1} \frac{p^2-\mu_0}{\epsilon(\sqrt{x},p)} \leq \frac{1}{\beta N} \sum_{p \in \Lambda_+^*} \frac{\hat{v}(p)}{p^2}
    \label{eq:finalLowerBound3a}
\end{equation}
and
\begin{align}
    0 \geq g''(x) &= - \frac{1}{N^2} \sum_{p\in \Lambda_+^*} \frac{ \hat{v}^2(p) (p^2-\mu_0)^2}{\epsilon^2(\sqrt{x},p)} \left( \frac{\beta}{4 \sinh^2\left(\frac{\beta \epsilon(\sqrt{x},p)}{2} \right)} + \frac{1}{\exp(\beta \epsilon(\sqrt{x},p)) - 1} \frac{1}{\epsilon(\sqrt{x},p)} \right) \nonumber \\
    &\geq - \frac{2}{\beta N^2} \sum_{p \in \Lambda_+^*} \frac{\hat{v}^2(p)}{|p|^4},
    \label{eq:finalLowerBound3b}
\end{align}
respectively. To obtain these bounds we used $\exp(x)-1 \geq x$ and $\sinh^2(x) \geq x^2$ for $x \geq 0$, as well as $\epsilon(\sqrt{x},p) \geq p^2-\mu_0$ and $\mu_0 < 0$. As a consequence the function
\begin{equation}
    f(x) = g(x) + \delta  \frac{\hat{v}(0)}{2N}(x - N_0^{\mathrm{G}})^2
    \label{eq:finalLowerBound5}
\end{equation}
is convex provided 
\begin{equation}
  \delta > \frac{2}{\beta N \hat{v}(0))} \sum_{p \in \Lambda_+^*} \frac{\hat{v}^2(p)}{|p|^4}
    \label{eq:finalLowerBound6}
\end{equation}
holds, which we assume from now on. An application of Jensen's inequality therefore shows
\begin{equation}
    \int_{\mathbb{C}} f(|z|^2) \zeta_G(z) \de z \geq f\left( \int_{\mathbb{C}} |z|^2 \zeta_G(z) \de z \right).
    \label{eq:finalLowerBound7}
\end{equation}
From Proposition~\ref{prop:roughapriori} and \eqref{eq:cnumber9} we know that 
\begin{equation}
    \left| \int_{\mathbb{C}} |z|^2 \zeta_G(z) \de z - N_0(\beta,N) \right| \lesssim N^{2/3} \ln(N)
    \label{eq:finalLowerBound8}
\end{equation}
with $N_0$ in \eqref{eq:crittemp}. Together with \eqref{eq:finalLowerBound3a} this implies
\begin{equation}
    f\left( \int_{\mathbb{C}} |z|^2 \zeta_G(z) \de z \right) \geq g(N_0) - C N^{1/3} \ln(N). 
    \label{eq:finalLowerBound9}
\end{equation}
When we put the above considerations together, we obtain 
\begin{align}
    \int_{\mathbb{C}} \left\{ \tr_+[ \mathcal{H}^{\mathrm{Bog}}(z) G_z ] - \frac{1}{\beta} S(G_z) \right\} \zeta_G(z) \de z \geq& \frac{1}{\beta} \sum_{p \in \Lambda_+^*} \ln \left( 1-\exp(-\beta \epsilon(p)) \right) \nonumber \\
    & + (1- 2\delta)  \frac{\hat{v}(0)}{2N}  \int_{\mathbb{C}}\left( |z|^2 - N_0^{\mathrm{G}} \right)^2 \zeta_G(z) \de z - C N^{1/3} \ln(N) \label{eq:finalLowerBound10}
\end{align}
with $\epsilon(p)$ in \eqref{eq:BogoliubovDispersion}. 

Let us consider now the term $\mu_0(\beta,N) N_+^{\mathrm{G}}$. An application of Proposition~\ref{prop:roughapriori} shows 
\begin{equation}
    \mu_0(\beta,N) N_+^{\mathrm{G}} \geq \mu_0 (N-N_0(\beta,N)) - C |\mu_0(\beta,N)| N^{2/3} \ln(N).
\end{equation}
Moreover, from \eqref{eq:preplemmas31} and \eqref{eq:preplemma32}--\eqref{eq:preplemma33b} we know that $\big| \sum_{p\in\Lambda_+^*} \gamma_p - (N-N_0(\beta,N)) \big| \lesssim N^{2/3}$, and hence
\begin{equation}
    \mu_0(\beta,N) N_+^{\mathrm{G}} \geq \mu_0(\beta,N) \sum_{p\in\Lambda_+^*} \gamma_p - C |\mu_0(\beta,N)| N^{2/3} \ln(N).
    \label{eq:finalLowerBound11}
\end{equation}

It remains to consider the terms related to the condensate, which read 
\begin{equation}
    \frac{(1-2\delta) \hat{v}(0)}{2N} \int_{\mathbb{C}} ( |z|^2 - N_0^{\mathrm{G}} )^2 \zeta_{G} \de z - \frac{1}{\beta} S(\zeta_G) \geq \tilde{F}_{\mathrm{c}}^{\mathrm{BEC}}(\beta,N_0^{\mathrm{G}}).
    \label{eq:finalLowerBound12}
\end{equation}
Here $\tilde{F}_{\mathrm{c}}^{\mathrm{BEC}}(\beta,N_0^{\mathrm{G}})$ equals $F_{\mathrm{c}}^{\mathrm{BEC}}(\beta,N_0^{\mathrm{G}})$ except that $\hat{v}(0)$ is replaced by $(1-2\delta)\hat{v}(0)$. Using parts (c) and (d) of Lemma~\ref{prop:FreeEnergyBEC}, we see that 
$$\tilde{F}_{\mathrm{c}}^{\mathrm{BEC}}(\beta,N_0^{\mathrm{G}}) \geq F_{\mathrm{c}}^{\mathrm{BEC}}(\beta,N_0^{\mathrm{G}}) - C \delta N^{2/3}.$$
Moreover, an application of Lemma~\ref{lem:replaceParticleNumberInCondensateEnergy} show 
$$F_{\mathrm{c}}^{\mathrm{BEC}}(\beta,N_0^{\mathrm{G}}) \geq F_{\mathrm{c}}^{\mathrm{BEC}}(\beta,N_0(\beta,N)) - C f^{\mathrm{BEC}}(N_0(\beta,N),N_0^{\mathrm{G}})$$
 with $f^{\mathrm{BEC}}(N_0(\beta,N),N_0^{\mathrm{G}})$ in \eqref{eq:perturbationTheoryChemicalPotential1b}. Finally, an application of Lemma~\ref{lem:comparisonContinuousDiscreteCondensateFreeEnergy} shows 
 $$F_{\mathrm{c}}^{\mathrm{BEC}}(\beta,N_0) \geq F^{\mathrm{BEC}}(\beta,N_0) - C N^{1/3}.$$
 Putting these considerations together, we obtain
\begin{equation}
    \frac{(1-2\delta) \hat{v}(0)}{2N} \int_{\mathbb{C}} ( |z|^2 - N_0^{\mathrm{G}} )^2 \zeta_{G} \de z - \frac{1}{\beta} S(\zeta_G) \geq F^{\mathrm{BEC}}(\beta,N_0(\beta,N)) - C \left(N^{1/3} + \delta N^{2/3} + f^{\mathrm{BEC}}(N_0(\beta,N),N_0^{\mathrm{G}}) \right).
    \label{eq:finalLowerBound13}
\end{equation}

Let us collect the above estimates. We insert \eqref{eq:finalLowerBound10}, \eqref{eq:finalLowerBound11}, and \eqref{eq:finalLowerBound13} into \eqref{eq:finalLowerBound1}, and additionally use \eqref{eq:BogoliubovFreeEnergy}, which gives 
\begin{align}
    F(\beta,N) \geq& F^{\mathrm{Bog}}(\beta,N) + \frac{\hat{v}(0)N}{2} + F^{\mathrm{BEC}}(\beta,N_0(\beta,N)) \nonumber \\ 
    &- C \left( N^{1/2} + |\mu_0| N^{2/3} \ln(N) + \delta N^{2/3} + N^{1/3} \delta^{-1} + f^{\mathrm{BEC}}(N_0(\beta,N),N_0^{\mathrm{G}}) \right). 
    \label{eq:finalLowerBound14}
\end{align}
During the derivation of this result we assumed that $\delta$ satisfies \eqref{eq:finalLowerBound6}. The optimal choice for $\delta$ is $\delta = N^{-1/6}$, which results in an error term that is bounded by a constant times $N^{1/2}$. Proposition~\ref{sec:proofFirstOrderApriori1} allows us to obtain a bound for $f^{\mathrm{BEC}}(N_0(\beta,N),N_0^{\mathrm{G}})$. The above bound yields the desired result as long as $N_0(\beta,N) \geq N^{2/3+\epsilon}$ with $\epsilon > 0$ which ensures that $|\mu_0| \lesim  N^{-\eps}$. We will therefore combine it with another bound that we derive now.

Using \eqref{eq:firstOrderLowerBound2} and $\tr[\mathcal{N}^2 \Gamma] \geq (\tr[\mathcal{N} \Gamma])^2$, we check that
\begin{align}
    F(\beta,N) \geq& \inf_{\Gamma \in \mathcal{S}_N} \left\{ \Tr[\de \Upsilon(-\Delta) \Gamma] - \frac{1}{\beta}S(\Gamma) \right\} + \frac{\hat{v}(0)N}{2} - \frac{v(0)}{2} \nonumber \\
    \geq& F_0(\beta,N) + \frac{\hat{v}(0)N}{2} - \frac{v(0)}{2}
    \label{eq:finalLowerBound15}
\end{align}
with $F_0(\beta,N)$ in \eqref{eq:freeEnergyIdealGasBEC}. The terms on the right-hand side of \eqref{eq:finalLowerBound15} need to be bounded in terms of those on the right-hand side of \eqref{eq:finalLowerBound14}. We start by noting that
\begin{align}
    \frac{1}{\beta} \sum_{p \in \Lambda_+^*} \ln\left( 1 -\exp(-\beta \epsilon(p)) \right) \leq& \frac{1}{\beta} \sum_{p \in \Lambda_+^*} \ln\left( 1 -\exp(-\beta (p^2-\mu_0))) \right) \nonumber \\
    &+ \sum_{p \in \Lambda_+^*} \frac{1}{\exp(-\beta(p^2-\mu_0))-1} \left( \sqrt{1 + 2 (N_0/N) \frac{\hat{v}(p)}{p^2-\mu_0}} \right) \nonumber \\
    \leq& \frac{1}{\beta} \sum_{p \in \Lambda_+^*} \ln\left( 1 -\exp(-\beta (p^2-\mu_0))) \right) + \frac{C N_0}{\beta N} \sum_{p \in \Lambda_+^*} \frac{\hat{v}(p)}{p^4}.
    \label{eq:finalLowerBound16}
\end{align}
Moreover, an application of Lemma~\ref{lem:BoundN0} shows
\begin{equation}
    \mu_0(\beta,N) (N-N_0(\beta,N)) \geq \mu_0(\beta,N) \sum_{p \in \Lambda_+^*} \gamma_p - C |\mu_0(\beta,N)| \left[ \left( 1 + \frac{1}{\beta} \right) \frac{N_0^2(\beta,N)}{N^2} + \frac{N_0(\beta,N)}{\beta N} \right]
\end{equation}
with $\gamma_p$ in \eqref{eq:1pdm}.

Next, we insert the trial state 
\begin{equation}
    p(n) = \exp(\beta \mu_0(\beta,N) n) (1-\exp(\beta \mu_0(\beta,N)))
    \label{eq:finalLowerBound16_0}
\end{equation}
into \eqref{eq:condensateFreeEnergy} and find
\begin{align}
    F^{\mathrm{BEC}}(\beta,N_0(\beta,N)) \leq& \frac{\hat{v}(0)}{2N} \sum_{n=0}^{\infty} n^2 p(n) - \frac{1}{\beta} S(p) - \frac{N_0^2(\beta,N) \hat{v}(0)}{2N} \nonumber \\
    =& \frac{1}{\beta} \ln(1-\exp(\beta \mu_0(\beta,N))) + \mu_0(\beta,N) N_0(\beta,N) + \frac{\hat{v}(0)}{2N} \frac{1}{4 \sinh\left( \frac{-\beta \mu_0(\beta,N)}{2} \right)} \nonumber \\
    \leq& F_0^{\mathrm{BEC}}(\beta,N) + C \frac{N^2_0(\beta,N)}{N}
    \label{eq:finalLowerBound17}
\end{align}
with $F_0^{\mathrm{BEC}}$ in \eqref{eq:freeEnergyIdealGasBEC}.

Putting these considerations and $|\mu_0| \lesssim (\beta N_0)^{-1}$ together, we obtain the lower bound
\begin{equation}
    F(\beta,N) \geq F^{\mathrm{Bog}}(\beta,N) + \frac{\hat{v}(0)N}{2} + F^{\mathrm{BEC}}(\beta,N_0(\beta,N)) - C \left[ N^{1/3} + N^{-1/3} N_0(\beta,N) + \frac{N_0^2(\beta,N)}{N} \right].
    \label{eq:finalLowerBound18}
\end{equation}
We apply \eqref{eq:finalLowerBound14} if $N_0(\beta,N) \geq N^{19/24}$ and \eqref{eq:finalLowerBound18} if $N_0(\beta,N) < N^{19/24}$, which yields the final lower bound
\begin{equation}
    F(\beta,N) \geq F^{\mathrm{Bog}}(\beta,N) + \frac{\hat{v}(0)N}{2} + F^{\mathrm{BEC}}(\beta,N_0(\beta,N)) - C N^{5/8} \ln(N)
    \label{eq:finalLowerBound19}
\end{equation}
for the free energy. In combination with the matching upper bound in \eqref{eq:upperBoundFreeEnergy}, this proves Theorem~\ref{thm:main1} with an additional logarithmic factor in the remainder term. In Remark~\ref{rem:removeLog} below we explain how this factor can be removed.

Corollary~\ref{cor:main1} is a direct consequence of Theorem~\ref{thm:main1} and Proposition~\ref{prof:freeEnergyEffectiveCondensateTheoryMain}. Part~(a) of Proposition~\ref{prof:freeEnergyEffectiveCondensateTheoryMain} follows from part~(a) of Lemma~\ref{prop:FreeEnergyBEC} and Lemma~\ref{lem:comparisonContinuousDiscreteCondensateFreeEnergy}. Part~(b) of the proposition follows from \eqref{eq:finalLowerBound17} and the lower bound
\begin{equation}
    F^{\mathrm{BEC}}(\beta,N_0(\beta,N)) \geq F_0^{\mathrm{BEC}}(\beta,N) - \frac{\hat{v}(0) N^2_0}{2N},
    \label{eq:finalLowerBound20}
\end{equation}
which follows from its variational characterization in \eqref{eq:condensateFreeEnergy} and $\hat{v}(0) \geq 0$.

\section{The Gibbs state part III: trace norm approximation, 1-pdm, and condensate distributions} \label{sec:Gibbs-state-III}
In this section we establish the trace-norm approximation of the Gibbs state in \eqref{eq:interactingGibbsstate}. Using this result we prove pointwise bounds for its 1-pdm and we study the asymptotic behavior of two probability distributions related to the Bose--Einstein condensate. With a Griffith argument we also prove a trace-norm bound for its 1-pdm. 
\subsection{Trace norm bound for the Gibbs state}
\label{sec:traceNormBoundGibsState}
In this section we prove a bound on the trace norm distance of the Gibbs state and the state 
\begin{equation}
    \Gamma_{\beta,N} = \begin{cases}
         \int_{\mathbb{C}} |z \rangle \langle z | \otimes G^{\mathrm{Bog}}(z) g^{\mathrm{BEC}}(z) \de z & \ \text{ if } N_0(\beta,N) \geq N^{2/3}, \\
        G_{\beta,N}^{\mathrm{id}} & \ \text{ if } N_0(\beta,N) < N^{2/3}. \\
    \end{cases}
    \label{eq:referenceState}
\end{equation}
The notation appearing in the first line of the above formula has been explained below \eqref{eq:referenceState-intro1} and the Gibbs state $G_{\beta,N}^{\mathrm{id}}$ of the ideal gas has been defined in \eqref{eq:GibbsStateIdealGas}. We start by proving a bound for the relative entropy of $\Gamma_{\beta,N}$ with respect to $G_{\beta,N}$, which is defined by
\begin{equation}
    S(\Gamma_{\beta,N},G_{\beta,N}) = \Tr[ \Gamma_{\beta,N}(\ln(\Gamma_{\beta,N}) - \ln(G_{\beta,N})) ].
    \label{eq:relativeEntropy}
\end{equation}
A short computation shows $\beta^{-1}S(\Gamma_{\beta,N},G_{\beta,N}) = \mathcal{F}(\Gamma_{\beta,N}) - \mathcal{F}(G_{\beta,N}) \geq 0$. With the upper bounds for the free energy of $\Gamma_{\beta,N}$ in \eqref{eq:UpperBoundAndi10} and \eqref{eq:UpperBoundAndi16} and the lower bound for that of $G_{\beta,N}$ in \eqref{eq:finalLowerBound19}, we find
\begin{equation}
    S(\Gamma_{\beta,N},G_{\beta,N}) \lesssim \beta N^{5/8} \ln(N) \lesssim N^{-1/24} \ln(N).
    \label{eq:relativeEntropyBound}
\end{equation}
An application of Pinsker's inequality, 
see e.g. \cite[Theorem~1.15]{OhyPer-93}, allows us to conclude that
\begin{equation}
    \left\Vert \Gamma_{\beta,N} - G_{\beta,N} \right\Vert_1^2 \leq 2 S(\Gamma_{\beta,N},G_{\beta,N}) \lesssim  N^{-1/24} \ln(N)
    \label{eq:traceNormBound}
\end{equation}
holds. This proves the trace norm approximation for the Gibbs state in Theorem~\ref{thm:norm-approximation} with an additional logarithmic factor in the rate. In Remark~\ref{rem:removeLog} below we explain how this factor can be removed. 
\subsection{Trace norm bound for the 1-pdm}
\label{sec:1pdm}
 In this section, we apply a Griffiths argument to derive bounds for the 1-pdm of the Gibbs state in \eqref{eq:interactingGibbsstate}. The main reason we do not rely on an approach based on relative entropy bounds, as in Section~\ref{sec:condensate Fraction} and \cite{DeuSeiYng-19,DeuSei-20,LewNamRou-21}, is that the state $\Gamma_{\beta,N}$ is not quasi-free when $N_0 \geq N^{2/3}$. However, the quasi-freeness of the state approximating the Gibbs state plays a crucial role in this approach.

The proofs of the upper and lower bounds for the free energy in Sections \ref{sec:upperBoundFreeEnergy} and \ref{sec:lowerBound} apply in the same way if we replace $-\Delta$ by a general translation-invariant one-particle Hamiltonian $h$ that satisfies \eqref{eq:generalizedOneParticleHamiltonian}. In the following, we use this bound with the operator
\begin{equation}
    h = \sum_{p\in \Lambda^*} h(p) | \varphi_p \rangle \langle \varphi_p | \quad \text{ with } \quad h(p) = p^2 + \lambda f(p).
    \label{eq:bound1pdm1}
\end{equation}
Here $\varphi_p(x) = e^{\mathrm{i} p \cdot x}$ and $f : \Lambda^* \to \mathbb{R}$ is a function that satisfies $f(0) = 0$ and $\sup_{p \in \Lambda^*} | f(p ) | \leq 1$. The absolute value of the parameter $\lambda \in \mathbb{R}$ is chosen small enough such that $h$ satisfies \eqref{eq:generalizedOneParticleHamiltonian}. The result for the free energy reads
\begin{equation}
    F(\beta,N,\lambda) = \tilde{F}^{\mathrm{Bog}}(\beta,N,\lambda) + \frac{\hat{v}(0) N}{2} + F^{\mathrm{BEC}}(\beta,N_0(\beta,N,\lambda)) + O\left( N^{5/8} \ln(N) \right),
    \label{eq:FreeEnergyBoundWithPerturbation}
\end{equation}
where $F(\beta,N,\lambda)$ and $\tilde{F}^{\mathrm{Bog}}(\beta,N,\lambda)$ denote the free energies in \eqref{eq:freeenergy} and \eqref{eq:BogoliubovFreeEnergy} with $-\Delta$ replaced by $h$, respectively. By $N_0(\beta,N,\lambda)$ we denoted the expected particle number in the condensate of the ideal gas related to $h$. We highlight that the operator $h$ appears in the definition of $\tilde{F}^{\mathrm{Bog}}(\beta,N,\lambda)$ in two ways. First, we need to replace $p^2$ by $h(p)$ in the definitions of $\epsilon(p)$ in \eqref{eq:BogoliubovDispersion} and $u_p$, $v_p$ in \eqref{eq:coefficientsBogtrafoAndi}. Second, we also need to replace $N_0(\beta,N)$ by $N_0(\beta,N,\lambda)$ and $\mu_0(\beta,N)$ by $\mu_0(\beta,N,\lambda)=-\beta^{-1} \ln(1+N_0^{-1}(\beta,N,\lambda))$ in these quantities. In the following we consider two parameter regimes separately. We start with the parameter regime, where $N_0 \geq N^{19/24}$ holds. The case $N_0 < N^{19/24}$ will be discussed afterwards.
\subsubsection{The condensed phase}
In this section we assume $N_0 \geq N^{19/24}$. In the first step of our analysis we replace $N_0(\beta,N,\lambda)$ by $N_0(\beta,N)$ and $\mu_0(\beta,N,\lambda)$ by $\mu_0(\beta,N)$ on the right-hand side of \eqref{eq:FreeEnergyBoundWithPerturbation} in all places, where they appear. From Lemma~\ref{lem:idealGasWithVariable1ParticleHamiltonian} in Appendix~\ref{app:idealGas}, we know that $|N_0(\beta,N,\lambda) - N_0(\beta,N) | \lesssim |\lambda| \ \Vert f \Vert_{\infty}/\beta \lesssim N^{2/3}$. Using this and Lemmas~\ref{lem:replaceParticleNumberInCondensateEnergy} and \ref{lem:comparisonContinuousDiscreteCondensateFreeEnergy} in Appendix~\ref{app:effcondensate}, we check that
\begin{equation}
    | F^{\mathrm{BEC}}(\beta,N_0(\beta,N,\lambda)) - F^{\mathrm{BEC}}(\beta,N_0(\beta,N)) | \lesssim N^{5/8} 
    \label{eq:bound1pdm2_0}
\end{equation}
holds. With the bound for $N_0(\beta,N,\lambda)$ we can also replace $N_0(\beta,N,\lambda)$ by $N_0(\beta,N)$ in $F^{\mathrm{Bog}}(\beta,N,\lambda)$. A straightforward computation that we leave to the reader shows that this replacements gives rise to an error that is bounded by a constant times $N^{11/24}$. 

Next, we replace $\mu_0(\beta,N,\lambda)$ by $\mu_0(\beta,N)$ in $\tilde{F}^{\mathrm{Bog}}(\beta,N,\lambda)$. A second order Taylor expansion allows us to write
\begin{align}
    &-\frac{1}{\beta} \ln\left( \tr_{\mathscr{F}_+} \exp\left( -\beta \left( \mathcal{H}^{\mathrm{Bog}}_{\mu_0(\beta,N,\lambda)} + \lambda \de \Upsilon(f) \right) \right) \right) = -\frac{1}{\beta} \ln\left( \tr_{\mathscr{F}_+} \exp\left( -\beta \left( \mathcal{H}^{\mathrm{Bog}} + \lambda \de \Upsilon(f) \right) \right) \right) \label{eq:bound1pdm2b} \\
    &\hspace{1cm}  + (\mu_0(\beta,N) - \mu_0(\beta,N,\lambda)) \sum_{p \in \Lambda_+^*} \gamma_p(\lambda,\mu_0(\beta,N))  + \frac{(\mu_0(\beta,N) - \mu_0(\beta,N,\lambda))^2}{2}  \sum_{p \in \Lambda_+^*} \left[ \frac{\partial \gamma_p(\lambda,\mu))}{\partial \mu} \right]_{\mu = \sigma} \nonumber
\end{align}
with some $\sigma \in \{ (1-t) \mu_0(\beta,N,\lambda) + t \mu_0(\beta,N) \ | \ t \in [0,1] \}$. By $\mathcal{H}^{\mathrm{Bog}}_{\mu_0(\beta,N,\lambda)}$ we denoted the Bogoliubov Hamiltonian in \eqref{BogoliubovHamiltonianAndi} with $\mu_0(\beta,N)$ replaced by $\mu_0(\beta,N,\lambda)$ and $\de \Upsilon(f)$ denotes the second quantization of multiplication with the function $f(p)$ in Fourier space. For the sake of readability, we also included the dependence of $\gamma_p$ on $\lambda$ and the chemical potential in our notation. It is not difficult to see that $\mu_0(\beta,N)$ and $\mu_0(\beta,N,\lambda)$ share the same leading order asymptotics as $N \to \infty$. We use this, $\mu_0 = -\beta^{-1} \ln(1+N_0^{-1})$, $N_0 \geq N^{19/24}$, and Lemma~\ref{lem:perturbedBogoliubov} in Appendix~\ref{app:perturbedBogoliubov} to check that the absolute value of the second term in the second line of \eqref{eq:bound1pdm2b} is bounded by a constant times $N^{5/12}$. This also shows
\begin{align}
    -\frac{1}{\beta} \ln&\left( \tr_{\mathscr{F}_+} \exp\left( -\beta \left( \mathcal{H}^{\mathrm{Bog}}_{\mu_0(\beta,N,\lambda)} + \lambda \de \Upsilon(f) \right) \right) \right) + \mu_0(\beta,N,\lambda) \sum_{p \in \Lambda_+^*} \gamma_p(\lambda,\mu_0(\beta,N,\lambda)) \nonumber \\
    =& -\frac{1}{\beta} \ln\left( \tr_{\mathscr{F}_+} \exp\left( -\beta \left( \mathcal{H}^{\mathrm{Bog}}_{\mu_0(\beta,N)} + \lambda \de \Upsilon(f) \right) \right) \right) + \mu_0(\beta,N) \sum_{p \in \Lambda_+^*} \gamma_p(\lambda,\mu_0(\beta,N)) \nonumber \\
    &+ \mu_0(\beta,N,\lambda) \sum_{p \in \Lambda_+^*} \left( \gamma_p(\lambda,\mu_0(\beta,N,\lambda)) - \gamma_p(\lambda,\mu_0(\beta,N) \right) + O\left(N^{5/12} \right). \label{eq:bound1pdm2c}
\end{align}
With a first order Taylor expansion, we check that the absolute value of the term in the last line is bounded from above by
\begin{equation}
    \left| \mu_0(\beta,N,\lambda) \sum_{p \in \Lambda_+^*} \left[ \frac{\partial \gamma_p(\lambda,\mu))}{\partial \mu} \right]_{\mu = \sigma} \right| \lesssim \frac{|\lambda| \ \Vert f \Vert_{\infty}}{\beta^2 N_0(\beta,N)} \lesssim N^{13/24}.
    \label{eq:bound1pdm2d}
\end{equation}
Here $\sigma \in \{ (1-t) \mu_0(\beta,N,\lambda) + t \mu_0(\beta,N) \ | \ t \in [0,1] \}$ and we used Lemma~\ref{lem:perturbedBogoliubov} to obtain the second bound. In combination, \eqref{eq:bound1pdm2c} and \eqref{eq:bound1pdm2d} prove
\begin{equation}
    F(\beta,N,\lambda) = F^{\mathrm{Bog}}(\beta,N,\lambda) + \frac{\hat{v}(0) N}{2} + F^{\mathrm{BEC}}(\beta,N_0(\beta,N)) + O \left( N^{5/8} \ln(N) \right).
    \label{eq:bound1pdm2e}
\end{equation}
Here $F^{\mathrm{Bog}}(\beta,N,\lambda)$ denotes the Bogoliubov free energy in \eqref{eq:BogoliubovFreeEnergy} with $p^2$ replaced by $h(p)$. In contrast to $\tilde{F}^{\mathrm{Bog}}(\beta,N,\lambda)$, this free energy depends on $N_0(\beta,N)$ and $\mu_0(\beta,N)$ instead of $N_0(\beta,N,\lambda)$ and $\mu_0(\beta,N,\lambda)$.

From \eqref{eq:bound1pdm1} and \eqref{eq:bound1pdm2e} we know that
\begin{equation}
    \lambda \tr[ \de \Upsilon(f) G_{\beta,N}] + \mathcal{F}(G_{\beta,N}) \geq F^{\mathrm{Bog}}(\beta,N,\lambda) + \frac{\hat{v}(0) N}{2} + F^{\mathrm{BEC}}(\beta,N_0(\beta,N)) + O\left( N^{5/8} \ln(N) \right).
    \label{eq:bound1pdm2}
\end{equation}
In combination with the upper bound for the free energy in \eqref{eq:bound1pdm2e} with $\lambda = 0$, this implies
\begin{equation}
    \tr[ \de \Upsilon(f) G_{\beta,N}] \geq \frac{F^{\mathrm{Bog}}(\beta,N,\lambda) - F^{\mathrm{Bog}}(\beta,N,0)}{\lambda} - \frac{C N^{5/8} \ln(N)}{\lambda}
    \label{eq:bound1pdm3}
\end{equation}
if $\lambda > 0$ and
\begin{equation}
    \tr[ \de \Upsilon(f) G_{\beta,N}] \leq \frac{F^{\mathrm{Bog}}(\beta,N,\lambda) - F^{\mathrm{Bog}}(\beta,N,0)}{\lambda} + \frac{C N^{5/8} \ln(N)}{|\lambda|}
    \label{eq:bound1pdm4}
\end{equation}
if $\lambda < 0$.

In the following we have a closer look at the first term on the right-hand side of \eqref{eq:bound1pdm3} and \eqref{eq:bound1pdm4}. A second order Taylor expansion allows us to write
\begin{equation}
    F^{\mathrm{Bog}}(\beta,N,\lambda) = F^{\mathrm{Bog}}(\beta,N,0) + \frac{\partial F^{\mathrm{Bog}}(\beta,N,\lambda)}{\partial \lambda} \bigg|_{\lambda = 0} \lambda + \frac{1}{2} \frac{ \partial^2 F^{\mathrm{Bog}}(\beta,N,\lambda)}{\partial \lambda^2} \bigg|_{\lambda = \widetilde{\lambda}} \lambda^2
    \label{eq:bound1pdm5}
\end{equation}
with some $\widetilde{\lambda} \in \{ t \lambda \ | \ t \in [0,1] \}$. The first derivative in the above equation reads
\begin{equation}
    \frac{\partial F^{\mathrm{Bog}}(\beta,N,\lambda)}{\partial \lambda} \bigg|_{\lambda = 0} = \sum_{p \in \Lambda_+^*} f(p) \gamma_p \bigg|_{\lambda = 0} + \mu_0(\beta,N) \sum_{p \in \Lambda_+^*} \frac{\partial \gamma_p}{\partial \lambda} \bigg|_{\lambda = 0}.
    \label{eq:bound1pdm6}
\end{equation}
Using $N_0 \geq N^{19/24}$ and Lemma~\ref{lem:perturbedBogoliubov} in Appendix~\ref{app:perturbedBogoliubov}, we check that the absolute value of the second term on the right-hand side is bounded by a constant times $N^{13/24}\ \Vert f \Vert_{\infty}$.  That is, we have
\begin{equation}
    \left| \left[ \frac{\partial F^{\mathrm{Bog}}(\beta,N,\lambda)}{\partial \lambda}  - \sum_{p \in \Lambda_+^*} f(p) \gamma_p \right]_{\lambda = 0} \right| \lesssim N^{13/24} . 
    \label{eq:bound1pdm8}
\end{equation}
The second derivative of the Bogoliubov free energy with respect to $\lambda$ reads
\begin{equation}
    \frac{ \partial^2 F^{\mathrm{Bog}}(\beta,N,\lambda)}{\partial \lambda^2} = \sum_{p \in \Lambda_+^*} \left[ f(p) \frac{\partial \gamma_p}{\partial \lambda} + \mu_0(\beta,N) \frac{\partial^2 \gamma_p}{\partial \lambda^2} \right].
    \label{eq:bound1pdm9}
\end{equation}
Another application of Lemma~\ref{lem:perturbedBogoliubov} shows
\begin{equation}
    \left| \left[ \frac{ \partial^2 F^{\mathrm{Bog}}(\beta,N,\lambda)}{\partial \lambda^2} \right]_{\lambda = \widetilde{\lambda}} \right| \lesssim \frac{\Vert f \Vert_{\infty}^2}{\beta} \left( 1 + \frac{1}{\beta N_0(\beta,N)} \right) \lesssim N^{2/3}.
    \label{eq:bound1pdm10}
\end{equation}

Putting the above considerations together, we find
\begin{equation}
\left\| Q (\gamma_{\beta,N} - \sum_{p \in \Lambda^*_+} \gamma_p | \varphi_p \rangle \langle \varphi_p | ) Q \right\|_1 = \sup_{\Vert f \Vert_{\infty} \leq 1} \left| \sum_{p \in \Lambda_+^*} f(p) ( \gamma_{\beta,N}(p) - \gamma_p ) \right| \lesssim N^{13/24} + N^{2/3} | \lambda | + \frac{N^{5/8} \ln(N)}{|\lambda|}.
    \label{eq:bound1pdm11}
\end{equation}
Here $\gamma_{\beta,N}(p)$ denote the eigenvalues of $\gamma_{N,\beta}$ with eigenfunctions $\varphi_p(x) = e^{\mathrm{i} p \cdot x}$ with $p \in \Lambda^*$. The optimal choice for $|\lambda|$ is $N^{-1/48} \sqrt{\ln(N)}$, which yields an error of the order $N^{2/3-1/48} \sqrt{\ln(N)}$. In combination, \eqref{eq:bound1pdm11} and $\trs \gamma_{\beta,N} = N$ allow us to conclude that
\begin{equation}
    \left\| \gamma_{\beta,N} - \widetilde{N}_0(\beta,N) | \varphi_0 \rangle \langle \varphi_0 | - \sum_{p \in \Lambda^*_+} \gamma_p | \varphi_p \rangle \langle \varphi_p | \ \right\|_1 \lesssim N^{2/3-1/48} \sqrt{\ln(N)}
    \label{eq:bound1pdm12}
\end{equation}
holds with $\widetilde{N}_0(\beta,N) = N - \sum_{p \in \Lambda_+^*} \gamma_p$. This bound has been derived under the assumption $N_0(\beta,N) \geq N^{19/24}$ and it remains to consider the case $N_0(\beta,N) < N^{19/24}$.

\subsubsection{The non-condensed phase}
\label{sec:1pdmNonCondensedPhase}

We use \eqref{eq:UpperBoundAndi11} and \eqref{eq:finalLowerBound15} with $p^2$ replaced by $h(p)$ and $\mu_0(\beta,N)$ replaced by $\mu_0(\beta,N,\lambda)$ to see that
\begin{equation}
    F(\beta,N,\lambda) = F_0(\beta,N,\lambda) + \frac{\hat{v}(0) N}{2} + O\left( N^{1/3} \right).
    \label{eq:bound1pdm13}
\end{equation}
Here $F_0(\beta,N,\lambda)$ denotes the free energy of the ideal gas below \eqref{eq:CorollaryNoBEC} with $p^2$ replaced by $h(p)$ and $\mu_0(\beta,N)$ replaced by $\mu_0(\beta,N,\lambda)$. With \eqref{eq:bound1pdm13} we find
\begin{equation}
    \tr[ \de \Upsilon(f) G_{\beta,N}] \geq \frac{F_0(\beta,N,\lambda) - F_0(\beta,N,0)}{\lambda} - \frac{C N^{1/3}}{\lambda}
    \label{eq:bound1pdm14}
\end{equation}
if $\lambda > 0$ and
\begin{equation}
    \tr[ \de \Upsilon(f) G_{\beta,N}] \leq \frac{F_0(\beta,N,\lambda) - F_0(\beta,N,0)}{\lambda} + \frac{C N^{1/3}}{|\lambda|}
    \label{eq:bound1pdm15}
\end{equation}
if $\lambda < 0$. 

As before, we expand $F_0(\beta,N,\lambda)$ with a Taylor expansion to second order. In combination with
\begin{equation}
    \frac{\partial F_0(\beta,N,\lambda)}{\partial \lambda} = \sum_{p \in \Lambda^*} \frac{f(p)}{\exp(\beta (p^2 + \lambda f(p) - \mu_0(\beta,N,\lambda)))-1}    
    \label{eq:bound1pdm16}
\end{equation}
and
\begin{equation}
    \frac{\partial^2 F_0(\beta,N,\lambda)}{\partial \lambda^2} = \sum_{p \in \Lambda^*} \frac{- \beta f(p)(f(p) - \partial \mu_0(\beta,N,\lambda)/\partial \lambda)}{4 \sinh^2\left( \frac{\beta(p^2 + \lambda f(p) - \mu_0(\beta,N,\lambda))}{2} \right)},
    \label{eq:bound1pdm17}
\end{equation}
this gives
\begin{equation}
    \Vert \gamma_{\beta,N} - \gamma_0(\beta,N) \Vert_1 \lesssim \sup_{\Vert f \Vert_{\infty} \leq 1} \left\{  \frac{N^{1/3}}{|\lambda|} + | \lambda| \left[ \sum_{p \in \Lambda^*} \frac{\beta \Vert f \Vert_{\infty} \left( \Vert f \Vert_{\infty} + | \partial \mu_0(\beta,N,\lambda )/\partial \lambda | \right) }{\sinh^2\left( \frac{\beta(p^2 + \lambda f(p) - \mu_0(\beta,N,\lambda))}{2} \right)} \right]_{\lambda = \widetilde{\lambda}} \right\}
    \label{eq:bound1pdm18}
\end{equation}
with some $\widetilde{\lambda} \in \{ t \lambda \ | \ t \in [0,1] \}$. From Lemma~\ref{lem:derivativesChemPot} in Appendix~\ref{app:idealGas} we know that $| \partial \mu_0(\beta,N,\lambda)/\partial \lambda | \lesssim \Vert f \Vert_{\infty}$ holds. We conclude that there exists a constant $c>0$ such that sum in the second term on the right-hand side of \eqref{eq:bound1pdm18} satisfies
\begin{equation}
    \left[ \sum_{p \in \Lambda^*} \frac{\beta \Vert f \Vert_{\infty} \left( \Vert f \Vert_{\infty} + | \partial \mu_0(\beta,N,\lambda)/\partial \lambda | \right) }{\sinh^2\left( \frac{\beta(p^2 + \lambda f(p) - \mu_0(\beta,N,\lambda))}{2} \right)} \right]_{\lambda = \widetilde{\lambda}} \lesssim  \sum_{p \in \Lambda^*} \frac{\Vert f \Vert_{\infty}^2}{\beta(c p^2 - \mu_0(\beta,N,\widetilde{\lambda}))^2} \lesssim \Vert f \Vert_{\infty}^2 \left( \beta N_0^2(\beta,N,\widetilde{\lambda}) + \beta^{-1} \right)
    \label{eq:bound1pdm19}
\end{equation}
An application of Lemma~\ref{lem:idealGasWithVariable1ParticleHamiltonian} shows $N_0(\beta,N,\lambda) \leq N_0(\beta,N) + C | \lambda | \ \Vert f \Vert_{\infty}/\beta$. Accordingly, the right-hand side of \eqref{eq:bound1pdm19} is bounded from above by a constant times $N^{11/12}$. We insert this bound and the optimal choice $\lambda = N^{-7/24}$ into \eqref{eq:bound1pdm18} and find
\begin{equation}
    \Vert \gamma_{\beta,N} - \gamma_0(\beta,N) \Vert_1 \lesssim N^{5/8}.
    \label{eq:bound1pdm20}
\end{equation}

It remains to replace $\gamma_0(\beta,N)$ by $\widetilde{N}_0(\beta,N) | \varphi_0 \rangle \langle \varphi_0 | + \gamma^{\mathrm{Bog}}$ in \eqref{eq:bound1pdm20}. A straightforward computation that we leave to the reader shows 
\begin{equation}
    \Vert \gamma_0(\beta,N) - \widetilde{N}_0(\beta,N) | \varphi_0 \rangle \langle \varphi_0 | - \sum_{p \in \Lambda^*_+} \gamma_p | \varphi_p \rangle \langle \varphi_p | \ \Vert_1 \lesssim N^{11/24}.
    \label{eq:bound1pdm21_0}
\end{equation}
We conclude that 
\begin{equation}
    \Vert \gamma_{\beta,N} - \widetilde{N}_0(\beta,N) | \varphi_0 \rangle \langle \varphi_0 | - \sum_{p \in \Lambda^*_+} \gamma_p | \varphi_p \rangle \langle \varphi_p | \ \Vert_1 \lesssim N^{5/8}
    \label{eq:bound1pdm21}
\end{equation}
holds provided $N_0(\beta,N) < N^{19/24}$. In combination, \eqref{eq:bound1pdm12} and \eqref{eq:bound1pdm21} prove \eqref{eq:theorem21pdm} with an additional log term in the rate. In the following remark we explain how this additional factor in the rate can be removed.
\begin{remark}
    \label{rem:removeLog}
    From \eqref{eq:bound1pdm12}, \eqref{eq:bound1pdm21}, and Lemma~\ref{lem:BoundN0} we learn that
    \begin{equation}
        | \Tr[a_0^* a_0 G_{\beta,N}] - N_0(\beta,N) | \lesssim N^{2/3}
        \label{eq:removelog}
    \end{equation}
    holds. This supersedes the bound in Proposition~\ref{prop:roughapriori}. Using \eqref{eq:removelog} to get a bound for $f^{\mathrm{BEC}}(N_0(\beta,N),N_0^{\mathrm{G}})$ in \eqref{eq:finalLowerBound14}, we obtain \eqref{eq:finalLowerBound19} without the logarithmic factor in the rate. Using the improved bound for the free energy we also obtain \eqref{eq:traceNormBound} and \eqref{eq:bound1pdm12} without a logarithmic factor. For the sake of simplicity we will in the following refer to equations with the logarithmic factor and use them without it. 
\end{remark}
\subsection{Pointwise bounds for the 1-pdm in Fourier space}
\label{sec:pwBound1pdm}
In this section we prove pointwise bounds for the integral kernel of the 1-pdm in Fourier space. For the sake of simplicity, we restrict attention to the two cases $\kappa > \frac{1}{4 \pi}[\upzeta(3/2)]^{2/3}$ (condensed phase) and $\kappa < \frac{1}{4 \pi}[\upzeta(3/2)]^{2/3}$ (non-condensed phase). 

To treat both cases we need the following lemma. The second part of the lemma is not needed here but stated for later reference.

\begin{lemma} \label{lem:fundamentalLemmaForPWBounds}
Let $G,G'$ be two non-negative trace class operators on a separable complex Hilbert space. Let $B$ be a self-adjoint operator and assume that $B^2 G$ and $B^2 G'$ are trace class. Then we have 
\begin{align}\label{eq:B-G-G'-1}
|\Tr [B(G-G')] |\le 2\sqrt{\Tr [B^2 (G+G')]}  \sqrt{\Tr |G-G'|}.
\end{align}
More generally, for any $\theta \in (1,\infty)$ and if $|B|^{\theta} G$ and $|B|^{\theta} G$ are trace class we have 
\begin{align}\label{eq:B-G-G'-2}
|\Tr [B(G-G')] | \lesssim_\theta ( \Tr [|B|^{\theta} (G+G')] )^{1/\theta}  (\Tr |G-G'|)^{1-1/\theta}.
\end{align}
\end{lemma}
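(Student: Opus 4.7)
The plan is to derive both bounds from the algebraic identity
\begin{equation*}
    G - G' = G^{1/2}(G^{1/2} - G'^{1/2}) + (G^{1/2} - G'^{1/2}) G'^{1/2}
\end{equation*}
together with the Powers--Stormer inequality $\|A^{1/2} - B^{1/2}\|_2^2 \leq \|A - B\|_1$ for $A,B \geq 0$. For the $\theta=2$ bound \eqref{eq:B-G-G'-1}, applying Cauchy--Schwarz in the Hilbert--Schmidt inner product to each of the two pieces yields $|\Tr[B G^{1/2}(G^{1/2}-G'^{1/2})]| \leq \|BG^{1/2}\|_2 \cdot \|G^{1/2}-G'^{1/2}\|_2$, where $\|BG^{1/2}\|_2^2 = \Tr[G^{1/2} B^2 G^{1/2}] = \Tr[B^2 G]$ by self-adjointness of $B$, and analogously the second piece is controlled by $\sqrt{\Tr[B^2 G']}\cdot\|G^{1/2}-G'^{1/2}\|_2$. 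Invoking Powers--Stormer to bound $\|G^{1/2}-G'^{1/2}\|_2^2 \leq \Tr|G-G'|$ and combining via the elementary inequality $\sqrt{a}+\sqrt{b}\leq \sqrt{2(a+b)}$ gives $|\Tr[B(G-G')]| \leq \sqrt{2}\sqrt{\Tr[B^2(G+G')]}\sqrt{\Tr|G-G'|}$, which is in fact slightly sharper than the stated constant $2$.

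For the H\"older-type extension \eqref{eq:B-G-G'-2}, I would first reduce via the Jordan decomposition $G-G' = R_+ - R_-$ with $R_\pm \geq 0$ supported on orthogonal spectral subspaces of $G-G'$. Three observations then drive the argument: (i) $|\Tr[B R_\pm]| \leq \Tr[|B| R_\pm]$, which follows from the sandwich inequality $R_\pm^{1/2}|B|R_\pm^{1/2} \geq \pm R_\pm^{1/2} B R_\pm^{1/2}$ (a consequence of $|B|\geq \pm B$) together with the elementary fact that $\Tr Y \geq \Tr|X|$ whenever $Y \geq 0$ and $Y \geq \pm X$ for self-adjoint $X$ (proved by testing against the positive and negative spectral projections of $X$ and using positivity of the trace of a product of nonnegative operators); (ii) Jensen's inequality applied to the state $\omega_\pm(\cdot)=\Tr[R_\pm \cdot]/\Tr R_\pm$ and the convex function $x\mapsto x^\theta$ gives $\Tr[|B|R_\pm]\leq (\Tr R_\pm)^{1-1/\theta}(\Tr[|B|^\theta R_\pm])^{1/\theta}$; (iii) summing the two resulting estimates via the discrete H\"older inequality $a^{1-1/\theta}b^{1/\theta}+c^{1-1/\theta}d^{1/\theta}\leq (a+c)^{1-1/\theta}(b+d)^{1/\theta}$ yields a preliminary bound involving $\Tr|G-G'|$ and $\Tr[|B|^\theta|G-G'|]$.

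The main technical obstacle is then to replace $|G-G'|$ by $G+G'$ in the last factor. The operator inequality $|G-G'|\leq c(G+G')$ fails in general (a simple $2\times 2$ computation produces a counterexample), so the replacement is not automatic. I expect this to be the most delicate step; my plan is to handle it by interpolation of the bilinear form $(B,Y)\mapsto \Tr[BY]$ between the endpoint $\theta=1$, where $|\Tr[B(G-G')]|\leq \Tr[|B|G]+\Tr[|B|G']=\Tr[|B|(G+G')]$ follows from applying step (i) with $G$ and $G'$ directly in place of $R_\pm$, and the trivial endpoint $\theta=\infty$, where $|\Tr[B(G-G')]|\leq \|B\|_\infty\Tr|G-G'|$. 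An alternative route is to generalize the Powers--Stormer step of the $\theta=2$ proof using the Araki--Lieb--Thirring inequality to produce the bound at arbitrary exponent $\theta>1$ without passing through the Jordan decomposition; the constant then depends on $\theta$, as allowed by the statement.
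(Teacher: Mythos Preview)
Your Powers--St{\o}rmer argument for $\theta=2$ is correct and genuinely different from the paper's proof; it even yields the sharper constant $\sqrt{2}$. The paper instead uses a single spectral cutoff for all $\theta$: write $B = B\,\mathds{1}(|B|\le \eps^{-1}) + B\,\mathds{1}(|B|>\eps^{-1})$, bound the first piece by $\eps^{-1}\Tr|G-G'|$ via the operator norm, and the second piece by splitting $G-G'$ into $G$ and $G'$ separately, using $|B|\,\mathds{1}(|B|>\eps^{-1}) \le \eps^{\theta-1}|B|^\theta$; optimising over $\eps$ gives both \eqref{eq:B-G-G'-1} and \eqref{eq:B-G-G'-2} in three lines.

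For general $\theta$ your Jordan-decomposition route lands on $\Tr[|B|^\theta|G-G'|]$, and you are right that $|G-G'|\le G+G'$ fails, so this term cannot be converted directly. Your proposed fix by interpolation between the $\theta=1$ bound $|\Tr[B(G-G')]|\le \Tr[|B|(G+G')]$ and the $\theta=\infty$ bound $|\Tr[B(G-G')]|\le \|B\|_\infty\Tr|G-G'|$ is the right instinct, but the concrete way to realise it is precisely the spectral cutoff above: apply the $\infty$-endpoint to the small-$|B|$ piece and the $1$-endpoint (together with $|B|\le \eps^{\theta-1}|B|^\theta$ on the large part) to the other. So the paper's method is the real-interpolation implementation of your plan, and it bypasses the Jordan decomposition and the $|G-G'|$ versus $G+G'$ issue entirely. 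Your Araki--Lieb--Thirring alternative is not needed and would be considerably heavier.
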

\begin{proof} 
For any $\eps>0$, we can bound 
\begin{align}
|\Tr [B (G-G')] | &= |  \Tr [B \mathds{1}(|B|\le \eps^{-1} ) (G-G')] + \Tr [B \mathds{1}(|B|> \eps^{-1}) (G-G')]  | \nn \\
&\le | \Tr [B \mathds{1}(|B|\le \eps^{-1} ) (G-G')] | + |\Tr [B \mathds{1}(|B|> \eps^{-1}) G]| + |\Tr [B \mathds{1}(|B|> \eps^{-1}) G']| \nn \\
&\le \eps^{-1} \Tr |G-G'| +  \eps  \Tr [B^2 (G+G') ].  
\end{align}
Optimizing over $\eps>0$ we obtain \eqref{eq:B-G-G'-1}. 

More generally, for any $\theta\in (1,\infty]$ and any $\eps>0$, by the same argument, we have
\begin{align}
|\Tr [B (G-G')] | \le  \eps^{-1}  \Tr |G-G'| +  \eps^{\theta-1} \Tr [|B|^\theta (G+G')].
\end{align}
Optimization over $\eps>0$ yields \eqref{eq:B-G-G'-2}. 
\end{proof}

If $\kappa > \frac{1}{4 \pi}[\upzeta(3/2)]^{2/3}$ we use Lemma~\ref{lem:fundamentalLemmaForPWBounds}, \eqref{eq:aa-G}, which also holds with $G_{\beta,N}$ replaced by $\Gamma_{\beta,N}$, and \eqref{eq:traceNormBound} to see that
\begin{equation}
    | \Tr[a_p^* a_p (G_{\beta,N} - \Gamma_{\beta,N})] \leq 2 \sqrt{ \Tr[ (a_p^* a_p)^2 (G_{\beta,N} + \Gamma_{\beta,N}) ] } N^{-1/96} \lesssim \left( \frac{1}{\beta(p^2 - \mu_0(\beta,N))} + 1 \right) N^{-1/96}
\end{equation}
holds for $p \in \Lambda^*_+$. Using this and Lemma~\ref{lem:1pdmAndPairingFunction}, we obtain
\begin{equation}
    | \gamma_{\beta,N}(p) - \gamma_p | \lesssim \left( \frac{1}{\beta(p^2 - \mu_0(\beta,N))} + 1 \right) N^{-1/96}
\end{equation}
for $p \in \Lambda^*_+$. Here $\gamma_{\beta,N}(p)$, $p \in \Lambda^*$ denote the eigenvalues of $\gamma_{\beta,N}$. The fact that this operator is diagonal in momentum space follows from the translation-invariance of $G_{\beta,N}$. If $p = 0$ we apply \eqref{eq:theorem21pdm} to show that
\begin{equation}
    | \gamma_{\beta,N}(0) - \widetilde{N}_0(\beta,N) | \lesssim N^{2/3-1/48}.
    \label{eq:1pdmBEC}
\end{equation}

It remains to consider the case $\kappa < \frac{1}{4 \pi}[\upzeta(3/2)]^{2/3}$. Using Lemma~\ref{lem:fundamentalLemmaForPWBounds}, \eqref{eq:aa-G}, \eqref{eq:traceNormBound}, and $-\beta \mu_0(\beta,N) \sim 1$, we see that 
\begin{equation}
    |\Tr[a_p^* a_p (G_{\beta,N} - G^{\mathrm{id}}_{\beta,N})]| \leq 2 \sqrt{ \Tr[(a_p^* a_p)^2(G_{\beta,N} + G^{\mathrm{id}}_{\beta,N})] } \sqrt{\Vert G_{\beta,N} - G^{\mathrm{id}}_{\beta,N} \Vert_1} \lesssim N^{-1/96}
    \label{eq:pointwiseBounds1pdm5}
\end{equation}
holds for all $p \in \Lambda^*$. To obtain the second bound we also used that the state $G^{\mathrm{id}}_{\beta,N}$ satisfies the bound in \eqref{eq:aa-G}, too. Eq.~\eqref{eq:pointwiseBounds1pdm5} allows us to conclude that
\begin{equation}
    \gamma_{\beta,N}(p) = \frac{1}{\exp(\beta(p^2 - \mu_0(\beta,N)))-1} + O(N^{-1/96})
    \label{eq:pointwiseBounds1pdm6}
\end{equation}
holds for $p \in \Lambda^*$. This ends the proof of Theorem~\ref{thm:1-pdm}. 
\subsection{Condensate distributions}
In this section we investigate the coherent state condensate distribution $\zeta_{G}(z)$ in \eqref{eq:condensateDistributionGibbsStateContinuous} and the  random variable $\mathbf{N}_0$ describing the number of particles in the condensate, which has been defined in \eqref{eq:condensateDistributionGibbsStateDiscreteintro}. We start our analysis with a bound relating $\zeta_{G}$ to a Gaussian distribution in $L^1(\mathbb{C})$-norm. Afterwards, we investigate the different asymptotic regimes of the distribution of $\mathbf{N}_0$. 
\subsubsection{$L^1(\mathbb{C})$-norm bound for $\zeta_{G}$}
In this subsection we assume that $N_0(\beta,N) \gesssim N^{5/6 + \epsilon}$ holds with some fixed $0 < \epsilon \leq 1/6$. The inequality in \eqref{eq:finalLowerBound12} becomes an equality if we add $\beta^{-1}$ times the classical relative entropy
\begin{equation}
    S(\zeta_G,\tilde{g}) = \int_{\mathbb{C}} \zeta_{G}(z) \ln \left( \frac{\zeta_{G}(z)}{\tilde{g}(z) } \right) \de z
    \label{eq:classicalRelativeEntropy}
\end{equation}
on the right-hand side. Here $\tilde{g}$ denotes the probability distribution in \eqref{eq:GibbsDistributionDiscrete} with $\hat{v}(0)$ replaced by $(1-\delta)\hat{v}(0)$, $0 < \delta < 1$ and $N_0(\beta,N)$ replaced by $N_0^{\mathrm{G}}(\beta,N)$ defined below \eqref{eq:lowerBoundFE7}. When we keep the nonnegative term in \eqref{eq:classicalRelativeEntropy} in the lower bound for the free energy, insert the optimal choice $\delta = N^{-1/6}$ we found below \eqref{eq:finalLowerBound14}, and use the upper bound for the free energy in \eqref{eq:UpperBoundAndi10}, we find 
\begin{equation}
    S(\zeta_G,\tilde{g}) \lesssim \beta N^{5/8} \lesssim N^{-1/24}.
    \label{eq:boundClassicalRelativeEntropy}
\end{equation}
The classical relative entropy satisfies Pinsker's inequality $S(\zeta_G,\tilde{g}) \geq \frac{1}{2} \Vert \zeta_G - \tilde{g} \Vert_1^2$, and hence we conclude that
\begin{equation}
    \Vert \zeta_G - \tilde{g} \Vert_1 \lesssim N^{-1/48}.
    \label{eq:condensateDistribution1}
\end{equation}

In the next step we compare $\tilde{g}$ in $L^1(\mathbb{C})$-norm to a Gaussian distribution. To that end, we write
\begin{equation}
    \tilde{g}(z) = \frac{\exp\left( - \frac{\beta \hat{v}(0)(1-\delta)}{2 N} \left( |z|^2 - \frac{\mu^{\mathrm{G}}(\beta,N) N}{\hat{v}(0)(1-\delta)} \right)^2 \right)}{\int_{\mathbb{C}} \exp\left( - \frac{\beta \hat{v}(0)(1-\delta)}{2 N} \left( |w|^2 - \frac{\mu^{\mathrm{G}}(\beta,N) N}{\hat{v}(0)(1-\delta)} \right)^2 \right) \de w},
    \label{eq:condensateDistribution2}
\end{equation}
where $\mu^{\mathrm{G}}(\beta,N)$ is chose such that $\int_{\mathbb{C}} |z|^2 \tilde{g}(z) \de z = N_0^{\mathrm{G}}(\beta,N)$ holds. Let us derive a bound for $\mu^{\mathrm{G}}(\beta,N)$. From \eqref{eq:removelog} we know that $| N_0^{\mathrm{G}}(\beta,N) - N_0(\beta,N) | \lesssim N^{2/3}$. In combination with $N_0(\beta,N) \gesssim N^{5/6 + \epsilon}$ and \eqref{eq:LemChemPotBEC1}, we conclude that 
\begin{equation}
    \mu^{\mathrm{G}}(\beta,N) = \frac{\hat{v}(0)(1-\delta) N_0(\beta,N)}{N} + O(N^{-1/3}).
    \label{eq:condensateDistribution3}
\end{equation}
Using \eqref{eq:condensateDistribution3}, our assumption for $N_0(\beta,N)$, the coordinate transformation in \eqref{eq:FreeEnergyBECApriori21} in Appendix~\ref{app:effcondensate}, and $\delta=N^{-1/6}$, it is not difficult to see that 
\begin{align}
    \int_{\mathbb{C}} \exp\left( - \frac{\beta \hat{v}(0)(1-\delta)}{2 N} \left( |w|^2 - \frac{\mu^{\mathrm{G}}(\beta,N) N}{\hat{v}(0)(1-\delta)} \right)^2 \right) \de w &= \sqrt{\frac{2 \pi N}{\beta \hat{v}(0) (1-\delta)} } + O\left( \exp\left(-c N^{2\epsilon} \right) \right)  \nonumber \\
    &=\sqrt{ \frac{2 \pi N}{\beta \hat{v}(0) } } \left( 1 + O\left(N^{-1/6} \right) \right)
    \label{eq:condensateDistribution4}
\end{align}
holds. 

Using the same coordinate transformation and \eqref{eq:condensateDistribution3} again we also obtain the bound
\begin{align}
    \int_{\mathbb{C}} &\left| \exp\left( - \frac{\beta \hat{v}(0)(1-\delta)}{2 N} \left( |z|^2 - \frac{\mu^{\mathrm{G}}(\beta,N) N}{\hat{v}(0)(1-\delta)} \right)^2 \right) - \exp\left( - \frac{\beta \hat{v}(0)}{2 N} \left( |z|^2 - N_0(\beta,N) \right)^2 \right) \right| \de z \nonumber \\
    &\hspace{1cm}\leq \int_{0}^{\infty} \exp\left( - \frac{\beta \hat{v}(0)}{2 N} \left( x - \frac{\mu^{\mathrm{G}}(\beta,N) N}{\hat{v}(0)(1-\delta)} \right)^2 \right) \left| 1 - \exp\left( \frac{\beta \hat{v}(0)\delta}{2 N} \left( x - \frac{\mu^{\mathrm{G}}(\beta,N) N}{\hat{v}(0)(1-\delta)} \right)^2 \right) \right| \de x \nonumber \\ 
    &\hspace{1.35cm}+\int_{0}^{\infty} \exp\left( - \frac{\beta \hat{v}(0)}{2 N} \left( x - N_0(\beta,N) \right)^2 \right) \left| 1 - \exp\left( \frac{C |x-N_0(\beta,N)| }{N} \right) \right| \de x.
    \label{eq:condensateDistribution5}
\end{align}
It is straightforward to check that the terms on the right-hand side are bounded from above by a constant times
\begin{equation}
    N^{-1/6} \int_{\mathbb{C}} \exp\left( - \frac{\beta \hat{v}(0)}{2 N} \left( |z|^2 - N_0(\beta,N) \right)^2 \right) \de z. 
    \label{eq:condensateDistribution6}
\end{equation}
Putting \eqref{eq:condensateDistribution4}--\eqref{eq:condensateDistribution6} together, we find
that 
\begin{equation}
    \int_{\mathbb{C}} | \tilde{g}(z) - g(|z|^2) | \de z \lesssim N^{-1/6}
    \label{eq:condensateDistribution7}
\end{equation}
holds with the function
\begin{equation}
    g(x) = \sqrt{ \frac{\beta \hat{v}(0)}{2 \pi N} } \exp\left( - \frac{\beta \hat{v}(0)}{2 N} \left( x - N_0(\beta,N) \right)^2 \right).
    \label{eq:condensateDistribution8}
\end{equation}
The function $g$ is a normal distribution with mean $N_0(\beta,N)$ and variance $N/(\beta \hat{v}(0))$. 

Finally, \eqref{eq:condensateDistribution1} and \eqref{eq:condensateDistribution8} show
\begin{equation}
    \int_{\mathbb{C}} | \zeta_G(z) - g(|z|^2) | \de z \lesssim N^{-1/48},
    \label{eq:condensateDistribution9}
\end{equation}
which proves Theorem~\ref{thm:particleNumberDistributionBEC1}.
\subsubsection{Limiting distributions for $\mathbf{N}_0$}
\label{sec:asymptoticsCondensateDistribution}
In this section we study the asymptotics of the distribution of the random variable $\mathbf{N}_0$ in \eqref{eq:condensateDistributionGibbsStateDiscreteintro} in four different parameter regimes. From \eqref{eq:traceNormBound} we know that it satisfies
\begin{equation}
    \left| \sum_{n=0}^{\infty} f(n) \left( \mathbf{P}(\mathbf{N}_0 = n) - \Tr[|n \rangle \langle n | \ \Gamma_{\beta,N}] \right) \right| \leq \Vert f \Vert_{\infty} \ \Vert G_{\beta,N} - \Gamma_{\beta,N} \Vert_1 \lesssim \Vert f \Vert_{\infty} \ N^{-1/48}
    \label{eq:condensateDistribution10}
\end{equation}
with the state $\Gamma_{\beta,N}$ in \eqref{eq:referenceState} and any bounded function $f : \mathbb{N}_0 \to \mathbb{C}$. 

To evaluate the expression in \eqref{eq:condensateDistribution10} involving $\Gamma_{\beta,N}$ further, we have to distuingish between two cases. If $N_0(\beta,N) \geq N^{2/3}$ we have 
\begin{equation}
    \Tr[|n \rangle \langle n | \ \Gamma_{\beta,N}] = \int_{\mathbb{C}} | \langle z , n \rangle |^2 g(z) \de z = \frac{1}{n!} \int_{\mathbb{C}} |z|^{2n} \exp(-|z|^2) g(z) \de z = \frac{1}{n!} \int_{0}^{\infty} x^{n} \exp(-x) g(\sqrt{x}) \de x
    \label{eq:condensateDistribution11}
\end{equation}
with $g(z)$ in \eqref{eq:GibbsDistributionDiscrete}. To obtain the second equality, we applied the identities $a_0 |z \rangle = z | z \rangle$ and $\langle z , \Omega_0 \rangle = \exp(-|z|^2/2)$, which hold for all $z \in \mathbb{C}$. By $\Omega_0$ we denoted the vacuum vector of the Fock space over the $p=0$ mode. We also recall the definition of $|n\rangle$ below \eqref{eq:condensateDistributionGibbsStateDiscreteintro}. The last equality follows from the coordinate transformation in \eqref{eq:FreeEnergyBECApriori21} in Appendix~\ref{app:effcondensate}. The probability distribution in \eqref{eq:condensateDistribution11} is called a randomized Poisson distribution. It is defined via a Poisson random variable, whose rate parameter is again a random variable. Here the rate is given by $X_{\beta,N}$, the random variable related to the probability distribution $g(\sqrt{x})$. In contrast, if $N_0 < N^{2/3}$ then
\begin{equation}
    \Tr[|n \rangle \langle n | \ \Gamma_{\beta,N}] = \exp(\beta \mu_0(\beta,N)n)(1-\exp(\beta \mu_0(\beta,N))).
    \label{eq:condensateDistribution12}
\end{equation}

We recall the definition of the random variable $\widetilde{\mathbf{N}}_0$ in \eqref{eq:condensateDistributionGibbsStateDiscreteintro}. 
We also define $\mathbf{M}_0$ and $\widetilde{\mathbf{M}}_{0}$ by 
\begin{equation}
    \mathbf{P}(\mathbf{M}_0 = n) = \Tr[|n \rangle \langle n | \ \Gamma_{\beta,N}] \quad \text{ and } \quad \widetilde{\mathbf{M}}_0 = \frac{\mathbf{M}_0 - \widetilde{N}_0(\beta,N)}{\sqrt{\textbf{Var}(\beta,N)}},
    \label{eq:condensateRVb}
\end{equation}
where $\widetilde{N}_0(\beta,N)$ and $\textbf{Var}(\beta,N)$ denote the expectation and the variance of $X_{\beta,N}$, respectively. In the following we denote the characteristic function of a random variable $X$ by
\begin{equation}
    \phi_{X}(t) = \mathbf{E}(e^{\mathrm{i}t X}).
    \label{eq:characteristicFunction}
\end{equation}

To discuss three of our four parameter regimes, we need the following abstract lemma about randomized Poisson distributions, which we discuss first. 

\begin{lemma}\label{lem:abstractLemmaRandomizedPoissonDistributions}
    Let $\{ X_j \}_{j \in \mathbb{N}}$ be a sequence of real-valued nonnegative random variables with means $m_j$ and variances $\sigma_j$ and denote $\widetilde{X}_j = (X_j - m_j)/\sigma_j$. We assume that $\sigma_j \to +\infty$ and $m_j/\sigma_j^2 \to 0$ for $j \to \infty$. We also assume that there are constants $\lambda_0, C > 0$ such that  
    \begin{equation}
        \lim_{j \to \infty} \mathbf{E}(\exp(\lambda |\widetilde{X}_j|)) \leq C
        \label{eq:assumptionMomentGeneratingFunction}
    \end{equation}
    holds for all $0 < \lambda < \lambda_0$. Let $\{ Y_j \}_{j \in \mathbb{N}}$ be a sequence of randomized Poisson random variables with laws
    \begin{equation}
        \mathbf{P}(Y_j = n) = \frac{1}{n!} \mathbf{E}( X_j^n \exp(-X_j) )
        \label{eq:generalRandomizedPoisson}
    \end{equation}
    and define $\widetilde{Y}_j = (Y_j - m_j)/\sigma_j$. Then we have
     \begin{equation}
        \lim_{j \to \infty} | \phi_{\widetilde{Y}_j}(t) - \phi_{\widetilde{X}_j}(t) | = 0
    \end{equation}
    for all $t \in \mathbb{R}$.
\end{lemma}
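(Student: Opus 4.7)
The plan is to exploit the fact that, conditional on $X_j$, the randomized Poisson variable $Y_j$ is a Poisson random variable with rate $X_j$, and then to compare the characteristic functions of $\widetilde Y_j$ and $\widetilde X_j$ by a second‑order Taylor expansion of the imaginary exponential. By the tower property of conditional expectation and the elementary identity $\mathbf{E}(e^{\mathrm{i}tZ})=\exp(\lambda(e^{\mathrm{i}t}-1))$ for $Z\sim\mathrm{Pois}(\lambda)$, we obtain
\begin{equation*}
\phi_{Y_j}(t) \;=\; \mathbf{E}\!\left(\exp\!\left(X_j(e^{\mathrm{i}t}-1)\right)\right),
\end{equation*}
so that, after rescaling,
\begin{equation*}
\phi_{\widetilde Y_j}(t) \;=\; e^{-\mathrm{i}tm_j/\sigma_j}\,\mathbf{E}\!\left(\exp\!\left(X_j(e^{\mathrm{i}t/\sigma_j}-1)\right)\right), \qquad \phi_{\widetilde X_j}(t) \;=\; e^{-\mathrm{i}tm_j/\sigma_j}\,\mathbf{E}\!\left(\exp\!\left(\mathrm{i}tX_j/\sigma_j\right)\right).
\end{equation*}

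The heart of the argument is then to define the remainder $r_j(t):=e^{\mathrm{i}t/\sigma_j}-1-\mathrm{i}t/\sigma_j$, which satisfies two key properties: the modulus bound $|r_j(t)|\le t^2/(2\sigma_j^2)$ and, crucially, the sign condition $\Re r_j(t)=\cos(t/\sigma_j)-1\le 0$. Factorizing $\exp(X_j(e^{\mathrm{i}t/\sigma_j}-1))=\exp(\mathrm{i}tX_j/\sigma_j)\cdot\exp(X_j r_j(t))$ yields
\begin{equation*}
\phi_{\widetilde Y_j}(t) - \phi_{\widetilde X_j}(t) \;=\; e^{-\mathrm{i}tm_j/\sigma_j}\,\mathbf{E}\!\left(e^{\mathrm{i}tX_j/\sigma_j}\bigl(e^{X_jr_j(t)}-1\bigr)\right).
\end{equation*}

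Because $X_j\ge 0$ (by hypothesis) and $\Re r_j(t)\le 0$, we have $\Re(X_jr_j(t))\le 0$ almost surely. Writing $e^z-1=\int_0^1 ze^{sz}\,ds$, the crude estimate $|e^z-1|\le |z|$ holds whenever $\Re z\le 0$. Applying this pointwise inside the expectation produces the bound
\begin{equation*}
\bigl|\phi_{\widetilde Y_j}(t) - \phi_{\widetilde X_j}(t)\bigr| \;\le\; \mathbf{E}\!\left(X_j|r_j(t)|\right) \;\le\; \frac{t^2\, m_j}{2\sigma_j^2},
\end{equation*}
which converges to zero by the assumption $m_j/\sigma_j^2\to 0$, proving the claim.

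The main obstacle is actually conceptual rather than technical: one must realize that the Poisson noise $Y_j-X_j$ has conditional variance $X_j$ (in particular $\mathbf{Var}(Y_j)=\sigma_j^2+m_j$), so it only disappears at the scale $\sigma_j$ when $m_j\ll\sigma_j^2$; this is exactly what the hypothesis $m_j/\sigma_j^2\to 0$ encodes, and once one sees this the estimate above is essentially forced. The exponential moment hypothesis \eqref{eq:assumptionMomentGeneratingFunction} is not needed for the pointwise convergence stated in the lemma and presumably enters only at the subsequent stage, where the pointwise convergence of characteristic functions is converted into convergence in distribution for the application to parts (a)--(c) of Theorem~\ref{thm:particleNumberDistributionBEC2}, for instance to ensure tightness or to identify the limiting distributions through explicit computations of their characteristic functions.
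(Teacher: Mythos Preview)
Your proof is correct and in fact cleaner than the paper's. Both arguments start from the same identity
\[
\phi_{\widetilde Y_j}(t)=e^{-\mathrm{i}tm_j/\sigma_j}\,\mathbf{E}\bigl(\exp(X_j(e^{\mathrm{i}t/\sigma_j}-1))\bigr),
\]
but diverge in how they isolate the error. The paper first substitutes $X_j=m_j+\sigma_j\widetilde X_j$ and then expands, arriving at
\[
\phi_{\widetilde Y_j}(t)=\Bigl[\phi_{\widetilde X_j}(t)+\mathbf{E}\bigl(e^{\mathrm{i}t\widetilde X_j}\bigl[\exp(\widetilde X_j\cdot O(t^2/\sigma_j))-1\bigr]\bigr)\Bigr]\bigl(1+O(m_j/\sigma_j^2)\bigr).
\]
Because $\widetilde X_j$ can be negative, the real part of the exponent $\widetilde X_j\cdot O(t^2/\sigma_j)$ is not sign-definite, and the paper invokes the exponential moment hypothesis~\eqref{eq:assumptionMomentGeneratingFunction} to show that the remaining expectation tends to zero. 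You instead keep $X_j$ intact and factor as $\exp(\mathrm{i}tX_j/\sigma_j)\exp(X_jr_j(t))$; since $X_j\ge 0$ and $\Re r_j(t)\le 0$, the elementary bound $|e^z-1|\le|z|$ for $\Re z\le 0$ yields directly $|\phi_{\widetilde Y_j}(t)-\phi_{\widetilde X_j}(t)|\le t^2m_j/(2\sigma_j^2)$. This is both shorter and quantitative.

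One correction to your closing remark: the exponential moment hypothesis is not used downstream for tightness---the paper passes from pointwise convergence of characteristic functions to convergence in distribution via L\'evy's continuity theorem, which requires no separate tightness input. The hypothesis is used precisely in the paper's own proof of this lemma, to control the term $\mathbf{E}(|\exp(\widetilde X_j\cdot O(t^2/\sigma_j))-1|)$. Your argument shows that the hypothesis~\eqref{eq:assumptionMomentGeneratingFunction} can be dropped from the statement of the lemma altogether.
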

\begin{proof}
    A short computation shows that the characteristic function of $\widetilde{Y}_j$ reads
    \begin{equation}
        \phi_{\widetilde{Y}_j}(t) = \mathbf{E}\left( \sum_{n=0}^{\infty} e^{\mathrm{i} (t/\sigma_j) n} \frac{X_j^n}{n!} \exp(-X_j) \right) e^{-\mathrm{i} t m_j/\sigma_j} =  \mathbf{E}\left( \exp\left( X_j \left( e^{\mathrm{i}(t/\sigma_j)} - 1 \right) \right) \right) e^{-\mathrm{i} t m_j/\sigma_j}. 
        \label{eq:condensateDistribution13}
    \end{equation}
    A second order Taylor expansion allows us to write $e^{\mathrm{i}(t/\sigma_j)} - 1 = it/\sigma_j + O(t^2/\sigma_j^2)$. We insert this and $X_j = m_j + \widetilde{X}_j \sigma_j$ into \eqref{eq:condensateDistribution13} and find
    \begin{equation}
        \phi_{\widetilde{Y}_j}(t) = \left[ \phi_{\widetilde{X}_j}(t) + \mathbf{E}\left( e^{\mathrm{i}t \widetilde{X}_j} \left[ \exp(\widetilde{X}_j \cdot O(t^2/\sigma_j)) - 1 \right] \right) \right] \left( 1 + O\left( m_j/\sigma_j^2 \right) \right).
        \label{eq:condensateDistribution14}
    \end{equation}
   When we rearrange the terms and take the limit $j \to \infty$ on both sides, we find
    \begin{equation}
        \lim_{j \to \infty} | \phi_{\widetilde{Y}_j}(t) - \phi_{\widetilde{X}_j}(t) | \leq \lim_{j \to \infty} \mathbf{E}\left( | \exp( \widetilde{X}_j \cdot O(t^2/\sigma_j)) - 1 | \right).
        \label{eq:condensateDistribution15}
    \end{equation}
    Our assumptions on $\widetilde{X}_j$ guarantee that the right-hand side equals zero, which proves the claim.
\end{proof}
We are now prepared to discuss the first parameter regime. We recall that $N_0$ and $\widetilde{N}_0$ have the same asymptotic behavior as $N \to \infty$. This is guaranteed by Lemma~\ref{lem:BoundN0} and allows us to describe the different parameter regimes below in terms of $N_0$.
\subsubsection*{Parameter regime 1: $N_0(\beta,N) \gg N^{5/6}$}
The first parameter regime corresponds to the condensed phase as well as to the temperature regime, where the critical point is approached from the condensed phase in such a way that $N_0(\beta,N) \gg N^{5/6}$ still holds.

We recall the definition of $X_{\beta,N}$ below \eqref{eq:condensateDistribution11} as well as that its expectation is given by $\widetilde{N}_0(\beta,N)$, see \eqref{eq:GibbsDistributionDiscrete}. Its variance is denoted by $\mathbf{Var}(\beta,N)$. We also define the centered and rescaled random variable 
\begin{equation}
    \widetilde{X}_{\beta,N} = \frac{X_{\beta,N} - \widetilde{N}_0(\beta,N)}{\sqrt{\mathbf{Var}(\beta,N)}}.
    \label{eq:widetildeX}
\end{equation}
In the parameter regime we are currently interested in, the random variable $\widetilde{\mathbf{M}}_{0}$ is centered because
\begin{equation}
    \mathbf{E}(\mathbf{M}_0) = \sum_{n=1}^{\infty} \frac{1}{(n-1)!} \mathbf{E}\left( X_{\beta,N}^n \exp(-X_{\beta,N}) \right) = \mathbf{E} \left( X_{\beta,N} \exp(-X_{\beta,N}) \sum_{n=1}^{\infty} \frac{1}{(n-1)!}  X_{\beta,N}^{n-1}  \right) = \mathbf{E}(X_{\beta,N}).
    \label{eq:condensateDistribution15_0}
\end{equation}
It should, however, be noted that the variance of $\mathbf{M}_{0}$ does not equal $\mathbf{Var}(\beta,N)$. 

We apply Lemma~\ref{lem:abstractLemmaRandomizedPoissonDistributions} to the characteristic function of $\widetilde{\mathbf{M}}_{0}$, which gives
\begin{equation}
    \lim_{N \to \infty} | \phi_{\widetilde{\mathbf{M}}_{0}}(t) - \phi_{\widetilde{X}_{\beta,N}}(t) | = 0. 
    \label{eq:condensateDistribution16}
\end{equation}
That $\widetilde{X}_{\beta,N}$ satisfies the assumptions of Lemma~\ref{lem:abstractLemmaRandomizedPoissonDistributions} is guaranteed by Lemmas~\ref{lem:asymptoticsVariances} and \ref{lem:BoundCenteredCondensateDistribution} in Appendix~\ref{app:effcondensate}. The pointwise limit of the characteristic function of $\widetilde{X}_{\beta,N}$ has been computed in part~(a) of Lemma~\ref{lem:lemmaCharacteristicFunction}, which allows us to conclude that
\begin{equation}
    \lim_{N \to \infty} \phi_{\widetilde{\mathbf{M}}_{0}}(t) = \exp(-t^2/2)
    \label{eq:condensateDistribution17}
\end{equation}
holds. Finally, applications of \eqref{eq:condensateDistribution10} and \eqref{eq:condensateDistribution17} show that $\lim_{N \to \infty} \phi_{\widetilde{\mathbf{N}}_0}(t)$ equals the right-hand side of \eqref{eq:condensateDistribution17}, too. In particular, $\widetilde{\mathbf{N}}_0$ converges in distribution to a standard normal random variable.
\subsubsection*{Parameter regime 2: $N_0(\beta,N) = t N^{5/6}$ with some fixed $t \in \mathbb{R}$}
To be in this parameter regime we need to approach the critical point such that $N_0(\beta,N) = t N^{5/6}$ holds with some fixed $t \in \mathbb{R}$. The same analysis as in the previous parameter regime shows that $\widetilde{\mathbf{N}}_0$ converges in distribution to a random variable with distribution $f_{\sigma,A,B}$ in \eqref{eq:lemmaCharacteristicFunctiond}. 
\subsubsection*{Parameter regime 3: $1 \ll N_0(\beta,N) \ll N^{5/6}$}
This parameter regime again can be obtained when we approach the critical point in a specific way. If $N^{2/3} \leq N_0(\beta,N) \ll N^{5/6}$ we argue as in the previous two parameter regimes to see that $\widetilde{\mathbf{N}}_0$ converges in distribution to a random variable with distribution function given by $f$ in \eqref{eq:lemmaCharacteristicFunctione}. If $1 \ll N_0(\beta,N) < N^{2/3}$ the probability distribution of $\mathbf{M}_0$ is given by the right-hand side of \eqref{eq:condensateDistribution12}. The characteristic function of $\widetilde{\mathbf{M}}_{0}$ therefore reads
\begin{align}
    \phi_{\widetilde{\mathbf{M}}_{0}}(t) &= \mathbf{E}\left( \exp\left( \mathrm{i}t \left( \frac{\mathbf{M}_{0} - \widetilde{N}_0}{\sqrt{\mathbf{Var}}} \right) \right) \right) = e^{-\mathrm{i} t \widetilde{N}_0/\sqrt{\mathbf{Var}}} \sum_{n=0}^{\infty} e^{\mathrm{i} (t/\sqrt{\mathbf{Var}}) n} \exp(\beta \mu_0(\beta,N)n)(1-\exp(\beta \mu_0(\beta,N))) \nonumber \\
    &= e^{-\mathrm{i} t \widetilde{N}_0/\sigma} \frac{1-\exp(\beta \mu_0(\beta,N))}{1-\exp(\beta \mu_0(\beta,N) + \mathrm{i}(t/\sqrt{\mathbf{Var}})}.
    \label{eq:condensateDistribution19}
\end{align}
Using \eqref{eq:condensateDistribution19}, $N_0 \simeq \widetilde{N}_0$, $-\beta \mu_0(\beta,N) \simeq 1/N_0$, and $\sqrt{\mathbf{Var}(\beta,N)} \simeq N_0(\beta,N)$, which follows from part~(b) of Lemma~\ref{lem:ChemPotBECCont} and part~(c) of Lemma~\ref{lem:asymptoticsVariances}, we see that
\begin{equation}
    \lim_{N \to \infty} \phi_{\widetilde{\mathbf{M}}_{0}}(t) = \frac{e^{-\mathrm{i}t}}{1-\mathrm{i}t} 
    \label{eq:condensateDistribution20}
\end{equation}
holds. The right-hand side of \eqref{eq:condensateDistribution20} is the characteristic function of the probability distribution $f$ in \eqref{eq:lemmaCharacteristicFunctione}. As before we check with \eqref{eq:condensateDistribution10} that $\phi_{\widetilde{\mathbf{N}}_{0}}(t)$ converges to the same limit as $N \to \infty$.  
\subsubsection*{Parameter regime 4: $N_0(\beta,N) = t$ with some fixed $t > 0$}
In this parameter regime we are above the critical point, where $\beta \mu_0(\beta,N)$ does not depend on $N$ (or it has a limit as $N \to \infty$) and an application of \eqref{eq:condensateDistribution10} shows that $\mathbf{N}_0$ converges in distribution to a random variable with law given by the right-hand side of \eqref{eq:condensateDistribution12}. This ends the proof of Theorem~\ref{thm:particleNumberDistributionBEC2}.

\section{The Gibbs state part IV: higher order correlation inequalities}
\label{sec:higherOrderCorrelationInequalities}
In this section we prove Theorem~\ref{thm:higher-moments}. Then we apply our abstract result to the Gibbs state $G_{\beta,N}$. 

\subsection{Proof of Theorem~\ref{thm:higher-moments}}
The goal of this section is to prove Theorem~\ref{thm:higher-moments}. We will prove that for all even $k\in \mathbb{N}$ and all $\delta\in (0,1)$,  
\begin{align}\label{eq:corr-thm-conclusion}
\max_{t\in [-1+\delta, 1-\delta]} \Tr [B^k \exp(-A+tB)] \lesssim_{k,\delta}   \sup_{t\in [-1,1]}  \Tr[ ( 1+b^2 X^{2\alpha+k-2})  \exp(-A+tB) ]. 
\end{align}
Moreover, if $B\ge 0$, then for all $k\in \mathbb{N}$ and all $\delta\in (0,1)$ we have
 \begin{align} \label{eq:corr-thm-conclusion-B>0} 
\max_{t\in [-1+\delta, 1-\delta]}\Tr[ B^{k} \exp(-A+tB) ] \lesssim_{k,\delta} \sup_{|t| \le 1}  \Big( \Tr[ \exp(-A+tB) ]  + \sum_{\ell=1}^k b |\Tr ( [B, [B, A]] X^ {\alpha+\ell-3}\exp(-A+tB) )| \Big). 
\end{align}
The statements  of Theorem~~\ref{thm:higher-moments} follow from these bounds and \eqref{eq:1-asum-used}. 

In the following we use the notation $A_t=A-tB$. We observe that for all $\ell \in \mathbb{N}$ we have  
\begin{equation}\label{eq:Tr-B-ell-Duhamel}
    \partial_t \Tr [B^\ell e^{-A_t}] = \int_0^1 \Re \Tr \left[ B^\ell e^{-sA_t} B e^{-(1-s)A_t} \right] \de s. 
\end{equation}
We use \cite[Theorem 7.2 (iii)]{LewNamRou-21} and the fact that $X$ commutes with both, $A$ and $B$, to estimate
\begin{align} \label{eq:FB-k-2}
\Big| & \Re \Tr[ B^\ell e^{-sA_t} B e^{-(1-s)A_t} ]  - \Tr[B^{\ell+1}e^{-A_t}] \Big| \nonumber \\  
&\hspace{2cm }\le  \frac{1}{4} \sqrt{ |\Tr ( [B, [B, A]] X^qe^{-A_t} )|}  \sqrt{|\Tr ( [B^\ell,[ B^\ell, A]] X^{-q}e^{-A_t})|},
\end{align}
which holds for all $s\in [0,1]$ and all $q\in \mathbb{R}$. Since strictly speaking the results in \cite[Theorem 7.2]{LewNamRou-21} are stated with bounded operators $A,B$ and without the operator $X$, let us briefly explain how to obtain \eqref{eq:FB-k-2} in our more general setting. First, for any self-adjoint operator $Y$ such that $Y e^{-sA_t}$ and $[Y,A] e^{-sA_t}$ are Hilbert-Schmidt for all $s\in (0,1)$, the function
\begin{equation}
    f(s)=\Tr [ Y e^{-sA_t}  Y e^{ -(1-s)A_t} ],\quad s\in [0,1]    
\end{equation}
satisfies 
\begin{equation}
    f''(s)= - \Tr ( [Y,A] e^{-sA_t}   [Y,A] e^{-(1-s)A_t}   ) = \Tr \Big( \Big| e^{-sA_t/2}  [Y,A] e^{-(1-s)A_t/2}   \Big|^2  \Big) \ge 0    
\end{equation}
for all $s\in (0,1)$. That is, $f$ is convex. This key observation, which goes back to \cite[Eq. (7.12)]{LewNamRou-21}, also holds for unbounded operators. With the convexity of $f$ and the symmetry $f(s)=f(1-s)$ we find 
\begin{equation}
    0 \le f(0)-f(s) \le f(0)-f(1/2) \le - \frac{1}{2} \lim_{s\to 0}f'(s),    
\end{equation}
which gives
\begin{align}    \label{eq:key-correlation-estimate-Y-At}
0 \le \Tr [Y^2 e^{-A_t} ] -  \Tr [ Y e^{-sA_t}  Y e^{-(1-s)A_t} ] \le - \frac{1}{2}\Tr (Y[Y,A_t] e^{-A_t} ) = \frac{1}{4}\Tr ([[Y,A_t],Y] e^{-A_t} ). 
\end{align}
Now we apply \eqref{eq:key-correlation-estimate-Y-At} four times with $Y\in \{B_1,B_2,B_1\pm B_2\}$ with $B_1= \eps B X^{q/2}$ and $B_2=\eps^{-1} B^{\ell}X^{-q/2}$ for some $\eps>0$. With these choices, the technical requirement that $Y e^{-sA_t}$ and $[Y,A] e^{-sA_t}$ are Hilbert--Schmidt follow from our assumptions on $A,B$ and $X$. We find
\begin{align}   
  &\Tr [ (B_1\pm B_2) e^{-sA_t} (B_1\pm B_2)  e^{-(1-s)A_t} ] - \sum_{j=1}^2 \Tr [ B_j e^{-sA_t}  B_j  e^{-(1-s)A_t} ] \nn \\
  &\hspace{1.5cm} \le \Tr [(B_1 \pm B_2)^2 e^{-A_t}] - \sum_{j=1}^2 \Big(  \Tr [ B_j^2 e^{-A_t} ] - \frac{1}{4}   \Tr ([[B_j,A_t],B_j] e^{-A_t} )  \Big),  
\end{align}
which is equivalent to 
\begin{align*}   
  \pm 2 \Big( \Re \Tr (B^\ell e^{-sA_t} B e^{-(1-s)A_t} ) -  \Tr (B^{\ell+1} e^{-A_t}  ) \Big) \le \frac{\eps^{2} }{4} \Tr ([[B,A_t],B] X^{q}e^{-A_t})  + \frac{\eps^{-2}}{4}    \Tr ([[B^\ell,A_t],B^\ell] X^{-q}e^{-A_t}). 
\end{align*}
Optimizing the latter bound over $\eps>0$ leads to \eqref{eq:FB-k-2}. 

To estimate further the right-hand side of \eqref{eq:FB-k-2}, we use our assumption in \eqref{eq:corr-thm-ass2}. For the first term on the right-hand side of \eqref{eq:FB-k-2}, we can bound immediately 
\begin{align} \label{eq:FB-k-2a}
\pm [B, [B, A]] X^q  \le b X^{q+\alpha}. 
\end{align}
For the second term, we have the identity 
\begin{align}\label{eq:doub-comm-expansion}
[B^\ell, [B^\ell,A]] = \sum_{i=0}^{\ell-1} [B^\ell, B^i [B,A] B^{\ell-1-i}] =  \sum_{i,j=0}^{\ell-1} B^{i+j} [B,[B,A]] B^{2\ell-2-i-j}.
\end{align}
By the Cauchy--Schwarz inequality, 
\begin{align}\label{eq:doub-comm-B-CS}
\pm ( B^{2n} [B,[B,A]] + [B,[B,A]] B^{2n} ) 
\le b \Big| (B^{n} X^{\alpha/2} )^* \Big|^2 + b^{-1} \Big| X^{-\alpha/2} B^n [B,[B,A]] \Big|^2 \le 2b B^{2n} X^{\alpha}
\end{align}
for all integer $n\ge 0$. Here we used again \eqref{eq:corr-thm-ass2}. Inserting   \eqref{eq:doub-comm-B-CS} in \eqref{eq:doub-comm-expansion} we find 
\begin{align} \label{eq:FB-k-2b1}
\pm [B^\ell, [B^\ell,A]] X^{-q}\lesssim_\ell b B^{2\ell-2} X^{\alpha-q}. 
\end{align}
In summary, from \eqref{eq:Tr-B-ell-Duhamel},  \eqref{eq:FB-k-2} and \eqref{eq:FB-k-2b1} we deduce that
\begin{align} \label{eq:FB-k-3}
\Big|  \partial_t \Tr[ B^\ell e^{-A_t}]  - \Tr[B^{\ell+1}e^{-A_t}] \Big| \lesssim_\ell  b \sqrt{ \Tr [ X^{\alpha+q}e^{-A_t} ]}  \sqrt{  \Tr [ B^{2\ell-2} X^{\alpha-q}e^{-A_t}]}, \quad \forall \ell = 0,1,2,...
\end{align}

Next, we apply \eqref{eq:FB-k-3} with $\ell\in \{k-2,k-1\}$, $q=\alpha+k-4$ and use the Cauchy--Schwarz inequality to check that
\begin{align}
\Big| \partial_t \Tr [B^{k-2} e^{-A_t}] - \Tr[ B^{k-1} e^{-A_t}  ] \Big| &\lesssim_k b  \sqrt{  \Tr[X^{2\alpha+k-4}e^{-A_t}] } \sqrt{  \Tr [B^{k-2}   e^{-A_t}] } \nn\\
& \le \Tr [B^{k-2}   e^{-A_t}] + b^2 \Tr[X^{2\alpha+k-4}e^{-A_t}], \label{eq:FB-k-4a} \\
 \partial_t \Tr [B^{k-1} e^{-A_t}] &\ge  \Tr[ B^{k} e^{-A_t} ]  - C_k  b  \sqrt{  \Tr[X^{2\alpha+k-4}e^{-A_t} ]} \sqrt{  \Tr [B^{k}   e^{-A_t}] } \nn\\
& \ge \frac{1}{2} \Tr [B^{k}   e^{-A_t}] - C_k  b^2 \Tr[X^{2\alpha+k-4}e^{-A_t}] . \label{eq:FB-k-4b}
\end{align}
Here we used $\pm B\le X$ and in \eqref{eq:FB-k-4b} we also used that $k$ is even. Similarly, applying \eqref{eq:FB-k-3} with $\ell\in \{k,k+1\}$, $q=\alpha+k-2$, we get
\begin{align}
\Big| \partial_t \Tr [B^{k} e^{-A_t}] - \Tr[ B^{k+1} e^{-A_t}  ] \Big| &\lesssim_k b  \sqrt{  \Tr(X^{2\alpha+k-2}e^{-A_t}] } \sqrt{  \Tr [B^{k}   e^{-A_t}] } \nn\\
& \le \Tr [B^{k}   e^{-A_t}] + b^2 \Tr(X^{2\alpha+k-2}) e^{-A_t}], \label{eq:FB-k-4c} \\
 \partial_t \Tr [B^{k+1} e^{-A_t}] &\ge  \Tr[ B^{k+2} e^{-A_t} ]  - C_k  b  \sqrt{  \Tr[X^{2\alpha+k-2}e^{-A_t} ]} \sqrt{  \Tr [B^{k+2}   e^{-A_t}] } \nn\\
& \ge \frac{1}{2} \Tr [B^{k+2}   e^{-A_t}] - C_k  b^2 \Tr[X^{2\alpha+k-2} e^{-A_t}] . \label{eq:FB-k-4d}
\end{align}

We can put \eqref{eq:FB-k-4a}, \eqref{eq:FB-k-4b}, \eqref{eq:FB-k-4c} and \eqref{eq:FB-k-4d} in good use by the following elementary  lemma, which is an approximate version of a second order Taylor expansion of a convex function.  

\begin{lemma}\label{lem:f-f4} Let $I\subset \R$ be an open interval. Let $f_j,e_j: I \to \R$, $j \in \{ 0,1,2 \}$ be such that for all $j$ the functions $f_j$ are continuously differentiable and the functions $e_j$ are continuous. We also assume that 
\begin{equation}\label{eq:elem-lem-0}
    f_0'=f_1+e_1,\quad f_1' \ge f_2+e_2, \quad f_0\ge 0, \quad f_2\ge 0. 
    \end{equation}
For $\delta>0$ and $(x-2\delta,x+2\delta)\subset I$ we have
\begin{align} \label{eq:elem-lem-1}
 \frac{\delta}{2}   \int_{x-\delta/2}^{x+\delta/2} f_2(t) \de t \le f_0(x+\delta)+f_0(x-\delta)-2f_0(x) + \int_{I} (|e_1(t)|+|e_2(t)|) \de t 
\end{align}
and 
\begin{align}\label{eq:elem-lem-2}
f_0(x) \le 2\delta^{-1} \int_{I} (|f_0(t)|+|e_1(t)|+|e_2(t)|) \de t.  
\end{align}
\end{lemma}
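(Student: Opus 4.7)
The plan is to prove both inequalities from the two structural relations $f_0'=f_1+e_1$ and $f_1'\ge f_2+e_2$, by integrating them once or twice and then exploiting the positivity of $f_0$ and $f_2$ to absorb the good terms.

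For \eqref{eq:elem-lem-1}, I would start from the symmetric second-difference representation
\[
f_0(x+\delta)+f_0(x-\delta)-2f_0(x)=\int_0^\delta\bigl[f_0'(x+s)-f_0'(x-s)\bigr]\,ds,
\]
obtained by a single application of the fundamental theorem of calculus. Substituting $f_0'=f_1+e_1$ and then writing $f_1(x+s)-f_1(x-s)=\int_{x-s}^{x+s}f_1'(r)\,dr$ and using $f_1'\ge f_2+e_2$ yields
\[
f_0(x+\delta)+f_0(x-\delta)-2f_0(x)\ge\int_0^\delta\!\!\int_{x-s}^{x+s}f_2(r)\,dr\,ds+R,
\]
where $R$ collects the contributions of $e_1$ and $e_2$, each bounded in absolute value by $O(\int_I(|e_1|+|e_2|))$ after a change of variables. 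The double integral of $f_2$ is then bounded below by restricting to $s\in[\delta/2,\delta]$: for such $s$ one has $[x-\delta/2,x+\delta/2]\subset[x-s,x+s]$, and $f_2\ge 0$ gives $\int_0^\delta\!\int_{x-s}^{x+s}f_2\,dr\,ds\ge\tfrac{\delta}{2}\int_{x-\delta/2}^{x+\delta/2}f_2$. Rearranging produces \eqref{eq:elem-lem-1}.

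For \eqref{eq:elem-lem-2}, the key observation is that $f_1'\ge e_2$, by virtue of $f_2\ge 0$. Integrating this between $x$ and $x+s$ and then integrating $f_0'=f_1+e_1$ between $x$ and $x+t$ gives, for all $t\in[-\delta,\delta]$, the one-sided Taylor-type inequality
\[
f_0(x+t)\ge f_0(x)+f_1(x)\,t-|t|\int_I|e_2|-\int_I|e_1|,
\]
where the orientation and boundary contributions are treated separately in the cases $t>0$ and $t<0$. The crucial step is then the choice of $t$: if $f_1(x)\ge 0$ I integrate the above inequality over $t\in[0,\delta]$, and if $f_1(x)<0$ I integrate over $t\in[-\delta,0]$. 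In either case the linear term $f_1(x)\,t$ is nonnegative and can be dropped, yielding
\[
\delta\,f_0(x)\le\int_I|f_0|+\delta\int_I|e_2|+\delta\int_I|e_1|,
\]
and dividing by $\delta$ produces \eqref{eq:elem-lem-2} (the factor $2$ in the statement absorbs the $\delta$ in front of the error integrals, which is harmless because $\delta$ is controlled by $(x-2\delta,x+2\delta)\subset I$).

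The computations are otherwise routine; the main obstacle is keeping track of signs, the orientations of the inner integrals, and the symmetric treatment of $t>0$ versus $t<0$ when deriving the pointwise one-sided inequality for $f_0(x+t)$. Once the idea of choosing the sign of $t$ so that $f_1(x)\,t\ge 0$ is isolated, both bounds reduce to standard manipulations with absolute values and Fubini, and the approximate-convexity flavor of the lemma becomes transparent.
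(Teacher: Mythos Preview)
Your argument for \eqref{eq:elem-lem-1} is essentially the same as the paper's: both compute the symmetric second difference of $f_0$, substitute $f_0'=f_1+e_1$, rewrite the resulting $f_1$-differences as integrals of $f_1'$, apply $f_1'\ge f_2+e_2$, and then bound the double integral of $f_2$ from below by restricting the outer integration variable. The parametrizations differ slightly but the content is identical.

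For \eqref{eq:elem-lem-2} you take a genuinely different route. The paper observes that since $f_2\ge 0$, the chain of inequalities already established for \eqref{eq:elem-lem-1} gives, as a byproduct, the approximate midpoint-convexity
\[
2f_0(x)\le f_0(x+\eps)+f_0(x-\eps)+\int_I(|e_1|+|e_2|)\, dt
\]
for all $\eps$ with $[x-\eps,x+\eps]\subset I$; it then simply averages this over $\eps\in(\delta,2\delta)$ and uses $f_0\ge 0$ to bound the averaged right-hand side by $\delta^{-1}\int_I f_0$. Your approach instead derives a one-sided first-order Taylor inequality and removes the linear term $f_1(x)\,t$ by choosing the sign of $t$ according to the sign of $f_1(x)$. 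Both are correct; the paper's version is shorter since it recycles the first part and avoids the case distinction, while yours makes the mechanism (approximate convexity plus positivity) more explicit through the Taylor picture.

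One small point: your closing remark that ``$\delta$ is controlled by $(x-2\delta,x+2\delta)\subset I$'' does not actually bound $\delta$ in general (take $I=\mathbb{R}$). The constants only match the stated inequality literally when $\delta$ is at most of order one; the paper's own argument has the same harmless slack, and in the application $\delta<1$ anyway, so this is cosmetic.
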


\begin{proof}[Proof of Lemma~\ref{lem:f-f4}] If $\eps\ge \delta>0$ are such that $[x-\eps,x+\eps]\subset I$, then from \eqref{eq:elem-lem-0} we have
\begin{align}\label{eq:elem-lem-3}
f_0(x+\eps) + f_0(x-\eps) -2f_0(x) &= \int_x^{x+\eps} ( f_0'(s) -f_1(x) ) \de s  + \int_{x-\eps}^x  ( f_1(x) - f_0'(s)   ) \de s \nn\\
&\ge \int_x^{x+\eps} (f_1(s) - f_1(x)) \de s  + \int_{x-\eps}^x ( f_1(x) - f_1(s)  ) \de s -  \int_{I} |e_1(s)| \de s \nn \\
& = \int_x^{x+\eps} \int_x^s  f_1' (\xi) \de \xi \de s + \int_{x-\eps}^x  \int_{s}^x  f_1'(\xi) \de \xi \de s -   \int_{I} |e_1(s)| \de s \nn\\
&\ge \int_x^{x+\eps} \int_x^s f_2(\xi) \de \xi \de s + \int_{x-\eps}^x \int_{s}^x f_2(\xi) \de \xi \de s -  \int_{I} (|e_1(s)|+|e_2(s)|) \de s  \nn\\
&\ge \frac{\delta}{2} \int_{x-\delta/2}^{x+\delta/2}  f_2 (\xi) \de \xi - \int_{I} ( |e_1(s)|+|e_2(s)|) \de s. 
\end{align}
Here we obtained the last lower bound by using $f_2\ge 0$ and restricting the integration to the domain $\delta/2\le |s-x|  \le 3\delta/2$. Choosing $\eps=\delta$ in \eqref{eq:elem-lem-3} gives \eqref{eq:elem-lem-1}. Moreover, since $f_2\ge 0$ we deduce from  \eqref{eq:elem-lem-3} that
\begin{align}\label{eq:elem-lem-4}
2f_0(x) \le f_0(x+\eps) + f_0(x-\eps)  +  \int_{I} ( |e_1(s)|+|e_2(s)|) \de s.
\end{align}
Averaging \eqref{eq:elem-lem-4} over $\eps\in (\delta,2\delta)$, we obtain \eqref{eq:elem-lem-2}. 
\end{proof}

With \eqref{eq:FB-k-4a}, \eqref{eq:FB-k-4b}, and \eqref{eq:elem-lem-1} from Lemma~\ref{lem:f-f4}, we find that 
\begin{align} \label{eq:eq:elem-lem-1-appl}
\int_{-1+2\delta}^{1-2\delta}  \Tr [B^{k} e^{-A_t}] \de t &\lesssim_\delta \sup_{|x|\le 1-\delta } \int_{x-\delta/2}^{x+\delta/2} \Tr [B^{k} e^{-A_t}] \de t \nn\\
&\lesssim_{k,\delta} \sup_{|t|\le 1-\delta/2} \Tr [B^{k-2} e^{-A_t}] + \sup_{|t|\le 1-\delta/2}   b^2 \Tr[ X^{2\alpha+k-4}e^{-A_t}]
\end{align}
holds for all $\delta\in (0,1/4)$. Moreover, using \eqref{eq:FB-k-4c}, \eqref{eq:FB-k-4d}, \eqref{eq:elem-lem-1}  in Lemma \ref{lem:f-f4} and \eqref{eq:eq:elem-lem-1-appl}, we get  
\begin{align}
\sup_{|t|\le 1-4\delta}  \Tr [B^{k} e^{-A_t}]  &\lesssim_\delta \int_{-1+2\delta}^{1+2\delta} \Tr [B^{k} e^{-A_t}] \de t+ \sup_{|t|\le 1-\delta/2}   b^2 \Tr[ X^{2\alpha+k-2}e^{-A_t}] \nn\\
&\lesssim_{k,\delta} \sup_{|t|\le 1-\delta/2} \Tr [B^{k-2} e^{-A_t}] + \sup_{|t|\le 1-\delta/2}   b^2 \Tr[ X^{2\alpha+k-2}e^{-A_t}].
\end{align}
This holds for all even $k\ge 2$ and all $\delta \in (0,1/4)$. The conclusion of  \eqref{eq:corr-thm-conclusion} follows by induction. 

Finally, let us additionally assume that $B\ge 0$. In this case, we can use induction from $k-1$ to $k$, instead of $k-2$ to $k$. To be precise, using \eqref{eq:Tr-B-ell-Duhamel}, \eqref{eq:FB-k-2}, \eqref{eq:doub-comm-B-CS} (but not  \eqref{eq:FB-k-2a}) with $q=\alpha+\ell-3$, and the Cauchy--Schwarz inequality we have 
\begin{align} \label{eq:FB-k-2-B>0-a}
\partial_t  \Tr[ B^{\ell} e^{-A_t} ]  &\ge \Tr[B^{\ell+1}e^{-A_t}] - C_\ell  \sqrt{ |\Tr ( [B, [B, A]] X^qe^{-A_t} )|}  \sqrt{b \Tr [ B^{2\ell-2} X^{-q+\alpha}e^{-A_t}]} \nn\\
&\ge \Tr[B^{\ell+1}e^{-A_t}] - C_\ell  \sqrt{ |\Tr ( [B, [B, A]] X^qe^{-A_t} )|}  \sqrt{b \Tr [ B^{\ell+1} X^{-q+\alpha+\ell-3}e^{-A_t}]} \nn\\
&\ge \frac{1}{2}\Tr[B^{\ell+1}e^{-A_t}] - C_\ell b |\Tr ( [B, [B, A]] X^ {\alpha+\ell-3} e^{-A_t} )|.  
\end{align}
Applying \eqref{eq:FB-k-2-B>0-a} for $\ell=k-1$, we have
\begin{align} \label{eq:FB-k-2-B>0-b}
\int_{-1+\delta}^{1-\delta}  \Tr[ B^{k} e^{-A_t} ]  \de t  \lesssim_k \sup_{|t| \le 1}  \Big( \Tr[ B^{k-1} e^{-A_t} ] + b |\Tr ( [B, [B, A]] X^ {\alpha+k-4} e^{-A_t} )|  \Big). 
\end{align}
Then applying \eqref{eq:FB-k-2-B>0-a} for $\ell=k$ and using $\Tr[B^{k+1}e^{-A_t}]  \ge 0$, we have for every $-1 \le s\le t \le 1$,
\begin{align} \label{eq:FB-k-2-B>0-c}
 \Tr[ B^{k} e^{-A_s} ] - \Tr[ B^{k} e^{-A_t} ]  = - \int_{s}^t \partial_\xi \Tr[ B^{k} e^{-A_\xi} ]  \de   \xi \lesssim_k  b |\Tr ( [B, [B, A]] X^ {\alpha+k-3} e^{-A_t} )| . 
\end{align}
Combining \eqref{eq:FB-k-2-B>0-b} and \eqref{eq:FB-k-2-B>0-c}, we conclude that for all $|s|\le 1-2\delta$,
\begin{align} \label{eq:FB-k-2-B>0-d} 
\Tr[ B^{k} e^{-A_s} ] &= \delta^{-1} \int_{s}^{s+\delta}  \Tr[ B^{k} e^{-A_t} ] \de t - \delta^{-1} \int_{s}^{s+\delta}  \Big( \Tr[ B^{k} e^{-A_t} ]  - \Tr[ B^{k} e^{-A_s} ]  \Big) \de t \nn \\
&\lesssim_{k,\delta} \sup_{|t| \le 1-\delta}  \Big( \Tr[ B^{k-1} e^{-A_t} ] + b |\Tr ( [B, [B, A]] X^ {\alpha+k-4}e^{-A_t} )| +  b |\Tr ( [B, [B, A]] X^ {\alpha+k-3}e^{-A_t} )|  \Big). 
\end{align}  
This holds for all $k\ge 1$. By induction, we get the desired estimate \eqref{eq:corr-thm-conclusion-B>0}. The proof of Theorem~\ref{thm:higher-moments}  is complete. 

\subsection{Higher order moment bounds for the Gibbs state} 
\label{sec:higherOrderEstimates}

As an application of the above abstract theorem, we obtain the following estimates. 

\begin{theorem}[Higher moment estimates] \label{thm:higherOrderEstimates} 
Under the conditions of Theorem \ref{thm:secondOrderEstimates}, the following holds: 
\begin{enumerate}[label=(\alph*)]
\item Assume that $h$ is diagonal in momentum space. For $p \in \Lambda_+^*$ we have
\begin{equation} \label{eq:aa-G-higher}
\Tr [ (a_p^*a_p)^4  G_{h,\eta}(\beta,\mu) ] \lesssim \beta^{-4}. 
\end{equation}
\item We have
\begin{align}
\Tr [ (\cN_{+}-N_{+}^{\mathrm{G}})^4 G_{h,\eta}(\beta,\mu) ] &\lesssim \beta^{-4}
\label{eq:varianceNPlus-higher}
\end{align}
with $N_{+}^{\mathrm{G}} = \Tr[\mathcal{N}_+ G_{h,\eta}(\beta,\mu)]$.
\item For all $k \in \mathbb{N}$, we have
    \begin{equation}
    \Tr [\cN^k G_{h,\eta}(\beta,\mu) ]  \lesim_k  \eta^k. 
    \label{eq:particleNumberBoundsPerturbedState-k-higher}
    \end{equation}

\item If $\beta=\kappa \beta_c$ with $\kappa\in (0,1)$ fixed (the non-condensed phase), then for all $p\in \Lambda^*$, we have
\begin{equation}
    \Tr [ (a_p^*a_p)^3  G_{h,\eta}(\beta,\mu)  ]  \lesim 1. 
    \label{eq:apap-third-moment_0}
    \end{equation}
\end{enumerate}
\end{theorem}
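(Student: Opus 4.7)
The plan is to apply Theorem~\ref{thm:higher-moments} to $A=\beta(\mathcal{H}_{h,\eta}-\mu\mathcal{N})$ with an observable $B$ tailored to each part and an auxiliary operator $X$ built from $\mathcal{N}$ (which, being conserved, is essentially the only quantity commuting with $A$). The first-order input required by Theorem~\ref{thm:higher-moments} will in every case be supplied by a corresponding part of Theorem~\ref{thm:firstOrderAPriori}.

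I would treat Part (c) first, as its conclusions feed the remaining parts. With $B=\lambda\mathcal{N}^2/\eta^2$, the operators $A$ and $B$ commute, so $[[B,A],B]=0$ and Theorem~\ref{thm:higher-moments} applies with $b=0$. The first-order hypothesis $|\Tr[B\Gamma_t]|=\lambda\eta^{-2}\Tr[\mathcal{N}^2\Gamma_t]\lesim 1$ is exactly \eqref{eq:particleNumberBoundsPerturbedState} applied to the Hamiltonian obtained by shifting $\hat v(0)$ by $2t/(\beta\eta)=O(\eta^{-1/3})$ (which is the precise effect of the $tB$ perturbation on $\mathcal{H}_{h,\eta}$). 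The even-moment conclusion of Theorem~\ref{thm:higher-moments} then gives $\Tr[\mathcal{N}^{4m}G_{h,\eta}]\lesim_m\eta^{4m}$ for every $m\in\mathbb{N}$, and H\"older's inequality extends this to all $k$.

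For Parts (a) and (b) I would take $B=\lambda\beta a_p^*a_p$ and $B=\lambda\beta(\mathcal{N}_+-N_+^{\mathrm{G}})$ respectively, with $\lambda$ a small constant and $X=1+\mathcal{N}$. The commutation relations $[X,A]=[X,B]=0$ and the pointwise bound $\pm B\le X$ are clear, and the first-order hypothesis is read off from Theorem~\ref{thm:firstOrderAPriori}(a) or (b), the relevant perturbations of $h$ being rank-one shifts of bounded size. A direct computation analogous to \eqref{eq:commutator-appl-B1-B>0-2Andi}, together with the rough estimate $\mathcal{V}_\eta\lesim\|v\|_\infty\mathcal{N}^2/\eta$, yields
\begin{equation*}
\pm[[B,A],B]\lesim\frac{\lambda^2\beta^3}{\eta}\mathcal{N}^2\le bX^2,\qquad b=C\lambda^2\beta^3/\eta,\quad\alpha=2.
\end{equation*}
Substituting $k=4$ in the even-moment branch of Theorem~\ref{thm:higher-moments}, using Part (c) to bound $\Tr[X^{6}\Gamma_t]\lesim\eta^{6}$, and exploiting the numerology $\beta^6\eta^4\sim 1$, one obtains $\Tr[B^4\Gamma_0]\lesim 1$. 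Dividing by $(\lambda\beta)^4$ proves (a) and (b).

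Part (d) is the main obstacle, since the strategy of Part (a) specialized to the non-condensed phase only recovers $\Tr[(a_p^*a_p)^4G]\lesim\beta^{-4}\sim\eta^{8/3}$, and Cauchy--Schwarz against $\Tr[(a_p^*a_p)^2G]\lesim 1$ (from Theorem~\ref{thm:secondOrderEstimates}(a) with the $\mu_0$-shift) would yield only $\Tr[(a_p^*a_p)^3G]\lesim\eta^{4/3}$. The plan is to exploit two features specific to the non-condensed regime: $-\mu_0(\beta,M)\gesim\beta^{-1}$, so that the first-order bound in Theorem~\ref{thm:firstOrderAPriori}(d) is uniformly $\lesim 1$ across all $p\in\Lambda^*$; and the uniform second-moment bound $\Tr[(a_q^*a_q)^2\Gamma_t]\lesim 1$. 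I would apply the $B\ge 0$ branch of Theorem~\ref{thm:higher-moments} with $B=\lambda a_p^*a_p$ and $X=1+\mathcal{N}$, but estimate the terms $|\Tr([B,[B,A]]X^{\ell-1}\Gamma_t)|$ directly as the sum $\sum_{k,q,r}\hat v(k) c_{k,q,r}^2\Tr[a^*_{r+k}a^*_{q-k}a_ra_q X^{\ell-1}\Gamma_t]$ from \eqref{eq:commutator-appl-B1-B>0-2Andi}, rather than through the crude operator bound $\pm[[B,A],B]\le bX^\alpha$. Using translation-invariance of $\Gamma_t$, pointwise first- and second-order bounds, and the summability $\hat v\in\ell^1(\Lambda^*)$, each four-point function should be controllable by products of sharp one-mode estimates with no net growth in $\eta$. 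The main technical difficulty, which I expect to be the crux, is propagating these sharp non-condensed bounds to the perturbed Gibbs states $\Gamma_t$ uniformly in $t\in[-1,1]$, and justifying the factorization of the four-point traces without excess powers of $\eta$.
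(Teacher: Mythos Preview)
Your proposal is correct and follows essentially the same route as the paper in all four parts (the paper takes $X=1+\beta\mathcal{N}$ in (a)--(b), but your choice $X=1+\mathcal{N}$ works equally well with the adjusted $b$). The two concerns you flag for Part~(d) both evaporate: the sharp non-condensed bounds of Theorems~\ref{thm:firstOrderAPriori}(d) and~\ref{thm:secondOrderEstimates}(a), as well as Part~(c) of the present theorem, are already formulated for a general one-particle operator $h$ satisfying \eqref{eq:generalizedOneParticleHamiltonian} and hence apply directly to every perturbed Gibbs state $\Gamma_t$; and the ``factorization'' is a single Cauchy--Schwarz step, namely $\Tr[a_q^*a_q\,\mathcal{N}^3\Gamma_t]\le\sqrt{\Tr[(a_q^*a_q)^2\Gamma_t]}\sqrt{\Tr[\mathcal{N}^6\Gamma_t]}\lesssim 1\cdot\eta^3$, after which the numerology $b\cdot(\beta\eta^{-1})\cdot\eta^3=\beta^2\eta\sim\eta^{-1/3}$ closes the estimate. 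One small clarification: the operator hypothesis $\pm[[B,A],B]\le bX^\alpha$ (with $b\sim\beta/\eta$, $\alpha=2$) is still required as an input to Theorem~\ref{thm:higher-moments}, since it enters the proof through the control of $[B^\ell,[B^\ell,A]]$; it is only in estimating the traces $\Tr([B,[B,A]]X^{\alpha+\ell-3}\Gamma_t)$ appearing in the \emph{conclusion} \eqref{eq:corr-thm-conclusion-B>0-Gamma} that you exploit the finer commutator structure rather than the crude bound.
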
 

\begin{proof} We will apply Theorem~\ref{thm:higher-moments} to prove Theorem~\ref{thm:higherOrderEstimates}. In all cases we choose $A=\beta (\mathcal{H}_{h,\eta} - \mu \mathcal{N})$, the Hamiltonian related to the Gibbs state $G_{h,\eta}(\beta,\mu)$ in \eqref{eq:GeneralGibbsState}. Since $\cN$ commutes with the Hamiltonian, the bound in \eqref{eq:particleNumberBoundsPerturbedState-k-higher} follows immediately. 

To prove \eqref{eq:aa-G-higher} and \eqref{eq:varianceNPlus-higher}, we apply Theorem~\ref{thm:higher-moments} with the choices $X=1+\beta \cN$, $B \in \{\lambda \beta a_p^* a_p,\lambda \beta (\cN_+-N_+^{\mathrm{G}})\}$ with a constant $\lambda \in \mathbb{R}$, whose absolute value is chosen sufficiently small. In both cases, we have $[B,X]=0$, $\pm  B\le X$, and 
\begin{align}\label{eq:commutator-appl-B1}
\quad \pm [B,[B,A]] \lesssim \beta^{3} \eta^{-1} \cN^2 \le b X^{\alpha} \quad \text{ with } \quad b \sim \beta \eta^{-1}, \ \alpha=2.
\end{align}
The above bound follows from \eqref{eq:commutator-appl-B1-B>0-2Andi}. The bound in \eqref{eq:particleNumberBoundsPerturbedState-k-higher} also holds for the perturbed Gibbs state $\Gamma_t=Z_t^{-1}e^{-A+tB}$ with $Z_t=\Tr [e^{-A+tB}]$, that is, we have
    \begin{equation}
   \sup_{t\in [-1,1]} \Tr [\cN^k \Gamma_t ]  \lesim_k  \eta^k. 
    \label{eq:particleNumberBoundsPerturbedState-k-higher-applyB1}
    \end{equation}
    The first moment bound $ \sup_{t\in [-1,1]}  |\Tr [B \Gamma_t]| \lesssim 1$ follows from Theorem \ref{thm:firstOrderAPriori}.    
Therefore,  \eqref{eq:corr-thm-conclusion-B>0} follows from an application of \eqref{eq:corr-thm-conclusion-Gamma} with $k=4$ and $\beta \sim \eta^{-2/3}$, namely 
\begin{align}
\Tr[ B^4 G_{h,\eta}(\beta,\mu)]= \Tr [ B^4 \Gamma_0 ] \lesssim 1 + \sup_{|t|\le 1} b^2 \Tr [ X^{6} \Gamma_t] \lesssim 1 + ( \beta \eta^{-1}) ^2 (\beta \eta)^6  = 1 + \beta^8 \eta^5 \lesssim 1.
\end{align}
It remains to prove part (d).

Since $\kappa < \frac{1}{4 \pi}[\upzeta(3/2)]^{2/3}$ we have $\mu_0 \sim - N^{2/3}$. We choose $B=\lambda a_p^*a_p$ with a constant $\lambda \in \mathbb{R}$ that satisfies $0< |\lambda| < \min\{1, - \beta \mu_0/2 \}$. From~\eqref{eq:unperturbedFirstOrderBounds} in Theorem~\ref{thm:firstOrderAPriori}, we have the first moment estimate $ \sup_{t\in [-1,1]}  |\Tr [B \Gamma_t]| \lesssim 1$, with the perturbed Gibbs state $\Gamma_t=Z_t^{-1}\exp(-A+tB)$, $Z_t=\Tr[\exp(-A+tB)]$.

This allows us to apply Theorem~\ref{thm:higher-moments} with $X=1+ \cN$. Let us we derive a bound for the commutator appearing in \eqref{eq:corr-thm-conclusion-B>0-Gamma}:
\begin{align}\label{eq:commutator-appl-B1-B>0-2}
\pm [B,[B,A]] &= \pm \beta \eta^{-1} \lambda^2 \sum_{k,q,r\in \Lambda^*} \hat v(k) \Big[a_p^* a_p,  \Big[ a_p^* a_p, a^*_{r+k} a^*_{q-k} a_r a_q \Big]\Big] \nn\\
&=  \pm \beta \eta^{-1} \lambda^2 \sum_{k,q,r\in \Lambda^*} \hat v(k) (\delta_{p,r+k} +\delta_{p,q-k}  -\delta_{p,r} -\delta_{p,s})^2 a^*_{r+k} a^*_{q-k} a_r a_q \nn\\
&\lesssim \beta \eta^{-1} \sum_{k \in \Lambda^*} \hat v(k) (a_p^* a_p + a^*_{p+k}a_{p+k} + a^*_{p-k}a_{p-k}) \cN. 
\end{align}
To obtain the first equality we used $[a_p^*a_p, a_r^*]= \delta_{p,r} a_r^*$ and $[a_p^*a_p, a_r] = - \delta_{p,r} a_r$. In the following use of \eqref{eq:commutator-appl-B1-B>0-2}, the sum over $k$ is compensated by the fact that $\hat v$ is summable. For the term in the sum with $\ell = k = 3$, $\alpha=2$, $b=\beta/\eta$ we have
\begin{align}
   \sup_{|t|\le 1} |\Tr [ [B,[B,A]]]X^\alpha \Gamma_t]| &\lesssim \beta^2 \eta^{-2}  \sup_{|t|\le 1} \sup_{q\in \Lambda^*} \Tr [a_q^* a_q \cN^3\Gamma_t] \nn \\
    &\leq \beta^2 \eta^{-2}   \sup_{|t|\le 1}  \sup_{q\in \Lambda^*} \sqrt{ \Tr [ (a_q^* a_q)^2\Gamma_t] }   \sqrt{ \Tr [ \cN^6 \Gamma_t] }   \nn\\
    &\lesssim \beta^2 \eta \lesssim 1. 
\label{eq:apap-third-moment}
\end{align}
In the last step we used \eqref{eq:aa-G} to obtain a bound for $\Tr[(a_p^* a_p)^2 \Gamma_t]$. We also used \eqref{eq:particleNumberBoundsPerturbedState-k-higher} to obtain a bound for the expectation of the sixths power of the number operator. To be precise, we stated  \eqref{eq:particleNumberBoundsPerturbedState-k-higher} only for the unperturbed Gibbs state but the same proofs apply if we replace it by the perturbed one. We use similar estimates than the one in \eqref{eq:apap-third-moment} to bound the terms with $\ell = 1,2$. Putting everything together, we find 
    \begin{align}
        \sup_{|t|\le 1-2\delta} \Tr [ B^3  G_{h,\eta}(\beta,\mu)  ]  &\lesim  1,
    \label{eq:apap-third-momentn}
    \end{align}
which proves the claim of Theorem~\ref{thm:higherOrderEstimates}.
\end{proof}

\section{The Gibbs state part V: 2-pdm and particle number variances}\label{sec:Gibbs-state-V}
In this section we prove pointwise bounds for the integral kernel of the 2-pdm of the Gibbs state $G_{\beta,N}$ in Fourier space (also called the four-point correlation function) as well as bounds for the variances of $\mathcal{N}_0$ and $\mathcal{N}_+$. Our analysis is based on the trace norm bound for $G_{\beta,N}$ in \eqref{eq:traceNormBound} and the moment bounds in Theorem~\ref{thm:higherOrderEstimates}.

\subsection{Pointwise bounds for the 2-pdm in Fourier space}
We start our discussion of pointwise bounds for the integral kernel of the 2-pdm of the Gibbs state with the computation of that of the state $\Gamma_{\beta,N}$ in \eqref{eq:referenceState}.
\subsubsection*{The kernel of the 2-pdm of $\Gamma_{\beta,N}$} 
To compute the four-point correlation function of $\Gamma_{\beta,N}$ we need to distinguish the cases $N_0 \geq N^{2/3}$ and $N_0 < N^{2/3}$ and we start with the former. Using $a_0 | z \rangle  = z | z \rangle$ with the coherent state $|z \rangle$ in \eqref{eq:coherentstate}, the Wick rule (the state $G(z)$ in \eqref{eq:referenceState-intro1} is quasi free), and Lemma~\ref{lem:1pdmAndPairingFunction}, we find
\begin{align} \label{eq:2pdm-tG-1}
\Tr[ a_0^* a_0^* a_0  a_0  \Gamma_{\beta,N}] &= \int_{\mathbb{C}} \langle z , a_0^* a_0^* a_0 a_0 z \rangle g^{\mathrm{BEC}}(z) \de z = \int_{\mathbb{C}} |z|^4 g^{\mathrm{BEC}}(z) \de z, \nn \\
\Tr[ a_0^* a_p^* a_0  a_p  \Gamma_{\beta,N}] &= \int_{\mathbb{C}} \langle z , a_0^* a_0 z \rangle  \Tr_+ [ a_p^* a_p G^{\mathrm{Bog}}(z)]   g^{\mathrm{BEC}}(z) \de z =  \gamma_p \int_{\mathbb{C}}  |z|^2 g^{\mathrm{BEC}}(z) \de z = \gamma_p \widetilde{N}_0,\nn\\
\Tr [ a_0^* a^*_0 a_pa_{-p}  \Gamma_{\beta,N}] &= \int_{\mathbb{C}} \Tr_+ [a_p a_{-p} G^{\mathrm{Bog}}(z)] \ \overline{z}^2 g^{\mathrm{BEC}}(z) \de z =  \alpha_p \int_{\mathbb{C}} |z|^2 g^{\mathrm{BEC}}(z)  \de z = \alpha_p \widetilde{N}_0, \nn\\
\Tr [ a_p^* a^*_q a_{r}a_{s}  \Gamma_{\beta,N} ] &= \int_{\mathbb{C}} \Tr_+ [ a_p^* a^*_q a_ra_{s} G^{\mathrm{Bog}}(z)) ] g^{\mathrm{BEC}}(z) \de z \nn\\
&=  \int_{\mathbb{C}} \Big( \Tr_+ [a_p^* a^*_q G^{\mathrm{Bog}}(z))] \Tr_+ [a_r a_{s} G^{\mathrm{Bog}}(z) ] + \Tr_+ [ a_p^* a_r G^{\mathrm{Bog}}(z) ] \Tr_+ [ a^*_q a_{s} G^{\mathrm{Bog}}(z)] \nn \\
&\hspace{1.5cm}+ \Tr_+ [ a_p^* a_{s} G^{\mathrm{Bog}}(z)] \Tr [ a^*_q a_r G^{\mathrm{Bog}}(z)] \Big) g^{\mathrm{BEC}}(z) \de z\nn\\
&=  \delta_{p,-q}\delta_{r,-s} \alpha_p \alpha_r  + \delta_{p,r}\delta_{q,s} \gamma_p \gamma_q +\delta_{p,s}\delta_{q,r} \gamma_p\gamma_q,
\end{align}
for all $p,q,r,s\in \Lambda_+^*$. All other expectations involving only one or three creation and annihilation operators with zero momentum equal zero. Since $\Gamma_{\beta,N}$ is translation-invariant, the above expectations are the only ones that do not vanish.

If $N_0<N^{2/3}$ we have $\widetilde \Gamma = G_{\beta,N}^{\mathrm{id}}$ with the Gibbs state $G_{\beta,N}^{\mathrm{id}}$ of the ideal gas in \eqref{eq:GibbsStateIdealGas}. An application of the Wick rule therefore shows
 \begin{align}\label{eq:2pdm-tG-2}
\Tr (a_0^* a^*_0 a_pa_{-p}  \Gamma_{\beta,N}) &=0, \quad \forall p \in \Lambda_+^*, \nn\\
\Tr (a_p^* a^*_q a_{r}a_{s}  \Gamma_{\beta,N}) &=   \Tr [a_p^* a_r G_{\beta,N}^{\mathrm{id}}] \Tr [a^*_q a_{s} G_{\beta,N}^{\mathrm{id}}] + \Tr [a_p^* a_{s} G_{\beta,N}^{\mathrm{id}}] \Tr[a^*_q a_r G_{\beta,N}^{\mathrm{id}}] \nn \\
&=  \frac{\delta_{p,r}\delta_{q,s} +\delta_{p,s}\delta_{q,r}}{(e^{\beta (p^2- \mu_0(\beta,N))}-1)(e^{\beta (q^2- \mu_0(\beta,N))}-1)} ,\quad \forall p,q,r,s\in  \Lambda^*
\end{align}
with $\mu_0(\beta,N)$ in \eqref{eq:idealgase1pdmchempot}. All  expectations involving only one or three creation or annihilation operators with zero momentum equal zero.

\subsubsection*{The kernel of the 2-pdm of $G_{\beta,N}$} 
In the next step we use the trace-norm bound in \eqref{eq:traceNormBound} and the moment bounds in Theorem~\ref{thm:higherOrderEstimates} to relate the correlation functions of the states $G_{\beta,N}$ and $\Gamma_{\beta,N}$. It is not difficult to check that the state $\Gamma_{\beta,N}$ also satisfies the bounds in Theorem~\ref{thm:higherOrderEstimates}. 

In the following we assume that $p,q,r,s\in \Lambda^*_+$. An applications of Lemma~\ref{lem:fundamentalLemmaForPWBounds} shows
\begin{align}\label{eq:2pdm-00pp-GtG}
\Tr [ a_0^* a_0 a_p^* a_p (G_{\beta,N} - \Gamma_{\beta,N} ) ] &\le 2 \sqrt{ \Tr [ (a_0^* a_0 a_p^* a_p)^2 (G_{\beta,N} + \Gamma_{\beta,N} ) ]} \sqrt{\Tr |G_{\beta,N} - \Gamma_{\beta,N} |}  \\
&\le \sqrt{2} \Big( \Tr [ \mathcal{N}^4  (G_{\beta,N} + \Gamma_{\beta,N}) ] \Big)^{1/4} \Big( \Tr [ (a_p^*a_p)^4 (G_{\beta,N} + \Gamma_{\beta,N}) ] \Big)^{1/4} \|G_{\beta,N} - \Gamma_{\beta,N} \|_1^{1/2}. \nn 
\end{align}
We use Theorems~\ref{thm:norm-approximation} and \ref{thm:higherOrderEstimates} and \eqref{eq:2pdm-tG-1} to check that this implies
\begin{equation}
    \Tr [a_0^* a_0 a_p^* a_p  G_{\beta,N} ]  = \Tr [a_0^* a_0 a_p^* a_p  \Gamma_{\beta,N}] + O(N^{5/3-1/96} ) = \gamma_p \widetilde{N}_0(\beta,N) + O(N^{5/3-1/96})
    \label{eq:2pdmAndi1}
\end{equation}
if $\kappa > \frac{1}{4 \pi}[\upzeta(3/2)]^{2/3}$. If $\kappa < \frac{1}{4 \pi}[\upzeta(3/2)]^{2/3}$ we use Theorem~\ref{thm:norm-approximation}, \eqref{eq:B-G-G'-2} in Lemma~\ref{lem:fundamentalLemmaForPWBounds} with $\theta = 3/2$, part~(d) of Theorem~\ref{thm:higherOrderEstimates}, and the Cauchy--Schwarz inequality to see that
\begin{align}
    |\Tr [a_0^* a_0 a_p^* a_p  (G_{\beta,N}-\Gamma_{\beta,N}) ] | &\lesssim \big( \Tr[ (a_0^* a_0)^3 (G_{\beta,N}+\Gamma_{\beta,N})] \ \Tr[ (a_p^* a_p)^3 (G_{\beta,N}+\Gamma_{\beta,N})] \ \Vert G_{\beta,N}-\Gamma_{\beta,N} \Vert_1 \big)^{1/3} \nn \\
    &\lesssim N^{-1/144},
\end{align}
and hence
\begin{align}
    \Tr [a_0^* a_0 a_p^* a_p  G_{\beta,N} ] = \Tr [a_0^* a_0 a_p^* a_p G_{\beta,N}^{\mathrm{id}} ] + O(N^{-1/144}) = \frac{N_0(\beta,N)}{\exp(\beta(p^2-\mu_0(\beta,N)))-1}  + O(N^{-1/144} ).
    \label{eq:2pdmAndi2}
\end{align}
To obtain the last equality, we also used \eqref{eq:2pdm-tG-2}. 

Similarly, we check that
\begin{align}
    |\Tr [ a_0^* a^*_0 a_p a_{-p}  (G_{\beta,N} -  \Gamma_{\beta,N})]| \leq& 2 \sqrt{\Tr[ a^*_p a^*_{-p} a_0 a_0 a_0^* a_0^* a_p a_{-p} (G_{\beta,N} + \Gamma_{\beta,N})]} \sqrt{\Vert G_{\beta,N} - \Gamma_{\beta,N} \Vert_1} \nn \\
    \lesssim& \Big( \Tr [ \mathcal{N}^4  (G_{\beta,N} + \Gamma_{\beta,N}) ] \Big)^{1/4} \Big( \Tr [ (a_p^*a_p)^4 (G_{\beta,N} + \Gamma_{\beta,N}) ] \Big)^{1/8} \nn \\ 
    & \times \Big( \Tr [ (a_{-p}^*a_{-p})^4 (G_{\beta,N} + \Gamma_{\beta,N}) ] \Big)^{1/8} \|G_{\beta,N} - \Gamma_{\beta,N} \|_1^{1/2} \nn \\
    & \lesssim N^{5/3 - 1/96}
    \label{eq:2pdmAndi3}
\end{align}
holds if $\kappa > \frac{1}{4 \pi}[\upzeta(3/2)]^{2/3}$ and that
\begin{equation}
    |\Tr [ a_0^* a^*_0 a_p a_{-p}  (G_{\beta,N} -  \Gamma_{\beta,N})]| \lesssim N^{-1/144}    
    \label{eq:2pdmAndi4}
\end{equation}
$\kappa < \frac{1}{4 \pi}[\upzeta(3/2)]^{2/3}$. With \eqref{eq:2pdm-tG-1} and \eqref{eq:2pdm-tG-2} we conclude that
\begin{equation}
    \Tr [ a_0^* a^*_0 a_p a_{-p} G_{\beta,N} ] = \widetilde{N}_0(\beta,N) \alpha_p + O(N^{5/3-1/96}) 
    \label{eq:2pdmAndi5}
\end{equation}
if $\kappa > \frac{1}{4 \pi}[\upzeta(3/2)]^{2/3}$ and 
\begin{equation}
    \Tr [ a_0^* a^*_0 a_p a_{-p} G_{\beta,N} ] = O(N^{-1/144}) 
    \label{eq:2pdmAndi6}
\end{equation}
if $\kappa < \frac{1}{4 \pi}[\upzeta(3/2)]^{2/3}$.

The bound in the fourth line of \eqref{eq:2-pdmNonCondensedPhase} has been proved in \eqref{eq:lowerBoundFE5}. The bounds for all other matrix elements follow from similar arguments. It remains to compute the variances of $\mathcal{N}_0$ and $\mathcal{N}_+$ in the state $G_{\beta,N}$.

\subsection{\texorpdfstring{Variance of $\mathcal{N}_+$}{varianceN+}}

We first consider the variance of $\mathcal{N}_+$ and we start with its computation in the state $\Gamma_{\beta,N}$. If $N_0 \geq N^{2/3}$ we have
\begin{equation}
    \Tr[ \mathcal{N}_+ \Gamma_{\beta,N} ] = \int_{\mathbb{C}} \Tr_+[ \mathcal{N}_+ G^{\mathrm{Bog}}(z) ] g^{\mathrm{BEC}}(z) \de z = \sum_{p \in \Lambda_+^*} \gamma_p,
    \label{eq:2pdmAndi9}
\end{equation}
and the variance reads
\begin{equation}
    \Tr[ \mathcal{N}^2_+ \Gamma_{\beta,N} ] - \left( \Tr[ \mathcal{N}_+ \Gamma_{\beta,N} ] \right)^2 = \sum_{p \in \Lambda_+^*} \left( \alpha_p^2 + \gamma_p^2 + \gamma_p \right).
    \label{eq:2pdmAndi10}
\end{equation}
If $N_0 < N^{2/3}$ all expectations are given by that of the Gibbs state of the ideal gas $G^{\mathrm{id}}_{\beta,N}$ and we find
\begin{equation}
    \Tr[ \mathcal{N}_+ \Gamma_{\beta,N} ] = \Tr [ \cN_+G_{\beta,N}^{\mathrm{id}} ] = \sum_{p \in \Lambda^*_+} \frac{1}{\exp(\beta(p^2 - \mu_0))-1} 
    \label{eq:2pdmAndi11}
\end{equation}
as well as
\begin{equation}
    \Tr[ \mathcal{N}^2_+ \Gamma_{\beta,N} ] - \left( \Tr[ \mathcal{N}_+ \Gamma_{\beta,N} ] \right)^2 = \sum_{p \in \Lambda_+^*} \frac{1}{\exp(\beta(p^2 - \mu_0))-1}  \left( 1+ \frac{1}{\exp(\beta(p^2 - \mu_0))-1} \right).
    \label{eq:2pdmAndi12}
\end{equation}

We recall the notation $N_+^{\mathrm{G}}(\beta,N) = \Tr[\mathcal{N}_+ G_{\beta,N}]$. An application of Lemma~\ref{lem:fundamentalLemmaForPWBounds} shows
\begin{equation}
    |\Tr[ (\mathcal{N}_+ - N_+^{\mathrm{G}})^2 (G_{\beta,N} - \Gamma_{\beta,N} ) ]| \leq 2 \sqrt{\Tr[ (\mathcal{N}_+ - N_+^{\mathrm{G}})^4 (G_{\beta,N} + \Gamma_{\beta,N} ) ]} \sqrt{\Vert G_{\beta,N} - \Gamma_{\beta,N} \Vert_1}.
    \label{eq:2pdmAndi13}
\end{equation}
We use \eqref{eq:varianceNPlus-higher} to bound the expectation of the operator $(\mathcal{N}_+ - N_+^{\mathrm{G}})^4$ in the state $G_{\beta,N}$ by a constant times $N^{8/3}$. We also need the same bound with $G_{\beta,N}$ replaced by $\Gamma_{\beta,N}$. From \eqref{eq:bound1pdm12} (without the logarithmic factor) we know that
\begin{equation}\label{eq:N+-tN+-new}
    | \Tr[ \mathcal{N}_+ G_{\beta,N} - \widetilde{N}_+(\beta,N) | \lesssim N^{2/3-1/48}
\end{equation}
holds with $\widetilde{N}_+(\beta,N) = \Tr [\mathcal{N}_+ \Gamma_{\beta,N}]$. Using this, we write ($A = O(N^{2/3-1/48} )$, $\Delta \mathcal{N}_+ = \mathcal{N}_+ - \widetilde{N}_+$)
\begin{align}
    \Tr[(\mathcal{N}_+ - N_+^{\mathrm{G}})^4 \Gamma_{\beta,N}] &= \Tr[(\Delta \mathcal{N}_+ + A)^4 \Gamma_{\beta,N}] = \sum_{k=0}^4 \binom{4}{k} A^{4-k} \Tr[ (\Delta \mathcal{N}_+)^k  \Gamma_{\beta,N}]  \nn \\
    &= \Tr [(\Delta \mathcal{N}_+)^4  \Gamma_{\beta,N}] + \sum_{k=1}^3 \binom{4}{k} A^{4-k} \Tr[ (\Delta \mathcal{N}_+)^k \Gamma_{\beta,N}].
    \label{eq:2pdmAndi14}
\end{align}
With the Wick rule we check that $|\Tr[(\Delta \mathcal{N}_+)^{k} \Gamma_{\beta,N}]| \lesssim N^{2k/3}$ holds for $k=1,2,3,4$. We use this bound in \eqref{eq:2pdmAndi14} to show that 
\begin{equation}
    \Tr[(\mathcal{N}_+ - N_+^{\mathrm{G}})^4 \Gamma_{\beta,N}] \lesssim N^{8/3}.
    \label{eq:2pdmAndi15}
\end{equation}
In particular, the right-hand side of \eqref{eq:2pdmAndi13} is bounded by a constant times $N^{4/3-1/96}$ and we have
\begin{equation}
    \Tr[ (\mathcal{N}_+ - N_+^{\mathrm{G}})^2 G_{\beta,N} ] = \Tr[ (\mathcal{N}_+ - N_+^{\mathrm{G}})^2 \Gamma_{\beta,N} ] + O(N^{4/3-1/96}).
    \label{eq:2pdmAndi16}
\end{equation}

Combining \eqref{eq:2pdmAndi16} and the bound $|N_+^{\mathrm{G}}-\widetilde{N}_+| \lesssim N^{2/3-1/48}$ from \eqref{eq:N+-tN+-new} we obtain 
\begin{equation}
    \Tr[ (\mathcal{N}_+ - N_+^{\mathrm{G}})^2 G_{\beta,N} ] = \Tr[ (\mathcal{N}_+ - \widetilde{N}_+)^2 \Gamma_{\beta,N} ] + O(N^{4/3-1/96}).
    \label{eq:2pdmAndi17}
\end{equation}
The term involving $\Gamma_{\beta,N}$ on the right-hand side equals \eqref{eq:2pdmAndi10} if $\kappa > \frac{1}{4 \pi}[\upzeta(3/2)]^{2/3}$ and \eqref{eq:2pdmAndi12} if $\kappa < \frac{1}{4 \pi}[\upzeta(3/2)]^{2/3}$. It remains to compute the variance of $\mathcal{N}_0$ in the state $G_{\beta,N}$.

\subsection{\texorpdfstring{Variance of $\mathcal{N}_0$}{varianceN0}}

We start with the case $\kappa > \frac{1}{4 \pi}[\upzeta(3/2)]^{2/3}$. Here we know from Theorem~\ref{thm:particleNumberDistributionBEC2} that the variance of $\mathcal{N}_0$ in the state $G_{\beta,N}$ is of order $N^{5/3}$ provided. Since $ \Tr [ (\cN_+ - N_+)^2 G_{\beta,N} ] \sim N^{4/3}$ in this parameter regime we know that the particle number variance in the entire system is of order $N^{5/3}$ and comes from the condensate. To compute the variance of the number of particles in the condensate, we can therefore simply compute the variances of $\mathcal{N}$ and estimate the difference. 

In the following we denote  by $\mathbf{Var}_{\beta,N}(A)$ the variance of the operator $A$ in the state $G_{\beta,N}$. By $\mathbf{Cov}_{\beta,N}(A,B)$ we denote for two operators $A,B$ with $[A,B] = 0$ the related covariance. We have 
\begin{equation}
    \mathbf{Var}_{\beta,N}(\mathcal{N}) = \mathbf{Var}_{\beta,N}(\mathcal{N}_0) + \mathbf{Var}_{\beta,N}(\mathcal{N}_+) + 2 \mathbf{Cov}_{\beta,N}(\mathcal{N}_0,\mathcal{N}_+).
    \label{eq:2pdmAndi18}
\end{equation}
Using this and $|\mathbf{Cov}_{\beta,N}(\mathcal{N}_0,\mathcal{N}_+)| \leq \sqrt{ \mathbf{Var}_{\beta,N}(\mathcal{N}_0) \mathbf{Var}_{\beta,N}(\mathcal{N}_+)}$, we conclude that
\begin{align}
    \frac{1}{1+\epsilon} \left[ \mathbf{Var}_{\beta,N}(\mathcal{N}) - \left( 1 + \epsilon^{-1} \right) \mathbf{Var}_{\beta,N}(\mathcal{N}_+) \right] &\leq \mathbf{Var}_{\beta,N}(\mathcal{N}_0) \nn \\
    &\leq \frac{1}{1-\epsilon} \left[ \mathbf{Var}_{\beta,N}(\mathcal{N}) + \left( 1 +\epsilon^{-1} \right) \mathbf{Var}_{\beta,N}(\mathcal{N}_+) \right]
    \label{eq:2pdmAndi19}
\end{align}
holds for any $0 < \epsilon < 1$. Eq.~\eqref{eq:2pdmAndi17} can be used to estimate the variance of $\mathcal{N}_+$ in the above equation. It therefore remains to compute the variance of $\mathcal{N}$ in the state $G_{\beta,N}$.

To that end, we apply a Griffith argument and perturb $\hat{v}(0)$. We choose $\delta \in \mathbb{R}$ such that $|\delta| < \hat{v}(0)$ holds. An application of the lower bound for the free energy in \eqref{eq:finalLowerBound14} with $\hat{v}(p)$ replaced by $\hat{v}_{\delta}(p) = \hat{v}(p) + \delta \ \delta_{p,0}$ and $F^{\mathrm{BEC}}$ replaced by $F_{\mathrm{c}}^{\mathrm{BEC}}$ shows
\begin{equation}
    \mathcal{F}(G_{\beta,N}) + \frac{\delta}{2N} \sum_{u,v \in \Lambda^*} \Tr[ a_u^* a_v^* a_u a_v G_{\beta,N} ] \geq F^{\mathrm{Bog}}(\beta,N) + \frac{(\hat{v}(0) + \delta) N}{2} + F_{\mathrm{c},\delta}^{\mathrm{BEC}}(\beta,N_0(\beta,N)) - C N^{1/2}.
    \label{eq:2pdmAndi20}
\end{equation}
Here $F_{\mathrm{c},\delta}^{\mathrm{BEC}}(\beta,N_0(\beta,N))$ denotes the free energy in \eqref{eq:GibbsDistributionDiscrete} with $\hat{v}_{\delta}(p)$. To see that $F_{\mathrm{c}}^{\mathrm{BEC}}$ can be kept in the lower bound, we refer to the arguments below \eqref{eq:finalLowerBound12}. An application of part~(a) of Lemma~\ref{prop:FreeEnergyBEC} shows
\begin{equation}
    F_{\mathrm{c},\delta}^{\mathrm{BEC}}(\beta,N_0(\beta,N)) = \frac{1}{2 \beta} \ln \left( \frac{ (\hat{v}(0) +\delta ) \beta  }{2 \pi N} \right) + O\left( \exp\left(- c N^{1/6} \right) \right).
    \label{eq:2pdmAndi21}
\end{equation}
In combination, \eqref{eq:2pdmAndi20}, \eqref{eq:2pdmAndi21}, and the upper bound in \eqref{eq:UpperBoundAndi10} with $F^{\mathrm{BEC}}$ replaced by $F_{\mathrm{c}}^{\mathrm{BEC}}$ (that is, we do not use \eqref{eq:UpperBoundAndi9} in the upper bound) imply
\begin{equation}
     \frac{\delta}{2N} \Tr[ \mathcal{N}^2 G_{\beta,N} ] \geq \frac{\delta N}{2} + \frac{1}{2 \beta} \ln\left( \frac{\hat{v}(0) + \delta}{\hat{v}(0)} \right) - C N^{1/2}.
     \label{eq:2pdmAndi22}
\end{equation}
Accordingly, we have 
\begin{align}
    \Tr[\mathcal{N}^2 G_{\beta,N}] - N^2 &\geq \frac{N}{\beta \delta} \ln\left( \frac{\hat{v}(0) + \delta}{\hat{v}(0)} \right) - \frac{C N^{3/2}}{\delta} \quad \text{ if } \delta > 0 \quad \text{ and } \nn \\ 
    \Tr[\mathcal{N}^2 G_{\beta,N}] - N^2 &\leq \frac{N}{\beta \delta} \ln\left( \frac{\hat{v}(0) + \delta}{\hat{v}(0)} \right) + \frac{C N^{3/2}}{\delta} \quad \text{ if } \delta < 0. 
    \label{eq:2pdmAndi23}
\end{align}
Choosing $\delta = \pm N^{-1/12}$, we obtain
\begin{equation}
    \Tr[\mathcal{N}^2 G_{\beta,N}] - N^2 = \frac{N}{\beta \hat{v}(0)} + O(N^{5/3-1/4}).
    \label{eq:2pdmAndi24}
\end{equation}

To complete the argument we combine \eqref{eq:2pdmAndi17}, \eqref{eq:2pdmAndi19} with $\epsilon = N^{-1/6}$, and \eqref{eq:2pdmAndi24}, which gives
\begin{equation}
    \Tr[\mathcal{N}_0^2 G_{\beta,N}] - (\Tr[\mathcal{N}_0 G_{\beta,N}])^2 = \frac{N}{\beta \hat{v}(0)} + O(N^{5/3-1/6}).
    \label{eq:2pdmAndi25}
\end{equation}
In the last step we use \eqref{eq:1pdmBEC} to replace $\Tr[\mathcal{N}_0 G_{\beta,N}]$ by $\widetilde{N}_0(\beta,N)$. The final result reads
\begin{equation}
    \Tr[\mathcal{N}_0^2 G_{\beta,N}] = \widetilde{N}_0(\beta,N) + \frac{N}{\beta \hat{v}(0)} + O(N^{5/3-1/12}).
    \label{eq:2pdmAndi26}
\end{equation}
It remains to consider the case $\kappa < \frac{1}{4 \pi}[\upzeta(3/2)]^{2/3}$.  

Here we have $\Gamma_{\beta,N} = G_{\beta,N}^{\mathrm{id}}$. Applications of \eqref{eq:traceNormBound} (without the logarithmic factor), Lemma~\ref{lem:fundamentalLemmaForPWBounds} with $\theta = 3/2$, and part~(d) of Theorem~\ref{thm:higherOrderEstimates} (this bound also holds for $G_{\beta,N}^{\mathrm{id}}$) show 
\begin{equation}
    | \Tr[ (a^*_0 a_0)^2 (G_{\beta,N} - G_{\beta,N}^{\mathrm{id}}) ] \lesssim \big( \Tr[(a_0^* a_0)^3 (G_{\beta,N} + G_{\beta,N}^{\mathrm{id}})] \big)^{2/3} \ \Vert G_{\beta,N} - G_{\beta,N}^{\mathrm{id}} \Vert_1^{1/3} \lesssim N^{-1/144}.
    \label{eq:2pdmAndi27}
\end{equation}
But this implies
\begin{equation}
    \Tr[(a_0^* a_0)^2 G_{\beta,N}] = \Tr[(a_0^* a_0)^2 G^{\mathrm{id}}_{\beta,N}] + O(N^{-1/144}) = 2 N_0^2(\beta,N) + N_0(\beta,N) + O(N^{-1/144})
    \label{eq:2pdmAndi28}
\end{equation}
and ends our discussion of the variance of $\mathcal{N}_0$ in the state $G_{\beta,N}$. Theorem~\ref{thm:main2} is proved.

\vspace{0.5cm}

\appendix

\begin{center}
\huge \textsc{--- Appendix ---}
\end{center}

\section{Properties of the condensate functionals} \label{app:effcondensate}
In this section we collect properties of the discrete and the continuous versions of our effective condensate functional. The first statement provides us with bounds for the chemical potential related to the continuous condensate functional. The proofs of parts (a) and (b) can be found in \cite[Lemma~C.1]{BocDeuSto-24}, the proof of part (c) is a straightforward adaption of the related proof in \cite[Lemma~B.1.3]{CapDeu-23} (one needs to send the particle number cut-off to infinity).

\begin{lemma}
	\label{lem:ChemPotBECCont}
	We consider the limit $N \to \infty$, $\beta N^{2/3} \to \kappa \in (0,\infty)$. Let $g$ be the Gibbs distribution in \eqref{eq:GibbsDistributionDiscrete} and assume that $\int_{\mathbb{C}} |z|^2 g(z) \de z = M(N)$ with a sequence $M(N)$ of positive numbers. The chemical potential $\mu$ related to $g$ satisfies the following statements for a given $\epsilon > 0$:
	\begin{enumerate}[label=(\alph*)]
		\item If $M \gtrsim N^{5/6 + \epsilon}$ then there exists a constant $c>0$ such that
		\begin{equation}
			\left| \mu - \frac{\hat{v}(0) M}{N} \right| \lesssim \exp\left( - c N^{\epsilon} \right).
			\label{eq:LemChemPotBEC1}
		\end{equation}
		\item If $M \lesssim N^{5/6 - \epsilon}$ then we have
		\begin{equation}
			\left| \mu + \frac{1}{\beta M} \right| \lesssim \frac{N^{-2 \epsilon}}{\beta M}.
			\label{eq:LemChemPotBEC2}
		\end{equation}
		\item For any $M = M(N)$, we have
		\begin{equation}
			| \mu | \lesssim \left( \frac{1}{\beta M} + \frac{1}{\sqrt{\beta N}} + \frac{M}{N} \right).
			\label{eq:LemChemPotBEC3}
		\end{equation} 
	\end{enumerate}
\end{lemma}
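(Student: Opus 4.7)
\textbf{Proof plan for Lemma~\ref{lem:ChemPotBECCont}.} The starting point is to rewrite the defining equation $\int_{\mathbb{C}}|z|^2 g(z)\, \mathrm{d}z=M$ in real terms. Passing to polar coordinates $z=\sqrt{x}\,e^{\mathrm{i}\theta}$ kills the angular variable and leaves
\begin{equation*}
M=\frac{\int_0^\infty x\,\exp\!\left(-\beta\bigl(\tfrac{\hat v(0)}{2N}x^2-\mu x\bigr)\right)\mathrm{d}x}{\int_0^\infty \exp\!\left(-\beta\bigl(\tfrac{\hat v(0)}{2N}x^2-\mu x\bigr)\right)\mathrm{d}x}.
\end{equation*}
Completing the square in the exponent with $x_*=\mu N/\hat v(0)$ and substituting $y=\sqrt{\beta\hat v(0)/N}\,(x-x_*)$, one finds the compact identity
\begin{equation}\label{eq:plan-fundamental}
M-\frac{\mu N}{\hat v(0)}=\sqrt{\frac{N}{\beta\hat v(0)}}\,\frac{e^{-\sigma^2/2}}{\int_{-\sigma}^\infty e^{-y^2/2}\,\mathrm{d}y},\qquad \sigma:=\mu\sqrt{\frac{\beta N}{\hat v(0)}}.
\end{equation}
Since the right-hand side is strictly positive and strictly decreasing in $\sigma$, equation~\eqref{eq:plan-fundamental} determines $\mu$ uniquely for each $M>0$, and the map $M\mapsto\sigma$ is strictly increasing. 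All three assertions will be read off from~\eqref{eq:plan-fundamental}.

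For part~(a), I would first note that in the scaling $\beta N^{2/3}\to\kappa$, the assumption $M\gtrsim N^{5/6+\epsilon}$ is equivalent (via \eqref{eq:plan-fundamental} and monotonicity) to $\sigma\gtrsim N^{\epsilon}$. Indeed, plugging $\sigma=0$ into~\eqref{eq:plan-fundamental} yields $M-x_*\sim\sqrt{N/\beta}\sim N^{5/6}$, and a second test with $\sigma\sim N^\epsilon$ shows that the right-hand side contracts super-polynomially. With $\sigma\gtrsim N^\epsilon$ in hand, the denominator satisfies $\int_{-\sigma}^\infty e^{-y^2/2}\,\mathrm{d}y\geq \int_0^\infty e^{-y^2/2}\,\mathrm{d}y=\sqrt{\pi/2}$, so the right-hand side is bounded by $\sqrt{N/(\beta\hat v(0))}\,e^{-\sigma^2/2}\lesssim \sqrt{N^{5/3}}\,e^{-cN^{2\epsilon}}$, and the claimed bound~\eqref{eq:LemChemPotBEC1} follows after absorbing the polynomial prefactor into the exponential.

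For part~(b), the symmetric argument shows that $M\lesssim N^{5/6-\epsilon}$ forces $\sigma\leq -N^\epsilon$. In this regime the Mills-ratio expansion
\begin{equation*}
\int_{|\sigma|}^\infty e^{-y^2/2}\,\mathrm{d}y=\frac{e^{-\sigma^2/2}}{|\sigma|}\Bigl(1-\sigma^{-2}+O(\sigma^{-4})\Bigr)
\end{equation*}
applies. Inserting it into~\eqref{eq:plan-fundamental} and using $x_*=\mu N/\hat v(0)=-|\mu|N/\hat v(0)$ to cancel the leading terms on both sides, one arrives at
\begin{equation*}
M=\frac{1}{\beta|\mu|}+O\!\left(\frac{1}{\beta|\mu|\,\sigma^{2}}\right)=\frac{1}{\beta|\mu|}\Bigl(1+O(N^{-2\epsilon})\Bigr),
\end{equation*}
which rearranges to~\eqref{eq:LemChemPotBEC2}. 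The main technical check here is that the subleading Mills terms can be controlled uniformly once $|\sigma|\gtrsim N^\epsilon$, which is a standard integration-by-parts estimate.

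Part~(c) is the only step where no a priori size information on $M$ is assumed, and it is therefore the main obstacle: the sign and magnitude of $\sigma$ can both vary with $N$, so neither asymptotic regime applies globally. My plan is to prove the three terms in~\eqref{eq:LemChemPotBEC3} separately by a case distinction on the sign and magnitude of $\sigma$. When $\sigma\geq 0$, the denominator in~\eqref{eq:plan-fundamental} is bounded below by $\sqrt{\pi/2}$ and above by $\sqrt{2\pi}$, so $M-\mu N/\hat v(0)\sim \sqrt{N/(\beta\hat v(0))}$, yielding $\mu\lesssim M/N+1/\sqrt{\beta N}$. When $\sigma\leq 0$ and $|\sigma|\leq 1$, the same two-sided bound applies up to constants. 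When $\sigma\leq -1$, the Mills expansion together with the identity~\eqref{eq:plan-fundamental} gives (via the rearrangement above) $|\mu|\leq C/(\beta M)$ directly. Combining the three cases, and observing that each of the three terms on the right of~\eqref{eq:LemChemPotBEC3} dominates in exactly one of the regimes, produces the uniform bound. The delicate point in this last step is making the transitions between the regimes quantitatively sharp, which one handles by choosing crossover thresholds matched to the three scales $1/(\beta M)$, $1/\sqrt{\beta N}$, $M/N$ appearing in the bound.
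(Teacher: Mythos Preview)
Your approach is correct and is essentially the one used in the references the paper cites for this lemma (and which the paper itself employs in the proofs of Lemmas~\ref{lem:perturbationTheoryChemicalPotential} and~\ref{lem:CompChemPotentials}; see in particular \eqref{eq:FreeEnergyBECApriori23}, \eqref{eq:perturbationTheoryChemicalPotential3_0}--\eqref{eq:perturbationTheoryChemicalPotential4}, and the function $\Theta$ in \eqref{eq:DifferenceChemPots15}). The fundamental identity you derive is exactly \eqref{eq:FreeEnergyBECApriori23} up to the normalization $e^{-y^2/2}$ versus $e^{-x^2}$, and the three regimes are handled by the same Mills-ratio asymptotics.

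One small correction: your uniqueness/monotonicity argument is not quite right as stated. Both sides of \eqref{eq:plan-fundamental} depend on $\sigma$, and the left-hand side $M-\sigma\sqrt{N/(\beta\hat v(0))}$ is also strictly decreasing in $\sigma$, so decrease of the right-hand side alone does not give uniqueness. What you actually need is that $F(\sigma):=\sigma+e^{-\sigma^2/2}/\int_{-\sigma}^\infty e^{-y^2/2}\,\mathrm{d}y$ is strictly increasing; this follows either from $F'=1-\sigma G-G^2>0$ (a standard inequality for the inverse Mills ratio $G$) or, more conceptually, from the fact that $\partial M/\partial\mu=\beta\,\mathrm{Var}(g)>0$. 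Once this is in place, your arguments for (a)--(c) go through as written.
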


In the next lemma we investigate the free energy of our continuous effective condensate theory. 

\begin{lemma}
	\label{prop:FreeEnergyBEC}
	We consider the limit $N \to \infty$, $\beta N^{2/3} \to \kappa \in (0,\infty)$. The following statements hold for given $\epsilon > 0$:
	\begin{enumerate}[label=(\alph*)]
		\item Assume that $M \gtrsim N^{5/6 + \epsilon}$. There exists a constant $c>0$ such that 
		\begin{equation}
			F_{\mathrm{c}}^{\mathrm{BEC}}(\beta,M) = \frac{1}{2 \beta} \ln \left( \frac{ \hat{v}(0) \beta  }{2 \pi N} \right) + O\left( \exp\left(- c N^{\epsilon} \right) \right).
			\label{eq:FreeEnergyBECInteractingLimit_app}
		\end{equation}
		\item Assume that $M \lesssim N^{5/6 - \epsilon}$. Then 
		\begin{equation}
			F_{\mathrm{c}}^{\mathrm{BEC}}(\beta,M) = - \frac{1}{\beta} \ln(M) - \frac{1}{\beta} + O\left( N^{2/3 - 2 \epsilon} \right)
			\label{eq:FreeEnergyBECNonInteractingLimit_app}
		\end{equation}
		holds. In particular, $F_{\mathrm{c}}^{\mathrm{BEC}}(\beta,M)$ is independent of $\hat{v}(0)$ at the given level of accuracy. 
        \item Assume that $M \gtrsim N^{5/6}$. Then 
        \begin{equation}
            \left| F_{\mathrm{c}}^{\mathrm{BEC}}(\beta,M) + \frac{5}{6 \beta} \ln(N) \right| \lesssim N^{2/3}.
            \label{eq:FreeEnergyBECApriori1}
        \end{equation}
        \item Assume that $M \lesssim N^{5/6}$. Then
        \begin{equation}
            \left| F_{\mathrm{c}}^{\mathrm{BEC}}(\beta,M) +  \frac{1}{\beta} \ln(M) \right| \lesssim N^{2/3}.
            \label{eq:FreeEnergyBECApriori2}
        \end{equation}
	\end{enumerate} 
\end{lemma}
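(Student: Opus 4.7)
The plan is to compute the minimizing Gibbs partition function
$$Z(\mu) = \int_0^\infty \exp\!\left(-\beta\!\left(\tfrac{\hat v(0)}{2N}x^2 - \mu x\right)\right)\de x$$
explicitly in terms of a single auxiliary variable $\sigma := \mu\sqrt{\beta N/(2\hat v(0))}$ and the half-line Gaussian integral $h(\sigma) := \int_{-\sigma}^\infty e^{-y^2}\de y$. Switching the defining minimization to polar coordinates with $x = |z|^2$ reduces everything to a one-dimensional Gibbs measure on $[0,\infty)$. Completing the square and substituting $y = \sqrt{\beta\hat v(0)/(2N)}(x - \mu N/\hat v(0))$ yields $Z(\mu) = \exp(\beta\mu^2 N/(2\hat v(0)))\sqrt{2N/(\beta\hat v(0))}\,h(\sigma)$, while the constraint $\int_0^\infty x\, g(x)\de x = M$ combined with $\int_{-\sigma}^\infty y\,e^{-y^2}\de y = e^{-\sigma^2}/2$ produces the key identity
$$M - \frac{\mu N}{\hat v(0)} = \sqrt{\frac{N}{2\beta\hat v(0)}}\,\frac{e^{-\sigma^2}}{h(\sigma)}.$$
Squaring this identity and inserting it into $F_{\mathrm{c}}^{\mathrm{BEC}} = -\beta^{-1}\ln Z + \mu M - \hat v(0)M^2/(2N)$ produces the exact closed form
$$F_{\mathrm{c}}^{\mathrm{BEC}}(\beta,M) = \frac{1}{2\beta}\ln\!\left(\frac{\beta\hat v(0)}{2\pi N}\right) - \frac{1}{\beta}\ln\!\left(\frac{h(\sigma)}{\sqrt\pi}\right) - \frac{e^{-2\sigma^2}}{4\beta\, h(\sigma)^2},$$
together with the strictly monotone relation $M/M_* = F(\sigma) := 2\sigma + e^{-\sigma^2}/h(\sigma)$, where $M_* := \sqrt{N/(2\beta\hat v(0))} \sim N^{5/6}$. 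All four parts of the lemma will be read off from these two formulas.

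For part (a), Lemma~\ref{lem:ChemPotBECCont}(a) gives $|\mu - \hat v(0)M/N| \lesssim \exp(-cN^\epsilon)$ and therefore $\sigma \gtrsim N^{\epsilon/2}$; both $h(\sigma)/\sqrt\pi - 1$ and $e^{-\sigma^2}/h(\sigma)$ are then exponentially small, yielding the claim. For part (b), Lemma~\ref{lem:ChemPotBECCont}(b) gives $\mu = -1/(\beta M)(1+O(N^{-2\epsilon}))$ and hence $\sigma \le -cN^\epsilon$. Inserting the standard asymptotic $h(\sigma) = \frac{e^{-\sigma^2}}{2|\sigma|}\bigl(1 - \frac{1}{2\sigma^2} + O(\sigma^{-4})\bigr)$ into the closed form, the two large contributions $\sigma^2/\beta$ (one from $-\beta^{-1}\ln h$, one from $-e^{-2\sigma^2}/(4\beta h^2)$) cancel exactly, leaving $\beta^{-1}[\ln(2\sqrt\pi|\sigma|) - 1] + O(\sigma^{-2}/\beta)$. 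Using $|\sigma| = M_*/M\bigl(1+O((M/M_*)^2)\bigr)$ from the monotone relation and the elementary algebraic identity $\frac{1}{2\beta}\ln(\beta\hat v(0)/(2\pi N)) + \beta^{-1}\ln(2\sqrt\pi\, M_*) \equiv 0$ collapses the expression to $-\ln M/\beta - 1/\beta + O(M^2/N)$, and $M^2/N \lesssim N^{2/3-2\epsilon}$.

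For (c) and (d), only the uniform bound of Lemma~\ref{lem:ChemPotBECCont}(c) on $\mu$ is available, so one exploits the monotonicity of $F$ directly. If $M \gtrsim N^{5/6}$ then $M/M_* \gtrsim 1$ forces $\sigma \ge -C$ for some universal $C>0$, whence $h(\sigma) \in [c_1, \sqrt\pi]$ and $e^{-2\sigma^2}/h(\sigma)^2$ is bounded; both correction terms in the closed form are then $O(1/\beta) = O(N^{2/3})$, and the leading term expands as $\frac{1}{2\beta}\ln(\beta \hat v(0)/(2\pi N)) = -\frac{5}{6\beta}\ln N + O(1/\beta)$, proving (c). For (d), either $\sigma \in [-1, F^{-1}(c_0)]$, in which case $M \sim N^{5/6}$ and $|\ln M - \frac56\ln N| = O(1)$ so (c) already yields (d); or $\sigma < -1$, in which case the expansion from part~(b) applies, but now only the weaker bound $M^2/N \le N^{2/3}$ is used to absorb the remainder into the $O(N^{2/3})$ error.

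The hard part will be the uniform cancellation structure that makes (b), (c), (d) work smoothly across the transition $\sigma \sim 0$: the individually large contributions $\sigma^2/\beta$ from $-\beta^{-1}\ln h(\sigma)$ and from the quadratic penalty $\frac{\hat v(0)}{2N}(M-\mu N/\hat v(0))^2$ are not small on their own in the non-condensed regime, and it is only the exact identity $\frac{\hat v(0)}{2N}(M-\mu N/\hat v(0))^2 = e^{-2\sigma^2}/(4\beta h(\sigma)^2)$—derived from the chemical-potential constraint rather than from a perturbative expansion—that organizes their cancellation. Once this identity is isolated, the four regimes become routine asymptotic bookkeeping on the two explicit univariate functions $h(\sigma)$ and $F(\sigma)$.
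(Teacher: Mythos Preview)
Your proposal is correct. The exact closed form you derive for $F_{\mathrm{c}}^{\mathrm{BEC}}(\beta,M)$ in terms of $\sigma$ and $h(\sigma)$, together with the constraint identity $M/M_*=F(\sigma)$, is precisely the backbone of the paper's argument for part~(c) (see equations \eqref{eq:FreeEnergyBECApriori21}--\eqref{eq:FreeEnergyBECApriori24}); the paper defers parts (a) and (b) to \cite[Lemma~C.2]{BocDeuSto-24}, where essentially the same asymptotic analysis of $h(\sigma)$ is carried out. One minor slip: in part~(a) you write $\sigma\gtrsim N^{\epsilon/2}$, but in fact $\sigma\sim M/M_*\gtrsim N^{\epsilon}$, which only improves the bound.

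The one place your route genuinely diverges is part~(d). The paper abandons the closed form here and instead argues variationally: for the lower bound it drops the quartic term and compares with the exponential distribution of mean $M$ (equation \eqref{eq:FreeEnergyBECApriori26}), and for the upper bound it plugs that exponential distribution in as a trial state (equation \eqref{eq:FreeEnergyBECApriori27}). Your approach---splitting into $\sigma\ge -1$ (where part~(c) applies since then $M\sim M_*$) and $\sigma<-1$ (where the part~(b) expansion is reused with the cruder remainder $M^2/N\le N^{2/3}$)---is more unified, since it reads all four regimes off the single exact formula. The paper's variational argument for (d) is perhaps more robust in that it avoids any asymptotic expansion of $h(\sigma)$ near the boundary $\sigma\approx -1$, but your observation that the expansion coefficients are uniformly bounded for $\sigma\le -1$ handles this adequately. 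The strict monotonicity of $F(\sigma)$ that you invoke follows from $\partial M/\partial\mu=\beta\,\mathbf{Var}>0$, which you may want to make explicit.
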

\begin{proof}
    The proof of the first two items can be found in \cite[Lemma~C.2]{BocDeuSto-24}. Note, however, that our definition of the free energy and that in the reference differ by the additive constant $\hat{v}(0) M^2/(2N)$. 
    
    Let us prove part~(c). We use the notations $h = \hat{v}(0)/(2N) > 0$, $\sigma = \mu \sqrt{\beta/(4h)}$. Inspection of the proof of \cite[Lemma~C.1]{BocDeuSto-24} shows that $M \gesssim N^{5/6}$ is equivalent to $\sigma \gesssim -1$. We have
    \begin{equation}
        \int_{\mathbb{C}} \exp(-\beta (h |z|^4 - \mu |z|^2)) \de z = \int_0^{\infty} \exp(-\beta (h x^2 - \mu x) \de x = \frac{\exp(\sigma^2)}{\sqrt{\beta h}} \int_{-\sigma}^{\infty} \exp(-x^2) \de x.
        \label{eq:FreeEnergyBECApriori21}
    \end{equation}
    To obtain this result, we wrote the two-dimensional integration over $\mathbb{C}$ with respect to the measure $\de z = \de x \de y / \pi$ in polar coordinates $(r, \phi)$, integrated out $\phi$, and afterwards introduced the variable $x = r^2$. For the free energy this implies 
    \begin{align}
        F_{\mathrm{c}}^{\mathrm{BEC}}(\beta,M) &= -\frac{\sigma^2}{\beta} + \frac{\ln(\beta h)}{2 \beta} + \mu M - h M^2 + O(N^{2/3}) \nonumber \\
        &=  - \frac{5}{6 \beta} \ln(N) + \frac{\hat{v}(0)}{2 N} \left( M - \frac{\mu N}{\hat{v}(0)} \right)^2 + O(N^{2/3}). 
        \label{eq:FreeEnergyBECApriori22}
    \end{align}
    To obtain a bound for the second term on the right-hand side of \eqref{eq:FreeEnergyBECApriori22} we note that
    \begin{equation}
        M = \frac{\int_{\mathbb{C}} |z|^2  \exp(-\beta (h |z|^4 - \mu |z|^2)) \de z}{\int_{\mathbb{C}} \exp(-\beta (h |z|^4 - \mu |z|^2)) \de z} = \frac{\mu}{2 h} + \frac{1}{\sqrt{\beta h}} \frac{\int_{-\sigma}^{\infty} x \exp(-x^2) \de x }{\int_{-\sigma}^{\infty} \exp(-x^2) \de x} = \frac{\mu N}{\hat{v}(0)} + O( N^{5/6} ).
        \label{eq:FreeEnergyBECApriori23}
    \end{equation}
    In combination, \eqref{eq:FreeEnergyBECApriori22} and \eqref{eq:FreeEnergyBECApriori23} show
    \begin{equation}
        F_{\mathrm{c}}^{\mathrm{BEC}}(\beta,M) = - \frac{5}{6 \beta} \ln(N) + O(N^{2/3}),
        \label{eq:FreeEnergyBECApriori24}
    \end{equation}
    which proves part~(c). It remains to prove part~(d) of Lemma~\ref{prop:FreeEnergyBEC}.

    We define the chemical potential $\widetilde{\mu}$ by 
    \begin{equation}
        M = \frac{\int_{\mathbb{C}} |z|^2  \exp(\beta \widetilde{\mu} |z|^2)) \de z}{\int_{\mathbb{C}} \exp(\beta \widetilde{\mu} |z|^2) \de z} = \frac{1}{-\beta \widetilde{\mu}}.
        \label{eq:FreeEnergyBECApriori25}
    \end{equation}
    Let $\varrho(z)$ be a probability distribution on $\mathbb{C}$ with $\int_{\mathbb{C}} |z|^2 \varrho(z) \de z = M$. Using $M \lesssim N^{5/6}$ and $\hat{v}(0) > 0$, we check that 
    \begin{align}
        \mathcal{F}^{\mathrm{BEC}}_{\mathrm{c}}(\rho) - \frac{M^2}{2 \hat{v}(0)} \geq -\frac{1}{\beta} \ln\left( \int_{\mathbb{C}} \exp(\beta \widetilde{\mu}|z|^2) \de z \right) + \widetilde{\mu} M - C N^{2/3} = -\frac{1}{\beta} \ln(M) - \frac{1}{\beta} - C N^{2/3}.
        \label{eq:FreeEnergyBECApriori26}
    \end{align}

    To prove an upper bound for $F_{\mathrm{c}}^{\mathrm{BEC}}(\beta,M)$ we use the distribution
    \begin{equation}
        \varrho_0(z) = \frac{\exp(\beta \widetilde{\mu} |z|^2))}{\int_{\mathbb{C}} \exp(\beta \widetilde{\mu} |z|^2) \de z}
        \label{eq:FreeEnergyBECApriori27_0}
    \end{equation}
    as a trial state. A straightforward computation that uses $M \lesssim N^{5/6}$ shows
    \begin{equation}
        \mathcal{F}^{\mathrm{BEC}}_{\mathrm{c}}(\rho_0) \leq - \frac{1}{\beta} \ln(M) + C N^{2/3}.
        \label{eq:FreeEnergyBECApriori27}
    \end{equation}
    In combination, \eqref{eq:FreeEnergyBECApriori26} and \eqref{eq:FreeEnergyBECApriori27} prove part~(d).
\end{proof}

In the following lemma we prove a statement that quantifies how a change in the expected particle number of the continuous effective condensate theory changes the chemical potential.

\begin{lemma}
\label{lem:perturbationTheoryChemicalPotential}
    We consider the limit $N \to \infty$, $\beta/ \beta_{\mathrm{c}}\to \kappa \in (0,\infty)$ with $\beta_{\mathrm{c}}$ in \eqref{eq:crittemp}, and assume $\hat{v}(0) > 0$. Let $\mu$ and $\widetilde{\mu}$ be the chemical potentials leading to an expected number of $M$ and $\widetilde{M}$ in the continuous effective condensate theory defined in \eqref{eq:condensatefunctionalcontinuous}. Then we have
    \begin{equation}
        | \mu - \widetilde{\mu} | \lesssim \frac{|M-\widetilde{M}|}{N} f(\xi)
        \label{eq:perturbationTheoryChemicalPotential1_0}
    \end{equation}
    with
    \begin{equation}
        f(x) = \begin{cases} 1 & \text{if } x \geq -1 \\ x^2 & \text{if } x < -1 \end{cases} \quad \text{ and } \quad \xi = \sqrt{\frac{\beta N}{4 \hat{v}(0)}} \min\{ \mu , \widetilde{\mu}  \}.
        \label{eq:perturbationTheoryChemicalPotential2_0}
    \end{equation}
\end{lemma}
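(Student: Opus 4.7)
Write $h = \hat{v}(0)/(2N) > 0$ and reduce the integration over $\mathbb{C}$ to a one-dimensional integral over $x = |z|^2 \in [0,\infty)$ (as in \eqref{eq:FreeEnergyBECApriori21}), so that the partition function becomes $Z(\mu) = \int_0^\infty e^{-\beta(hx^2 - \mu x)}\,\de x$ and the expected particle number is $M(\mu) = \beta^{-1}\partial_\mu \ln Z(\mu)$. Differentiating once more gives the identity
\begin{equation}
    \frac{\partial M}{\partial \mu}(\mu') \;=\; \beta\,\mathrm{Var}_{g_{\mu'}}(|z|^2),
\end{equation}
which, after writing $M - \widetilde{M} = \int_{\widetilde\mu}^{\mu} (\partial M/\partial\mu')\,\de\mu'$ (assume WLOG $\widetilde\mu \le \mu$), reduces the lemma to a lower bound on the variance of $|z|^2$ uniform over $\mu' \in [\widetilde\mu,\mu]$.

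Next, I would complete the square and substitute $y = \sqrt{\beta h}\,(x - \mu'/(2h))$ to rewrite the one-dimensional Gibbs measure as a Gaussian $e^{-y^2}$ truncated to $[-\sigma(\mu'),\infty)$, where $\sigma(\mu') = \mu'\sqrt{\beta/(4h)} \sim \xi$ up to a constant. This is the same change of variables used to prove Lemma~\ref{prop:FreeEnergyBEC}(c), and it converts the problem into the classical one of bounding the variance of a one-sided truncated standard Gaussian. Explicitly,
\begin{equation}
    \mathrm{Var}_{g_{\mu'}}(|z|^2) \;=\; \frac{1}{\beta h}\,\mathrm{Var}_{\widetilde g_{\sigma(\mu')}}(y), \qquad \widetilde g_\sigma(y) = \frac{e^{-y^2}\mathds{1}_{\{y\ge -\sigma\}}}{\int_{-\sigma}^\infty e^{-y^2}\,\de y}.
\end{equation}

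The main technical step, and the real obstacle, is to show the two-sided asymptotic bound
\begin{equation}
    \mathrm{Var}_{\widetilde g_\sigma}(y) \;\gtrsim\; \frac{1}{1+\sigma_-^2}, \qquad \sigma_- = \max\{0,-\sigma\},
\end{equation}
which one obtains by separating into $\sigma \ge -1$ (where the truncation is harmless and the variance is bounded below by a strictly positive constant) and $\sigma < -1$ (where the density $e^{-y^2}$ restricted to $[|\sigma|,\infty)$ concentrates on a boundary layer of width $\sim 1/|\sigma|$, so a direct Laplace-type expansion with the shift $y = |\sigma| + t$ yields $\mathrm{Var}_{\widetilde g_\sigma}(y) = 1/(4\sigma^2) + O(\sigma^{-4})$). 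The computation in the boundary-layer regime is essentially a Mills-ratio estimate; I would perform it explicitly by expanding $\langle y\rangle$ and $\langle y^2\rangle$ to sufficient order and subtracting.

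Finally, I would combine these estimates. Unpacking $1/(\beta h) = 2N/(\beta\hat v(0))$ and using $\hat v(0) \sim 1$, the bound on the truncated Gaussian variance yields
\begin{equation}
    \frac{\partial M}{\partial \mu'}(\mu') \;\gtrsim\; \frac{N}{\hat v(0)(1+\sigma(\mu')^2)} \;\gtrsim\; \frac{N}{f(\sigma(\mu'))}.
\end{equation}
Since $\mu \mapsto \sigma(\mu)$ is increasing and $f$ is non-increasing for $\sigma \le -1$ (and equal to $1$ otherwise), we have $f(\sigma(\mu')) \le f(\sigma(\widetilde\mu)) \lesssim f(\xi)$ uniformly for $\mu'\in[\widetilde\mu,\mu]$. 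Integrating from $\widetilde\mu$ to $\mu$ gives
\begin{equation}
    |M - \widetilde M| \;\gtrsim\; \frac{N}{f(\xi)}\,|\mu - \widetilde\mu|,
\end{equation}
which rearranges to the claim. Note that since $\xi$ is defined as $\sqrt{\beta N/(4\hat v(0))}\min\{\mu,\widetilde\mu\}$, the worst case occurs at the smaller of the two endpoints, consistent with the monotonicity argument above.
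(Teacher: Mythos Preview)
Your proposal is correct and follows essentially the same route as the paper. The paper also reduces to the one-dimensional integral, uses the change of variables $y=\sqrt{\beta h}(x-\mu/(2h))$, and expresses $M$ via the Mills-ratio function $g(x)=(\sqrt{2}e^{x^2/2}\int_{x/\sqrt{2}}^\infty e^{-t^2}\,dt)^{-1}$; its key quantity $G(x)=1-g(x)[g(x)-x]$ is exactly (up to a factor $2$) your truncated-Gaussian variance $\mathrm{Var}_{\widetilde g_\sigma}(y)$, so your Laplace/Mills-ratio dichotomy $\sigma\ge -1$ versus $\sigma<-1$ mirrors the paper's analysis of the asymptotics of $G$. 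The only cosmetic difference is that you integrate $\partial M/\partial\mu'=\beta\,\mathrm{Var}(|z|^2)$ and bound the integrand uniformly over $[\widetilde\mu,\mu]$, whereas the paper applies the mean-value theorem at a single intermediate point; your formulation via the Gibbs variance identity is arguably a bit more transparent.
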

\begin{proof}
    We recall the definitions of $h$ and $\sigma$ above \eqref{eq:FreeEnergyBECApriori21}. A short computation shows 
    \begin{equation}
        M = \frac{ \int_{\mathbb{C}} |z|^2 \exp( -\beta ( h |z|^4 - \mu |z|^2 ) ) \de z }{\int_{0}^{\infty} \exp( -\beta ( h |z|^4 - \mu |z|^2 ) ) \de z } = \frac{\int_{0}^{\infty} x \exp( -\beta ( h x^2 - \mu x ) ) \de x}{\int_{0}^{\infty} \exp( -\beta ( h x^2 - \mu x ) ) \de x} = \frac{\mu}{2 h} + \frac{ g\left( -\sqrt{2} \sigma \right) }{\sqrt{2 \beta h}}
        \label{eq:perturbationTheoryChemicalPotential3_0}
	\end{equation}
    with 
    \begin{equation}
        g(x) = \frac{1}{ \sqrt{2} \exp(x^2/2) \int_{(x/\sqrt{2})}^{\infty} \exp(-t^2) \de t}.
        \label{eq:perturbationTheoryChemicalPotential4}
    \end{equation}
    To obtain the first equality, we applied the same coordinate transformation as in \eqref{eq:FreeEnergyBECApriori21}. The distribution of the particle number in the continuous effective condensate theory is called a truncated Gaussian. 

    In the following we assume without loss of generality that $M \geq \widetilde{M}$, and hence $\mu \geq \widetilde{\mu}$. We also denote by $\widetilde{\sigma}$ the parameter $\sigma$ when $\mu$ is replaced by $\widetilde{\mu}$. Using \eqref{eq:perturbationTheoryChemicalPotential4}, a first order Taylor expansion, and $g'(x) = g(x)[g(x)-x]$, we check that there exists a number $a \in \{ t \sqrt{2} \sigma + (1-t) \sqrt{2} \widetilde{\sigma} \ | \ t \in [0,1] \}$ such that
    \begin{equation}
        0 \leq \mu-\widetilde{\mu} +\frac{ g( -\sqrt{2} \sigma ) - g(-\sqrt{2} \widetilde{\sigma}) }{\sqrt{\beta/(2h))}} = (\mu - \widetilde{\mu}) \left[ 1 - g(-a) (g(-a) + a) \right] = 2h( M - \widetilde{M} )
        \label{eq:perturbationTheoryChemicalPotential5}
    \end{equation}
    holds. The function $g$ is strictly positive, continuous, behaves as $\exp(-x^2/2)/\sqrt{2}$ for $x \to -\infty$, and as 
    \begin{equation}
        g(x) = \frac{x}{1 - \frac{1}{x^2} + \frac{6}{x^4} + O(x^{-6})}
        \label{eq:perturbationTheoryChemicalPotential6}
    \end{equation}
    for $x \to \infty$, see \cite[Eq. 7.1.23]{AbraStegun1972}. Using this, we check that the function $G(x) = 1 - g(x)[g(x)-x]$ is strictly positive, continuous, goes to $1$ for $x \to -\infty$, and behaves as
    \begin{equation}
        G(x) = \frac{7}{x^2} + O(x^{-4})
        \label{eq:perturbationTheoryChemicalPotential7}
    \end{equation}
    for $x \to \infty$. 

    Let us first assume that $a \geq -1$. In this case \eqref{eq:perturbationTheoryChemicalPotential5} and the properties of the function $G$ imply
    \begin{equation}
        0 \leq \mu - \widetilde{\mu} \lesssim \frac{M - \widetilde{M}}{N}.
        \label{eq:perturbationTheoryChemicalPotential8}
    \end{equation}
    If $a < -1$ we have
    \begin{equation}
        0 \leq \mu - \widetilde{\mu} \lesssim a^2 \frac{M - \widetilde{M}}{N} \leq \xi^2 \frac{M - \widetilde{M}}{N}.
        \label{eq:perturbationTheoryChemicalPotential9}
    \end{equation}
    In combination, \eqref{eq:perturbationTheoryChemicalPotential8} and \eqref{eq:perturbationTheoryChemicalPotential9} prove \eqref{eq:perturbationTheoryChemicalPotential1_0}.
\end{proof}

In the next lemma we quantify how a change in the expected particle number of the continuous effective condensate theory affects the free energy. Its proof uses Lemma~\ref{lem:perturbationTheoryChemicalPotential}. 

\begin{lemma}\label{lem:replaceParticleNumberInCondensateEnergy}
    We consider the limit $N \to \infty$, $\beta/ \beta_{\mathrm{c}}\to \kappa \in (0,\infty)$ with $\beta_{\mathrm{c}}$ in \eqref{eq:crittemp}, and assume $\hat{v}(0) > 0$. Let $M(N), \widetilde{M}(N)$ be two functions that satisfy $M,\widetilde{M} \gesssim N^{5/6 - \epsilon}$ with some $0 < \epsilon < 1/6$ and $| M - \widetilde{M} | \lesssim a_N N^{2/3}$ with some $0 \leq \delta < 1/3$. Then there exists a constant $c>0$ such that 
    \begin{equation}
        | F_{\mathrm{c}}^{\mathrm{BEC}}(\beta,M) - F_{\mathrm{c}}^{\mathrm{BEC}}(\beta,\widetilde{M}) | \lesssim f^{\mathrm{BEC}}(M,\widetilde{M}) \label{eq:perturbationTheoryChemicalPotential1}
    \end{equation}
    with
    \begin{equation}
       f^{\mathrm{BEC}}(M,\widetilde{M}) = \begin{cases}
            \exp(-cN^{\epsilon}) & \ \text{ if } \ \ M,\widetilde{M} \gesssim N^{5/6 + \epsilon} \ \text{ with } \ \epsilon > 1/24, \\
            a_N N^{1/2 + 3 \epsilon} & \ \text{ if } \ \ N^{5/6-\epsilon} \lesssim M, \widetilde{M} \lesssim N^{5/6 + \epsilon} \ \text{ with } \ 0 < \epsilon \leq 1/24, \\
            N^{2/3 - \epsilon} + a_N N^{1/2 + \epsilon} & \ \text{ if } \ \ M,\widetilde{M} \lesssim N^{5/6-\epsilon} \ \text{ with } \ 1/24 < \epsilon < 1/6. \\
        \end{cases}
        \label{eq:perturbationTheoryChemicalPotential1b}
    \end{equation}
    In the second parameter regime we also require $a_N N^{3 \epsilon}  \lesssim N^{1/6}$.
\end{lemma}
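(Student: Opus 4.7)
The plan is to apply the fundamental theorem of calculus in $M$. By the envelope theorem applied to the variational definition of $F_{\mathrm{c}}^{\mathrm{BEC}}$, the minimizer in \eqref{eq:condensatefunctionalcontinuous} is the Gibbs distribution with chemical potential $\mu(M)$, and hence $\partial_M F_{\mathrm{c}}^{\mathrm{BEC}}(\beta, M) = \mu(M) - \hat{v}(0) M/N$, where the second term arises from differentiating the subtracted mean-field correction. Rearranging identity \eqref{eq:perturbationTheoryChemicalPotential3_0} yields the closed form
\begin{equation*}
\mu(M) - \frac{\hat{v}(0) M}{N} = -\sqrt{\frac{\hat{v}(0)}{\beta N}} \, g\bigl(-\sqrt{2}\, \sigma(M)\bigr), \qquad \sigma(M) = \mu(M)\sqrt{\frac{\beta N}{2\hat{v}(0)}},
\end{equation*}
with $g$ the auxiliary function defined in \eqref{eq:perturbationTheoryChemicalPotential4}. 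The three cases of the lemma correspond precisely to the three asymptotic regimes of $g$: exponential decay as its argument tends to $-\infty$ (Case 1), boundedness on compact sets (Case 2), and linear growth as its argument tends to $+\infty$ (Case 3).

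In Case 1, where $M, \widetilde{M} \gesssim N^{5/6+\epsilon}$ with $\epsilon > 1/24$, Lemma~\ref{lem:ChemPotBECCont}(a) gives $\sigma(s) \gesssim N^\epsilon$ uniformly along the integration interval, hence $g(-\sqrt{2}\sigma(s)) \lesssim \exp(-cN^{2\epsilon})$. Any polynomial factor from the length of integration or from the $N^{-1/6}$ prefactor is absorbed by this exponential, producing the desired $\exp(-c'N^\epsilon)$ bound. In Case 3, where $M, \widetilde{M} \lesssim N^{5/6-\epsilon}$ with $\epsilon > 1/24$, rather than integrate directly I would subtract the asymptotics of Lemma~\ref{prop:FreeEnergyBEC}(b) at the two endpoints to obtain
\begin{equation*}
F_{\mathrm{c}}^{\mathrm{BEC}}(\beta, M) - F_{\mathrm{c}}^{\mathrm{BEC}}(\beta, \widetilde{M}) = \frac{1}{\beta}\ln(\widetilde{M}/M) + O(N^{2/3-2\epsilon}).
\end{equation*}
The remainder is absorbed into $N^{2/3-\epsilon}$ since $2\epsilon > \epsilon$, and for the logarithmic contribution the bound $|\ln(\widetilde{M}/M)| \lesssim |M-\widetilde{M}|/\min(M,\widetilde{M})$ combined with $\min \gesssim N^{5/6-\epsilon}$ and $|M-\widetilde{M}| \lesssim a_N N^{2/3}$ produces the remaining $a_N N^{1/2+\epsilon}$ contribution.

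Case 2, where $N^{5/6-\epsilon} \lesssim M, \widetilde{M} \lesssim N^{5/6+\epsilon}$ with $\epsilon \leq 1/24$, is the technical heart of the argument. By Lemma~\ref{lem:ChemPotBECCont}(c) one has $|\sigma(M)|, |\sigma(\widetilde{M})| \lesssim N^\epsilon$ at the two endpoints, and since $g$ is continuous with $g(x) \sim x$ as $x \to +\infty$, the bound $|g(-\sqrt{2}\sigma)| \lesssim 1 + |\sigma|$ yields an integrand of size $N^{-1/6+\epsilon}$ provided $|\sigma|$ stays of order $N^\epsilon$; integrating over a length $\lesssim a_N N^{2/3}$ then produces $a_N N^{1/2+\epsilon}$, comfortably within the claimed $a_N N^{1/2+3\epsilon}$. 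The main obstacle, and the purpose of the auxiliary hypothesis $a_N N^{3\epsilon} \lesssim N^{1/6}$, is to guarantee that $\sigma(s)$ does not drift outside this $O(N^\epsilon)$ range along the entire integration interval: via Lemma~\ref{lem:perturbationTheoryChemicalPotential} the oscillation of $\mu$ along $[\widetilde{M}, M]$ is bounded by $(a_N N^{2/3}/N) f(\xi) \lesssim a_N N^{-1/3+2\epsilon}$, and the assumption translates this into a variation of $\sigma$ of size at most $O(N^{-\epsilon})$, which preserves the uniform control on $g$ needed to close the argument.
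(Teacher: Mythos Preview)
Your approach is correct and closely mirrors the paper's, though the organisation differs. The paper first derives a closed expression for $F_{\mathrm c}^{\mathrm{BEC}}(\beta,M)$ itself (equation \eqref{eq:perturbationTheoryChemicalPotential1b_0}),
\[
F_{\mathrm c}^{\mathrm{BEC}}(\beta,M)=\frac{1}{2\beta}\bigl[1-g^2(-\sqrt2\,\sigma)\bigr]-\frac1\beta\ln\Bigl(\sqrt{2e}\int_{-\sigma}^\infty e^{-t^2}\,dt\Bigr),
\]
and then bounds the difference \eqref{eq:perturbationTheoryChemicalPotential2} in Case~2 by a mean--value estimate in $\sigma$, invoking Lemma~\ref{lem:perturbationTheoryChemicalPotential} to control $|\sigma-\widetilde\sigma|$. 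Your route --- integrating $\partial_M F_{\mathrm c}^{\mathrm{BEC}}=\mu(M)-\hat v(0)M/N=-\sqrt{\hat v(0)/(\beta N)}\,g(-\sqrt2\,\sigma(M))$ over $[\widetilde M,M]$ --- is the differentiated version of the same identity and is arguably more direct. In particular, in Case~2 you do not actually need Lemma~\ref{lem:perturbationTheoryChemicalPotential} or the auxiliary hypothesis $a_N N^{3\epsilon}\lesssim N^{1/6}$ at all: since $M\mapsto\mu(M)$ is strictly increasing, $\sigma(s)$ lies between $\sigma(M)$ and $\sigma(\widetilde M)$ for every $s$ in the integration interval, and the endpoint bound $|\sigma|\lesssim N^{\epsilon}$ (from Lemma~\ref{lem:ChemPotBECCont}(c)) propagates automatically. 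This yields $a_N N^{1/2+\epsilon}$, slightly sharper than the paper's $a_N N^{1/2+3\epsilon}$; the extra factor $N^{2\epsilon}$ in the paper comes from passing through $|\sigma-\widetilde\sigma|$ via Lemma~\ref{lem:perturbationTheoryChemicalPotential}, which costs an $f(\xi)\lesssim N^{2\epsilon}$.

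Two small remarks. In Case~1 your integration argument tacitly requires $a_N$ to be at most polynomially large so that the interval length $a_N N^{2/3}$ is swallowed by $\exp(-cN^{2\epsilon})$; the paper's version, which simply subtracts the asymptotic \eqref{eq:FreeEnergyBECInteractingLimit_app} at the two endpoints, avoids this. In Case~3 you use $\min(M,\widetilde M)\gtrsim N^{5/6-\epsilon}$ to bound the logarithm; note that Case~3 itself supplies only the upper bound $M,\widetilde M\lesssim N^{5/6-\epsilon}$, so the lower bound has to come from the standing hypothesis of the lemma (with a possibly different exponent). The paper's proof of Case~3 is equally terse on this point, and in the application (Section~\ref{sec:finalLowerBound}) Case~3 does not actually occur, so this is not a substantive issue.
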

\begin{proof}
    We recall the definitions of $h$ and $\sigma$ above \eqref{eq:FreeEnergyBECApriori21}. We also denote by $\widetilde{\sigma}$ the parameter $\sigma$ when $\mu$ is replaced by $\widetilde{\mu}$. A straightforward computation shows that
    \begin{equation}
        F_{\mathrm{c}}^{\mathrm{BEC}}(\beta,M) = \frac{1}{2 \beta} \left[ 1 -  g^2\left(-\sqrt{2} \sigma \right)  \right] -\frac{1}{\beta} \ln \left( \sqrt{2 e} \int_{-\sigma}^{\infty} \exp(-t^2) \de t \right)
        \label{eq:perturbationTheoryChemicalPotential1b_0}
    \end{equation}
    with $g$ in \eqref{eq:perturbationTheoryChemicalPotential4}. In particular,
    \begin{equation}
        F_{\mathrm{c}}^{\mathrm{BEC}}(\beta,M) - F_{\mathrm{c}}^{\mathrm{BEC}}(\beta,\widetilde{M}) = \frac{1}{2 \beta} \left[ g^2\left(-\sqrt{2} \sigma \right) - g^2\left(-\sqrt{2} \widetilde{\sigma} \right) \right] - \frac{1}{\beta} \ln \left( \frac{ \int_{-\sigma}^{\infty} \exp(-t^2) \de t }{ \int_{-\widetilde{\sigma}}^{\infty} \exp(-t^2) \de t } \right).
        \label{eq:perturbationTheoryChemicalPotential2}
    \end{equation}
    
    Let us first assume that $M,\widetilde{M} \in [N^{5/6 - \epsilon}, N^{5/6+\epsilon}]$. Inspection of the proof of \cite[Lemma~C.1]{BocDeuSto-24} shows that this is equivalent to either $-N^{-1/6+\epsilon} \lesssim \mu, \widetilde{\mu} \lesssim N^{-1/6 + \epsilon}$ or $-N^{\epsilon} \lesssim \sigma, \widetilde{\sigma} \lesssim N^{\epsilon}$. Using these bounds, Lemma~\ref{lem:perturbationTheoryChemicalPotential}, the formula for the derivative of $g$ above \eqref{eq:perturbationTheoryChemicalPotential5}, and the asymptotic behavior of $g$ in and above \eqref{eq:perturbationTheoryChemicalPotential6}, we check that the absolute value of the first term in \eqref{eq:perturbationTheoryChemicalPotential2} is bounded by a constant times $a_N N^{1/2 + 3 \epsilon}$. With the same ingredients we see that the absolute value of the second term is bounded by a constant times $a_N N^{1/3 + 3 \epsilon}$, and hence
    \begin{equation}
        | F_{\mathrm{c}}^{\mathrm{BEC}}(\beta,M) - F_{\mathrm{c}}^{\mathrm{BEC}}(\beta,\widetilde{M}) | \lesssim a_N N^{1/2 + 3 \epsilon}.
        \label{eq:perturbationTheoryChemicalPotential3}
    \end{equation}

    To obtain \eqref{eq:perturbationTheoryChemicalPotential1}, we use \eqref{eq:perturbationTheoryChemicalPotential3} if $N^{5/6-\epsilon} \lesssim M, \widetilde{M} \lesssim N^{5/6 + \epsilon}$ with $0 < \epsilon \leq 1/24$. The bound for the parameter range $M,\widetilde{M} \gesssim N^{5/6 + \epsilon}$ with $\epsilon > 1/24$ follows from part~(a) of Lemma~\ref{prop:FreeEnergyBEC} and that for $M,\widetilde{M} \lesssim N^{5/6-\epsilon}$ with $1/24 < \epsilon < 1/6$ from part~(b) of the same lemma.
\end{proof}

The next lemma allows us to compare the chemical potentials related to the discrete and the continuous versions of our effective condensate functional.

\begin{lemma}\label{lem:CompChemPotentials}
We consider the limit $N \to \infty$, $\beta N^{2/3} \to \kappa \in (0,\infty)$. Let $\widetilde{\mu}$ and $\mu$ be the chemical potentials leading to the expected number of $M(N) \geq 0$ particles in the discrete and the continuous effective condensate theories defined in \eqref{eq:condensatefunctional} and \eqref{eq:condensatefunctionalcontinuous}, respectively. We assume $\hat{v}(0) > 0$ and that either $M(N) \gesssim N^{2/3}$ or $|\widetilde{\mu}| \lesssim 1$ holds. Then we have
	\begin{equation}
		\left|  \widetilde{\mu}(\beta,M) - \mu(\beta,M) \right| \lesssim \beta | \mu(\beta,M) |.
		\label{eq:DifferenceChemPots}
	\end{equation}
\end{lemma}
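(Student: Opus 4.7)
The plan is to control the difference $\widetilde{\mu}-\mu$ via convexity and a Riemann-sum comparison between the discrete sum defining $g^{\mathrm{BEC}}_{\beta,M}$ and the integral defining $g^{\mathrm{BEC}}$. Throughout, set $h_{\mu}(x) = \beta\bigl(\hat v(0)x^{2}/(2N) - \mu x\bigr)$ and define the two grand potentials
\[
 \Phi_c(\mu) = -\beta^{-1}\ln\!\int_{0}^{\infty}\! e^{-h_{\mu}(x)}\,\de x,\qquad
 \Phi_d(\mu) = -\beta^{-1}\ln\!\sum_{n\ge 0} e^{-h_{\mu}(n)} .
\]
Both are strictly convex with $-\Phi_c'(\mu)=M$ and $-\Phi_d'(\widetilde{\mu})=M$. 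Writing $M_d(\mu)=-\Phi_d'(\mu)$ for the discrete mean at parameter $\mu$, the mean value theorem gives
\[
 \widetilde{\mu}-\mu \;=\; \frac{M-M_d(\mu)}{\beta\,\mathbf{Var}_{d,\xi}(n)}
\]
for some $\xi$ between $\mu$ and $\widetilde{\mu}$, where $\mathbf{Var}_{d,\xi}(n)=\beta^{-1}\Phi_d''(\xi)$ is the variance of the discrete distribution at chemical potential $\xi$. The task is thus to bound the numerator by a constant times $\beta^{2}|\mu|\,\mathbf{Var}_{d,\xi}(n)$.

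I would first establish two auxiliary comparison estimates,
\[
 |Z_d(\mu)-Z_c(\mu)|\;\lesssim\; 1 + \beta\bigl(|\mu|+\hat v(0)M/N\bigr)\,Z_c(\mu),\qquad
 |F_d(\mu)-F_c(\mu)|\;\lesssim\; Z_c(\mu) + \beta\bigl(|\mu|+\hat v(0)M/N\bigr)\,F_c(\mu),
\]
where $Z_c,F_c$ and $Z_d,F_d$ denote the partition functions and first moments in the continuous and discrete case. The proof is a direct Riemann-sum vs.\ integral comparison: for each interval $[n,n+1]$,
\[
 \bigl|e^{-h_{\mu}(n)} - e^{-h_{\mu}(x)}\bigr| \;\le\; \int_{n}^{n+1}|h_{\mu}'(y)|\,e^{-h_{\mu}(y)}\,\de y, \qquad |h_{\mu}'(y)| = \beta\bigl|\hat v(0)y/N-\mu\bigr|,
\]
together with the boundary term $e^{-h_{\mu}(0)}=1$, which produces the additive "$1$" in the first display. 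Summing/integrating and using $\int y\,e^{-h_\mu}\de y = M Z_c$ gives the claimed estimates. Dividing the analogous estimate $|M_d(\mu)-M|\lesssim Z_c(\mu)^{-1}+\beta(|\mu|+\hat v(0)M/N)\,M$ (derived from the two comparisons above via the quotient rule) by $\beta\,\mathbf{Var}_{d,\xi}(n)$ and matching against the target $\beta|\mu|$ then reduces the lemma to the lower bound $\mathbf{Var}_{d,\xi}(n)\gtrsim \max\{Z_c(\mu)^{-1}/|\mu|,\,M,\,\hat v(0)M^{2}/N/|\mu|\}$, which I would verify by splitting into two regimes.

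For the split, I would use Lemma~\ref{lem:ChemPotBECCont} to distinguish a \emph{condensed/intermediate regime} ($\sigma:=\mu\sqrt{\beta N/(4\hat v(0))}\gtrsim -1$, which covers the case $M\gtrsim N^{2/3}$) and a \emph{non-condensed regime} ($\sigma\ll -1$, which is the alternative assumption $|\widetilde{\mu}|\lesssim 1$ after transferring the bound from $\widetilde{\mu}$ to $\mu$ a posteriori). In the first regime the distribution is effectively a Gaussian of width $\sim\sqrt{N/\beta}$ centred near $\mu N/\hat v(0)$, so $\mathbf{Var}_{d,\xi}(n)\sim N/\beta$ and $Z_c(\mu)\sim\sqrt{N/\beta}$; the explicit computation together with the above bounds yields $|\widetilde{\mu}-\mu|\lesssim\beta\sim\beta|\mu|$ (since $|\mu|\sim 1$). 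In the second regime the quadratic term is negligible on the effective support $x\lesssim 1/(\beta|\mu|)$, so $Z_c(\mu)\simeq |\beta\mu|^{-1}$, $M\simeq |\beta\mu|^{-1}$, and $\mathbf{Var}_{d,\xi}(n)\simeq M(1+M)$; alternatively, one can use the near-explicit formulas $M=-1/(\beta\mu)+O(1)$ and $M=(e^{-\beta\widetilde\mu}-1)^{-1}+O(1)$ to obtain $\beta\widetilde\mu+\ln(1-\beta\mu)=O(\beta\mu)$ and read off $|\widetilde\mu-\mu|\lesssim\beta\mu^{2}\lesssim\beta|\mu|$.

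The main obstacle will be making the Riemann-sum comparison tight enough uniformly across the critical regime $M\sim N^{5/6}$, where neither the Gaussian nor the geometric approximation is exact; here one cannot afford a naive bound of the form $|M_d(\mu)-M|\lesssim 1$ because the variance itself degenerates to order $N^{5/3}$ and one needs the cancellation between $F_d-F_c$ and $M(Z_d-Z_c)$ that is built into the quotient $F/Z$. Handling this crossover—most cleanly via a Poisson summation identity applied to the Gaussian factor $e^{-\beta h(n-a)^{2}}$ with $a=\mu N/\hat v(0)$, which produces errors of the form $e^{-c/(\beta h)}=e^{-cN^{5/3}}$ away from the boundary, combined with a separate estimate for the boundary cutoff at $n=0$—is the step that requires the most care.
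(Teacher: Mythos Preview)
Your route via the mean-value identity $\widetilde{\mu}-\mu=(M-M_d(\mu))/(\beta\,\mathbf{Var}_{d,\xi}(n))$ is genuinely different from the paper's, and after recomputation the arithmetic does close; but there is one real gap and a few inaccuracies.

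The gap is a circularity: the variance lower bound you need is at the intermediate point $\xi$ between $\mu$ and $\widetilde{\mu}$, and at that stage you have no a~priori control on $\widetilde{\mu}$. Without first knowing $|\widetilde{\mu}|\lesssim 1$, you cannot place $\xi$ in a regime where $\mathbf{Var}_{d,\xi}(n)$ satisfies the required lower bound. The paper handles exactly this by a separate direct argument: a contradiction shows $|\beta\widetilde{\mu}|=o(1)$ for negative $\widetilde{\mu}$ when $M\gtrsim N^{2/3}$, and the elementary bound $M\ge \widetilde{\mu}N/(2\hat v(0))$ controls positive $\widetilde{\mu}$. You would need to insert the same step. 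Two smaller points: your regime split is off, since $\sigma\gtrsim -1$ corresponds to $M\gtrsim N^{5/6}$, not $M\gtrsim N^{2/3}$; and in that regime $|\mu|\sim M/N$ is not ``$\sim 1$'', so ``$\lesssim\beta$'' is not the same as ``$\lesssim\beta|\mu|$'' (if you redo the computation with $|\mu|\sim M/N$ and $\mathbf{Var}\sim N/\beta$ it still closes).

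The paper's argument avoids the variance altogether and is uniform across the crossover, so no Poisson summation is needed. It applies Euler--Maclaurin to both $\sum e^{-h_{\widetilde{\mu}}(n)}$ and $\sum n\,e^{-h_{\widetilde{\mu}}(n)}$ at the \emph{same} chemical potential $\widetilde{\mu}$ to get the multiplicative relation
\[
\frac{Z_1(\widetilde{\mu})}{Z_0(\widetilde{\mu})}\,\bigl(1+O(\beta(1+|\widetilde{\mu}|))\bigr)\;=\;\frac{Z_1(\mu)}{Z_0(\mu)},
\]
inserts the a~priori bound $|\widetilde{\mu}|\lesssim 1$, and then inverts via a first-order Taylor expansion of the explicit scalar function $\Theta(\eta)=(1+\sqrt{\pi}\,\eta\,e^{\eta^2}\mathrm{erfc}(-\eta))/(e^{\eta^2}\mathrm{erfc}(-\eta))$, using its monotonicity and the asymptotics $\Theta(\eta)\sim\sqrt{\pi}\,\eta$, $\Theta'(\eta)\sim\sqrt{\pi}$ for $\eta\to\infty$ and $\Theta(\eta)\sim\sqrt{\pi}/(2|\eta|)$, $\Theta'(\eta)\sim\sqrt{\pi}/(2\eta^2)$ for $\eta\to-\infty$. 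This replaces your variance lower bound by a derivative lower bound for a function of one real variable, and the comparison happens at one chemical potential rather than two.
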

\begin{proof}
	We recall the notation $h = \hat{v}(0)/(2N) > 0$. Our choices for $\widetilde{\mu}$ and $\mu$ imply the identity
	\begin{equation}
		\frac{\sum_{n=0}^{\infty} n \exp( -\beta ( h n^2 - \widetilde{\mu} n ) )}{\sum_{n=0}^{\infty} \exp( -\beta ( h n^2 - \widetilde{\mu} n ) )} = \frac{ \int_{\mathbb{C}} |z|^2 \exp( -\beta ( h |z|^4 - \mu |z|^2 ) ) \de z }{\int_{0}^{\infty} \exp( -\beta ( h |z|^4 - \mu |z|^2 ) ) \de z } = \frac{\int_{0}^{\infty} x \exp( -\beta ( h x^2 - \mu x ) ) \de x}{\int_{0}^{\infty} \exp( -\beta ( h x^2 - \mu x ) ) \de x}.
		\label{eq:DifferenceChemPots1}
	\end{equation}
	To obtain this result, we applied the same coordinate transformation as in \eqref{eq:FreeEnergyBECApriori21}. 
	
	For a differentiable function $f:\mathbb{R} \to \mathbb{R}$ the Euler--Maclaurin formula reads
	\begin{equation}
		\sum_{k=0}^K f(k) = \int_0^K f(x) \de x + \frac{f(K)-f(0)}{2} - \int_0^K f'(x) P_1(x) \de x.
		\label{eq:EulerMaclaurin}
	\end{equation}
	Here $P_1$ denotes a periodized Bernoulli function that satisfies $|P_1(x)| \leq 1$ for all $x \in \mathbb{R}$, see e.g. \cite[Section~23]{AbraStegun1972}. We apply the Euler--Maclaurin formula and take the limit $K \to \infty$ to show that
	\begin{align}
		\sum_{n=0}^{\infty} n^p \exp( -\beta ( h n^2 - \widetilde{\mu} n ) ) =& \int_{0}^{\infty} x^p \exp( -\beta ( h x^2 - \widetilde{\mu} x ) ) \de x - \frac{1}{2} - p \int_0^{\infty} x^{p-1} \exp( -\beta ( h x^2 - \widetilde{\mu} x ) ) P_1(x) \de x \nonumber \\
		&+ \beta \int_0^{\infty} x^{p} ( 2hx - \widetilde{\mu} ) \exp( -\beta ( h x^2 - \widetilde{\mu} x ) ) P_1(x) \de x 
		\label{eq:DifferenceChemPots2}
	\end{align}
	holds for all $p \in \mathbb{R}$. Next, we insert this identity for $p=0,1$ on the left-hand side of \eqref{eq:DifferenceChemPots1} and find that it equals
	\begin{equation}
		\frac{Z_1(\widetilde{\mu})}{Z_0(\widetilde{\mu})} \frac{1+2\beta h W_2(\widetilde{\mu})/Z_1(\widetilde{\mu}) -\beta \widetilde{\mu} W_1(\widetilde{\mu})/Z_1(\widetilde{\mu}) - W_0(\widetilde{\mu})/Z_1(\widetilde{\mu}) - 1/(2 Z_1(\widetilde{\mu}))}{1 + 2 \beta h W_1(\widetilde{\mu})/Z_0(\widetilde{\mu}) - \beta \widetilde{\mu} W_0(\widetilde{\mu})/Z_0(\widetilde{\mu}) - 1/(2 Z_0(\widetilde{\mu}))},
		\label{eq:DifferenceChemPots3}
	\end{equation}
	where we used the notation
	\begin{equation}
		Z_p(\widetilde{\mu}) = \int_0^{\infty} x^p \exp( -\beta ( h x^2 - \widetilde{\mu} x ) ) \de x.
		\label{eq:DifferenceChemPots4}
	\end{equation}
	If $P_1(x)$ appears additionally in the integral we denoted the same object by $W_p$.
	
	To show that the second factor in \eqref{eq:DifferenceChemPots3} can be approximated by $1$, we now derive upper and lower bounds for the functions $Z_p(\widetilde{\mu})$. Since $P_1$ is a bounded function all upper bounds hold similarly for $W_p(\widetilde{\mu})$. Note that lower bounds for this function are not needed. A change of coordinates allows us to write
	\begin{equation}
		Z_p(\widetilde{\mu}) = \exp\left( \frac{\beta \widetilde{\mu}^2}{4h}\right) \frac{1}{\sqrt{\beta h}}  \int_{-\frac{\widetilde{\mu} \sqrt{\beta}}{2\sqrt{h}}}^{\infty} \left( \frac{x}{\sqrt{\beta h}} + \frac{\widetilde{\mu}}{2h} \right)^p \exp( -x^2 ) \de x.
		\label{eq:DifferenceChemPots5}
	\end{equation}
	If $\widetilde{\mu} \geq 0$ we have the upper bound
	\begin{align}
		Z_p(\widetilde{\mu}) \leq \exp\left( \frac{\beta \widetilde{\mu}^2}{4h}\right) \frac{1}{\sqrt{\beta h}}  \int_{-\infty}^{\infty} \left( \frac{|x|}{\sqrt{\beta h}} + \frac{\widetilde{\mu}}{2h} \right)^p \exp( -x^2 )  \de x \lesssim_p \exp\left( \frac{\beta \widetilde{\mu}^2}{4h}\right) \frac{1}{\sqrt{\beta h}} \left[ \left( \frac{1}{\sqrt{\beta h}} \right)^p + \left ( \frac{\widetilde{\mu}}{h}\right)^p \right]
		\label{eq:DifferenceChemPots6}
	\end{align}
	and the lower bound
	\begin{equation}
		Z_p(\widetilde{\mu}) \geq \exp\left( \frac{\beta \widetilde{\mu}^2}{4h}\right) \frac{1}{\sqrt{\beta h}}  \int_{0}^{\infty} \left( \frac{x}{\sqrt{\beta h}} + \frac{\widetilde{\mu}}{2h} \right)^p \exp( -x^2 )  \de x \gtrsim \exp\left( \frac{\beta \widetilde{\mu}^2}{4h}\right) \frac{1}{\sqrt{\beta h}} \left[ \left( \frac{1}{\sqrt{\beta h}} \right)^p + \left ( \frac{\widetilde{\mu}}{h}\right)^p \right]. 
		\label{eq:DifferenceChemPots7}
	\end{equation}
	Next, we consider $\widetilde{\mu} < 0$ and focus on the cases $p \in \{ 0,1,2 \}$. 

	From \cite[Eq.~7.1.13]{AbraStegun1972} we know that
	\begin{equation}
		\frac{1}{x + \sqrt{x^2 + 2}} < \exp\left( x^2 \right) \int_x^{\infty} \exp\left( -t^2 \right) \de t \leq \frac{1}{x + \sqrt{x^2 + 4/\pi}}.
		\label{eq:DifferenceChemPots8}
	\end{equation}
	It is elementary to check that for $a>0$ we have
	\begin{align}
		\int_a^{\infty} x \exp(-x^2) \de x &= \frac{1}{2} \exp(-a^2), \nonumber \\
		\int_a^{\infty} x^2 \exp(-x^2) \de x &= \frac{1}{2} \int_a^{\infty} \exp(-x^2) \de x + \frac{a}{2} \exp(-a^2).
		\label{eq:DifferenceChemPots9}
	\end{align} 
	In combination, \eqref{eq:DifferenceChemPots8} and \eqref{eq:DifferenceChemPots9} imply 
	\begin{equation}
		Z_0(\widetilde{\mu}) \sim \frac{1}{\sqrt{\beta h}} \frac{1}{1 -\widetilde{\mu} \sqrt{\beta/h}}, \qquad Z_1(\widetilde{\mu}) \sim \frac{1}{\beta h} \quad \text{ and } \quad Z_2(\widetilde{\mu}) \lesssim \frac{1}{(\beta h)^{3/2}} \left( 1 - \widetilde{\mu} \sqrt{\beta/h} \right).
		\label{eq:DifferenceChemPots10}
	\end{equation}
	Let us recall here that $a \sim b$ iff $a \lesssim b$ and $b \lesssim a$, see  Section~\ref{sec:intro}.
	
	Finally, we use the bounds in \eqref{eq:DifferenceChemPots6}, \eqref{eq:DifferenceChemPots7} and \eqref{eq:DifferenceChemPots10} to show that the second factor in \eqref{eq:DifferenceChemPots3} equals $1 + O(\beta (1+|\widetilde{\mu}|))$. In combination with \eqref{eq:DifferenceChemPots1} and \eqref{eq:DifferenceChemPots3}, we conclude that
	\begin{equation}
		\frac{Z_1(\widetilde{\mu})}{Z_0(\widetilde{\mu})} \left[ 1 + O(\beta(1+|\widetilde{\mu}|)) \right] = \frac{Z_1(\mu)}{Z_0(\mu)}.
		\label{eq:DifferenceChemPots14}
	\end{equation} 
	Before we can use this bound, we need some a-priori information on $\widetilde{\mu}$ showing that $|\beta \widetilde{\mu}| = o(1)$ (this is necessary in case of the assumption $M(N) \gesssim N^{2/3}$). Let us first assume that $-\beta \widetilde{\mu} \to c > 0$ as $N \to \infty$. Using dominated convergence, we check that this implies $M \lesssim 1$. This contradicts $M \gtrsim N^{2/3}$, and hence we have $|\beta \widetilde{\mu}| = o(1)$ for negative $\widetilde{\mu}$. Next we assume $\widetilde{\mu} > 0$ and $\widetilde{\mu}/2h \in \mathbb{N}$. We have
	\begin{equation}
		M(N) \geq \frac{\sum_{n=0}^{\infty} n \exp( -\beta ( h n^2 - \widetilde{\mu} n ) )}{\sum_{n=0}^{\infty} \exp( -\beta  ( n^2 - \widetilde{\mu} n )}	\geq \frac{\widetilde{\mu}}{2h}	\frac{\sum_{n\geq  \widetilde{\mu}/(2h)}^{\infty} \exp( -\beta h ( n - \widetilde{\mu}/(2h) )^2 )}{\sum_{n=-\infty}^{\infty} \exp( -\beta h ( n - \widetilde{\mu}/(2h) )^2 )} = \frac{\widetilde{\mu}}{4h} = N \frac{\widetilde{\mu}}{2 \hat{v}(0)},
		\label{eq:DifferenceChemPots15_0}
	\end{equation} 
	and hence $\widetilde{\mu} \leq 2 \hat{v}(0)$. In combination, the above considerations and the fact that the left-hand side of \eqref{eq:DifferenceChemPots1} is strictly monotone increasing in $\widetilde{\mu}$, allow us to conclude that $|\beta \widetilde{\mu}| \leq o(1) $. In the next step, we use this information and \eqref{eq:DifferenceChemPots14} to derive a bound for the absolute value of the difference of the two chemical potentials $\widetilde{\mu}$ and $\mu$. 
	
	To that end, we first introduce the function
	\begin{equation}
		\Theta(x) \coloneqq \frac{ 1 + \sqrt{\pi} x \exp(x^2) \mathrm{erfc}(- x) }{\exp(x^2) \mathrm{erfc}(- x)},
		\label{eq:DifferenceChemPots15}
	\end{equation}
	where $\mathrm{erfc}(x) = (2/\sqrt{\pi}) \int_x^{\infty} \exp(-t^2) \de t$, as well as the notations $\widetilde{\eta} = \widetilde{\mu} \sqrt{\beta/(4h)}$ and $\eta =\mu \sqrt{\beta/(4h)}$. A short computation, compare with \cite[Eqs.~C.3--C.6]{BocDeuSto-24}, shows that \eqref{eq:DifferenceChemPots15} can be written in terms of $\Theta$ as 
	\begin{equation}
		\Theta(\widetilde{\eta}) \left[ 1 + O(\beta(1+|\widetilde{\mu}|) \right] = \Theta(\eta).
		\label{eq:DifferenceChemPots16}
	\end{equation}
	It is straightforward to check that the function $\Theta$ is strictly positive for all $x \in \mathbb{R}$, strictly monotone increasing, continuous, satisfies $\Theta(x) \simeq \sqrt{\pi} x$ for $x \to \infty$, and $\Theta(x) \simeq \sqrt{\pi} /(2|x|)$ for $x \to -\infty$. Moreover, $\Theta'$ is also continuous and has the asymptotic behavior $\Theta'(x) \simeq \sqrt{\pi}$ for $x \to \infty$ and $\Theta'(x) \simeq \sqrt{\pi}/(2 x^2)$ for $x \to -\infty$. Using the strict monotonicity and the continuity of $\Theta$, we check that $\widetilde{\eta} \simeq \eta$ holds. 
	
	To obtain our final bounds, we apply the first order Taylor expansion 
	\begin{equation}
		\Theta(\widetilde{\eta}) =  \Theta(\eta) + \int_0^1 \Theta'(\widetilde{\eta} + t (\eta - \widetilde{\eta})) \de t (\widetilde{\eta} - \eta)
		\label{eq:DifferenceChemPots17}
	\end{equation}
	and conclude that
	\begin{equation}
		| \widetilde{\eta} - \eta | \lesssim \frac{\beta (1+|\widetilde{\mu}| ) | \Theta(\eta) |   }{\int_0^1 \Theta'(\widetilde{\eta} + t (\eta - \widetilde{\eta})) \de t}.
		\label{eq:DifferenceChemPots18}
	\end{equation}
	Let us distinguish two cases and first assume that $\widetilde{\eta}, \eta \gtrsim - 1$. The asymptotic behavior of $\Theta$ and $\Theta'$ described above implies
	\begin{equation}
		| \widetilde{\eta} - \eta | \lesssim \beta (1+|\widetilde{\mu}| ) |\widetilde{\eta}| \leq \beta (1+|\widetilde{\mu}| ) (|\eta| + |\eta - \widetilde{\eta}|), 
		\label{eq:DifferenceChemPots19}
	\end{equation}
	and hence $| \widetilde{\mu} - \mu | \lesssim \beta (1+|\widetilde{\mu}| ) |\mu|$. Using this bound and $|\mu| \lesssim 1$, we find $| \widetilde{\mu} - \mu | \lesssim \beta |\mu|$, which proves the claim in this case. Next, we assume $\widetilde{\eta}, \eta \lesssim- 1$. In this case, the bound
	\begin{equation}
		\int_0^1 \Theta'(\widetilde{\eta} + t (\eta - \widetilde{\eta})) \de t \gtrsim \int_0^1 \frac{1}{(\widetilde{\eta} + t (\eta - \widetilde{\eta}))^2} \de t = \frac{1}{\widetilde{\eta} \eta}
		\label{eq:DifferenceChemPots20}
	\end{equation}
	and \eqref{eq:DifferenceChemPots18} imply
	\begin{equation}
		| \widetilde{\eta} - \eta | \lesssim \beta (1+|\widetilde{\mu}| ) |\widetilde{\eta} | \leq \beta (1+|\widetilde{\mu}| ) |\eta | + \beta (1+|\widetilde{\mu}| ) |\widetilde{\eta} - \eta|.
		\label{eq:DifferenceChemPots21}
	\end{equation}
	Above we already argued that this implies $| \widetilde{\mu} - \mu | \lesssim \beta |\mu |$, and hence the claim of Lemma~\ref{lem:CompChemPotentials} is proved. 
\end{proof}

In our analysis we need to compare the minima of our discrete and continuous free energy functionals. The bound we apply is proved in the following lemma. 

\begin{lemma}\label{lem:comparisonContinuousDiscreteCondensateFreeEnergy}
    We consider the limit $N \to \infty$, $\beta N^{2/3} \to \kappa \in (0,\infty)$. Let $M(N)$ be a sequence of nonnegative real numbers that satisfies $N^{2/3} \lesssim M(N) \lesssim N$. Then we have
    \begin{equation}
	   | F_{\mathrm{c}}^{\mathrm{BEC}}(\beta,M) - F^{\mathrm{BEC}}(\beta,M) | \lesssim N^{1/3}.
	   \label{eq:ComparisonCondensateFreeEnergies}
    \end{equation}
\end{lemma}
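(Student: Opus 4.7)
My approach exploits Legendre duality between the free energy functionals and their log-partition functions. Setting $h = \hat{v}(0)/(2N)$ and $f_\nu(x) = \exp(-\beta(hx^2 - \nu x))$, the Gibbs variational principle gives
\begin{equation*}
\beta\left[ F^{\mathrm{BEC}}(\beta,M) + \frac{\hat{v}(0)M^2}{2N} \right] = \sup_{\nu \in \mathbb{R}}\left[ -\ln Z_d(\nu) + \beta \nu M \right],
\end{equation*}
where $Z_d(\nu) = \sum_{n=0}^\infty f_\nu(n)$ and the supremum is attained at the discrete chemical potential $\widetilde{\mu}$ from \eqref{eq:GibbsdistributiomDiscrete}. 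The analogous identity holds for $F_{\mathrm{c}}^{\mathrm{BEC}}(\beta,M)$ with continuous partition function $Z_c(\nu) = \int_0^\infty f_\nu(x)\,\de x$ and optimizer $\mu$ from \eqref{eq:GibbsDistributionDiscrete}. Using each chemical potential as a trial value in the opposite supremum yields the two-sided sandwich
\begin{equation*}
\ln\frac{Z_c(\mu)}{Z_d(\mu)} \leq \beta\left[ F^{\mathrm{BEC}}(\beta,M) - F_{\mathrm{c}}^{\mathrm{BEC}}(\beta,M) \right] \leq \ln\frac{Z_c(\widetilde{\mu})}{Z_d(\widetilde{\mu})},
\end{equation*}
so the task reduces to estimating $|\ln(Z_c(\nu)/Z_d(\nu))|$ for $\nu \in \{\mu, \widetilde{\mu}\}$.

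I would next control $Z_d(\nu) - Z_c(\nu)$ by the Euler--Maclaurin formula \eqref{eq:EulerMaclaurin} (in the limit $K \to \infty$, using $f_\nu(\infty) = 0$):
\begin{equation*}
Z_d(\nu) - Z_c(\nu) = \frac{1}{2} - \int_0^\infty f_\nu'(x)\,P_1(x)\,\de x, \qquad |P_1| \leq 1.
\end{equation*}
Since $f_\nu$ is monotone decreasing for $\nu \leq 0$ and attains its unique maximum at $\nu/(2h)$ for $\nu > 0$, a direct computation gives $\int_0^\infty |f_\nu'|\,\de x = 1$ in the first case and $= 2 \exp(\beta\nu^2/(4h)) - 1$ in the second, so that $|Z_d(\nu) - Z_c(\nu)| \lesssim \max\{1, \exp(\beta \nu^2/(4h))\}$. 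The $\mathrm{erfc}$-based asymptotics \eqref{eq:DifferenceChemPots7}--\eqref{eq:DifferenceChemPots10} established in the proof of Lemma~\ref{lem:CompChemPotentials} supply matching lower bounds: $Z_c(\nu) \gtrsim \exp(\beta\nu^2/(4h))/\sqrt{\beta h}$ for $\nu > 0$ and $Z_c(\nu) \gtrsim (\sqrt{\beta h} + \beta|\nu|)^{-1}$ for $\nu \leq 0$.

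I would then pin down the range of $\nu$. Lemma~\ref{lem:ChemPotBECCont}(c) together with the hypotheses $N^{2/3} \lesssim M \lesssim N$ and $\beta \sim N^{-2/3}$ yields $|\mu| \lesssim 1$; Lemma~\ref{lem:CompChemPotentials} (applicable since $M \gtrsim N^{2/3}$) then gives $|\widetilde{\mu} - \mu| \lesssim \beta |\mu|$, and hence $|\widetilde{\mu}| \lesssim 1$ as well. Combining these ingredients produces the uniform estimate
\begin{equation*}
\frac{|Z_d(\nu) - Z_c(\nu)|}{Z_c(\nu)} \lesssim \sqrt{\beta h} + \beta|\nu| \lesssim N^{-5/6} + N^{-2/3} \lesssim N^{-2/3}
\end{equation*}
for both $\nu = \mu$ and $\nu = \widetilde{\mu}$, whence $|\ln(Z_c(\nu)/Z_d(\nu))| \lesssim N^{-2/3}$. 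Dividing by $\beta$ gives $|F^{\mathrm{BEC}}(\beta,M) - F_{\mathrm{c}}^{\mathrm{BEC}}(\beta,M)| \lesssim 1$, which is considerably stronger than the claimed bound $N^{1/3}$.

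The main technical obstacle I foresee is ensuring that the lower bound on $Z_c(\nu)$ is genuinely uniform across the crossover region $|\nu| \sim \sqrt{h/\beta} \sim N^{-1/6}$, where neither the Gaussian regime ($|\nu|\sqrt{\beta/h} \ll 1$) nor the exponential-tail regime ($|\nu|\sqrt{\beta/h} \gg 1$) dominates. The $\mathrm{erfc}$ sandwich \eqref{eq:DifferenceChemPots8} provides the interpolating estimate $Z_c(\nu) \sim 1/(\sqrt{\beta h}(1 + |\nu|\sqrt{\beta/h}))$ uniformly for $\nu \leq 0$, which makes this step routine once the preceding setup is in place.
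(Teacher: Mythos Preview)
Your proof is correct and in fact yields the sharper bound $|F^{\mathrm{BEC}}(\beta,M) - F_{\mathrm{c}}^{\mathrm{BEC}}(\beta,M)| \lesssim 1$. The overall toolkit matches the paper's---Lemma~\ref{lem:ChemPotBECCont}(c) for $|\mu|\lesssim 1$, Lemma~\ref{lem:CompChemPotentials} for $|\widetilde{\mu}|\lesssim 1$, and Euler--Maclaurin to compare $Z_d$ and $Z_c$---but the organization is genuinely different. The paper first replaces $\mu$ by $\widetilde{\mu}$ in the continuous grand potential via a concavity (first-order Taylor) argument, incurring the error $(\widetilde{\mu}-\mu)(\widetilde{M}-M)$, which it bounds crudely by $\beta\cdot N = N^{1/3}$; only then does it compare $Z_c(\widetilde{\mu})$ with $Z_d(\widetilde{\mu})$. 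Your Legendre-duality sandwich bypasses this substitution entirely: by evaluating the ratio $Z_c/Z_d$ at \emph{both} optimizers $\mu$ and $\widetilde{\mu}$ you never have to estimate $\widetilde{M}-M$, and the whole error collapses to the partition-function ratio $\lesssim \sqrt{\beta h}+\beta|\nu| \lesssim N^{-2/3}$. The paper's route is slightly more pedestrian but has the minor advantage of yielding the one-sided statement in Corollary~\ref{cor:ComparisonCondensateGrandPotentials} as a direct by-product; your approach gives the same thing (indeed for both $\mu$ and $\widetilde{\mu}$) from the ratio bound.
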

\begin{proof}
    We recall the definition of $\mu^{\mathrm{BEC}}$ in \eqref{eq:GibbsdistributiomDiscrete} and denote by $\mu$ the chemical potential leading to an expected number of $M(N)$ particles in the continuous effective condensate theory in \eqref{eq:condensatefunctionalcontinuous}. From the assumption $M \gesssim N^{2/3}$ and part~(c) of Lemma~\ref{lem:ChemPotBECCont} we know that $|\mu| \lesssim 1$. The same assumption and an application of Lemma~\ref{lem:CompChemPotentials} show $| \mu - \mu^{\mathrm{BEC}} | \lesssim \beta | \mu| $. In particular, $| \mu - \mu^{\mathrm{BEC}} | \lesssim \beta$ and $| \mu^{\mathrm{BEC}} | \lesssim 1$.
    
In the first step, we replace $\mu$ by $\mu^{\mathrm{BEC}}$ in $F_{\mathrm{c}}^{\mathrm{BEC}}(\beta,M)$ and we start with a lower bound:
\begin{align}
	F_{\mathrm{c}}^{\mathrm{BEC}}(\beta,M) &= -\frac{1}{\beta} \ln \left( \int_{\mathbb{C}} \exp\left( - \beta \left( \frac{\hat{v}(0)}{2N} |z|^4 - \mu |z|^2 \right) \right) \de z \right)  + \mu M - \frac{\hat{v}(0) M^2}{2N} \label{eq:UpperBoundAndi5} \\
    &\leq -\frac{1}{\beta} \ln\left( \int_{\mathbb{C}} \exp\left(-\beta\left( \frac{\hat{v}(0)}{2N} |z|^4 - \mu^{\mathrm{BEC}} |z|^2 \right) \right) \de z \right) + \mu^{\mathrm{BEC}}  M - \frac{\hat{v}(0) M^2}{2N} + (\mu^{\mathrm{BEC}} - \mu ) ( \widetilde{M} - M ), \nonumber
\end{align}
where 
\begin{equation}
	\widetilde{M} = \frac{\int_{\mathbb{C}} |z|^2 \exp\left(-\beta\left( \frac{\hat{v}(0)}{2N} |z|^4 - \mu^{\mathrm{BEC}} |z|^2 \right) \right) \de z}{\int_{\mathbb{C}}  \exp\left(-\beta\left( \frac{\hat{v}(0)}{2N} |z|^4 - \mu^{\mathrm{BEC}} |z|^2 \right) \right) \de z}.
	\label{eq:UpperBoundAndi6}
\end{equation}
To obtain this bound, we used that the first term after the smaller or equal sign is concave in $\mu$, and that its first derivative with respect to $\mu^{\mathrm{BEC}}$ equals $-\widetilde{M}$. If we use $| \mu^{\mathrm{BEC}} - \mu | \lesssim \beta$ and assume that $\widetilde{M} \lesssim N$, we see that the last term on the right-hand side of \eqref{eq:UpperBoundAndi5} is bounded from above by a constant times $N^{1/3}$. The bound $\widetilde{M} \lesssim N$ follows from $|\mu^{\mathrm{BEC}}| \lesssim 1$, \eqref{eq:DifferenceChemPots6}, \eqref{eq:DifferenceChemPots7}, and \eqref{eq:DifferenceChemPots10}. 

To obtain the reverse inequality we again use the convexity of the first term after the smaller or equal sign \eqref{eq:UpperBoundAndi5} and Lemma~\ref{lem:CompChemPotentials}, which gives
\begin{align}
	F_{\mathrm{c}}^{\mathrm{BEC}}(\beta,M) \geq -\frac{1}{\beta} \ln\left( \int_{\mathbb{C}} \exp\left(-\beta\left( \frac{\hat{v}(0)}{2N} |z|^4 - \mu^{\mathrm{BEC}} |z|^2 \right) \right) \de z \right) + \mu^{\mathrm{BEC}} M - \frac{\hat{v}(0) M^2}{2N}.
    \label{eq:UpperBoundAndi6b}
\end{align}

In the second step, we apply the same coordinate transformation as in \eqref{eq:FreeEnergyBECApriori21} and use \eqref{eq:DifferenceChemPots2} to write
\begin{align}
	\int_{\mathbb{C}}  \exp\left(-\beta\left( \frac{\hat{v}(0)}{2N} |z|^4 - \mu^{\mathrm{BEC}} |z|^2 \right) \right) \de z =&  \sum_{n=0}^{\infty} \exp\left(-\beta\left( \frac{\hat{v}(0)}{2N} n^2 - \mu^{\mathrm{BEC}} n \right) \right) + \frac{1}{2} \label{eq:UpperBoundAndi7}  \\
	& - \beta \int_0^{\infty} (\hat{v}(0)x/N - \mu^{\mathrm{BEC}}) \exp\left(-\beta\left( \frac{\hat{v}(0)}{2N} x^2 - \mu^{\mathrm{BEC}} x \right) \right) P_1(x) \de x. \nonumber
\end{align}
Here $P_1$ denotes a periodized Bernoulli function that satisfies $|P_1(x)| \leq 1$ for all $x \in \mathbb{R}$. Using additionally $|\mu^{\mathrm{BEC}}| \lesssim 1$, \eqref{eq:DifferenceChemPots6}, \eqref{eq:DifferenceChemPots7}, and \eqref{eq:DifferenceChemPots10} it is straightforward to check that
\begin{equation}
	\frac{1}{\beta} \left| \ln\left( \int_{\mathbb{C}}  \exp\left(-\beta\left( \frac{\hat{v}(0)}{2N} |z|^4 - \mu^{\mathrm{BEC}} |z|^2 \right) \right) \de z \right) - \ln\left( \sum_{n=0}^{\infty} \exp\left(-\beta\left( \frac{\hat{v}(0)}{2N} n^2 - \mu^{\mathrm{BEC}} n \right) \right) \right) \right| \lesssim 1. 
	\label{eq:UpperBoundAndi8}
\end{equation}
When we put \eqref{eq:UpperBoundAndi5} and \eqref{eq:UpperBoundAndi8} together this proves the claim of the lemma.
\end{proof}

For reference in the main text we also state the following corollary for the grand potential of our effective condensate theory. Its proof is given in \eqref{eq:UpperBoundAndi8}.

\begin{corollary}
\label{cor:ComparisonCondensateGrandPotentials}
    We consider the limit $N \to \infty$, $\beta/ \beta_{\mathrm{c}}\to \kappa \in (0,\infty)$ with $\beta_{\mathrm{c}}$ in \eqref{eq:crittemp} and assume that $| \mu | \lesssim 1$. Then we have
    \begin{equation}
	   -\frac{1}{\beta} \ln \left( \int_{\mathbb{C}} \exp\left( - \beta \left( \frac{\hat{v}(0)}{2N} |z|^4 - \mu |z|^2 \right) \right) \de z \right) \leq -\frac{1}{\beta} \ln\left( \sum_{n=0}^{\infty} \exp\left(-\beta\left( \frac{\hat{v}(0)}{2N} n^2 - \mu n \right) \right) \right) + C.
	   \label{eq:ComparisonCondensateGrandPotentials}
    \end{equation}
\end{corollary}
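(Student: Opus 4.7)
The plan is to follow exactly the argument already carried out in the derivation of \eqref{eq:UpperBoundAndi8} inside the proof of Lemma~\ref{lem:comparisonContinuousDiscreteCondensateFreeEnergy}: that computation yields a two-sided $\beta^{-1}\log$-distance of $O(1)$ between the integral and the sum under the sole hypothesis $|\mu|\lesssim 1$, and is not in any way tied to the specific choice $\mu=\mu^{\mathrm{BEC}}$ used there. The corollary then follows by trivial rearrangement.

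First I would use the polar coordinate transformation recorded in \eqref{eq:FreeEnergyBECApriori21} to reduce the two-dimensional integral over $\mathbb{C}$ to a one-dimensional integral
\[
Z_0(\mu) := \int_{\mathbb{C}} e^{-\beta(h|z|^4-\mu|z|^2)}\,\de z = \int_0^{\infty} e^{-\beta(hx^2-\mu x)}\,\de x,
\]
where $h=\hat v(0)/(2N)\sim N^{-1}$. I would then invoke the Euler--Maclaurin identity \eqref{eq:DifferenceChemPots2} with $p=0$ to obtain
\[
S(\mu):=\sum_{n=0}^{\infty} e^{-\beta(hn^2-\mu n)} = Z_0(\mu) - \tfrac12 - \beta\!\int_0^\infty(2hx-\mu)\,e^{-\beta(hx^2-\mu x)} P_1(x)\,\de x,
\]
with $|P_1|\le 1$. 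The entire problem thus reduces to showing that $S(\mu)/Z_0(\mu)=1+O(\beta)$ uniformly for $|\mu|\lesssim 1$, since taking $-\beta^{-1}\log$ of a factor $1+O(\beta)$ produces an $O(1)$ shift.

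For the constant $\tfrac12$ one notes that the two-sided bounds \eqref{eq:DifferenceChemPots6}--\eqref{eq:DifferenceChemPots7} (for $\mu\ge 0$) and \eqref{eq:DifferenceChemPots10} (for $\mu<0$), together with $\beta\sim N^{-2/3}$ and $h\sim N^{-1}$, give $Z_0(\mu)\gtrsim N^{2/3}\sim\beta^{-1}$ throughout the admissible range. For the Euler--Maclaurin remainder I would bound
\[
\left|\beta\!\int_0^\infty (2hx-\mu) e^{-\beta(hx^2-\mu x)}P_1(x)\,\de x\right|\le \beta\bigl(2h\,Z_1(\mu)+|\mu|Z_0(\mu)\bigr),\qquad Z_1(\mu)=\int_0^\infty x\,e^{-\beta(hx^2-\mu x)}\,\de x,
\]
and observe that $\beta|\mu|=O(\beta)$ trivially, while $h Z_1(\mu)/Z_0(\mu)$ is, up to an absolute constant, the expectation of $h|z|^2$ under the one-sided truncated Gaussian on $\mathbb{R}_+$; the same $Z_0,Z_1$ estimates show this is $O(1)$ when $|\mu|\lesssim 1$, in both sign regimes. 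Combining these yields $S(\mu)=Z_0(\mu)(1+O(\beta))$, which gives
\[
-\beta^{-1}\log S(\mu) = -\beta^{-1}\log Z_0(\mu) + O(1),
\]
i.e.\ the inequality stated in Corollary~\ref{cor:ComparisonCondensateGrandPotentials} (in fact its two-sided version).

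The only mildly technical step is checking the uniformity of the estimates on $Z_0(\mu)$ and $Z_1(\mu)/Z_0(\mu)$ across the sign change of $\mu$, since the asymptotic formulas for these integrals take slightly different forms in the two regimes. This book-keeping is however already carried out inside the proof of Lemma~\ref{lem:CompChemPotentials}, and one merely needs to quote those bounds. No substantive obstacle arises.
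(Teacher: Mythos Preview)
Your proposal is correct and follows exactly the paper's approach: the paper's entire proof of the corollary is the single sentence ``Its proof is given in \eqref{eq:UpperBoundAndi8}'', and you have accurately identified that the derivation of \eqref{eq:UpperBoundAndi8} (via the coordinate change \eqref{eq:FreeEnergyBECApriori21}, the Euler--Maclaurin identity \eqref{eq:DifferenceChemPots2}, and the $Z_p$-bounds \eqref{eq:DifferenceChemPots6}--\eqref{eq:DifferenceChemPots10}) depends only on $|\mu|\lesssim 1$ and not on the particular choice $\mu=\mu^{\mathrm{BEC}}$. Aside from an immaterial sign in the Euler--Maclaurin remainder, your write-up simply makes explicit the ``straightforward'' verification the paper alludes to above \eqref{eq:UpperBoundAndi8}.
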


The next two lemmas provide us with bounds for the variance and the moment generating function of our continuous effective condensate theory that are needed in Section~\ref{sec:asymptoticsCondensateDistribution}. We start with the bounds for the variance.
\begin{lemma} \label{lem:asymptoticsVariances}
    We consider the limit $N \to \infty$, $\beta N^{2/3} \to \kappa \in (0,\infty)$. Let $M(N)$ be a sequence of nonnegative real numbers that satisfies $0 \leq M(N) \lesssim N$. Let $g$ be given as in \eqref{eq:GibbsDistributionDiscrete} with $\widetilde{N}_0 = M(N)$ and denote its variance by $\textbf{Var}(\beta,N)$. 
    \begin{enumerate}[label=(\alph*)]
        \item If $M(N) \gg N^{5/6}$ the variance satisfies 
        \begin{equation}
            \lim_{N \to \infty} \frac{\beta \hat{v}(0) \textbf{Var}(\beta,N)}{N} = 1.
            \label{eq:varianceBounda}
        \end{equation}
        \item If $M(N) = t N^{5/6}$ with some fixed $t \in \mathbb{R}$ then the parameter $\sigma = \mu \sqrt{\beta N/(2\hat{v}(0))}$ does not depend on $N$ and we have
        \begin{equation}
            \lim_{N \to \infty} \frac{\beta \hat{v}(0) \textbf{Var}(\beta,N)}{2N} = \frac{\int_{-\sigma}^{\infty} x^2 \exp(-x^2) \de x}{\int_{-\sigma}^{\infty} \exp(-x^2) \de x} - \left( \frac{\int_{-\sigma}^{\infty} x \exp(-x^2) \de x}{\int_{-\sigma}^{\infty} \exp(-x^2) \de x} \right)^2.
            \label{eq:varianceBoundb}
        \end{equation}
        \item If $ 1 \ll M(N) \ll N^{5/6}$ we have
        \begin{equation}
            \lim_{N \to \infty} (-\beta \mu)^2 \textbf{Var}(\beta,N) = \lim_{N \to \infty} \frac{\textbf{Var}(\beta,N)}{[M(N)]^2} = 1.
            \label{eq:varianceBoundc}
        \end{equation}
    \end{enumerate}
\end{lemma}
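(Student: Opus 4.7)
The unifying step will be an explicit change of variables which reduces the variance to a ratio of truncated Gaussian integrals on a half-line, after which the three parameter regimes differ only in how the lower endpoint behaves. Writing $h=\hat v(0)/(2N)$ and $\sigma=\mu\sqrt{\beta/(4h)}=\mu\sqrt{\beta N/(2\hat v(0))}$, I would first apply the same polar/radial change of variables as in \eqref{eq:FreeEnergyBECApriori21} to rewrite, for each $k$,
\begin{equation*}
\int_{\mathbb{C}} |z|^{2k} e^{-\beta(h|z|^4-\mu |z|^2)}\de z = \int_0^\infty x^k e^{-\beta(hx^2-\mu x)}\de x.
\end{equation*}
Completing the square $-\beta h x^2+\beta\mu x=-\beta h(x-\mu/(2h))^2+\beta\mu^2/(4h)$ and substituting $y=\sqrt{\beta h}(x-\mu/(2h))$ then turns the integrand into $((y/\sqrt{\beta h})+\mu/(2h))^k e^{-y^2}$ on $[-\sigma,\infty)$. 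The $\mu/(2h)$-shift cancels in the variance computation, leaving the clean identity
\begin{equation}\label{eq:plan-var-formula}
\beta h\,\mathbf{Var}(\beta,N) = \frac{J_2(\sigma)}{J_0(\sigma)} - \left(\frac{J_1(\sigma)}{J_0(\sigma)}\right)^{\!2}, \qquad J_k(\sigma):=\int_{-\sigma}^{\infty} y^k e^{-y^2}\de y.
\end{equation}
Since $\beta h=\beta\hat v(0)/(2N)$, part (b) is then immediate: under the assumption $M=tN^{5/6}$ the computation in \eqref{eq:FreeEnergyBECApriori23} shows $\sigma$ is independent of $N$ (and equal to the value appearing in \eqref{eq:varianceBoundb}), so \eqref{eq:plan-var-formula} is exactly the desired identity.

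For part (a), I would use Lemma~\ref{lem:ChemPotBECCont}(a), which gives $\mu\simeq\hat v(0)M/N$ up to exponentially small corrections. Combined with $M\gg N^{5/6}$ this forces $\sigma=\mu\sqrt{\beta N/(2\hat v(0))}\to+\infty$, so $-\sigma\to-\infty$ and the truncated Gaussian integrals $J_0,J_1,J_2$ in \eqref{eq:plan-var-formula} converge to $\sqrt{\pi}$, $0$, $\sqrt{\pi}/2$ respectively by dominated convergence. The ratio in \eqref{eq:plan-var-formula} thus tends to $1/2$, which together with the prefactor yields \eqref{eq:varianceBounda}.

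For part (c), I would first verify via Lemma~\ref{lem:ChemPotBECCont}(b)/(c) that $\mu<0$ with $-\mu\gtrsim 1/(\beta M)$, so that $a:=-\sigma\sim N^{5/6}/M\to+\infty$ under $M\ll N^{5/6}$. The variance identity \eqref{eq:plan-var-formula} now involves integrals over $[a,\infty)$ with $a\to\infty$, and the explicit formulas $J_1=\tfrac12 e^{-a^2}$, $J_2=\tfrac{a}{2}e^{-a^2}+\tfrac12 J_0$ reduce everything to the asymptotics of Mills' ratio $f(a):=e^{a^2}J_0=\tfrac{1}{2a}\bigl(1-\tfrac{1}{2a^2}+\tfrac{3}{4a^4}-\cdots\bigr)$, for which \eqref{eq:DifferenceChemPots8} provides a rigorous starting point. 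The main obstacle of the proof, which I expect to require the most care, is the following delicate cancellation: a naive expansion gives
\begin{equation*}
\frac{a}{2f(a)}+\frac12 = a^2+\tfrac12+\tfrac12+O(a^{-2}), \qquad \frac{1}{4f(a)^2}=a^2+1+O(a^{-2}),
\end{equation*}
so the leading $a^2$ terms and the constants both cancel, and \eqref{eq:plan-var-formula} produces a result of size $1/(4a^2)$ that only emerges at subleading order of Mills' expansion. Expanding Mills' ratio to the next order carefully, I get $\beta h\,\mathbf{Var}\sim 1/(4a^2)$, hence $\mathbf{Var}\sim 1/(4a^2\beta h)=1/(\mu\beta)^2$, which proves $(-\beta\mu)^2\mathbf{Var}\to 1$. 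To obtain the second limit in \eqref{eq:varianceBoundc}, I would combine this with the expansion of $M$ already performed in the proof of Lemma~\ref{lem:perturbationTheoryChemicalPotential}: the same next-order Mills cancellation in \eqref{eq:perturbationTheoryChemicalPotential3_0} gives $M=\mu/(2h)+g(-\sqrt{2}\sigma)/\sqrt{2\beta h}\sim 1/(2a\sqrt{\beta h})=-1/(\beta\mu)$, so $-\beta\mu M\to 1$ and therefore $\mathbf{Var}/M^2=(\beta\mu)^2\mathbf{Var}/(\beta\mu M)^2\to 1$, completing part (c).
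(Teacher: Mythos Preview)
Your approach is correct, and for parts (a) and (b) it coincides with the paper's proof: the same change of variables leading to your identity \eqref{eq:plan-var-formula} is exactly the paper's \eqref{eq:varianceBound2}, and the limits $\sigma\to+\infty$ (resp.\ $\sigma$ fixed) are handled in the same way.

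For part (c) your route diverges from the paper's and is noticeably more laborious. You stay with the Gaussian tail integrals $J_k(\sigma)$ and extract the answer from a third-order Mills' ratio expansion, which forces you through the two-step cancellation you identify (both the $a^2$ terms and the constants drop out, leaving only the $1/(4a^2)$ contribution). This is valid---repeated integration by parts gives the alternating expansion with rigorous remainder bounds, though note that \eqref{eq:DifferenceChemPots8} alone is one order too crude, so you would need to carry the expansion further than what is stated there. The paper instead performs a second substitution before analyzing the integrals: writing $y=-\beta\mu\, x$ in the original radial integral converts it to $\int_0^\infty y^p\exp(-y-y^2/(4\sigma^2))\de y$ (their \eqref{eq:varianceBound3}), from which dominated convergence immediately yields the moments $p!$ of the exponential distribution, and $(-\beta\mu)^2\mathbf{Var}\to 2-1^2=1$ with no cancellation whatsoever. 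The same substitution also gives $-\beta\mu\, M\to 1$ directly, bypassing the second Mills computation you outline for $M$. So while your argument reaches the same conclusion, the paper's rescaling sidesteps the delicate step you flagged as the main obstacle.
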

\begin{proof}
    We recall the notation $h = \hat{v}(0)/(2N)$. The $p$-th moment of $g$ reads
    \begin{equation}
        \int_0^{\infty} x^p g(\sqrt{x}) \de x = \frac{\int_{-\sigma}^{\infty} \left( x/\sqrt{\beta h} + \frac{\mu}{2h} \right)^p \exp(-x^2) \de x}{ \int_{-\sigma}^{\infty} \exp(-x^2) \de x },
        \label{eq:varianceBound1}
    \end{equation}
    and hence
    \begin{equation}
        \beta h \textbf{Var}(\beta,N) = \frac{\int_{-\sigma}^{\infty} x^2 \exp(-x^2) \de x}{\int_{-\sigma}^{\infty} \exp(-x^2) \de x} - \left( \frac{\int_{-\sigma}^{\infty} x \exp(-x^2) \de x}{\int_{-\sigma}^{\infty} \exp(-x^2) \de x} \right)^2.
        \label{eq:varianceBound2}
    \end{equation}
    Using \eqref{eq:varianceBound1} with $p = 1$ one easily checks that $\sigma \to +\infty$ if $M(N) \gg N^{5/6}$. In this limit the right-hand side of \eqref{eq:varianceBound2} converges to $1/2$, which proves \eqref{eq:varianceBounda}. Similarly, we see that $\sigma$ does not depend on $N$ provided $M(N) = t N^{5/6}$ with some fixed $t \in \mathbb{R}$, and hence \eqref{eq:varianceBoundb} holds. 
    
    Finally, we consider the case $1 \ll M(N) \ll N^{5/6}$, where we have $\sigma \ll - 1$. Here we use the identity $\beta \mu = 2 \sigma \sqrt{\beta h}$ to write
    \begin{equation}
        \int_0^{\infty} x^p g(\sqrt{x}) \de x = \left(\frac{1}{-\beta \mu} \right)^p \frac{\int_{0}^{\infty} x^p \exp(-x-x^2/(4 \sigma^2)) \de x}{ \int_{0}^{\infty} \exp(-x-x^2/(4 \sigma^2)) \de x }.
        \label{eq:varianceBound3}
    \end{equation}
    For the variance this implies
    \begin{equation}
        (-\beta\mu)^2 \mathbf{Var}(\beta,N) = \frac{\int_{0}^{\infty} x^2 \exp(-x-x^2/(4 \sigma^2)) \de x}{ \int_{0}^{\infty} \exp(-x-x^2/(4 \sigma^2)) \de x } - \left( \frac{\int_{0}^{\infty} x \exp(-x-x^2/(4 \sigma^2)) \de x}{ \int_{0}^{\infty} \exp(-x-x^2/(4 \sigma^2)) \de x } \right)^2.
        \label{eq:varianceBound4}
    \end{equation}
    Since $\sigma \ll -1$ an application of the dominated convergence theorem shows
    \begin{equation}
        \lim_{N \to \infty} (-\beta\mu)^2 \mathbf{Var}(\beta,N) = \frac{\int_{0}^{\infty} x^2 \exp(-x) \de x}{ \int_{0}^{\infty} \exp(-x) \de x } - \left( \frac{\int_{0}^{\infty} x \exp(-x) \de x}{ \int_{0}^{\infty} \exp(-x) \de x } \right)^2 = 1.
        \label{eq:varianceBound5}
    \end{equation}
    In combination with the fact that $M(N) \simeq (-\beta\mu)^{-1}$ in this parameter regime, \eqref{eq:varianceBound5} proves \eqref{eq:varianceBoundc}.
\end{proof}

\begin{lemma} \label{lem:BoundCenteredCondensateDistribution}
    We consider the limit $N \to \infty$, $\beta N^{2/3} \to \kappa \in (0,\infty)$. Let $M(N)$ be a sequence of nonnegative real numbers that satisfies $N^{2/3} \leq M(N) \leq N$. We recall the definition of the random variable $\widetilde{X}_{\beta,N}$ in \eqref{eq:widetildeX} with $\widetilde{N}_0 = M(N)$ and assume that $0 < \lambda_0 < 1$. There exists a constant $C > 0$, which is independent of $N$, such that
    \begin{equation}
        \lim_{N \to \infty} \mathbf{E}\left( \exp(\lambda |\widetilde{X}_{\beta,N}|) \right) \leq C
        \label{eq:BoundCenteredCondensateDistributiona}
    \end{equation}
    holds for every $0 < \lambda < \lambda_0$.
\end{lemma}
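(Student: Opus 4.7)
The plan is to reduce everything to a classical question about the moment generating function (MGF) of a standardized truncated Gaussian. Writing $h=\hat v(0)/(2N)$ and $\sigma=\mu\sqrt{\beta/(4h)}$ as in Lemma~\ref{lem:asymptoticsVariances}, the substitution $u=x\sqrt{2\beta h}$ converts the density $f_N(x)\propto e^{-\beta hx^2+\beta\mu x}\mathds{1}(x\ge 0)$ into the density of $\mathcal N(\sqrt 2\sigma,1)$ truncated to $[0,\infty)$. Denoting the latter random variable by $V_\sigma$, with mean $m_\sigma=\sqrt 2\sigma+\phi(\sqrt 2\sigma)/\Phi(\sqrt 2\sigma)$ and variance $s_\sigma^2$, one obtains $\widetilde X_{\beta,N}=(V_\sigma-m_\sigma)/s_\sigma$ in law. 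The constraints $N^{2/3}\le M(N)\le N$ allow $\sigma_N$ to be an essentially arbitrary sequence in $\mathbb{R}$, so the lemma reduces to the purely classical estimate
\begin{equation*}
\sup_{\sigma\in\mathbb R}\ \mathbf E\bigl(\exp(\lambda|V_\sigma-m_\sigma|/s_\sigma)\bigr)\le C(\lambda),\qquad\forall\lambda\in(0,\lambda_0)\subset(0,1).
\end{equation*}

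To prove this, I would split the MGF at $v=m_\sigma$ and, after the shift $w=v-\sqrt 2\sigma$, complete the square in the Gaussian exponent. This yields the closed-form expression
\begin{equation*}
\mathbf E\bigl(e^{\lambda|\widetilde X_{\beta,N}|}\bigr)=\frac{e^{\lambda^2/(2s_\sigma^2)}}{\Phi(\sqrt 2\sigma)}\Bigl(e^{-\lambda r_\sigma/s_\sigma}[1-\Phi(r_\sigma-\lambda/s_\sigma)]+e^{\lambda r_\sigma/s_\sigma}[\Phi(r_\sigma+\lambda/s_\sigma)-\Phi(-\sqrt 2\sigma+\lambda/s_\sigma)]\Bigr),
\end{equation*}
where $r_\sigma=m_\sigma-\sqrt 2\sigma=\phi(\sqrt 2\sigma)/\Phi(\sqrt 2\sigma)$. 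The map $\sigma\mapsto\mathbf E(e^{\lambda|\widetilde X_\sigma|})$ is continuous on $\mathbb{R}$, so uniform boundedness follows once one controls the limits $\sigma\to\pm\infty$.

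The case $\sigma\to+\infty$ is routine: the truncation becomes negligible, $r_\sigma\to0$, $s_\sigma\to1$, and each $\Phi$-factor above converges to a normal CDF evaluated at a bounded argument, giving $\lim_{\sigma\to+\infty}\mathbf E(e^{\lambda|\widetilde X_\sigma|})=\mathbf E(e^{\lambda|Z|})<\infty$ for $Z\sim\mathcal N(0,1)$. The delicate case is $\sigma\to-\infty$, where the exponential limit of Lemma~\ref{lem:asymptoticsVariances}(c) emerges. Setting $a=-\sqrt 2\sigma\to+\infty$ and using Mills' ratio $\phi(a)/(1-\Phi(a))=a+a^{-1}+O(a^{-3})$, I would derive the expansions $r_\sigma=a+a^{-1}+O(a^{-3})$, $s_\sigma^{-2}=a^2+O(1)$, and $\Phi(\sqrt 2\sigma)=e^{-a^2/2}/(a\sqrt{2\pi})(1+O(a^{-2}))$. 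Substituting these expansions into the closed-form expression, the terms proportional to $a^2$ in the exponent cancel exactly -- specifically, $(\lambda^2/2-\lambda)a^2-(1-\lambda)^2a^2/2+a^2/2=0$ -- yielding the finite limits $1/(1-\lambda)$ and $1$ for the two summands, respectively.

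The main obstacle is the exact cancellation of $O(a^2)$ terms in the exponent as $a\to+\infty$; this cancellation must be carried through to a finite remainder, and it is precisely the mechanism which forces the condition $\lambda<1$. Indeed, the limiting value $1/(1-\lambda)+1$ is exactly $\mathbf{E}(e^{\lambda|E-1|})$ up to a constant for $E\sim\mathrm{Exp}(1)$, whose one-sided MGF has radius of convergence $1$, matching the limit $\widetilde X_{\beta,N}\Rightarrow E-1$ established in the asymptotic analysis of Section~\ref{sec:asymptoticsCondensateDistribution}. Combining continuity on any compact $\sigma$-interval with the two limiting regimes via a simple compactness argument yields the $\sigma$-uniform bound, and the lemma follows.
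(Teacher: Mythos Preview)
Your approach is correct and follows the same overall strategy as the paper: reduce to a one-parameter family of truncated Gaussians indexed by $\sigma$, treat the easy regime $\sigma\to+\infty$ (Gaussian limit), and identify $\sigma\to-\infty$ as the critical regime (exponential limit) where the constraint $\lambda<1$ enters.

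The execution differs in one substantive point. You stay in the Gaussian scale throughout, write the MGF in closed form via $\Phi$, and then rely on Mills' ratio asymptotics to extract the $\sigma\to-\infty$ limit; as you correctly flag, this forces you to verify an exact cancellation of the $O(a^2)$ terms in the exponent. The paper instead switches to the exponential scale in that regime: substituting $x\mapsto x/(-\beta\mu)$ rewrites the integrand as $\exp(-x-x^2/(4\sigma^2)+\lambda|x+\beta\mu M|/(-\beta\mu\sqrt{\mathbf{Var}}))$, after which dominated convergence and the already-established facts $-\beta\mu M\to 1$, $-\beta\mu\sqrt{\mathbf{Var}}\to 1$ give $\lim=\int_0^\infty e^{-x+\lambda|x-1|}\,dx$ directly, with no exponent cancellation to track. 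Your route yields a genuinely uniform bound over all $\sigma\in\mathbb{R}$ at once, which is slightly stronger than the subsequence/three-regime argument the paper uses; the paper's route is computationally lighter because the regime-adapted rescaling makes the exponential limit transparent. Both are valid.
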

\begin{proof}
    We have
    \begin{equation}
        \mathbf{E}(\exp(\lambda | \widetilde{X}_{\beta,N}|)) = \frac{\int_0^{\infty} \exp\left(\lambda \left| \frac{x-M(N)}{\sqrt{\mathbf{Var}(\beta,N)}} \right| - \beta \left( h x^2 - \mu x \right) \right) \de x}{\int_0^{\infty} \exp\left( - \beta \left( h x^2 - \mu x \right) \right) \de x}
        \label{eq:BoundCenteredCondensateDistribution1}
    \end{equation}
    with the variance $\mathbf{Var}(\beta,N)$ of the random variable $X_{\beta,N}$ defined below \eqref{eq:condensateDistribution11}. We first consider the case $M(N) \gg N^{5/6}$, where we have $\sigma \gg 1$. We write \eqref{eq:BoundCenteredCondensateDistribution1} as 
    \begin{equation}
        \frac{ \int_{-\sigma}^{\infty} \exp\left(\lambda \left| \frac{x+\sqrt{\beta h} ( \mu/(2h) - M(N))}{\sqrt{\beta h} \sqrt{\mathbf{Var}(\beta,N)}} \right| - x^2 \right) \de x}{\int_{-\sigma}^{\infty} \exp\left( - x^2 \right) \de x}.
        \label{eq:BoundCenteredCondensateDistribution2}
    \end{equation}
    From \eqref{eq:FreeEnergyBECApriori23} we know that
    \begin{equation}
        \lim_{N \to \infty} \sqrt{\beta h} \left( \frac{\mu}{2h} - M(N) \right) = - \frac{\int_{-\infty}^{\infty} x \exp(-x^2) \de x }{\int_{-\infty}^{\infty} \exp(-x^2) \de x} = 0.
        \label{eq:BoundCenteredCondensateDistribution3}
    \end{equation}
    Moreover, an application of part~(a) of Lemma~\ref{lem:asymptoticsVariances} shows $\lim_{N \to \infty} \sqrt{\beta h } \sqrt{\mathbf{Var}(\beta,N)} = 1/\sqrt{2}$ and we conclude that
    \begin{equation}
        \lim_{N \to \infty} \mathbf{E}(\exp(\lambda | \widetilde{X}_{\beta,N}|)) = \frac{1}{\sqrt{\pi}} \int_{-\infty}^{\infty} \exp\left(\lambda \sqrt{2} \left| x \right| - x^2 \right) \de x.
        \label{eq:BoundCenteredCondensateDistribution4}
    \end{equation}

    If $M(N) = t N^{5/6}$ with $t \in \mathbb{R}$ we know that $\sigma \in \mathbb{R}$ is fixed. In this case we still use \eqref{eq:BoundCenteredCondensateDistribution2} but replace \eqref{eq:BoundCenteredCondensateDistribution3} by
    \begin{equation}
        -\lim_{N \to \infty} \sqrt{\beta h} \left( \frac{\mu}{2h} - M(N) \right) = \frac{\int_{-\sigma}^{\infty} x \exp(-x^2) \de x }{\int_{-\sigma}^{\infty} \exp(-x^2) \de x} \eqqcolon A.
        \label{eq:BoundCenteredCondensateDistribution5}
    \end{equation}
    We also use part~(b) of Lemma~\ref{lem:asymptoticsVariances} to compute $\lim_{N \to \infty} \sqrt{\beta h } \sqrt{\mathbf{Var}(\beta,N)} \eqqcolon B$. We highlight that $B$ is given by the square root of the right-hand side of \eqref{eq:varianceBoundb}. In combination, these considerations imply
    \begin{equation}
        \lim_{N \to \infty} \mathbf{E}(\exp(\lambda | \widetilde{X}_{\beta,N}|)) = \frac{ \int_{-\sigma}^{\infty} \exp\left(\lambda \left| \frac{x - A}{B} \right| - x^2 \right) \de x }{\int_{-\sigma}^{\infty} \exp\left(- x^2 \right) \de x}.
        \label{eq:BoundCenteredCondensateDistribution6}
    \end{equation}

    Finally, we consider the case $N^{2/3} \leq M(N) \ll N^{5/6}$, where $\sigma \ll -1$. Here we argue as in \eqref{eq:varianceBound3} to write \eqref{eq:BoundCenteredCondensateDistribution1} as 
    \begin{equation}
        \frac{\int_0^{\infty} \exp\left(-x - \frac{x^2}{4 \sigma^2} + \lambda \left| \frac{x + \beta \mu M}{-\beta \mu \sqrt{\mathbf{Var}(\beta,N)}} \right| \right) \de x}{\int_0^{\infty} \exp\left(-x - \frac{x^2}{4 \sigma^2} \right) \de x}.
        \label{eq:BoundCenteredCondensateDistribution7}
    \end{equation}
    When we apply part~(c) of Lemma~\ref{lem:asymptoticsVariances} and dominated convergence on the right-hand side of \eqref{eq:BoundCenteredCondensateDistribution7}, we find
    \begin{equation}
        \lim_{N \to \infty} \mathbf{E}(\exp(\lambda | \widetilde{X}_{\beta,N}|)) = \frac{\int_0^{\infty} \exp\left(-x + \lambda \left| x-1 \right| \right) \de x}{\int_0^{\infty} \exp\left(-x \right) \de x}.
        \label{eq:BoundCenteredCondensateDistribution8}
    \end{equation}
    
    To prove the claim of the lemma, we choose $C$ as the maximum of the right-hand sides of \eqref{eq:BoundCenteredCondensateDistribution4}, \eqref{eq:BoundCenteredCondensateDistribution6}, and \eqref{eq:BoundCenteredCondensateDistribution8} with $\lambda = \lambda_0 < 1$. 
\end{proof}

In the last lemma in Appendix~\ref{app:effcondensate} we compute the characteristic function of $\widetilde{X}_{\beta,N}$ defined in \eqref{eq:condensateDistribution15} in the three parameter regimes that have been considered already in the preceding two lemmas. Also this lemma finds application in Section~\ref{sec:asymptoticsCondensateDistribution}.
\begin{lemma} \label{lem:lemmaCharacteristicFunction}
    We consider the limit $N \to \infty$, $\beta N^{2/3} \to \kappa \in (0,\infty)$. Let $M(N)$ be a sequence of nonnegative real numbers that satisfies $N^{2/3} \leq M(N) \leq N$. Let $\widetilde{X}_{\beta,N}$ be the random variable defined in \eqref{eq:condensateDistribution15} with $\widetilde{N}_0 = M(N)$ and let 
    \begin{equation}
        \phi_{\beta,N}(t) = \mathbf{E}( e^{\mathrm{i} t \widetilde{X}_{\beta,N}} )
        \label{eq:lemmaCharacteristicFunctiona}
    \end{equation}
    be its characteristic function.  
    \begin{enumerate}[label=(\alph*)]
        \item If $M(N) \gg N^{5/6}$ the characteristic function satisfies 
        \begin{equation}
            \lim_{N \to \infty} \phi_{\beta,N}(t) = \exp\left( -t^2/2 \right).
            \label{eq:lemmaCharacteristicFunctionb}
        \end{equation}
        The right-hand side is the characteristic function of a standard normal distribution.
        \item If $M(N) = t N^{5/6}$ with some fixed $t \in \mathbb{R}$ the parameter $\sigma = \mu \sqrt{\beta N/(2\hat{v}(0))}$ does not depend on $N$ and we have
        \begin{equation}
        \lim_{N \to \infty} \phi_{\beta,N}(t) = \frac{ \int_{\frac{-\sigma-A}{B}}^{\infty}\exp\left(\mathrm{i}t x - \left( x B + A \right)^2 \right) \de x }{\int_{\frac{-\sigma-A}{B}}^{\infty} \exp\left(-\left( x B + A \right)^2 \right) \de x}.
        \label{eq:lemmaCharacteristicFunctionc}
        \end{equation}
        Here $A$ is given as in \eqref{eq:BoundCenteredCondensateDistribution5} and $B$ as below that equation. The right-hand side is the characteristic function of the probability distribution
        \begin{equation}
            f_{\sigma,A,B}(x) = \frac{\exp\left(-\left( x B + A \right)^2 \right)}{\int_{\frac{-\sigma-A}{B}}^{\infty} \exp\left(-\left( x B + A \right)^2 \right) \de x}
            \label{eq:lemmaCharacteristicFunctiond}
        \end{equation}
        on the interval $[(-\sigma-A)/B,\infty)$.
        \item If $ N^{2/3} \leq M(N) \ll N^{5/6}$ we have
        \begin{equation}
            \lim_{N \to \infty} \phi_{\beta,N}(t) = \frac{e^{-\mathrm{i} t}}{1-\mathrm{i}t}. 
            \label{eq:lemmaCharacteristicFunctione}
        \end{equation}
        The right-hand side is the characteristic function of the probability distribution $f(x) = \exp(-(1+x))$ on the interval $[-1,\infty)$.
    \end{enumerate}
\end{lemma}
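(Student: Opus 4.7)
\textbf{Proof proposal for Lemma~\ref{lem:lemmaCharacteristicFunction}.}
The plan is to reduce the characteristic function to an integral where the asymptotic behavior is manifest after the same coordinate changes used in the proof of Lemma~\ref{lem:asymptoticsVariances}. Writing $h = \hat v(0)/(2N)$ and recalling $g$ from \eqref{eq:GibbsDistributionDiscrete}, the first move in every case is
\begin{equation*}
\phi_{\beta,N}(t) = e^{-\mathrm{i}t M/\sqrt{\mathbf{Var}(\beta,N)}} \cdot \frac{\int_0^\infty \exp\bigl(\mathrm{i}t x/\sqrt{\mathbf{Var}(\beta,N)} - \beta(hx^2 - \mu x)\bigr) \de x}{\int_0^\infty \exp\bigl(-\beta(hx^2 - \mu x)\bigr) \de x}.
\end{equation*}
Depending on the size of $M(N)$, I would then apply one of two substitutions and appeal to dominated convergence, using the uniform exponential tail bounds from Lemma~\ref{lem:BoundCenteredCondensateDistribution} as a majorant.

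For part (a), I would substitute $x = y/\sqrt{\beta h} + \mu/(2h)$, which maps the integration domain to $[-\sigma,\infty)$ and turns $-\beta(hx^2 - \mu x)$ into $\beta\mu^2/(4h) - y^2$. The constant factor cancels between numerator and denominator. By \eqref{eq:FreeEnergyBECApriori23}, $\sqrt{\beta h}\,(\mu/(2h) - M) \to 0$, and by part~(a) of Lemma~\ref{lem:asymptoticsVariances}, $\sqrt{\beta h\, \mathbf{Var}(\beta,N)} \to 1/\sqrt{2}$ and $-\sigma \to -\infty$. Dominated convergence then yields the standard Gaussian integral $\pi^{-1/2}\int_{-\infty}^\infty e^{\mathrm{i}t\sqrt{2}\,y - y^2}\de y = e^{-t^2/2}$, proving \eqref{eq:lemmaCharacteristicFunctionb}. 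For part (b) the same substitution applies, but now $\sigma$, $A$ from \eqref{eq:BoundCenteredCondensateDistribution5}, and $B = \sqrt{\beta h\,\mathbf{Var}(\beta,N)}$ all have finite nonzero limits given by part~(b) of Lemma~\ref{lem:asymptoticsVariances}. Performing a second change of variable $u = (y - A)/B$ (or leaving the expression in $y$ and matching against \eqref{eq:lemmaCharacteristicFunctiond}) immediately produces the characteristic function of $f_{\sigma,A,B}$ claimed in \eqref{eq:lemmaCharacteristicFunctionc}.

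For part (c), where $\sigma \ll -1$, the Gaussian substitution is no longer useful; instead I would use $x = u/(-\beta\mu)$, which turns the Boltzmann weight into $\exp(-u - u^2/(4\sigma^2))$ on $[0,\infty)$. Writing the argument of the complex exponential as $\mathrm{i}tu/[(-\beta\mu)\sqrt{\mathbf{Var}(\beta,N)}]$ and using part~(c) of Lemma~\ref{lem:asymptoticsVariances} together with $M \simeq (-\beta\mu)^{-1}$ gives $(-\beta\mu)\sqrt{\mathbf{Var}(\beta,N)} \to 1$ and $M/\sqrt{\mathbf{Var}(\beta,N)} \to 1$, while $u^2/(4\sigma^2) \to 0$ pointwise. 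Dominated convergence (with dominator $e^{-u/2}$, valid since $\sigma^2 \to \infty$) yields
\begin{equation*}
\phi_{\beta,N}(t) \to e^{-\mathrm{i}t} \cdot \frac{\int_0^\infty e^{(\mathrm{i}t - 1)u}\de u}{\int_0^\infty e^{-u}\de u} = \frac{e^{-\mathrm{i}t}}{1 - \mathrm{i}t},
\end{equation*}
which is \eqref{eq:lemmaCharacteristicFunctione}. The final identifications of each limiting characteristic function with the probability distributions stated are routine Fourier computations.

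The main (minor) obstacle is producing an integrable majorant uniform in $N$ to legitimize the passage to the limit inside the integrals, particularly across the regime $M(N) \sim N^{5/6}$ where neither the Gaussian nor the exponential substitution is self-evidently stable. This is handled by separating the three regimes from the outset (as the statement does) and relying on Lemma~\ref{lem:BoundCenteredCondensateDistribution}, whose proof already produced uniform exponential moments for $|\widetilde{X}_{\beta,N}|$ precisely under the same substitutions. Everything else reduces to the explicit asymptotics for $\sigma$, $A$, $B$ and $\mathbf{Var}(\beta,N)$ established earlier in Appendix~\ref{app:effcondensate}.
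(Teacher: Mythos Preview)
Your proposal is correct and follows essentially the same approach as the paper, which simply states that the proof is ``almost literally the same as that of Lemma~\ref{lem:BoundCenteredCondensateDistribution}'' and omits the details. You have in fact written out precisely those omitted details: the same two coordinate changes (the Gaussian substitution for parts (a) and (b), the exponential substitution $x = u/(-\beta\mu)$ for part (c)), the same inputs from Lemma~\ref{lem:asymptoticsVariances} and \eqref{eq:FreeEnergyBECApriori23}, and the same appeal to dominated convergence backed by the exponential moment bounds of Lemma~\ref{lem:BoundCenteredCondensateDistribution}.
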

\begin{proof}
The proof of the above lemma is, except for some straightforward computations, almost literally the same as that of Lemma~\ref{lem:BoundCenteredCondensateDistribution}, and therefore omitted.
\end{proof}
\section{Properties of the effective chemical potential}
\label{app:effectiveChemicalPotential}
In this section we investigate equation \eqref{eq:GrantCanonicalEffectiveIddealGasChemPot} for the effective chemical potential appearing in our statements for the grand potential. The first lemma guarantees the existence of a unique solution and provides a priori bounds.
\begin{lemma}\label{lem:effectiveChemicalPotentialAPriori}
    The following three statements hold:
    \begin{enumerate}[label=(\alph*)]
    \item Assume that $\beta,\eta, \hat{v}(0) > 0$ and $\mu \in \mathbb{R}$. The equation
    \begin{equation}
        \sum_{p \in \Lambda^*} \frac{1}{e^{\beta(p^2 - \widetilde{\mu})}-1} = \frac{(\mu - \widetilde{\mu})\eta}{\hat{v}(0)}
        \label{eq:GrantCanonicalEffectiveIddealGasChemPotv2}
    \end{equation}
    for $\widetilde{\mu}$ admits a unique solution in the set $(-\infty,0)$.  
    \item We consider the limit $\eta \to \infty$, $\beta/\beta_{\mathrm{c}}(\eta) \to \kappa  \in (0,\infty)$ with $\beta_{\mathrm{c}}$ in \eqref{eq:crittemp}. We assume that $\mu$, which may depend on $\eta$, satisfies $-\eta^{2/3} \lesssim \mu \lesssim 1$. There exists a constant $c>0$ such that the unique solution to \eqref{eq:GrantCanonicalEffectiveIddealGasChemPotv2} satisfies 
    \begin{equation}
        c \leq \mu - \widetilde{\mu} \leq c^{-1}.
        \label{eq:muMinusMu0Bound}
    \end{equation}
    Moreover, if $\mu \geq 0$ then $-\widetilde{\mu} \lesssim 1$ and if $\mu < 0$ we have $-\widetilde{\mu} \lesssim \eta^{2/3}$. 
    \item Under the assumptions stated in part~(b) there exists a constant $c > 0$ such that
    \begin{equation}
        c \eta \leq \sum_{p \in \Lambda^*} \frac{1}{e^{\beta(p^2 - \widetilde{\mu})}-1} \leq c^{-1} \eta.
    \end{equation}
    \end{enumerate}
\end{lemma}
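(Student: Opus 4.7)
The plan is to prove part~(a) by a direct monotonicity argument, to derive part~(b) from two-sided size estimates on the left-hand side
\[
f(\widetilde{\mu}) := \sum_{p \in \Lambda^*} \frac{1}{e^{\beta(p^2-\widetilde{\mu})}-1}
\]
of~\eqref{eq:GrantCanonicalEffectiveIddealGasChemPotv2} combined with the equation itself, and to obtain part~(c) as an immediate consequence of part~(b).

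For part~(a), I would observe that $\widetilde{\mu} \mapsto f(\widetilde{\mu})$ is a strictly increasing continuous bijection from $(-\infty, 0)$ onto $(0, +\infty)$: the $p=0$ term $1/(e^{-\beta\widetilde{\mu}}-1)$ diverges as $\widetilde{\mu} \to 0^-$, while monotone convergence shows that the entire sum tends to $0$ as $\widetilde{\mu} \to -\infty$. Meanwhile the right-hand side $(\mu-\widetilde{\mu})\eta/\hat{v}(0)$ is affine and strictly decreasing in $\widetilde{\mu}$, taking finite values on $(-\infty, 0]$. The two graphs therefore cross at exactly one point.

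The heart of the argument is part~(b), which I would split into two steps. In the first step I would show $-\widetilde{\mu} \lesssim \eta^{2/3}$ via the a priori exponential bound
\[
f(\widetilde{\mu}) \leq 2 e^{\beta\widetilde{\mu}} \sum_{p \in \Lambda^*} e^{-\beta p^2} \lesssim e^{\beta\widetilde{\mu}} \beta^{-3/2} \sim e^{\beta\widetilde{\mu}} \eta,
\]
valid whenever $-\beta\widetilde{\mu} \geq \ln 2$ (using $e^x - 1 \geq e^x/2$ for $x \geq \ln 2$ and comparing the theta sum to its Gaussian integral). Were $-\widetilde{\mu} \geq A\eta^{2/3}$ for a large constant $A$, this estimate would give $f(\widetilde{\mu}) \lesssim e^{-A\kappa}\eta$, while the assumption $\mu \geq -\eta^{2/3}$ combined with the equation forces $(\mu-\widetilde{\mu})\eta/\hat{v}(0) \geq (A-1)\eta^{5/3}/\hat{v}(0)$; the two are incompatible for $A$ and $\eta$ sufficiently large. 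In the second step I would prove matching two-sided estimates $\eta \lesssim f(\widetilde{\mu}) \lesssim \eta$, which yield $c \leq \mu-\widetilde{\mu} \leq c^{-1}$ after dividing the equation by $\eta/\hat{v}(0)$. The lower bound exploits that $\beta|\widetilde{\mu}|$ is now uniformly bounded by Step~1, so each term $(e^{\beta(p^2-\widetilde{\mu})}-1)^{-1}$ is bounded below by a fixed positive constant for the $\sim \eta$ lattice points with $\beta p^2 \lesssim 1$. For the upper bound I would split off the $p=0$ term: the $p\neq 0$ part is dominated by monotonicity in $\widetilde{\mu}$ by the corresponding sum at $\widetilde{\mu}=0$, which is $\sim \eta$, while the $p=0$ term $(e^{-\beta\widetilde{\mu}}-1)^{-1} \leq 1/(-\beta\widetilde{\mu})$ is $\lesssim \eta^{2/3}/\delta \lesssim \eta$ whenever $-\widetilde{\mu} \geq \delta$ for a fixed constant $\delta > 0$. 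In the complementary regime $-\widetilde{\mu} \leq \delta$, the upper bound on $\mu-\widetilde{\mu}$ is immediate from $\mu \lesssim 1$. The supplementary assertions on $-\widetilde{\mu}$ are then purely algebraic: $-\widetilde{\mu} \leq \mu-\widetilde{\mu} \lesssim 1$ when $\mu \geq 0$, and $-\widetilde{\mu} = (\mu-\widetilde{\mu}) - \mu \lesssim 1 + \eta^{2/3} \lesssim \eta^{2/3}$ when $\mu < 0$.

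Part~(c) then follows by substituting the bounds from part~(b) into $f(\widetilde{\mu}) = (\mu-\widetilde{\mu})\eta/\hat{v}(0)$. The main technical obstacle is Step~1: the exponential decay $e^{\beta\widetilde{\mu}}$ of $f$ in $\beta|\widetilde{\mu}|$ must beat the linear growth of $(\mu-\widetilde{\mu})\eta$ precisely on the scale $-\widetilde{\mu} \sim \eta^{2/3}$, and the temperature assumption $\beta\eta^{2/3} \to \kappa$ is exactly what turns $e^{-A\kappa}$ into a genuine small parameter capable of dominating the polynomial factor $\eta^{2/3}$ coming from the equation.
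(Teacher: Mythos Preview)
Your proof is correct and follows the same three-part structure as the paper: monotonicity/IVT for (a), two-sided size bounds on the Bose--Einstein sum for (b), and direct substitution for (c). The main difference lies in Step~1 of your part~(b). You use an exponential bound $f(\widetilde{\mu}) \lesssim e^{\beta\widetilde{\mu}}\eta$ together with a contradiction argument, whereas the paper uses the sharper polynomial bound $f(\widetilde{\mu}) \leq \frac{1}{-\beta\widetilde{\mu}} + C\beta^{-3/2}$, obtained by splitting off the $p=0$ term and treating the remainder as a Riemann integral. Rewritten as $\mu - \widetilde{\mu} \lesssim \frac{1}{-\eta^{1/3}\widetilde{\mu}} + 1$, this single inequality does double duty in the paper: it bootstraps directly to $-\widetilde{\mu} \lesssim 1$ (case $\mu \geq 0$) or $-\widetilde{\mu} \lesssim \eta^{2/3}$ (case $\mu < 0$), and once the lower bound $\mu - \widetilde{\mu} \gtrsim 1$ is in hand it also yields the upper bound $\mu - \widetilde{\mu} \lesssim 1$ without further work. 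Your exponential bound is cruder but perfectly adequate for the a priori estimate on $-\widetilde{\mu}$; the trade-off is that you then need a separate argument (monotonicity in $\widetilde{\mu}$ for the $p\neq 0$ part and a case split on the size of $-\widetilde{\mu}$ for the $p=0$ part) to get the upper bound on $\mu - \widetilde{\mu}$, which the paper's route avoids.
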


\begin{proof}
    For $\widetilde{\mu} \in (-\infty,0)$ we define the function
    \begin{equation}
        f(\widetilde{\mu}) = \sum_{p \in \Lambda^*} \frac{1}{e^{\beta(p^2 - \widetilde{\mu})}-1} +  \frac{(\widetilde{\mu} - \mu)\eta}{\hat{v}(0)}.
        \label{eq:EffectiveChemicalPOtential1}
    \end{equation}
    It is not difficult to check that $f$ is continuous, strictly monotone increasing, and satisfies $f(\widetilde{\mu}) \to -\infty$ for $\widetilde{\mu} \to - \infty$ and $f(\widetilde{\mu}) \to +\infty$ for $\widetilde{\mu} \to 0$. This implies part~(a) and it remains to prove parts~(b) and (c).

    With \eqref{eq:GrantCanonicalEffectiveIddealGasChemPotv2}, $\widetilde{\mu} < 0$ and by interpreting the relevant over $p \neq 0$ sum as a Riemann, integral it is not difficult to see that
    \begin{equation}
        \mu - \widetilde{\mu} \lesssim \frac{1}{\eta} \left[ \frac{1}{-\beta \widetilde{\mu}} + \frac{C}{\beta^{3/2}} \right] \lesssim \frac{1}{-\eta^{1/3} \widetilde{\mu}} + 1.
        \label{eq:EffectiveChemicalPOtential2}    
    \end{equation}
    If $\mu \geq 0$ the above inequality implies $-\widetilde{\mu} \lesssim 1$. In the case $\mu < 0$, we additionally use $\mu \gesssim - \eta^{2/3}$ and find $-\widetilde{\mu} \lesssim \eta^{2/3}$.

    We use $-\widetilde{\mu} \lesssim \eta^{2/3}$ again and drop the term in the sum with $p = 0$ to check that
    \begin{equation}
        \mu - \widetilde{\mu} \gesssim \frac{1}{\eta \beta^{3/2}}  \gesssim 1.
        \label{eq:EffectiveChemicalPOtential3} 
    \end{equation}
    To derive an upper bound for $\mu - \widetilde{\mu}$ in the case $\mu < 0$, we combine \eqref{eq:EffectiveChemicalPOtential3} and \eqref{eq:EffectiveChemicalPOtential2} as follows:
    \begin{equation}
        \mu - \widetilde{\mu} \lesssim \frac{1}{-\eta^{1/3} \widetilde{\mu}} + 1 \lesssim \frac{1}{\eta^{1/3}(1-\mu)} + 1 \leq \frac{1}{\eta^{1/3}} + 1 \lesssim 1.
        \label{eq:EffectiveChemicalPOtential4} 
    \end{equation}
    This proves part~(b).

    Part~(c) of Lemma~\ref{lem:effectiveChemicalPotentialAPriori} follows from \eqref{eq:GrantCanonicalEffectiveIddealGasChemPotv2} and \eqref{eq:muMinusMu0Bound}. 
    \end{proof}

    In the second lemma we perturb $\hat{v}(0)$ and study by how much the solution $\widetilde{\mu}$ to \eqref{eq:GrantCanonicalEffectiveIddealGasChemPotv2} changes. 
    \begin{lemma}\label{lem:effectiveChemicalPotentialPerturbation2}
        We consider the limit $\eta \to \infty$, $\beta/\beta_{\mathrm{c}}(\eta) \to \kappa  \in (0,\infty)$ with $\beta_{\mathrm{c}}$ in \eqref{eq:crittemp}. We assume that $\mu$, $\delta$, which may depend on $\eta$, satisfy $-\eta^{2/3} \lesssim \mu \lesssim 1$ and $0 < \delta \lesssim 1$. Let $\mu_1$ be the solution to \eqref{eq:GrantCanonicalEffectiveIddealGasChemPotv2} and let $\mu_2$ be the solution to the same equation with $\hat{v}(0)$ replaced by $\hat{v}(0) + \delta$. Then we have
        \begin{equation}
            0 < \mu_1 - \mu_2 \lesssim \delta.
            \label{eq:EffectiveChemicalPOtential5} 
        \end{equation}
    \end{lemma}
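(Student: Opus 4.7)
My plan is to view both $\mu_1$ and $\mu_2$ as zeros of a single two-parameter function and then combine a one-sided sensitivity estimate in the interaction parameter with a uniform coercivity in the chemical-potential parameter. Specifically, I would introduce
\begin{equation*}
F(x,\lambda) = \sum_{p\in\Lambda^*} \frac{1}{e^{\beta(p^2-x)}-1} - \frac{(\mu-x)\eta}{\lambda}, \qquad x<0,\ \lambda>0,
\end{equation*}
so that $F(\mu_1,\hat v(0))=0=F(\mu_2,\hat v(0)+\delta)$. Direct differentiation shows that $\partial_x F(x,\lambda)>0$, and hence $F(\cdot,\lambda)$ is strictly increasing; this already gives uniqueness of the solutions and will be used to deduce the sign of $\mu_1-\mu_2$.

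The first step is to compare $F(\mu_1,\hat v(0)+\delta)$ and $F(\mu_1,\hat v(0))=0$. A direct computation yields
\begin{equation*}
F(\mu_1,\hat v(0)+\delta) - F(\mu_1,\hat v(0)) = (\mu-\mu_1)\eta \cdot \frac{\delta}{\hat v(0)(\hat v(0)+\delta)}.
\end{equation*}
From part~(b) of Lemma~\ref{lem:effectiveChemicalPotentialAPriori} we have $\mu-\mu_1\sim 1$, so the right-hand side is strictly positive and bounded above by a constant times $\delta\eta$. Since $F(\cdot,\hat v(0)+\delta)$ is strictly increasing and vanishes at $\mu_2$, the positivity gives $\mu_2<\mu_1$, i.e.\ $\mu_1-\mu_2>0$.

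For the quantitative upper bound I would apply the mean value theorem in the first argument:
\begin{equation*}
F(\mu_1,\hat v(0)+\delta) = F(\mu_1,\hat v(0)+\delta) - F(\mu_2,\hat v(0)+\delta) = (\mu_1-\mu_2)\cdot \partial_x F(\xi,\hat v(0)+\delta)
\end{equation*}
for some $\xi\in(\mu_2,\mu_1)$. Since $\partial_x F(x,\lambda) = \sum_p \beta e^{\beta(p^2-x)}/(e^{\beta(p^2-x)}-1)^2 + \eta/\lambda \geq \eta/(\hat v(0)+\delta) \gtrsim \eta$ uniformly in $x<0$, combining this with the upper bound $F(\mu_1,\hat v(0)+\delta)\lesssim \delta\eta$ from the previous paragraph yields
\begin{equation*}
\mu_1-\mu_2 \lesssim \frac{\delta\eta}{\eta} = \delta,
\end{equation*}
as claimed. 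No step here is really an obstacle: the only place one needs to be attentive is to ensure that the a priori bound $\mu-\mu_1\sim 1$ from Lemma~\ref{lem:effectiveChemicalPotentialAPriori} indeed applies uniformly after perturbing $\hat v(0)$ by $\delta$, which is immediate since $\hat v(0)+\delta\sim 1$ under the assumption $0<\delta\lesssim 1$.
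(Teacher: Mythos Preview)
Your proof is correct and follows essentially the same approach as the paper: both subtract the two defining equations, use the mean value theorem in the chemical-potential variable to extract a factor $\mu_1-\mu_2$ times a quantity $\gtrsim\eta$, and bound the residual $\lambda$-variation by $(\mu-\widetilde\mu)\eta\cdot\delta/[\hat v(0)(\hat v(0)+\delta)]\lesssim\delta\eta$ using part~(b) of Lemma~\ref{lem:effectiveChemicalPotentialAPriori}. The only cosmetic difference is that the paper writes the resulting identity with $\mu-\mu_2$ on the right-hand side while you use $\mu-\mu_1$; both are $\sim 1$ by the same lemma.
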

    \begin{proof}
        We define $\Delta \mu = \mu_1 - \mu_2$. Let us subtract the equation for $\mu_2$ from the equation for $\mu_1$. This and a first order Taylor expansion yield the identity
        \begin{equation}
            \Delta \mu \left[ \sum_{p \in \Lambda^*} \frac{1}{4 \sinh^2\left( \frac{\beta(p^2 - \xi) }{2} \right) } + \frac{\eta}{\hat{v}(0)} \right] = \frac{(\mu - \mu_2)\eta}{\hat{v}(0)} \frac{\delta}{\hat{v}(0) + \delta}.
            \label{eq:EffectiveChemicalPOtential6} 
        \end{equation}
        From part~(b) of Lemma~\ref{lem:effectiveChemicalPotentialAPriori} we know that $\mu - \mu_2 > 0$ and we conclude $\mu_1-\mu_2 > 0$. The term in the backets on the left-hand side of \eqref{eq:EffectiveChemicalPOtential6} is bounded from below by a constant times $\eta $. Another application of part~(b) of Lemma~\ref{lem:effectiveChemicalPotentialAPriori} shows that the right-hand side of \eqref{eq:EffectiveChemicalPOtential6} is bounded from above by a constant times $\eta \delta$. In combination, these two bounds prove the claim. 
    \end{proof}
\section{Properties of the ideal Bose gas}
\label{app:idealGas}
In this section we prove bounds related to the ideal Bose gas that are used in the main text. The first lemma provides a bound relating the expected number of particles in the condensate of Bose gases governed by two different one-particle Hamiltonians.
\begin{lemma}
    \label{lem:idealGasWithVariable1ParticleHamiltonian}
    We consider the limit $N \to \infty$, $\beta N^{2/3} \to \kappa \in (0,\infty)$. We choose $h$ as 
    \begin{equation}
    h = \sum_{p\in \Lambda^*} h(p) | \varphi_p \rangle \langle \varphi_p | \quad \text{ with } \quad h(p) = p^2 + \lambda f(p).
    \label{eq:bound1pdm1b}
\end{equation}
Here $\varphi_p(x) = e^{\mathrm{i} p \cdot x}$ and $f : \Lambda^* \to \mathbb{R}$ is a bounded function that satisfies $f(0) = 0$. The absolute value of the parameter $\lambda \in \mathbb{R}$ is chosen small enough such that $h$ satisfies \eqref{eq:generalizedOneParticleHamiltonian}. By $\mu_0(\beta,N,\lambda)$ and $N_0(\beta,N,\lambda)$ we denote chemical potential and the expected numbers of particles in the condensate of the ideal gas related to $h$, respectively. We have
    \begin{equation}
        | \mu_0(\beta,N,\lambda) - \mu_0(\beta,N,0) | \lesssim | \lambda | \ \Vert f \Vert_{\infty} \quad \text{ and } \quad | N_0(\beta,N,\lambda) - N_0(\beta,N,0) | \lesssim \frac{|\lambda| \ \Vert f \Vert_{\infty}}{\beta}.
        \label{eq:appIdealGas2}
    \end{equation}
\end{lemma}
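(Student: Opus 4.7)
The key observation is that $h(p)=p^2+\lambda f(p)$ satisfies $h(0)=0$ because $f(0)=0$, so the zero-momentum mode is unaffected by the perturbation. This means that the equations defining $\mu_0(\beta,N,\lambda)$ and $\mu_0(\beta,N,0)$ differ only through the $p\neq 0$ terms, where the perturbation $\lambda f(p)$ is bounded uniformly by $|\lambda|\,\Vert f\Vert_{\infty}$.

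My plan is to first establish the bound on the chemical potential via a monotonicity/sandwich argument. Define $\mu_{\pm}=\mu_0(\beta,N,0)\pm|\lambda|\,\Vert f\Vert_{\infty}$. For $p\neq 0$ we have $\beta(h(p)-\mu_+)\le\beta(p^2-\mu_0(\beta,N,0))$ because $\lambda f(p)\le|\lambda|\,\Vert f\Vert_{\infty}$, and for $p=0$ we get $-\mu_+<-\mu_0(\beta,N,0)$; both inequalities make the corresponding summand in
\[
\sum_{p\in\Lambda^*}\frac{1}{e^{\beta(h(p)-\mu)}-1}
\]
larger. Hence the sum evaluated at $\mu_+$ is $\ge N$, and strict monotonicity of the function in $\mu$ forces $\mu_0(\beta,N,\lambda)\le\mu_+$. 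The reverse bound using $\mu_-$ is symmetric, giving $|\mu_0(\beta,N,\lambda)-\mu_0(\beta,N,0)|\le|\lambda|\,\Vert f\Vert_{\infty}$.

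For the bound on $N_0$ I would subtract the particle number constraints for the perturbed and unperturbed systems to obtain
\[
N_0(\beta,N,\lambda)-N_0(\beta,N,0)=-\sum_{p\in\Lambda^*_+}\left[\frac{1}{e^{\beta(p^2+\lambda f(p)-\mu_0(\beta,N,\lambda))}-1}-\frac{1}{e^{\beta(p^2-\mu_0(\beta,N,0))}-1}\right].
\]
A first-order Taylor expansion along the straight line between the two exponents yields a representation with integrand proportional to $1/\sinh^2(\beta\xi_p(t)/2)$, times the factor $\beta(\lambda f(p)-(\mu_0(\beta,N,\lambda)-\mu_0(\beta,N,0)))$, which is controlled by $2\beta|\lambda|\,\Vert f\Vert_{\infty}$ thanks to the first bound. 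For $p\in\Lambda^*_+$ we have $p^2\ge(2\pi)^2$, and since $\mu_0(\beta,N,0)<0$ together with $|\lambda|$ sufficiently small, the interpolated quantity satisfies $\xi_p(t)\gtrsim p^2$ uniformly in $t\in[0,1]$. Using $\sinh(x)\ge x$ for $x\ge 0$ then gives $1/\sinh^2(\beta\xi_p(t)/2)\lesssim 1/(\beta p^2)^2$.

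Summing up with the bound $\sum_{p\in\Lambda^*_+}|p|^{-4}<\infty$ in three dimensions produces
\[
|N_0(\beta,N,\lambda)-N_0(\beta,N,0)|\lesssim\frac{|\lambda|\,\Vert f\Vert_{\infty}}{\beta}\sum_{p\in\Lambda^*_+}\frac{1}{|p|^4}\lesssim\frac{|\lambda|\,\Vert f\Vert_{\infty}}{\beta},
\]
as claimed. The main point, and the only mildly delicate step, is the $\xi_p(t)\gtrsim p^2$ lower bound: it is crucial that we sum only over $p\neq 0$ so that the $\sinh^2$ term is far from its singular regime, which is what turns the naive bound $\beta N_0^2\cdot|\Delta\mu|$ (that would be too weak in the condensed phase) into the sharper $|\lambda|\,\Vert f\Vert_{\infty}/\beta$ by exploiting cancellation between the two chemical-potential equations rather than differentiating $N_0(\mu)$ directly.
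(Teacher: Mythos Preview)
Your proof is correct, and for the $N_0$ bound it coincides with the paper's argument essentially verbatim. For the chemical potential bound you take a genuinely different route: a monotonicity/sandwich argument showing $\mu_-\le\mu_0(\beta,N,\lambda)\le\mu_+$ directly, whereas the paper Taylor-expands the \emph{full} identity $\sum_{p\in\Lambda^*}[\text{perturbed}-\text{unperturbed}]=0$ along the line between the two exponents and reads off $\Delta\mu$ as a convex combination of the values $\lambda f(p)$ with positive $\sinh^{-2}$ weights. Your approach is more elementary and even yields the sharp constant $1$ rather than $\lesssim$; the paper's approach has the advantage of treating both bounds with the same mechanism.

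One small caveat worth making explicit: your upper-sandwich step tacitly assumes $\mu_+=\mu_0(\beta,N,0)+|\lambda|\,\Vert f\Vert_\infty<0$, so that the $p=0$ summand at $\mu_+$ is well defined and positive. In the condensed phase $|\mu_0(\beta,N,0)|\sim N^{-1/3}$, so $\mu_+\ge 0$ can occur for $|\lambda|\,\Vert f\Vert_\infty$ of order one. This is harmless, since in that case $\mu_0(\beta,N,\lambda)<0\le\mu_+$ trivially, but you should say so.
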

\begin{proof}
    We apply a first order Taylor expansion to see that
    \begin{align}
        0 &= \sum_{p \in \Lambda^*} \left( \frac{1}{\exp(\beta(h(p) - \mu_0(\beta,N,\lambda)))-1} - \frac{1}{\exp(\beta(p^2 - \mu_0(\beta,N,0)))-1} \right) \nonumber \\
        &= -\int_0^1 \sum_{p \in \Lambda^*} \frac{\lambda f(p) + \mu_0(\beta,N,\lambda) - \mu_0(\beta,N,0) }{\sinh^2 \left( \frac{\beta(t(h(p)-\mu_0(\beta,N,\lambda)) + (1-t)(p^2-\mu_0(\beta,N,0)))}{2} \right)} \de t
        \label{eq:appIdealGas3a}
    \end{align}
    holds. Using \eqref{eq:appIdealGas3a} we obtain $|\mu_0(\beta,N,\lambda) - \mu_0(\beta,N,0)| \lesssim | \lambda | \ \Vert f \Vert_{\infty}$, which proves the first bound in \eqref{eq:appIdealGas2}. Next, we write
    \begin{align}
        N_0(\beta,N,0) - N_0(\beta,N,\lambda) &= \sum_{p \in \Lambda^*_+} \left( \frac{1}{\exp(\beta(h(p) - \mu_0(\beta,N,\lambda)))-1} - \frac{1}{\exp(\beta(p^2 - \mu_0(\beta,N,0)))-1} \right) \nonumber \\
        &= -\int_0^1 \sum_{p \in \Lambda_+^*} \frac{\beta (\lambda f(p) + \mu_0(\beta,N,\lambda) - \mu_0(\beta,N,0)) }{\sinh^2 \left( \frac{\beta(t(h(p)-\mu_0(\beta,N,\lambda)) + (1-t)(p^2-\mu_0(\beta,N,0)))}{2} \right)} \de t.
        \label{eq:appIdealGas3}
    \end{align}
    The absolute value of the second term on the right-hand side is bounded by a constant times $\beta^{-1}(|\lambda| \ \Vert f \Vert_{\infty} + | \mu_0(\beta,N,\lambda) - \mu_0(\beta,N,0) |)$, and hence 
    \begin{equation}
        | N_0(\beta,N,\lambda) - N_0(\beta,N,0)  | \lesssim \frac{|\lambda| \ \Vert f \Vert_{\infty}}{\beta},
        \label{eq:appIdealGas4}
    \end{equation} 
    which proves the second claim.
\end{proof}

The next lemma provides us with bounds for the first derivative of the chemical potential with respect to a perturbation of the one-particle Hamiltonian.

\begin{lemma}
\label{lem:derivativesChemPot}
    We consider the limit $N \to \infty$, $\beta N^{2/3} \to \kappa \in (0,\infty)$. The one-particle Hamiltonian $h$ is chosen as in Lemma~\ref{lem:idealGasWithVariable1ParticleHamiltonian} above and $\mu_0(\beta,N,\lambda)$ denotes again the chemical potential that leads to an expected number of $N$ particles in the ideal gas governed by the one-particle Hamiltonian $h$. We have
    \begin{equation}
        \left| \frac{\partial \mu_0(\beta,N,\lambda)}{\partial \lambda} \right| \lesssim \Vert f \Vert_{\infty}.  
         \label{eq:appIdealGas6}
    \end{equation}
\end{lemma}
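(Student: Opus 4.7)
\medskip

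My plan is to prove the bound by implicit differentiation of the defining equation for $\mu_0(\beta,N,\lambda)$. The starting point is the identity
\begin{equation}
    N = \sum_{p \in \Lambda^*} \frac{1}{\exp\!\big(\beta(p^2+\lambda f(p) - \mu_0(\beta,N,\lambda))\big)-1},
\end{equation}
which is the analogue of \eqref{eq:idealgase1pdmchempot} for the perturbed one-particle Hamiltonian $h$. Under the assumption $\Vert h+\Delta\Vert_\infty \lesssim 1$ (which is ensured by the smallness of $|\lambda|$ built into the definition of $h$), the sum on the right-hand side is smooth in $\lambda$, and so is $\mu_0(\beta,N,\lambda)$.

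Differentiating both sides in $\lambda$ and using $\partial_x \big[(e^x-1)^{-1}\big] = -(4\sinh^2(x/2))^{-1}$, I obtain
\begin{equation}
    0 = -\sum_{p \in \Lambda^*} \frac{\beta\big(f(p) - \partial_\lambda \mu_0(\beta,N,\lambda)\big)}{4\sinh^2\!\big(\beta(p^2+\lambda f(p)-\mu_0(\beta,N,\lambda))/2\big)}.
\end{equation}
Solving for $\partial_\lambda \mu_0(\beta,N,\lambda)$ gives
\begin{equation}
    \partial_\lambda \mu_0(\beta,N,\lambda) = \frac{\displaystyle\sum_{p \in \Lambda^*} \frac{f(p)}{4\sinh^2\!\big(\beta(p^2+\lambda f(p)-\mu_0(\beta,N,\lambda))/2\big)}}{\displaystyle\sum_{p \in \Lambda^*} \frac{1}{4\sinh^2\!\big(\beta(p^2+\lambda f(p)-\mu_0(\beta,N,\lambda))/2\big)}}.
\end{equation}
Since the denominator is strictly positive (the $p=0$ term already contributes a strictly positive summand because $\mu_0 < h(0) = 0$), the quotient is well-defined, and the pointwise bound $|f(p)| \leq \Vert f\Vert_\infty$ yields
\begin{equation}
    \left| \partial_\lambda \mu_0(\beta,N,\lambda) \right| \leq \Vert f\Vert_\infty,
\end{equation}
which is the claim. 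There is no real obstacle here: the lemma follows from one line of implicit differentiation together with the trivial inequality $|f(p)| \leq \Vert f\Vert_\infty$, and neither the explicit asymptotics of $\mu_0$ nor the refined bounds from Lemma~\ref{lem:idealGasWithVariable1ParticleHamiltonian} are needed.
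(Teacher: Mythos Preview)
Your proof is correct and essentially identical to the paper's: both differentiate the defining particle-number identity in $\lambda$, obtain the ratio of the $f(p)$-weighted sum to the unweighted sum of $\sinh^{-2}$ terms, and bound it by $\Vert f\Vert_\infty$. Your version is even slightly cleaner, as you make explicit why the denominator is strictly positive.
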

\begin{proof}
    The chemical potential $\mu_0(\beta,N,\lambda) < 0$ is defined as the unique solution to the equation
    \begin{equation}
        N = \sum_{p \in \Lambda^*} \frac{1}{\exp(\beta(p^2 + \lambda f(p) - \mu_0(\beta,N,\lambda)))}.
        \label{eq:appIdealGas7}
    \end{equation}
    Differentiation of both sides of this equation with respect to $\lambda$ gives
    \begin{equation}
        \frac{\partial \mu_0(\beta,N,\lambda)}{\partial \lambda} \sum_{p \in \Lambda^*} \frac{1}{\sinh^2\left( \frac{\beta(p^2 + \lambda f(p) - \mu(\beta,N,\lambda))}{2} \right)}  =  \sum_{p \in \Lambda^*} \frac{f(p)}{\sinh^2\left( \frac{\beta(p^2 + \lambda f(p) - \mu(\beta,N,\lambda))}{2} \right)}. 
        \label{eq:appIdealGas8}
    \end{equation}
    We take the absolute value on both sides and find
    \begin{equation}
        \left| \frac{\partial \mu_0(\beta,N,\lambda)}{\partial \lambda} \right| \lesssim \sup_{p \in \Lambda^*} | f(p) |,
        \label{eq:appIdealGas9}
    \end{equation}
    which proves the claim. 
\end{proof}
In the last lemma in this section we derive a bound for the derivative of the chemical potential with respect to the particle number.

\begin{lemma}
\label{lem:BoundDerivativeCHemPotWrtN}
    We consider the limit $N \to \infty$, $\beta N^{2/3} \to \kappa \in (0,\infty)$. The chemical potential $\mu_0$ of the ideal gas in \eqref{eq:idealgase1pdmchempot} satisfies the bound
    \begin{equation}
        0 \leq \frac{\partial \mu_0(\beta,N)}{\partial N}  \lesssim \frac{1}{\beta N_0^2(\beta,N) + N^{2/3}}.
    \end{equation}
\end{lemma}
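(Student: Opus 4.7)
The plan is to differentiate the defining equation \eqref{eq:idealgase1pdmchempot} for $\mu_0(\beta,N)$ with respect to $N$. Since
$$
N = \sum_{p\in\Lambda^*}\frac{1}{e^{\beta(p^2-\mu_0)}-1},
$$
differentiation gives
$$
1 = \frac{\partial\mu_0}{\partial N}\sum_{p\in\Lambda^*}\frac{\beta}{4\sinh^2\!\bigl(\beta(p^2-\mu_0)/2\bigr)},
$$
so
$$
\frac{\partial\mu_0}{\partial N} = \Bigl(\beta S\Bigr)^{-1}, \qquad S := \sum_{p\in\Lambda^*}\frac{1}{4\sinh^2\!\bigl(\beta(p^2-\mu_0)/2\bigr)}.
$$
Positivity is immediate since $S>0$, so the task reduces to proving $S \gesssim N_0^2 + N^{4/3}$.

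The $p=0$ contribution is handled by the elementary identity $\sinh^2(-\beta\mu_0/2)=[4N_0(N_0+1)]^{-1}$, which follows from $e^{-\beta\mu_0}=1+N_0^{-1}$. This single term therefore contributes $N_0(N_0+1) \gesssim N_0^2$ (using $N_0\gesssim 1$, which holds throughout the regime $\beta N^{2/3}\to\kappa\in(0,\infty)$).

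For the $p\neq 0$ contribution I will use that, in this regime, $-\mu_0=\beta^{-1}\ln(1+N_0^{-1})\lesssim\beta^{-1}$, so there is a constant $c>0$ such that $\beta(p^2+|\mu_0|)/2 \le 1$ for all $p\in\Lambda_+^*$ with $|p|^2\le c/\beta$. On this set the bound $\sinh(x)\le ex$ (for $x\le 1$) gives $\sinh^{2}(\beta(p^2-\mu_0)/2)\lesssim \beta^2(p^2+|\mu_0|)^2 \lesssim \beta^2 p^4$ (the last step uses $|\mu_0|\lesssim p^2$ for $|p|\ge 2\pi$ once $\beta$ is large enough). Hence
$$
S \gesssim \sum_{p\in\Lambda_+^*,\,0<|p|\le C}\frac{1}{\beta^2 p^4} \gesssim \beta^{-2} \sim N^{4/3},
$$
using $\sum_{p\in\Lambda_+^*}|p|^{-4}<\infty$ and $\beta\sim N^{-2/3}$. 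Combining with the $p=0$ bound yields $S\gesssim N_0^2+N^{4/3}$, and therefore
$$
\frac{\partial\mu_0}{\partial N}=\frac{1}{\beta S}\lesssim \frac{1}{\beta N_0^2+\beta N^{4/3}}=\frac{1}{\beta N_0^2+N^{2/3}},
$$
as claimed. The only mild subtlety is verifying that $|\mu_0|$ is small enough compared to $1/\beta$ to make the small-argument expansion of $\sinh$ valid for a macroscopic ($\sim$ constant) number of momenta $p\neq 0$; this is guaranteed by the a-priori estimate $-\mu_0\lesssim\beta^{-1}$ which itself follows from $N_0\gesssim 1$ in the regime considered.
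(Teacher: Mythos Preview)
Your approach coincides with the paper's: differentiate the defining relation for $\mu_0$, write $\partial_N\mu_0=(\beta S)^{-1}$ with $S=\sum_{p}\tfrac{1}{4\sinh^2(\beta(p^2-\mu_0)/2)}$, and bound $S$ from below by separating the $p=0$ term (giving $N_0(N_0+1)\gtrsim N_0^2$) from a few $p\neq 0$ terms (claimed to give $\beta^{-2}\sim N^{4/3}$). The paper's proof is the same two-line computation, and then uses $\mu_0\sim -1/(\beta N_0)$ to conclude.

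There is, however, a gap in the $p\neq 0$ part. The step $\beta^2(p^2+|\mu_0|)^2 \lesssim \beta^2 p^4$ requires $|\mu_0|\lesssim p^2$, in particular $|\mu_0|\lesssim 1$ when $|p|=2\pi$. You write ``once $\beta$ is large enough'', but in this regime $\beta\sim N^{-2/3}\to 0$; more to the point, $|\mu_0|\lesssim 1$ holds only when $N_0\gtrsim \beta^{-1}\sim N^{2/3}$. In the non-condensed phase ($\kappa<\tfrac{1}{4\pi}\upzeta(3/2)^{2/3}$) one has $N_0\sim 1$ and $|\mu_0|\sim\beta^{-1}\sim N^{2/3}\to\infty$, so for any fixed $p\neq 0$ the argument of $\sinh$ is $\sim\beta|\mu_0|\sim 1$ rather than $\sim\beta p^2\to 0$, and each such term contributes only $O(1)$ instead of $\beta^{-2}$; a direct count then gives only $S\gtrsim\beta^{-3/2}$ from the nonzero momenta. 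Your closing paragraph correctly checks that the small-argument bound on $\sinh$ applies (since $\beta(p^2-\mu_0)\lesssim 1$ holds throughout), but that is not the failing step; the problem is the subsequent inequality $(p^2+|\mu_0|)^2\lesssim p^4$. The paper's own proof is equally terse here and does not address this point, so you have not missed an ingredient that the paper supplies.
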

\begin{proof}
    Differentiation of both sides of \eqref{eq:idealgase1pdmchempot} with respect to $N$ gives
    \begin{equation}
        1 = \frac{\partial \mu_0(\beta,N)}{\partial N} \sum_{p \in \Lambda^*} \frac{\beta}{4 \sinh^2\left( \frac{\beta(p^2 - \mu_0(\beta,N)) }{2} \right) }
    \end{equation}
    and implies the bound
    \begin{equation}
        0 \leq \frac{\partial \mu_0(\beta,N)}{\partial N} \lesssim \frac{1}{1/(\beta \mu_0^2) + \beta^{-1}}.
    \end{equation}
    In combination with the identity $\mu_0 = -(1/\beta) \ln(1+N_0^{-1}) \sim - 1/(\beta N_0)$ this proves the claim.
\end{proof}
\section{Bounds related to a perturbed Bogoliubov Hamiltonian}
\label{app:perturbedBogoliubov}
In this section we prove a lemma that is used in Section~\ref{sec:1pdm} to compute the 1-pdm of the Gibbs state $G_{\beta,N}$ in \eqref{eq:relativeEntropyBound}. To not interrupt the main line of the argument there, we state and prove it here.

\begin{lemma}
    \label{lem:perturbedBogoliubov}
    We consider the limit $N \to \infty$, $\beta N^{2/3} \to \kappa \in (0,\infty)$. The one-particle Hamiltonian $h$ is chosen as in Lemma~\ref{lem:idealGasWithVariable1ParticleHamiltonian} above. Let $\gamma_p(\lambda,\mu)$ be given in as in \eqref{eq:gammap} with $p^2$ replaced by $h(p)$ and $\mu_0(\beta,N)$ replaced by $\mu < 0$. We have
    \begin{equation}
    \left| \frac{\partial \gamma_p }{\partial \mu} \right| \lesssim \frac{1}{\beta p^4} 
    \label{eq:appPerturbedBogA13}
\end{equation}
    as well as
    \begin{equation}
    \left| \frac{\partial \gamma_p(\lambda,\mu) }{\partial \lambda} \right| \lesssim \frac{\Vert f \Vert_{\infty}}{\beta p^4} \quad \text{ and } \quad \left| \frac{\partial^2 \gamma_p }{\partial \lambda^2} \right| \lesssim \frac{\Vert f \Vert_{\infty}^2}{\beta p^4}.
    \label{eq:appPerturbedBogA1}
\end{equation}
\end{lemma}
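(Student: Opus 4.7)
The main observation is that $\gamma_p(\lambda,\mu)$ depends on $(\lambda,\mu)$ only through the two quantities
\[
x(\lambda,\mu) = h(p) - \mu = p^2 + \lambda f(p) - \mu, \qquad y(\lambda,\mu) = x(\lambda,\mu) + \frac{2 \hat v(p) N_0(\beta,N)}{N},
\]
since a short algebraic manipulation gives $u_p^2+v_p^2 = (x+y)/(2\sqrt{xy})$, $v_p^2 = (\sqrt{x}-\sqrt{y})^2/(4\sqrt{xy})$ and $\epsilon(p) = \sqrt{xy}$. Because $\partial_\lambda x = \partial_\lambda y = f(p)$ while $\partial_\mu x = \partial_\mu y = -1$, any smooth function $F(x,y)$ satisfies
\[
\partial_\lambda F = -f(p)\,\partial_\mu F, \qquad \partial_\lambda^2 F = f(p)^2\,\partial_\mu^2 F.
\]
Therefore it suffices to establish the single estimate
\begin{equation}\label{eq:app-plan-key}
|\partial_\mu \gamma_p|,\ |\partial_\mu^2 \gamma_p| \lesssim \frac{1}{\beta p^4}, \qquad p \in \Lambda_+^*,
\end{equation}
from which \eqref{eq:appPerturbedBogA13} and both bounds in \eqref{eq:appPerturbedBogA1} follow (the factors of $\|f\|_\infty$ being supplied by the chain rule above and $|f(p)|\le \|f\|_\infty$).

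To prove \eqref{eq:app-plan-key} I would first record the a priori bounds on $x,y$: since $\mu<0$ and $|\lambda|$ is chosen so that $h\gtrsim -\Delta$, one has $x,y \sim p^2$, hence $\epsilon(p) = \sqrt{xy}\gtrsim p^2$ and $u_p^2+v_p^2\le C$. Next I would compute the elementary identities
\[
\partial_\mu \epsilon = -\frac{x+y}{2\sqrt{xy}}, \qquad \partial_\mu^2 \epsilon = -\frac{(x-y)^2}{4(xy)^{3/2}},
\]
\[
\partial_\mu(u_p^2+v_p^2) = \frac{(x-y)^2}{4(xy)^{3/2}}, \qquad \partial_\mu v_p^2 = \frac{(x-y)^2}{8(xy)^{3/2}},
\]
and note that $|x-y| = 2\hat v(p) N_0/N \lesssim 1$, so all of $|\partial_\mu \epsilon|$, $|u_p^2+v_p^2|$ are $O(1)$ while $|\partial_\mu^j(u_p^2+v_p^2)|, |\partial_\mu^j v_p^2| \lesssim 1/p^6$ for $j=1,2$. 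Finally, from the universal estimates $\frac{e^{x}}{(e^x-1)^2} \le C x^{-2}$ and $\frac{e^{x}}{(e^x-1)^3} \le C x^{-3}$ applied to $x=\beta\epsilon$, I deduce
\[
|\partial_\mu \gamma_p^{\mathrm{Bog}}| \lesssim \frac{1}{\beta \epsilon^2} \lesssim \frac{1}{\beta p^4}, \qquad |\partial_\mu^2 \gamma_p^{\mathrm{Bog}}| \lesssim \frac{1}{\beta \epsilon^3} + \frac{1}{\beta \epsilon^2}|\partial_\mu^2\epsilon| \lesssim \frac{1}{\beta p^6}.
\]
Expanding $\partial_\mu \gamma_p$ and $\partial_\mu^2 \gamma_p$ by the Leibniz rule and using the bounds above yields sums of terms of the form $C/(\beta p^{4+2k})$ and $C/p^{6+2\ell}$ with $k,\ell \ge 0$. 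Since $\beta \sim N^{-2/3} \to 0$ and $p^2 \ge (2\pi)^2$ for $p\in \Lambda^*_+$, every term of the second form is bounded by $C/(\beta p^4)$, establishing \eqref{eq:app-plan-key}.

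The only real obstacle is the bookkeeping in the last step: several terms of differing structure appear from the product rule, and one must verify uniformly across all $p \in \Lambda_+^*$ (both moderate momenta $|p| \sim 1$ and large momenta) that the polynomial decay in $|p|$ always dominates the small parameter $\beta$. Once that is done by the elementary observation $p^{-2}\le C\beta^{-1}\cdot \beta p^{-2} \le C/(\beta p^4)\cdot p^2$ (valid because $p^2 \gtrsim 1$ and $\beta \le C$), everything reduces to algebra.
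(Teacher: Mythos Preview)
Your proof is correct and in fact streamlines the paper's argument. The paper proceeds by computing all $\mu$- and $\lambda$-derivatives separately: it evaluates $\partial_\mu\epsilon$, $\partial_\lambda\epsilon$, $\partial_\lambda^2\epsilon$, then $\partial_\mu\gamma_p^{\mathrm{Bog}}$, $\partial_\lambda\gamma_p^{\mathrm{Bog}}$, $\partial_\lambda^2\gamma_p^{\mathrm{Bog}}$, then $\partial_\mu v_p$, $\partial_\lambda v_p$, $\partial_\lambda^2 v_p$, and finally assembles $\partial_\mu\gamma_p$, $\partial_\lambda\gamma_p$, $\partial_\lambda^2\gamma_p$ via the product rule. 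Your key observation---that $\gamma_p$ depends on $(\lambda,\mu)$ only through $x=h(p)-\mu$ and $y=x+2\hat v(p)N_0/N$, and that $\partial_\lambda x=\partial_\lambda y=f(p)$, $\partial_\mu x=\partial_\mu y=-1$, hence $\partial_\lambda=-f(p)\partial_\mu$ on any function of $(x,y)$---cuts this work roughly in half: once \eqref{eq:app-plan-key} is proved, the $\lambda$-bounds follow automatically with the correct powers of $\|f\|_\infty$. The paper's more explicit computations have the minor advantage of displaying each intermediate bound (e.g.\ $|\partial_\lambda v_p|\lesssim \|f\|_\infty/p^2$) in a form directly reusable elsewhere, but for the purposes of this lemma your route is cleaner.

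One small point: your closing sentence ``$p^{-2}\le C\beta^{-1}\cdot \beta p^{-2} \le C/(\beta p^4)\cdot p^2$'' is garbled. What you need is simply that for $p\in\Lambda_+^*$ one has $p^2\ge(2\pi)^2$ and $\beta\lesssim 1$, so any term of the form $C/p^{6+2\ell}$ satisfies $C/p^{6+2\ell}\le C/p^4\lesssim 1/(\beta p^4)$. Stating it this way would suffice.
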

 \begin{proof}

We recall the definitions of $\epsilon(p)$, $u_p$, $v_p$, and $\gamma^{\mathrm{Bog}}$ in \eqref{eq:BogoliubovDispersion}, \eqref{eq:coefficientsBogtrafoAndi}, and \eqref{eq:gammapBog}, respectively. In all definitions we replace $p^2$ by $h(p)$ and $\mu_0(\beta,N)$ by $\mu$. 

The first derivative of $\epsilon(p)$ with respect to $\mu$ satisfies
\begin{equation}
    \frac{\partial \epsilon(p)}{\partial \mu} = \frac{-1}{\epsilon(p)} \left( p^2 + \lambda f(p) - \mu + \hat{v}(p) N_0/N \right),\quad \left| \frac{\partial \epsilon(p)}{\partial \mu} \right| \lesssim 1. 
    \label{eq:appPerturbedBog1}
\end{equation}
We also check that
\begin{equation}
    \frac{\partial \epsilon(p)}{\partial \lambda} = \frac{f(p)}{\epsilon(p)} \left( p^2 + \lambda f(p) - \mu + \hat{v}(p) N_0/N \right), \quad \left| \frac{\partial \epsilon(p)}{\partial \lambda} \right| \lesssim \Vert f \Vert_{\infty}.
    \label{eq:appPerturbedBog1b}
\end{equation}
For the second derivative of the dispersion relation with respect to $\lambda$, we have 
\begin{align}
    \frac{\partial^2 \epsilon(p)}{\partial \lambda^2} =& \frac{f^2(p)}{\epsilon(p)} \left[ 1 - \frac{1}{\epsilon^2(p)} \left( p^2 + \lambda f(p) - \mu + \hat{v}(p) N_0/N \right)^2 \right],\quad    \left| \frac{\partial^2 \epsilon(p)}{\partial \lambda^2} \right| \lesssim \frac{\Vert f \Vert_{\infty}^2}{p^2}. 
    \label{eq:appPerturbedBog3}
\end{align}
The derivative of $\gamma_p^{\mathrm{Bog}}$ with respect to $\mu$ satisfies 
\begin{equation}
    \frac{\partial \gamma_p^{\mathrm{Bog}}}{\partial \mu} = \frac{-\beta}{8 \sinh^2\left( \frac{\beta\epsilon(p)}{2} \right)} \frac{\partial \epsilon(p)}{\partial \mu},\quad \left| \frac{\partial \gamma_p^{\mathrm{Bog}}}{\partial \mu} \right| \lesssim \frac{1}{\beta p^4}.
    \label{eq:appPerturbedBog5}
\end{equation}
Here the bound in \eqref{eq:appPerturbedBog5} follows from  \eqref{eq:appPerturbedBog1}. Moreover, for its derivative with respect to $\lambda$ we have   
\begin{equation}
    \frac{\partial \gamma_p^{\mathrm{Bog}}}{\partial \lambda} = \frac{-\beta}{8 \sinh^2\left( \frac{\beta\epsilon(p)}{2} \right)} \frac{\partial \epsilon(p)}{\partial \lambda},\quad \left| \frac{\partial \gamma_p^{\mathrm{Bog}}}{\partial \lambda} \right| \lesssim \frac{\Vert f \Vert_{\infty}}{\beta p^4}.
    \label{eq:appPerturbedBog5b}
\end{equation}
The second derivative with respect to $\lambda$ satisfies 
\begin{equation}
    \frac{\partial^2 \gamma_p^{\mathrm{Bog}}}{\partial \lambda^2} = \frac{-\beta}{8 \sinh^2\left( \frac{\beta \epsilon(p)}{2} \right)} \frac{\partial^2 \epsilon}{\partial \lambda^2} + \frac{\beta^2 \cosh\left( \frac{\beta \epsilon(p)}{2} \right)}{4 \sinh^3\left( \frac{\beta \epsilon(p)}{2} \right)} \left( \frac{\partial \epsilon}{\partial \lambda} \right)^2,\quad \left| \frac{\partial^2 \gamma_p^{\mathrm{Bog}}}{\partial \lambda^2} \right| \lesssim \frac{\Vert f \Vert^2_{\infty}}{\beta p^4}
    \label{eq:appPerturbedBog7}
\end{equation}
where we used \eqref{eq:appPerturbedBog1b} and \eqref{eq:appPerturbedBog3} to obtain this bound. We also have
\begin{equation}
    \frac{\partial v_p}{\partial \mu} = - u_p \frac{N_0 \hat v(p)}{2N} \left( \frac 1 {p^2 + \lambda f(p) - \mu} - \frac{1}{p^2 + \lambda f(p) - \mu + 2 \hat{v}(p) N_0/N} \right),\quad \left| \frac{\partial v_p}{\partial \mu} \right| \lesssim \frac{1}{p^2}
    \label{eq:appPerturbedBog9}
\end{equation}
and
\begin{equation}
    \frac{\partial v_p}{\partial \lambda} = u_p \frac{N_0 f(p)\hat v(p)}{2N} \left( \frac 1 {p^2 + \lambda f(p) - \mu} - \frac{1}{p^2 + \lambda f(p) - \mu + 2 \hat{v}(p) N_0/N} \right),\quad \left| \frac{\partial v_p}{\partial \lambda} \right| \lesssim \frac{\Vert f \Vert_{\infty}}{p^2}.
    \label{eq:appPerturbedBog9b}
\end{equation}
\begin{align}
    \frac{\partial^2 v_p}{\partial^2 \lambda} &= v_p  \left( \frac{N_0 f(p)\hat v(p)}{2N} \left( \frac 1 {p^2 + \lambda f(p) - \mu} - \frac{1}{p^2 + \lambda f(p) - \mu + 2 \hat{v}(p) N_0/N} \right) \right)^2 \nonumber\\
    &\quad + u_p \frac{N_0 f(p)\hat v(p)}{2N} \left( \frac {-f(p)} {p^2 + \lambda f(p) - \mu} + \frac{f(p)}{p^2 + \lambda f(p) - \mu + 2 \hat{v}(p) N_0/N} \right),\quad \left| \frac{\partial^2 v_p}{\partial \lambda^2} \right| \lesssim \frac{\Vert f \Vert_{\infty}^2}{p^4}.
    \label{eq:appPerturbedBog11}
\end{align}

Finally, we consider $\gamma_p$, whose first derivative with respect to $\mu$ satisfies 
\begin{equation}
    \frac{\partial \gamma_p }{\partial \mu} = 2 v_p \frac{\partial v_p}{\partial \mu} \left( 1 + 2 \gamma_p^{\mathrm{Bog}} \right) + \left( 1 + 2 v_p^2 \right) \frac{\partial \gamma_p^{\mathrm{Bog}}}{\partial \mu},\quad \left| \frac{\partial \gamma_p }{\partial \mu} \right| \lesssim \frac{1}{\beta p^4},
    \label{eq:appPerturbedBog13}
\end{equation}
which proves \eqref{eq:appPerturbedBogA13}. Here we used  \eqref{eq:boundvp}, \eqref{eq:appPerturbedBog5}, and \eqref{eq:appPerturbedBog9}. For 
its first derivative with respect to $\lambda$, we see that   
\begin{equation}
    \frac{\partial \gamma_p }{\partial \lambda} = 2 v_p \frac{\partial v_p}{\partial \lambda} \left( 1 + 2 \gamma_p^{\mathrm{Bog}} \right) + \left( 1 + 2 v_p^2 \right) \frac{\partial \gamma_p^{\mathrm{Bog}}}{\partial \lambda},\quad \left| \frac{\partial \gamma_p }{\partial \lambda} \right| \lesssim \frac{\Vert f \Vert_{\infty}}{\beta p^4},
    \label{eq:appPerturbedBog13b}
\end{equation}
where we used \eqref{eq:boundvp}, \eqref{eq:appPerturbedBog5b}, and \eqref{eq:appPerturbedBog9b}. The second derivative of $\gamma_p$ with respect to $\lambda$ is given by
\begin{equation}
    \frac{\partial^2 \gamma_p }{\partial \lambda^2} = 2 \left[ \left( \frac{\partial v_p}{\partial \lambda} \right)^2 + v_p \frac{\partial^2 v_p}{\partial \lambda^2} \right] \left( 1 + 2 \gamma_p^{\mathrm{Bog}} \right) + 8 v_p  \frac{\partial v_p}{\partial \lambda} \frac{\partial \gamma_p^{\mathrm{Bog}}}{\partial \lambda} + \left( 1 + 2 v_p^2 \right) \frac{\partial^2 \gamma_p^{\mathrm{Bog}}}{\partial \lambda^2}.
    \label{eq:appPerturbedBog14}
\end{equation}
We use \eqref{eq:boundvp}, \eqref{eq:appPerturbedBog5b}, \eqref{eq:appPerturbedBog7}, \eqref{eq:appPerturbedBog9b}, and \eqref{eq:appPerturbedBog11} to check that
\begin{equation}
    \left| \frac{\partial^2 \gamma_p }{\partial \lambda^2} \right| \lesssim \frac{\Vert f \Vert_{\infty}^2}{\beta p^4}
    \label{eq:appPerturbedBog15}
\end{equation}
holds. Eqs. \eqref{eq:appPerturbedBog13b} and \eqref{eq:appPerturbedBog15} prove \eqref{eq:appPerturbedBogA1} and therewith the claim.
\end{proof}

\bibliographystyle{siam}

\vspace{0.5cm} 

\noindent (Andreas Deuchert) Department of Mathematics, Virginia Tech \\ 
225 Stanger Street, Blacksburg, VA 24060-1026, USA \\ 
E-mail address: \texttt{andreas.deuchert@vt.edu} \\

\vspace{0.2cm} 

\noindent (Phan Thành Nam) Department of Mathematics, LMU Munich \\
Theresienstrasse 39, 80333 Munich, Germany \\
Email address: \texttt{nam@math.lmu.de} \\

\vspace{0.2cm} 

\noindent (Marcin Napiórkowski) Department of Mathematical Methods in Physics, Faculty of Physics, University of Warsaw \\
Pasteura 5, 02-093 Warszawa, Poland \\
Email address: \texttt{marcin.napiorkowski@fuw.edu.pl}

\end{document}